\pgfplotsset{compat=1.15}
\def\R{{\mathbb R}}
\def\N{{\mathbb N}}
\DeclareMathOperator{\diag}{diag}
\DeclareMathOperator{\sign}{sign}
\DeclareMathOperator{\supp}{supp}
\DeclareMathOperator{\trace}{trace}
\DeclareMathOperator{\tr}{tr}
\newtheorem{theorem}{\bf Theorem}%
\newtheorem{lemma}{\bf Lemma}
\newtheorem{algorithm}{\bf Algorithm}%
\newtheorem{example}{\bf Example}%
\newtheorem{proposition}{\bf Proposition}%
\newtheorem{corollary}{\bf Corollary}%
\newtheorem{definition}{\bf Definition}%
\newtheorem{remark}{\bf Remark}%
\newtheorem{assumption}{\bf Assumption}%
\providecommand{\keywords}[1]
{
  \textbf{\textbf{Keywords:}} #1
}
\definecolor{qqzzff}{rgb}{0,0.6,1}
\definecolor{ududff}{rgb}{0.30196078431372547,0.30196078431372547,1}
\definecolor{xdxdff}{rgb}{0.49019607843137253,0.49019607843137253,1}
\definecolor{ffzzqq}{rgb}{1,0.6,0}
\definecolor{qqzzqq}{rgb}{0,0.6,0}
\definecolor{ffqqqq}{rgb}{1,0,0}
\definecolor{uuuuuu}{rgb}{0.26666666666666666,0.26666666666666666,0.26666666666666666}
\newif\ifcomment
\title{Quantum speed-ups for solving semidefinite relaxations of polynomial optimization}
\author{
  Daniel Stilck França\textsuperscript{1,2}
  \and
  Ngoc Hoang Anh Mai\textsuperscript{2,3}
}
\date{} %
\begin{document}
\maketitle

\footnotetext[1]{Department of Mathematical Sciences, University of Copenhagen, 2100 Copenhagen, Denmark. Email: \texttt{dsfranca@math.ku.dk}.}
\footnotetext[2]{ENS de Lyon, Inria, 46 Allée d'Italie, 69364 Lyon Cedex 07, France.}
\footnotetext[3]{Institute of Mathematics, Vietnam Academy of Science and Technology, 18 Hoang Quoc Viet, Cau Giay, Hanoi, Vietnam. Email: \texttt{mnhanh@math.ac.vn}.}

\begin{abstract}
We study quantum algorithms for approximating Lasserre's hierarchy values for polynomial optimization. Let $f,g_1,\ldots,g_m$ be real polynomials in $n$ variables and $f^\star$ the infimum of $f$ over the semialgebraic set $S(g)=\{x: g_i(x)\ge 0\}$. Let $\lambda_k$ be the value of the order-$k$ Lasserre relaxation. Assume either (i) $f^\star=\lambda_k$ and the optimum is attained in the $\ell_1$-ball of radius $1/2$, or (ii) $S(g)$ lies in the simplex $\{x\ge 0: \sum_j x_j\le 1/2\}$, and the constraints define this simplex. After an appropriate coefficient rescaling, we give a quantum algorithm based on matrix multiplicative weights that approximates $\lambda_k$ to accuracy $\varepsilon>0$ with runtime, for fixed $k$,
\[
O(n^k\varepsilon^{-4}+n^{k/2}\varepsilon^{-5}),\qquad
O\!\left(s_g\!\left[n^k\varepsilon^{-4}+\!\left(n^{k}+\!\sum_{i=1}^m n^{k-d_i}\right)^{1/2}\!\varepsilon^{-5}\right]\right),
\]
where $s_g$ bounds the sparsity of the coefficient-matching matrices associated with the constraints. Classical matrix multiplicative-weights methods scale as $O(n^{3k}\mathrm{poly}(1/\varepsilon))$ even in the unconstrained case. As an example, we obtain an $O(n\varepsilon^{-4}+\sqrt{n}\varepsilon^{-5})$ quantum algorithm for portfolio optimization, improving over the classical $O(n^{\omega+1}\log(1/\varepsilon))$ bound with $\omega\approx2.373$.

Our approach builds on and sharpens the analysis of Apeldoorn and Gily\'en for the SDPs arising in polynomial optimization. We also show how to implement the required block encodings without QRAM. Under the stated assumptions, our method achieves a super-quadratic speedup in the problem dimension for computing Lasserre relaxations.
\end{abstract}
\keywords{polynomial optimization, moment, sum of squares, Lasserre's hierarchy, semidefinite programming, matrix multiplicative weights, quantum computing, quantum optimization, quantum algorithm, block encoding, portfolio optimization}
\tableofcontents
\section{Introduction}

Semidefinite programming (SDP) plays a central role in modern optimization, appearing in diverse areas such as combinatorial optimization~\cite{helmberg2000semidefinite}, quantum information theory~\cite{brandao2017quantum}, and polynomial optimization~\cite{lasserre2001global,laurent2009sums}. Although SDPs are convex, they often serve as tractable relaxations of challenging non-convex problems, making them indispensable tools for both theoretical analysis and practical computation.

Recent developments in quantum algorithms have shown that SDPs can, in principle, be solved with polynomial speedups over the best-known classical methods~\cite{brandao2017quantum,van2018improvements,augustino2023quantum,van2020convex,yuan2025exponentialspeedupsstructuredgoemanswilliamson}. Many of these quantum algorithms, including those based on the multiplicative weights update method~\cite{arora2012multiplicative,arora2005fast,kale2007efficient}, achieve this advantage under specific conditions on the input instances. 
In particular, for quantum algorithms rooted in the matrix multiplicative weights (MMW) framework, it is crucial that the SDPs are presented in standard form, are properly normalized  and we have an upper bound on the trace of a solution. 
This requires bounding the operator norm of all constraint matrices, as well as establishing a known bound on the trace of an optimal solution. Furthermore, these quantum solvers often exhibit a suboptimal dependence on the accuracy parameter, so demonstrating that low-accuracy solutions are sufficient for a given application is essential to realizing genuine quantum advantages. Finally, most quantum speedups also need that the matrices defining the constraints and the objective function are highly sparse and we can easily obtain block encodings for them. Analysing these aspects of the quantum solvers is crucial, as some works indicate that such details can cause quantum algorithms for SDPs to only offer advantages at prohibitively large scales~\cite{2502.15426}.

In this work, we investigate these normalization, stability and structural requirements for a broad family of SDPs that arise as sums-of-squares (SOS)~\cite{lasserre2001global} relaxations from both unconstrained and inequality constrained polynomial optimization problems (POPs). Our contributions are twofold. First, we show how to transform these SDPs into sparse standard form with proper normalization, deriving bounds on relevant matrix quantities along the way. Second, we provide stability estimates for both constrained and unconstrained variants of these problems.
As such, we identify regimes for which the quantum algorithms offer a super-quadratic speedup compared to classical algorithms, but at the cost of a polynomial scaling in a precision parameter $\epsilon^{-1}$. 
In particular, we show how to obtain stability results for the POPs in terms of $\ell_1$ norm estimates of solutions assuming the hierarchy converged, making optimization problems over the simplex natural candidates for our techniques.
As an application, we show how to obtain super-quadratic speedups to solve portfolio optimization problems, which is an optimization problem over the simplex.

The remainder of the paper is organized as follows. Section~\ref{sec:background.contribution} reviews the necessary background on semidefinite programming, polynomial optimization, and the framework for achieving quantum speedups, along with a summary of our main contributions. In Section~\ref{sec:pre}, we introduce the notation, definitions, and preliminary results used throughout the paper, including discussions on strong duality, the exactness of SDP relaxations for POPs, the correctness of Hamiltonian updates, and the accuracy of approximate values obtained via their combination with binary search. Section~\ref{sec:solve.sos.relax} describes how quantum SDP solvers can be employed to solve SOS relaxations for specific classes of POPs, and demonstrates their application to the portfolio optimization problem. Finally, Section~\ref{sec:conclusion} concludes with a discussion of potential extensions and open research directions.

\section{Background and Summary of results}
\label{sec:background.contribution}
\subsection{Semidefinite programming}
Consider the following standard form of a semidefinite program (SDP):
\begin{equation}\label{eq:sdp.intro}
\begin{array}{rl}
\lambda^\star=\inf\limits_{X,z}&\tr( CX)\\
\text{s.t.}&X=\diag(X_1,\dots,X_r)\in\prod_{j=1}^r\mathbb S^{N_j}_+\,,\,z\in\R^t\,,\\
&\tr( A_iX)+q_i^\top z =b_i\,,\,i=1,\dots,M\,,
\end{array}
\end{equation}
where $C,A_i\in \prod_{j=1}^r\mathbb S^{N_j}$, $q_i\in\R^t$, and $b_i\in\R$ are given data.
Here, $\mathbb S^{K}$ (resp. $\mathbb S^{K}_+$) denotes the space of real symmetric matrices (resp. cone of positive semidefinite matrices) of size $K$, and $N=\max\{N_1,\dots,N_r\}$ is the size of the largest block.

Although problem~\eqref{eq:sdp.intro} is convex, it is crucial for tackling non-convex problems commonly arising in practice in various areas, such as combinatorial optimization~\cite{goemans1995improved,parrilo2000structured}, quantum information theory~\cite{doherty2012convergence,barthel2012solving}, and polynomial optimization~\cite{parrilo2000structured,helton2012semidefinite,nie2013exact,lasserre2009convex}, among others. Consequently, the development of efficient algorithms for solving semidefinite programs (SDPs) is of significant importance. Existing methods for solving SDPs include interior-point methods~\cite{vandenberghe1996semidefinite,jiang2020faster,dahl2012semidefinite}, Newton-type/IPM variants~\cite{jiang2020faster}, bundle methods~\cite{helmberg2000spectral,helmberg2014spectral}, and augmented Lagrangian frameworks~\cite{yurtsever2019conditional}. 
From a complexity and practicality standpoint, many classical SDP solvers suffer from high polynomial dependence on the matrix dimension and accuracy, and often require substantial memory due to storing dense iterates/factorizations; first-order or low-rank variants mitigate memory to some extent but typically weaken guarantees. There are also quantum proposals, based either on multiplicative weights~\cite{arora2012multiplicative,brandao2017quantum,van2018improvements,lilecture2018,GSLBrando2022} or on quantum interior-point techniques~\cite{augustino2023quantum}, which promise polynomial speedups under suitable normalization and input-access assumptions but inherit sensitivity to accuracy/conditioning and the need for careful instance scaling.

This paper focuses on the complexity and accuracy of SDP-solving algorithms based on matrix multiplicative weights solvers (MMW), both classical and quantum, in the context of polynomial optimization problems.

When $r=1$ and $t=0$, the standard SDP \eqref{eq:sdp.intro} simplifies to:
\begin{equation}\label{eq:primal.intro}
\begin{array}{rl}
\lambda^\star=\inf\limits_{X}&\tr( CX)\\
\text{s.t.}&X\in\mathbb S^{N}_+\,,\,\tr(A_i X) =b_i\,,\,i=1,\dots,M\,.
\end{array}
\end{equation}
Its dual is given by:
\begin{equation}\label{eq:dual.intro}
\begin{array}{rl}
\tau^\star=\sup\limits_{\xi}&b^\top \xi\\
\text{s.t.}&\xi\in\R^M\,,\, C-\sum_{i=1}^M\xi_iA_i\succeq 0\,.
\end{array}
\end{equation}

The interior-point method introduced by Helmberg et al. \cite{helmberg1996interior} solves \eqref{eq:primal.intro}-\eqref{eq:dual.intro} using second-order information to approximate primal-dual solutions.
According to Jiang et al. \cite{jiang2020faster}, its time complexity on a classical computer is: 
\begin{equation}
O\left( N^{1/2} \left(M N^2 + M^{\omega} + N^{\omega}\right)\log\left(\varepsilon^{-1}\right) \right)\,,
\end{equation}
where $\omega\in [2,2.38]$.
Augustino et al. \cite{augustino2023quantum} show that the complexity of solving \eqref{eq:primal.intro} using quantum interior-point method is:
\begin{equation}
O\left( N^{5.5} \varepsilon^{-1}\ \text{polylog}\left(\varepsilon^{-1}\right) \right)\,.
\end{equation}
If $M=O(N^2)$, the complexity of quantum interior-point method reduces to:
\begin{equation}
O\left( N^{3.5} \varepsilon^{-1}\ \text{polylog}\left(\varepsilon^{-1}\right) \right)
\end{equation}
As shown in Lemma \ref{lem:accuracy.interior-point},  the interior-point methods above provide approximate values for the optimal values $\lambda^\star$ and $\tau^\star$ of the semidefinite programs \eqref{eq:sdp.intro}-\eqref{eq:dual.intro}, with an accuracy of 
\begin{equation}
\varepsilon N\,.
\end{equation}

Arora and Kale \cite{arora2012multiplicative} proposed the matrix multiplication weight (MMW) update algorithm, which is paired with binary search to solve SDP \eqref{eq:primal.intro} in a running time of
\begin{equation}
O((MsN+N^\omega)\varepsilon^{-2})\,,
\end{equation}
where $\omega\approx 2.373$ is the exponent of matrix multiplication and $s$ represents the maximum number of nonzero entries per row in $A_i$s and $C$ (see Lemma \ref{lem:MMWU-complexity}).
Unlike the interior-point method, which determines a search direction by solving a linear system of equations to update the iterates, variations of MMW like the Hamiltonian updates algorithm~\cite{GSLBrando2022} adjusts the iterates by verifying infeasibility across affine constraints relative to the parameter $\varepsilon$, without relying on solving linear systems. 

Quantum algorithms for SDPs based on MMW were introduced by Brand\~ao and Svore~\cite{brandao2017quantum}, with further optimizations developed in~\cite{van2018improvements,GSLBrando2022}. The key to their super-quadratic speedup lies in two ingredients:  
(i)~\emph{quantum Gibbs sampling} subroutines, which allow efficient preparation of quantum states proportional to $\exp(-H)$ for relevant matrix inputs $H$, thereby reducing the dependence on the matrix dimension from $N$ to $\sqrt{N}$ in the runtime; and  
(ii)~\emph{gentle measurement} techniques, which enable constraint verification while disturbing the quantum state only minimally, leading to a $\sqrt{M}$ scaling in the number of constraints.  
Both ingredients rely on the ability to implement \emph{block encodings} of the relevant matrices $A_i$ and $C$, which can be nontrivial to implement in general~\cite{Camps2024}. Thus, one of our contributions is to construct efficient block encodings for the relevant matrices for our SDPs.

The running time of the quantum Hamiltonian updates algorithm for solving the SDP \eqref{eq:primal.intro} is now:
\begin{equation}\label{eq:best.quantum.complexity.SDP}
O\left(s(\sqrt{M} + \sqrt{N}\varepsilon^{-1})\varepsilon^{-4}\right),
\end{equation}
where $s$ represents the maximum number of nonzero entries per row in $A_i$s and $C$.

As shown in the work of Apeldoorn and Gily\'en \cite{van2018improvements}, their complexities \eqref{eq:best.quantum.complexity.SDP} requires a constant upper bound on the trace of the optimal solution $X^\star$ to the primal SDP \eqref{eq:primal.intro}, which can be replaced by a constant trace condition for $X^\star$ by embedding the input matrices into larger ones).

Although the polynomial dependence on $\varepsilon^{-1}$  makes Hamiltonian updates exponentially slower than classical interior-point methods in terms of precision, the complexity of Hamiltonian updates is of a lower order concerning problem size---specifically, the number of affine constraints $M$, the sparsity $s$ and the size of the variable matrix $N$---compared to classical interior-point solvers, albeit assuming access to block encodings.

In Lemma \ref{lem:convergence.sdp}, we prove that under strong duality for the problems \eqref{eq:primal.intro}-\eqref{eq:dual.intro}, the accuracy of the approximation for the optimal value $\lambda^\star$  of the primal SDP \eqref{eq:primal.intro}, returned by Hamiltonian updates, is
\begin{equation}
\varepsilon(1+\|\xi^\star\|_{\ell_1})\,,
\end{equation}
where $\xi^\star$ is an optimal solution to the dual SDP \eqref{eq:dual.intro}. In Table \ref{tab:summary.SDP} we summarize the scaling of various SDP solvers.

\begin{table}
\footnotesize
\caption{Summary of classical and quantum algorithms based on Interior-point method and Hamiltonian updates for solving semidefinite programs (SDPs) \eqref{eq:primal.intro}-\eqref{eq:dual.intro} with matrix variable of size $N$ and $M$ affine equality constraints. Here $X^\star$ is some optimal solution to SDP \eqref{eq:primal.intro}; $\xi^\star$ is some optimal solution to SDP \eqref{eq:dual.intro}; and $s$ is the largest number of nonzero entries on each row of $C,A_i$.}
\label{tab:summary.SDP}
\begin{center}
\begin{tabular}{ |p{1.5cm}|p{1.1cm}|p{2.2cm}|p{5.2cm}|p{3cm}| } 
\hline
\multicolumn{2}{|c|}{Method}  &Conditions  & Complexity & Accuracy for approximate optimal value  \\
\hline
Interior-point methods 

(Helmberg et al. \cite{helmberg1996interior})& Classical & Slater's conditions  & $O\left( N^{1/2} \left(M N^2 + M^{\omega} + N^{\omega}\right)\log\left(\varepsilon^{-1}\right) \right)$, $\omega\in [2,2.38]$ 

(Jiang et al. \cite{jiang2020faster})& $\varepsilon N$   

(Lemma \ref{lem:accuracy.interior-point})\\\cline{2-2} \cline{4-4}
 & Quantum &   & $O\left( N^{5.5} \varepsilon^{-1}\ \text{polylog}\left(\varepsilon^{-1}\right) \right)$ 
 
 (Augustino et al. \cite{augustino2023quantum})&   \\\cline{3-4} 
 &  & Slater's conditions
 
 $M=O(N^2)$  & $O\left( N^{3.5} \varepsilon^{-1}\ \text{polylog}\left(\varepsilon^{-1}\right) \right)$ 
 
 (Augustino et al. \cite{augustino2023quantum})&   \\
 \hline
Hamiltonian updates 

(Matrix multiplicative weight update)

(Arora and Kale \cite{arora2012multiplicative})
& Classical & $\tr(X^\star)=1$,

$-I_N\preceq C\preceq I_N$,

 $-I_N\preceq A_i \preceq I_N$  & $O((MsN+N^\omega)\varepsilon^{-2})$ 
 
 (Lemma \ref{lem:MMWU-complexity})& $\varepsilon(1+\|\xi^\star\|_{\ell_1})$
 
 (Lemma \ref{lem:convergence.sdp})   \\ \cline{2-2} \cline{4-4}
 & Quantum &  & $O\left(s(\sqrt{M} + \sqrt{N}\varepsilon^{-1})\varepsilon^{-4}\right)$ 
 
 (Apeldoorn and Gily\'en \cite{van2018improvements})&   \\
\hline
\end{tabular}
\end{center}
\end{table}

In this paper, we apply the Hamiltonian update algorithm to estimate the optimal values of the semidefinite relaxations for the polynomial optimization problem presented in the next section. The key idea behind our main results is that we convert the semidefinite relaxations for the polynomial optimization problem to standard form \eqref{eq:primal.intro} that satisfy the scaling condition on the input data as well as the constant trace condition for $X^\star$, then we compute the sparsity parameter $s$ for the complexity and   compute upper bound for the quantity
 $\|\xi^\star\|_{\ell_1}$ for the accuracy, identifying certain regimes that admit quantum speedups.

 \subsection{Polynomial optimization}
Consider the general form of polynomial optimization problem (POPs) involving $n$ variables, $m$ inequality constraints, and $l$ equality constraints:
\begin{equation}\label{eq:pop.general}
\begin{array}{rl}
f^\star=\min\limits_{x\in\R^n}& f(x)\\
\text{s.t.}&g_i(x)\ge 0\,,\,i=1,\dots,m\,,\\
&h_j(x)= 0\,,\,j=1,\dots,l\,,\\
\end{array}
\end{equation}
where $f,g_i$, and $h_j$ are given polynomials with real coefficients and degrees at most $2d,2d_i$, and $2w_j$, respectively. We also set $d_0=0$, which is the degree of an external constraint $g_0=1$ used later.

POPs of the form \eqref{eq:pop.general} have numerous significant practical applications, including optimal power flow~\cite{wang2020cs,magron2023sparse}, optimal control~\cite{henrion2020}, machine learning~\cite{lasserre2009moments}, and data analysis~\cite{lasserre2010moments}. However, solving \eqref{eq:pop.general} is highly challenging: it is typically non-convex and, in general, NP-hard~\cite{lasserre2015introduction}.

The constraint set for the problem in \eqref{eq:pop.general}, denoted by $S(g,h)$, is a semialgebraic set defined as:
\begin{equation}
S(g,h)=\{x\in\R^n\,:\,g_i(x)\ge 0\,,\,i=1,\dots,m\,,\,h_j(x)= 0\,,\,j=1,\dots,l\}\,,
\end{equation}
where $g=(g_i)_{i=1}^m\subset \R[x]$ and $h=(h_j)_{j=1}^l\subset \R[x]$.
Here, $\R[x]_t=\R[x_1,\dots,x_n]_d$ denotes the ring of polynomials in $x$ of degree at most $t$. Note that $x$ is a vector of $n$ scalar variables, i.e.,
\begin{equation}
x=(x_1,\dots,x_n)\,.
\end{equation}

An important tool to solve optimization problems like \eqref{eq:pop.general} are sums-of-squares polynomials:
\begin{definition}[Sum--of--squares (SOS) cone]\label{def:sos-cone}
Fix integers $n\ge 1$ (number of variables) and $d\ge 0$ (half-degree).
The \emph{SOS cone of degree $2d$} in $\R[x]$ is
\[
  \Sigma[x]_d
  \;:=\;
  \Bigl\{
    p\in\R[x]_{2d}
    \;\Bigm|\;
    \exists\;q_1,\dots,q_s\in\R[x]_{d}
    \text{ such that }
    p(x)=\sum_{i=1}^{s} q_i(x)^2.
  \Bigr\},
\]
\end{definition}

It is well-known~\cite{lasserre2001global} that the sum-of-squares relaxation:
\begin{equation}\label{eq:sos.general.general.pop}
\begin{array}{rl}
\lambda_k=\sup\limits_{\lambda,\sigma_i,\psi_j} & \lambda\\
\text{s.t.}& \lambda\in\R\,,\,\sigma_i\in\Sigma[x]_{k-d_i}\,,\,\psi_j\in\R[x]_{2(k-w_j)}\,,\\
&f-\lambda=\sigma_0+\sum_{i=1}^m\sigma_i g_i+\sum_{j=1}^l\psi_jh_j\,.
\end{array}
\end{equation}
produces a sequence of non-decreasing lower bounds $(\lambda_k)_{k=0}^\infty$ for the optimal value $f^\star$.

Before we further explore the structure of the problem \eqref{eq:sos.general.general.pop}, we introduce some further standard definitions for moment/sum-of-squares relaxations:
\begin{definition}[Multi-index set]\label{def:multiindex}
For non-negative integers \(k,n\) set
\[
  \N_{2k}^{\,n}\;:=\;
  \bigl\{\alpha=(\alpha_1,\dots,\alpha_n)\in\N^{n}\;:\;
          |\alpha|:=\alpha_1+\dots+\alpha_n\le 2k\bigr\}.
\]
It indexes all monomials \(x^{\alpha}=x_1^{\alpha_1}\cdots x_n^{\alpha_n}\)
of total degree at most \(2k\).
\end{definition}

\begin{definition}[Truncated moment sequence]\label{def:moment-seq}
A vector \(y=(y_{\alpha})_{\alpha\in\N_{2k}^{\,n}}\subset\R\)
is called a \emph{truncated moment sequence of order \(2k\)} with respect to a positive measure \(\mu\) if
\begin{equation}
y_{\alpha}=\displaystyle\int_{\R^{n}}x^{\alpha}\,d\mu(x)\,.
\end{equation}
\end{definition}

\begin{definition}[Riesz linear functional]\label{def:riesz}
Given a vector \(y=(y_{\alpha})_{\alpha\in\N_{2k}^{\,n}}\subset\R\),
the \emph{Riesz functional}
\[
  L_{y}:\R[x]_{2k}\longrightarrow\R, \qquad
  L_{y}\!\Bigl(\sum_{\alpha}p_{\alpha}x^{\alpha}\Bigr)
  =\sum_{\alpha}p_{\alpha}\,y_{\alpha},
\]
acts by replacing each monomial coefficient \(x^{\alpha}\) with
its corresponding moment \(y_{\alpha}\).
\end{definition}

\begin{definition}[Moment matrix]\label{def:moment-matrix}
For an integer \(k\ge 0\),
the \emph{moment matrix of order \(k\)} associated with \(y\) is
\[
  M_{k}(y):=\bigl(y_{\alpha+\beta}\bigr)_{\alpha,\beta\in\N_{k}^{\,n}}.
\]
Rows and columns are indexed by monomials of degree \(\le t\).
\end{definition}

\begin{definition}[Localizing matrix]\label{def:localising}
Let \(p(x)=\sum_{\gamma}p_{\gamma}x^{\gamma}\) be a polynomial of degree at most
\(2u\) and fix \(k\ge u\).
The \emph{localizing matrix of order \(k-u\)} associated with \(p\) and \(y\) is
\[
  M_{k-u}(p\,y)
  :=\Bigl(\sum_{\gamma}p_{\gamma}\,y_{\alpha+\beta+\gamma}\Bigr)_
      {\alpha,\beta\in\N_{k-u}^{\,n}}.
\]
\end{definition}

The sum-of-squares relaxation \eqref{eq:sos.general.general.pop} can be equivalently formulated as the following semidefinite program:
\begin{equation}\label{eq:sos.sdp.general.general.pop}
\begin{array}{rl}
\lambda_k=\sup\limits_{\lambda,G_i,\eta_j} & \lambda\\
\text{s.t.}& \lambda\in\R\,,\,G_i\succeq 0\,,\,i=0,\dots,m\,,\,\eta_j\subset \R\,,\,j=1,\dots,l\,,\\
&f-\lambda=v_{k}^\top G_0v_k+\sum_{i=1}^m g_iv_{k-d_i}^\top G_i v_{k-d_i}+\sum_{j=1}^l h_j\eta_j^\top v_{2(k-w_j)}  \,,
\end{array}
\end{equation}
where $v_t$ is the vector of monomials up to degree $t$.

The dual of \eqref{eq:sos.sdp.general.general.pop}, known as moment relaxations, are formulated as:
\begin{equation}\label{eq:mom.relax.pop}
\begin{array}{rl}
\tau_k=\min\limits_y& L_y(f)\\
\text{s.t.}& y=(y_\alpha)_{\alpha\in\N^n_{2k}}\subset \R\,,\\
&M_k(y)\succeq 0\,,\,y_0=1\,,\\
&M_{k-d_i}(g_iy)\succeq 0\,,\,i=1,\dots,m\,,\\
&M_{k-w_j}(h_jy)=0\,,\,j=1,\dots,l\,,
\end{array}
\end{equation}
where $M_t(y)$ is the moment matrix of order $k$ associated with $y$, $M_{t-u}(py)$ is the localizing matrix of order $t-u$ associated with $y$ and a polynomial $p$ of degree at most $2u$, and $L_y$ is the Riesz linear functional on $\R[x]_{2t}$.

The formulations \eqref{eq:sos.sdp.general.general.pop}-\eqref{eq:mom.relax.pop} are primal-dual semidefinite programs originally introduced by Lasserre in \cite{lasserre2001global}.
When $g_1$ represents a ball constraint, i.e., $g_1=R-\|x\|_{\ell_2}^2$ for some $R>0$, Lasserre demonstrated that the sequence $(\lambda_k)_{k=0}^\infty$ converges monotonically to $f^\star$ from below\footnote{The convergence rate of the sequence $(\lambda_k)_{k=0}^\infty$ to $f^\star$  is discussed in detail in \cite{baldi2023effective,laurent2023effective}. In this work, however, our focus is on analyzing the computational complexity involved in solving each semidefinite relaxation.
}.
In this case, Josz and Henrion showed in \cite{josz2016strong} that $\lambda_k=\tau_k$ for all $k\ge \max\{d,d_1,\dots,d_m,w_1,\dots,w_l\}$. 
 
Thus, the optimal values of \eqref{eq:sos.sdp.general.general.pop}-\eqref{eq:mom.relax.pop}, $\lambda_k$ and $\tau_k$, serve as  lower approximations for the optimal value $f^\star$ of the polynomial optimization problem \eqref{eq:pop.general}.
These values enable us to determine whether a local optimal solution $\bar x \in S(g,h)$ for problem \eqref{eq:sos.sdp.general.general.pop}, obtained using classical optimization methods, is a global minimum or not. This is achieved by evaluating the gap $f(\bar x)-\lambda_k$.
Since $\lambda_k\le f ^\star\le f(\bar x)$, a zero gap $f(\bar x)-\lambda_k=0$ implies that $f(\bar x)=f^\star$, confirming global optimality. Furthermore, if the moment matrix satisfies certain rank constraints and the hierarchy has converged, it is possible to extract the solutions from it~\cite{henrion2005detecting}.

However, solving \eqref{eq:sos.sdp.general.general.pop}-\eqref{eq:mom.relax.pop} to compute $\lambda_k$ and $\tau_k$ becomes increasingly challenging because their size grows rapidly as the relaxation order $k$ increases.
To illustrate this, we reformulate problems \eqref{eq:sos.sdp.general.general.pop} and \eqref{eq:mom.relax.pop} in the standard form of a semidefinite program \eqref{eq:sdp.intro}.

For the sum-of-squares relaxation \eqref{eq:sdp.intro}, the corresponding semidefinite program \eqref{eq:sdp.intro} has:
$M=\binom{n+2k}n$ affine equality constraints, 
\begin{equation}
t=\sum_{j=1}^l \binom{n+2(k-w_j)}n=O\left(l \binom{n+2k}n\right)
\end{equation}
scalar variables, and $r=m+1$ positive semidefinite matrices, with the largest size $N=\binom{n+k}n$.

For the moment relaxation \eqref{eq:mom.relax.pop}, the corresponding semidefinite program \eqref{eq:sdp.intro} has:
\begin{equation}
\begin{array}{rl}
M=&1+\sum_{j=1}^l \binom{n+2(k-w_j)}n+\frac{1}{2}\binom{n+k}n[\binom{n+k}n+1]-\binom{n+2k}n\\[5pt]
&+\frac{1}2\sum_{i=1}^m \binom{n+k-d_i}n[\binom{n+k-d_i}n+1]\\[5pt]
=& O\left(l\binom{n+2k}n+m\binom{n+k}n^2\right)
\end{array}
\end{equation}
affine equality constraints, $t=0$ scalar variables and $r=m+1$ positive semidefinite matrices, with the largest size $N=\binom{n+k}n$.

These complexities highlight the exponential growth in size with respect to number of variables $n$ of the polynomial optimization \eqref{eq:pop.general} and the relaxation order $k$ of the semidefinite relaxations \eqref{eq:primal.intro}-\eqref{eq:dual.intro}, making these relaxations computationally intensive to solve.

Therefore, we seek a method for solving the semidefinite relaxations \eqref{eq:primal.intro}-\eqref{eq:dual.intro} that maintains low computational complexity relative to the problem size and does not demand high precision. The method employed below fulfills these expectations.
\subsection{Block encodings}

To efficiently apply the Matrix Multiplicative Weights method/Hamiltonian updates (Algorithms \ref{alg:HU} and \ref{alg:Binary.search.HU}) for solving the SDP relaxation of the polynomial optimization problem \eqref{eq:pop.general}, we adopt the \emph{block-encoding} framework~\cite{Gilyn2019}.
This formalism provides an efficient mechanism for accessing large, structured matrices, thereby enabling significant algorithmic speed-ups in large-scale SDP solvers.

A block encoding represents a matrix $A$ as a normalized sub-block of a unitary operator $U$ such that
\[
U = 
\begin{pmatrix}
A/\alpha & * \\ * & *
\end{pmatrix},
\]
where $\alpha > 0$ is a known normalization factor.

In Appendix~\ref{sec:block.encoding}, we describe how to construct block encodings for the matrices $A_i$ obtained when converting the SOS relaxation~\eqref{eq:sos.sdp.general.general.pop} (with $l=0$) into the standard SDP form~\eqref{eq:primal.intro}.
This construction only requires oracle access to the coefficients of the polynomials $f$ and $g_i$. The resulting block encodings can be generated using $O(n \log k)$ quantum gates, ensuring that this step does not constitute a computational bottleneck for the overall method.

\subsection{Our contribution}
In this paper, we develop quantum algorithms for solving the semidefinite relaxations \eqref{eq:sos.sdp.general.general.pop}-\eqref{eq:mom.relax.pop} of the polynomial optimization problem \eqref{eq:pop.general}. Our approach utilizes binary search and Hamiltonian updates~\cite{GSLBrando2022}.

Our main contributions are:
\begin{itemize}
    \item We show that, after appropriate rescaling of the objective and constraints, the SOS relaxations of certain POPs admit a bounded-trace feasible region. This bound is key for analysing the scaling of the SDPs.
    \item We identify settings—both unconstrained and inequality-constrained POPs—where the SDP data is sparse (each row of $C$ and $A_i$ has few nonzero entries), enabling the Hamiltonian updates framework to offer runtime improvements over classical methods.
    \item For each setting, we prove runtime bounds and accuracy guarantees for our quantum algorithms based on various structural assumptions on the POP, such as the solution being in a certain $\ell_1$ ball or the hierarchy converging after a certain number of levels.
\end{itemize}

Leveraging these contributions we obtain the following results:
\begin{itemize}
\item \textbf{Case 1: Unconstrained polynomial optimization} (see Section \ref{sec:uncons.pop}).

Consider the unconstrained case of problem \eqref{eq:pop.general}, where $m=l=0$. 

If the optimal value of the sum-of-squares relaxation \eqref{eq:sos.sdp.general.general.pop} of some order $k$ matches the optimal value of the polynomial optimization problem \eqref{eq:pop.general}, i.e., $\lambda_k=f^\star$, and $f^\star$ is attained at a point within an $\ell_1$-ball of radius $r\in(0,1)$ centered at the origin, then our quantum Algorithm \ref{alg:uncon.pop} for solving the \eqref{eq:sos.sdp.general.general.pop} has a runtime of 
\begin{equation}\label{eq:complex.case1}
O\left(\left[\binom{n+2k}n^{1/2}+\frac{1}\varepsilon\binom{n+k}n^{1/2}\ \right]\frac{1}{\varepsilon^4}\right)
\end{equation}
to compute an approximate value of $\lambda_k$ with an accuracy of $\frac{\varepsilon }{1-r}$ (see Theorem \ref{theo:accuarcy.approximate.val.uncons}).

To obtain the complexity \eqref{eq:complex.case1}, we transform the sum-of-squares relaxation \eqref{eq:sos.sdp.general.general.pop} to standard semidefinite program \eqref{eq:primal.intro} and apply Apeldoorn--Gily\'en's result~\cite{van2018improvements}. 
Note that each row of the input matrices $C$ and $A_i$ in program \eqref{eq:primal.intro} has at most one nonzero entry.
In addition, the exactness assumption $\lambda_k=f^\star$ allows us to bound the $\ell_1$-norm of the dual solution of program \eqref{eq:primal.intro} via the bound $r$ on the $\ell_1$-norm of the optimal solution for the polynomial optimization problem \eqref{eq:pop.general}.

\if{
(ii) On the other hand, if Slater's condition holds for the sum-of-squares relaxation \eqref{eq:sos.sdp.general.general.pop}, meaning there exists a strictly feasible solution $G_0\succ 0$ with conditional number $\kappa $, Algorithm \ref{alg:uncon.pop} provides an approximate value for  $\lambda_k$ with an accuracy of the minimum of
\begin{equation}\label{eq:accuracy.case1}
\varepsilon\kappa  \binom{n+k}n \quad\text{ or }\quad\varepsilon\kappa \binom{n+2k}n^{1/2} 
\end{equation}
(see Theorem \ref{theo:accuarcy.approximate.val.uncons} (ii)-(iii)).
}\fi

\item \textbf{Case 2: Inequality-constrained polynomial optimization problem} (see Sections \ref{sec:cons.pop}, \ref{sec:ineq.cons.pop.simplex} and \ref{sec:app.portfolio}).

Now consider the inequality-constrained case of problem \eqref{eq:pop.general}, where $l=0$. 

(i) If $\lambda_k=f^\star$ and $f^\star$  is attained at a point within an $\ell_1$-ball of radius $r\in(0,1)$ centered at the origin, with a coefficient rescaling of the $g_i$s, our quantum Algorithm \ref{alg:con.pop.exact} for solving sum-of-squares relaxation \eqref{eq:sos.sdp.general.general.pop} requires
\begin{equation}\label{eq:runtime.inequality.pop}
O\left(s_g\left[\binom{n+2k}{n}^{1/2}+\left( \sum_{i=0}^m\binom{n+k-d_i}n\right)^{1/2}\frac{1}{\varepsilon}\right]\frac{1}{\varepsilon^4}\right)\,,
\end{equation}
to provide an approximate value for $\lambda_k$ with an accuracy of $\frac{\varepsilon }{1-r}$, where $s_g$ is the maximum number of nonzero entries per row of the \textit{coefficient-matching matrices} (defined as in \eqref{eq.tilde.B.gamma}) associated with $g_i$s (see Theorem \ref{theo:convergence.ineq.cons}).
Note that when the sum-of-squares relaxation \eqref{eq:sos.sdp.general.general.pop} is transformed to standard semidefinite program \eqref{eq:primal.intro}, each row of $C$ and $A_i$s has at most $s_g$ nonzero entries and $s_g$ is not larger than the maximum number of nonzero coefficients of each $g_i$.

Consequently, if $f^\star$  is attained at a point within an $\ell_1$-ball of radius $r\in(0,1)$ centered at the origin, and the coefficients of $g_i$s are rescaled appropriately, while the second-order sufficient optimality conditions hold for problem \eqref{eq:pop.general} at every optimal solution, then for sufficiently large $k$, Algorithm \ref{alg:con.pop.exact} yields an approximate value for $f^\star$ with an accuracy of $\frac{\varepsilon }{1-r}$ (see Corollary \ref{coro:sosoc}).

Instead of using the exactness assumption $\lambda_k=f^\star$, we can rely on the inequality constraints (ball and simplex) in the polynomial optimization problem \eqref{eq:pop.general} to bound the $\ell_1$-norm of the dual solution of the relaxation program \eqref{eq:primal.intro}.

(ii) Concretely, if the polynomial optimization problem \eqref{eq:pop.general} includes a ball constraint of the form $c(R-\|x\|_{\ell_2}^2)\ge 0$ with $R\in(0,1)$ and $c>0$, then after rescaling the coefficients of the $g_i$ constraints, Algorithm \ref{alg:con.pop.exact} produces an approximate value for $\lambda_k$  with an accuracy of $\frac{\varepsilon}{1-R}\binom{n+k}n$ (see Theorem \ref{theo:convergence.over.ball}).

(iii) If the constraint set $S(g,h)$ of the polynomial optimization problem \eqref{eq:pop.general} is contained within the simplex $\{x\in\R^n_+\,:\,\sum_{j=1}^n x_j\le r\}$ for some $r\in(0,1)$, meaning that the terms $x^\alpha$, for $\alpha\in\{0,1\}^n\backslash\{0\}$, and polynomials $c_j[r^j-(x_1+\dots+x_n)^j]$ with $c_j>0$, for $j=1,2$, are included among $g_1,\dots,g_m$, then with coefficient rescaling of the $g_i$s, our Algorithm \ref{alg:con.pop.exact} provides an approximate value for $\lambda_k$ with an accuracy of $\frac{\varepsilon}{1-r}$ (see Theorem \ref{theo:convergence.over.simplex}).

(iv) To avoid the exponential number of inequality constraints in $S(g,h)$ in the last result, we define a new semidefinite program \eqref{eq:sdp.relaxation.cons.pop.ineq.aff} with optimal value $\hat \lambda_k$ by relaxing equality affine constraints in sum-of-squares relaxation \eqref{eq:sos.sdp.general.general.pop}.
When the constraint set $S(g,h)$ is contained within the nonnegative orthant $\R^n_+$, it holds that $\lambda_k\le \hat \lambda_k\le f^\star$.
If $S(g,h)$ is contained within the simplex $\{x\in\R^n_+\,:\,\sum_{j=1}^n x_j\le r\}$ for some $r\in(0,1)$ and polynomials $c_j[r^j-(x_1+\dots+x_n)^j]$ with $c_j>0$, for $j=1,2$, are included among $g_1,\dots,g_m$, our quantum Algorithm \ref{alg:con.pop.exact.ineq.aff} provides an approximate value for $\hat \lambda_k$ with an accuracy of $\frac{\varepsilon}{1-r}$ and runtimes \eqref{eq:runtime.inequality.pop} (see Theorem \ref{theo:convergence.over.simplex.ineq.aff}).

(v) By using our Algorithm~\ref{alg:con.pop.exact.ineq.aff}, the optimal value of the portfolio optimization problem can be approximated to within~$\varepsilon$ in time~$O(n\varepsilon^{-4} + \sqrt{n}\varepsilon^{-5})$ on a quantum computer (see Theorem \ref{theo:apply.portfolio}).

\end{itemize}

In Table \ref{tab:summary} we summarize the results for the cases considered above.

\begin{table}
\footnotesize

\caption{Summary of quantum algorithms based on Hamiltonian updates for solving order-$k$ sum-of-squares and moment relaxations \eqref{eq:sos.sdp.general.general.pop}-\eqref{eq:mom.relax.pop} (and \eqref{eq:sdp.relaxation.cons.pop.ineq.aff}-\eqref{eq:sdp.relaxation.cons.pop.dual.ineq.aff}) of the polynomial optimization problem (POP) \eqref{eq:pop.general} with $n$ variables, $m$ inequality constraints and $l$ equality constraints. Here $g$ (resp. $h$) is the set of inequality (resp. equality) constraints in \eqref{eq:pop.general}; $x^\star$ is some optimal solution to \eqref{eq:pop.general}; $c,c_j$ are some positive real; $r,R$ are some reals in $(0,1)$; $s_g$ is the maximum number of nonzero entries per row of the \textit{coefficient-matching matrices} (defined as in \eqref{eq.tilde.B.gamma}) associated with $g_i$s; $n_{k}$ stands for $\binom{n+k}n$; and $n_{m,k}$ stands for $\sum_{i=0}^m\binom{n+k-d_i}n$.}
\label{tab:summary}
\begin{center}
\begin{tabular}{ |p{1.7cm}|p{1.6cm}|p{4cm}|p{2.7cm}|p{4cm}| } 
\hline
Type of POP & Quantum 

algorithm &Conditions &  Accuracy for approximate optimal value & Quantum complexity \\
\hline
Unconstrained ($m=l=0$) &Algorithm \ref{alg:uncon.pop}& $\lambda_k=f^\star$, $\|x^\star\|_{\ell_1}\le r $  & $\frac{\varepsilon }{1-r}$ 

(Theorem \ref{theo:accuarcy.approximate.val.uncons})
& \multirow{2}{3cm}{$O\left(\left(n_{2k}^{1/2}+\frac{1}\varepsilon n_k^{1/2}\ \right)\frac{1}{\varepsilon^4}\right)$} \\ \hline
Inequality-constrained ($l=0$)& Algorithm \ref{alg:con.pop.exact} & $\lambda_k=f^\star$, $\|x^\star\|_{\ell_1}\le r $ & $\frac{\varepsilon }{1-r}$ 

(Theorem \ref{theo:convergence.ineq.cons})
& $O\left(s_g\left(n_{2k}^{1/2}+n_{m,k}^{1/2}\frac{1}{\varepsilon}\right)\frac{1}{\varepsilon^4}\right)$ \\ \cline{3-3}
& & second-order sufficient optimality conditions, $\|x^\star\|_{\ell_1}\le r $& (Corollary \ref{coro:sosoc}) & \\ \cline{3-4}
& &  $c(R-\|x\|_{\ell_2}^2)\in g$& $\frac{\varepsilon n_k}{1-R}$

(Theorem \ref{theo:convergence.over.ball})
 & \\ \cline{3-4}
& & 
$\{x^\alpha\}_{\alpha\in\{0,1\}^n\backslash\{0\}}\subset g$, 

$\{c_j[r^j-(\sum_{j=1}^n x_j)^j]\}_{j=1,2}\subset g$
&$\frac{\varepsilon }{1-r}$ 

(Theorem \ref{theo:convergence.over.simplex})&\\\cline{2-3}
& Algorithm \ref{alg:con.pop.exact.ineq.aff} & $S(g,h)\subset \R_+^n$, 

$\{c_j[r^j-(\sum_{j=1}^n x_j)^j]\}_{j=1,2}\subset g$& (Theorem \ref{theo:convergence.over.simplex.ineq.aff}) &  \\ 
\hline
\end{tabular}
\end{center}
\end{table}

In Appendix \ref{sec:order.quantum.complexities}, we discuss additional quantum complexities of SDP relaxations (both SOS and moment relaxations) for polynomial optimization problems (POPs) in case where the Slater's conditions hold for these SDP relaxations. This appendix is divided into two parts:
In Appendix \ref{sec:other.quan.complex.SOS}, we present the computational complexities and accuracies of our quantum algorithms for solving SOS relaxations of POPs in more general settings where the exactness assumption $\lambda_k = f^\star$ and the structural constraints discussed above do not hold. In Appendix \ref{app:moment.relax}, we further apply our approach to moment relaxations of POPs over the unit ball and the unit sphere. However, the results in these more general cases are less conclusive compared to those reported in the main body of the paper.

\subsection{Comparison with other methods}
\label{sec:comparison}

If we employ interior-point methods~\cite{helmberg1996interior,jiang2020faster,augustino2023quantum}  to solve the sum-of-squares relaxation~\eqref{eq:sos.general.general.pop} of problem~\eqref{eq:pop.general} with $m=l=0$ in Case~1, the time complexity on a classical computer is
\begin{equation}
O\!\left( 
\binom{n+k}{n}^{1/2} 
\left[
\binom{n+2k}{n}\binom{n+k}{n}^2 
+ 
\binom{n+2k}{n}^{\omega} 
+ 
\binom{n+k}{n}^{\omega}
\right]
\log\!\left(\frac{1}{\varepsilon}\right)
\right),
\end{equation}
where $\omega \in [2, 2.38]$.  
On a quantum computer, the corresponding complexity becomes
\begin{equation}
O\!\left(
\binom{n+k}{n}^{3.5}
\frac{1}{\varepsilon}\,
\mathrm{polylog}\!\left(\binom{n+k}{n},\frac{1}{\varepsilon}\right)
\right),
\end{equation}
to compute an approximate value for $\lambda_k$ with accuracy
\begin{equation}
\varepsilon\,\binom{n+k}{n}.
\end{equation}

Compared with these interior-point methods, our quantum algorithm (see~\eqref{eq:complex.case1}) exhibits a higher dependence on the precision parameter $\varepsilon$, scaling polynomially as $O((1/\varepsilon)^5)$, whereas classical interior-point methods depend only logarithmically on $1/\varepsilon$ and the quantum interior-point method scales as $O((1/\varepsilon)\log(1/\varepsilon))$.  
However, our approach achieves substantially smaller exponents in terms of $\binom{n+k}{n}$ and $\binom{n+2k}{n}$—more than four times lower than those in the interior-point setting—yielding superior asymptotic performance for large-scale polynomial optimization problems.  
While the dependence on $\varepsilon$ is worse, this trade-off is advantageous in high-dimensional regimes where problem size dominates the numerical precision requirements.

\medskip

When $t=0$, the standard SDP~\eqref{eq:sdp.intro} is equivalent to
\begin{equation}\label{eq:sdp.intro2}
\begin{array}{rl}
\lambda^\star = \inf\limits_{X} & \tr(CX) \\
\text{s.t.} & X \in \mathbb{S}^{\bar N}_+, \quad \tr(A_i X) = b_i, \quad i = 1,\dots,M,
\end{array}
\end{equation}
where $\bar N = \sum_{j=1}^{r} N_j$, $C, A_i \in \prod_{j=1}^r \mathbb{S}^{N_j}_+$, $q_i \in \mathbb{R}^t$, and $b_i \in \mathbb{R}$.  
In this SDP, while the decision matrix $X$ is dense, the matrices $A_i$ and $C$ exhibit block-diagonal structures.  
The sum-of-squares relaxation~\eqref{eq:sos.general.general.pop} of problem~\eqref{eq:pop.general} with $l=0$ in Case~2 can thus be formulated as the semidefinite program~\eqref{eq:sdp.intro2} with $r = m + 1$, $M = \binom{n+2k}{n}$, and
\begin{equation}
N_{i+1}=\binom{n+k-d_i}{n}\quad\text{and}\quad\bar N = \sum_{i=0}^m \binom{n+k-d_i}{n}.
\end{equation}

Using an interior-point method to solve this SDP, the classical computational complexity is
\begin{equation}
\footnotesize
O\!\left(
\left(\sum_{i=0}^m \binom{n+k-d_i}{n}\right)^{1/2}
\!\left[
\!\left(\sum_{i=0}^m \binom{n+k-d_i}{n}\right)^{\!2}\!\binom{n+2k}{n}
+
\binom{n+2k}{n}^{\omega}
+
\left(\sum_{i=0}^m \binom{n+k-d_i}{n}\right)^{\!\omega}
\!\right]
\log\!\left(\frac{1}{\varepsilon}\right)
\!\right),
\end{equation}
and on a quantum computer it is
\begin{equation}
O\!\left(
\left(\sum_{i=0}^m \binom{n+k-d_i}{n}\right)^{3.5}
\frac{1}{\varepsilon}\,
\mathrm{polylog}\!\left(\frac{1}{\varepsilon}\right)
\right),
\end{equation}
to approximate $\lambda_k$ with accuracy
\begin{equation}
\varepsilon \sum_{i=0}^m \binom{n+k-d_i}{n}.
\end{equation}

Compared with these interior-point complexities, our quantum MMW-based algorithm~\eqref{eq:runtime.inequality.pop} achieves a lower exponent in the parameters $\sum_{i=0}^m \binom{n+k-d_i}n$ ($\le (m+1) \binom{n+k}{n}$), and $\binom{n+2k}{n}$ (noting that $\binom{n+2k}{n} = O(\binom{n+k}{n}^2)$), while exhibiting a higher dependence on $1/\varepsilon$.  
Furthermore, if $\lambda_k = f^\star$ and $f^\star$ is attained at a point within an $\ell_1$-ball of radius $\frac{1}{2}$ centered at the origin, or if the feasible set $S(g,h)$ of problem~\eqref{eq:pop.general} lies within the simplex $\{x \in \mathbb{R}^n_+ : \sum_{j=1}^n x_j \le 1/2\}$, then under suitable coefficient rescaling of $f$ and $g_i$, our quantum algorithm attains improved accuracy in $\varepsilon$.

For example, our quantum algorithm can approximate the optimal value of a portfolio optimization problem (with $k = d_1 = \dots = d_m = 1$, $s_g=1$ and $m = O(n)$) to accuracy $\varepsilon$ in computational time $O(n^{4}\operatorname{poly}(\varepsilon^{-1}))$ on a classical computer and $O(n\varepsilon^{-4} + \sqrt{n}\varepsilon^{-5})$ on a quantum computer (see Theorem~\ref{theo:apply.portfolio}).
For comparison, classical and quantum interior-point methods require $O(n^{3.5}\log(1/\varepsilon))$ and $O(n^{4.5}\varepsilon^{-1}\log(\varepsilon^{-1}))$, respectively, to achieve the same level of accuracy. 
Hence, our approach attains a super-quadratic speedup in the dependence on $n$ relative to interior-point methods in this example.

\medskip

\textbf{Memory comparison.}  
In terms of memory usage, classical interior-point methods require storing and factorizing dense matrices of size $\bar N \times \bar N$, resulting in a memory cost of $O(\bar N^2)$, and typically $O((M+\bar N)^3)$ when forming and solving the normal equations at each iteration. 
In contrast, in our MMW-based approach, the matrices $C$ and $A_j$ are assumed to be \emph{sparse}, with at most $s$ nonzero entries per row.  
This sparsity substantially reduces storage requirements: only the positions and values of the nonzero elements need to be stored or accessed through oracles.  
On a classical computer, this yields a memory cost of $O(s\bar N)$ when exploiting the block-diagonal and sparse structures of $A_j$ and $C$.  
On a quantum computer, these sparse matrices are represented by block encodings constructed from sparse-access oracles, which require only $\mathrm{polylog}(\bar N)$ qubits to store and query.  
Consequently, our method is markedly more memory-efficient than interior-point approaches on both classical and quantum platforms, especially for large-scale SDP relaxations where the data matrices are structured and sparse.

\section{Preliminaries}
\label{sec:pre}

\subsection{Notation and definitions}

Let $\|\cdot\|_{\ell_p}$, $p\in[1,\infty]$, denote the $l_p$-norm of a real vector in $\R^n$.
Then, for any $x\in\R^n$, the following inequality holds: $\|x\|_{\ell_2}\le \|x\|_{\ell_1}\le \sqrt{n}\|x\|_{\ell_2}$.

The trace of a real square matrix $A$, denoted by $\tr(A)$, is the sum of its diagonal elements.
A real symmetric matrix $A$ is said to be positive semidefinite (denoted by $A\succeq 0$) if all of its eigenvalues are nonnegative, and positive definite  (denoted by $A\succ 0$) if all of its eigenvalues are positive.
We write $A\succeq B$ (respectively $A\succ B$) to mean that $A-B\succeq 0$ (respectively $A-B\succ 0$).
Let $\|\cdot\|_F$ and $\|\cdot\|_*$ represent the Frobenius and nuclear norms, respectively, of a real matrix $A\in\R^{m\times n}$, i.e., 
$\|A\|_F=\sqrt{\tr(A^\top A)}$ and $\|A\|_*=\tr(\sqrt{A^\top A})$.
It follows that: $\|A\|_F\le \|A\|_*$.

Let $\mathbb S^N$ be the Hilbert space of real symmetric matrices of size $N$, equipped with the trace inner product $\tr(A^\top B)$ and the Frobenius norm $\|A\|_F=\sqrt{\tr(A^2)}$, for given $A,B\in \mathbb S^N$.
We denote by $\mathbb S^N_+$ the cone of positive semidefinite matrix in $\mathbb S^N$.
Let $I_N$ denote the identity matrix of size $N$.
Given symmetric matrices $X_1,\dots,X_r$, let $\diag(X_1,\dots,X_r)$ denote the block-diagonal matrix with blocks $X_1,\dots,X_r$.
For $X=(X_{ij})_{i,j\in[N]}\in\mathbb S^N$, we define $\text{vec}(X)=\left(X_{ij}\sqrt{2-\delta_{ij}}\right)_{i\le j}$, where $\delta_{ij}$ equals $1$ if $i=j$ and is $0$ otherwise.
Then, $\trace(XY)=\text{vec}(X)^\top \text{vec}(Y)$, for $X,Y\in\mathbb S^N$.
It implies $\|X\|_F=\|\text{vec}(X)\|_{\ell_2}$.
Let $\text{mat}(\cdot)$ denote the inverse map of $\text{vec}(\cdot)$.
If $X=\diag(X_1,\dots,X_r)\in \prod_{j=1}^r \mathbb S^{N_j} =\mathbb S^{N_1}\times \dots \times \mathbb S^{N_r}$, define $\text{Vec}(X)=(\text{vec}(X_1),\dots,\text{vec}(X_r))\in \R^L$,  with $L=\frac{1}{2}\sum_{i=1}^rN_i(N_i+1)$.
Let $\text{Mat}(\cdot)$ denote the inverse map of $\text{Vec}(\cdot)$.

Let $A$ be a real matrix of size $m\times n$.
The $\ell_p\to\ell_q$ norm of $A$ is defined as
$\|A\|_{\ell_p\to \ell_q}=\sup\{\|Ax\|_{\ell_q}\,:\,x\in\R^n\,,\,\|x\|_{\ell_p}=1\}$,
where $p,q\in[1,\infty]$.
It follows that $\|Ax\|_{\ell_q}\le \|A\|_{\ell_p\to \ell_q}\|x\|_{\ell_p}$.
In particular, $\|A\|_{\ell_1\to\ell_1}$ represents the maximum $\ell_1$-norm of a column of $A$ and $\|A\|_{\ell_2\to\ell_2}$ is the largest singular value of $A$.
Let $\sigma_{\max}(\cdot)$ and $\sigma_{\min}(\cdot)$ denote the largest and smallest singular value of a real matrix, respectively.
Similarly, let 
$\lambda_{\max}(\cdot)$ and $\lambda_{\min}(\cdot)$ represent the largest and smallest eigenvalues of a real symmetric matrix, respectively.
Denote by $\det(\cdot)$ the determinant of a square matrix.
The condition number of a real symmetric matrix $A$ is defined by $\kappa(A)=\frac{\lambda_{\max}(A)}{\lambda_{\min}(A)}$.

Let $\R[x]$ denote the ring of polynomials in vectors of variables $x=(x_1,\dots,x_n)$.
For $k\in\N$, define $\N^n_k=\{\alpha\in\N^n\,;\,|\alpha|\le k\}$, where $|\alpha|=\alpha_1+\dots+\alpha_n$.
For $\alpha,\beta\in\N^n$, we write $\alpha\le \beta$ (in lexicographical order) if either $|\alpha|< |\beta|$ or $|\alpha|=|\beta|$ and $\alpha_i>\beta_i$ for the largest $i$ for which $\alpha_i\ne\beta_i$.
Let $v_k$ denote the vector of monomials of degree up to $k$, i.e., $v_k=(x^\alpha)_{\alpha\in\N^n_k}$ with $x^\alpha=x_1^{\alpha_1}\dots x_n^{\alpha_n}$.
A polynomial $p\in \R[x]$ can be written as $p=\sum_{\alpha\in\N^n_k}p_\alpha x^\alpha=\bar p^\top v_k$ with $\bar p=(p_\alpha)_{\alpha\in\N^n_k}\subset \R$ for some $k\in\N$.
The degree of polynomial $p$ is defined as $\deg(p)=\max\{|\alpha|\,:\, p_\alpha\ne 0\}$.
Let $\R[x]_k$ denote the linear space of polynomials in $\R[x]$ of degree at most $k$, and let $\Sigma[x]_k$ represent the cone of sums of squares of polynomials in $\R[x]$ of degree at most $2k$. 

For $p\in\R[x]$, let
$\|p\|_{\ell_q}$ represent the $\ell_q$-norm of the vector of coefficients of $p$.
A polynomial $p=\sum_{\alpha\in\N^n}p_\alpha x^\alpha\in\R[x]$ has its support defined as $\supp(p)=\{\alpha\in\N^n\,:\,p_\alpha\ne 0\}$.
For $\alpha=(\alpha_1,\dots,\alpha_n)\in\N^{n}$ with $|\alpha|=k$,  the multinomial coefficient $\binom{k}{\alpha}$ is defined as $\binom{k}{\alpha}=\frac{k!}{\alpha_1!\dots\alpha_n!}$.

For a given $y=(y_\alpha)_{\alpha\in\N^n_{2k}}$, let $M_k(y)$ denote the moment matrix of order $k$ associated with $y$, i.e., $M_k(y)=(y_{\alpha+\beta})_{\alpha,\beta\in\N^n_k}$.
For $p=\sum_{\gamma\in\N^n_{2u}}p_\gamma x^\gamma \in \R[x]_{2u}$, let $M_{k-u}(py)$ denote the localizing matrix of order $k-u$ associated with $p$ and $y$ of order $k$, i.e., $M_{k-u}(py)=(\sum_{\gamma\in\N^n_{2u}}p_\gamma y_{\alpha+\beta+\gamma})_{\alpha,\beta\in\N^n_{k-u}}$.
Let $L_y:\R[x]_{2k}\to \R$ be the Riesz linear functional defined by $L_y(q)=\sum_{\alpha\in\N^n_{2k}}q_\alpha y_\alpha$, for a given $q=\sum_{\alpha\in\N^n_{2k}}q_\alpha x^\alpha\in\R[x]_{2k}$.
Then, we have the following relations for real symmetric matrices $G$ and $Q$: $L_y(v_k^\top Gv_k)=\tr( G M_k(y))$
and $L_y( pv_{k-u}^\top Qv_{k-u})=\tr( Q M_{k-u}(p y))$.

Let $\mu$ be a positive measure on $\R^n$. For $\alpha\in\N^n$, the quantity $y_\alpha=\int x^\alpha d\mu$ is called the moment of order $\alpha$ with respect to $\mu$.
The collection of moments $y=(y_\alpha)_{\alpha\in\N^n}$ is referred to as the moment sequence with respect to $\mu$.
If we restrict the moments to a finite set of multi-indices $\alpha\in \N^n_d$, the sequence $y=(y_\alpha)_{\alpha\in\N^n_d}$ is called the truncated moment sequence of order $d$ with respect to $\mu$.
For a truncated moment sequence $y=(y_\alpha)_{\alpha\in\N^n_{2k}}$, we define $M_k(\mu)=M_k(y)$ as the moment matrix of order $k$ with $\mu$.
Given $p=\sum_{\gamma\in\N^n_{2u}}p_\gamma x^\gamma \in \R[x]_{2u}$, we denote by $M_{k-u}(p\mu)=M_{k-u}(py)$ the localizing matrix with $p$ and $\mu$ of order $k-u$.
Additionally, the Riesz linear functional $L_y$ is defined by $L_y(q)=\int qd\mu$, for $q\in\R[x]_{2k}$.
For real symmetric matrices $P$ and $Q$, the following relations hold: $\int v_k^\top Gv_k d\mu=\tr( G M_k(\mu))$
and $\int pv_{k-u}^\top Qv_{k-u}d\mu=\tr(Q M_{k-u}(p \mu))$.

\subsection{Strong duality and exactness for semidefinite relaxations of polynomial optimization}
Before applying quantum SDP solvers to sums-of-squares relaxations, it is important to understand when these relaxations are well-behaved from an optimization perspective. In particular, we will review conditions under which strong duality holds and when the SDP hierarchy achieves exactness—that is, the relaxation value equals the true optimum after finitely many levels. This subsection recalls standard optimality conditions, duality theorems, and finite-convergence criteria for polynomial optimization problems, which we will later use to justify our accuracy assumptions and runtime guarantees.
Given $x\in\R^n$, $\eta\in\R^m_+$ and
$\gamma\in \R^l$, the Lagrangian for problem \eqref{eq:pop.general} is defined as:
\begin{equation}
\mathcal{L}( x,\eta,\gamma)\,:=\,
f( x)-\sum_{i=1}^m\eta_ig_i( x)
-\sum_{j=1}^l\gamma_jh_j( x)\,.
\end{equation}

For $x\in S(g,h)$, the active set of inequality constraints is given by:
\begin{equation}
J(x):=\{\,i\in\{1,\dots,m\}: g_i(x)=0\,\}\,.
\end{equation}

We say that the second-order sufficient optimality conditions (see \cite[Chapter 2]{nocedal2006numerical})  hold for problem \eqref{eq:pop.general} at $x^\star\in S(g,h)$ if the following criteria are satisfied:
\begin{enumerate}
\item Karush–Kuhn–Tucker (KKT) conditions: There exist $\eta^\star\in \R^m_+$ and $\gamma^\star\in\R^l$ such that
$\nabla_{x}\mathcal{L}(x^\star,\eta^\star,\gamma^\star)=0$ and $\eta^\star_i\,g_i( x^\star)=0$ for all $i=1,\dots,m$.
\item Strict complementarity: $\eta_i^\star +g_i( x^\star)>0$, for all $i=1,\dots,m$.
\item Linear independence constraint qualification: the family 
\begin{equation}
\Lambda=\{\nabla g_i( x^\star)\}_{i\in J(x^\star)}\cup \{\nabla h_j( x^\star)\}_{j=1}^l
\end{equation}
is linearly independent. 
\item Positive definiteness of the Hessian: $u^\top\nabla_{x}^2\mathcal{L}( x^\star,\eta^\star,\gamma^\star) u>0$ for all $u\in \text{span}(\Lambda)^\perp\backslash \{0\}$.
\end{enumerate}

 \begin{lemma} \label{lem:strong.duality.ball}
(Josz-Henrion \cite{josz2016strong})
Let $f^\star$ be as in \eqref{eq:pop.general}.
Assume $g_1=R-\|x\|_{\ell_2}^2$ for some $R>0$.
Then, zero duality gap between the primal  \eqref{eq:sos.sdp.general.general.pop} and dual \eqref{eq:mom.relax.pop} holds, i.e., $\lambda_k=\tau_k$ and $\tau_k\in\R$, for $k\ge k_{\min}$, where
\begin{equation}\label{eq:k.min}
k_{\min}=\max\{d,d_1,\dots,d_m,w_1,\dots,w_l\}\,.
\end{equation}
Moreover, SDP \eqref{eq:mom.relax.pop} has an optimal solution.
\end{lemma}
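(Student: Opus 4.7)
The plan is to prove the two parts in order: first, that \eqref{eq:mom.relax.pop} has a finite optimum that is attained; second, that $\lambda_k = \tau_k$. Both parts rely in an essential way on the ball constraint $g_1 = R - \|x\|_{\ell_2}^2$, which forces compactness of the moment feasible set.

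First I would show boundedness of the feasible set of \eqref{eq:mom.relax.pop}. Positivity of the diagonal entries of $M_{k-1}(g_1 y)$ gives $L_y(g_1 x^{2\alpha}) \geq 0$, which rearranges to $\sum_{j=1}^n L_y(x^{2\alpha + 2e_j}) \leq R\, L_y(x^{2\alpha})$ for every $\alpha$ with $|\alpha| \leq k-1$. An induction on $|\alpha|$, starting from $y_0 = 1$, then yields $L_y(x^{2\alpha}) \leq \max\{1,R\}^{|\alpha|}$ for all $|\alpha| \leq k$. The Cauchy--Schwarz inequality encoded in $M_k(y) \succeq 0$ bounds off-diagonal moments by $|y_{\alpha + \beta}| \leq \sqrt{y_{2\alpha}\, y_{2\beta}}$, so every coordinate of $y$ lies in a ball depending only on $R$ and $k$. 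The feasible set is also closed as the intersection of finitely many polynomial equalities and PSD cone constraints, hence compact. Combined with nonemptiness (take the Dirac moment sequence at any $x \in S(g,h)$; if $S(g,h) = \emptyset$ then $\tau_k = +\infty = \lambda_k$ trivially) and linearity of the objective $y \mapsto L_y(f)$, this gives $\tau_k \in \R$ and attainment of an optimal solution to \eqref{eq:mom.relax.pop}.

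Second I would establish the zero duality gap. Weak duality $\lambda_k \leq \tau_k$ follows from standard Lagrangian manipulations on the SOS representation in \eqref{eq:sos.sdp.general.general.pop}. For the reverse inequality, I would appeal to a closed-cone SDP duality result: when one side of a primal--dual SDP pair has a bounded feasible region, the range of the affine constraint map is closed (equivalently, the associated value function is lower semicontinuous at zero), which is sufficient for no duality gap without requiring Slater's condition. Rewriting \eqref{eq:mom.relax.pop} in standard conic form $\min\{c^\top y : Ay = b,\, y \in K\}$ with $K$ a product of PSD cones, the compactness established above rules out recession-direction obstructions and forces $\lambda_k = \tau_k$.

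The main obstacle is the zero-duality-gap step, because Slater's condition may well fail for both SDPs, so textbook strong-duality theorems cannot be invoked directly. The genuine content of Josz--Henrion's argument is the observation that the ball constraint in \eqref{eq:pop.general} induces enough compactness on the moment side to bypass Slater via a refined duality theorem (for instance, Rockafellar-type closedness lemmas or a Trnovsk\'a-style SDP duality result). Carefully matching the hypotheses of such a theorem to the explicit form of the SDPs \eqref{eq:sos.sdp.general.general.pop}--\eqref{eq:mom.relax.pop} is the delicate technical step, although the boundedness estimate in the first part does the real work.
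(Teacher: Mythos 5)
The paper itself contains no proof of this lemma: it is imported verbatim from Josz--Henrion \cite{josz2016strong}, and the only in-paper commentary is the one-line remark that Josz and Henrion prove compactness of the optimal set of \eqref{eq:mom.relax.pop}. So the comparison has to be with the cited source, and your sketch reconstructs essentially that argument: the diagonal entries of the localizing matrix for $g_1=R-\|x\|_{\ell_2}^2$ bound the even pseudo-moments recursively, the $2\times 2$ minors of $M_k(y)\succeq 0$ bound the off-diagonal entries, hence the feasible set of \eqref{eq:mom.relax.pop} is compact, and a strong-duality theorem for conic programs with nonempty bounded feasible set (Trnovsk\'a's result, which is exactly what Josz--Henrion invoke) closes the gap without Slater's condition. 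That is the right architecture, and the boundedness estimate is correct.

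One step is genuinely flawed as written: the nonemptiness dichotomy. You phrase it in terms of $S(g,h)$ --- Dirac moments if $S(g,h)\ne\emptyset$, and ``$\tau_k=+\infty=\lambda_k$ trivially'' otherwise --- but neither half of the second branch works. The feasible set of \eqref{eq:mom.relax.pop} can be nonempty even when $S(g,h)=\emptyset$, since pseudo-moment sequences need not come from measures, so emptiness of $S(g,h)$ does not give $\tau_k=+\infty$; and when the moment SDP really is infeasible, the claim $\lambda_k=+\infty$ (that $f-\lambda$ admits a degree-$k$ certificate for every $\lambda$) is itself a duality-type statement that must be proved, not a triviality --- in Josz--Henrion it falls out of the same closedness/duality analysis. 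The case analysis should therefore be run on feasibility of the SDP \eqref{eq:mom.relax.pop} itself; note also that the conclusion $\tau_k\in\R$ in the lemma implicitly presupposes that this SDP is feasible (e.g.\ because the POP \eqref{eq:pop.general} is feasible), so you should either add that hypothesis explicitly or carry the infeasible case through the duality theorem rather than dismissing it.
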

Josz and Henrion proved that the set of optimal solutions of \eqref{eq:mom.relax.pop} is compact, ensuring existence of an optimal solution.

\begin{lemma}\label{lem:slater.cond}
(Lasserre \cite[Theorem 3.4 (a)]{lasserre2001global})
If $S(g,h)$ has nonempty interior, then Slater's condition holds for the dual \eqref{eq:mom.relax.pop}  for $k\ge k_{\min}$, where $k_{\min}$ is defined as in \eqref{eq:k.min}.
In this case, $\lambda_k=\tau_k$, $\tau_k\in\R$ and the primal problem \eqref{eq:sos.sdp.general.general.pop} has an optimal solution.
\end{lemma}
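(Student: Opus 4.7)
The plan is to exhibit a strictly feasible point for the moment relaxation \eqref{eq:mom.relax.pop}, from which Slater's condition follows, and then invoke standard conic duality to conclude the zero duality gap and the attainment on the primal side.

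First I would pick $x_0 \in \inter S(g,h)$, so there is some $\delta>0$ with the closed Euclidean ball $B(x_0,\delta) \subset S(g,h)$ and $g_i(x_0)>0$ for all $i=1,\dots,m$. Take $\mu$ to be Lebesgue measure on $B(x_0,\delta)$, normalized so that $\mu(\R^n)=1$, and let $y=(y_\alpha)_{\alpha\in\N^n_{2k}}$ be its truncated moment sequence; automatically $y_0=1$. For any nonzero $v\in\R^{|\N^n_k|}$, the polynomial $p(x)=v^\top v_k(x)$ is nonzero and therefore cannot vanish on the ball $B(x_0,\delta)$, so
\[
v^\top M_k(y)v \;=\; \int_{B(x_0,\delta)} p(x)^2\,d\mu(x) \;>\;0,
\]
i.e.\ $M_k(y)\succ 0$. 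By shrinking $\delta$ if necessary, I may assume $g_i>0$ on all of $B(x_0,\delta)$ for $i=1,\dots,m$; the same argument with $d\mu$ replaced by $g_i\,d\mu$ then gives $M_{k-d_i}(g_i y)\succ 0$ for $k\ge d_i$. For the equality constraints, each $h_j$ vanishes identically on $S(g,h)\supset B(x_0,\delta)\supset\supp\mu$, so $L_y(h_j\,q)=\int h_j q\,d\mu =0$ for every $q\in\R[x]_{2(k-w_j)}$, which is exactly $M_{k-w_j}(h_j y)=0$. Thus, for $k\ge k_{\min}$, $y$ is a feasible point of \eqref{eq:mom.relax.pop} whose moment and localizing matrices are all strictly positive definite, which is precisely Slater's condition for \eqref{eq:mom.relax.pop}.

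To conclude, I would apply the standard strong-duality theorem for conic linear programs (see e.g.\ the semidefinite-programming duality results used in Josz--Henrion \cite{josz2016strong}): since the dual problem \eqref{eq:mom.relax.pop} is strictly feasible and the primal \eqref{eq:sos.sdp.general.general.pop} is bounded above by the evaluation $L_y(f)$ at the Slater point just constructed, we get $\lambda_k=\tau_k\in\R$ and, moreover, the primal supremum is attained.

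The mildly delicate points are (i) checking that $g_i$ remains strictly positive on a common small ball around $x_0$, which follows from continuity of the $g_i$ and the fact that they are strictly positive at an interior point, and (ii) verifying the localizing-matrix positivity via the correct bilinear-form identity $v^\top M_{k-d_i}(g_iy)v = \int g_i(x)\,(v^\top v_{k-d_i}(x))^2\,d\mu(x)$. Neither is hard, but both are where one must be careful with degree bookkeeping $k\ge k_{\min}$. The rest is a direct appeal to conic LP duality.
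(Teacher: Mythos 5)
The paper does not actually prove this lemma---it is quoted from Lasserre \cite{lasserre2001global} (Theorem 3.4(a)) without proof---so there is no in-paper argument to compare against; your proposal reconstructs the standard argument, and its core (a measure with full-dimensional support inside $S(g,h)$ makes $M_k(y)\succ 0$, and the vanishing of each $h_j$ on an open ball forces $h_j\equiv 0$, hence $M_{k-w_j}(h_jy)=0$) is the right construction.

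Two points need repair, though. First, $x_0\in\inter S(g,h)$ does \emph{not} imply $g_i(x_0)>0$: take $g_i=x_1^2$, for which every point of $S=\R^n$ is interior but $g_i$ vanishes on a hyperplane. So ``shrink $\delta$ so that $g_i>0$ on $B(x_0,\delta)$'' is not available in general. The conclusion you need survives, but by a slightly different argument: on the ball one only has $g_i\ge 0$, and as long as $g_i$ is not the zero polynomial its zero set has Lebesgue measure zero, so
\begin{equation}
v^\top M_{k-d_i}(g_i y)\,v=\int_{B(x_0,\delta)} g_i(x)\,\bigl(v^\top v_{k-d_i}(x)\bigr)^2\,d\mu(x)>0
\end{equation}
for every $v\neq 0$, since the integrand is nonnegative and strictly positive on a set of positive measure. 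You should phrase the localizing-matrix step this way rather than via pointwise positivity at $x_0$. Second, the duality step overreaches: strict feasibility of the minimization problem \eqref{eq:mom.relax.pop} yields, by conic duality, that \emph{if} $\tau_k$ is finite then $\lambda_k=\tau_k$ and the supremum in \eqref{eq:sos.sdp.general.general.pop} is attained; it does not by itself give $\tau_k\in\R$. Your bound $\lambda_k\le \tau_k\le L_y(f)$ only rules out $+\infty$. Without an additional hypothesis (e.g.\ feasibility of the SOS side, or $f$ bounded below on $S(g,h)$) one can have $\tau_k=\lambda_k=-\infty$ even with nonempty interior (take $f=x_1$ with no constraints), so the finiteness claim requires an extra ingredient that your argument---like the bare statement of the lemma---does not supply. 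Flagging where that finiteness comes from (or assuming it) is the missing piece.
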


The quadratic module $Q_k(g,h)$ associated with $g=(g_i)_{i=1}^m\subset \R[x]$ and $h=(h_j)_{j=1}^l\subset \R[x]$ is given by:
\begin{equation}
Q_k(g,h)=\{\sum_{i=0}^m\sigma_ig_i+\sum_{j=1}^l\psi_j h_j\,:\,\sigma_i\in\Sigma[x]_{k-d_i}\,,\,\psi_j\in\R[x]_{2(k-w_j)}\}
\end{equation}
where $g_0=1$ and $d_0=0$.
We say that $S(g,h)$ is Archimedean if $R-\|x\|_{\ell_2}^2 \in Q_k(g,h)$ for some $R>0$.

The ideal $I(h)$ associated with  $h=(h_j)_{j=1}^l\subset \R[x]$ is given by $I(h)=\{\sum_{j=1}^l\psi_j h_j\,:\,\psi_j\in\R[x]\}$.
We say that $I(h)$ is real radical if for all $f\in I( h)$, there exist $m\in\N$ and a sum-of-squares polynomial $\sigma$ such that $-f^{2m}\in \sigma+ I(h)$.

The following proposition provides a sufficient condition to ensure finite convergence of the sequence $(\lambda_k)_{k\in\N}$ to $f^\star$.
\begin{lemma}\label{lem:finite.conver.ball}
(Lasserre \cite[Theorem 7.5]{lasserre2015introduction}) Assume that $S(g,h)$ is Archimedean, the ideal $I(h)$ is real radical, and the second-order sufficient optimality conditions hold for problem \eqref{eq:pop.general} at every optimal solution. 
Then, $\lambda_k=\tau_k=f^\star$ for some $k\in\N$ and both primal and dual problems \eqref{eq:sos.sdp.general.general.pop}-\eqref{eq:mom.relax.pop} admit optimal solutions.
\end{lemma}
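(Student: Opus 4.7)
The plan is to combine a Putinar-style certificate away from the minimizers with a local Lagrangian decomposition at each optimum. The central identity is
\begin{equation*}
f(x) - f^\star = \bigl[\mathcal{L}(x,\eta^\star,\gamma^\star) - f^\star\bigr] + \sum_{i=1}^m \eta_i^\star g_i(x) + \sum_{j=1}^l \gamma_j^\star h_j(x),
\end{equation*}
valid for any KKT triple $(x^\star,\eta^\star,\gamma^\star)$. Because KKT forces the bracketed term to vanish together with its gradient at $x^\star$, the SOSC makes it a strictly convex quadratic on the tangent cone $\text{span}(\Lambda)^\perp$. The strategy is to upgrade this quadratic behaviour into a global polynomial SOS decomposition that certifies $f - f^\star \in Q_k(g,h)$ for some finite $k$.

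First I would invoke Putinar's Positivstellensatz, applicable thanks to the Archimedean hypothesis: any polynomial strictly positive on $S(g,h)$ lies in $Q_k(g,h)$ for some $k$. Real-radicality of $I(h)$ simultaneously ensures that any polynomial vanishing on the real zero set of $h$ belongs to $I(h)$, which disposes of the equality constraints in the representation. Since $S(g,h)$ is compact and each optimum is isolated by SOSC, the set of minimizers is finite, say $\{x^{(1)},\dots,x^{(N)}\}$, each carrying KKT multipliers $(\eta^{(\ell)},\gamma^{(\ell)})$ and well-defined active sets $J(x^{(\ell)})$ by strict complementarity.

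The crucial local step constructs, in a neighbourhood of each $x^{(\ell)}$, an SOS representation of the Lagrangian of the form $\mathcal{L}(x,\eta^{(\ell)},\gamma^{(\ell)}) - f^\star = \sigma_0^{(\ell)} + \sum_{i \in J(x^{(\ell)})} \sigma_i^{(\ell)} g_i \pmod{I(h)}$ with $\sigma_i^{(\ell)} \in \Sigma[x]$. A second-order Taylor expansion reduces this to certifying positive definiteness of a quadratic form on $\text{span}(\Lambda^{(\ell)})^\perp$, guaranteed by the SOSC; the form is then extended off the tangent space by using the active constraints $g_i$, $i \in J(x^{(\ell)})$, weighted by their positive Lagrange multipliers $\eta_i^{(\ell)}$. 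I would glue these local certificates into a global one via a polynomial partition-of-unity argument: on a small neighbourhood of each $x^{(\ell)}$ the local representation applies, and on the complement (where $f - f^\star$ is bounded below by some $\delta > 0$) one invokes Putinar's theorem directly, then passes to the limit $\delta \to 0$. This yields a single $k$ with $f - f^\star \in Q_k(g,h)$, hence $\lambda_k = f^\star$; strong duality (Lemma~\ref{lem:slater.cond}, after possibly replacing the ball constraint with an equivalent one available from the Archimedean condition) then gives $\tau_k = \lambda_k$ and existence of optimal solutions to both relaxations.

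The main obstacle is the local SOS certificate step. The SOSC yields positive definiteness only on the tangent subspace, not on all of $\R^n$, so $\mathcal{L} - f^\star$ is generically not SOS as a polynomial; one must carefully balance its quadratic growth against the active constraint polynomials via a Newton-polytope or controlled perturbation argument. Patching the local representations globally also requires care, since the cutoff polynomials employed must themselves be representable in the quadratic module at a bounded level — this is precisely where the full strength of the Archimedean property, beyond mere compactness of $S(g,h)$, becomes essential, and where the real-radicality hypothesis on $I(h)$ is needed to kill the residues in $I(h)$ arising at each patching step.
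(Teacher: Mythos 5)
The paper does not prove this lemma at all: it is quoted verbatim from Lasserre's book (Theorem 7.5 there), which in turn is Nie's finite-convergence theorem, so there is no in-paper argument to compare against. Judged on its own terms, your sketch captures the right intuition (Lagrangian decomposition at each KKT point, SOSC giving local definiteness, Putinar away from the minimizers), but it has genuine gaps precisely at the steps you flag as ``obstacles,'' and those steps are the entire content of the theorem, not technicalities one can defer.

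Concretely, two steps would fail as stated. First, the ``polynomial partition-of-unity'' gluing does not exist: partitions of unity are smooth bump functions, not polynomials, and there is no mechanism for multiplying a local SOS-modulo-$Q_k(g,h)$ certificate by a cutoff while staying inside the quadratic module at a controlled degree. Making the local-to-global passage work is exactly what Nie's proof does, via Marshall's boundary Hessian condition and a careful analysis of the ideal generated by the KKT equations, and it is not recoverable by a generic gluing argument. Second, invoking Putinar ``on the complement'' of neighbourhoods of the minimizers and then letting $\delta\to 0$ is not legitimate: Putinar's Positivstellensatz requires strict positivity on all of $S(g,h)$, not on a subregion, and even if one perturbs to $f-f^\star+\delta$ the resulting representation degree is not uniform in $\delta$, so the limit $\delta\to 0$ gives no single finite $k$ with $f-f^\star\in Q_k(g,h)$ — which is the whole claim. (The final assertion about existence of optimal solutions to both SDPs also needs an argument beyond strong duality, e.g.\ attainment as in Lemma~\ref{lem:strong.duality.ball} or Lemma~\ref{lem:slater.cond}, which your sketch does not supply.) As written, the proposal is a plausible outline of why the result should be true, but not a proof; the correct course here is to cite Nie's theorem (as the paper does) or reproduce his argument in full.
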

It is worth noting that the real radical property is not generic, and verifying $I(h)$ is real radical must be done on a case-by-case basis.
However, Nie \cite{nie2014optimality} showed that generically, $\tau_k=f^\star$ for sufficiently large $k$, meaning this equality holds on a Zariski open set in the space of coefficients of $g_i,h_j$ for fixed degrees.

The following lemma states that finite convergence occurs at the first order relaxation for convex quadratic programs.
\begin{lemma}[Exactness for convex quadratic programs]
\label{lem:exact.quadratic}
    Let $f=x^\top Ax+q^\top x$ with $A\succeq 0$, $g_{i}=a_i^\top x+b_i$, $i=1,\dots,m$, and $h_j=c_j^\top x+ w_j$, $j=1,\dots,l$.
    Assume that problem \eqref{eq:pop.general} has an optimal solution $x^\star$.
    Then $\lambda_1=f^\star$.
\end{lemma}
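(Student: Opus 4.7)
The strategy is to produce an explicit Positivstellensatz-style certificate witnessing $\lambda_1=f^\star$, built from Lagrange multipliers at $x^\star$ together with the classical fact that globally nonnegative quadratic polynomials are sums of squares.

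Since $f$ is quadratic and the $g_i,h_j$ are affine, it suffices to take $d=d_i=w_j=1$, so the level-$1$ instance of~\eqref{eq:sos.general.general.pop} seeks $\lambda\in\R$, $\sigma_0\in\Sigma[x]_1$, scalars $\eta_i\ge 0$ and $\gamma_j\in\R$ with $f-\lambda=\sigma_0+\sum_i \eta_i g_i+\sum_j \gamma_j h_j$. Because the feasible set of~\eqref{eq:pop.general} is defined by purely affine (in)equalities, no constraint qualification beyond feasibility is required for KKT optimality in this convex program. Applied at the assumed optimizer $x^\star$, this yields multipliers $\eta^\star\in\R^m_+$, $\gamma^\star\in\R^l$ with $\nabla_x\mathcal{L}(x^\star,\eta^\star,\gamma^\star)=0$, $\eta_i^\star g_i(x^\star)=0$ and $h_j(x^\star)=0$.

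I would next observe that $\mathcal{L}(\cdot,\eta^\star,\gamma^\star)$ is a convex quadratic in $x$, since its Hessian equals $2A\succeq 0$ (the $g_i,h_j$ are affine and so contribute no quadratic term). Therefore its critical point $x^\star$ is a global minimizer, and by complementary slackness together with $h_j(x^\star)=0$ this global minimum equals $f(x^\star)=f^\star$. It follows that the polynomial $p(x):=f(x)-f^\star-\sum_i \eta_i^\star g_i(x)-\sum_j \gamma_j^\star h_j(x)$ is of total degree at most $2$ and is nonnegative on all of $\R^n$.

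Finally, I would invoke the classical fact that every globally nonnegative polynomial of total degree at most $2$ is a sum of squares. Writing $p$ as a quadratic form $\xi^\top M\xi$ in $\xi:=(1,x_1,\dots,x_n)^\top$ for a unique symmetric $M$, one checks $M\succeq 0$ by considering $y\in\R^{n+1}$ with $y_0\ne 0$ (which reduces $y^\top My\ge 0$ to $p(y_{1:n}/y_0)\ge 0$) and $y_0=0$ (where the form reduces to $y_{1:n}^\top A y_{1:n}$ with $A\succeq 0$). The resulting positive semidefinite Gram representation produces $\sigma_0\in\Sigma[x]_1$ with $p=\sigma_0$, so $(f^\star,\sigma_0,\eta^\star,\gamma^\star)$ is feasible for the level-$1$ SOS relaxation and thus $\lambda_1\ge f^\star$; combined with the weak-duality inequality $\lambda_1\le f^\star$ that always holds in the Lasserre hierarchy, this gives $\lambda_1=f^\star$. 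The only step that needs care is the existence of Lagrange multipliers, which is precisely where affineness of every constraint is used, rendering the KKT system necessary for optimality without any Slater-type hypothesis.
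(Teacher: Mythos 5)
Your proposal is correct and follows essentially the same route as the paper: the KKT multipliers at $x^\star$ serve as the constant multipliers $\sigma_i,\psi_j$ in the order-$1$ certificate, and the shifted Lagrangian $f-f^\star-\sum_i\eta_i^\star g_i-\sum_j\gamma_j^\star h_j$ is shown to lie in $\Sigma[x]_1$. The only difference is the final step, where the paper computes this residual explicitly as $(x-x^\star)^\top A(x-x^\star)$, while you obtain an SOS representation abstractly from the Gram-matrix argument that globally nonnegative quadratics are sums of squares — both are valid.
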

\begin{proof}
    It is easy to see that $\lambda_1\le f^\star$.
    To prove the reverse inequality, we apply the KKT conditions for problem $f^\star=\min_{x\in S(g,h)} f(x)$ at $x^\star$.
     Then there exist $\eta^\star\in \R^{m}_+$ and $\gamma^\star\in\R^{l}$ such that
 $\eta_i^\star\,g_i( x^\star)=0$ for all $i=1,\dots,m$, and $\nabla f( x^\star)=\sum_{i=1}^{m}\eta_i^\star\nabla g_i(x^\star)
+\sum_{j=1}^{l}\gamma_j^\star\nabla h_j( x^\star)$.
This follows that
\begin{equation}
\begin{array}{rl}
& 2Ax^\star+q=\sum_{i=1}^{m}\eta_i^\star a_i
+\sum_{j=1}^{l}\gamma_j^\star c_j\\
\Rightarrow& 2(x-x^\star)^\top Ax^\star + q^\top (x-x^\star)=\sum_{i=1}^{m}\eta_i^\star a_i^\top (x-x^\star)
+\sum_{j=1}^{l}\gamma_j^\star c_j^\top (x-x^\star)\\
\Rightarrow&  x^\top Ax+q^\top x -f^\star = (x-x^\star)^\top A(x-x^\star)+\sum_{i=1}^{m}\eta_i^\star(a_i^\top x+b_i)
+\sum_{j=1}^{l}\gamma_j^\star(c_j^\top x+w_j).
\end{array}
\end{equation}
The last equality is due to $f^\star=x^{\star \top} Ax^\star+q^\top x^\star$, $0=\eta_i^\star\,g_i( x^\star)=\eta^\star_i(a_i^\top x^\star+b_i)$ and $0=h_j(x^\star)=c_j^\top x^\star+w_j$.
It implies that $f-f^\star=\sigma_0+\sum_{i=1}^{m}\eta_i^\star g_i
+\sum_{j=1}^{l}\gamma_j^\star h_j$, where $\sigma_0=(x-x^\star)^\top A(x-x^\star)$ is a sum-of-squares polynomial of degree $2$ (since $A\succeq 0$). 
By definition of $\lambda_k$ in \eqref{eq:sos.general.general.pop}, $f^\star\le \lambda_1$.
Hence the equality holds.
\end{proof}

\subsection{Hamiltonian updates}
\label{sec:semidefinite.feasibility}
We are now going to briefly review the Hamiltonian updates framework to solve semidefinite programs of~\cite{GSLBrando2022}.
Consider the semidefinite feasibility problem with trace-one constraint:
\begin{equation}\label{eq:semidefinite.feasibility}
\begin{array}{rl}
\text{Find}&X\in \mathbb S^N\\
\text{s.t.}& X\succeq 0\,,\,\tr(X)=1\,,\,\tr(A_iX)\le  b_i\,,\,i=1,\dots,M\,,
\end{array}
\end{equation}
where $A_i\in  \mathbb S^N$ and $b_i\in\R$ for $i=1,\dots,M$.

We propose the following algorithm to solve problem \eqref{eq:semidefinite.feasibility} approximately.
\begin{algorithm}\label{alg:HU}
Hamiltonian updates
\begin{itemize}
\item Input: $\varepsilon>0$, $A_i\in \mathbb S^N$, $b_i\in\R$, $i=1,\dots,M$.
\item Output: $\eta\in\{0,1\}$ and $X_\varepsilon\in\mathbb S^N$.
\end{itemize}
\begin{enumerate}
\item Initialize $W^{(1)}=I_N$ and set $T=\left\lceil \frac{16\log(N)}{\varepsilon^2}\right\rceil$.
\item For $t=1,\dots,T$:
\begin{enumerate}
\item Compute $P^{(t)}=W^{(t)}/\tr(W^{(t)})$.
\item Check feasibility:
\begin{enumerate}
\item If there exists $j^{(t)}\in\{1,\dots,M\}$ such that 
\begin{equation}
\tr(A_{j^{(t)}}P^{(t)})>b_{j^{(t)}}+\varepsilon\,,
\end{equation}
set $M^{(t)}=\frac{1}{2}(I_N-A_{j^{(t)}})$.
\item Otherwise, set  $\eta=1$, $X_\varepsilon=P^{(t)}$ and terminate.
\end{enumerate}
\item Update $W^{(t+1)}=\exp(-\frac{\varepsilon}{4}\sum_{\tau=1}^t M^{(\tau)})$.
\end{enumerate}
\item If the loop completes, set $\eta=0$ and terminate.
\end{enumerate}
\end{algorithm}

Set the set $S_\varepsilon$  as follows:
\begin{equation}
S_\varepsilon=\left\{ X\in\mathbb S^N\left|\begin{array}{rl}
&X\succeq 0\,,\, \tr(X)=1\,,\\
&\tr(A_iX)\le b_i+\varepsilon\,,\, i=1,\dots,M
\end{array}\right. \right\}\,.
\end{equation}

The following lemma clarifies the interpretation of the output from the Hamiltonian updates algorithm:
\begin{lemma}[Accuracy]
\label{lem:Hamilton.updates}
Let $\varepsilon>0$, $A_i\in \mathbb S^N$, and $b_i\in\R$, for $i=1,\dots,M$.
Assume that $I_N\succeq A_i\succeq -I_N$, for $i=1,\dots,M$.
Let $\eta\in\{0,1\}$ and $X_\varepsilon\in\mathbb S^N$ be the output of Algorithm \ref{alg:HU}. Then:
\begin{enumerate}
\item If $\eta=0$, $S_0=\emptyset$.
\item If $\eta=1$, then $X_\varepsilon\in S_\varepsilon$.
\end{enumerate}
\end{lemma}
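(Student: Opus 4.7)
The plan is to treat the two cases separately. The easy direction is $\eta=1$: when the loop exits at some step $t$ with $\eta=1$, the output is $X_\varepsilon=P^{(t)}=W^{(t)}/\tr(W^{(t)})$. Since $W^{(t)}$ is a matrix exponential of a symmetric matrix it is positive semidefinite with strictly positive trace, so $P^{(t)}\succeq 0$ and $\tr(P^{(t)})=1$. The termination condition in step 2(b)(ii) is precisely that $\tr(A_iP^{(t)})\le b_i+\varepsilon$ for every $i\in\{1,\dots,M\}$, which together with the density-matrix property gives $X_\varepsilon\in S_\varepsilon$. This part is purely by construction and needs no inequalities.

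The hard direction is $\eta=0$, and I would argue it by contradiction using the standard matrix multiplicative weights (MMW) regret analysis. Suppose $S_0\ne\emptyset$ and pick $X^\star\in S_0$. The assumption $-I_N\preceq A_i\preceq I_N$ together with $M^{(t)}=\tfrac12(I_N-A_{j^{(t)}})$ guarantees $0\preceq M^{(t)}\preceq I_N$, which is exactly the hypothesis needed to feed $M^{(t)}$ into the MMW regret bound at learning rate $\varepsilon/4$. The standard bound (see, e.g., Arora--Kale) yields
\[
\sum_{t=1}^{T}\tr\!\bigl(M^{(t)}P^{(t)}\bigr)\;\le\;\sum_{t=1}^{T}\tr\!\bigl(M^{(t)}X^\star\bigr)\;+\;\frac{4\log N}{\varepsilon}\;+\;\frac{\varepsilon\,T}{4},
\]
which is the only nontrivial input to the argument.

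To extract a contradiction, I would translate the inequalities satisfied by $P^{(t)}$ and $X^\star$ into statements about $\tr(M^{(t)}\cdot)$. Because the loop did not terminate at step $t$, the violated index $j^{(t)}$ satisfies $\tr(A_{j^{(t)}}P^{(t)})>b_{j^{(t)}}+\varepsilon$, hence $\tr(M^{(t)}P^{(t)})<\tfrac12(1-b_{j^{(t)}}-\varepsilon)$. On the other hand $X^\star\in S_0$ gives $\tr(A_{j^{(t)}}X^\star)\le b_{j^{(t)}}$, so $\tr(M^{(t)}X^\star)\ge\tfrac12(1-b_{j^{(t)}})$. Subtracting and summing over $t=1,\dots,T$ produces a per-step gap of $\varepsilon/2$, giving
\[
\sum_{t=1}^{T}\tr\!\bigl(M^{(t)}X^\star\bigr)-\sum_{t=1}^{T}\tr\!\bigl(M^{(t)}P^{(t)}\bigr)\;>\;\frac{\varepsilon T}{2}.
\]

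Combining this with the MMW regret bound forces $\varepsilon T/4<4\log N/\varepsilon$, i.e.\ $T<16\log(N)/\varepsilon^2$, contradicting the choice $T=\lceil 16\log(N)/\varepsilon^2\rceil$ in step 1 of the algorithm. Hence $S_0=\emptyset$.

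The main obstacle I expect is not conceptual but bookkeeping: invoking the MMW regret bound cleanly for symmetric (not necessarily PSD) data requires the rescaling $M^{(t)}=\tfrac12(I-A_{j^{(t)}})$, and the numerical constants ($4\log N/\varepsilon$, $\varepsilon T/4$, and the resulting $16\log N/\varepsilon^2$) must be matched exactly to the learning rate $\varepsilon/4$ that appears inside the $\exp$ in step 2(c) of Algorithm~\ref{alg:HU}. Once those constants line up, both conclusions follow.
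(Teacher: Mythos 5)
Your $\eta=1$ case is fine: termination at step $t$ means no index violates $\tr(A_iP^{(t)})\le b_i+\varepsilon$, and $P^{(t)}=W^{(t)}/\tr(W^{(t)})$ is a density matrix by construction, so $X_\varepsilon\in S_\varepsilon$. Note also that the paper itself gives no proof of this lemma (it is imported from the Hamiltonian-updates/MMW literature), so the only thing to assess is whether your standard-MMW argument closes — and in the $\eta=0$ case it does not. The two inequalities you end with are compatible, not contradictory: the regret bound is an upper bound on $\sum_t\tr(M^{(t)}P^{(t)})-\sum_t\tr(M^{(t)}X^\star)$, while your per-step computation shows precisely that this quantity is less than $-\varepsilon T/2$, i.e.\ negative. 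An upper bound together with a negative value yields no contradiction, so the step ``this forces $\varepsilon T/4<4\log N/\varepsilon$'' is unjustified; for the contradiction you need the gap with the opposite sign, $\sum_t\tr(M^{(t)}P^{(t)})-\sum_t\tr(M^{(t)}X^\star)>\varepsilon T/2$, which then combines with the regret bound to give $T\le 16\log(N)/\varepsilon^2$.

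The source of the sign problem is the loss matrix. With $M^{(t)}=\tfrac12(I_N-A_{j^{(t)}})$, a violated constraint ($\tr(A_{j^{(t)}}P^{(t)})>b_{j^{(t)}}+\varepsilon$, $\tr(A_{j^{(t)}}X^\star)\le b_{j^{(t)}}$) makes the algorithm's loss \emph{smaller} than the comparator's, exactly as you computed, and moreover the update $W^{(t+1)}=\exp(-\tfrac{\varepsilon}{4}\sum_\tau M^{(\tau)})\propto\exp(\tfrac{\varepsilon}{8}\sum_\tau A_{j^{(\tau)}})$ pushes weight \emph{toward} the violated direction; one can check on $N=2$, $A_1=\diag(1,-1)$, $b_1=-1/2$ that the algorithm then reports $\eta=0$ even though $S_0\neq\emptyset$, so the lemma cannot be proved for the update as typeset. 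The argument works once the loss is $M^{(t)}=\tfrac12(I_N+A_{j^{(t)}})$ (equivalently, a $+$ sign in the exponent), the standard Arora--Kale/Hamiltonian-updates choice: then $0\preceq M^{(t)}\preceq I_N$ still holds, $\tr\bigl(M^{(t)}(P^{(t)}-X^\star)\bigr)=\tfrac12\bigl(\tr(A_{j^{(t)}}P^{(t)})-\tr(A_{j^{(t)}}X^\star)\bigr)>\varepsilon/2$, the update suppresses the violated direction, and your constants ($\eta=\varepsilon/4$, regret $\le 4\log N/\varepsilon+\varepsilon T/4$, per-step gap $\varepsilon/2$) line up to force $T\le 16\log(N)/\varepsilon^2$, contradicting the choice of $T$. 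So you should flag the sign in Algorithm~\ref{alg:HU} and rerun your combination step with the corrected gap direction; everything else in your plan is sound.
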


\begin{lemma}[Complexity of computing a dense matrix exponential to error $\varepsilon$]
\label{lem:expA-complexity}
Let $A\in\mathbb{C}^{N\times N}$ and let $\|\cdot\|$ denote a submultiplicative operator norm. For any target accuracy $0<\varepsilon<1$, one can compute an approximation $\widetilde{E}$ with
\[
\|\widetilde{E}-e^{A}\|\le\varepsilon
\]
using 
\[
O\!\big(N^\omega [\log\|A\| + \log(1/\varepsilon)]\big)
\]
operations on a classical computer, where $\omega\approx 2.373$ is the exponent of matrix multiplication.
\end{lemma}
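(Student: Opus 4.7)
The plan is the classical scaling-and-squaring strategy combined with a truncated Taylor series. First I would select $s := \max\{0,\lceil\log_2\|A\|\rceil\}$ so that $B := A/2^s$ satisfies $\|B\|\le 1$; this preparation costs $O(N^2)$ arithmetic, which is negligible compared with a single matrix multiplication. Then I would approximate $e^B$ by the truncated Taylor polynomial $T_k(B)=\sum_{j=0}^k B^j/j!$ for a degree $k$ fixed below, and finally recover an approximation to $e^A=(e^B)^{2^s}$ by performing $s$ repeated squarings.

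For the truncation step, submultiplicativity together with $\|B\|\le 1$ gives
\[
\|e^B-T_k(B)\|\le\sum_{j>k}\frac{1}{j!}\le\frac{e}{(k+1)!},
\]
so by Stirling, choosing $k=O(\log(1/\delta))$ yields truncation error at most $\delta$. Evaluating $T_k(B)$ via Horner's scheme uses $k$ matrix multiplications, i.e.\ $O(kN^\omega)$ operations.

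The remaining task is to quantify how the initial error $\delta$ propagates through $s$ squarings. Writing $Z := e^B$ and $E$ for the current approximation, one squaring yields
\[
\|E^2-Z^2\|\le\|E-Z\|\,(\|E\|+\|Z\|)\le\delta(2e+\delta),
\]
using $\|Z\|\le e$. Iterating this estimate $s$ times produces a final error of at most $C^{s}\delta$ for a universal constant $C$. Enforcing $C^s\delta\le\varepsilon$ translates into $\log(1/\delta)=O(s+\log(1/\varepsilon))=O(\log\|A\|+\log(1/\varepsilon))$, which in turn fixes $k$ of the same order. Hence the total number of matrix multiplications is $s+k=O(\log\|A\|+\log(1/\varepsilon))$, each costing $O(N^\omega)$, producing the claimed bound.

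The principal obstacle lies in this error bookkeeping: one must check that the exponential-in-$s$ blow-up does not destroy the complexity claim, which works only because the penalty enters merely as an additive $O(s)$ inside $\log(1/\delta)$, matching exactly the $\log\|A\|$ term we are trying to achieve. A minor subtlety is that although $\|e^B\|\le e$ throughout, the intermediate iterates $E_j$ could in principle drift above this bound; I would therefore track $\|E_j\|\le\|Z_j\|+\delta_j$ inductively and verify that the recursion $\delta_{j+1}\le C\delta_j$ persists with a fixed $C$ for every $j\le s$, completing the accuracy and complexity analysis simultaneously.
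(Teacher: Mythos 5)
Your proposal follows essentially the same route as the paper's proof: scale $A$ by $2^{-s}$ with $s=O(\log\|A\|)$, approximate $e^{B}$ by a degree-$O(\log(1/\delta))$ truncated Taylor (the paper uses Pad\'e/Taylor interchangeably), and undo the scaling by $s$ repeated squarings, for a total of $O(\log\|A\|+\log(1/\varepsilon))$ matrix multiplications at $O(N^\omega)$ each. In fact you attempt one step the paper's proof silently omits, namely the propagation of the truncation error through the squarings.

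However, that propagation step as written contains a genuine flaw for the lemma in its stated generality. After $j$ squarings the reference matrix is $Z_j=e^{2^{j}B}$, not $e^{B}$, so the bound $\|Z\|\le e$ that you invoke holds only at $j=0$; in general $\|Z_j\|$ can be as large as $e^{2^{j}\|B\|}\approx e^{2^{j}}$. Consequently the per-step amplification factor $\|E_j\|+\|Z_j\|$ is not a universal constant $C$, and the cumulative factor is of order $2^{s}e^{\|A\|}$ rather than $C^{s}$. With an honest forward-error recursion you would therefore need $\log(1/\delta)=O(\|A\|+\log(1/\varepsilon))$, which forces the Taylor degree to grow essentially linearly in $\|A\|$ and breaks the claimed $O\bigl(N^\omega[\log\|A\|+\log(1/\varepsilon)]\bigr)$ count for arbitrary $A$. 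Your closing remark about tracking $\|E_j\|\le\|Z_j\|+\delta_j$ does not rescue this, because it is $\|Z_j\|$ itself, not the drift of $E_j$, that blows up. The standard ways to close the gap are either the backward-error analysis of the Pad\'e approximant (so that $\widetilde R_m(B)^{2^{s}}=e^{A+\Delta}$ with $\|\Delta\|$ small relative to $\|A\|$, at the price of an error guarantee that is relative/conditioned rather than the absolute one claimed), or restricting to the situation actually used in the paper's Algorithm 1, where the exponent is Hermitian negative semidefinite, so $\|e^{2^{j}B}\|\le 1$ for all $j$ and your constant-$C$ recursion is valid verbatim. To be fair, the paper's own proof does not address this propagation issue at all, so your attempt is not less rigorous than the original — but the constant-amplification claim should be flagged and justified under an additional hypothesis (or replaced by the backward-error argument) rather than asserted for general $A$.
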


\begin{proof}
The scaling-and-squaring method proceeds as follows:

\begin{enumerate}
\item Scale $A$ by $2^{-s}$, with $s = \lceil \log_2(\|A\|/\theta)\rceil$, so that $B = 2^{-s}A$ satisfies $\|B\|\le \theta$ for some small fixed $\theta>0$.
\item Compute a Pad\'e (or truncated Taylor) approximant $\widetilde{R}_m(B)$ of $e^B$ of degree $m$ such that $\|\widetilde{R}_m(B)-e^B\|\le\varepsilon_0 = O(\varepsilon)$. Analytic approximation theory guarantees that $m = O(\log(1/\varepsilon))$ suffices.
\item Recover $e^A$ via repeated squaring: $\widetilde{E} = (\widetilde{R}_m(B))^{2^s}$.
\end{enumerate}

The dominant costs are $m$ matrix multiplications for the approximant and $s$ squarings, each costing $O(N^\omega)$ operations. Substituting $m = O(\log(1/\varepsilon))$ and $s = O(\log\|A\|)$ yields the stated complexity.
\end{proof}

\begin{lemma}[Classical complexity of Hamitonian updates for semidefinite feasibility]
\label{lem:MMWU-complexity}
The complexity of Algorithm \ref{alg:HU}, up to polylogarithmic factors, is as follows:
\begin{itemize}
\item {On a classical computer, the algorithm takes
\[
\widetilde O\Big((N^\omega + M s N)\varepsilon^{-2} \Big),
\]
where $\omega\approx 2.373$ and $s$ is the maximum number of nonzero entries per row of $A_i$s.}
\item On a quantum computer, the algorithm requires
\begin{equation}
O(s(\sqrt{M}+\sqrt{N}\varepsilon^{-1})\varepsilon^{-4})
\end{equation}
operations.
\end{itemize}
\end{lemma}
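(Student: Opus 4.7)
The plan is to establish the iteration count once and then bound the per-iteration cost separately in the classical and quantum models. By Algorithm~\ref{alg:HU}, the outer loop runs at most $T=\lceil 16\log(N)/\varepsilon^{2}\rceil$ times, so a factor $\widetilde{O}(\varepsilon^{-2})$ is built in from the outset. The total complexity is therefore $T$ times the cost of one round of (i) forming $W^{(t)}$, (ii) normalizing to obtain $P^{(t)}$, and (iii) scanning the $M$ constraints.

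For the classical bound, I would first observe that since $\|M^{(\tau)}\|\le 1$ (because $-I_N\preceq A_i\preceq I_N$) and $t\le T$, the Hermitian exponent $H^{(t)}=-(\varepsilon/4)\sum_{\tau=1}^{t-1}M^{(\tau)}$ satisfies $\|H^{(t)}\|=O(T\varepsilon)=O(\log(N)/\varepsilon)$. Invoking Lemma~\ref{lem:expA-complexity} with inverse-polynomial target precision yields a dense approximation to $\exp(H^{(t)})$ in $\widetilde{O}(N^{\omega})$ arithmetic operations. Normalization is free up to polylogarithmic factors. For the feasibility check, since each $A_i$ has at most $s$ nonzeros per row and $P^{(t)}$ is dense, a single trace inner product $\tr(A_iP^{(t)})=\sum_{j,k}(A_i)_{jk}(P^{(t)})_{jk}$ costs $O(sN)$, and scanning all $M$ constraints costs $O(MsN)$. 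Combining gives $\widetilde{O}(N^{\omega}+MsN)$ per iteration, and multiplying by $T$ produces the stated $\widetilde{O}((N^{\omega}+MsN)\varepsilon^{-2})$ classical bound.

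For the quantum bound, I would invoke the block-encoding machinery of \cite{GSLBrando2022,van2018improvements,brandao2017quantum}. Given $s$-sparse block encodings of the $A_i$'s (which combine, via linear-combination-of-unitaries, into a block encoding of $H^{(t)}$ of subnormalization $O(T\varepsilon)=O(\log N/\varepsilon)$), quantum singular value transformation applied to the exponential function prepares a purification of the Gibbs state $P^{(t)}\propto\exp(H^{(t)})$ to constant trace distance in $\widetilde{O}(s\sqrt{N}\varepsilon^{-1})$ queries. To detect an $\varepsilon$-violation among the $M$ constraints, I would use amplitude estimation together with quantum search (in its fixed-point form) on the indicator "$\tr(A_iP^{(t)})>b_i+\varepsilon$", whose evaluation to $\pm\varepsilon$ precision costs $\widetilde{O}(s/\varepsilon)$ queries to the Gibbs state via quantum trace estimation; searching over $M$ indices then contributes $\widetilde{O}(\sqrt{M})$. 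The total per-iteration query cost is thus $\widetilde{O}\bigl(s(\sqrt{M}+\sqrt{N}\varepsilon^{-1})\varepsilon^{-1}\bigr)$, and multiplication by $T=O(\log(N)/\varepsilon^{2})$ yields the claimed $O(s(\sqrt{M}+\sqrt{N}\varepsilon^{-1})\varepsilon^{-4})$.

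The main obstacle is the quantum analysis, not the classical one: I need to track carefully how (a) the subnormalization of the block encoding of $H^{(t)}$ grows with $t$, (b) the trace-distance error in Gibbs preparation compounds over the $T$ iterations, and (c) the failure probabilities of the $T$ calls to amplitude estimation and to quantum search are amplified so that a single union bound suffices. This is precisely the content of the analysis in \cite{GSLBrando2022,van2018improvements}, which shows that choosing constant per-iteration success probabilities and inverse-polynomial Gibbs error is enough to preserve correctness of Algorithm~\ref{alg:HU} while absorbing all overhead into the polylogarithmic factors hidden by the $O(\cdot)$ notation. Once these estimates are imported, the two bounds follow as above.
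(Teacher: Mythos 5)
Your proposal is correct and follows essentially the same route as the paper: the classical bound is obtained exactly as in the paper's proof, by multiplying the iteration count $T=O(\log N/\varepsilon^{2})$ by the per-iteration costs of the dense matrix exponential (Lemma~\ref{lem:expA-complexity}, giving $\widetilde O(N^{\omega})$ since $\|H^{(t)}\|=O(\log N/\varepsilon)$), the normalization, and the $O(MsN)$ sparse feasibility scan; the quantum bound is, as in the paper, ultimately imported from \cite{van2018improvements,GSLBrando2022}. One small caution on your quantum sketch: as written, a per-iteration cost of $\widetilde O\bigl(s(\sqrt{M}+\sqrt{N}\varepsilon^{-1})\varepsilon^{-1}\bigr)$ times $T=O(\log N/\varepsilon^{2})$ gives $\varepsilon^{-3}$, not the claimed $\varepsilon^{-4}$; the missing power of $\varepsilon^{-1}$ sits precisely in the subnormalization/precision bookkeeping for the Gibbs sampler and trace estimation that you defer to the cited analyses, so the final statement is still supported (and the paper itself gives no more detail for the quantum part than the citation).
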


\begin{proof}
The quantum complexity is derived from the results of Apeldoorn and Gily\'en~\cite{van2018improvements}.
Let us prove the classical complexity based on Lemma \ref{lem:expA-complexity}.
On a classical computer, each iteration $t=1,\dots,T$ requires:

\begin{enumerate}
\item Computing $P^{(t)} = W^{(t)}/\tr(W^{(t)})$, which costs $O(N^2)$.
\item Checking feasibility $\tr(A_i P^{(t)})$ for $i=1,\dots,M$, costing $O(M s N)$ due to the sparsity of $A_i$.
\item Updating $W^{(t+1)} = \exp(\frac{\varepsilon}{4}\sum_{\tau=1}^t M^{(\tau)})$, which is a dense matrix exponential. By Lemma~\ref{lem:expA-complexity}, this costs $O(N^\omega \log(1/\varepsilon))$.
\end{enumerate}

Multiplying the per-iteration cost by $T = O(\log N / \varepsilon^2)$ gives the total complexity
\[
O\Big( [N^\omega \log(1/\varepsilon) +  M s N] \frac{\log N}{\varepsilon^2}\Big).
\]
Hence the result follows.
\end{proof}

\subsection{Binary search using Hamiltonian updates for semidefinite programs with affine equality constraints}

In the following lemma, we provide a lower bound for the evaluation of the objective function of a semidefinite program at a positive semidefinite matrix, in terms of the violation of the affine constraints at that matrix:
\begin{lemma}\label{lem:obj.at.X}
Let $b\in\R^M$, $C\in \mathbb S^N$, and let $\mathcal A:\mathbb S^N\to \R^M$ be a linear operator defined by 
\begin{equation}
\mathcal A(Y)=(\tr(A_1Y),\dots,\tr(A_MY))\,,
\end{equation}
where $A_i\in\mathbb S^N$, for $i=1,\dots,M$.
Consider primal-dual semidefinite programs:
\begin{equation}\label{eq:primal.no.tr.cons}
\begin{array}{rl}
\lambda^\star=\inf\limits_{Y}&\tr(CY)\\
\text{s.t.}&Y\in\mathbb S^{N}_+\,,\,\mathcal A Y=b\,,
\end{array}
\end{equation}
\begin{equation}\label{eq:dual.no.tr.cons}
\begin{array}{rl}
\tau^\star=\sup\limits_{\xi}&b^\top \xi\\
\text{s.t.}&\xi\in\R^M\,,\,C-\mathcal A^\top \xi \succeq 0\,,
\end{array}
\end{equation}
where $\mathcal A^\top:\R^M\to \mathbb S^N$ is the adjoint operator of $\mathcal A$, given by
\begin{equation}
\mathcal A^\top \xi=\sum_{i=1}^M \xi_i A_i\,.
\end{equation}
Assume strong duality holds for problems \eqref{eq:primal.no.tr.cons}-\eqref{eq:dual.no.tr.cons}: $\lambda^\star=\tau^\star$, and both problems admit solutions $Y^\star$  and $\xi^\star$, respectively.
Then, for all $X\in \mathbb S^N_+$, the following holds:
\begin{equation}
\tr(C X) \ge\lambda^\star -\|\xi^\star\|_{\ell_1} \|\mathcal A X-b\|_{\ell_\infty}\,.
\end{equation}
\end{lemma}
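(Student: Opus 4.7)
The plan is to exploit dual feasibility of $\xi^\star$ together with $X \succeq 0$ to get a nonnegative ``slack'' term, and then use strong duality plus Hölder to convert the constraint violation into a penalty.

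First, I would add and subtract $\mathcal{A}^\top \xi^\star$ inside the trace, writing
\begin{equation}
\tr(CX) \;=\; \tr\bigl((C-\mathcal A^\top\xi^\star)X\bigr) \;+\; \tr\bigl((\mathcal A^\top\xi^\star)X\bigr).
\end{equation}
Since $\xi^\star$ is dual feasible we have $C-\mathcal A^\top\xi^\star\succeq 0$, and since $X\succeq 0$ the first trace is nonnegative (using that $\tr(PQ)\ge 0$ whenever $P,Q\succeq 0$, which follows from $\tr(PQ)=\tr(P^{1/2}QP^{1/2})$). The second trace rearranges via the definition of the adjoint as
\begin{equation}
\tr\bigl((\mathcal A^\top\xi^\star)X\bigr)\;=\;\sum_{i=1}^M \xi_i^\star\,\tr(A_iX)\;=\;\langle \xi^\star,\mathcal A X\rangle.
\end{equation}

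Next, I would insert the affine right-hand side by writing $\mathcal A X = b + (\mathcal A X - b)$, giving
\begin{equation}
\tr(CX)\;\ge\; \langle \xi^\star, b\rangle + \langle \xi^\star,\mathcal A X - b\rangle \;=\; \tau^\star + \langle \xi^\star,\mathcal A X - b\rangle.
\end{equation}
By the strong duality assumption $\tau^\star=\lambda^\star$, and by Hölder's inequality
\begin{equation}
\langle \xi^\star,\mathcal A X - b\rangle \;\ge\; -\|\xi^\star\|_{\ell_1}\,\|\mathcal A X - b\|_{\ell_\infty}.
\end{equation}
Combining the two yields the desired bound.

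There is no real obstacle here; the argument is a standard weak-duality/penalty computation. The only points that need care are (i) justifying $\tr((C-\mathcal A^\top\xi^\star)X)\ge 0$ from the two semidefiniteness conditions, and (ii) invoking strong duality explicitly so that the primal optimum $\lambda^\star$, rather than just $\tau^\star$, appears on the right-hand side. Both are immediate from the hypotheses of the lemma.
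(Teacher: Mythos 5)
Your proof is correct. It differs mildly but genuinely from the paper's argument: the paper works with the primal optimal solution $Y^\star$ and the KKT conditions, writing $\tr(CX)-\lambda^\star=\tr\bigl(C(X-Y^\star)\bigr)$ and then using stationarity $C=Z^\star+\mathcal A^\top\xi^\star$ together with complementary slackness $\tr(Z^\star Y^\star)=0$ and $\mathcal A Y^\star=b$ to reach $\tr(CX)-\lambda^\star\ge \xi^{\star\top}(\mathcal A X-b)$, before applying H\"older. You instead never touch $Y^\star$: you split $\tr(CX)=\tr\bigl((C-\mathcal A^\top\xi^\star)X\bigr)+\langle\xi^\star,\mathcal A X\rangle$, discard the first term by dual feasibility and $X\succeq 0$, and replace $\langle\xi^\star,b\rangle$ by $\lambda^\star$ via dual attainment and strong duality. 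Both routes arrive at the same intermediate inequality; yours is the more economical one, since it requires only that the dual optimum is attained with value $\lambda^\star$ (the existence of a primal solution and the KKT conditions, which the paper invokes without justification, are not needed), while the paper's version makes the role of the dual slack $Z^\star$ and complementarity explicit, which is the structure it reuses in the inequality-constrained variant (Lemma~\ref{lem:obj.at.X.ineq.aff}).
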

\begin{proof}
Set $Z^\star=C-\mathcal A ^\top \xi^\star$.
Then $Z^\star\succeq 0$.
The Lagrangian has the form
\begin{equation}
L(Y,Z,\xi)=\tr(CY) -\tr(Z Y)-\xi^\top (\mathcal A X-b)\,.
\end{equation}
By the Karush–Kuhn–Tucker (KKT) conditions, we have
\begin{equation}\label{eq:KKT1}
0=\frac{\partial L}{\partial Y}(Y^\star,Z^\star,\xi^\star)=C-Z^\star-\mathcal A^\top \xi^\star\,.
\end{equation}
and 
\begin{equation}\label{eq:KKT2}
\tr(Z^\star Y^\star)=0\,.
\end{equation}
Let $X\in \mathbb S^N_+$.
By \eqref{eq:KKT1}, $C=Z^\star+\mathcal{A}^\top \xi^\star$.
Then
\begin{equation}\label{eq:bound4}
\begin{array}{rl}
\tr( C X)-\lambda^\star=&\tr( C( X-Y^\star) )\\
=&\tr(( Z^\star+\mathcal{A}^\top \xi^\star)( X-Y^\star)) \\
=& \tr( Z^\star X) + \tr( \mathcal{A}^\top \xi^\star (X-Y^\star ))\\
=& \tr( Z^\star X) +   \xi^{\star\top}\mathcal{A}(X-Y^\star )\\
\ge &   \xi^{\star\top}(\mathcal{A}X-b )\,. 
\end{array}
\end{equation}
The first equality follows from $\tr( C Y^\star) =\lambda^\star$, the third equality is due to \eqref{eq:KKT2}, and the last inequality is based on the positive semidefiniteness of $Z^\star$ and $X$.
From this, we deduce:
\begin{equation}\label{eq:bound2}
\tr(C X)\ge \lambda^\star -\|\xi^{\star}\|_{\ell_1}\|\mathcal{A}X-b \|_{\ell_\infty}\,.
\end{equation}

\end{proof}

Consider the following semidefinite program with trace one:
\begin{equation}\label{eq:sdp}
\begin{array}{rl}
\lambda^\star=\inf\limits_{X}&\tr( CX)\\
\text{s.t.}&X\in\mathbb S^{N}_+\,,\,\tr(X)= 1\,,\\
&\tr( A_iX) =b_i\,,\,i=1,\dots,M\,,
\end{array}
\end{equation}
where $C,A_i\in\mathbb S^N$, and $b_i\in\R$, for $i=1,\dots,M$, are given.
Define the set $S^{(\lambda)}$ as
\begin{equation}\label{eq:S.rho}
S^{(\lambda)}=\left\{X\in \mathbb S^N_+\left|\begin{array}{rl}
& \tr( A_iX) = b_i\,,\,i=1,\dots,M\,,\\
&\tr( CX) \le \lambda\,,\,\tr(X)=1
\end{array} \right. \right\}\,.
\end{equation}
We propose the following algorithm to numerically solve the SDP \eqref{eq:sdp}:
\begin{algorithm}\label{alg:Binary.search.HU}
Binary search using Hamiltonian updates
\begin{itemize}
\item Input: $\lambda_{\min}\in\R$, $\lambda_{\max}>\lambda_{\min}$, $\varepsilon>0$, $A_i\in \mathbb S^N$, $b_i\in\R$, $i=1,\dots,M$.
\item Output: $\underline \lambda_T\in\R$ and $X_\varepsilon\in\mathbb S^N$.
\end{itemize}
\begin{enumerate}
\item Set $\underline\lambda_0=\lambda_{\min}$, $\overline\lambda_0=\lambda_{\max}$ and $T=\left\lceil \log_2\left(\frac{\lambda_{\max}-\lambda_{\min}}{\varepsilon}\right)\right\rceil$.
\item For $t=0,\dots,T-1$, do
\begin{enumerate}
\item Set $\lambda_t=\frac{\underline\lambda_t+\overline \lambda_t}{2}$.
\item Run Algorithm \ref{alg:HU} (Hamiltonian updates) to test the feasibility of $S^{(\lambda_t)}$ (defined in \eqref{eq:S.rho}) and obtain $\eta_t\in\{0,1\}$ and $X_\varepsilon^{(t)}\in \mathbb S^N$.
\item Based on $\eta_t$:
\begin{enumerate}
\item If $\eta_t=0$, set $\underline \lambda_{t+1}=\lambda_t$ and $\overline \lambda_{t+1}=\overline \lambda_t$.
\item If $\eta_t=1$, set $\underline \lambda_{t+1}=\underline\lambda_t$, $\overline \lambda_{t+1}= \lambda_t$ and $X_\varepsilon=X_\varepsilon^{(t)}$. 
\end{enumerate}
\end{enumerate}
\end{enumerate}
\end{algorithm}
Define the set:
\begin{equation}
S^{(\lambda)}_\varepsilon=\left\{X\in \mathbb S^N_+\left|\begin{array}{rl}
& \tr( A_iX) \le b_i+\varepsilon\,,\,i=1,\dots,M\,,\\
& \tr( -A_iX) \le -b_i+\varepsilon\,,\,i=1,\dots,M\,,\\
&\tr( CX) \le \lambda+\varepsilon\,,\,\tr(X)=1
\end{array} \right. \right\}\,.
\end{equation}
Note that $S^{(\lambda)}_0=S^{(\lambda)}$.

The following lemma provides a necessary and sufficient condition for $S^{(\lambda)}$ to be nonempty:

\begin{lemma}\label{lem:feas.ineq}
Let $C,A_i\in \mathbb S^N$, and $b_i\in\R$, for $i=1,\dots,M$.
Let $\lambda^\star$ be the optimal value of the semidefinite program in \eqref{eq:sdp}.
Assume that the problem \eqref{eq:sdp} has an optimal solution.
Define $S^{(\lambda)}$ as in \eqref{eq:S.rho}.
Then,
\begin{equation}
S^{(\lambda)}\ne\emptyset\quad\Leftrightarrow\quad \lambda\ge \lambda^\star\,.
\end{equation}
\end{lemma}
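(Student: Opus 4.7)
The statement is an elementary equivalence between feasibility of the ``sub-level'' set $S^{(\lambda)}$ and the value of the SDP \eqref{eq:sdp}, so the plan is to verify each implication directly from the definitions, using the assumed existence of an optimal solution in one direction.

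For the direction ($\Leftarrow$), assume $\lambda \ge \lambda^\star$. By hypothesis the SDP \eqref{eq:sdp} admits an optimal solution $X^\star \in \mathbb{S}^N_+$, so $X^\star$ satisfies $\tr(X^\star) = 1$ and $\tr(A_i X^\star) = b_i$ for $i=1,\dots,M$, and moreover $\tr(C X^\star) = \lambda^\star \le \lambda$. Comparing with the definition \eqref{eq:S.rho}, this says exactly that $X^\star \in S^{(\lambda)}$, hence $S^{(\lambda)} \neq \emptyset$.

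For the direction ($\Rightarrow$), assume $S^{(\lambda)} \neq \emptyset$ and pick any $X \in S^{(\lambda)}$. By the definition of $S^{(\lambda)}$, $X$ is positive semidefinite with $\tr(X) = 1$ and $\tr(A_i X) = b_i$ for every $i$, so $X$ is feasible for the SDP \eqref{eq:sdp}. Since $\lambda^\star$ is the infimum of $\tr(C \cdot)$ over the feasible set, $\lambda^\star \le \tr(C X)$. Combined with the inequality $\tr(C X) \le \lambda$ imposed by membership in $S^{(\lambda)}$, this gives $\lambda^\star \le \lambda$.

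Together these two implications establish the equivalence. There is no real obstacle here: the only place where an assumption is used nontrivially is the ``$\Leftarrow$'' direction, where we need an actual feasible point attaining (or approximating) the optimum to produce a member of $S^{(\lambda)}$; the assumed existence of an optimal solution supplies this immediately. If the infimum were not attained one would instead argue with an $\varepsilon$-optimizer and note that $\lambda > \lambda^\star$ would still yield a point in $S^{(\lambda)}$, but under the lemma's hypotheses this refinement is unnecessary.
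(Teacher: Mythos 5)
Your proof is correct and follows essentially the same route as the paper: use the assumed optimal solution $X^\star$ to witness $S^{(\lambda)}\ne\emptyset$ when $\lambda\ge\lambda^\star$, and in the converse direction note that any element of $S^{(\lambda)}$ is feasible for \eqref{eq:sdp}, so $\lambda^\star\le\tr(CX)\le\lambda$. No issues to flag.
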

\begin{proof}
By assumption, there exists an optimal solution $X^\star$ to the semidefinite program \eqref{eq:sdp}.
If $\lambda\ge \lambda^\star$, then $\lambda\ge \lambda^\star=\tr( CX^\star)$, and $X^\star$ is a feasible solution to the problem \eqref{eq:sdp}.
Thus $X^\star\in S^{(\lambda)}$, which shows that $S^{(\lambda)}$ is nonempty. 
Conversely, if $S^{(\lambda)}\ne\emptyset$, there exists $Y\in S^{(\lambda)}$. 
This implies that $Y$ is a feasible solution to semidefinite program \eqref{eq:sdp}, and thus $\lambda\ge \tr( CY)\ge \lambda^\star$.
Therefore, the result follows.
\end{proof}

We present the basic properties of the values returned by binary search in the following two lemmas:

\begin{lemma}\label{lem:induc.ineq}
Let $(\underline \lambda_{t})_{t=0}^T$ and $(\overline \lambda_{t})_{t=0}^T$ be the sequences generated by Algorithm \ref{alg:Binary.search.HU}.
Then, for $t=0,\dots,T-1$, the following holds:
\begin{equation}\label{eq:induc.ineq}
\overline\lambda_{t+1}-\underline\lambda_{t+1}=\frac{1}{2}(\overline\lambda_{t}-\underline\lambda_{t})\,.
\end{equation}
\end{lemma}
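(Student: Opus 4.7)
The statement is essentially a direct bookkeeping consequence of how Algorithm~\ref{alg:Binary.search.HU} updates the bracket $[\underline\lambda_t,\overline\lambda_t]$ at each iteration, so the plan is simply to unwind the definitions and do a two-case analysis on the flag $\eta_t\in\{0,1\}$ returned by the Hamiltonian updates subroutine.

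First, I would fix $t\in\{0,\dots,T-1\}$ and recall that the algorithm sets $\lambda_t=(\underline\lambda_t+\overline\lambda_t)/2$, which is the midpoint of the current interval. Then I would split on the value of $\eta_t$. In the case $\eta_t=0$ the update rule gives $\underline\lambda_{t+1}=\lambda_t$ and $\overline\lambda_{t+1}=\overline\lambda_t$, and a one-line computation yields
\begin{equation*}
\overline\lambda_{t+1}-\underline\lambda_{t+1}=\overline\lambda_t-\tfrac{\underline\lambda_t+\overline\lambda_t}{2}=\tfrac{1}{2}(\overline\lambda_t-\underline\lambda_t).
\end{equation*}
In the case $\eta_t=1$ the rule instead sets $\underline\lambda_{t+1}=\underline\lambda_t$ and $\overline\lambda_{t+1}=\lambda_t$, and the symmetric computation gives the same answer
\begin{equation*}
\overline\lambda_{t+1}-\underline\lambda_{t+1}=\tfrac{\underline\lambda_t+\overline\lambda_t}{2}-\underline\lambda_t=\tfrac{1}{2}(\overline\lambda_t-\underline\lambda_t),
\end{equation*}
which establishes \eqref{eq:induc.ineq} uniformly in $\eta_t$.

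There is no real obstacle here: the statement is purely about the arithmetic of bisection and does not depend on the correctness of the Hamiltonian updates subroutine (that will be needed in the next lemma to interpret what $\eta_t$ actually certifies about $S^{(\lambda_t)}$, but it plays no role in the width recursion itself). The only minor thing to mention, for completeness, is that the bracket indeed shrinks geometrically: iterating \eqref{eq:induc.ineq} gives $\overline\lambda_t-\underline\lambda_t=2^{-t}(\lambda_{\max}-\lambda_{\min})$, which is what motivates the choice $T=\lceil\log_2((\lambda_{\max}-\lambda_{\min})/\varepsilon)\rceil$ in the algorithm and will be used to get an $\varepsilon$-accurate approximation in the subsequent analysis.
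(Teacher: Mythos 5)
Your proposal is correct and follows essentially the same route as the paper: fix $t$, split on $\eta_t\in\{0,1\}$, plug in the two update rules for $(\underline\lambda_{t+1},\overline\lambda_{t+1})$, and compute the interval width directly using $\lambda_t=(\underline\lambda_t+\overline\lambda_t)/2$. The additional remarks about the geometric shrinkage and the choice of $T$ are not part of this lemma's proof in the paper (they appear in Lemma \ref{lem:monotoncity}), but they are accurate and harmless.
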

\begin{proof}
Let $t\in\{0,\dots,T-1\}$ be fixed and consider Step 2 (c) of Algorithm \ref{alg:Binary.search.HU}.
\begin{itemize}
\item If $\eta_t=0$, then $\underline\lambda_{t+1}=\lambda_t$ and $\overline\lambda_{t+1}=\overline\lambda_t$, which gives 
\begin{equation}
\overline\lambda_{t+1}-\underline\lambda_{t+1}= \overline\lambda_t-\lambda_t=\overline\lambda_t-\frac{\underline\lambda_t+\overline\lambda_t}{2}=\frac{1}{2}(\overline\lambda_{t}-\underline\lambda_{t})\,.
\end{equation}
\item If $\eta_t=1$, then $\underline\lambda_{t+1}=\underline\lambda_t$ and $\overline\lambda_{t+1}=\lambda_t$, which gives
\begin{equation}
\overline\lambda_{t+1}-\underline\lambda_{t+1}= \lambda_t-\underline\lambda_t=\frac{\underline\lambda_t+\overline\lambda_t}{2}-\underline\lambda_t=\frac{1}{2}(\overline\lambda_{t}-\underline\lambda_{t})\,.
\end{equation}
\end{itemize}
Thus, the result follows.
\end{proof}

\begin{lemma}\label{lem:monotoncity}
Let $(\underline \lambda_{t})_{t=0}^T$ and $(\overline \lambda_{t})_{t=0}^T$ be the sequences generated by Algorithm \ref{alg:Binary.search.HU}.
Then, the following statements hold:
\begin{enumerate}
\item For $t=0,\dots,T-1$, the sequence satisfies 
\begin{equation}
\lambda_{\min}\le\underline \lambda_t\le \underline \lambda_{t+1}\le \overline \lambda_{t+1}\le \overline \lambda_t\le \lambda_{\max}\,.
\end{equation}
\item For $t=0,\dots,T$, we have $\overline \lambda_t -\underline \lambda_t=\frac{1}{2^t}(\lambda_{\max}-\lambda_{\min})$. In particular, 
\begin{equation}
\overline \lambda_T -\underline \lambda_T=\frac{1}{2^T}(\lambda_{\max}-\lambda_{\min})\le \varepsilon\,.
\end{equation}
\end{enumerate}
\end{lemma}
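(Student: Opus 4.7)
The plan is to prove both statements by induction on $t$, using Lemma \ref{lem:induc.ineq} for the halving identity and a direct case analysis of Step 2(c) of Algorithm \ref{alg:Binary.search.HU} for the nested-intervals property.

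For Part 1, I would argue by induction on $t$ that $\lambda_{\min}\le \underline\lambda_t\le \overline\lambda_t\le \lambda_{\max}$. The base case $t=0$ is immediate from the initialization $\underline\lambda_0=\lambda_{\min}$, $\overline\lambda_0=\lambda_{\max}$. For the inductive step, observe that $\lambda_t=(\underline\lambda_t+\overline\lambda_t)/2$ lies in the interval $[\underline\lambda_t,\overline\lambda_t]$ by the induction hypothesis. Then in Step 2(c): if $\eta_t=0$, the new endpoints are $\underline\lambda_{t+1}=\lambda_t\ge \underline\lambda_t$ and $\overline\lambda_{t+1}=\overline\lambda_t$, so $\underline\lambda_t\le \underline\lambda_{t+1}\le \overline\lambda_{t+1}=\overline\lambda_t$; if $\eta_t=1$, the new endpoints are $\underline\lambda_{t+1}=\underline\lambda_t$ and $\overline\lambda_{t+1}=\lambda_t\le \overline\lambda_t$, so again the new interval sits inside the previous one. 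Combining with the induction hypothesis yields the full chain of inequalities.

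For Part 2, I would simply iterate the identity of Lemma \ref{lem:induc.ineq}: applying it $t$ times gives
\[
\overline\lambda_t-\underline\lambda_t
=\tfrac12(\overline\lambda_{t-1}-\underline\lambda_{t-1})
=\cdots
=\tfrac{1}{2^t}(\overline\lambda_0-\underline\lambda_0)
=\tfrac{1}{2^t}(\lambda_{\max}-\lambda_{\min}).
\]
Specializing to $t=T$ and invoking the choice $T=\lceil\log_2((\lambda_{\max}-\lambda_{\min})/\varepsilon)\rceil$ from Step 1 of the algorithm, we have $2^T\ge (\lambda_{\max}-\lambda_{\min})/\varepsilon$, which rearranges to $\overline\lambda_T-\underline\lambda_T\le \varepsilon$.

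There is no substantive obstacle here: both assertions are bookkeeping consequences of the update rule together with Lemma \ref{lem:induc.ineq} and the definition of $T$. The only thing to be careful about is the direction of the inequalities in the two branches of Step 2(c), which is handled transparently once one notes that $\lambda_t$ is the midpoint of $[\underline\lambda_t,\overline\lambda_t]$.
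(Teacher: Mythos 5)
Your proposal is correct and follows essentially the same route as the paper: the paper dismisses Part 1 as a direct consequence of the update rules (which you simply spell out via the midpoint case analysis), and proves Part 2 exactly as you do, by iterating Lemma \ref{lem:induc.ineq} down to $t=0$ and invoking the choice $T=\lceil\log_2((\lambda_{\max}-\lambda_{\min})/\varepsilon)\rceil$. No gaps to report.
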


\begin{proof}
The first statement is straightforward to prove, as it follows directly from the binary search update rules.
Now, let us prove the second statement.
By Lemma \ref{lem:induc.ineq}, by induction, we have
\begin{equation}\label{eq:bound.delta.T.2}
\overline \lambda_t-\underline \lambda_t=\frac{1}{2}(\overline \lambda_{t-1}-\underline \lambda_{t-1})=\dots= \frac{1}{2^t}(\overline \lambda_0-\underline\lambda_0)=\frac{1}{2^t}(\lambda_{\max}-\lambda_{\min})\,.
\end{equation}
Thus, the result follows.
\end{proof}

The following lemma establishes a lower bound on the gap between the optimal value of the SDP and the value returned by the binary search using Hamiltonian updates:

\begin{lemma}\label{lem:zero.lower.bound}
Let $C,A_i\in \mathbb S^N$, and $b_i\in\R$, for $i=1,\dots,M$.
Assume that $I_N\succeq C\succeq -I_N$ and $I_N\succeq A_i\succeq -I_N$, for $i=1,\dots,M$.
Let $\lambda^\star$ be the optimal value in \eqref{eq:sdp}, and assume that problem \eqref{eq:sdp} has an optimal solution.
Additionally, let $\lambda_{\min}\le \lambda^\star$, and let $(\underline \lambda_{t})_{t=0}^T$ be the sequence generated by Algorithm \ref{alg:Binary.search.HU}.
Then for all $t=0,\dots,T$,
\begin{equation}\label{eq:zero.lower.bound}
0\le \lambda^\star-\underline \lambda_t\,.
\end{equation}
\end{lemma}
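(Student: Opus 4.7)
The plan is to prove the bound $\underline{\lambda}_t \le \lambda^\star$ by induction on $t$. The base case $t=0$ is immediate: $\underline{\lambda}_0 = \lambda_{\min} \le \lambda^\star$ by hypothesis.

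For the inductive step, I would assume $\underline{\lambda}_t \le \lambda^\star$ and split on the value of the flag $\eta_t$ returned in Step 2(b) of Algorithm \ref{alg:Binary.search.HU}. In the case $\eta_t = 1$, the update rule sets $\underline{\lambda}_{t+1} = \underline{\lambda}_t$, so the bound is preserved trivially. The interesting case is $\eta_t = 0$, where $\underline{\lambda}_{t+1} = \lambda_t$ and I need to show $\lambda_t \le \lambda^\star$.

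For this case, the key step is to observe that the set $S^{(\lambda_t)}$ defined in \eqref{eq:S.rho} can be written in the inequality-only form required by Algorithm \ref{alg:HU} by replacing each equality $\tr(A_i X)=b_i$ with the pair $\tr(A_i X)\le b_i$ and $\tr(-A_i X)\le -b_i$, and keeping the inequality $\tr(CX)\le \lambda_t$ together with $\tr(X)=1$. The hypothesis $I_N \succeq \pm A_i, \pm C \succeq -I_N$ ensures the spectral bounds required by Lemma \ref{lem:Hamilton.updates} are met. When $\eta_t = 0$, that lemma yields $S^{(\lambda_t)}_0 = \emptyset$, and by construction $S^{(\lambda_t)}_0 = S^{(\lambda_t)}$, so $S^{(\lambda_t)}=\emptyset$. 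Then Lemma \ref{lem:feas.ineq}, applicable since \eqref{eq:sdp} is assumed to admit an optimal solution, gives the equivalence $S^{(\lambda_t)}\ne\emptyset \Leftrightarrow \lambda_t \ge \lambda^\star$. Taking the contrapositive, $\lambda_t < \lambda^\star$, hence $\underline{\lambda}_{t+1}=\lambda_t \le \lambda^\star$.

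There is no real obstacle here: the argument is a short chaining of the feasibility certificate in Lemma \ref{lem:Hamilton.updates} with the SDP characterization in Lemma \ref{lem:feas.ineq}. The only thing to be careful about is bookkeeping --- making sure that the normalization assumption $I_N \succeq A_i \succeq -I_N$ really covers the flipped matrices $-A_i$ (it does, since the condition is symmetric) and that the objective constraint $\tr(CX)\le \lambda_t$ also falls within the spectral normalization since $I_N \succeq C \succeq -I_N$ is assumed. With these observations the induction closes and \eqref{eq:zero.lower.bound} follows for every $t=0,\dots,T$.
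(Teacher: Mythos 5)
Your proof is correct and follows essentially the same route as the paper: induction on $t$ with the base case $\underline\lambda_0=\lambda_{\min}\le\lambda^\star$, and in the $\eta_t=0$ case combining Lemma \ref{lem:Hamilton.updates} (emptiness of $S^{(\lambda_t)}$) with Lemma \ref{lem:feas.ineq} to conclude $\lambda_t<\lambda^\star$, while the $\eta_t=1$ case is immediate from the inductive hypothesis. The extra remarks about rewriting equalities as paired inequalities and the symmetry of the spectral bounds are just bookkeeping the paper leaves implicit.
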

\begin{proof}
We will prove the inequality \eqref{eq:zero.lower.bound} by induction on $t$.
\begin{itemize}
\item Base case ($t=0$): From the initial setup, we have $\underline \lambda_0=\lambda_{\min}$, so $\lambda^\star-\underline \lambda_0=\lambda^\star-\lambda_{\min}\ge 0$, which satisfies the inequality  \eqref{eq:zero.lower.bound} for $t=0$.
\item Inductive step:
Assume that the inequality \eqref{eq:zero.lower.bound} holds for $t=r\le T-1$, i.e., $\lambda^\star-\underline{\lambda}_r>0$.
We claim that \eqref{eq:zero.lower.bound} holds with $t=r+1$.
Indeed, 
Consider Step 2 (c) of Algorithm \ref{alg:Binary.search.HU} with $t=r$.
\begin{itemize}
\item If $\eta_r=0$, then Lemma \ref{lem:Hamilton.updates} states that $S^{(\lambda_r)}=S^{(\lambda_r)}_0=\emptyset$, which implies from Lemma \ref{lem:feas.ineq} that $\lambda^\star> \lambda_r$. Thus, we have $\lambda^\star> \lambda_r=\underline \lambda_{r+1}$,
ensuring that the inequality holds for $t=r+1$.
\item If $\eta_r=1$, then by the inductive hypothesis, $\underline \lambda_{r+1}=\underline \lambda_r\le \lambda^\star$, so the inequality again holds.
\end{itemize}
\end{itemize}
Thus, by induction, the inequality \eqref{eq:zero.lower.bound} holds for all $t=0,\dots,T$.
\end{proof}

The following lemma provides upper bounds on the gap between the optimal value of the SDP and the value returned by the binary search using Hamiltonian updates:
\begin{lemma}\label{lem:upper.bound}
Let $\varepsilon>0$, $C,A_i\in \mathbb S^N$, $b_i\in\R$, for $i=1,\dots,M$, and assume that $I_N\succeq C\succeq -I_N$ and $I_N\succeq A_i\succeq -I_N$, for $i=1,\dots,M$.
Let $\lambda^\star$ be the optimal value in \eqref{eq:sdp} and assume $\lambda_{\max}\ge \lambda^\star$.
Further, assume the following equality holds:
\begin{equation}\label{eq:remove.trace.one}
\lambda^\star=\inf\{\tr( CX)\,:\,X\in\mathbb S^{N}_+\,,\,\tr( A_iX) =b_i\,,\,i=1,\dots,M\}\,.
\end{equation}
Assume strong duality holds for problem \eqref{eq:remove.trace.one} and its dual \eqref{eq:dual.no.tr.cons}, which admits a solution $\xi^\star$.
Let $(\overline \lambda_{t})_{t=0}^T$ be the sequence generated in Algorithm \ref{alg:Binary.search.HU}.
Then for $t=0,\dots,T$, we have the following bound:

\begin{equation}\label{eq:upper.bound}
\lambda^\star-\overline \lambda_t\le \varepsilon (1+\|\xi^\star\|_{\ell_1})\,.
\end{equation}
\end{lemma}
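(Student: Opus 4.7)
The plan is to prove the bound by induction on $t$, exploiting the two possible outcomes of Algorithm~\ref{alg:HU} at each binary search step and using Lemma~\ref{lem:obj.at.X} to relate constraint violation to objective value.

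For the base case $t=0$, the initialization gives $\overline\lambda_0=\lambda_{\max}\ge\lambda^\star$, so $\lambda^\star-\overline\lambda_0\le 0\le\varepsilon(1+\|\xi^\star\|_{\ell_1})$. For the inductive step, assume $\lambda^\star-\overline\lambda_t\le\varepsilon(1+\|\xi^\star\|_{\ell_1})$ and split on $\eta_t$. If $\eta_t=0$, then $\overline\lambda_{t+1}=\overline\lambda_t$ by Algorithm~\ref{alg:Binary.search.HU}, and the induction hypothesis transfers verbatim. The interesting case is $\eta_t=1$, when $\overline\lambda_{t+1}=\lambda_t$ and Algorithm~\ref{alg:HU} returns $X_\varepsilon^{(t)}\in S^{(\lambda_t)}_\varepsilon$ by Lemma~\ref{lem:Hamilton.updates}.

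Now apply Lemma~\ref{lem:obj.at.X} to the SDP~\eqref{eq:remove.trace.one}, using strong duality and the dual optimizer $\xi^\star$, with the choice $X=X_\varepsilon^{(t)}\succeq 0$:
\begin{equation}
\tr(CX_\varepsilon^{(t)})\ge\lambda^\star-\|\xi^\star\|_{\ell_1}\,\|\mathcal A X_\varepsilon^{(t)}-b\|_{\ell_\infty}.
\end{equation}
Since $X_\varepsilon^{(t)}\in S^{(\lambda_t)}_\varepsilon$, the pair of inequalities $\tr(A_iX_\varepsilon^{(t)})\le b_i+\varepsilon$ and $\tr(-A_iX_\varepsilon^{(t)})\le -b_i+\varepsilon$ yield $\|\mathcal A X_\varepsilon^{(t)}-b\|_{\ell_\infty}\le\varepsilon$, and the objective constraint gives $\tr(CX_\varepsilon^{(t)})\le\lambda_t+\varepsilon$. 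Chaining these,
\begin{equation}
\lambda_t+\varepsilon\;\ge\;\tr(CX_\varepsilon^{(t)})\;\ge\;\lambda^\star-\varepsilon\|\xi^\star\|_{\ell_1},
\end{equation}
so $\lambda^\star-\overline\lambda_{t+1}=\lambda^\star-\lambda_t\le\varepsilon(1+\|\xi^\star\|_{\ell_1})$, closing the induction.

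I do not expect a serious obstacle here: the statement collapses to a one-line use of Lemma~\ref{lem:obj.at.X} on the approximately feasible iterate produced by Hamiltonian updates, wrapped in an induction that handles $\eta_t=0$ trivially. The only subtlety worth flagging is that the lemma is applied to the program~\eqref{eq:remove.trace.one} rather than~\eqref{eq:sdp}; this is precisely the role of the hypothesis that dropping the trace-one constraint leaves $\lambda^\star$ unchanged and that strong duality holds for that relaxed primal-dual pair, ensuring a dual optimizer $\xi^\star$ exists and that the KKT-based estimate of Lemma~\ref{lem:obj.at.X} is valid for any $X\in\mathbb S^N_+$ (in particular, one need not know that $X_\varepsilon^{(t)}$ satisfies $\tr(X_\varepsilon^{(t)})=1$ for the bound, though it does by construction).
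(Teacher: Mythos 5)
Your proof is correct and follows essentially the same route as the paper: induction on $t$, the $\eta_t=0$ case handled by carrying the hypothesis forward, and the $\eta_t=1$ case resolved by applying Lemma~\ref{lem:obj.at.X} to the $\varepsilon$-feasible iterate $X_\varepsilon^{(t)}\in S^{(\lambda_t)}_\varepsilon$ together with $\tr(CX_\varepsilon^{(t)})\le\lambda_t+\varepsilon$. Your remark about why the lemma must be invoked for the trace-free program \eqref{eq:remove.trace.one} matches the role that hypothesis plays in the paper's argument as well.
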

\begin{proof}
We will prove the result by induction, using Lemma \ref{lem:obj.at.X}.
\begin{itemize}
\item Base case: For $t=0$, we have
\begin{equation}
\lambda^\star-\overline \lambda_0=\lambda^\star-\lambda_{\max}\le 0\le \varepsilon\left(1+\|\xi^\star\|_{\ell_1}\right)\,.
\end{equation}
Thus \eqref{eq:upper.bound} holds for $t=0$.
\item Inductive Step:
Assume that \eqref{eq:upper.bound} holds for some $t=r\le T-1$.
We now show that it holds for $t=r+1$.
Consider Step 2 (c) of Algorithm \ref{alg:Binary.search.HU} with $t=r$.
\begin{itemize}
\item Case 1: $\eta_r=0$. In this case, we have $\overline \lambda_{r+1}=\overline \lambda_r$. By the induction assumption, it follows that
\begin{equation}
\overline \lambda_{r+1}=\overline \lambda_r\ge \lambda^\star - \varepsilon \left(1+\|\xi^\star\|_{\ell_1} \right)\,.
\end{equation}
\item Case 2: $\eta_r=1$.
By Lemma \ref{lem:Hamilton.updates},  we know that $X_\varepsilon^{(r)}\in S^{(\lambda_r)}_\varepsilon$.
By Lemma \ref{lem:obj.at.X}, we obtain
\begin{equation}
\tr( C X_\varepsilon^{(r)}) \ge\lambda^\star -\|\xi^\star\|_{\ell_1}  \max\limits_{i=1,\dots,M}|\tr(A_iX_\varepsilon^{(r)})-b_i| \ge \lambda^\star - \varepsilon \|\xi^\star\|_{\ell_1} \,.
\end{equation}
Since $X_\varepsilon^{(r)}\in S^{(\lambda_r)}_\varepsilon$, we have $|\tr(A_iX_\varepsilon^{(r)})-b_i|\le \varepsilon$ for all $i$ and the last inequality holds.
Therefore, we have the following estimate:
\begin{equation}
\overline \lambda_{r+1} +\varepsilon=\lambda_r+\varepsilon\ge \tr( C X_\varepsilon^{(r)})
\ge   \lambda^\star - \varepsilon \|\xi^\star\|_{\ell_1}\,.
\end{equation}
The first inequality is based on $X_\varepsilon^{(r)}\in S^{(\lambda_r)}_\varepsilon$.
It implies that
\begin{equation}\label{eq:ineq.rho}
\overline \lambda_{r+1} \ge \lambda^\star -\varepsilon(1+\|\xi^\star\|_{\ell_1} )\,.
\end{equation}
\end{itemize}
Combining both cases, we conclude that
\begin{equation}
\lambda^\star-\overline \lambda_{r+1}\le \varepsilon\left(1+\|\xi^\star\|_{\ell_1} \right)\,.
\end{equation}
\end{itemize}
Thus, by induction, \eqref{eq:upper.bound} holds for all $t=0,\dots,T$.
\end{proof}

\begin{remark}
Condition \eqref{eq:remove.trace.one} implies that the constraint $\tr(X)=1$ in problem \eqref{eq:sdp} can be omitted without affecting the optimal value of the problem. 
\end{remark}

The following lemma establishes the accuracy of the approximate optimal value for the SDP \eqref{eq:sdp}, as obtained through binary search using Hamiltonian updates:

\begin{lemma}[Accuracy]\label{lem:convergence.sdp}
Let $\varepsilon>0$, $C,A_i\in \mathbb S^N$, $b_i\in\R$, for $i=1,\dots,M$, and assume that $I_N\succeq C\succeq -I_N$ and $I_N\succeq A_i\succeq -I_N$, for $i=1,\dots,M$.
Assume that problem \eqref{eq:sdp} has an optimal solution.
Let $\lambda^\star$ be the optimal value in \eqref{eq:sdp}, and let $\lambda_{\min},\lambda_{\max}\in\R$ such that $\lambda^\star\in [\lambda_{\min},\lambda_{\max}]$.
Further, assume the following equality holds:
\begin{equation}\label{eq:remove.trace.one.1}
\lambda^\star=\inf\{\tr( CX)\,:\,X\in\mathbb S^{N}_+\,,\,\tr( A_i X) =b_i\,,\,i=1,\dots,M\}\,.
\end{equation}
Assume strong duality holds for problem \eqref{eq:remove.trace.one.1} and its dual \eqref{eq:dual.no.tr.cons}, which admits a solution $\xi^\star$.
Let $\underline \lambda_{T}\in\R$ be the value returned by Algorithm \ref{alg:Binary.search.HU}.
Then, the following convergence result holds:
\begin{equation}
0\le \lambda^\star-\overline \lambda_T\le \varepsilon (2+\|\xi^\star\|_{\ell_1})\,.
\end{equation}
\end{lemma}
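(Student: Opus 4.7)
The plan is to obtain the two-sided bound by packaging together the three preceding lemmas; the real analytic work has already been done there. For the lower bound, I will invoke Lemma~\ref{lem:zero.lower.bound} at $t = T$, which gives $\underline{\lambda}_T \le \lambda^\star$ immediately, using only the assumption $\lambda_{\min} \le \lambda^\star$ built into our hypothesis $\lambda^\star \in [\lambda_{\min}, \lambda_{\max}]$ and the operator-norm bounds on $C$ and $A_i$. (Note that the displayed statement is most naturally read as a bound on the returned value $\underline{\lambda}_T$; the factor $2$ in the constant is exactly the slack produced below when we convert an $\overline{\lambda}_T$-bound into an $\underline{\lambda}_T$-bound.)

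For the upper bound, I will apply Lemma~\ref{lem:upper.bound} at $t = T$ to obtain
\[
\lambda^\star - \overline{\lambda}_T \;\le\; \varepsilon\bigl(1 + \|\xi^\star\|_{\ell_1}\bigr),
\]
which uses the trace-removal identity \eqref{eq:remove.trace.one.1}, strong duality, and the existence of a dual optimizer $\xi^\star$. Next I will use Lemma~\ref{lem:monotoncity}(ii), which via the choice $T = \lceil \log_2((\lambda_{\max}-\lambda_{\min})/\varepsilon)\rceil$ from Algorithm~\ref{alg:Binary.search.HU} gives the bisection-width bound
\[
\overline{\lambda}_T - \underline{\lambda}_T \;=\; 2^{-T}(\lambda_{\max}-\lambda_{\min}) \;\le\; \varepsilon.
\]
Summing these two inequalities,
\[
\lambda^\star - \underline{\lambda}_T
\;=\; (\lambda^\star - \overline{\lambda}_T) + (\overline{\lambda}_T - \underline{\lambda}_T)
\;\le\; \varepsilon\bigl(1 + \|\xi^\star\|_{\ell_1}\bigr) + \varepsilon
\;=\; \varepsilon\bigl(2 + \|\xi^\star\|_{\ell_1}\bigr),
\]
which is the claimed accuracy.

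I do not expect any serious obstacle. The nontrivial content—relating the feasibility-violation $\|\mathcal{A}X_\varepsilon^{(r)} - b\|_{\ell_\infty}$ of a Hamiltonian-updates certificate to the dual norm $\|\xi^\star\|_{\ell_1}$ via the KKT-based inequality of Lemma~\ref{lem:obj.at.X}, and the inductive tracking of $\overline{\lambda}_t$ across the binary-search branches—has already been carried out in Lemmas~\ref{lem:upper.bound} and~\ref{lem:zero.lower.bound}. The only point requiring a line of care is keeping the notation straight between $\overline{\lambda}_T$ and the returned $\underline{\lambda}_T$, since the two inequalities pointed in opposite directions are proved for the two different sequences and the bisection-width bound is the bridge between them.
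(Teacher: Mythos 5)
Your proposal is correct and follows essentially the same route as the paper's own proof: it combines Lemma~\ref{lem:zero.lower.bound} for the lower bound, Lemma~\ref{lem:upper.bound} for the upper bound on $\lambda^\star-\overline\lambda_T$, and Lemma~\ref{lem:monotoncity}(ii) to bridge $\overline\lambda_T$ and $\underline\lambda_T$, summing to get the factor $2+\|\xi^\star\|_{\ell_1}$. Your remark about keeping $\overline\lambda_T$ versus the returned $\underline\lambda_T$ straight is apt, since the paper's displayed statement and its proof exhibit exactly that notational slip, which you resolve the same way the paper's proof implicitly does.
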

\begin{proof}
From Lemma \ref{lem:zero.lower.bound}, we have $0\le \lambda^\star-\underline \lambda_T$.
From Lemma \ref{lem:upper.bound}, it follows that $\lambda^\star-\overline\lambda_T< \varepsilon\left(1+\|\xi^\star\|_{\ell_1} \right)$.
By the second statement of Lemma \ref{lem:monotoncity}, we also have $\overline \lambda_T-\underline \lambda_T\le \varepsilon$.
Combining these results, we obtain 
\begin{equation}
\begin{array}{rl}
0\le \lambda^\star-\underline \lambda_T=&(\lambda^\star-\overline \lambda_T)+(\overline\lambda_T-\underline \lambda_T)\\[5pt]
\le& 
\varepsilon\left(1+\|\xi^\star\|_{\ell_1} \right)+\varepsilon\le \varepsilon\left(2+\|\xi^\star\|_{\ell_1}\right)\,.
\end{array}
\end{equation}
Thus, the statement follows.
\end{proof}

The following lemma, which summarizes the classical and quantum complexities of binary search via Hamiltonian updates, is a consequence of Lemma~\ref{lem:MMWU-complexity}.

\begin{lemma}[Complexity]\label{lem:complex.binary.search}
The complexity of running Algorithm \ref{alg:Binary.search.HU} is as follows:
\begin{itemize}
\item {On a classical computer, the algorithm takes
\begin{equation}
O((N^\omega + M s N)\varepsilon^{-2})
\end{equation}
operations, where $N$ is the size of the matrix, $M$ is the number of affine constraints, $\varepsilon$ is the desired accuracy, $\omega\approx 2.373$, and $s$ is the maximum number of nonzero entries in any row of the matrices $C$ and $A_i$ (for $i=1,\dots,M$).}
\item On a quantum computer, the algorithm requires
\begin{equation}
O(s(\sqrt{M}+\sqrt{N}\varepsilon^{-1})\varepsilon^{-4})
\end{equation}
operations.
\end{itemize}
\end{lemma}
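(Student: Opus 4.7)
The plan is to combine the per-iteration cost of Algorithm~\ref{alg:HU} (given by Lemma~\ref{lem:MMWU-complexity}) with a count of the binary-search iterations in Algorithm~\ref{alg:Binary.search.HU}, noting that the only overhead is a logarithmic factor that is absorbed into the stated $O(\cdot)$ (up to polylog factors).

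First, I would observe that Algorithm~\ref{alg:Binary.search.HU} executes $T=\lceil \log_2((\lambda_{\max}-\lambda_{\min})/\varepsilon)\rceil$ outer iterations. At each iteration $t$, it invokes Algorithm~\ref{alg:HU} to test feasibility of the set $S^{(\lambda_t)}$ defined in \eqref{eq:S.rho}. This feasibility problem is an instance of \eqref{eq:semidefinite.feasibility}: the trace-one constraint is built in, each of the $M$ equality constraints $\tr(A_iX)=b_i$ is encoded as the pair of inequalities $\tr(A_iX)\le b_i$ and $\tr(-A_iX)\le -b_i$, and the objective cut $\tr(CX)\le \lambda_t$ adds one more. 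Thus the feasibility problem seen by Algorithm~\ref{alg:HU} has matrix dimension $N$, at most $2M+1=O(M)$ constraints, and the same per-row sparsity bound $s$ applies to the constraint matrices $\pm A_i$ and $C$.

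Next, I would apply Lemma~\ref{lem:MMWU-complexity} to each inner call. On a classical computer the per-call cost is $\widetilde O((N^\omega+MsN)\varepsilon^{-2})$, and on a quantum computer it is $O(s(\sqrt{M}+\sqrt{N}\varepsilon^{-1})\varepsilon^{-4})$ (the $\sqrt{2M+1}=O(\sqrt{M})$ scaling survives the doubling of constraints). Multiplying by $T=O(\log((\lambda_{\max}-\lambda_{\min})/\varepsilon))$ and absorbing this logarithmic factor into the tilde-$O$ and $O$ notation, respectively, yields the stated total running times. No further analysis is needed since neither the problem dimension $N$ nor the sparsity parameter $s$ changes across iterations.

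The only mild subtlety, rather than a genuine obstacle, is to justify that the assumption $-I_N\preceq A_i\preceq I_N$ in Lemma~\ref{lem:MMWU-complexity} transfers to the enlarged constraint family; this is immediate because the lemma's hypotheses (stated as part of the outer Algorithm~\ref{alg:Binary.search.HU}) already impose $-I_N\preceq C,\pm A_i\preceq I_N$. Consequently each call of Algorithm~\ref{alg:HU} meets the preconditions of Lemma~\ref{lem:MMWU-complexity}, and the complexity bound follows from this per-call count times $T$.
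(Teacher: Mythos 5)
Your argument is correct and matches the paper's route: the paper states this lemma as a direct consequence of Lemma~\ref{lem:MMWU-complexity}, i.e., each binary-search step invokes Algorithm~\ref{alg:HU} on a feasibility instance of dimension $N$ with $O(M)$ constraints (the equality constraints split into pairs of inequalities plus the objective cut) and sparsity $s$, and the $T=O(\log((\lambda_{\max}-\lambda_{\min})/\varepsilon))$ outer iterations contribute only a logarithmic factor absorbed into the stated bounds. Nothing further is needed.
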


\section{Quantum algorithms for solving sum-of-squares relaxations of polynomial optimization}
\label{sec:solve.sos.relax}
\subsection{Unconstrained polynomial optimization}
\label{sec:uncons.pop}
We now turn to how to actually solve the sums-of-squares relaxations of polynomial optimization problems using quantum methods.
To build up to our complexity and accuracy guarantees, we start with the simplest setting—unconstrained polynomial optimization, where the structure of the relaxation and its SDP formulation can be more easily analysed.
Let $f=\sum_{\alpha\in\N^n}f_\alpha x^\alpha\in\R[x]$, and consider the problem
\begin{equation}\label{eq:uncons.pop}
f^\star=\inf\limits_{x\in \R^n} f(x)\,,
\end{equation}
which seeks the global minimum of $f$.

The sum-of-squares (SOS) relaxation of order $k$ for this problem is formulated as
\begin{equation}\label{eq:sos.relaxation}
\begin{array}{rl}
\lambda_k=\sup\limits_{\lambda\in\R}&\lambda\\
\text{s.t.}& f-\lambda\in \Sigma[x]_k\,,
\end{array}
\end{equation}
where $\Sigma[x]_k$ denotes the set of SOS polynomials of degree at most $2k$.
This relaxation satisfies the chain of inequalities $\lambda_k\le\lambda_{k+1}\le f^\star$.

The SOS relaxation \eqref{eq:sos.relaxation} can be equivalently reformulated as the following semidefinite program (SDP):
\begin{equation}\label{eq:sdp.sos.relaxation}
\begin{array}{rl}
\lambda_k=\sup\limits_{\lambda, G}&\lambda\\
\text{s.t.}&\lambda\in\R\,,\, G\succeq 0\,,\, f-\lambda = v_k^\top Gv_k\,,
\end{array}
\end{equation}
where $v_k$ is the vector of monomials up to degree $k$.

The dual of \eqref{eq:sdp.sos.relaxation} is given by
\begin{equation}\label{eq:sdp.mom.relaxation}
\begin{array}{rl}
\tau_k=\min\limits_y& L_y(f)\\
\text{s.t.}& y=(y_\alpha)_{\alpha\in\N^n_{2k}}\subset \R\,,\\
&M_k(y)\succeq 0\,,\,y_0=1\,,\\
\end{array}
\end{equation}
where $M_k(y)$ is the moment matrix of order $k$ associated with $y$, and $L_y$ is the Riesz linear functional on $\R[x]_{2k}$.

According to Lemma \ref{lem:slater.cond}, $\lambda_k=\tau_k$, and problem \eqref{eq:sdp.sos.relaxation} admits an optimal solution.

We make the following coefficient scaling assumption on the objective polynomial $f$ of the polynomial optimization \eqref{eq:uncons.pop}.This will turn out to be crucial to ensure that we can bound the trace of feasible points of the SDP.

\begin{assumption}\label{ass:cond.bounded.trace.1}
There exist $\underline \lambda\in\R$ and positive measure $\mu$ on $\R^n$ such that $\underline \lambda\le \lambda_k$ and 
\begin{equation}
\frac{\int(f-\underline \lambda)d\mu}{\lambda_{\min}(M_k(\mu))}\le 1\,.
\end{equation}
\end{assumption}

\begin{remark}\label{rem:scale.poly.scalar}
If we have a positive lower bound $\delta$ for the smallest eigenvalue of the moment matrix of order $k$ associated with some measure $\mu$, the coefficients of any polynomial $f$ and scalar $\underline{\lambda}\le \lambda_k$ can always be rescaled to produce a new polynomial $\tilde f$ and new scalar $\underline{\tilde \lambda}$ that satisfy Assumption \ref{ass:cond.bounded.trace.1}.
To see this, let $f\in\R[x]_{2k}$, $\underline{\lambda}\le \lambda_k$ and $0< \delta\le \lambda_{\min}(M_k(\mu))$. 
Set $\tau=\frac{\delta}{\int  (f-\underline{\lambda}) d\mu}$.
Since $f\ge f^\star\ge \lambda_k\ge \underline{\lambda}$, we have $f-\underline{\lambda}\ge 0$, which implies $\tau> 0$.
Define $\tilde f=\tau f$ and $\tilde {\underline{\lambda}}=\tau \underline{\lambda}$.
Now consider the rescaled sum-of-squares relaxation. By definition:
\begin{equation}
\tilde \lambda_k=\sup\{\lambda\in\R\,:\,\exists G\succeq 0\,,\, \tilde f-\lambda = v_k^\top Gv_k\}\,.
\end{equation}
Substituting $\tilde f=\tau f$, we have:
\begin{equation}
\begin{array}{rl}
\tilde \lambda_k=&\sup\{\lambda\in\R\,:\,\exists G\succeq 0\,,\, \tau f-\lambda = v_k^\top Gv_k\}\\[3pt]
=&\sup\{\lambda\in\R\,:\,\exists G\succeq 0\,,\, f-\frac{\lambda}\tau = v_k^\top (\frac{1}{\tau}G)v_k\}\\[3pt]
=&\sup\{\tau\tilde\lambda\in\R\,:\,\exists \tilde G\succeq 0\,,\, f-\tilde\lambda = v_k^\top \tilde Gv_k\}\\[3pt]
=&\tau\sup\{\tilde\lambda\in\R\,:\,\exists \tilde G\succeq 0\,,\, f-\tilde\lambda = v_k^\top \tilde Gv_k\}\\
=&\tau \lambda_k
\,.
\end{array}
\end{equation}
Thus, $\tilde f\in\R[x]_{2k}$, $\tilde {\underline{\lambda}}\le \tau \lambda_k=\tilde \lambda_k$, and
\begin{equation}
\frac{\int (\tilde f-\tilde {\underline{\lambda}}) d\mu}{\lambda_{\min}(M_k(\mu))}\le\frac{\int (\tilde f-\tilde {\underline{\lambda}}) d\mu}{\delta}= \frac{\tau \int (f-\underline{\lambda}) d\mu}{\delta} = 1\,.
\end{equation}
In Appendix \ref{sec:lower.bound.eigenval.moment.mat}, we provide an explicit value of $\delta$ when $\mu$ is the Lebesgue measure on $[0,1]^n$ (see Proposition \ref{prop:lower.bound.eigen.val}).
\end{remark}

The following lemma states that, under the coefficient scaling Assumption \ref{ass:cond.bounded.trace.1}, the trace of the Gram matrix that solves the sum-of-squares relaxation is upper bounded by one:

\begin{lemma}\label{lem:cond.bounded.trace.1}
Let $f\in\R[x]_{2k}$, and suppose Assumption \ref{ass:cond.bounded.trace.1} holds.
Let $G^\star$ be an optimal solution to the semidefinite program \eqref{eq:sdp.sos.relaxation}.
Then, $\tr(G^\star)\le 1$.
\end{lemma}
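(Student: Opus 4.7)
The plan is to integrate the Gram decomposition $f - \lambda_k = v_k^\top G^\star v_k$ against the measure $\mu$ provided by Assumption~\ref{ass:cond.bounded.trace.1}, then exploit positive semidefiniteness on both sides to convert a scalar bound into a trace bound.

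First I would observe that, since $(\lambda_k, G^\star)$ is feasible for \eqref{eq:sdp.sos.relaxation}, the identity
\[
f(x) - \lambda_k \;=\; v_k(x)^\top G^\star v_k(x)
\]
holds pointwise on $\R^n$. Integrating against $\mu$ and applying the standard identity $\int v_k^\top G v_k\, d\mu = \tr(G M_k(\mu))$ recalled in the preliminaries yields
\[
\int (f-\lambda_k)\, d\mu \;=\; \tr\bigl(G^\star M_k(\mu)\bigr).
\]

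Next, because $G^\star \succeq 0$ and $M_k(\mu) \succeq \lambda_{\min}(M_k(\mu))\, I$, the cyclic-trace inequality for PSD matrices gives
\[
\tr\bigl(G^\star M_k(\mu)\bigr) \;\ge\; \lambda_{\min}(M_k(\mu))\,\tr(G^\star).
\]
Combining this with the previous equation and using $\underline\lambda \le \lambda_k$ (so $f - \lambda_k \le f - \underline\lambda$ and the integrand remains nonnegative since $v_k^\top G^\star v_k \ge 0$), I obtain
\[
\tr(G^\star) \;\le\; \frac{\int (f-\lambda_k)\,d\mu}{\lambda_{\min}(M_k(\mu))} \;\le\; \frac{\int (f-\underline\lambda)\,d\mu}{\lambda_{\min}(M_k(\mu))} \;\le\; 1,
\]
where the final inequality is exactly Assumption~\ref{ass:cond.bounded.trace.1}.

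I do not anticipate a genuine obstacle here: the argument is essentially a duality-free one-line integration trick. The only subtlety is book-keeping, namely verifying that $\lambda_{\min}(M_k(\mu)) > 0$ so the division is meaningful, and that the integrals are finite; both are implicit in the statement of Assumption~\ref{ass:cond.bounded.trace.1}. The role of Remark~\ref{rem:scale.poly.scalar} is to guarantee that this assumption can actually be arranged by a scalar rescaling of $f$, which is the reason the coefficient-scaling convention appears throughout the rest of the paper.
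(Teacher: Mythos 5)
Your proposal is correct and follows essentially the same argument as the paper: integrate the Gram identity $f-\lambda_k=v_k^\top G^\star v_k$ against $\mu$, use $\tr(G^\star M_k(\mu))\ge\lambda_{\min}(M_k(\mu))\tr(G^\star)$ together with $\underline\lambda\le\lambda_k$, and invoke Assumption~\ref{ass:cond.bounded.trace.1}. The only difference is presentational (the paper states the inequality chain in the reverse order), so there is nothing further to add.
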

\begin{proof}
From the relation $f-\lambda_k=v_k G^\star v_k^\top$, we have
\begin{equation}
\int(f-\underline \lambda)d\mu\ge \int(f- \lambda_k)d\mu= \tr( G^\star M_k(\mu)) \ge \lambda_{\min}(M_k(\mu))\tr(G^\star)\,.
\end{equation}
From property $M_k(\mu)\succeq 0$, it follows that:
\begin{equation}
\tr(G^\star)\le \frac{\int(f-\underline \lambda)d\mu}{\lambda_{\min}(M_k(\mu))}\le 1\,,
\end{equation}
where the final inequality follows from Assumption \ref{ass:cond.bounded.trace.1}. 
Thus, the result is proven.
\end{proof}

To transform the sum-of-squares relaxation \eqref{eq:sdp.sos.relaxation} into the standard SDP form \eqref{eq:sdp}, we introduce the family of symmetric matrices $(B^{(\gamma)})_{\gamma\in\N^n_k}$, defined as follows.

\begin{definition}\label{def:Bgamma}
Let $k \in \mathbb{N}$ and $\gamma \in \N^{n}_{2k}$.  
The \emph{coefficient-matching matrix} $B^{(\gamma)}$ is the symmetric matrix indexed by $\alpha,\beta \in \N^{n}_{k}$, defined entrywise by
\begin{equation}\label{eq:def.B.gamma}
B^{(\gamma)}_{\alpha,\beta} \;:=\;
\begin{cases}
1 & \text{if } \alpha+\beta = \gamma,\\[2pt]
0 & \text{otherwise.}
\end{cases}
\end{equation}
\end{definition}
It is straightforward to show that $B^{(\gamma)}=M_k(z)$ with $z_\alpha=
1$ if $\alpha=\gamma$ and $z_\alpha=0$ else, $M_k(y)=\sum_{\gamma\in\N^n_{2k}} y_\gamma B^{(\gamma)}$, $v_kv_k^\top=\sum_{\gamma\in\N^n_{2k}} x^\gamma B^{(\gamma)}$ and that
$B^{(0)}=\diag(1,0,\dots,0)$.
\begin{example}
For $n=1$ and $k=2$, we have
\begin{equation}
v_k=\begin{bmatrix}
1\\
x\\
x^2
\end{bmatrix}\,,\,v_k^\top=\begin{bmatrix}
1&x&x^2
\end{bmatrix}\,,\,v_kv_k^\top=\begin{bmatrix}
1&x&x^2\\
x&x^2&x^3\\
x^2&x^3&x^4
\end{bmatrix}\,.
\end{equation}
Thus, we can express $v_kv_k^\top$ as a sum of the form
\begin{equation}
v_kv_k^\top= B^{(0)}+xB^{(1)}+x^2B^{(2)}+x^3B^{(3)}+x^4B^{(4)}
\end{equation}
where the matrices $B^{(i)}$  are given by
\begin{equation}
\begin{array}{rl}
&B^{(0)}=\begin{bmatrix}
1&0&0\\
0&0&0\\
0&0&0
\end{bmatrix}\,,\,B^{(1)}=\begin{bmatrix}
0&1&0\\
1&0&0\\
0&0&0
\end{bmatrix}\,,\,B^{(2)}=\begin{bmatrix}
0&0&1\\
0&1&0\\
1&0&0
\end{bmatrix}\,,\\[18pt]
 &B^{(3)}=\begin{bmatrix}
0&0&0\\
0&0&1\\
0&1&0
\end{bmatrix}\,,\,B^{(4)}=\begin{bmatrix}
0&0&0\\
0&0&0\\
0&0&1
\end{bmatrix}\,.
\end{array}
\end{equation}
\end{example}

The following properties of $B^{(\gamma)}$  will be useful later for analyzing the accuracy of the approximate value returned by the binary search using Hamiltonian updates, applied to solve the sum-of-squares relaxation \eqref{eq:sdp.sos.relaxation}:
\begin{lemma}\label{lem:properties.B.gamma}
Let $B^{(\gamma)}$ be defined as in \eqref{eq:def.B.gamma}. The following properties hold:
\begin{enumerate}[(i)]
\item\label{unique.nonzero.per.row} If $B^{(\gamma)}_{\alpha,\beta}=1$, then $ B^{(\gamma)}_{\alpha,\beta'}=0$, for all $\beta'\ne \beta$, and $B^{(\gamma)}_{\alpha,\beta}=1 $ implies $ B^{(\gamma)}_{\alpha',\beta}=0$, for all $\alpha'\ne \alpha$. 
\item\label{unique.one.each.position} If $B^{(\gamma)}_{\alpha,\beta}=1$, then $ B^{(\gamma')}_{\alpha,\beta}=0$, for all $\gamma'\ne \gamma$.
\item\label{square.mat} $B^{(\gamma)2}$ is a diagonal matrix with diagonal entries that are either $1$ or $0$. Specifically, 
\begin{equation}
B^{(\gamma)2}=\diag(0,\dots,0,1,0,\dots,0,1,0,\dots,0)\,.
\end{equation}
\item\label{bound.eigenvalue} $-I\preceq B^{(\gamma)}\preceq I$.
\item\label{trace.evalution} $\tr(B^{(\gamma)}B^{(\gamma')})\begin{cases}
=0 &\text{if }\gamma\ne \gamma'\,,\\
\ge 1&\text{else.}
\end{cases}$
\item\label{linearly.indepen.mat} The matrices $B^{(\gamma)}$, for $\gamma\in\N^n_k$, are linearly independent in $\mathbb S^{\binom{n+k}n}$.
\end{enumerate}
\end{lemma}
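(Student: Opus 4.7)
The plan is to verify all six items directly from the definition $B^{(\gamma)}_{\alpha,\beta} = \mathbf{1}\{\alpha+\beta=\gamma\}$ by elementary combinatorial bookkeeping; no deeper machinery is needed. The single observation that drives almost every property is a uniqueness statement: for fixed $\gamma$ and fixed $\alpha$, the equation $\alpha + \beta = \gamma$ has a unique candidate solution $\beta = \gamma - \alpha$, which contributes to $B^{(\gamma)}$ precisely when $\gamma - \alpha$ lies in $\N^n_k$. Essentially every clause of the lemma is a repackaging of this fact plus symmetry.

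First I would dispose of (i) and (ii) as immediate corollaries of this uniqueness: (i) says each row and column of $B^{(\gamma)}$ has at most one nonzero entry; (ii) says the support patterns of distinct $B^{(\gamma)}, B^{(\gamma')}$ in $\N^n_k \times \N^n_k$ are disjoint, since $\alpha + \beta$ cannot simultaneously equal two distinct multi-indices. For (iii) I would compute $(B^{(\gamma)2})_{\alpha,\alpha'} = \sum_\beta B^{(\gamma)}_{\alpha,\beta} B^{(\gamma)}_{\beta,\alpha'}$ and use (i) to collapse the sum: a nonzero summand forces $\beta = \gamma - \alpha$ and then $\alpha' = \gamma - \beta = \alpha$, so $B^{(\gamma)2}$ is diagonal with $\{0,1\}$ entries. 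Property (iv) then falls out of (iii): since $B^{(\gamma)}$ is symmetric and $B^{(\gamma)2}$ has spectrum in $\{0,1\}$, the eigenvalues of $B^{(\gamma)}$ lie in $\{-1,0,1\}$, hence $-I \preceq B^{(\gamma)} \preceq I$.

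For (v) I would expand $\tr(B^{(\gamma)}B^{(\gamma')}) = \sum_{\alpha,\beta} B^{(\gamma)}_{\alpha,\beta} B^{(\gamma')}_{\alpha,\beta}$ using symmetry and note that a nonzero summand forces $\alpha + \beta$ to equal both $\gamma$ and $\gamma'$, giving orthogonality whenever $\gamma \ne \gamma'$. When $\gamma = \gamma'$ the trace counts pairs $(\alpha,\beta) \in (\N^n_k)^2$ with $\alpha + \beta = \gamma$, and at least one such splitting always exists for $\gamma \in \N^n_{2k}$: since $|\gamma| \le 2k$, one can select $\alpha$ coordinatewise in $[0,\gamma_j]$ so that $|\alpha|, |\gamma - \alpha| \le k$, which is possible exactly because $[\max(0,|\gamma|-k), \min(|\gamma|,k)]$ is nonempty. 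Finally, (vi) is an immediate corollary of (v): if $\sum_\gamma c_\gamma B^{(\gamma)} = 0$, taking the Frobenius inner product with $B^{(\gamma')}$ kills all cross terms by (v) and leaves $c_{\gamma'}\tr(B^{(\gamma')2}) = 0$, forcing $c_{\gamma'} = 0$. There is no serious obstacle in this proof; the only detail warranting explicit justification, as opposed to being read off the definition, is the combinatorial lower bound $\tr(B^{(\gamma)2}) \ge 1$ used in both (v) and (vi).
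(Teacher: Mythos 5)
Your proposal is correct and follows essentially the same route as the paper, which proves (i) and (ii) directly from the uniqueness of $\beta=\gamma-\alpha$ and then simply asserts the implication chains (i) $\Rightarrow$ (iii) $\Rightarrow$ (iv) and (ii) $\Rightarrow$ (v) $\Rightarrow$ (vi). The only difference is that you spell out those implications in detail and, usefully, justify the one point the paper leaves implicit, namely that every $\gamma\in\N^n_{2k}$ admits at least one splitting $\alpha+\beta=\gamma$ with $\alpha,\beta\in\N^n_k$, so that $\tr(B^{(\gamma)2})\ge 1$.
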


\begin{proof}
\eqref{unique.nonzero.per.row} Suppose $B^{(\gamma)}_{\alpha,\beta}=1= B^{(\gamma)}_{\alpha,\beta'}$ for some $\beta'\ne \beta$. Then, $\alpha+\beta=\gamma=\alpha+\beta'$, which implies $\beta=\beta'$, a contradiction.
A similar argument holds for the columns due to the symmetry of $B^{(\gamma)}$.

\eqref{unique.one.each.position} Suppose $B^{(\gamma)}_{\alpha,\beta}=1= B^{(\gamma')}_{\alpha,\beta}$ for some $\gamma'\ne \gamma$. Then, $\gamma=\alpha+\beta=\gamma'$, implying $\gamma=\gamma'$, a contradiction.

The following implications hold:
\eqref{unique.nonzero.per.row} $\Rightarrow$ \eqref{square.mat} $\Rightarrow$ \eqref{bound.eigenvalue}, and \eqref{unique.one.each.position} $\Rightarrow$ \eqref{trace.evalution} $\Rightarrow$ \eqref{linearly.indepen.mat}.
\end{proof}
In addition, we show in Appendix~\ref{sec:block.encoding}
how to explicittly generate block encodings for $B^{(\gamma)}$ by applying $O(n\log(k))$ gates, showing that this is not a bottleneck for the application of the methods.

We now present our first main algorithm, which provides an approximation of the optimal value to the SOS relaxation of order $k$.

\begin{algorithm}\label{alg:uncon.pop}
Unconstrained polynomial optimization
\begin{itemize}
\item Input: $\varepsilon>0$, $\underline \lambda\in\R$, $a\in\R^n$,\\
\phantom{1.1cm} coefficients $(f_\alpha)_{\alpha\in\N^n_{2k}}$ of $f\in\R[x]_{2k}$.
\item Output: $\lambda_k^{(\varepsilon)}\in\R$ and $G_\varepsilon\in\mathbb S^N$.
\end{itemize}
\begin{enumerate}
\item Convert relaxation to standard semidefinite program:
\begin{enumerate}
\item\label{step:index.powers} Let $\N^n_{2k}\backslash \{0\}=\{\gamma_1,\dots,\gamma_{M}\}$, with $M=|\N^n_{2k}\backslash \{0\}|=\binom{n+2k}n-1$.
\item Define $\tilde C=B^{(0)}$, $\tilde A_i=B^{(\gamma_i)}$, $b_i=f_{\gamma_i}$, for $i=1,\dots,M$, and set $\tilde N=\binom{n+k}n$.

\item Define $C=\diag(\tilde C,0)$, $A_i=\diag(\tilde A_i,0)$, for $i=1,\dots,M$, and set $N=\tilde N+1$.
\end{enumerate}
\item Initialize search bounds: Set $\lambda_{\max}=f_0-\underline \lambda$ and $\lambda_{\min}=f_0-f(a)$.
\item Run Algorithm \ref{alg:Binary.search.HU}  (Binary search using Hamiltonian updates)
to get $\underline \lambda_T\in\R$ and $X_\varepsilon=((X_\varepsilon)_{ij})_{i,j=1,\dots,N}\in\mathbb S^N$.

\item Extract approximate results: Set $\lambda_k^{(\varepsilon)}=f_0-\underline \lambda_T$ and $G_\varepsilon=((X_\varepsilon)_{ij})_{i,j=1,\dots,\tilde N}$.
\end{enumerate}
\end{algorithm}

\begin{remark}
In Step \ref{step:index.powers} of Algorithm \ref{alg:uncon.pop}, elements in $\N^n_{2k}\backslash \{0\}$ can be indexed using lexicographic order.
\end{remark}

The following lemma demonstrates that SOS relaxations can be reformulated as standard-form SDPs satisfying the bounded-trace property.

\begin{lemma}\label{lem:equi.sdp.relax.uncons}
Let $f\in\R[x]$, and suppose that Assumption \ref{ass:cond.bounded.trace.1} holds.
Let $\tilde C$, $\tilde A_i$, for $i=1,\dots,M$, and $\tilde N$ be generated by Algorithm \ref{alg:uncon.pop}.
Then the problem in \eqref{eq:sdp.sos.relaxation} is equivalent to:
\begin{equation}\label{eq:sdp.bounded.trace}
\begin{array}{rl}
f_0-\lambda_k=\inf\limits_{ G}&\tr( \tilde C G)\\
\text{s.t.}&G\in\mathbb S^{\tilde N}_+\,,\,\tr(G)\le 1\,,\\
&\tr( \tilde A_i G) =b_i\,,\,i=1,\dots,M\,.
\end{array}
\end{equation}
Moreover, this is further equivalent to the following form, where the trace constraint is removed:
\begin{equation}\label{eq:sdp.remove.bounded.trace}
\begin{array}{rl}
f_0-\lambda_k=\inf\limits_{ G}&\tr( \tilde CG)\\
\text{s.t.}&G\in\mathbb S^{\tilde N}_+\,,\\
&\tr( \tilde A_i G) =b_i\,,\,i=1,\dots,M\,.
\end{array}
\end{equation}
\end{lemma}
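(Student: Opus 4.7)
The plan is to translate the polynomial identity $f-\lambda=v_k^\top G v_k$ into linear constraints on $G$ via coefficient matching, and then use the trace bound from Lemma~\ref{lem:cond.bounded.trace.1} to argue that inserting the constraint $\tr(G)\le 1$ does not alter the optimal value. Concretely, I would first recall from the construction of the coefficient-matching matrices that $v_k v_k^\top=\sum_{\gamma\in\N^n_{2k}} x^\gamma B^{(\gamma)}$, so
\[
v_k^\top G v_k \;=\; \tr\!\bigl(G\, v_k v_k^\top\bigr) \;=\; \sum_{\gamma\in\N^n_{2k}} x^\gamma \,\tr\!\bigl(B^{(\gamma)} G\bigr).
\]
Matching the coefficient of each monomial $x^\gamma$ in $f-\lambda=v_k^\top G v_k$, and using $\tilde C=B^{(0)}$, $\tilde A_i=B^{(\gamma_i)}$, $b_i=f_{\gamma_i}$ from Algorithm~\ref{alg:uncon.pop}, the constant term gives $f_0-\lambda=\tr(\tilde C G)$, and the remaining terms give $\tr(\tilde A_i G)=b_i$ for $i=1,\dots,M$. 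By Lemma~\ref{lem:properties.B.gamma}\eqref{linearly.indepen.mat}, this equivalence of the polynomial identity with the system of trace constraints is lossless.

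Substituting $\lambda=f_0-\tr(\tilde C G)$ into the objective of \eqref{eq:sdp.sos.relaxation} and using that $\sup\lambda=f_0-\inf\tr(\tilde C G)$, I obtain
\[
f_0-\lambda_k \;=\; \inf\bigl\{\tr(\tilde C G)\,:\,G\in\mathbb S^{\tilde N}_+,\;\tr(\tilde A_i G)=b_i,\,i=1,\dots,M\bigr\},
\]
which is exactly \eqref{eq:sdp.remove.bounded.trace}. Note that in the unconstrained case $S(g,h)=\R^n$ has nonempty interior, so Lemma~\ref{lem:slater.cond} ensures that \eqref{eq:sdp.sos.relaxation} admits an optimal solution $G^\star$; hence the infimum above is attained.

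Finally, to pass from \eqref{eq:sdp.remove.bounded.trace} to \eqref{eq:sdp.bounded.trace}, I invoke Lemma~\ref{lem:cond.bounded.trace.1}: under Assumption~\ref{ass:cond.bounded.trace.1}, any optimal $G^\star$ of \eqref{eq:sdp.sos.relaxation} (equivalently, of \eqref{eq:sdp.remove.bounded.trace}) satisfies $\tr(G^\star)\le 1$. Adding the constraint $\tr(G)\le 1$ to \eqref{eq:sdp.remove.bounded.trace} can only shrink the feasible set and thus weakly increase the infimum, while the existence of an optimal $G^\star$ already satisfying this constraint shows that the infimum cannot strictly increase. Hence \eqref{eq:sdp.bounded.trace} and \eqref{eq:sdp.remove.bounded.trace} share the same optimal value $f_0-\lambda_k$.

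The only subtle step is the last one, where I need the existence of an optimal solution of \eqref{eq:sdp.remove.bounded.trace} (not merely a feasible sequence with objective tending to the infimum) to justify that the added trace bound is slack-preserving; this is supplied by the Slater-type argument via Lemma~\ref{lem:slater.cond}. The coefficient-matching step is routine once the expansion of $v_k v_k^\top$ in the $B^{(\gamma)}$ basis is established, so no significant obstacle arises there.
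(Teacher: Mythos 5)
Your proof is correct and follows essentially the same route as the paper's: coefficient matching via the expansion $v_k v_k^\top=\sum_\gamma x^\gamma B^{(\gamma)}$ to convert the SOS relaxation into the standard-form SDP, then invoking Lemma~\ref{lem:cond.bounded.trace.1} to show the trace bound $\tr(G)\le1$ is automatically satisfied at any optimizer. The only structural difference is ordering: the paper first inserts the trace constraint into \eqref{eq:sdp.sos.relaxation} (justified by Lemma~\ref{lem:cond.bounded.trace.1}) and then does the coefficient matching, whereas you do the coefficient matching first and append the trace constraint last; both orders work equally well. Two small remarks: (a) you are right that existence of an optimizer is needed to insert the trace constraint without changing the value, and your appeal to Lemma~\ref{lem:slater.cond} (via $S(g,h)=\R^n$ having nonempty interior) is exactly how the paper justifies this earlier in the section; (b) the citation of Lemma~\ref{lem:properties.B.gamma}\eqref{linearly.indepen.mat} is not quite what makes the coefficient matching ``lossless'' — what you actually need is linear independence of the monomial basis $\{x^\gamma\}$, not of the matrices $B^{(\gamma)}$ — but the underlying step is correct regardless.
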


\begin{proof}
By Lemma \ref{lem:cond.bounded.trace.1}, the problem in  \eqref{eq:sdp.sos.relaxation} is equivalent to:
\begin{equation}\label{eq:sdp.sos.relaxation.bounded.trace}
\begin{array}{rl}
\lambda_k=\sup\limits_{\lambda, G}&\lambda\\
\text{s.t.}&\lambda\in\R\,,\, G\succeq 0\,,\,\tr(G)\le 1\,,\, f-\lambda = v_k^\top Gv_k\,.
\end{array}
\end{equation}
To express this in standard SDP form, let $G=(G_{\alpha,\beta})_{\alpha,\beta\in\N^n_k}$. Following the steps in SOS testing, we have:
\begin{equation}
f-\lambda=v_k^\top Gv_k \\
\Leftrightarrow   \begin{cases}
f_0-\lambda=G_{0,0}=\tr( B^{(0)} G)\,,\\
f_\gamma=\sum\limits_{
\arraycolsep=1.4pt\def\arraystretch{.7}
\begin{array}{cc}
\scriptstyle \alpha,\beta\in\N^n_k\\
\scriptstyle \alpha+\beta=\gamma
\end{array}
} G_{\alpha,\beta}=\tr( B^{(\gamma)} G) \,,\,\gamma\in \N^n_{2k}\backslash \{0\}\,.
\end{cases}
\end{equation}
Rewriting, we find:
\begin{equation}
\begin{cases}
\lambda=f_0-\tr( B^{(0)} G)\,,\\
f_{\gamma_i}=\tr( B^{(\gamma_i)} G) \,,\,i=1,\dots,M\,,
\end{cases}
\end{equation}
which leads to:
\begin{equation}
\begin{cases}
\lambda=f_0-\tr( \tilde C G)\,,\\
b_i=\tr( \tilde A_i G) \,,\,i=1,\dots,M\,.
\end{cases}
\end{equation}
Thus, \eqref{eq:sdp.sos.relaxation.bounded.trace} can be reformulated as:
\begin{equation}\label{eq:sdp.relaxation.bounded.trace}
\begin{array}{rl}
\lambda_k=\sup\limits_{ G}&f_0-\tr( \tilde C G)\\
\text{s.t.}&G\in\mathbb S^{\tilde N}_+\,,\,\tr(G)\le 1\,,\\
&\tr( \tilde A_i G) =b_i\,,\,i=1,\dots,{ M}\,.
\end{array}
\end{equation}
This is equivalent to \eqref{eq:sdp.bounded.trace}.

Finally, we observe that removing the trace constraint
$\tr(G)\le 1$
does not affect the optimal value or solutions of \eqref{eq:sdp.sos.relaxation.bounded.trace}, as shown in the proof of Lemma \ref{lem:cond.bounded.trace.1}.
Hence, the problem in \eqref{eq:sdp.sos.relaxation} is also equivalent to \eqref{eq:sdp.remove.bounded.trace}.
Thus, the result follows.
\end{proof}

The next lemma shows that an SDP with the bounded-trace property can be transformed into an equivalent SDP with the constant-trace property by appropriately lifting its matrix variable.

\begin{lemma}\label{lem:equivalent.sol.uncons}
Let $f\in\R[x]_{2k}$ and suppose that Assumption \ref{ass:cond.bounded.trace.1} holds.
Define $\lambda^\star=f_0-\lambda_k$.
Let $C$, $b_i$, $A_i$  (for $i=1,\dots,M$), $\tilde N$, and $N$ be generated by Algorithm \ref{alg:uncon.pop}.
Then problem \eqref{eq:sdp} holds, and it is equivalent to problem \eqref{eq:sdp.bounded.trace} in the following sense:
\begin{itemize}
\item If $G^\star$ is an optimal solution to  \eqref{eq:sdp.bounded.trace}, then the matrix $\diag(G^\star, c)$, with $c=1-\tr(G^\star)$, is an optimal solution to \eqref{eq:sdp}.
\item Conversely, if $X^\star=(X^\star_{ij})_{i,j=1,\dots,N}$ is an optimal solution to \eqref{eq:sdp}, then the leading principal submatrix of $X^\star$, $\tilde X^\star=(X^\star_{ij})_{i,j=1,\dots,\tilde N}$, is an optimal solution to \eqref{eq:sdp.bounded.trace}.
\end{itemize}
Furthermore, the following equality holds:
\begin{equation}\label{eq:remove.trace.one.1.2}
\lambda^\star=\inf\{\tr( C X)\,:\,X\in\mathbb S^{N}_+\,,\,\tr( A_i X) =b_i\,,\,i=1,\dots,M\}\,.
\end{equation}
\end{lemma}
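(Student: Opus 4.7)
The plan is to exploit the block structure imposed by Algorithm \ref{alg:uncon.pop}: since $C=\diag(\tilde C,0)$ and $A_i=\diag(\tilde A_i,0)$ have a zero scalar block in the last position, both the objective $\tr(CX)$ and the constraints $\tr(A_iX)=b_i$ depend only on the leading $\tilde N\times\tilde N$ principal submatrix of $X\in\mathbb S^N$. So I will move back and forth between \eqref{eq:sdp} and \eqref{eq:sdp.bounded.trace} by a ``lift-and-restrict'' pair: lift a feasible $G$ of \eqref{eq:sdp.bounded.trace} to $\diag(G,1-\tr(G))$, and restrict a feasible $X$ of \eqref{eq:sdp} to its leading principal submatrix $\tilde X$.

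For the lifting direction, let $G^\star$ be optimal for \eqref{eq:sdp.bounded.trace}. Assumption \ref{ass:cond.bounded.trace.1} together with Lemma \ref{lem:cond.bounded.trace.1} (applied via Lemma \ref{lem:equi.sdp.relax.uncons}) guarantees $\tr(G^\star)\le 1$, so $c:=1-\tr(G^\star)\ge 0$ and $X^\star:=\diag(G^\star,c)$ is positive semidefinite with $\tr(X^\star)=1$. The block-diagonal zero padding yields $\tr(A_iX^\star)=\tr(\tilde A_iG^\star)=b_i$ and $\tr(CX^\star)=\tr(\tilde CG^\star)=f_0-\lambda_k$, so $X^\star$ is feasible for \eqref{eq:sdp} and attains value $f_0-\lambda_k$. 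Conversely, if $X\in\mathbb S^N_+$ is any feasible point of \eqref{eq:sdp}, its leading principal submatrix $\tilde X$ is positive semidefinite, satisfies $\tr(\tilde X)\le \tr(X)=1$ (because the omitted diagonal entry $X_{NN}\ge 0$), and again by the block-diagonal zero padding $\tr(\tilde A_i\tilde X)=\tr(A_iX)=b_i$ and $\tr(\tilde C\tilde X)=\tr(CX)$. Hence $\tilde X$ is feasible for \eqref{eq:sdp.bounded.trace} with the same objective. This proves both bullets of the equivalence and, in particular, that the optimal value of \eqref{eq:sdp} equals $f_0-\lambda_k=\lambda^\star$.

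For the final identity \eqref{eq:remove.trace.one.1.2} I will show that dropping the constraint $\tr(X)=1$ does not change the infimum. The inequality $\lambda^\star\ge$(RHS) is immediate since the RHS is a relaxation of \eqref{eq:sdp}. For the reverse inequality, take any $X\in\mathbb S^N_+$ with $\tr(A_iX)=b_i$; the restriction $\tilde X$ (leading $\tilde N\times\tilde N$ submatrix) is positive semidefinite and feasible for the trace-free formulation \eqref{eq:sdp.remove.bounded.trace}, whose optimal value equals $f_0-\lambda_k=\lambda^\star$ by Lemma \ref{lem:equi.sdp.relax.uncons}. Since the zero padding gives $\tr(CX)=\tr(\tilde C\tilde X)\ge \lambda^\star$, the RHS of \eqref{eq:remove.trace.one.1.2} is at least $\lambda^\star$, and the two sides coincide.

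There is no real obstacle here, only bookkeeping: the only point that needs care is checking that the trace surplus $c=1-\tr(G^\star)$ is nonnegative, which is exactly what Assumption \ref{ass:cond.bounded.trace.1} buys us through Lemma \ref{lem:cond.bounded.trace.1}, and that restricting a PSD matrix to a principal submatrix preserves positive semidefiniteness and can only decrease the trace. Everything else is a mechanical consequence of the block-diagonal zero padding built into Algorithm \ref{alg:uncon.pop}.
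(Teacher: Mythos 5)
Your proposal is correct and follows essentially the same lift-and-restrict argument as the paper: pad an optimal $G^\star$ with the trace surplus $1-\tr(G^\star)\ge 0$ (guaranteed by Assumption \ref{ass:cond.bounded.trace.1} via Lemma \ref{lem:cond.bounded.trace.1}), and restrict feasible points of \eqref{eq:sdp} to their leading $\tilde N\times\tilde N$ block, using the zero padding of $C$ and $A_i$. The only cosmetic difference is that you spell out both inequalities for \eqref{eq:remove.trace.one.1.2}, where the paper simply invokes the last statement of Lemma \ref{lem:equi.sdp.relax.uncons}.
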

\begin{proof}
We first prove that \eqref{eq:sdp} holds. Let $X$ be a feasible solution to \eqref{eq:sdp}, and define $\tilde X=(X_{ij})_{i,j=1,\dots,\tilde N}$. Since $X\succeq 0$, , it follows that $\tilde X\succeq 0$. Additionally, because $\tr(X)=1$, we have $\tr(\tilde X)\le 1$.

From Lemma \ref{lem:equi.sdp.relax.uncons}, \eqref{eq:sdp.bounded.trace} holds.
Moreover, since $A_i=\diag(\tilde A_i, 0)$, we have $b_i=\tr( A_i X)=\tr( \tilde A_i \tilde X)$, which implies that $\tilde X$ is a feasible solution to \eqref{eq:sdp.bounded.trace}. 
Thus, $\lambda^\star=f_0-\lambda_k\le \tr( \tilde C \tilde X)=\tr( C X)$, using $C=\diag(\tilde C, 0)$.

Now, let $G^\star$ be an optimal solution to problem \eqref{eq:sdp.bounded.trace}.
Define $Y=\diag(G^\star, c)$, where $c=1-\tr(G^\star)$.
Since $\tr(G^\star)\le 1$, it follows that $c\ge 0$, implying $Y\in\mathbb S^N_+$.
Additionally, we have $\tr(Y)=\tr(G^\star)+c=1$, and and for all $i=1,\dots,M$,
$b_i=\tr(\tilde A_i G^\star)=\tr( A Y)$.
This shows that $Y$ is a feasible solution to \eqref{eq:sdp}.
Furthermore, $\lambda^\star=f_0-\lambda_k=\tr( \tilde C G^\star)=\tr( C Y)$ since $C=\diag(\tilde C, 0)$ and $A_i=\diag(\tilde A_i, 0)$, for all $i=1,\dots,M$.
Thus, \eqref{eq:sdp} holds, and $Y$ is an optimal solution to \eqref{eq:sdp}.

Next, let $X^\star=(X^\star_{ij})_{i,j=1,\dots,N}$ be an optimal solution to \eqref{eq:sdp}, and define $\tilde X^\star=(X^\star_{ij})_{i,j=1,\dots,\tilde N}$.
Then $\tilde X^\star\in \mathbb S^{\tilde N}$, and $\tr(\tilde X^\star)=\tr(X^\star)-X^\star_{NN}=1-X^\star_{NN}\le 1$.
Additionally, for all $i=1,\dots,M$,  $b_i=\tr(A_i X^\star)=\tr(\tilde A_i \tilde X^\star)$, and $f_0-\lambda_k=\lambda^\star=\tr(C X^\star)=\tr(\tilde C \tilde X^\star)$, using $C=\diag(\tilde C, 0)$ and $A_i=\diag(\tilde A_i, 0)$, $i=1,\dots,M$.
Therefore, $\tilde X^\star$ is an optimal solution to \eqref{eq:sdp.bounded.trace}.

By the last statement of Lemma \ref{lem:equi.sdp.relax.uncons}, the equality \eqref{eq:remove.trace.one.1.2} holds.
Hence the result follows.
\end{proof}

The following lemma establishes key properties of the input data in the standard form of SDPs. These properties will later enable the application of the error bound for Hamiltonian updates.

\begin{lemma}\label{lem:properties.data.uncons}
Let $\lambda^\star=f_0-\lambda_k$, and let $C$, $b_i$, $A_i$, for $i=1,\dots,M$, $N$, $\lambda_{\max}$, $\lambda_{\min}$ be generated by Algorithm \ref{alg:uncon.pop}.
The following properties hold:
\begin{enumerate}[(i)]
\item \label{bound.minimum} $\lambda^\star\in [\lambda_{\min},\lambda_{\max}]$.
\item \label{bound.data.mat} $-I_N\preceq C\preceq I_N$, $-I_N\preceq A_i\preceq I_N$, for all $i=1,\dots,M$.
\item\label{sparsity} The maximum number of nonzero entries in each row of $C$ and $A_i$, for $i=1,\dots,M$, is $s=1$.
\end{enumerate}
\end{lemma}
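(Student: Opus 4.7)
The three parts of Lemma~\ref{lem:properties.data.uncons} are essentially independent bookkeeping facts about the objects constructed in Algorithm~\ref{alg:uncon.pop}, so I would prove them one at a time.

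For part~\eqref{bound.minimum} I would simply unfold the definitions. Since $\lambda_{\max}=f_0-\underline\lambda$ and $\lambda^\star=f_0-\lambda_k$, the inequality $\lambda^\star\le\lambda_{\max}$ is the same as $\underline\lambda\le\lambda_k$, which is part of Assumption~\ref{ass:cond.bounded.trace.1}. Similarly, $\lambda_{\min}=f_0-f(a)$ and $\lambda^\star=f_0-\lambda_k$, so $\lambda_{\min}\le\lambda^\star$ reduces to $\lambda_k\le f(a)$, which follows from the chain $\lambda_k\le f^\star\le f(a)$ valid for any $a\in\R^n$ in the unconstrained case~\eqref{eq:uncons.pop}.

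For part~\eqref{bound.data.mat} the plan is to reduce the statement to a property of the coefficient-matching matrices $B^{(\gamma)}$ already established in Lemma~\ref{lem:properties.B.gamma}\eqref{bound.eigenvalue}. Since block-diagonal augmentation by a zero block preserves the spectrum (adding only the eigenvalue $0$), I would argue that for any symmetric $B$ with $-I\preceq B\preceq I$, the matrix $\diag(B,0)$ satisfies $-I_N\preceq\diag(B,0)\preceq I_N$. Applying this with $B=B^{(0)}$ and $B=B^{(\gamma_i)}$ gives the claim for $C$ and each $A_i$.

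For part~\eqref{sparsity}, the key observation is Lemma~\ref{lem:properties.B.gamma}\eqref{unique.nonzero.per.row}, which says that in each row of $B^{(\gamma)}$ at most one entry equals $1$ and all others vanish; hence each row has at most one nonzero entry. Padding by the zero row/column in the block-diagonal construction $C=\diag(\tilde C,0)$, $A_i=\diag(\tilde A_i,0)$ does not introduce any new nonzeros, so the bound $s=1$ carries over to $C$ and each $A_i$.

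None of the three parts looks to be a genuine obstacle: each is a direct consequence of Assumption~\ref{ass:cond.bounded.trace.1}, the algorithm's construction, or Lemma~\ref{lem:properties.B.gamma}. The only minor care needed is to confirm that the zero block contributes neither to the spectrum bounds in~\eqref{bound.data.mat} nor to the row-sparsity count in~\eqref{sparsity}, which is immediate from block-diagonal structure.
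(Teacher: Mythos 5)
Your proposal is correct and follows essentially the same route as the paper's proof: part (i) unfolds the definitions of $\lambda_{\min},\lambda_{\max},\lambda^\star$ and invokes the chain $\underline\lambda\le\lambda_k\le f^\star\le f(a)$; parts (ii) and (iii) reduce to Lemma~\ref{lem:properties.B.gamma} items~\eqref{bound.eigenvalue} and~\eqref{unique.nonzero.per.row} respectively, noting that block-diagonal padding by a zero block preserves both the operator-norm bound (adding only the eigenvalue $0$) and the row-sparsity count. The only cosmetic difference is that you spell out the zero-block argument more explicitly than the paper does, which makes the write-up a touch clearer without changing the logic.
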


\begin{proof}
\eqref{bound.minimum} Since $\underline{\lambda}\le \lambda_k\le f^\star\le f(a)$, it follows that
\begin{equation}
\lambda_{\min}=f_0-f(a)\le f_0-\lambda_k=\lambda^\star\le f_0-\underline{\lambda}=\lambda_{\max}\,.
\end{equation}

\eqref{bound.data.mat} By construction, $C=\diag(\tilde C,0)=\diag(B^{(0)},0)$,  and $A_i=\diag(\tilde A_i,0)=\diag(B^{(\gamma_i)},0)$, for $i=1,\dots,M$.
By Lemma \ref{lem:properties.B.gamma} \eqref{bound.eigenvalue}, it follows that $-I_N\preceq C\preceq I_N$, $-I_N\preceq A_i\preceq I_N$.

\eqref{sparsity} By  Lemma \ref{lem:properties.B.gamma} \eqref{unique.nonzero.per.row}, each row of $C$ and $A_i$, for $i=1,\dots,M$, contains at most $s=1$ nonzero entry.
\end{proof}

The lemma below characterizes the correspondence between the optimal solutions for the moment relaxation and the equivalent standard-form semidefinite program.

\begin{lemma}\label{lem:equivalent.mom.relax.0}
Let $C$, $A_i,b_i$, for  $i=1,\dots,M$, and $N$ be generated by Algorithm \ref{alg:uncon.pop}.
Assume that strong duality holds for problems \eqref{eq:sdp.sos.relaxation}-\eqref{eq:sdp.mom.relaxation},as well as for problem \eqref{eq:remove.trace.one.1.2} and its dual \eqref{eq:dual.no.tr.cons}.
Let $\xi\in\R^M$, and define $y=(y_{\gamma})_{\gamma\in\N^n_{2k}}$ such that $y_{0}=1$ and $y_{\gamma_i}=-\xi_i$, for $i=1,\dots,M$.
Then $y$ is an optimal solution to the moment relaxation \eqref{eq:sdp.mom.relaxation} if and only if $\xi$ is an optimal solution to problem \eqref{eq:dual.no.tr.cons}.
\end{lemma}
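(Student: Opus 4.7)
The plan is to exploit the fact that the coefficient-matching matrices $\{B^{(\gamma)}\}_{\gamma\in\N^n_{2k}}$ form a linearly independent basis (Lemma \ref{lem:properties.B.gamma} \eqref{linearly.indepen.mat}) adapted both to the moment matrix and to the constraints of the standard-form SDP, so that the change of variables $\xi_i = -y_{\gamma_i}$ turns one problem into the other up to an additive constant $f_0$ in the objective. First I would translate feasibility: since $M_k(y) = \sum_{\gamma\in\N^n_{2k}} y_\gamma B^{(\gamma)}$, the assignment $y_0 = 1$, $y_{\gamma_i} = -\xi_i$ gives
\begin{equation}
M_k(y) = B^{(0)} - \sum_{i=1}^M \xi_i B^{(\gamma_i)}.
\end{equation}
Recalling that Algorithm \ref{alg:uncon.pop} sets $C = \diag(B^{(0)},0)$ and $A_i = \diag(B^{(\gamma_i)},0)$, we obtain
\begin{equation}
C - \sum_{i=1}^M \xi_i A_i = \diag\!\Bigl(M_k(y),\,0\Bigr),
\end{equation}
so the constraint $M_k(y) \succeq 0$ is equivalent to $C - \sum_i \xi_i A_i \succeq 0$, while the normalization $y_0 = 1$ is automatic.

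Next I would match the objectives. Using $b_i = f_{\gamma_i}$ and the definition of the Riesz functional,
\begin{equation}
L_y(f) = f_0 y_0 + \sum_{i=1}^M f_{\gamma_i} y_{\gamma_i} = f_0 - \sum_{i=1}^M b_i \xi_i = f_0 - b^\top \xi.
\end{equation}
Consequently, minimizing $L_y(f)$ over $y$ subject to the moment constraints coincides, after the substitution, with maximizing $b^\top \xi$ subject to the dual PSD constraint, with the two optimal values related by $\tau_k = f_0 - \sup_\xi b^\top \xi$. Combining this with the strong-duality assumptions gives $\tau_k = \lambda_k$ and, more importantly, identifies the optimal value of \eqref{eq:dual.no.tr.cons} as $f_0 - \tau_k = \lambda^\star$.

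The equivalence of optimal solutions then follows in both directions: if $y^\star$ is optimal for \eqref{eq:sdp.mom.relaxation}, the induced $\xi^\star_i = -y^\star_{\gamma_i}$ is feasible for \eqref{eq:dual.no.tr.cons} with $b^\top \xi^\star = f_0 - L_{y^\star}(f) = f_0 - \tau_k = \lambda^\star$, hence optimal; conversely, given an optimal $\xi^\star$, setting $y^\star_0 = 1$, $y^\star_{\gamma_i} = -\xi^\star_i$ gives a feasible $y^\star$ with $L_{y^\star}(f) = f_0 - b^\top \xi^\star = \tau_k$. No step is genuinely hard here; the only subtlety worth spelling out is verifying that the zero block appended in $C$ and $A_i$ does not enlarge the feasible set of $\xi$, which is immediate since the Schur complement of the zero block forces no additional condition on $\xi$ beyond $M_k(y)\succeq 0$.
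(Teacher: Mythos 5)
Your proposal is correct and follows essentially the same route as the paper's proof: identify $M_k(y)=B^{(0)}-\sum_{i=1}^M\xi_i B^{(\gamma_i)}$ so that $\diag(M_k(y),0)=C-\mathcal A^\top\xi$, match objectives via $L_y(f)=f_0-b^\top\xi$, and conclude the optimality correspondence. The only (harmless) difference is that your bijection-of-feasible-sets phrasing makes the conclusion nearly independent of the strong-duality hypotheses, which the paper instead invokes explicitly to pin down the optimal values in its chain of equivalences.
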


\begin{proof}
By definition of $b,\tilde C,\tilde A_i$, we have $L_y(f)=y_0 f_0+\sum_{i=1}^M y_{\gamma_i}f_{\gamma_i}=f_0-b^\top \xi$ and 
$M_k(y)=y_0 B^{(0)}+\sum_{i=1}^M y_{\gamma_i}B^{(\gamma_i)} = \tilde C-\sum_{i=1}^M \xi_i\tilde A_i$, which implies $\diag(M_k(y),0)= C-\sum_{i=1}^M \xi_i A_i=C-\mathcal A^\top \xi$.
From this, the following equivalences hold:
\begin{equation}
\begin{array}{rl}
&y \text{ is an optimal solution to problem \eqref{eq:sdp.mom.relaxation}}\\[5pt]
\Leftrightarrow&  M_k(y)\succeq 0 \text{ and }L_y(f)=\tau_k=\lambda_k \qquad \text{ (by strong duality)}\\[5pt]
  \Leftrightarrow & C-\mathcal A^\top \xi\succeq 0 \text{ and }b^\top \xi =f_0-\lambda_k=\lambda^\star=\tau^\star \qquad\text{ (by strong duality)}\\[5pt]
  \Leftrightarrow &\xi\text{ is an optimal solution for \eqref{eq:dual.no.tr.cons}.}
\end{array}
\end{equation}
\end{proof}

We now present our first main theorem, which establishes an error bound for the approximation produced by our first main algorithm.
\begin{theorem}\label{theo:accuarcy.approximate.val.uncons}
Let $\varepsilon>0$, $f\in\R[x]_{2k}$, and suppose that Assumption \ref{ass:cond.bounded.trace.1} holds.
Assume strong duality holds for primal-dual semidefinite relaxations \eqref{eq:sdp.sos.relaxation}-\eqref{eq:sdp.mom.relaxation}.
Let $\lambda_k^{(\varepsilon)}$ be the value returned by Algorithm \ref{alg:uncon.pop}.
If $\lambda_k=f^\star$ and polynomial optimization problem \eqref{eq:uncons.pop} has an optimal solution in $\ell_1$-ball $\{x\in\R^n\,:\,\|x\|_{\ell_1}\le r\}$ for some $r\in (0,1)$, then
\begin{equation}
0\le \lambda_k^{(\varepsilon)}-f^\star\le\varepsilon\left(2+\frac{r}{1-r}\right)\,.
\end{equation}
\end{theorem}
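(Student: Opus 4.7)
The plan is to combine the general binary-search accuracy bound (Lemma \ref{lem:convergence.sdp}) with an explicit construction of a dual optimal solution coming from the primal POP minimizer. The algorithmic output $\lambda_k^{(\varepsilon)}=f_0-\underline{\lambda}_T$, where $\underline{\lambda}_T$ is obtained by running Algorithm \ref{alg:Binary.search.HU} on the standard-form SDP with data $C,A_i,b_i,N$ produced in Algorithm \ref{alg:uncon.pop}. By Lemma \ref{lem:equivalent.sol.uncons} this SDP has optimal value $\lambda^\star=f_0-\lambda_k$, satisfies the ``trace-one redundancy'' \eqref{eq:remove.trace.one.1.2}, and by Lemma \ref{lem:properties.data.uncons} its data satisfies $-I_N\preceq C,A_i\preceq I_N$ together with the starting interval $\lambda^\star\in[\lambda_{\min},\lambda_{\max}]$. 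Hence Lemma \ref{lem:convergence.sdp} applies and yields
\begin{equation}
0\le \lambda^\star-\underline{\lambda}_T\le \varepsilon\bigl(2+\|\xi^\star\|_{\ell_1}\bigr),
\end{equation}
for any optimal solution $\xi^\star$ of the dual \eqref{eq:dual.no.tr.cons}. Translating back via $\lambda_k^{(\varepsilon)}=f_0-\underline{\lambda}_T$ and $\lambda_k=f^\star$ gives $0\le\lambda_k^{(\varepsilon)}-f^\star\le\varepsilon(2+\|\xi^\star\|_{\ell_1})$, so the entire theorem reduces to producing a dual optimizer with $\|\xi^\star\|_{\ell_1}\le r/(1-r)$.

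The main (and only nontrivial) step is thus constructing this dual optimizer. Here I will use the exactness hypothesis $\lambda_k=f^\star$ together with the minimizer $x^\star$ of \eqref{eq:uncons.pop} in the $\ell_1$-ball of radius $r$. I will consider the Dirac measure $\delta_{x^\star}$ and let $y^\star=(y_\alpha^\star)_{\alpha\in\N^n_{2k}}$ be its truncated moment sequence, i.e.\ $y^\star_\alpha=(x^\star)^\alpha$. Then $M_k(y^\star)=v_k(x^\star)v_k(x^\star)^\top\succeq 0$ and $y^\star_0=1$, so $y^\star$ is feasible for the moment relaxation \eqref{eq:sdp.mom.relaxation}, with objective $L_{y^\star}(f)=f(x^\star)=f^\star=\lambda_k=\tau_k$ by strong duality; hence $y^\star$ is optimal. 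By Lemma \ref{lem:equivalent.mom.relax.0}, the vector $\xi^\star\in\R^M$ defined by $\xi^\star_i=-y^\star_{\gamma_i}=-(x^\star)^{\gamma_i}$ is then an optimal solution of \eqref{eq:dual.no.tr.cons}.

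It remains to estimate $\|\xi^\star\|_{\ell_1}=\sum_{i=1}^M |(x^\star)^{\gamma_i}|=\sum_{\gamma\in\N^n_{2k}\setminus\{0\}}|x^\star|^\gamma$, where $|x^\star|^\gamma:=\prod_j |x^\star_j|^{\gamma_j}$. Grouping by total degree and applying the multinomial theorem to the nonnegative vector $|x^\star|=(|x^\star_1|,\dots,|x^\star_n|)$ gives
\begin{equation}
\sum_{|\gamma|=j}|x^\star|^\gamma \;\le\; \sum_{|\gamma|=j}\binom{j}{\gamma}|x^\star|^\gamma \;=\; \|x^\star\|_{\ell_1}^{\,j}\;\le\; r^{\,j},
\end{equation}
so summing over $j=1,\dots,2k$ and using $r\in(0,1)$ yields
\begin{equation}
\|\xi^\star\|_{\ell_1}\;\le\;\sum_{j=1}^{2k} r^{j}\;\le\;\sum_{j=1}^{\infty} r^{j}\;=\;\frac{r}{1-r}.
\end{equation}
Inserting this bound into the displayed inequality from the first paragraph completes the proof. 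The only conceptual obstacle is the identification of the dual optimizer with atomic moments; everything else is an application of previously established lemmas and the elementary multinomial estimate.
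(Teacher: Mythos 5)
Your proof is correct and follows essentially the same route as the paper's: reduce to the binary-search accuracy bound of Lemma~\ref{lem:convergence.sdp} (after verifying its hypotheses via Lemmas~\ref{lem:equivalent.sol.uncons} and~\ref{lem:properties.data.uncons}), construct the dual optimizer from the moments of $\delta_{x^\star}$ and identify it via Lemma~\ref{lem:equivalent.mom.relax.0}, then bound its $\ell_1$-norm with the multinomial estimate and the geometric series. This matches the paper's argument step for step, with only cosmetic differences in the ordering of the multinomial inequality.
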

\begin{proof}
Let $\lambda^\star=f_0-\lambda_k$.
Since strong duality holds for problems  \eqref{eq:sdp.sos.relaxation}-\eqref{eq:sdp.mom.relaxation}, it holds for problem \eqref{eq:remove.trace.one.1.2} and its dual \eqref{eq:dual.no.tr.cons} by Lemma \ref{lem:equivalent.sol.uncons}.

Assume $\lambda_k=f^\star$ and polynomial optimization problem \eqref{eq:uncons.pop} has an optimal solution  $x^\star$ satisfying $\|x^\star\|_{\ell_1}\le r$ for some $r\in (0,1)$.
Let $y^\star=(y^\star_\alpha)_{\alpha\in\N^n_{2k}}$ be the truncated moment sequence with respect to the Dirac measure $\delta_{x^\star}$, i.e., $y^\star_\alpha=\int x^\alpha d \delta_{x^\star}=x^{\star \alpha}$.

We claim that $y^\star$ is an optimal solution to problem \eqref{eq:sdp.mom.relaxation}.
First, observe that $y^{\star}_0=(x^{\star})^0=1$ and $M_{k}(y^{\star})=v_k(x^\star)v_k(x^\star)^\top\succeq 0$.
Then $y^\star$ is a feasible solution for problem \eqref{eq:sdp.mom.relaxation}.
Additionally, $L_{y^\star}(f)=\int f d \delta_{x^\star}=f(x^\star)=f^\star=\lambda_k=\tau_k$. 
The last equality follows from strong duality. Therefore, $y^\star$ is an optimal solution to problem \eqref{eq:sdp.mom.relaxation}.
By Lemma \ref{lem:equivalent.mom.relax.0}, this implies $-\tilde y^\star$, with $\tilde y^\star=(y^\star_\gamma)_{\gamma\in\N^n_{2k}\backslash \{0\}}$, is an optimal solution to problem \eqref{eq:dual.no.tr.cons}.

By Lemma \ref{lem:properties.data.uncons}, \eqref{bound.minimum} and \eqref{bound.data.mat}, $I_N\succeq C\succeq -I_N$, $I_N\succeq A_i\succeq -I_N$, for $i=1,\dots,M$, and $\lambda^\star\in [\lambda_{\min},\lambda_{\max}]$. 
By Lemma \ref{lem:equivalent.sol.uncons}, \eqref{eq:remove.trace.one.1.2} holds.
From these, by Lemma \ref{lem:convergence.sdp}, $0\le \lambda^\star -\underline \lambda_T\le \varepsilon(2+\|\tilde y^\star\|_{\ell_1})$.
Recall that $\lambda_k^{(\varepsilon)}=f_0-\underline \lambda_T$. Substituting, we get:
$\lambda^\star -\underline \lambda_T=\lambda_k^{(\varepsilon)}-\lambda_k=\lambda_k^{(\varepsilon)}-f^\star$.
Thus,
\begin{equation}\label{eq:bound.gap.uncons}
0\le \lambda_k^{(\varepsilon)}-f^\star\le\varepsilon\left(2+\|\tilde y^\star\|_{\ell_1}\right)\,.
\end{equation}
Next, we bound the $\ell_1$-norm of $\tilde y^\star$.
By the multinomial theorem, 
\begin{equation}
(x_1+\dots+x_n)^j =\sum_{\arraycolsep=1.4pt\def\arraystretch{.7}
\begin{array}{cc}
\scriptstyle \gamma\in\N^{n}\\
\scriptstyle |\gamma|=j
\end{array}
} \binom{j}{\gamma} x^\gamma\,,\, \forall x\in \R^n\,,\,\forall j\in \N\,.
\end{equation}
Setting $x_i=|x_i^\star|$, we have
\begin{equation}
\begin{array}{rl}
r^j\ge& \|x^\star\|_{\ell_1}^j=(|x_1^\star|+\dots+|x_n^\star|)^j\\[5pt]
=& \sum_{\arraycolsep=1.4pt\def\arraystretch{.7}
\begin{array}{cc}
\scriptstyle \gamma\in\N^{n}\\
\scriptstyle |\gamma|=j
\end{array}
}\binom{j}{\gamma} |x_1^\star|^{\gamma_1}\dots|x_n^\star|^{\gamma_n}\\
\ge& \sum_{\arraycolsep=1.4pt\def\arraystretch{.7}
\begin{array}{cc}
\scriptstyle \gamma\in\N^{n}\\
\scriptstyle |\gamma|=j
\end{array}
} |x^{\star\gamma}| =\sum_{\arraycolsep=1.4pt\def\arraystretch{.7}
\begin{array}{cc}
\scriptstyle \gamma\in\N^{n}\\
\scriptstyle |\gamma|=j
\end{array}
} |y^\star_\gamma|\,.
\end{array}
\end{equation}
The last inequality holds because $\binom{j}{\gamma}\ge 1$ for all $\gamma\in\N^n$ with $|\gamma|=j$.
Thus,
\begin{equation}
\|\tilde y^\star\|_{\ell_1}=\sum_{\gamma\in\N^n_{2k}\backslash \{0\}} |y^\star_\gamma|=\sum_{j=1}^{2k} \sum_{\arraycolsep=1.4pt\def\arraystretch{.7}
\begin{array}{cc}
\scriptstyle \gamma\in\N^{n}\\
\scriptstyle |\gamma|=j
\end{array}
} |y^\star_\gamma|\le \sum_{j=1}^{2k} r^j\le r\sum_{t=0}^\infty r^t=\frac{r}{1-r}\,.
\end{equation}
From this and \eqref{eq:bound.gap.uncons}, the result  follows.

\end{proof}

\begin{lemma}[Complexity]
The runtime complexity of  Algorithm \ref{alg:uncon.pop} is as follows:
\begin{itemize}
\item  On a classical computer, it requires
\begin{equation}
O\left(\left[\binom{n+2k}n \binom{n+k}n+\binom{n+k}n^\omega\right]\frac{1}{\varepsilon^2}\right)
\end{equation}
operations, where $\omega\approx 2.373$.
\item On a quantum computer, it requires
\begin{equation}
O\left(\left[\binom{n+2k}n^{1/2}+\frac{1}\varepsilon\binom{n+k}n^{1/2}\ \right]\frac{1}{\varepsilon^4}\right)
\end{equation}
operations.
\end{itemize}
\end{lemma}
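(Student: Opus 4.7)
The plan is to reduce the complexity analysis of Algorithm~\ref{alg:uncon.pop} to a direct invocation of Lemma~\ref{lem:complex.binary.search}, applied to the standard-form SDP built in Step~1 of the algorithm, with the parameters $(M, N, s)$ read off from the construction.

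First, I would record the size parameters of the standard-form SDP produced by Step~1. By construction, the number of affine constraints is $M = |\N^n_{2k} \setminus \{0\}| = \binom{n+2k}{n} - 1 = O\bigl(\binom{n+2k}{n}\bigr)$, and the matrix size is $N = \binom{n+k}{n} + 1 = O\bigl(\binom{n+k}{n}\bigr)$. Next, I would invoke Lemma~\ref{lem:properties.data.uncons}: part~\eqref{bound.data.mat} provides the required normalization $-I_N \preceq C, A_i \preceq I_N$, and part~\eqref{sparsity} yields the sparsity bound $s = 1$ on the rows of $C$ and the $A_i$. These are exactly the hypotheses needed to apply Lemma~\ref{lem:complex.binary.search} to Algorithm~\ref{alg:Binary.search.HU}, which is called as Step~3 of Algorithm~\ref{alg:uncon.pop}.

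Plugging $M$, $N$, $s=1$ into the classical bound $O((N^\omega + MsN)\varepsilon^{-2})$ from Lemma~\ref{lem:complex.binary.search} yields $O\bigl(\bigl[\binom{n+k}{n}^\omega + \binom{n+2k}{n}\binom{n+k}{n}\bigr]\varepsilon^{-2}\bigr)$, matching the stated classical complexity. Similarly, the quantum bound $O\bigl(s(\sqrt{M} + \sqrt{N}\varepsilon^{-1})\varepsilon^{-4}\bigr)$ becomes $O\bigl(\bigl[\binom{n+2k}{n}^{1/2} + \binom{n+k}{n}^{1/2}\varepsilon^{-1}\bigr]\varepsilon^{-4}\bigr)$, matching the stated quantum complexity. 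To complete the argument, I would confirm that the overhead of Step~1 (indexing $\N^n_{2k}\setminus\{0\}$, forming $C, A_i$, $b_i$) and of Steps~2 and~4 (setting $\lambda_{\min}, \lambda_{\max}$, and extracting $\lambda_k^{(\varepsilon)}$ and $G_\varepsilon$) is subsumed by the dominant cost above; on the quantum side, I would appeal to the block-encoding construction from Appendix~\ref{sec:block.encoding}, which builds block encodings of the $B^{(\gamma)}$ using $O(n\log k)$ gates, so this preparation is not a bottleneck.

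There is no real obstacle: the lemma is essentially a bookkeeping corollary of Lemma~\ref{lem:complex.binary.search}. The only mild subtlety is to check that embedding $\tilde C, \tilde A_i$ into the larger $(N = \tilde N+1)$-dimensional space (to enforce the trace-one normalization) does not increase the sparsity parameter $s$ beyond~$1$, which is immediate from $C=\diag(\tilde C, 0)$ and $A_i=\diag(\tilde A_i, 0)$ together with Lemma~\ref{lem:properties.B.gamma}\eqref{unique.nonzero.per.row}, and that polylogarithmic factors in $N$ and $1/\varepsilon$ that appear in Lemma~\ref{lem:MMWU-complexity} are absorbed into the $O(\cdot)$ notation.
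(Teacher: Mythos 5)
Your proposal is correct and follows essentially the same route as the paper: both reduce to Lemma~\ref{lem:complex.binary.search} applied to the standard-form SDP from Step~1, with $M=\binom{n+2k}{n}-1$, $N=\binom{n+k}{n}+1$, and $s=1$ (the latter from Lemma~\ref{lem:properties.data.uncons}\eqref{sparsity}). Your additional remarks — that Steps~1,~2,~4 are dominated, that the $\diag(\cdot,0)$ embedding preserves $s=1$, and that the block-encoding overhead from Appendix~\ref{sec:block.encoding} is negligible — are implicit in the paper's terser proof but worth making explicit.
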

\begin{proof}
The most computationally intensive part of Algorithm \ref{alg:uncon.pop} is the execution of Algorithm \ref{alg:Binary.search.HU}, which solves a semidefinite program (SDP). The SDP has $M$ affine equality constraints and operates on a positive semidefinite matrix of size $N$.
By Lemma \ref{lem:complex.binary.search}, the runtime complexity for solving this SDP is:
\begin{itemize}
\item On a classical computer: $O((MsN+N^\omega)\varepsilon^{-2})$,
\item On a quantum computer: $O(s(\sqrt{M}+\sqrt{N}\varepsilon^{-1})\varepsilon^{-4})$,
\end{itemize}
where $s$ is the maximum number of nonzero entries in any row of the data matrices $C,A_i$, for $i=1,\dots,M$.
From Lemma \ref{lem:properties.data.uncons} \eqref{sparsity}, we know $s=1$.
Additionally: $M=\binom{n+2k}n-1$ and $N=\binom{n+k}n+1$.
Substituting these values into the complexity expressions gives the result.
\end{proof}

\subsection{Inequality-constrained polynomial optimization}
\label{sec:cons.pop}
Building upon the unconstrained case, we now show similar results for inequality constrained polynomial optimization.
Let $f=\sum_{\alpha\in\N^n}f_\alpha x^\alpha\in\R[x]$ and $g_i=\sum_{\alpha\in\N^n}g_{i,\alpha}x^\alpha\in\R[x]$, for $i=0,\dots,m$, where $g_0=1$.
Define $d=\lceil \deg(f)/2\rceil$ and $d_i=\lceil\deg(g_i)/2 \rceil$, $i=0,\dots,m$.
Consider the polynomial optimization problem
\begin{equation}\label{eq:cons.pop}
f^\star=\inf_{x\in S(g)} f(x)\,,
\end{equation}
where $S(g)$ is the semialgebraic set defined as
\begin{equation}\label{eq:def.semialgebaric.set}
S(g)=\{x\in\R^n\,:\,g_i(x)\ge 0\,,\,i=1,\dots,m\}\,,
\end{equation}
with $g=(g_i)_{i=1}^m$.

The sum-of-squares (SOS) relaxation of order $k$ for problem \eqref{eq:cons.pop} is formulated as:
\begin{equation}\label{eq:sos.relaxation.cons.pop}
\begin{array}{rl}
\lambda_k=\sup\limits_{\lambda,\sigma_i} & \lambda\\
\text{s.t.}& \lambda\in\R\,,\,\sigma_i\in\Sigma[x]_{k-d_i}\,,\,f-\lambda=\sum_{i=0}^m\sigma_i g_i\,.
\end{array}
\end{equation}
Here, $\Sigma[x]_{k-d_i}$ denotes the cone of sum-of-squares polynomials of degree at most $2(k-d_i)$.
It is known that
$\lambda_k\le \lambda_{k+1}\le f^\star$.
Furthermore, if $g_m= R-\|x\|_{\ell_2}^2$ for some $R>0$, then $\lambda_k\to f^\star$ as $k\to \infty$.

Problem \eqref{eq:sos.relaxation.cons.pop} can be reformulated as a semidefinite program (SDP):
\begin{equation}\label{eq:sdp.relaxation.cons.pop}
\begin{array}{rl}
\lambda_k=\sup\limits_{\lambda,G} & \lambda\\
\text{s.t.}& \lambda\in\R\,,\,G=\diag(G_0,\dots,G_m)\succeq 0\,,\\
&f-\lambda=\sum_{i=0}^m g_iv_{k-d_i}^\top G_iv_{k-d_i}\,.
\end{array}
\end{equation}
Here, $v_{k-d_i}$ is the vector of monomials of degree at most $k-d_i$.

The duals of \eqref{eq:sdp.relaxation.cons.pop} reads as:
\begin{equation}\label{eq:sdp.relaxation.cons.pop.dual}
\begin{array}{rl}
\tau_k=\min\limits_y& L_y(f)\\
\text{s.t.}& y=(y_\alpha)_{\alpha\in\N^n_{2k}}\subset \R\,,\\
&M_k(y)\succeq 0\,,\,y_0=1\,,\\
&M_{k-d_i}(g_iy)\succeq 0\,,\,i=1,\dots,m\,,\\
\end{array}
\end{equation}
where $M_k(y)$ is the moment matrix of order $k$ associated with $y$, $M_{k-u}(py)$ is the localizing matrix of order $k-u$ associated with $y$ and a polynomial $p$ of degree at most $2u$, and $L_y$ is the Riesz linear functional on $\R[x]_{2k}$.

From Lemma \ref{lem:strong.duality.ball}, if $g_1=R-\|x\|_{\ell_2}^2$ for some $R>0$, then $\lambda_k=\tau_k$, and problem \eqref{eq:sdp.relaxation.cons.pop.dual} admits an optimal solution.
By Lemma \ref{lem:slater.cond}, if $S(g)$ has nonempty interior, then $\lambda_k=\tau_k$, and problem \eqref{eq:sdp.relaxation.cons.pop} admits an optimal solution.

In this section, we assume that problem \eqref{eq:sdp.relaxation.cons.pop} has an optimal solution.

For $g=(g_i)_{i=1,\dots,m}$ with $g_i\in\R[x]_{2d_i}$, the block-diagonal moment/localizing matrix of order $k$ associated with a sequence $y=(y_\alpha)_{\alpha\in\N^n_{2k}}$ is defined as
\begin{equation}\label{eq:block.diagonal.mom.local}
D_k(gy)=\diag((M_{k-d_i}(g_i y))_{i=0,\dots,m})\,,
\end{equation}
where $g_0=1$ and $d_0=0$.
For a given measure $\mu$, the block-diagonal moment/localizing matrix of order $k$ associated with $\mu$ is defined as $D_k(g\mu)=D_k(gy)$, where $y$ is the truncated moment sequence with respect to $\mu$.

We make the following assumption:
\begin{assumption}\label{ass:cond.bounded.trace.2}
There exist $\underline \lambda\in\R$ and positive measure $\mu$ supported on $S(g)$ such that $\underline \lambda\le \lambda_k$ and 
\begin{equation}
\frac{\int(f-\underline \lambda)d\mu}{\lambda_{\min}(D_k(g\mu))}\le 1\,.
\end{equation}
\end{assumption}

\begin{remark}
If we have a positive lower bound $\delta$ for the smallest eigenvalue of the block-diagonal moment/localizing matrix of order $k$ associated with some measure $\mu$ supported on $S(g)$, the coefficients of any polynomial $f$ and scalar $\underline{\lambda}\le \lambda_k$ can always be rescaled to produce a new polynomial $\tilde f$ and new scalar $\underline{\tilde \lambda}$ that satisfy Assumption \ref{ass:cond.bounded.trace.1}.
To see this, let $f\in\R[x]_{2k}$, $\underline{\lambda}\le \lambda_k$ and $0< \delta\le \lambda_{\min}(D_k(g\mu))$. 
Set $\tau=\frac{\delta}{\int  (f-\underline{\lambda}) d\mu}$.
Since $f\ge f^\star\ge \lambda_k\ge \underline{\lambda}$ on $S(g)$, we have $f-\underline{\lambda}\ge 0$ on $S(g)$, which implies $\tau> 0$.
Define $\tilde f=\tau f$ and $\tilde {\underline{\lambda}}=\tau \underline{\lambda}$.
Now consider the rescaled sum-of-squares relaxation. By definition:
\begin{equation}
\tilde \lambda_k=\sup\{\lambda\in\R\,:\,\exists G_i\succeq 0\,,\,i=0,\dots,m\,,\, \tilde f-\lambda = \sum_{i=0}^m g_iv_{k-d_i}^\top G_iv_{k-d_i}\}\,.
\end{equation}
Substituting $\tilde f=\tau f$, we have:
\begin{equation}
\begin{array}{rl}
\tilde \lambda_k=&\sup\{\lambda\in\R\,:\,\exists G_i\succeq 0\,,\, \tau f-\lambda = \sum_{i=0}^m g_iv_{k-d_i}^\top G_iv_{k-d_i}\}\\[3pt]
=&\sup\{\lambda\in\R\,:\,\exists G_i\succeq 0\,,\, f-\frac{\lambda}\tau = \sum_{i=0}^m g_iv_{k-d_i}^\top (\frac{1}{\tau}G)v_{k-d_i}\}\\[3pt]
=&\sup\{\tau\tilde\lambda\in\R\,:\,\exists \tilde G_i\succeq 0\,,\, f-\tilde\lambda = \sum_{i=0}^m g_iv_{k-d_i}^\top \tilde G_iv_{k-d_i}\}\\[3pt]
=&\tau\sup\{\tilde\lambda\in\R\,:\,\exists \tilde G_i\succeq 0\,,\, f-\tilde\lambda = \sum_{i=0}^m g_iv_{k-d_i}^\top \tilde G_iv_{k-d_i}\}\\
=&\tau \lambda_k
\,.
\end{array}
\end{equation}
Thus, $\tilde f\in\R[x]_{2k}$, $\tilde {\underline{\lambda}}\le \tau \lambda_k=\tilde \lambda_k$, and
\begin{equation}
\frac{\int (\tilde f-\tilde {\underline{\lambda}}) d\mu}{\lambda_{\min}(D_k(g\mu))}\le\frac{\int (\tilde f-\tilde {\underline{\lambda}}) d\mu}{\delta}= \frac{\tau \int (f-\underline{\lambda}) d\mu}{\delta} = 1\,.
\end{equation}
In Appendix \ref{sec:lower.bound.eigenval.moment.mat}, we derive a positive lower bound for the smallest eigenvalue of $D_k(g\mu)$ in case where the coefficients of $g_i$ are rational, and $S(g)$ has nonempty interior (see Proposition \ref{prop:lower.bound.eigenval.constrained}). 
\end{remark}
In Sec.~\ref{sec:app.portfolio} we discuss this assumption and show it is satisfied for portfolio optimization problems.

To express the sum-of-squares relaxation \eqref{eq:sdp.relaxation.cons.pop} in the standard semidefinite programming form \eqref{eq:sdp}, we define the family of matrices $(B^{(\gamma)}_g)_{\gamma\in\N^n_k}$  as outlined below.

For $i=0,\dots,m$, for $\eta\in \N^n_{2(k-d_i)}$, define $B^{(\eta)}_i$ as
\begin{equation}\label{eq:def.B.i.eta}
B^{(\eta)}_i=((B_i^{(\eta)})_{\alpha,\beta})_{\alpha,\beta\in\N^n_{k-d_i}}\,,\text{ where }
(B_i^{(\eta)})_{\alpha,\beta}=\begin{cases}
1&\text{if }\alpha+\beta=\eta\,,\\
0&\text{else}.
\end{cases}
\end{equation}
For $g=(g_i)_{i=1}^m\subset \R[x]$ with $g_0=1$, define
\begin{equation}\label{eq.tilde.B.gamma}
  B_g^{(\gamma)}=\diag\left(\left(\sum\limits_{
\arraycolsep=1.4pt\def\arraystretch{.7}
\begin{array}{cc}
\scriptstyle  \zeta\in\N^n_{2d_i}\\
\scriptstyle \gamma-\zeta\in \N^n
\end{array}
} g_{i,\zeta} B_i^{(\gamma-\zeta)}\right)_{i=0,\dots,m}\right)\,,\,\gamma\in\N^n_{2k}\,.
\end{equation}

It is not hard to show that $D_k(gy)=\sum_{\gamma\in\N^n_{2k}}y_\gamma B_g^{(\gamma)}$, for $y=(y_\gamma)_{\gamma\in \N^n_{2k}}$.
The matrices $(B_g^{(\gamma)})_{\gamma\in\N^n_{2k}}$ are called the coefficient-matching matrices in the expression of the block-diagonal moment/localizing matrix $D_k(gy)$ as linear combination of the entries of $y$.

We denote by $s_g$ the maximum number of nonzero entries per row of $B_g^{(\gamma)}$, $\gamma\in\N^n_{2k}$.

\begin{assumption}\label{ass:rescale.g}
For $i=1,\dots,m$, $\|g_i\|_{\ell_1}=\sum_{\xi\in\N^n_{2d_i}}|g_{i,\xi}|\le 1$.
\end{assumption}
\begin{remark}
The coefficients of any  polynomials $g=(g_i)_{i=1}^m$ can always be rescaled to obtain new polynomials $\tilde g=(\tilde g_i)_{i=1}^m$ such that $S(g)=S(\tilde g)$ and $\tilde g$ satisfies \eqref{ass:rescale.g}. Moreover,  if the coefficients of each $g_i$ are rational, the coefficients of $\tilde g_i$ will also be rational.
This is achieved by defining $\tilde g_i=\frac{g_i}{\|g_i\|_{\ell_1}}$, for $i=1,\dots,m$.
\end{remark}

\begin{algorithm}\label{alg:con.pop.exact}
Inequality-constrained polynomial optimization
\begin{itemize}
\item Input: $\varepsilon>0$, $\underline \lambda\in\R$, $a\in S(g)$,\\
\phantom{1.1cm} coefficients $(f_\alpha)_{\alpha\in\N^n_{2k}}$ of $f\in\R[x]$,\\
\phantom{1.1cm} coefficients $(g_{i,\alpha})_{\alpha\in\N^n_{2d_i}}$ of $g_i\in\R[x]$, for $i=0,\dots,m$.
\item Output: $\lambda_k^{(\varepsilon)}\in\R$ and $G_\varepsilon\in\mathbb S^N$.
\end{itemize}
\begin{enumerate}
\item Convert relaxation to standard semidefinite program
\begin{enumerate}
\item Let $\N^n_{2k}=\{\gamma_0,\gamma_1,\dots,\gamma_{M}\}$ with $\gamma_0=0$, and set $M=|\N^n_{2k}|-1=\binom{n+2k}n -1$.
\item Define $b_i=f_{\gamma_i}$, $\tilde A_i=  B_g^{(\gamma_i)}$, for $i=0,\dots,M$, $\tilde C=\tilde A_0=  B_g^{(\gamma_0)}=  B_g^{(0)}$, $\tilde N=\sum_{i=0}^m \binom{n+k-d_i}n$.
\item Define $C=\diag(\tilde C,0)$, $A_i=\diag(\tilde A_i,0)$, for $i=1,\dots,M$, $N=\tilde N+1$.

\item Set $\lambda_{\min}=f_0-f(a)$ and $\lambda_{\max}=f_0-\underline \lambda$.
\end{enumerate}

\item\label{step:hamilton.update.cons.exact} Run Algorithm \ref{alg:Binary.search.HU} (Binary search using Hamiltonian updates) 
to get $\underline \lambda_T\in\R$ and $X_\varepsilon=((X_\varepsilon)_{ij})_{i,j=1,\dots,N}\in\mathbb S^N$.

\item Extract approximate results: Set $\lambda_k^{(\varepsilon)}=f_0-\underline\lambda_T$ and $G_\varepsilon=((X_\varepsilon)_{ij})_{i,j=1,\dots,\tilde N}$.
\end{enumerate}
\end{algorithm}

\begin{theorem}\label{theo:convergence.ineq.cons}
Let $\varepsilon>0$, $f\in\R[x]_{2d}$ and $g_i\in\R[x]_{2d_i}$, for $i=0,\dots,m$, with $g_0=1$.
Suppose Assumptions \ref{ass:cond.bounded.trace.2} and \ref{ass:rescale.g} are satisfied.
Assume strong duality holds for the semidefinite relaxations \eqref{eq:sdp.relaxation.cons.pop}-\eqref{eq:sdp.relaxation.cons.pop.dual}, and let $\lambda_k=f^\star$.
Additionally, suppose the polynomial optimization problem \eqref{eq:cons.pop} has an optimal solution  $x^\star$ satisfying $\|x^\star\|_{\ell_1}\le r$, where $r\in (0,1)$.
Let $\lambda_k^{(\varepsilon)}$ be the value returned by Algorithm \ref{alg:con.pop.exact}.
Then, the following bound holds:
\begin{equation}
0\le \lambda_k^{(\varepsilon)}-\lambda_k\le\varepsilon\left(2+\frac{r}{1-r} \right)\,.
\end{equation}
\end{theorem}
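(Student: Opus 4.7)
The plan is to follow the same template as the proof of Theorem \ref{theo:accuarcy.approximate.val.uncons} for the unconstrained case, with the coefficient-matching matrices $B^{(\gamma)}$ replaced by their constrained analogues $B_g^{(\gamma)}$ from \eqref{eq.tilde.B.gamma}. First I would establish that Algorithm \ref{alg:con.pop.exact} really does convert \eqref{eq:sdp.relaxation.cons.pop} into an equivalent SDP in the standard form \eqref{eq:sdp}. Using the identity $D_k(gy)=\sum_\gamma y_\gamma B_g^{(\gamma)}$ together with the substitution $y_0=1$, $y_{\gamma_i}=-\xi_i$, the constraints $f-\lambda=\sum_i g_i v_{k-d_i}^\top G_i v_{k-d_i}$ become affine trace constraints $\tr(\tilde A_i G)=b_i$ on the block-diagonal variable $G=\diag(G_0,\dots,G_m)$. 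Assumption \ref{ass:cond.bounded.trace.2} gives, exactly as in Lemma \ref{lem:cond.bounded.trace.1}, $\tr(G^\star)\le \int (f-\underline\lambda)\,d\mu/\lambda_{\min}(D_k(g\mu))\le 1$, and the lifting $G\mapsto \diag(G,1-\tr(G))$ supplies the trace-one extension needed to land in \eqref{eq:sdp}, along with the analogue of \eqref{eq:remove.trace.one.1.2}.

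Next I would verify the three hypotheses of Lemma \ref{lem:convergence.sdp} for the lifted SDP. The bounds $\lambda_{\min}\le \lambda^\star\le \lambda_{\max}$ follow since $\underline\lambda\le \lambda_k\le f^\star\le f(a)$. The operator-norm bound $-I\preceq A_i\preceq I$ is where Assumption \ref{ass:rescale.g} becomes essential: each $B_g^{(\gamma)}$ is block-diagonal with blocks $\sum_\zeta g_{i,\zeta}B_i^{(\gamma-\zeta)}$, and properties (i)--(iv) of Lemma \ref{lem:properties.B.gamma} (which transfer verbatim to the $B_i^{(\eta)}$ of \eqref{eq:def.B.i.eta}) yield $\|B_i^{(\eta)}\|_{\mathrm{op}}\le 1$, so by the triangle inequality $\|B_g^{(\gamma)}\|_{\mathrm{op}}\le \max_i\sum_\zeta |g_{i,\zeta}|\le 1$. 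At the same time, the row-sparsity of each $B_g^{(\gamma)}$ is at most $s_g$ by definition, which supplies the sparsity parameter entering the runtime bound \eqref{eq:runtime.inequality.pop}.

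Then I would construct an explicit dual optimum using the pseudo-moment sequence $y^\star_\alpha=(x^\star)^\alpha$ of the Dirac measure at the hypothesised $\ell_1$-bounded minimizer $x^\star$. Because $x^\star\in S(g)$, both $M_k(y^\star)=v_k(x^\star)v_k(x^\star)^\top$ and $M_{k-d_i}(g_i y^\star)=g_i(x^\star)\,v_{k-d_i}(x^\star)v_{k-d_i}(x^\star)^\top$ are PSD, so $y^\star$ is feasible for \eqref{eq:sdp.relaxation.cons.pop.dual}, and $L_{y^\star}(f)=f(x^\star)=f^\star=\lambda_k=\tau_k$ by strong duality and the assumption $\lambda_k=f^\star$; hence $y^\star$ is optimal. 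The constrained analogue of Lemma \ref{lem:equivalent.mom.relax.0} — whose proof is identical once $M_k(y)$ is replaced by $D_k(gy)$ — then identifies $\xi^\star_i=-y^\star_{\gamma_i}$ as an optimal dual of the lifted SDP.

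Finally I would invoke Lemma \ref{lem:convergence.sdp} to obtain $0\le \lambda^\star-\underline\lambda_T\le \varepsilon(2+\|\xi^\star\|_{\ell_1})$, and reuse the multinomial estimate from the proof of Theorem \ref{theo:accuarcy.approximate.val.uncons} to get
\begin{equation}
\|\xi^\star\|_{\ell_1}=\sum_{\gamma\in\N^n_{2k}\setminus\{0\}}|(x^\star)^\gamma|\le \sum_{j=1}^{2k}\|x^\star\|_{\ell_1}^j\le \frac{r}{1-r}.
\end{equation}
Substituting $\lambda_k^{(\varepsilon)}-\lambda_k=\lambda^\star-\underline\lambda_T$ yields the claim. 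The main obstacle I expect is the operator-norm verification in the second step: unlike the unconstrained case where the $B^{(\gamma)}$ are essentially permutation-like, the $B_g^{(\gamma)}$ are weighted sums of such matrices and the block structure must be handled carefully. Assumption \ref{ass:rescale.g} is tailored precisely to close this gap, so the work will be in packaging the triangle-inequality argument blockwise and checking that the $\diag(\cdot,0)$ lifting preserves all required inequalities.
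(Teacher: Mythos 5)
Your proposal is correct and follows essentially the same route as the paper: convert the relaxation to standard form with bounded trace via Assumption \ref{ass:cond.bounded.trace.2} (the paper's Lemmas \ref{lem:cond.bounded.trace.2} and \ref{lem:equi.sdp.relax.cons.exact}), get $-I\preceq B_g^{(\gamma)}\preceq I$ and sparsity $s_g$ from Assumption \ref{ass:rescale.g} by the blockwise triangle inequality (Lemmas \ref{lem:bounded.eigenvalues} and \ref{lem:properties.data.A.cons.exact}), take the Dirac-measure moment sequence at $x^\star$ as the dual optimum identified through the constrained analogue of Lemma \ref{lem:equivalent.mom.relax.0} (Lemma \ref{lem:equivalent.mom.relax}), and finish with Lemma \ref{lem:convergence.sdp} plus the multinomial bound $\|\xi^\star\|_{\ell_1}\le \tfrac{r}{1-r}$. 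No gaps; the operator-norm step you flagged as the main obstacle is exactly what the paper's Lemma \ref{lem:bounded.eigenvalues} handles, in the way you describe.
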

The proof of Theorem \ref{theo:convergence.ineq.cons} is given in Appendix \ref{proof:theo:convergence.ineq.cons}.

Building on Lemma \ref{lem:finite.conver.ball}, we derive the following result as a corollary of Theorem \ref{theo:convergence.ineq.cons}:
\begin{corollary}\label{coro:sosoc}
Let $\varepsilon>0$, $f\in\R[x]_{2d}$ and $g_i\in\R[x]_{2d_i}$, for $i=0,\dots,m$, with $g_0=1$ and $g_1=c(R-\|x\|_{\ell_2}^2)$ for some $c>0$ and $R>0$.
Suppose that Assumption \ref{ass:rescale.g} is satisfied.
Assume that the polynomial optimization problem \eqref{eq:cons.pop} has an optimal solution $x^\star$ satisfying $\|x^\star\|_{\ell_1}\le r$, where $r\in (0,1)$.
Further, assume that the second-order sufficient optimality conditions hold for problem \eqref{eq:cons.pop} at every optimal solution.
Then, there exists $K\in\N$ such that for all $k\ge K$, $\lambda_k=f^\star$.
In this case, if Assumption \ref{ass:cond.bounded.trace.2} is met, and $\lambda_k^{(\varepsilon)}$ denotes the value returned by Algorithm \ref{alg:con.pop.exact}, it holds that
\begin{equation}
0\le \lambda_k^{(\varepsilon)}-f^\star\le\varepsilon\left(2+\frac{r}{1-r} \right)\,.
\end{equation}
\end{corollary}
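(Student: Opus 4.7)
The plan is to reduce the corollary to a direct application of Theorem \ref{theo:convergence.ineq.cons} by first invoking Lemma \ref{lem:finite.conver.ball} to establish finite convergence of the Lasserre hierarchy. The argument splits cleanly into verifying the hypotheses for finite convergence, then plugging the resulting equality $\lambda_k = f^\star$ into the approximation bound.

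First, I would verify the three conditions of Lemma \ref{lem:finite.conver.ball}. The Archimedean property of $S(g)$ follows immediately from the constraint $g_1 = c(R - \|x\|_{\ell_2}^2)$: taking the constant SOS multiplier $\sigma_1 = 1/c \in \Sigma[x]_{k-d_1}$ (valid since $1/c > 0$) yields $R - \|x\|_{\ell_2}^2 = (1/c)\, g_1 \in Q_k(g)$, so $S(g)$ is Archimedean. Since the problem is purely inequality-constrained ($l=0$), the ideal $I(h) = \{0\}$ is trivially real radical (for $f=0 \in I(h)$, take $m=1$ and $\sigma=0$). The second-order sufficient optimality conditions at every optimal solution are given by hypothesis. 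Hence Lemma \ref{lem:finite.conver.ball} yields some $K \in \mathbb{N}$ with $\lambda_K = \tau_K = f^\star$, and because $(\lambda_k)_{k\in\mathbb{N}}$ is non-decreasing and bounded above by $f^\star$, we in fact have $\lambda_k = f^\star$ for every $k \ge K$.

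Next, for any such $k \ge K$, I would apply Theorem \ref{theo:convergence.ineq.cons}. Its hypotheses are all in place: Assumptions \ref{ass:cond.bounded.trace.2} and \ref{ass:rescale.g} are assumed in the corollary; the equality $\lambda_k = f^\star$ was just established; the $\ell_1$-bound $\|x^\star\|_{\ell_1} \le r$ is given. Strong duality between \eqref{eq:sdp.relaxation.cons.pop} and \eqref{eq:sdp.relaxation.cons.pop.dual} follows from Lemma \ref{lem:strong.duality.ball}, with the small bookkeeping observation that replacing $g_1 = c(R-\|x\|_{\ell_2}^2)$ by the unscaled $\tilde g_1 = R - \|x\|_{\ell_2}^2$ leaves the SOS relaxation value unchanged, since the positive factor $c$ can be absorbed into the SOS multiplier ($\sigma_1 g_1 = (c\sigma_1)\tilde g_1$ and $c\sigma_1 \in \Sigma[x]_{k-d_1}$). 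Theorem \ref{theo:convergence.ineq.cons} then gives
\[
0 \;\le\; \lambda_k^{(\varepsilon)} - \lambda_k \;\le\; \varepsilon\!\left(2 + \frac{r}{1-r}\right),
\]
and substituting $\lambda_k = f^\star$ yields the advertised bound.

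There is no real obstacle here: the corollary is essentially a composition of two prior results. The only subtlety is the mild mismatch between the ball-constraint form assumed in Lemma \ref{lem:strong.duality.ball} and the rescaled form $g_1 = c(R-\|x\|_{\ell_2}^2)$ in the corollary, but as noted above this rescaling is transparent at the level of the quadratic module and the SOS value.
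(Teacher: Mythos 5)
Your proof is correct and matches the paper's intended argument, which the authors leave implicit by merely noting that the corollary follows "building on Lemma~\ref{lem:finite.conver.ball} ... as a corollary of Theorem~\ref{theo:convergence.ineq.cons}." You correctly verify the three hypotheses of Lemma~\ref{lem:finite.conver.ball} (Archimedeanity from the scaled ball constraint, real radicality of the trivial ideal when $l=0$, and the assumed second-order conditions), extend finite convergence to all $k\ge K$ by monotonicity, and then invoke Theorem~\ref{theo:convergence.ineq.cons}; the rescaling observation that a positive factor $c$ in $g_1$ is absorbed into the SOS multiplier is also correct and resolves the mild mismatch with the hypotheses of Lemmas~\ref{lem:strong.duality.ball} and~\ref{lem:finite.conver.ball}.
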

\begin{remark}
The constants $c$ in Corollary \ref{coro:sosoc} are selected to ensure that Assumption \ref{ass:rescale.g} is satisfied.
\end{remark}

\begin{theorem}\label{theo:convergence.over.ball}
Let $\varepsilon>0$, $f\in\R[x]_{2d}$ and $g_i\in\R[x]_{2d_i}$, for $i=0,\dots,m$, with $g_0=1$ and $g_1=c(R-\|x\|_{\ell_2}^2)$ for some $R\in(0,1)$ and $c>0$.
Suppose that Assumptions \ref{ass:cond.bounded.trace.2} and \ref{ass:rescale.g} hold.
Further, assume that strong duality is satisfied for the semidefinite relaxations \eqref{eq:sdp.relaxation.cons.pop}-\eqref{eq:sdp.relaxation.cons.pop.dual}.
Let $\lambda_k^{(\varepsilon)}$ be the value returned by Algorithm \ref{alg:con.pop.exact}.
Then, the following bound holds:
\begin{equation}
0\le \lambda_k^{(\varepsilon)}-\lambda_k\le\varepsilon\left(2+\frac{1}{1-R} \sqrt{\frac{1}{2}\binom{n+k}n\left[\binom{n+k}n+1\right]}\ \right)\,.
\end{equation}
\end{theorem}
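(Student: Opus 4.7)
The plan is to follow the template of Theorem~\ref{theo:convergence.ineq.cons}: after converting the SOS relaxation to the standard-form SDP via Algorithm~\ref{alg:con.pop.exact}, the error estimate reduces via Lemma~\ref{lem:convergence.sdp} to a bound on $\|\xi^\star\|_{\ell_1}$ for an optimal solution $\xi^\star$ of the dual SDP~\eqref{eq:dual.no.tr.cons}. The new ingredient is that the exactness assumption $\lambda_k=f^\star$ used in Theorem~\ref{theo:convergence.ineq.cons} is replaced by the ball constraint $g_1=c(R-\|x\|_{\ell_2}^2)$, which I will exploit through the localizing-matrix condition on an optimal moment-relaxation solution.

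First I would derive an analog of Lemma~\ref{lem:equivalent.mom.relax.0} in the inequality-constrained setting: from the identity $D_k(gy)=\sum_{\gamma\in\N^n_{2k}} y_\gamma B_g^{(\gamma)}$ and strong duality (which holds by Lemma~\ref{lem:strong.duality.ball} thanks to the ball constraint), the moment relaxation~\eqref{eq:sdp.relaxation.cons.pop.dual} admits an optimal $y^\star$, and the dual SDP solution satisfies $\xi^\star_i=-y^\star_{\gamma_i}$, giving $\|\xi^\star\|_{\ell_1}=\sum_{\gamma\in\N^n_{2k}\setminus\{0\}}|y^\star_\gamma|$. The heart of the argument is a trace bound for $M_k(y^\star)$. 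Taking the $(\alpha,\alpha)$ entry of $M_{k-1}(g_1 y^\star)\succeq 0$ for $|\alpha|\le k-1$ gives $R\,y^\star_{2\alpha}\ge \sum_{i=1}^n y^\star_{2(\alpha+e_i)}$. Summing over $|\alpha|=j$ and noting that each $\beta$ with $|\beta|=j+1$ appears at least once on the right (with positive multiplicity $|\supp(\beta)|\ge 1$) yields $\sum_{|\beta|=j+1} y^\star_{2\beta}\le R\sum_{|\alpha|=j} y^\star_{2\alpha}$, so by induction from $y^\star_0=1$ one obtains $\sum_{|\alpha|=j} y^\star_{2\alpha}\le R^j$ and hence $\tr(M_k(y^\star))\le\sum_{j=0}^k R^j\le 1/(1-R)$.

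To conclude, I would use $\|M_k(y^\star)\|_F\le \|M_k(y^\star)\|_*=\tr(M_k(y^\star))\le 1/(1-R)$ (valid since $M_k(y^\star)\succeq 0$). Because every $\gamma\in\N^n_{2k}$ corresponds to at least one entry of $M_k(y^\star)$, one has $\sum_{\gamma\ne 0}(y^\star_\gamma)^2\le \|M_k(y^\star)\|_F^2\le 1/(1-R)^2$. A Cauchy--Schwarz step, together with the injection $\gamma\mapsto(\alpha,\beta)$ given by a fixed decomposition $\alpha+\beta=\gamma$, $\alpha\le\beta$ with $|\alpha|,|\beta|\le k$, which gives $|\N^n_{2k}|-1\le \tfrac{1}{2}\binom{n+k}{n}[\binom{n+k}{n}+1]$, then yields
\[
\|\xi^\star\|_{\ell_1}\le \frac{1}{1-R}\sqrt{\tfrac{1}{2}\binom{n+k}{n}\left[\binom{n+k}{n}+1\right]}.
\]
Substituting this into Lemma~\ref{lem:convergence.sdp} and translating $\lambda^\star=f_0-\lambda_k$, $\underline\lambda_T=f_0-\lambda_k^{(\varepsilon)}$ gives the claimed inequality. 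The main technical obstacle will be the inductive trace bound: one must use the diagonal-entry inequality from the localizing matrix in the correct direction and carefully discard the multiplicities $|\supp(\beta)|$ so that every step preserves positivity and the geometric-series estimate goes through.
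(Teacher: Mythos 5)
Your proof is correct, and the overall skeleton matches the paper's: bound $\tr(M_k(y^\star))$ by $1/(1-R)$ via the ball localizing matrix, then pass to an $\ell_1$ bound on the dual vector through a Frobenius/nuclear-norm estimate of length $L=\tfrac12\binom{n+k}{n}[\binom{n+k}{n}+1]$, and feed that into Lemma~\ref{lem:convergence.sdp}. The difference lies in how you get the trace bound. The paper proves $L_y(R^j-\|x\|_{\ell_2}^{2j})\ge 0$ by an explicit polynomial identity $R^{t+1}-\|x\|_{\ell_2}^{2(t+1)}=R(R^t-\|x\|_{\ell_2}^{2t})+\|x\|_{\ell_2}^{2t}(R-\|x\|_{\ell_2}^{2})$ together with the SOS decomposition of $\|x\|_{\ell_2}^{2t}$ and positivity of $\tr(uu^\top M_{k-d_1}(g_1y))$; it then uses the multinomial expansion of $\|x\|_{\ell_2}^{2j}$ and the bound $\binom{j}{\gamma}\ge 1$. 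You instead read off the diagonal entries $(\alpha,\alpha)$ of $M_{k-1}(g_1y)\succeq 0$ directly, getting $R\,y_{2\alpha}\ge\sum_i y_{2(\alpha+e_i)}$, then sum over $|\alpha|=j$ and discard the multiplicities $|\supp(\beta)|\ge 1$ (legitimate because $y_{2\beta}\ge 0$ for diagonal entries of the PSD moment matrix), yielding the same geometric decay $\sum_{|\beta|=j}y_{2\beta}\le R^j$. Your route is a bit more elementary — no SOS gadget, no applying $L_y$ to polynomial identities — while the paper's route is structured so that the same template reuses verbatim for the simplex case (Theorem~\ref{theo:convergence.over.simplex}), where the $x^\alpha\ge 0$ constraints, rather than diagonal PSD entries, supply the nonnegativity. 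Your final Cauchy--Schwarz step, $\sum_\gamma|y^\star_\gamma|\le\sqrt{L}\,\|\text{vec}(M_k(y^\star))\|_{\ell_2}$, is equivalent to the paper's norm chain $\tr\ge\|\cdot\|_F=\|\text{vec}(\cdot)\|_{\ell_2}\ge L^{-1/2}\|\text{vec}(\cdot)\|_{\ell_1}\ge L^{-1/2}\sum_\gamma|y^\star_\gamma|$, so the bound agrees. The one place to be slightly careful in a write-up is that the correspondence between $y^\star$ and the SDP dual $\xi^\star$ in the constrained setting is Lemma~\ref{lem:equivalent.mom.relax}, not its unconstrained variant Lemma~\ref{lem:equivalent.mom.relax.0}, and you should note explicitly that $y^\star_{2\beta}\ge 0$ follows from $M_k(y^\star)\succeq 0$ so that dropping the multiplicities goes in the right direction.
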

The proof of Theorem \ref{theo:convergence.over.ball} is given in Appendix \ref{proof:theo:convergence.over.ball}.

\begin{theorem}\label{theo:convergence.over.simplex}
Let $\varepsilon>0$, $f\in\R[x]_{2d}$ and $g_i\in\R[x]_{2d_i}$, for $i=0,\dots,m$, with $g_0=1$.
Fix $r\in(0,1)$.
Assume there exist indices $i_1,i_2,i_\alpha\in\{1,\dots,m\}$, with $\alpha\in\{0,1\}^n\backslash \{0\}$, satisfying the following:
\begin{itemize}
\item $g_{i_j}=c_j[r^j-(x_1+\dots+x_n)^j]$ with $c_j>0$ for $j=1,2$,
\item $g_{i_\alpha}=x^\alpha$ for $\alpha\in\{0,1\}^n\backslash \{0\}$.
\end{itemize}
Suppose that Assumptions \ref{ass:cond.bounded.trace.2} and \ref{ass:rescale.g} hold.
Further, assume that strong duality is satisfied for the semidefinite relaxations \eqref{eq:sdp.relaxation.cons.pop}-\eqref{eq:sdp.relaxation.cons.pop.dual}.
Let $\lambda_k^{(\varepsilon)}$ be the value returned by Algorithm \ref{alg:con.pop.exact}.
Then, the following bound holds:
\begin{equation}
0\le \lambda_k^{(\varepsilon)}-\lambda_k\le\varepsilon\left(2+\frac{r}{1-r} \right)\,.
\end{equation}
\end{theorem}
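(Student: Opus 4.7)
The plan is to mirror the argument of Theorem~\ref{theo:convergence.ineq.cons}: show that Algorithm~\ref{alg:con.pop.exact} converts the SOS relaxation into a standard SDP satisfying the hypotheses of Lemma~\ref{lem:convergence.sdp}, which yields $0\le \lambda_k^{(\varepsilon)}-\lambda_k\le \varepsilon(2+\|\tilde y^\star\|_{\ell_1})$ in terms of an optimal moment vector $y^\star$ of~\eqref{eq:sdp.relaxation.cons.pop.dual}, where $\tilde y^\star=(y^\star_\gamma)_{\gamma\in\N^n_{2k}\backslash\{0\}}$. What remains is to bound $\|\tilde y^\star\|_{\ell_1}\le r/(1-r)$, which gives the conclusion.

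The first stage---the standard-form reduction, sparsity, the $\pm I_N$ bounds on $C$ and $A_i$ via Assumption~\ref{ass:rescale.g}, the trace lift enabled by Assumption~\ref{ass:cond.bounded.trace.2}, and transfer of strong duality to the pair~\eqref{eq:remove.trace.one.1}--\eqref{eq:dual.no.tr.cons} via the coefficient-matching matrices $B_g^{(\gamma)}$---is identical to the inequality-constrained case treated in Theorem~\ref{theo:convergence.ineq.cons}, so I would simply invoke it and identify $\xi^\star=-\tilde y^\star$ for some moment-relaxation optimum $y^\star$.

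For the $\ell_1$ bound I would proceed in two steps. \textbf{Nonnegativity:} every $y^\star_\gamma$ with $\gamma\in\N^n_{2k}$ is nonnegative, since writing $\gamma=\alpha+2\beta$ with $\alpha\in\{0,1\}^n$ the coordinatewise parity of $\gamma$, the entry $y^\star_\gamma$ is a diagonal entry of $M_k(y^\star)\succeq 0$ (if $\alpha=0$) or of $M_{k-d_{i_\alpha}}(x^\alpha y^\star)\succeq 0$ (if $\alpha\ne 0$), and a parity check on $|\gamma|\le 2k$ shows the relevant localizing order is always available. \textbf{Magnitude:} testing $M_{k-1}(g_{i_2}y^\star)\succeq 0$ against $q=(x_1+\dots+x_n)^\ell$ for $\ell\le k-1$ yields $L_{y^\star}((x_1+\dots+x_n)^{2\ell+2})\le r^2\,L_{y^\star}((x_1+\dots+x_n)^{2\ell})$, so induction from $L_{y^\star}(1)=1$ gives $L_{y^\star}((x_1+\dots+x_n)^{2\ell})\le r^{2\ell}$ for $\ell\le k$; the analogous test against $M_{k-1}(g_{i_1}y^\star)\succeq 0$ delivers $L_{y^\star}((x_1+\dots+x_n)^{2\ell+1})\le r\,L_{y^\star}((x_1+\dots+x_n)^{2\ell})\le r^{2\ell+1}$ for $\ell\le k-1$. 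Combining and expanding by the multinomial theorem, $\sum_{|\gamma|=j}\binom{j}{\gamma}\,y^\star_\gamma=L_{y^\star}((x_1+\dots+x_n)^j)\le r^j$ for $j=1,\dots,2k$; since $\binom{j}{\gamma}\ge 1$ and $y^\star_\gamma\ge 0$, this forces $\sum_{|\gamma|=j}y^\star_\gamma\le r^j$, and a geometric sum yields
\[
\|\tilde y^\star\|_{\ell_1}=\sum_{j=1}^{2k}\sum_{|\gamma|=j}y^\star_\gamma\le \sum_{j=1}^{2k}r^j\le \frac{r}{1-r}.
\]

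The main obstacle I anticipate is recognising the correct test polynomial $q=(x_1+\dots+x_n)^\ell$ and seeing why both $g_{i_1}$ and $g_{i_2}$ must appear among the constraints: $g_{i_2}$ supplies the even recursion at geometric rate $r^2$ but cannot reach odd exponents, while $g_{i_1}$ supplies the odd bound but cannot be iterated on its own since $r-\sum_i x_i$ is linear. The interplay of the two, together with the dimension-free nonnegativity coming from the monomial constraints $g_{i_\alpha}=x^\alpha$, is precisely what produces the $n$- and $k$-independent bound $r/(1-r)$; all remaining bookkeeping is analogous to Theorem~\ref{theo:convergence.ineq.cons}.
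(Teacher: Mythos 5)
Your proposal is correct and follows essentially the same route as the paper's proof: nonnegativity of the dual moments via the monomial constraints $g_{i_\alpha}=x^\alpha$ (with the even-parity case handled by the moment matrix), bounds $L_{y^\star}\bigl((x_1+\dots+x_n)^j\bigr)\le r^j$ obtained by testing the localizing matrices of $g_{i_1}$ and $g_{i_2}$ against powers of $x_1+\dots+x_n$, the multinomial/geometric-series estimate $\|\tilde y^\star\|_{\ell_1}\le r/(1-r)$, and then the standard reduction through Lemmas \ref{lem:properties.data.A.cons.exact}, \ref{lem:equivalent.mom.relax} and \ref{lem:convergence.sdp} as in Theorem \ref{theo:convergence.ineq.cons}. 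Your multiplicative recursions $L(s^{2\ell+2})\le r^2L(s^{2\ell})$ and $L(s^{2\ell+1})\le rL(s^{2\ell})$ are just a slightly cleaner reorganization of the paper's induction showing $L_y\bigl(r^j-(x_1+\dots+x_n)^j\bigr)\ge 0$, with the two constraints playing identical roles.
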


The proof of Theorem \ref{theo:convergence.over.simplex} is given in Appendix \ref{proof:theo:convergence.over.simplex}.

\begin{remark}
The constants $c$ and $c_j$ in Theorems \ref{theo:convergence.over.ball} and \ref{theo:convergence.over.simplex} are selected to ensure that Assumption \ref{ass:rescale.g} is satisfied.
\end{remark}
\begin{remark}
Under the assumption of Theorem \ref{theo:convergence.over.simplex}, the semialgebraic set $S(g)$ is contained within the simplex $\{x\in\R^n_+\,:\,x_1+\dots+x_n\le r\}$. However, in this case, the number of inequality constraints $m$ in problem \eqref{eq:cons.pop} grows at least exponentially with $n$, specifically $m\ge 2^n+1$.
\end{remark}
\begin{lemma}[Complexity]\label{lem:complexity.con.pop.exact}
Let $k\ge \max\{d,d_0,\dots,d_m\}$.
To run Algorithm \ref{alg:con.pop.exact}, the classical computer requires
\begin{equation}
O\left( \left[ s_g\binom{n+2k}n\sum_{i=0}^m\binom{n+k-d_i}n +\left( \sum_{i=0}^m\binom{n+k-d_i}n\right)^\omega\right] \frac{1}{\varepsilon^2}\right)
\end{equation}
operations, where $\omega\approx 2.373$, while the quantum computer requires
\begin{equation}
O\left(s_g\left[\binom{n+2k}{n}^{1/2}+\left(\sum_{i=0}^m\binom{n+k-d_i}n\right)^{1/2}\frac{1}{\varepsilon}\right]\frac{1}{\varepsilon^4}\right)
\end{equation}
operations, where $s_g$ is the maximum number of nonzero entries per row of $B_g^{(\gamma)}$ (defined as in \eqref{eq.tilde.B.gamma}), $\gamma\in\N^n_{2k}$, which is not larger than the maximum number of nonzero coefficients of each $g_i$, $i=1,\dots,m$.
\end{lemma}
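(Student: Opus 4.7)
The plan is to reduce the complexity analysis to Lemma~\ref{lem:complex.binary.search} applied to the standard-form SDP constructed in Step~1 of Algorithm~\ref{alg:con.pop.exact}, so all that remains is to read off the three parameters $(M,N,s)$ of that SDP.

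First, I would observe that the only non-trivial computational step in Algorithm~\ref{alg:con.pop.exact} is Step~\ref{step:hamilton.update.cons.exact}, which invokes Algorithm~\ref{alg:Binary.search.HU}; Step~1 is a polynomial-time symbolic reformulation whose cost is negligible compared to the binary search. From the construction in Step~1, the SDP has $M = \binom{n+2k}{n}-1$ affine equality constraints, matrix size $N = \tilde N+1 = \sum_{i=0}^{m}\binom{n+k-d_i}{n}+1$, and data matrices $C = \diag(B_g^{(0)},0)$ and $A_i = \diag(B_g^{(\gamma_i)},0)$.

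Next, I would bound the sparsity $s$. Since $C$ and each $A_i$ is the direct sum of some $B_g^{(\gamma)}$ with a zero block, and since $s_g$ is by definition the maximum row-sparsity of the matrices $B_g^{(\gamma)}$, we have $s \le s_g$. To prove the additional remark that $s_g$ is not larger than the maximum number of nonzero coefficients of any $g_i$, I would use the block-diagonal form \eqref{eq.tilde.B.gamma}, noting that each $B_i^{(\gamma-\zeta)}$ has at most one nonzero entry per row by Lemma~\ref{lem:properties.B.gamma}\eqref{unique.nonzero.per.row}, so a single row of the $i$-th block of $B_g^{(\gamma)}$ can receive at most one contribution per nonzero coefficient $g_{i,\zeta}$.

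Finally, I would plug $M$, $N$ and $s \le s_g$ into the classical and quantum complexities of Lemma~\ref{lem:complex.binary.search}, namely $O((N^\omega + M s N)\varepsilon^{-2})$ and $O(s(\sqrt{M}+\sqrt{N}\varepsilon^{-1})\varepsilon^{-4})$, absorbing the $+1$'s in $M$ and $N$ into the $O(\cdot)$ and noting that $\sqrt{M}\le \binom{n+2k}{n}^{1/2}$. This yields precisely the stated bounds. I do not anticipate any real obstacle: the step that warrants a careful line of justification is the sparsity bound, because the definition of $B_g^{(\gamma)}$ is a weighted sum of the elementary matrices $B_i^{(\gamma-\zeta)}$, so one must argue that the supports of these summands interact well enough that the row count is controlled by $s_g$ rather than by a cruder product of $s_g$ with $m+1$.
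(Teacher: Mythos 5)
Your proof is correct and follows essentially the same route as the paper's: identify Step~\ref{step:hamilton.update.cons.exact} as the bottleneck, read off $M=\binom{n+2k}{n}-1$, $N=1+\sum_{i=0}^m\binom{n+k-d_i}{n}$, and $s\le s_g$ from the constructed SDP, then plug into Lemma~\ref{lem:complex.binary.search}. The small worry you raise at the end about a ``product of $s_g$ with $m+1$'' is resolved immediately by the block-diagonal structure of $B_g^{(\gamma)}$---each row lies entirely within one block, so row-sparsity is a maximum over blocks (each bounded by $|\supp(g_i)|$ via Lemma~\ref{lem:properties.B.gamma}\eqref{unique.nonzero.per.row}), not a sum; this is exactly what Lemma~\ref{lem:properties.data.A.cons.exact}\eqref{sparsity.A} in the paper records.
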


The proof of Lemma \ref{lem:complexity.con.pop.exact} is given in Appendix \ref{proof:lem:complexity.con.pop.exact}.

\subsection{Inequality-constrained polynomial optimization over the simplex}
\label{sec:ineq.cons.pop.simplex}

We relax the equality affine constraints in SDP \eqref{eq:sdp.relaxation.cons.pop} to obtain the following SDP:
\begin{equation}\label{eq:sdp.relaxation.cons.pop.ineq.aff}
\begin{array}{rl}
\hat \lambda_k=\sup\limits_{\lambda,G} & \lambda\\
\text{s.t.}& \lambda\in\R\,,\,G=\diag(G_0,\dots,G_m)\succeq 0\,,\\[5pt]
&f_0-\lambda\ge\sum_{i=0}^m  g_{i,0} (G_i)_{0,0}\,,\\[5pt]
&f_\gamma\ge\sum_{i=0}^m \sum\limits_{
\arraycolsep=1.4pt\def\arraystretch{.7}
\begin{array}{cc}
\scriptstyle \xi\in\N^n_{2d_i}\\
\scriptstyle \gamma-\xi\in \N^n
\end{array}
} g_{i,\xi} \sum\limits_{
\arraycolsep=1.4pt\def\arraystretch{.7}
\begin{array}{cc}
\scriptstyle \alpha,\beta\in\N^n_{k-d_i}\\
\scriptstyle \alpha+\beta=\gamma-\xi
\end{array}
} (G_i)_{\alpha,\beta}\,,\,\gamma\in\N^n_{2k}\backslash \{0\}\,.
\end{array}
\end{equation}
Then $\lambda_k\le \hat \lambda_k$, and if $S(g)\subset \R_+^n$, $\hat \lambda_k\le f^\star$.
Assume that $\underline{\lambda}\le \hat \lambda_k$ is given.

The duals of \eqref{eq:sdp.relaxation.cons.pop.ineq.aff} reads as:
\begin{equation}\label{eq:sdp.relaxation.cons.pop.dual.ineq.aff}
\begin{array}{rl}
\hat \tau_k=\min\limits_y& L_y(f)\\
\text{s.t.}& y=(y_\alpha)_{\alpha\in\N^n_{2k}}\subset \R_+\,,\\
&M_k(y)\succeq 0\,,\,y_0=1\,,\\
&M_{k-d_i}(g_iy)\succeq 0\,,\,i=1,\dots,m\,.\\
\end{array}
\end{equation}

\begin{algorithm}\label{alg:con.pop.exact.ineq.aff}
Inequality-constrained polynomial optimization
\begin{itemize}
\item Input: $\varepsilon>0$, $\underline \lambda\in\R$, $a\in S(g)$,\\
\phantom{1.1cm} coefficients $(f_\alpha)_{\alpha\in\N^n_{2k}}$ of $f\in\R[x]$,\\
\phantom{1.1cm} coefficients $(g_{i,\alpha})_{\alpha\in\N^n_{2d_i}}$ of $g_i\in\R[x]$, for $i=0,\dots,m$.
\item Output: $\hat\lambda_k^{(\varepsilon)}\in\R$ and $G_\varepsilon\in\mathbb S^N$.
\end{itemize}
\begin{enumerate}
\item Convert relaxation to standard semidefinite program
\begin{enumerate}
\item Let $\N^n_{2k}=\{\gamma_0,\gamma_1,\dots,\gamma_{M}\}$ with $\gamma_0=0$, and set $M=|\N^n_{2k}|-1=\binom{n+2k}n -1$.
\item Define $b_i=f_{\gamma_i}$, $\tilde A_i=  B_g^{(\gamma_i)}$, for $i=0,\dots,M$, $\tilde C=\tilde A_0=  B_g^{(\gamma_0)}=  B_g^{(0)}$, $\tilde N=\sum_{i=0}^m \binom{n+k-d_i}n$.
\item Define $C=\diag(\tilde C,0)$, $A_i=\diag(\tilde A_i,0)$, for $i=1,\dots,M$, $N=\tilde N+1$.

\item Set $\lambda_{\min}=f_0-f(a)$ and $\lambda_{\max}=f_0-\underline \lambda$.
\end{enumerate}

\item\label{step:hamilton.update.cons.exact.ineq.aff} Run Algorithm \ref{alg:Binary.search.HU.ineq.aff} (Binary search using Hamiltonian updates) 
to get $\underline \lambda_T\in\R$ and $X_\varepsilon=((X_\varepsilon)_{ij})_{i,j=1,\dots,N}\in\mathbb S^N$.

\item Extract approximate results: Set $\hat \lambda_k^{(\varepsilon)}=f_0-\underline\lambda_T$ and $G_\varepsilon=((X_\varepsilon)_{ij})_{i,j=1,\dots,\tilde N}$.
\end{enumerate}
\end{algorithm}

\begin{theorem}\label{theo:convergence.ineq.cons.ineq.aff}
Let $\varepsilon>0$, $f\in\R[x]_{2d}$ and $g_i\in\R[x]_{2d_i}$, for $i=0,\dots,m$, with $g_0=1$.
Assume $S(g)\subset \R_+^n$.
Suppose Assumptions \ref{ass:cond.bounded.trace.2} and \ref{ass:rescale.g} are satisfied.
Assume strong duality holds for the semidefinite relaxations \eqref{eq:sdp.relaxation.cons.pop.ineq.aff}-\eqref{eq:sdp.relaxation.cons.pop.dual.ineq.aff}, and let $\hat \lambda_k=f^\star$.
Additionally, suppose the polynomial optimization problem \eqref{eq:cons.pop} has an optimal solution  $x^\star$ satisfying $\|x^\star\|_{\ell_1}\le r$, where $r\in (0,1)$.
Let $\hat \lambda_k^{(\varepsilon)}$ be the value returned by Algorithm \ref{alg:con.pop.exact.ineq.aff}.
Then, the following bound holds:
\begin{equation}
0\le \hat \lambda_k^{(\varepsilon)}-\hat \lambda_k\le\varepsilon\left(2+\frac{r}{1-r} \right)\,.
\end{equation}
\end{theorem}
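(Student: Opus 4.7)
The strategy mirrors the proof of Theorem~\ref{theo:convergence.ineq.cons}, adapted to the inequality-affine relaxation~\eqref{eq:sdp.relaxation.cons.pop.ineq.aff}. First, I would recast the SDP in standard form using the coefficient-matching matrices $B_g^{(\gamma)}$ from~\eqref{eq.tilde.B.gamma}: writing $G=\diag(G_0,\dots,G_m)$, the constraints of~\eqref{eq:sdp.relaxation.cons.pop.ineq.aff} are precisely $\tr(B_g^{(0)}G)\le f_0-\lambda$ and $\tr(B_g^{(\gamma)}G)\le f_\gamma$ for $\gamma\neq 0$, yielding an SDP with one-sided affine constraints in $G$. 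Under Assumption~\ref{ass:cond.bounded.trace.2}, the argument of Lemma~\ref{lem:cond.bounded.trace.1} (with $D_k(g\mu)$ replacing $M_k(\mu)$) shows that any feasible $G$ satisfies $\tr(G)\le 1$, which I then promote to a constant-trace problem $\tr(X)=1$ on $\mathbb S^N$ by appending a nonnegative slack entry, exactly as in Lemma~\ref{lem:equivalent.sol.uncons}. The sparsity bound $s_g$ and the norm bound $-I\preceq B_g^{(\gamma)}\preceq I$ (via Assumption~\ref{ass:rescale.g} and Lemma~\ref{lem:properties.B.gamma}\eqref{bound.eigenvalue}) then give the input normalization required by the Hamiltonian-updates framework.

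Next, I would produce an explicit optimal dual solution from the Dirac measure at $x^\star$. Set $y^\star=((x^\star)^\alpha)_{\alpha\in\N^n_{2k}}$. Since $x^\star\in S(g)\subset\R^n_+$, every coordinate of $y^\star$ is nonnegative, $M_k(y^\star)=v_k(x^\star)v_k(x^\star)^\top\succeq 0$, and $M_{k-d_i}(g_iy^\star)=g_i(x^\star)\,v_{k-d_i}(x^\star)v_{k-d_i}(x^\star)^\top\succeq 0$, so $y^\star$ is feasible for~\eqref{eq:sdp.relaxation.cons.pop.dual.ineq.aff}; moreover $L_{y^\star}(f)=f(x^\star)=f^\star=\hat\lambda_k=\hat\tau_k$ by hypothesis and strong duality, so $y^\star$ is optimal. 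The multinomial-theorem bound used in Theorem~\ref{theo:accuarcy.approximate.val.uncons} then gives $\sum_{\gamma\ne 0}y^\star_\gamma\le r/(1-r)$.

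The remaining step is the inequality-constrained analog of the accuracy machinery in Lemmas~\ref{lem:obj.at.X}--\ref{lem:convergence.sdp}. For the standard-form primal with constraints $\tr(A_iX)\le b_i$, the dual has nonnegative multipliers $\xi^\star\ge 0$, and the KKT argument of Lemma~\ref{lem:obj.at.X} yields the one-sided inequality
\[
\tr(CX)\ge \hat\lambda^\star-\|\xi^\star\|_{\ell_1}\,\max_{i}\bigl(\tr(A_iX)-b_i\bigr)_+,
\]
which matches the one-sided violation that the Hamiltonian-updates subroutine (Algorithm~\ref{alg:HU}) already certifies via $X_\varepsilon\in S^{(\lambda)}_\varepsilon$ (without the need for the symmetric constraint $-\tr(A_iX)\le -b_i+\varepsilon$ used in the equality setting). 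Adapting Lemma~\ref{lem:equivalent.mom.relax.0} to the one-sided setting identifies $\xi^\star_i=y^\star_{\gamma_i}\ge 0$, so $\|\xi^\star\|_{\ell_1}=\sum_{\gamma\ne 0}y^\star_\gamma\le r/(1-r)$. Plugging this into the binary-search contraction (Lemmas~\ref{lem:induc.ineq}--\ref{lem:monotoncity}) and the induction of Lemmas~\ref{lem:zero.lower.bound}--\ref{lem:upper.bound} delivers the announced two-sided bound $0\le \hat\lambda_k^{(\varepsilon)}-\hat\lambda_k\le \varepsilon(2+r/(1-r))$.

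The main obstacle is the bookkeeping for Algorithm~\ref{alg:Binary.search.HU.ineq.aff}, which is referenced but not spelled out in the excerpt. I would need to verify that its inner Hamiltonian-updates step preserves a one-sided feasibility guarantee: infeasibility of the relaxed set certifies $\hat\lambda^\star>\lambda_t$, whereas a successful termination returns $X_\varepsilon$ satisfying $\tr(A_iX_\varepsilon)\le b_i+\varepsilon$ and $\tr(CX_\varepsilon)\le \lambda_t+\varepsilon$. Once this is in place, the sign alignment of $\xi^\star\ge 0$ with the one-sided violation obviates the symmetric constraints used in the equality case, and the rest of the accuracy analysis proceeds verbatim. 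A minor but necessary check is that the slack-variable lifting used in Lemma~\ref{lem:equivalent.sol.uncons} continues to preserve the optimal value after relaxation to constant trace, which is immediate because the appended row/column of each $A_i$ is zero and hence cannot introduce additional violations.
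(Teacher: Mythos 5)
Your proposal is correct and takes essentially the same route as the paper's proof: converting the relaxed SDP to standard form via the $B_g^{(\gamma)}$ matrices with the bounded-trace lift, using the Dirac measure at $x^\star$ (nonnegative since $S(g)\subset\R^n_+$) as the optimal dual certificate with the multinomial bound $\sum_{\gamma\neq 0}y^\star_\gamma\le \frac{r}{1-r}$, and invoking the one-sided KKT/Hamiltonian-updates accuracy machinery (the paper's Lemma~\ref{lem:obj.at.X.ineq.aff}, Lemma~\ref{lem:equivalent.mom.relax.ineq.aff}, Lemma~\ref{lem:convergence.sdp.ineq.aff} and Algorithm~\ref{alg:Binary.search.HU.ineq.aff}), modulo an equivalent sign convention for the dual multipliers. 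The only step to make explicit is the trace bound: since the affine constraints are now one-sided, the Lemma~\ref{lem:cond.bounded.trace.1}-type argument must weight the coefficientwise inequalities by the nonnegative moments of $\mu$ supported on $S(g)\subset\R^n_+$, which is exactly how the paper's Lemma~\ref{lem:cond.bounded.trace.2.ineq.aff} proceeds.
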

The proof of Theorem \ref{theo:convergence.ineq.cons.ineq.aff} is given in Appendix \ref{proof:theo:convergence.ineq.cons.ineq.aff}.

\begin{theorem}\label{theo:convergence.over.simplex.ineq.aff}
Let $\varepsilon>0$, $f\in\R[x]_{2d}$ and $g_i\in\R[x]_{2d_i}$, for $i=0,\dots,m$, with $g_0=1$.
Assume $S(g)\subset \R_+^n$.
Let $r\in(0,1)$.
Assume there exist indices $i_1,i_2\in\{1,\dots,m\}$ such that $g_{i_j}=c_j[r^j-(x_1+\dots+x_n)^j]$ with $c_j>0$ for $j=1,2$.
Suppose that Assumptions \ref{ass:cond.bounded.trace.2} and \ref{ass:rescale.g} hold.
Further, assume that strong duality is satisfied for the semidefinite relaxations \eqref{eq:sdp.relaxation.cons.pop.ineq.aff}-\eqref{eq:sdp.relaxation.cons.pop.dual.ineq.aff}.
Let $\hat \lambda_k^{(\varepsilon)}$ be the value returned by Algorithm \ref{alg:con.pop.exact.ineq.aff}.
Then, $\lambda_k\le \hat \lambda_k\le f^\star$ and the following bound holds:
\begin{equation}
0\le \hat \lambda_k^{(\varepsilon)}-\hat \lambda_k\le\varepsilon\left(2+\frac{r}{1-r} \right)\,.
\end{equation}
\end{theorem}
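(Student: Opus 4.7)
The strategy parallels that of Theorem \ref{theo:convergence.over.simplex}, but exploits the crucial simplification that relaxing the affine equalities to one-sided inequalities in \eqref{eq:sdp.relaxation.cons.pop.ineq.aff} automatically forces the dual moment variables to be nonnegative, which makes the auxiliary constraints $\{x^\alpha\}_{\alpha\in\{0,1\}^n\setminus\{0\}}$ from the simplex theorem unnecessary. I first verify $\lambda_k \le \hat\lambda_k \le f^\star$: the left inequality is immediate because every primal-feasible point of \eqref{eq:sdp.relaxation.cons.pop} is primal-feasible for \eqref{eq:sdp.relaxation.cons.pop.ineq.aff}; for the right one, let $x^\star\in S(g)$ attain $f^\star$, and observe that since $S(g)\subset\R_+^n$, the Dirac moments $y^\star_\alpha=(x^\star)^\alpha$ are componentwise nonnegative and satisfy $M_k(y^\star)\succeq 0$ and $M_{k-d_i}(g_iy^\star)\succeq 0$. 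Hence $y^\star$ is dual-feasible for \eqref{eq:sdp.relaxation.cons.pop.dual.ineq.aff} with value $f^\star$, giving $\hat\tau_k\le f^\star$, and strong duality yields $\hat\lambda_k\le f^\star$.

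\medskip
For the accuracy estimate, I convert \eqref{eq:sdp.relaxation.cons.pop.ineq.aff} into standard form using the coefficient-matching matrices $B_g^{(\gamma)}$, exactly as in Algorithm \ref{alg:con.pop.exact.ineq.aff}. The analogues of Lemmas \ref{lem:equi.sdp.relax.uncons}--\ref{lem:properties.data.uncons} transfer with only cosmetic changes: the equality constraints become one-sided inequalities, so the standard-form dual multipliers are restricted to be nonnegative. The inequality-constrained variant of binary search with Hamiltonian updates then yields, by an adaptation of Lemma \ref{lem:convergence.sdp}, the bound $0\le \hat\lambda_k^{(\varepsilon)}-\hat\lambda_k\le \varepsilon(2+\|\xi^\star\|_{\ell_1})$, where $\xi^\star$ corresponds, via the analogue of Lemma \ref{lem:equivalent.mom.relax.0}, to an optimal moment sequence $y^\star$ of \eqref{eq:sdp.relaxation.cons.pop.dual.ineq.aff} through $\xi^\star_i=y^\star_{\gamma_i}$ for $\gamma_i\ne 0$. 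The nonnegativity $y^\star\ge 0$ built into \eqref{eq:sdp.relaxation.cons.pop.dual.ineq.aff} then gives $\|\xi^\star\|_{\ell_1}=\sum_{\gamma\in\N^n_{2k}\setminus\{0\}} y^\star_\gamma$.

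\medskip
The heart of the argument is bounding this sum by $r/(1-r)$. I claim $L_{y^\star}((x_1+\dots+x_n)^j)\le r^j$ for every $0\le j\le 2k$, proved by induction. The base case $j=0$ reduces to $y^\star_0=1$. For $j=2t$ with $1\le t\le k$, apply the PSD localizing constraint $M_{k-d_{i_2}}(g_{i_2}y^\star)\succeq 0$ to the vector of coefficients of $(x_1+\dots+x_n)^{t-1}$, which has degree $t-1\le k-1=k-d_{i_2}$ (since $d_{i_2}=1$); this yields $L_{y^\star}(g_{i_2}(x_1+\dots+x_n)^{2t-2})\ge 0$, i.e., $L_{y^\star}((x_1+\dots+x_n)^{2t})\le r^2\,L_{y^\star}((x_1+\dots+x_n)^{2t-2})\le r^{2t}$ by the inductive hypothesis. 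For $j=2t+1$ with $0\le t\le k-1$, the same trick applied to $g_{i_1}$ gives $L_{y^\star}((x_1+\dots+x_n)^{2t+1})\le r\,L_{y^\star}((x_1+\dots+x_n)^{2t})\le r^{2t+1}$. Expanding via the multinomial theorem and using $\binom{j}{\gamma}\ge 1$ together with $y^\star_\gamma\ge 0$ gives $\sum_{|\gamma|=j} y^\star_\gamma\le L_{y^\star}((x_1+\dots+x_n)^j)\le r^j$, and summing over $j=1,\dots,2k$ yields $\|\xi^\star\|_{\ell_1}\le \sum_{j\ge 1} r^j=r/(1-r)$.

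\medskip
The delicate point is not the moment estimate itself, which is clean once nonnegativity is in place, but rather verifying that the standard-form reduction and the accuracy analysis leading to Lemma \ref{lem:convergence.sdp} carry over to the one-sided inequality setting: the KKT derivation behind Lemma \ref{lem:obj.at.X} must be redone with the multiplier $\xi$ restricted to the nonnegative orthant, which changes sign bookkeeping in the Lagrangian but still produces the $\varepsilon(2+\|\xi^\star\|_{\ell_1})$ bound. Combining this with the moment estimate above completes the proof.
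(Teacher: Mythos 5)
Your proposal is correct and follows essentially the same route as the paper: nonnegativity of the dual moments is built into \eqref{eq:sdp.relaxation.cons.pop.dual.ineq.aff}, the localizing constraints for $g_{i_1},g_{i_2}$ give $L_{y^\star}\bigl((x_1+\dots+x_n)^j\bigr)\le r^j$ by induction, the multinomial expansion then yields $\sum_{\gamma\neq 0}y^\star_\gamma\le \tfrac{r}{1-r}$, and the inequality-constrained Hamiltonian-updates accuracy bound (the paper's Lemmas \ref{lem:equivalent.mom.relax.ineq.aff} and \ref{lem:convergence.sdp.ineq.aff}, with sign-constrained multipliers exactly as you redo the KKT step) delivers $\varepsilon\bigl(2+\tfrac{r}{1-r}\bigr)$. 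Your induction is a marginally more direct one-/two-step recursion than the paper's telescoping identities, and your explicit dual (Dirac-moment) verification of $\lambda_k\le\hat\lambda_k\le f^\star$ matches the claim the paper establishes via the primal formulation, so there is no substantive difference.
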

The proof of Theorem \ref{theo:convergence.over.simplex.ineq.aff} is given in Appendix \ref{proof:theo:convergence.over.simplex.ineq.aff}.

\begin{lemma}[Complexity]\label{lem:complexity.con.pop.exact.ineq.aff}
Let $k\ge \max\{d,d_0,\dots,d_m\}$.
To run Algorithm \ref{alg:con.pop.exact.ineq.aff}, the classical computer requires
\begin{equation}
O\left( \left[ s_g\binom{n+2k}n\sum_{i=0}^m\binom{n+k-d_i}n +\left( \sum_{i=0}^m\binom{n+k-d_i}n\right)^\omega\right] \frac{1}{\varepsilon^2}\right)
\end{equation}
operations, where $\omega\approx2.373$, while the quantum computer requires
\begin{equation}
O\left(s_g\left[\binom{n+2k}{n}^{1/2}+\left(\sum_{i=0}^m \binom{n+k-d_i}n\right)^{1/2}\frac{1}{\varepsilon}\right]\frac{1}{\varepsilon^4}\right)
\end{equation}
operations, where $s_g$ is the maximum number of nonzero entries per row of $B_g^{(\gamma)}$ (defined as in \eqref{eq.tilde.B.gamma}), $\gamma\in\N^n_{2k}$, which is not larger than the maximum number of nonzero coefficients of each $g_i$, $i=1,\dots,m$.
\end{lemma}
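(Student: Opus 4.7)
The plan is to follow the same strategy as in the proof of Lemma~\ref{lem:complexity.con.pop.exact}, since Algorithm~\ref{alg:con.pop.exact.ineq.aff} differs from Algorithm~\ref{alg:con.pop.exact} only in Step~\ref{step:hamilton.update.cons.exact.ineq.aff}, where the binary search using Hamiltonian updates is adapted to handle inequality rather than equality affine constraints in \eqref{eq:sdp.relaxation.cons.pop.ineq.aff}. The key observation is that the per-iteration complexity of Hamiltonian updates depends only on the size $N$ of the matrix variable, the number $M$ of (inequality) affine constraints, and the sparsity $s$ of the matrices $C, A_1,\dots,A_M$; it is insensitive to whether one imposes $\tr(A_i X)\le b_i$ or $\tr(A_i X)= b_i$, since an equality constraint is handled as two inequality constraints (doubling $M$ by a constant factor). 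Consequently, the complexity bound inherited from Lemma~\ref{lem:complex.binary.search} is the same up to constants.

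First, I would read off the problem dimensions directly from the algorithm: in Step~1 of Algorithm~\ref{alg:con.pop.exact.ineq.aff}, the conversion yields $M=\binom{n+2k}{n}-1$ affine constraints and a matrix of size $N=\sum_{i=0}^m\binom{n+k-d_i}{n}+1$. The cost of running the converted binary search using Hamiltonian updates (to accuracy $\varepsilon$) is then, by Lemma~\ref{lem:complex.binary.search},
\[
O\!\left((N^\omega + M s N)\varepsilon^{-2}\right)
\quad\text{classically},
\qquad
O\!\left(s\bigl(\sqrt{M}+\sqrt{N}\varepsilon^{-1}\bigr)\varepsilon^{-4}\right)
\quad\text{quantumly},
\]
where $s$ denotes the maximum number of nonzero entries per row of $C$ and the $A_i$'s. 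Substituting the above expressions for $M$ and $N$ (and absorbing the additive $+1$ into $O(\cdot)$) yields exactly the claimed bounds.

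Next, I would bound the sparsity parameter $s$. By construction, $C$ and each $A_i$ is obtained by padding a block-diagonal matrix of the form $B_g^{(\gamma_i)}$ by a zero row and column, so their row sparsities coincide with those of $B_g^{(\gamma_i)}$. From the definition \eqref{eq.tilde.B.gamma}, $B_g^{(\gamma)}$ is block-diagonal with blocks $\sum_{\zeta\in\supp(g_i),\,\gamma-\zeta\in\N^n} g_{i,\zeta} B_i^{(\gamma-\zeta)}$. By Lemma~\ref{lem:properties.B.gamma}\eqref{unique.nonzero.per.row}, each $B_i^{(\gamma-\zeta)}$ has at most one nonzero entry per row, so each block of $B_g^{(\gamma)}$ has at most $|\supp(g_i)|$ nonzeros per row, and therefore $s\le s_g\le \max_{i=1,\dots,m}|\supp(g_i)|$. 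Moreover the remaining steps of Algorithm~\ref{alg:con.pop.exact.ineq.aff} (setting up $M$, $N$, $\lambda_{\min}$, $\lambda_{\max}$, and extracting $\hat\lambda_k^{(\varepsilon)}$ and $G_\varepsilon$) take only polynomial preprocessing time, which is dominated by the Hamiltonian-updates cost.

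The main potential obstacle is justifying that Algorithm~\ref{alg:Binary.search.HU.ineq.aff} obeys the same complexity as Algorithm~\ref{alg:Binary.search.HU}; this reduces to noting that the test $\tr(A_i X)\le b_i+\varepsilon$ in Step~2 of Algorithm~\ref{alg:HU} is unchanged, and that an equality constraint is only a pair of inequality constraints. Beyond that, the proof is a direct substitution argument. Combining the complexity from Lemma~\ref{lem:complex.binary.search} with the bound $s\le s_g$ yields the stated classical and quantum running times, completing the proof. \eoproof
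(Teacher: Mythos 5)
Your proposal is correct and follows essentially the same route as the paper: the paper proves this lemma by the argument of Lemma~\ref{lem:complexity.con.pop.exact}, i.e., the dominant cost is the binary search with Hamiltonian updates, whose classical and quantum bounds (stated for the inequality-constraint variant in Lemma~\ref{lem:complex.binary.search.ineq.aff}) are instantiated with $M=\binom{n+2k}{n}-1$, $N=1+\sum_{i=0}^m\binom{n+k-d_i}{n}$, and $s=s_g$ via Lemma~\ref{lem:properties.data.A.cons.exact.ineq.aff}. Your only deviation is deriving the inequality-case complexity from Lemma~\ref{lem:complex.binary.search} by a remark on equality-vs-inequality constraints rather than citing the dedicated Lemma~\ref{lem:complex.binary.search.ineq.aff}, which is an inessential difference.
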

The proof of Lemma \ref{lem:complexity.con.pop.exact.ineq.aff} is similar to the proof of Lemma \ref{lem:complexity.con.pop.exact}.

\subsection{Application for portfolio optimization}
\label{sec:app.portfolio}
Portfolio optimization is a fundamental problem in quantitative finance, where the goal is to allocate wealth among a set of assets in order to balance expected return and risk. A common model for the risk is based on the covariance matrix of asset returns. The following optimization problem considers a normalized version of the mean-variance formulation:
\begin{equation}\label{eq:portfolio.prob}
\begin{array}{rl}
\eta=\min\limits_{z\in\R^n}& \frac{1}{\|Q\|_{1,1}}z^\top Q z\\
\text{s.t.} &w^\top z\ge \kappa\,,\\
&z_1+\dots+z_n\le \zeta\,,\\
&z_j\ge 0\,,\,j=1,\dots,n\,,
\end{array}
\end{equation}
where $Q=(Q_{ij})_{i,j=1,\dots,n}$ denotes the covariance matrix of the asset returns, and $\|Q\|_{1,1}=\sum_{i,j=1}^n |Q_{ij}|$ is its entrywise 1-norm used for normalization.

In theory, $Q$ is positive semidefinite (psd).
However, in practice, due to sampling noise, estimation errors, or regularization (e.g., shrinkage techniques), $Q$ may fail to be psd. In that case, problem \eqref{eq:portfolio.prob} becomes nonconvex.

Throughout, we assume that $\zeta$ is a constant independent of $n$ and that problem \eqref{eq:portfolio.prob} admits at least one optimal solution.

When $Q$ is psd, \eqref{eq:portfolio.prob} is a convex quadratic program, and hence a polynomial optimization problem. To the best of our knowledge, interior-point methods is the fastest known classical algorithms for solving convex quadratic programs \eqref{eq:portfolio.prob}  with complexity $O(n^{\omega+1} \log (1/\varepsilon))$, $\omega\approx 2.373$ (see \cite{kerenidis2019quantum}).

In contrast, our quantum algorithm presented in the previous section can approximate the optimal value $\eta$ with accuracy $\varepsilon>0$  in complexity $O(n\varepsilon^{-4}+\sqrt{n}\varepsilon^{-5})$ when $Q$ is psd and assuming we have a block encoding of $Q$ in both cases. 
If $Q$ is not psd, the method still yields an approximate optimal value for the first-order relaxation of \eqref{eq:portfolio.prob} with the same complexity on a quantum computer. Thus, we obtain a quantum speedup over rigorous classical algorithms in the dimension, at the cost of a worse dependency on the precision.

Next, by setting $x=\frac{r}\zeta z$ with $r\in (0,1)$, problem \eqref{eq:portfolio.prob} can be equivalently written as
\begin{equation}\label{eq:portfolio.prob2}
\begin{array}{rl}
\frac{r^2\eta}{\zeta^2}=\min\limits_{x\in\R^n}& \frac{1}{\|Q\|_{1,1}}x^\top Q x\\
\text{s.t.} &\zeta w^\top x \ge r\kappa\\
& x_1+\dots+x_n\le r\,,\\
&x_j\ge 0\,,\,j=1,\dots,n\,,
\end{array}
\end{equation}

For our method to be applicable, we must construct a measure supported on the feasible domain of problem \eqref{eq:portfolio.prob2} such that its moments can be efficiently computed. These moments will be used to verify Assumption~\ref{ass:cond.bounded.trace.2}.

To this end, we select the Lebesgue measure on a small hyper-rectangular box $[t_1,t_2]^n\subset \R^n$, where the bounds $t_1,t_2$ are defined as
\begin{equation}
t_1=\frac{r}{3n}\qquad\text{ and }\qquad t_2=\frac{r}{2n}\,.
\end{equation}
Clearly, $0<t_1<t_2<\frac{r}{n}$, and we have
\begin{equation}
t_1+t_2=\frac{5r}{6n}\quad\text{ and }\quad t_2-t_1=\frac{r}{6n}\,.
\end{equation}
It is straightforward to verify that every point in the box $[t_1,t_2]^n$ satisfies the last $n+1$ affine constraints of problem \eqref{eq:portfolio.prob2}.
To ensure that these points also satisfy the first affine constraint, we introduce the following assumption.
\begin{assumption}\label{ass:contain.box}
The box $[t_1,t_2]^n$ lies within the feasible region of problem~\eqref{eq:portfolio.prob2}, i.e.,
    \begin{equation}\label{eq:feasible.box}
\frac{\zeta w^\top x - r\kappa}{\zeta \|w\|_1+r\kappa}\ge \frac{1}{n^2} \,,\,\forall x\in [t_1,t_2]^n.
\end{equation}
\end{assumption}
Figure~\ref{fig:contain.box} illustrates this box construction in the two-dimensional case.

If $w\in \R^n_+$ and the inequality
\begin{equation}
\frac{\zeta t_1 \|w\|_1 - r\kappa}{\zeta \|w\|_1+r\kappa}\ge \frac{1}{n^2}
\end{equation}
holds, the condition~\eqref{eq:feasible.box} is satisfied.
In particular, if $w\in \R^n_+$ satisfies $n^2\ge \|w\|_1\ge n$ and $\zeta-3\kappa>0$, then \eqref{eq:feasible.box} holds with a constant factor independent of $n$.

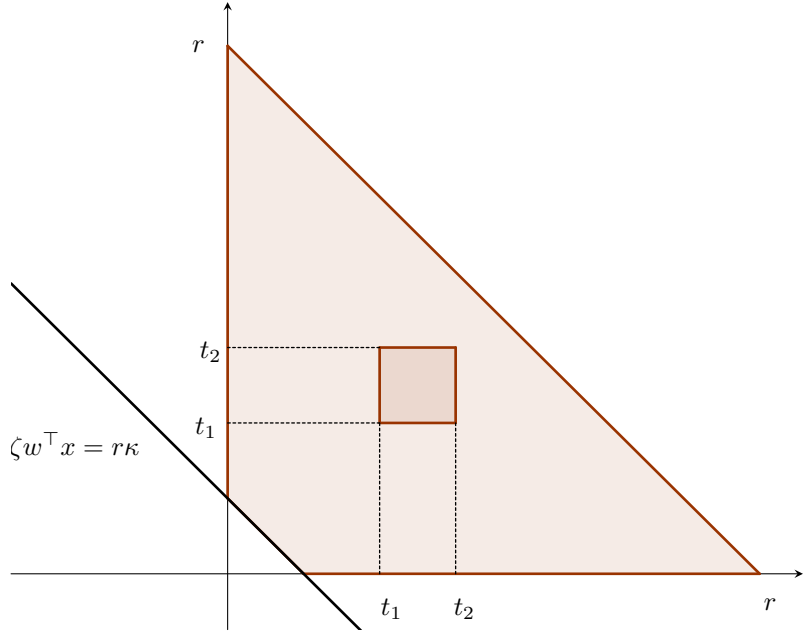
\begin{figure}
    \centering
    \definecolor{zzttqq}{rgb}{0.6,0.2,0}
\begin{tikzpicture}[line cap=round,line join=round,>=triangle 45,x=1cm,y=1cm]
\begin{axis}[
x=1cm,y=1cm,
axis lines=middle,
xmin=-2.8500000000000036,
xmax=7.570000000000002,
ymin=-0.7399999999999991,
ymax=7.580000000000004,
xtick={-3,8},
ytick={-1,8},]
\clip(-2.85,-0.74) rectangle (7.57,7.58);
\fill[line width=1pt,color=zzttqq,fill=zzttqq,fill opacity=0.10000000149011612] (0,7) -- (0,1) -- (1,0) -- (7,0) -- cycle;
\fill[line width=1pt,color=zzttqq,fill=zzttqq,fill opacity=0.10000000149011612] (2,3) -- (2,2) -- (3,2) -- (3,3) -- cycle;
\draw [line width=1pt,color=zzttqq] (0,7)-- (0,1);
\draw [line width=1pt,color=zzttqq] (0,1)-- (1,0);
\draw [line width=1pt,color=zzttqq] (1,0)-- (7,0);
\draw [line width=1pt,color=zzttqq] (7,0)-- (0,7);
\draw [line width=1pt,domain=-2.85:7.57] plot(\x,{(--1-1*\x)/1});
\draw [line width=1pt,color=zzttqq] (2,3)-- (2,2);
\draw [line width=1pt,color=zzttqq] (2,2)-- (3,2);
\draw [line width=1pt,color=zzttqq] (3,2)-- (3,3);
\draw [line width=1pt,color=zzttqq] (3,3)-- (2,3);
\draw [line width=0.5pt,dash pattern=on 1pt off 1pt] (2,3)-- (0,3);
\draw [line width=0.5pt,dash pattern=on 1pt off 1pt] (0,2)-- (2,2);
\draw [line width=0.5pt,dash pattern=on 1pt off 1pt] (2,2)-- (2,0);
\draw [line width=0.5pt,dash pattern=on 1pt off 1pt] (3,2)-- (3,0);
\draw (1.89,-0.2) node[anchor=north west] {$t_1$};
\draw (2.85,-0.2) node[anchor=north west] {$t_2$};
\draw (-0.55,2.2) node[anchor=north west] {$t_1$};
\draw (-0.49,3.2) node[anchor=north west] {$t_2$};
\draw (-3,2) node[anchor=north west] {$\zeta w^\top x = r\kappa$};
\draw (-0.59,7.2) node[anchor=north west] {$r$};
\draw (6.93,-0.2) node[anchor=north west] {$r$};
\end{axis}
\end{tikzpicture}
    \caption{Plot of Assumption \ref{ass:contain.box}.}
    \label{fig:contain.box}
\end{figure}

The following theorem establishes the error bound and computational complexity of our proposed method when applied to the portfolio optimization problem.

\begin{theorem}\label{theo:apply.portfolio}
    Consider portfolio optimization problem \eqref{eq:portfolio.prob} with covariance matrix $Q$ and optimal value $\eta\ge 0$. 
Let $\lambda_1$ denote the optimal value of the first order SOS relaxation  for problem \eqref{eq:portfolio.prob}, and  assume that $\lambda_1\ge 0$.
    Suppose that Assumption~\ref{ass:contain.box} holds. Then, for any given accuracy parameter $\varepsilon>0$, the following statements are true:
\begin{enumerate}[(i)]
\item\label{general.case}(General case) Our Algorithm~\ref{alg:con.pop.exact.ineq.aff} returns an approximate value in the interval $[\lambda_1,\eta+\varepsilon]$ with computational complexity $O(n^{3}\varepsilon^{-2})$ on a classical computer, and $O(n\varepsilon^{-4}+\sqrt{n}\varepsilon^{-5})$ on a quantum computer.
\item \label{convex.case} (Convex case)  If $Q$ is positive semidefinite, our Algorithm~\ref{alg:con.pop.exact.ineq.aff} provides an approximate value in the tighter range $[\eta,\eta + \varepsilon]$ with computational complexity  
$O(n^{3}\varepsilon^{-2})$
on a classical computer, and  
$O(n\varepsilon^{-4}+\sqrt{n}\varepsilon^{-5})$
on a quantum computer.

\end{enumerate}    
\end{theorem}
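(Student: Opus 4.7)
The plan is to cast the rescaled portfolio problem~\eqref{eq:portfolio.prob2} as an instance of~\eqref{eq:cons.pop} satisfying the hypotheses of Theorem~\ref{theo:convergence.over.simplex.ineq.aff} at the first relaxation order $k=1$, and then translate the resulting error bound and complexity back to~\eqref{eq:portfolio.prob}. Fix $r\in(0,1)$, say $r=1/2$. I take the constraint list to be $g_0=1$, the portfolio inequality $\zeta w^{\top}x-r\kappa$, the linear and quadratic simplex constraints $c_1[r-(x_1+\cdots+x_n)]$ and $c_2[r^2-(x_1+\cdots+x_n)^2]$, and the $n$ nonnegativity constraints $x_j$. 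The constants $c_1,c_2$ and an overall rescaling of the portfolio inequality are chosen so that Assumption~\ref{ass:rescale.g} holds. Since $S(g)\subset \R_+^n$ and both simplex constraints are present, Theorem~\ref{theo:convergence.over.simplex.ineq.aff} applies once Assumption~\ref{ass:cond.bounded.trace.2} and strong duality for~\eqref{eq:sdp.relaxation.cons.pop.ineq.aff}--\eqref{eq:sdp.relaxation.cons.pop.dual.ineq.aff} are established. Strong duality follows from Slater's condition (Lemma~\ref{lem:slater.cond}) because Assumption~\ref{ass:contain.box} places the open box $(t_1,t_2)^n$ in the interior of $S(g)$.

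To verify Assumption~\ref{ass:cond.bounded.trace.2}, I take $\underline\lambda=0$ (valid because $\lambda_1\ge 0$ by hypothesis) and $\mu$ the normalized uniform measure on $[t_1,t_2]^n$ with $t_1=r/(3n)$ and $t_2=r/(2n)$. Exploiting the product structure, the order-$1$ moment matrix has the form $M_1(\mu)=vv^{\top}+s^2\,\diag(0,I_n)$ with $v=(1,m,\ldots,m)^{\top}$, $m=(t_1+t_2)/2=\Theta(1/n)$, and $s^2=(t_2-t_1)^2/12=\Theta(1/n^2)$. Diagonalising in the symmetric and antisymmetric permutation subspaces of the last $n$ coordinates yields $\lambda_{\min}(M_1(\mu))=\Theta(1/n^2)$. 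Each localizing block is a scalar: the rescaled portfolio integral is $\Omega(1/n^2)$ by~\eqref{eq:feasible.box}, the two simplex integrals are $\Theta(1)$, and the nonnegativity integrals are $\Theta(1/n)$, so $\lambda_{\min}(D_1(g\mu))=\Omega(1/n^2)$. Simultaneously $|\int f\,d\mu|\le t_2^2=O(1/n^2)$ because $|x^{\top}Qx|\le \|Q\|_{1,1}\,t_2^2$ on the box. The ratio in Assumption~\ref{ass:cond.bounded.trace.2} is therefore bounded by an absolute constant, and the rescaling factor $\tau$ prescribed in the remark after that assumption is $\Theta(1)$, so no dimensional factor leaks into the accuracy.

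Invoking Theorem~\ref{theo:convergence.over.simplex.ineq.aff} then gives $\hat\lambda_1^{(\varepsilon)}\in[\hat\lambda_1,\hat\lambda_1+O(\varepsilon)]$ for the rescaled problem, while $S(g)\subset \R_+^n$ forces $\lambda_1\le\hat\lambda_1\le f^{\star}=r^2\eta/\zeta^2$. Multiplying the algorithm output by $\zeta^2/r^2$ and absorbing the constant $2+r/(1-r)$ into $\varepsilon$ proves part~(i). For part~(ii), when $Q\succeq 0$ the objective is a convex quadratic with only affine constraints, so Lemma~\ref{lem:exact.quadratic} yields $\lambda_1=f^{\star}$, and the sandwich $\lambda_1\le\hat\lambda_1\le f^\star$ forces $\hat\lambda_1=f^{\star}$, placing the returned approximation of $\eta$ in $[\eta,\eta+\varepsilon]$. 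The runtimes come from Lemma~\ref{lem:complexity.con.pop.exact.ineq.aff} with $k=1$, $s_g=1$ (each SOS block row has one nonzero by Lemma~\ref{lem:properties.B.gamma}, and every localizing block is $1\times 1$), $\binom{n+2}{n}=O(n^2)$, and $\sum_{i=0}^{m}\binom{n+1-d_i}{n}=(n+1)+O(n)=O(n)$, giving $O(n^3\varepsilon^{-2})$ classical and $O(n\varepsilon^{-4}+\sqrt n\,\varepsilon^{-5})$ quantum runtime.

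The main obstacle will be the simultaneous control of $\lambda_{\min}(D_1(g\mu))$ and $\int f\,d\mu$ at the same $1/n^2$ scale so that $\tau=\Theta(1)$; the box side $\Theta(1/n)$ and the $1/n^2$ margin enforced by Assumption~\ref{ass:contain.box} are precisely tuned to make these quantities balance, and a weaker margin would propagate an $n$-dependent factor into the accuracy of the returned value.
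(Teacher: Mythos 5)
Your proposal follows the same overall route as the paper: rescale~\eqref{eq:portfolio.prob} into the simplex form~\eqref{eq:portfolio.prob2}, verify Assumption~\ref{ass:cond.bounded.trace.2} with the Lebesgue/uniform measure on the box $[t_1,t_2]^n$, invoke Theorem~\ref{theo:convergence.over.simplex.ineq.aff}, and use Lemma~\ref{lem:exact.quadratic} to get exactness in the convex case. Two points in your write-up are imprecise, though neither breaks the argument.

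First, you assert that ``the two simplex integrals are $\Theta(1)$.'' That is the scale of the \emph{unnormalised} integrals $\int(r-\sum x_j)\,d\mu$ and $\int(r^2-(\sum x_j)^2)\,d\mu$, but after the coefficient rescaling you yourself impose to satisfy Assumption~\ref{ass:rescale.g} one must take $c_1=\Theta(1/n)$ (since $\|r-\sum x_j\|_{\ell_1}=r+n$) and $c_2=\Theta(1/n^2)$ (since $\|r^2-(\sum x_j)^2\|_{\ell_1}=r^2+n^2$), so the actual localizing scalars are $\Theta(1/n)$ and $\Theta(1/n^2)$. This does not change your conclusion — the minimum over all blocks is still $\Theta(1/n^2)$, set by $\lambda_{\min}(M_1(\mu))$, the quadratic simplex block, and the portfolio block — but as written the bookkeeping is wrong. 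The paper does this computation explicitly with $c_1=1/(n+1)$, $c_2=1/(n^2+1)$, and a fixed constant $\delta=1/152$ multiplying $f$ so that the ratio in Assumption~\ref{ass:cond.bounded.trace.2} is exactly $\le 1$ rather than merely $O(1)$.

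Second, for the convex case you invoke Lemma~\ref{lem:exact.quadratic} ``with only affine constraints'' while your constraint list still contains the quadratic polynomial $c_2[r^2-(\sum x_j)^2]$, which is not affine. This needs one extra line: either note that the quadratic constraint is implied by the linear simplex and nonnegativity constraints, so adding it cannot decrease $\lambda_1$, giving $f^\star = \lambda_1$ still; or, as the paper actually does, drop $g_{n+3}$ in the convex case (set $m=n+2$), apply Lemma~\ref{lem:exact.quadratic} to the purely affine-constrained POP, and then conclude via Theorem~\ref{theo:convergence.ineq.cons.ineq.aff} (which uses $\|x^\star\|_{\ell_1}\le r$ and exactness $\hat\lambda_1=f^\star$) rather than Theorem~\ref{theo:convergence.over.simplex.ineq.aff} (which needs both $j=1,2$ simplex constraints present). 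Your sandwich $\lambda_1\le\hat\lambda_1\le f^\star$ is the right idea, but the two lemma invocations must be made consistent about which constraints are in $g$.
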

\begin{proof}
\eqref{general.case} Let $\delta=\frac{1}{152}$, $f=\frac{\delta}{\|Q\|_{1,1}} x^\top Q x$, $m=n+3$, $g_{j}=x_j$, for $j=1,\dots,n$, $g_{n+1}=\frac{1}{n+1}(r-\sum_{j=1}^n x_j)$, and $g_{n+2}=\frac{\zeta w^\top x - r\kappa}{\zeta \|w\|_1+r\kappa}$, $g_{n+3}=\frac{1}{n^2+1}[r^2-(\sum_{j=1}^n x_j)^2]$.
Then problem \eqref{eq:portfolio.prob2} is equivalent to POP \eqref{eq:cons.pop} with $f^\star=\frac{\delta r^2\eta}{\zeta^2}\ge 0$ and
\begin{equation}\label{eq:simplex.domain}
S(g)=\{x\in\R^n_+\,:\,\sum_{j=1}^n x_j\le r\}\,.
\end{equation}
It is not hard to verify that  Assumption \ref{ass:rescale.g} holds.

Define $\lambda_k$ and $\hat\lambda_k$ as in \eqref{eq:sdp.relaxation.cons.pop} and \eqref{eq:sdp.relaxation.cons.pop.ineq.aff}, respectively.
Since $S(g)\subset \R^n_+$, 
\begin{equation}\label{lem:ineq.first.order}
\lambda_1\le \hat \lambda_1\le f^\star\,.
\end{equation}

Let $\underline{\lambda}=0$.
Then $\lambda_1\ge 0= \underline{\lambda}$.
We now prove that Assumption \ref{ass:cond.bounded.trace.2} holds with $k=1$ and $\mu$ as a Lebesgue measure on $[t_1,t_2]^n$. 
Take $y=(y_\alpha)_{\alpha\in\N^n_{2k}}$ as the truncated moment vector with respect to $\mu$.
Then 
\begin{equation}
\begin{array}{rl}
y_{\alpha}=&\int x^\alpha d\mu=\int_{[t_1,t_2]^n} x^\alpha dx=\prod_{j=1}^n \int_{t_1}^{t_2} x_j^{\alpha_j} dx_j\\[10pt]
 =&\prod_{j=1}^n \left.\frac{x_j^{\alpha_j+1}}{\alpha_j+1}\right|_{t_1}^{t_2} = \prod_{j=1}^n \frac{t_2^{\alpha_j+1}-t_1^{\alpha_j+1}}{\alpha_j+1}\,.
\end{array}
\end{equation}
With $(e_j)_{j=1}^n$ being the standard basis of $\R^n$, we have
\begin{equation}
\begin{array}{rl}
y_0=&(t_2-t_1)^n\,,\\[5pt]
y_{e_j}=&\frac{1}{2}(t_2-t_1)^n(t_1+t_2)\,,\\[5pt]
y_{e_i+e_j}=&\frac{1}{4}(t_2-t_1)^n(t_1+t_2)^2\,,\\[5pt]
y_{2e_j}=&\frac{1}{3}(t_2-t_1)^n(t_1^2+t_1t_2+t_2^2)\,,\,i\ne j
\,.
\end{array}
\end{equation}
This implies
\begin{equation}
\begin{array}{rl}
M_1(\mu)=&\begin{bmatrix}
y_0 & y_{e_1} & y_{e_2}&\dots& y_{e_n}\\
y_{e_1} & y_{2e_1} & y_{e_1+e_2}& \dots & y_{e_1+e_n}\\
y_{e_2} & y_{e_2+e_1} & y_{2e_2}& \dots & y_{e_2+e_n}\\
.&.&.&\dots &.\\
y_{e_n}& y_{e_n+e_1}& y_{e_n+e_2}& \dots& y_{2e_n}
\end{bmatrix}=(t_2-t_1)^nW\,,
\end{array}
\end{equation}
where 
\begin{equation}
W=\begin{bmatrix}
1 & a & a&\dots& a\\
a & b & a^2& \dots & a^2\\
a & a^2 & b& \dots & a^2\\
.&.&.&\dots &.\\
a& a^2&a^2& \dots& b
\end{bmatrix}
\end{equation}
with $a=\frac{1}{2}(t_1+t_2)$ and $b=\frac{1}{3}(t_1^2+t_1t_2+t_2^2)$.
By Lemma \ref{lem:cayley-haminton}, with $d=b-a^2$, we have
\begin{equation}
\begin{array}{rl}
\det(W-\lambda I_{n+1})=&\left(b-a^2-\lambda\right)^{n-1}\left[(1-\lambda)\left(b-a^2-\lambda\right)-n\left(a^2-(1-\lambda)a^2\right)\right]\\
=&\left(d-\lambda\right)^{n-1}\left[\lambda^2-\left(1+d+na^2\right)\lambda+d\right]\,.
\end{array}
\end{equation}
The zeros of the above polynomial in $\lambda$ are
\begin{equation}
d=\frac{(t_2-t_1)^2}{12}=\frac{r^2}{432n^2}\qquad\text{and}\qquad \frac{1+d+na^2\pm \sqrt{\Delta}}{2}\,,
\end{equation}
where $\Delta=(1+d+na^2)^2-4d\ge 1-4d> 0$.
We have 
\begin{equation}
\begin{array}{rl}
\frac{1+d+na^2- \sqrt{\Delta}}{2}=&\frac{(1+d+na^2)^2-\Delta}{2(1+d+na^2+ \sqrt{\Delta})}=\frac{4d}{2(1+d+na^2+ \sqrt{\Delta})}\ge \frac{4d}{4(1+d+na^2)}=\frac{d}{1+d+na^2}\ge \frac{d}{2}=\frac{r^2}{864n^2}\,.\\
\end{array}
\end{equation}
The first inequality is due to $\Delta\le (1+d+na^2)^2$.
The second inequality is because 
\begin{equation}
    d+na^2=\frac{r^2}{432n^2}+\frac{n(5r)^2}{4(6n)^2}=\frac{r^2}{432n^2}+\frac{25r^2}{144n}\le 1\,.
\end{equation}
This implies that $\lambda_{\min}(W)\ge \frac{r^2}{864n^2}$, yielding 
\begin{equation}
\lambda_{\min}(M_1(\mu))\ge (t_2-t_1)^n\lambda_{\min}(W)\ge \frac{r^2(t_2-t_1)^n}{864n^2}\,.
\end{equation}

Next, we have that for $j=1,\dots,n$,
\begin{equation}
M_0(g_j\mu)=\int g_j d\mu=\int_{[t_1,t_2]^n}x_jdx=y_{e_j}=\frac{1}{2}(t_2-t_1)^n(t_1+t_2)= \frac{5r(t_2-t_1)^n}{12n}\,.
\end{equation}
In addition, we get
\begin{equation}
\begin{array}{rl}
M_0(g_{n+1}\mu)=&\int g_{n+1} d\mu=\int_{[t_1,t_2]^n}\frac{1}{n}(r-\sum_{j=1}^n x_j)dx\\
\ge& \int_{[t_1,t_2]^n}\frac{1}{n}(r-nt_2)dx=\int_{[t_1,t_2]^n}\frac{r}{2n}dx=\frac{r(t_2-t_1)^n}{2n}\,,\\[15pt]
M_0(g_{n+2}\mu)=&\int g_{n+2} d\mu=\int_{[t_1,t_2]^n}\frac{1}{n^2+1}[r^2-(\sum_{j=1}^n x_j)^2]dx\\[5pt]
\ge& \int_{[t_1,t_2]^n}\frac{1}{n^2+1}[r^2-(nt_2)^2]dx=\int_{[t_1,t_2]^n}\frac{3r^2}{4(n^2+1)}dx\\[5pt]
=&\frac{3r^2(t_2-t_1)^n}{4(n^2+1)}\,,\\[10pt]
M_0(g_{n+3}\mu)=&\int g_{n+3} d\mu=\int_{[t_1,t_2]^n}\frac{\zeta w^\top x - r\kappa}{\zeta \|w\|_1+r\kappa}dx\\[5pt]
\ge& \int_{[t_1,t_2]^n}\frac{1}{n^2}dx=\frac{(t_2-t_1)^n}{n^2}\,.
\end{array}
\end{equation}
The last inequality is based on \eqref{eq:feasible.box}.

Since $D_1(g\mu)=\diag(M_1(\mu),M_0(g_1\mu),\dots,M_0(g_{n+3}\mu))$, we have
 \begin{equation}
 \lambda_{\min}(D_1(g\mu))=\min\{\lambda_{\min}(M_1(\mu)),M_0(g_1\mu),\dots,M_0(g_{n+3}\mu) \}\ge \frac{r^2(t_2-t_1)^n}{864n^2}\,.
\end{equation}  

Besides, we have 
\begin{equation}
\begin{array}{rl}
\int (f-\underline{\lambda}) d\mu=& \frac{\delta}{\|Q\|_{1,1}} \int \sum_{i,j=1}^n Q_{ij} x_ix_j d\mu\\[5pt]
=& \frac{\delta}{\|Q\|_{1,1}} \sum_{i,j=1}^n Q_{ij} y_{e_i+e_j}\\[5pt]
\le  & \frac{\delta}{\|Q\|_{1,1}} \sum_{i,j=1}^n |Q_{ij}| b (t_2-t_1)^n\\
= & \delta b (t_2-t_1)^n=\frac{\delta(t_2-t_1)^n}{3}(\frac{r^2}{9n^2}+\frac{r^2}{6n^2}+\frac{r^2}{4n^2})=\frac{19\delta r^2(t_2-t_1)^n}{108n^2}\,.
\end{array}
\end{equation}
The first inequality is because $y_{e_i+e_j}\le b (t_2-t_1)^n$.
From these, it follows that
\begin{equation}
\frac{\int (f-\underline{\lambda}) d\mu}{\lambda_{\min}(D_1(g\mu))}\le \frac{19\delta r^2(t_2-t_1)^n}{108n^2} \times \frac{864n^2}{r^2(t_2-t_1)^n}=152\delta =1\,.
\end{equation}
Thus Assumption \ref{ass:cond.bounded.trace.2} holds.

Let $\hat \lambda_1^{(\varepsilon)}$ be the value returned by Algorithm \ref{alg:con.pop.exact.ineq.aff} with $k=1$ and  $\underline\lambda = 0$.
By Theorem  \ref{theo:convergence.over.simplex.ineq.aff}, $\lambda_1\le \hat \lambda_1\le f^\star=\frac{\delta r^2\eta}{\zeta^2}$ and 
$0\le \hat \lambda_1^{(\varepsilon)}-\hat \lambda_1\le\varepsilon(2+\frac{r}{1-r} )$.
This implies that $\lambda_1\le \hat \lambda_1^{(\varepsilon)}\le \varepsilon(2+\frac{r}{1-r} )+\frac{\delta r^2\eta}{\zeta^2}$.
Note that $\delta=\frac{1}{152}$.
Let us fix $r=\frac{1}{2}$ and $\zeta$ as a constant.
We claim that $s_g=1$. Indeed, by definition \eqref{eq.tilde.B.gamma} with $k=d_1=\dots=d_m=1$, the last $m$ blocks of $B_g^{(\gamma)}$ are scalar, which implies $s_g$ is the maximum number of nonzero entries of the first block $B_0^{(\gamma)}=B^{(\gamma)}$. By Lemma \ref{lem:properties.B.gamma} \eqref{unique.nonzero.per.row}, $s_g=1$.

By Lemma \ref{lem:complexity.con.pop.exact.ineq.aff} (with $s_g=1$, $k=d_1=\dots=d_m=1$ and $m=O(n)$), the complexities of Algorithm \ref{alg:con.pop.exact.ineq.aff} to provide a value in $[\lambda_1,\eta+\varepsilon]$ on classical and quantum computers are $O(n^{3}\varepsilon^{-2})$ and $O(n\varepsilon^{-4}+\sqrt{n}\varepsilon^{-5})$, respectively.

\eqref{convex.case} Now we assume that $Q$ is psd.
We do the same process as in the proof of \eqref{general.case}, but with $m=n+2$ (the quadratic constraint $g_{n+3}$ is removed).
We also have $s_g=1$.
 We claim that 
\begin{equation}\label{lem:exact.first.order}
    f^\star=\lambda_1=\hat \lambda_1\,.
\end{equation}
Indeed,  by Lemma \ref{lem:exact.quadratic}, $\lambda_1=f^\star$.
By \eqref{lem:ineq.first.order}, the equalities hold.
 
Let $\hat \lambda_1^{(\varepsilon)}$ be the value returned by Algorithm \ref{alg:con.pop.exact.ineq.aff} with $k=1$ and  $\underline\lambda = 0$.
By \eqref{eq:simplex.domain}, all optimal solutions  $x^\star$ to $\min_{x\in S(g)}f(x)$ satisfy $\|x^\star\|_{\ell_1}=x_1^\star+\dots+x_n^\star\le r$.
By Theorem  \ref{theo:convergence.ineq.cons.ineq.aff}, 
$0\le \hat \lambda_1^{(\varepsilon)}-\hat \lambda_1\le\varepsilon(2+\frac{r}{1-r} )$.
By \eqref{lem:exact.first.order}, $\hat \lambda_1=f^\star=\frac{\delta r^2\eta}{\zeta^2}$.
This implies that $\frac{\delta r^2\eta}{\zeta^2}\le \hat \lambda_1^{(\varepsilon)}\le\frac{\delta r^2\eta}{\zeta^2}+\varepsilon(2+\frac{r}{1-r} )$.
Let us fix $r=\frac{1}{2}$ and $\zeta$ as a constant.
By Lemma \ref{lem:complexity.con.pop.exact.ineq.aff} (with $s_g=1$, $k=d_1=\dots=d_m=1$ and $m=O(n)$), the complexities of Algorithm \ref{alg:con.pop.exact.ineq.aff} to provide a value in $[\eta,\eta+\varepsilon]$ on classical and quantum computers are $O(n^{3}\varepsilon^{-2})$ and $O(n\varepsilon^{-4}+\sqrt{n}\varepsilon^{-5})$, respectively.
\end{proof}

In \cite{kerenidis2019quantum}, Kerenidis \textit{et al.} propose a quantum algorithm based on interior-point approaches for solving a portfolio optimization problem with affine equality constraints. The portfolio optimization problem~\eqref{eq:portfolio.prob} can be reformulated into their framework by introducing nonnegative slack variables. Their algorithm solves the resulting problem with a quantum complexity of 
$O\left(n^{1.5}\frac{\zeta \kappa}{\delta^2}\log(1/\varepsilon)\right)$,
where $\zeta$, $\kappa$, and $\delta$ are problem-dependent parameters related to the conditioning of intermediate solutions. In particular, $\kappa$ denotes the maximum condition number of the Newton matrices arising across all iterations of their method. However, since the authors do not provide an explicit characterization of how $\kappa$ scales with $n$, the precise dependence of their quantum complexity on the problem dimension remains unclear. This lack of explicit scaling makes it difficult to accurately compare the quantum complexities of their approach with ours. In contrast, the quantum complexity of our Hamiltonian-update-based method does not depend on any unknown parameters. A similar limitation is observed in the interior-point-based quantum algorithm of Dalzell \textit{et al.}~\cite{dalzell2023end} for portfolio optimization.

\section{Conclusion} 
\label{sec:conclusion} 

In this work, we have shown how binary search combined with Hamiltonian updates can be applied to solve sums-of-squares relaxations of polynomial optimization problems (POPs), both in the unconstrained and the inequality-constrained setting.  
By carefully rescaling the problem data and exploiting structural properties of the relaxation, we obtained explicit runtime and accuracy bounds under strong duality and other natural optimality conditions.  
Our analysis identifies concrete regimes in which quantum SDP solvers can offer super-quadratic speedups in the problem dimension compared to the best known classical methods, even though they retain a worse dependence on the target precision~$\varepsilon$.  

A natural direction for future work is to extend these techniques to \emph{sparse} polynomial optimization problems \cite{wang2021tssos,wang2020cs,
lasserre2006convergent,waki2006sums}.  
In many practical instances, each constraint and the objective function depend only on a small subset of variables.  
Formally, suppose the $n$ variables are indexed by $[n]$ and there exists a collection of index sets $I_1,\dots,I_r \subseteq [n]$ such that:  
\begin{itemize}
\item the objective function is a sum of polynomials, each supported on $x(I_\ell) \in \mathbb{R}^{n_\ell}$,  
\item the inequality constraints $(g_i)_{i \in J_\ell}$ associated with block $I_\ell$ depend only on $x(I_\ell)$ and have cardinality $m_\ell$.  
\end{itemize}

Such \emph{structured sparsity} allows one to replace the dense moment/SOS relaxation with a block-structured one whose PSD constraints act only on the monomial variables of size $n_\ell$ for each block.  
This dramatically reduces the size of the semidefinite constraints from $\binom{n+k}{n}$ to $\binom{n_\ell+k}{n_\ell}$, and similarly for their degree-$2k$ analogues.  

If our quantum Hamiltonian-updates approach is applied directly to the block-diagonal SDP arising from such a sparse POP, the runtime bound is expected to improve to  
\[
O\!\left( s_g\left\{ \left[ \sum_{\ell=1}^r \binom{n_\ell+2k}{n_\ell} \right]^{1/2} + \left[ \sum_{\ell=1}^r \sum_{i\in J_\ell} \binom{n_\ell+k-d_i}{n_\ell} \right]^{1/2} \frac{1}{\varepsilon} \right\} \frac{1}{\varepsilon^4} \right),
\]  
where $s_g$ is the maximal number of nonzero coefficients in any constraint polynomial.  
This bound reflects the fact that sparsity reduces both the total number of constraints and the size of each PSD block, which in turn lowers the cost of the quantum subroutines.  

Another promising avenue is to investigate alternative semidefinite relaxations that are weaker than the standard Lasserre hierarchy but cheaper to solve, such as diagonally dominant sum-of-squares (DSOS/SDSOS) relaxations \cite{ahmadi2019dsos} or bounded-degree moment relaxations \cite{weisser2018sparse}.  
Combining these with quantum solvers could further enlarge the range of POP instances for which quantum speedups are practically achievable.   

Finally, we note that optimization over the simplex \cite{de2006ptas,de2017convergence}, which naturally arises in combinatorial optimization and probability distribution optimization, is a particularly compelling application of our framework.  
In this case, the problem structure enables sharp bounds on the dual norm, which directly improves the accuracy guarantees of the quantum algorithm.  
In summary, this work provides an investigation into the potential of quantum SDP solvers to give speedups for POP relaxations, identifying regimes of superquadratic speedups.

\section*{Acknowledgements}
DSF and HNM acknowledge funding from the European Union under Grant Agreement 101080142 and the project EQUALITY. D.S.F. acknowledges financial support from the Novo Nordisk Foundation (Grant No. NNF20OC0059939 Quantum for Life) and by the ERC grant GIFNEQ 101163938.
\bibliography{references} 
\bibliographystyle{abbrv}
\appendix

\section{Proofs}

\subsection{Inequality-constrained polynomial optimization}

\begin{lemma}\label{lem:cond.bounded.trace.2}
Let $f\in\R[x]_{2d}$ and $g_i\in\R[x]_{2d_i}$, $i=0,\dots,m$, with $g_0=1$.
Suppose Assumption \ref{ass:cond.bounded.trace.2} holds.
Let $G^\star$ be an optimal solution to the SOS relaxation \eqref{eq:sdp.relaxation.cons.pop}.
Then $\tr(G^\star)\le 1$.
\end{lemma}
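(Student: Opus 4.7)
The plan is to mimic the proof of Lemma~\ref{lem:cond.bounded.trace.1}, but replacing the moment matrix $M_k(\mu)$ by the block-diagonal moment/localizing matrix $D_k(g\mu)$ that naturally arises from the constrained Positivstellensatz-type identity. I will integrate the defining SOS identity of the relaxation against the measure $\mu$ from Assumption~\ref{ass:cond.bounded.trace.2}, rewrite the integrated right-hand side as a trace against $D_k(g\mu)$, and then control the trace of $G^\star$ via the smallest eigenvalue of $D_k(g\mu)$.

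First, I would use the fact that since $G^\star=\diag(G_0^\star,\dots,G_m^\star)$ is optimal for~\eqref{eq:sdp.relaxation.cons.pop}, the identity
\[
  f - \lambda_k \;=\; \sum_{i=0}^m g_i\,v_{k-d_i}^\top G_i^\star\,v_{k-d_i}
\]
holds. Integrating both sides against $\mu$ and using the relation (recalled in the notation subsection) $\int p\,v_{k-u}^\top Q\,v_{k-u}\,d\mu = \tr\!\bigl(Q\,M_{k-u}(p\mu)\bigr)$ for each block, I get
\[
  \int (f-\lambda_k)\,d\mu \;=\; \sum_{i=0}^m \tr\!\bigl(G_i^\star\,M_{k-d_i}(g_i\mu)\bigr) \;=\; \tr\!\bigl(G^\star\,D_k(g\mu)\bigr),
\]
by the block-diagonal definition~\eqref{eq:block.diagonal.mom.local}.

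Next, I would chain the standard inequalities. Since $\underline\lambda \le \lambda_k$ and $\mu$ is a positive measure, $\int(f-\underline\lambda)\,d\mu \ge \int(f-\lambda_k)\,d\mu$. Because $\mu$ is supported on $S(g)$, each localizing matrix $M_{k-d_i}(g_i\mu)$ is PSD, so $D_k(g\mu)\succeq 0$, and the trace inequality $\tr(G^\star D_k(g\mu)) \ge \lambda_{\min}(D_k(g\mu))\,\tr(G^\star)$ holds (using $G^\star\succeq 0$). Combining these gives
\[
  \tr(G^\star) \;\le\; \frac{\int(f-\underline\lambda)\,d\mu}{\lambda_{\min}(D_k(g\mu))} \;\le\; 1,
\]
where the last step is exactly Assumption~\ref{ass:cond.bounded.trace.2}.

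There is essentially no hard step here; the only point worth stating carefully is that the block structure of the SOS identity translates, under integration, into the block structure of $D_k(g\mu)$, and that the positivity of the localizing matrices — which is what allows the eigenvalue lower bound to be applied — comes from $\mu$ being supported on $S(g)$. Everything else is a direct transcription of the unconstrained argument.
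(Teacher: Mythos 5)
Your proof is correct and follows essentially the same route as the paper's: integrate the SOS identity against $\mu$ to express $\int(f-\lambda_k)\,d\mu$ as $\tr(G^\star D_k(g\mu))$, then bound below by $\lambda_{\min}(D_k(g\mu))\tr(G^\star)$ and invoke Assumption~\ref{ass:cond.bounded.trace.2}. The observation that $\mu$ being supported on $S(g)$ ensures $D_k(g\mu)\succeq 0$ is exactly the point the paper also makes.
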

\begin{proof}
We write $G^\star=\diag(G_0^\star,\dots,G_m^\star)$ and assume $\underline \lambda\le \lambda_k$.
From the SOS relaxation \eqref{eq:sdp.relaxation.cons.pop}, we have:
\begin{equation}
f-\lambda_k=\sum_{i=0}^m g_iv_{k-d_i}^\top G_i^\star v_{k-d_i}\,.
\end{equation}
This implies: 
\begin{equation}
\begin{array}{rl}
\int (f-\underline \lambda)d\mu\ge &\int (f-\lambda_k)d\mu \\[5pt]
=&\sum_{i=0}^m \tr( G^\star_i M_{k-d_i}(g_i\mu) )\\[5pt]
=&\tr( G^\star D_k( g\mu))\ge \lambda_{\min}(D_k(g\mu )) \tr(G^\star)\,.
\end{array}
\end{equation}
Since $\mu$ is supported on $S(g)$, it follows that $\lambda_{\min}(D_k(g\mu ))\ge 0$. Thus, 
\begin{equation}
\tr(G^\star)\le \frac{\int (f-\underline \lambda)d\mu}{\lambda_{\min}(D_k(g\mu ))}\le 1\,.
\end{equation}
The last equality is due to Assumption \ref{ass:cond.bounded.trace.2}.
Hence, the result follows.
\end{proof}

\begin{lemma}\label{lem:bounded.eigenvalues}
Let Assumption \ref{ass:rescale.g} hold.
Then, for $\gamma\in\N^n_{2k}$, the matrix $  B_g^{(\gamma)}$ defined in \eqref{eq.tilde.B.gamma} satisfies $-I\preceq   B_g^{(\gamma)}\preceq I$, where $I$ is the identity matrix of the same size as $  B_g^{(\gamma)}$.
\end{lemma}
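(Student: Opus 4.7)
The plan is to exploit the block-diagonal structure of $B_g^{(\gamma)}$ together with the bounded-spectrum property of the elementary coefficient-matching matrices $B_i^{(\eta)}$ already established in Lemma~\ref{lem:properties.B.gamma}~\eqref{bound.eigenvalue}. Since $B_g^{(\gamma)}$ is block-diagonal, its spectrum is the union of the spectra of its blocks, so it suffices to bound the operator norm of each block
\begin{equation*}
M_i \;:=\; \sum_{\zeta\in\N^n_{2d_i},\;\gamma-\zeta\in\N^n} g_{i,\zeta}\, B_i^{(\gamma-\zeta)}, \qquad i=0,\dots,m,
\end{equation*}
by $1$.

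First I would dispose of the $i=0$ block: since $g_0=1$, the only nonzero coefficient is $g_{0,0}=1$, and $M_0=B_0^{(\gamma)}$, which already satisfies $-I\preceq M_0\preceq I$ by Lemma~\ref{lem:properties.B.gamma}~\eqref{bound.eigenvalue}. For $i\ge 1$, I would apply the triangle inequality for the spectral norm together with the eigenvalue bound on each $B_i^{(\gamma-\zeta)}$ to obtain
\begin{equation*}
\|M_i\|_{\ell_2\to\ell_2} \;\le\; \sum_{\zeta} |g_{i,\zeta}|\,\|B_i^{(\gamma-\zeta)}\|_{\ell_2\to\ell_2} \;\le\; \sum_{\zeta}|g_{i,\zeta}| \;=\; \|g_i\|_{\ell_1} \;\le\; 1,
\end{equation*}
where the last inequality is exactly Assumption~\ref{ass:rescale.g}. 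This gives $-I\preceq M_i\preceq I$ for every block, and reassembling the blocks yields the desired two-sided bound on $B_g^{(\gamma)}$.

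There is essentially no obstacle here: the only subtlety worth flagging is that the symmetric matrix sum inside each block is taken over those $\zeta$ for which $\gamma-\zeta\in\N^n$, so every $B_i^{(\gamma-\zeta)}$ appearing in the sum is a well-defined symmetric matrix of size $\binom{n+k-d_i}{n}$, and the triangle-inequality step is literally the statement that the spectral norm is a norm on $\mathbb{S}^{\binom{n+k-d_i}{n}}$. The coefficient-rescaling assumption $\|g_i\|_{\ell_1}\le 1$ is used precisely once, at the final inequality, which confirms that this normalization is the minimal hypothesis needed to propagate the $\pm I$ bound from the atomic matrices $B_i^{(\eta)}$ to the aggregated coefficient-matching matrices $B_g^{(\gamma)}$.
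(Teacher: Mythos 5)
Your proposal is correct and follows essentially the same route as the paper: decompose $B_g^{(\gamma)}$ into its diagonal blocks, use Lemma~\ref{lem:properties.B.gamma}~\eqref{bound.eigenvalue} on each $B_i^{(\gamma-\zeta)}$, and invoke Assumption~\ref{ass:rescale.g} to bound each block by $\pm I$. The only cosmetic difference is that you phrase the per-block bound via the spectral-norm triangle inequality while the paper works directly with the Loewner order $B_{\gamma,i}\preceq \sum_\zeta |g_{i,\zeta}|\,I_i$, which for symmetric matrices is the same estimate.
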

\begin{proof}
With $i=0$, note that $g_0=1$ satisfies $\|g_0\|_{\ell_1}=1$.
Fix $i\in\{0,1,\dots,m\}$.
By Lemma \ref{lem:properties.B.gamma} \eqref{bound.eigenvalue}, the matrix $B_i^{(\eta)}$ satisfies $-I_i\preceq B_i^{(\eta)}\preceq I_i$ for all $\eta\in \N^n_{2(k-d_i)}$, where $I_i$ is the identity matrix of the same size as $B_i^{(\eta)}$.
The $i$-th block of $  B_g^{(\gamma)}$ is given by
\begin{equation}\label{eq:ith.block.tilde.B.gamma}
B_{\gamma,i}=\sum\limits_{
\arraycolsep=1.4pt\def\arraystretch{.7}
\begin{array}{cc}
\scriptstyle \zeta\in\N^n_{2d_i}\\
\scriptstyle \gamma-\zeta\in \N^n
\end{array}
} g_{i,\zeta} B_i^{(\gamma-\zeta)}\,.
\end{equation}
Using the bounds on $B_i^{(\gamma-\zeta)}$, we find
\begin{equation}
 B_{\gamma,i}\preceq \sum\limits_{
\arraycolsep=1.4pt\def\arraystretch{.7}
\begin{array}{cc}
\scriptstyle \zeta\in\N^n_{2d_i}\\
\scriptstyle \gamma-\zeta\in \N^n
\end{array}
} |g_{i,\zeta}| I_i\preceq \|g_i\|_{\ell_1} I_i\preceq I_i
\end{equation}
and similarly, 
\begin{equation}
 B_{\gamma,i}\succeq- \sum\limits_{
\arraycolsep=1.4pt\def\arraystretch{.7}
\begin{array}{cc}
\scriptstyle \zeta\in\N^n_{2d_i}\\
\scriptstyle \gamma-\zeta\in \N^n
\end{array}
} |g_{i,\zeta}| I_i\succeq- \|g_i\|_{\ell_1} I_i\succeq - I_i\,.
\end{equation}
Thus, the block-diagonal matrix  $  B_g^{(\gamma)}=\diag(B_{\gamma,0},\dots,B_{\gamma,m})$ satisfies
\begin{equation}
-I=\diag(-I_0,\dots,-I_m)\preceq   B_g^{(\gamma)}\preceq \diag(I_0,\dots,I_m)=I\,.
\end{equation}
\end{proof}

\begin{lemma}\label{lem:equi.sdp.relax.cons.exact}
Let $f\in\R[x]_{2d}$ and $g_i\in\R[x]_{2d_i}$, for $i=0,\dots,m$, with $g_0=1$.
Suppose Assumption \ref{ass:cond.bounded.trace.2} holds.
Define $\lambda^\star=f_0-\lambda_k$.
Let $C$, $A_i$, for $i=1,\dots,M$, $N$ be generated by Algorithm \ref{alg:con.pop.exact}.
Then problem \eqref{eq:sdp.relaxation.cons.pop} is equivalent to problem \eqref{eq:sdp}.
Moreover, this is equivalent to
\begin{equation}\label{eq:sdp.trace.one.cons.pop.2.1.exact}
\lambda^\star=\inf\{\tr( C X)\,:\,X\in\mathbb S^{N}_+\,,\,\tr( A_i X) =b_i\,,\,i=1,\dots,M\}\,.
\end{equation}
\end{lemma}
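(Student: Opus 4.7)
The plan is to mimic the two-step derivation used in the unconstrained case (Lemmas \ref{lem:equi.sdp.relax.uncons} and \ref{lem:equivalent.sol.uncons}), adapted to the block-diagonal structure of the constrained SOS relaxation. The key algebraic observation is that, writing $G=\diag(G_0,\dots,G_m)$, the polynomial identity
\[
 f-\lambda \;=\; \sum_{i=0}^{m} g_i\,v_{k-d_i}^{\top}G_i v_{k-d_i}
\]
can be re-expanded by collecting the coefficient of each monomial $x^{\gamma}$, yielding exactly $\tr\!\bigl(B_g^{(\gamma)}G\bigr)$ by definition \eqref{eq.tilde.B.gamma} of the coefficient-matching matrices. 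Matching coefficients on both sides therefore turns the SOS identity into the finite linear system
\[
 f_{0}-\lambda \;=\; \tr\!\bigl(B_g^{(0)}G\bigr)\;=\;\tr(\tilde C G),\qquad
 f_{\gamma_i}\;=\;\tr\!\bigl(B_g^{(\gamma_i)}G\bigr)\;=\;\tr(\tilde A_i G),\quad i=1,\dots,M.
\]
This step is essentially bookkeeping and poses no difficulty.

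Next I would invoke Lemma \ref{lem:cond.bounded.trace.2}: under Assumption \ref{ass:cond.bounded.trace.2} any optimal $G^{\star}$ for \eqref{eq:sdp.relaxation.cons.pop} satisfies $\tr(G^{\star})\le 1$, so appending the redundant bound $\tr(G)\le 1$ does not change the supremum. Replacing $\sup_{\lambda,G}\lambda$ by $\inf_{G}\tr(\tilde C G)$ (through $\lambda=f_0-\tr(\tilde CG)$) then recasts \eqref{eq:sdp.relaxation.cons.pop} as
\[
\begin{array}{rl}
\lambda^{\star}=\inf\limits_{G}&\tr(\tilde C G)\\
\text{s.t.}&G\in\mathbb S^{\tilde N}_{+},\;\tr(G)\le 1,\;\tr(\tilde A_{i}G)=b_{i},\;i=1,\dots,M.
\end{array}
\]
To pass from a bounded-trace to a unit-trace formulation, I would use the standard lifting $X=\diag(G,\,1-\tr(G))$ together with $C=\diag(\tilde C,0)$ and $A_i=\diag(\tilde A_i,0)$: feasibility and objective values are preserved in both directions because the last diagonal block is annihilated by $C$ and all $A_i$. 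Checking the bijection between optima (the leading principal $\tilde N\times\tilde N$ submatrix of any optimal $X$ for \eqref{eq:sdp} is optimal for the bounded-trace SDP, and conversely) is a direct transcription of the argument in Lemma \ref{lem:equivalent.sol.uncons}. This establishes the first claimed equivalence with \eqref{eq:sdp}.

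For the final equivalence with \eqref{eq:sdp.trace.one.cons.pop.2.1.exact}, the point is that the trace constraint in \eqref{eq:sdp} can simply be dropped. Given any $X\in\mathbb S^{N}_{+}$ satisfying only $\tr(A_i X)=b_i$, its leading block $\tilde X\in\mathbb S^{\tilde N}_{+}$ still satisfies $\tr(\tilde A_i\tilde X)=b_i$ and $\tr(\tilde CX)=\tr(\tilde C\tilde X)$, so every feasible $\tilde X$ for the trace-free analogue of the SOS-equivalent problem is feasible for the bounded-trace one by Lemma \ref{lem:cond.bounded.trace.2} (applied at optimum, since the SOS relaxation is assumed to admit an optimum). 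The infimum values therefore coincide. The main subtlety I expect is this last step: Lemma \ref{lem:cond.bounded.trace.2} gives a trace bound only at optima of \eqref{eq:sdp.relaxation.cons.pop}, not at arbitrary feasible points, so the clean way to conclude is to note that the infimum in \eqref{eq:sdp.trace.one.cons.pop.2.1.exact} is attained (by existence of an SOS optimum transported back through the lifting) and that this attaining point already lies in the bounded-trace feasible set, so both infima equal $\lambda^{\star}$.
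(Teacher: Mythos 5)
Your first three steps follow the paper's route and are sound: the coefficient-matching identity turning the SOS identity into the constraints $\tr(B_g^{(\gamma)}G)$, the use of Lemma \ref{lem:cond.bounded.trace.2} to append the redundant bound $\tr(G)\le 1$, and the zero-block lifting $X=\diag\bigl(G,\,1-\tr(G)\bigr)$ with $C=\diag(\tilde C,0)$, $A_i=\diag(\tilde A_i,0)$, transcribed from Lemma \ref{lem:equivalent.sol.uncons}, which gives the equivalence with \eqref{eq:sdp}. The gap is in your final step, the equivalence with \eqref{eq:sdp.trace.one.cons.pop.2.1.exact}. Your initial claim there (that any $X\succeq 0$ with $\tr(A_iX)=b_i$ lands in the bounded-trace feasible set ``by Lemma \ref{lem:cond.bounded.trace.2}'') is false, since a trace-free feasible point may have arbitrarily large trace; you notice this, but the repair you propose does not close it. Transporting an SOS optimum through the lifting produces a feasible point of \eqref{eq:sdp.trace.one.cons.pop.2.1.exact} with objective value $\lambda^\star$, which only shows that the trace-free infimum is \emph{at most} $\lambda^\star$; it cannot rule out a feasible $X$ with $\tr(X)>1$ whose objective is strictly smaller, so asserting that ``the infimum is attained at this point'' presupposes exactly the inequality $\inf\ge\lambda^\star$ you still need to prove.

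The missing ingredient is to run your own step 1 backwards, which is how the paper's ``removing the trace constraint does not affect the optimal value'' is justified. For any feasible $X$ of \eqref{eq:sdp.trace.one.cons.pop.2.1.exact}, let $G=\diag(G_0,\dots,G_m)$ collect the diagonal blocks of the leading $\tilde N\times\tilde N$ principal submatrix of $X$; each $G_i$ is PSD, and since $\tilde C$ and the $\tilde A_i=B_g^{(\gamma_i)}$ are block-diagonal, $\tr(\tilde A_iG)=\tr(A_iX)=b_i$ and $\tr(\tilde CG)=\tr(CX)$. By the coefficient-matching identities this exhibits a feasible pair $(\lambda,G)$ of \eqref{eq:sdp.relaxation.cons.pop} with $\lambda=f_0-\tr(CX)$, and since $\lambda_k$ is defined as the supremum over such pairs, $\tr(CX)\ge f_0-\lambda_k=\lambda^\star$. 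Combined with the trivial ``$\le$'' direction this gives \eqref{eq:sdp.trace.one.cons.pop.2.1.exact}; note no attainment or trace bound is needed for this direction at all — the trace-free SDP is simply a direct reformulation of \eqref{eq:sdp.relaxation.cons.pop}, and the trace bound is what gets \emph{added} (justified by Lemma \ref{lem:cond.bounded.trace.2}), not removed.
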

\begin{proof}
By Lemma \ref{lem:cond.bounded.trace.2}, problem \eqref{eq:sdp.relaxation.cons.pop} can be written as 
\begin{equation}\label{eq:sdp.bounded.trace.cons.pop.exact}
\begin{array}{rl}
\lambda_k=\sup\limits_{\lambda,G} & \lambda\\
\text{s.t.}& \lambda\in\R\,,\,G=\diag(G_0,\dots,G_m)\succeq 0\,,\,\tr(G)\le 1\,,\\
&f-\lambda=\sum_{i=0}^m g_iv_{k-d_i}^\top G_iv_{k-d_i}\,.
\end{array}
\end{equation}
With $G_i=((G_i)_{\alpha,\beta})_{\alpha,\beta\in\N^n_{k-d_i}}$, the affine constraint $f-\lambda=\sum_{i=0}^m g_iv_{k-d_i}^\top G_iv_{k-d_i}$ expands as:
\begin{equation}
\begin{array}{rl}
f-\lambda=&\sum_{i=0}^m (\sum_{\zeta\in\N^n_{2d_i}} g_{i,\zeta}x^\zeta)\sum_{\alpha\in\N^n_{k-d_i}}\sum_{\beta\in\N^n_{k-d_i}} (G_i)_{\alpha,\beta} x^{\alpha+\beta}\\[10pt]
=&\sum_{i=0}^m \sum_{\zeta\in\N^n_{2d_i}} \sum_{\alpha\in\N^n_{k-d_i}}\sum_{\beta\in\N^n_{k-d_i}} g_{i,\zeta}(G_i)_{\alpha,\beta} x^{\alpha+\beta+\zeta}\,.
\end{array}
\end{equation}
Equating the coefficients of $x^\gamma$, we obtain:
\begin{equation}
\begin{cases}
f_0-\lambda=\sum_{i=0}^m \sum\limits_{
\arraycolsep=1.4pt\def\arraystretch{.7}
\begin{array}{cc}
\scriptstyle \alpha,\beta\in\N^n_{k-d_i}\\
\scriptstyle \zeta\in\N^n_{2d_i}\\
\scriptstyle \alpha+\beta+\zeta=0
\end{array}
} g_{i,\zeta} (G_i)_{\alpha,\beta}\,,\\

f_\gamma=\sum_{i=0}^m \sum\limits_{
\arraycolsep=1.4pt\def\arraystretch{.7}
\begin{array}{cc}
\scriptstyle \alpha,\beta\in\N^n_{k-d_i}\\
\scriptstyle \zeta\in\N^n_{2d_i}\\
\scriptstyle \alpha+\beta+\zeta=\gamma
\end{array}
} g_{i,\zeta} (G_i)_{\alpha,\beta}\,,\,\gamma\in\N^n_{2k}\backslash \{0\}\,.
\end{cases}
\end{equation}
This simplifies to: 
\begin{equation}
\begin{cases}
f_0-\lambda=\sum_{i=0}^m  g_{i,0} (G_i)_{0,0}\,,\\

f_\gamma=\sum_{i=0}^m \sum\limits_{
\arraycolsep=1.4pt\def\arraystretch{.7}
\begin{array}{cc}
\scriptstyle \zeta\in\N^n_{2d_i}\\
\scriptstyle \gamma-\zeta\in \N^n
\end{array}
} g_{i,\zeta} \sum\limits_{
\arraycolsep=1.4pt\def\arraystretch{.7}
\begin{array}{cc}
\scriptstyle \alpha,\beta\in\N^n_{k-d_i}\\
\scriptstyle \alpha+\beta=\gamma-\zeta
\end{array}
} (G_i)_{\alpha,\beta}\,,\,\gamma\in\N^n_{2k}\backslash \{0\}\,.
\end{cases}
\end{equation}
Using $B^{(\eta)}_i$ as defined in \eqref{eq:def.B.i.eta}, the second constraint becomes:
\begin{equation}
f_\gamma=\sum_{i=0}^m \sum\limits_{
\arraycolsep=1.4pt\def\arraystretch{.7}
\begin{array}{cc}
\scriptstyle \zeta\in\N^n_{2d_i}\\
\scriptstyle \gamma-\zeta\in \N^n
\end{array}
} g_{i,\zeta} \tr( B_i^{(\gamma-\xi)} G_i)=\tr(   B_g^{(\gamma)} G)\,,\,\gamma\in\N^n_{2k}\backslash \{0\}\,,
\end{equation}
where $  B_g^{(\gamma)}$ is defined as in \eqref{eq.tilde.B.gamma}.
Similarly, $\lambda=f_0-\tr(   B_g^{(0)} G)$.
By the definition of $\tilde A_i$ and $\tilde C$, we have:
\begin{equation}
\begin{cases}
\lambda=f_0-\tr(   B_g^{(0)} G) \Leftrightarrow\lambda =f_0-\tr( \tilde C G)\,,\\
f_{\gamma_i}=\tr(  B_g^{(\gamma_i)} G)\Leftrightarrow \tilde b_i=\tr(\tilde A_i G)\,,\,i=1,\dots,M\,.
\end{cases}
\end{equation}
Then problem \eqref{eq:sdp.bounded.trace.cons.pop.exact} becomes
\begin{equation}\label{eq:sdp.bounded.trace.cons.pop.2.exact}
\begin{array}{rl}
f_0-\lambda_k=\inf\limits_{G} & \tr( \tilde C G)\\
\text{s.t.}& G\in\mathbb S_+^{\tilde N}\,,\,\tr(G)\le 1\,,\\
&\tr( \tilde A_i G)=\tilde b_i\,,\,i=1,\dots,M\,.
\end{array}
\end{equation}
By the definition of $C$ and $A_i$, this is equivalent to \eqref{eq:sdp}.

Similarly, \eqref{eq:sdp.trace.one.cons.pop.2.1.exact} follows because removing the constraint $\tr(G)\le 1$ from \eqref{eq:sdp.bounded.trace.cons.pop.exact} does not affect the optimal value of the problem.
Thus, the result is established.
\end{proof}

\begin{lemma}\label{lem:properties.data.A.cons.exact}
Let $f\in\R[x]_{2d}$ and $g_i\in\R[x]_{2d_i}$, for $i=0,\dots,m$, with $g_0=1$.
Suppose Assumptions \ref{ass:cond.bounded.trace.2} and \ref{ass:rescale.g} holds.
Let $C$, $A_i,b_i$, for  $i=1,\dots,M$, and $N$ be generated by  Algorithm \ref{alg:con.pop.exact}.
Set $\lambda^\star=f_0-\lambda_k$.
Then the following properties hold:
\begin{enumerate}[(i)]
\item\label{sparsity.A} The maximum number of nonzero entries in any row of $C$ and $A_i$, for $i=1,\dots,M$, is $s_g\le \max\limits_{i=0,\dots,m} |\supp(g_i)|\le \max\limits_{i=0,\dots,m}\binom{n+2d_i}n$.
\item\label{bounded.eigenvalues} $-I_N\preceq C\preceq I_N$ and $-I_N\preceq A_i\preceq I_N$, for $i=1,\dots,M$.
\item\label{bound.opt.val.cons} $\lambda^\star\in[\lambda_{\min},\lambda_{\max}]$.
\item\label{remove.tr.cons.pop} $\lambda^\star=\inf\{\tr( CX)\,:\,X\in\mathbb S^{N}_+\,,\,\tr( A_i X) =b_i\,,\,i=1,\dots,M\}$.
\end{enumerate}
\end{lemma}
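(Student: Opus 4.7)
The plan is to dispatch the four items in order, relying mainly on lemmas already established for the coefficient-matching matrices and on the reformulation result for the constrained SDP.

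First I would verify \eqref{sparsity.A}. By construction in Algorithm \ref{alg:con.pop.exact}, $C=\diag(\tilde C,0)$ and $A_i=\diag(\tilde A_i,0)$, where $\tilde C$ and $\tilde A_i$ are block-diagonal matrices whose blocks are of the form $B_{\gamma,i}=\sum_{\zeta\in\N^n_{2d_i},\,\gamma-\zeta\in\N^n} g_{i,\zeta}\,B_i^{(\gamma-\zeta)}$, as in \eqref{eq:ith.block.tilde.B.gamma}. By Lemma \ref{lem:properties.B.gamma}\eqref{unique.nonzero.per.row}, each row of $B_i^{(\gamma-\zeta)}$ has at most one nonzero entry, and by Lemma \ref{lem:properties.B.gamma}\eqref{unique.one.each.position} the nonzero positions of $B_i^{(\gamma-\zeta)}$ and $B_i^{(\gamma-\zeta')}$ are disjoint whenever $\zeta\ne\zeta'$. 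Hence the nonzero entries contributed by distinct $\zeta\in\supp(g_i)$ occupy distinct positions, and the number of nonzero entries in any row of $B_{\gamma,i}$ is at most $|\supp(g_i)|$. Appending the trivial zero block in the passage from $\tilde C,\tilde A_i$ to $C,A_i$ does not change this count, so $s_g\le \max_i|\supp(g_i)|\le \max_i\binom{n+2d_i}{n}$.

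Item \eqref{bounded.eigenvalues} is immediate from Lemma \ref{lem:bounded.eigenvalues}, which uses Assumption \ref{ass:rescale.g}: it yields $-I\preceq   B_g^{(\gamma)}\preceq I$ for every $\gamma\in\N^n_{2k}$, hence $-\tilde I\preceq \tilde C,\tilde A_i\preceq \tilde I$ on the block of size $\tilde N$; padding with a zero row/column preserves the spectral bound, so $-I_N\preceq C,A_i\preceq I_N$.

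For \eqref{bound.opt.val.cons}, observe that $a\in S(g)$ is feasible for the POP \eqref{eq:cons.pop}, so $f(a)\ge f^\star\ge \lambda_k\ge \underline\lambda$ (the middle inequality is the standard SOS lower bound and the last is by Assumption \ref{ass:cond.bounded.trace.2}). Subtracting from $f_0$ flips the inequalities and gives $\lambda_{\min}=f_0-f(a)\le f_0-\lambda_k=\lambda^\star\le f_0-\underline\lambda=\lambda_{\max}$.

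Finally, \eqref{remove.tr.cons.pop} is exactly the content of \eqref{eq:sdp.trace.one.cons.pop.2.1.exact} in Lemma \ref{lem:equi.sdp.relax.cons.exact}, which states that removing the trace constraint from the standard-form reformulation of \eqref{eq:sdp.relaxation.cons.pop} leaves the optimal value unchanged; under Assumption \ref{ass:cond.bounded.trace.2} this is justified by Lemma \ref{lem:cond.bounded.trace.2}, which forces $\tr(G^\star)\le 1$ for any optimal Gram matrix, so the trace bound is inactive. None of the steps above is a real obstacle; the only nontrivial bookkeeping is the disjoint-support argument in \eqref{sparsity.A}, which is why it relies squarely on parts \eqref{unique.nonzero.per.row} and \eqref{unique.one.each.position} of Lemma \ref{lem:properties.B.gamma}.
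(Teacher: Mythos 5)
Your proposal is correct and follows essentially the same path as the paper: (i) via the row-sparsity of the $B_i^{(\eta)}$ blocks, (ii) via Lemma \ref{lem:bounded.eigenvalues}, (iii) via the chain $\underline\lambda\le\lambda_k\le f^\star\le f(a)$, and (iv) via Lemma \ref{lem:equi.sdp.relax.cons.exact}. The only minor deviation is in (i), where you additionally invoke Lemma \ref{lem:properties.B.gamma}\eqref{unique.one.each.position} to argue that the nonzero positions for distinct $\zeta\in\supp(g_i)$ are disjoint; this is true but unnecessary for the upper bound, since a sum of $|\supp(g_i)|$ matrices each with at most one nonzero per row has at most $|\supp(g_i)|$ nonzeros per row regardless of overlap, which is the argument the paper uses.
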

\begin{proof}
\eqref{sparsity.A} By the definition of $  B_g^{(\gamma)}$ in \eqref{eq.tilde.B.gamma}, for $i=0,\dots,m$, each row of the $i$-th block of $  B_g^{(\gamma)}$ contains at most $|\supp(g_i)|\le |\N^n_{2d_i}|=\binom{n+2d_i}n$ nonzero entries.
This is because $B_i^{(\eta)}$ has at most one nonzero entry on each row by Lemma \ref{lem:properties.B.gamma} \eqref{unique.nonzero.per.row}.
Consequently, each row of $  B_g^{(\gamma)}$ has at most $s_g\le \max\limits_{i=0,\dots,m}|\supp(g_i)|\le \max\limits_{i=0,\dots,m}\binom{n+2d_i}n$ nonzero entries.
The same property holds for $C=\diag(  B_g^{(0)},0)$ and $A_i=\diag(  B_g^{(\gamma_i)},0)$.

\eqref{bounded.eigenvalues} follows from Lemma \ref{lem:bounded.eigenvalues} and the definitions of $C$ and $A_i$.

\eqref{bound.opt.val.cons} Since $\lambda_{\min}=f_0-f(a)$ for $a\in S(g)$, and $\lambda_{\max}=f_0-\underline \lambda$ with $\underline \lambda\le \lambda_k\le f^\star\le f(a)$, we conclude that $\lambda^\star\in[\lambda_{\min},\lambda_{\max}]$.

\eqref{remove.tr.cons.pop} follows from the last statement of Lemma \ref{lem:equi.sdp.relax.cons.exact}.

\end{proof}

\begin{lemma}\label{lem:equivalent.mom.relax}
Let $C$, $A_i,b_i$, for  $i=1,\dots,M$, and $N$ be generated by Algorithm \ref{alg:con.pop.exact}.
Assume that strong duality holds for problems \eqref{eq:sdp.relaxation.cons.pop}-\eqref{eq:sdp.relaxation.cons.pop.dual},as well as for problem \eqref{eq:sdp.trace.one.cons.pop.2.1.exact} and its dual \eqref{eq:dual.no.tr.cons}.
Let $\xi\in\R^M$, and define $y=(y_{\gamma})_{\gamma\in\N^n_{2k}}$ such that $y_{0}=1$ and $y_{\gamma_i}=-\xi_i$, for $i=1,\dots,M$.
Then $y$ is an optimal solution to the moment relaxation \eqref{eq:sdp.relaxation.cons.pop.dual} if and only if $\xi$ is an optimal solution to problem \eqref{eq:dual.no.tr.cons}.
\end{lemma}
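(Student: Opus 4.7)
The plan is to mirror the argument of Lemma \ref{lem:equivalent.mom.relax.0}, but now using the coefficient-matching matrices $B_g^{(\gamma)}$ from \eqref{eq.tilde.B.gamma} in place of the single-block $B^{(\gamma)}$. First, I would unwind the definitions of $b_i$ and $\tilde A_i$ supplied by Algorithm \ref{alg:con.pop.exact} to rewrite the Riesz functional as
\[
L_y(f)=\sum_{\gamma\in\N^n_{2k}}f_\gamma y_\gamma=f_0+\sum_{i=1}^M f_{\gamma_i}y_{\gamma_i}=f_0-\sum_{i=1}^M b_i\xi_i=f_0-b^\top\xi,
\]
where I used $y_0=1$ and $y_{\gamma_i}=-\xi_i$. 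This identifies the objective of \eqref{eq:sdp.relaxation.cons.pop.dual} with (a shift of) the objective of \eqref{eq:dual.no.tr.cons}.

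Next, I would translate the positive semidefiniteness conditions. Using the key identity $D_k(gy)=\sum_{\gamma\in\N^n_{2k}}y_\gamma B_g^{(\gamma)}$ noted after \eqref{eq.tilde.B.gamma}, and plugging in $y_0=1$, $y_{\gamma_i}=-\xi_i$, I obtain
\[
D_k(gy)=B_g^{(0)}+\sum_{i=1}^M y_{\gamma_i}B_g^{(\gamma_i)}=\tilde C-\sum_{i=1}^M\xi_i\tilde A_i.
\]
Recalling that $C=\diag(\tilde C,0)$ and $A_i=\diag(\tilde A_i,0)$, padding with a single zero row and column yields
\[
\diag\bigl(D_k(gy),0\bigr)=C-\sum_{i=1}^M\xi_iA_i=C-\mathcal A^\top\xi.
\]
Hence $D_k(gy)\succeq 0$ (which, by the block-diagonal structure \eqref{eq:block.diagonal.mom.local}, is precisely the joint condition $M_k(y)\succeq 0$ and $M_{k-d_i}(g_iy)\succeq 0$ for $i=1,\dots,m$) is equivalent to $C-\mathcal A^\top\xi\succeq 0$, the dual feasibility in \eqref{eq:dual.no.tr.cons}.

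To conclude, I would invoke the two strong-duality hypotheses. On the moment side, $\tau_k=\lambda_k$ gives that $y$ is optimal in \eqref{eq:sdp.relaxation.cons.pop.dual} iff $y$ is feasible and $L_y(f)=\lambda_k$. On the SDP side, strong duality between \eqref{eq:sdp.trace.one.cons.pop.2.1.exact} and \eqref{eq:dual.no.tr.cons} gives that $\xi$ is optimal in \eqref{eq:dual.no.tr.cons} iff $\xi$ is feasible and $b^\top\xi=\lambda^\star=f_0-\lambda_k$. Combining with the two identities above,
\[
L_y(f)=\lambda_k\iff b^\top\xi=f_0-\lambda_k=\lambda^\star,\qquad M_k(y),M_{k-d_i}(g_iy)\succeq 0\iff C-\mathcal A^\top\xi\succeq 0,
\]
yielding the claimed equivalence. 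The only bookkeeping subtlety, and the main thing I would be careful about, is that the trace-one constraint $y_0=1$ on the moment side corresponds under the encoding to the zero-padding block structure (and to dropping the trace constraint via Lemma \ref{lem:equi.sdp.relax.cons.exact}), so that the dual of the ``untraced'' primal \eqref{eq:sdp.trace.one.cons.pop.2.1.exact} is genuinely \eqref{eq:dual.no.tr.cons} and not a variant carrying an extra multiplier for the trace equation.
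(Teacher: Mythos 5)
Your proposal is correct and follows essentially the same route as the paper's proof: express $L_y(f)=f_0-b^\top\xi$ via the identification $y_0=1$, $y_{\gamma_i}=-\xi_i$; identify $D_k(gy)=\tilde C-\sum_i\xi_i\tilde A_i$ (you cite the representation $D_k(gy)=\sum_\gamma y_\gamma B_g^{(\gamma)}$ as stated after \eqref{eq.tilde.B.gamma}, whereas the paper re-derives the analogous block-level identity entry by entry — that is the only difference, and it is cosmetic); then pad with a zero block and chain the two strong-duality hypotheses. One small inaccuracy in your closing remark: the constraint $y_0=1$ on the moment side is not the ``trace-one constraint'' — it is the normalization dual to the free scalar $\lambda$ in the SOS relaxation and is already baked into the definition of $y$ — while the trace constraint $\tr(X)=1$ is the thing that Lemma~\ref{lem:equi.sdp.relax.cons.exact} shows may be dropped (via the bounded-trace property) so that the dual is indeed \eqref{eq:dual.no.tr.cons} without an extra multiplier; your underlying worry is legitimate, you just mislabel which constraint is which.
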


\begin{proof}
By strong duality, it holds that $\tau_k=\lambda_k$ and $\lambda^\star=\tau^\star$.
From the definition of $b$, the Riesz functional $L_y(f)$ can be expessed as
\begin{equation}\label{eq:rep.risz.linear.functional}
L_y(f)=y_0 f_0+\sum_{j=1}^M y_{\gamma_j}f_{\gamma_j}=f_0-b^\top \xi
\end{equation}
Next, we claim the following representation of the localizing matrix:
\begin{equation}\label{eq:rep.localizing.mat}
M_{k-d_i}(g_iy)=y_0 B_{0,i}+\sum_{j=1}^M y_{\gamma_j}B_{\gamma_j,i}\,,
\end{equation}
where 
$B_{\gamma,i}$, the $i$-th block of $  B_g^{(\gamma)}$, is defined as in \eqref{eq:ith.block.tilde.B.gamma}.
To show this, note that:
\begin{equation}
y_0 B_{0,i}+\sum_{j=1}^M y_{\gamma_j}B_{\gamma_j,i}=
\sum_{\gamma\in \N^n_{2k}} y_\gamma B_{\gamma,i}=\sum_{\gamma\in\N^n_{2k}}y_\gamma \sum\limits_{
\arraycolsep=1.4pt\def\arraystretch{.7}
\begin{array}{cc}
\scriptstyle \zeta\in\N^n_{2d_i}\\
\scriptstyle \gamma-\zeta\in \N^n
\end{array}
} g_{i,\zeta} B_i^{(\gamma-\zeta)}\,.
\end{equation}
Since $(B_i^{(\gamma-\zeta)})_{\alpha,\beta}$ equals $1$ if $\alpha+\beta=\gamma-\zeta$, and $0$ otherwise, the $(\alpha,\beta)$-entry of matrix $y_0 B_{0,i}+\sum_{j=1}^M y_{\gamma_j}B_{\gamma_j,i}$ is given by
\begin{equation}
\begin{array}{rl}
&\sum_{\gamma\in\N^n_{2k}}y_\gamma \sum\limits_{
\arraycolsep=1.4pt\def\arraystretch{.7}
\begin{array}{cc}
\scriptstyle \zeta\in\N^n_{2d_i}\\
\scriptstyle \gamma-\zeta\in \N^n
\end{array}
} g_{i,\zeta} (B_i^{(\gamma-\zeta)})_{\alpha,\beta}\\
=&\sum_{\gamma\in\N^n_{2k}}y_\gamma \sum\limits_{
\arraycolsep=1.4pt\def\arraystretch{.7}
\begin{array}{cc}
\scriptstyle \zeta\in\N^n_{2d_i}\\
\scriptstyle \gamma-\zeta\in \N^n\\
\scriptstyle \alpha+\beta=\gamma-\zeta
\end{array}
} g_{i,\zeta} \\
= &\sum\limits_{\zeta\in\N^n_{2d_i}} g_{i,\zeta}
 \sum\limits_{
\arraycolsep=1.4pt\def\arraystretch{.7}
\begin{array}{cc}
\scriptstyle \gamma\in\N^n_{2k}\\
\scriptstyle \gamma=\alpha+\beta+\zeta
\end{array}
}y_\gamma\\
=&\sum\limits_{\zeta\in\N^n_{2d_i}} g_{i,\zeta}
 y_{\alpha+\beta+\zeta}\,.
\end{array}
\end{equation}
This matches the $(\alpha,\beta)$-entry of $M_{k-d_i}(g_iy)$, confirming \eqref{eq:rep.localizing.mat}.

From \eqref{eq:rep.localizing.mat} and the definition of $\tilde C,\tilde A_i$, we have
\begin{equation}
\begin{array}{rl}
D_k(gy)=& \diag((M_{k-d_i}(g_iy))_{i=0,\dots,m})\\[5pt]
=& y_0 \diag((B_{0,i})_{i=0,\dots,m})+\sum_{j=1}^M y_{\gamma_j}\diag((B_{\gamma_j,i})_{i=0,\dots,m})\\[5pt]
=& y_0   B_g^{(0)}+\sum_{j=1}^M y_{\gamma_j}  B_g^{(\gamma_j)}\\[5pt]
=&\tilde C-\sum_{j=1}^M \xi_j\tilde A_j\,,
\end{array}
\end{equation}
where $D_k(gy)$ is defined as in \eqref{eq:block.diagonal.mom.local}.

Thus, we obtain
 $\diag(D_k(gy),0)= C-\sum_{i=1}^M \xi_i A_i=C-\mathcal A^\top \xi$.
 Using this result alongside \eqref{eq:rep.risz.linear.functional}, we can conclude the following equivalences:
 \begin{equation}
 \begin{array}{rl}
 &y\text{ is an optimal solution for \eqref{eq:sdp.relaxation.cons.pop.dual}}\\[5pt]
 \Leftrightarrow & D_k(gy)\succeq 0 \text{ and }L_y(f)=\tau_k=\lambda_k \\[5pt]
  \Leftrightarrow & C-\mathcal A^\top \xi\succeq 0 \text{ and }b^\top \xi =f_0-\lambda_k=\lambda^\star=\tau^\star \\[5pt]
  \Leftrightarrow &\xi\text{ is an optimal solution for \eqref{eq:dual.no.tr.cons}.}
 \end{array}
 \end{equation}
\end{proof}

\subsubsection{Proof of Theorem \ref{theo:convergence.ineq.cons}}
\label{proof:theo:convergence.ineq.cons}
\begin{proof}
Let $\lambda^\star=f_0-\lambda_k$.
Since strong duality holds for problems  \eqref{eq:sdp.relaxation.cons.pop}-\eqref{eq:sdp.relaxation.cons.pop.dual}, it also holds for problem \eqref{eq:sdp.trace.one.cons.pop.2.1.exact} and its dual \eqref{eq:dual.no.tr.cons} by Lemma \ref{lem:equi.sdp.relax.cons.exact}.

Let $y^\star=(y^\star_\alpha)_{\alpha\in\N^n_{2k}}$ be the truncated moment sequence with respect to the Dirac measure $\delta_{x^\star}$, where $y^\star_\alpha=\int x^\alpha d \delta_{x^\star}=x^{\star \alpha}$.

We claim that $y^\star$ is an optimal solution to problem \eqref{eq:sdp.relaxation.cons.pop.dual}.
First, observe that $y^{\star}_0=(x^{\star})^0=1$ and $M_{k-d_i}(g_iy^{\star})=g_i(x^\star)v_{k-d_i}(x^\star)v_{k-d_i}(x^\star)^\top\succeq 0$.
Then $y^\star$ is a feasible solution for problem \eqref{eq:sdp.relaxation.cons.pop.dual}.
Additionally, $L_{y^\star}(f)=\int f d \delta_{x^\star}=f(x^\star)=f^\star=\lambda_k=\tau_k$,where the last equality follows from strong duality. Therefore, $y^\star$ is an optimal solution to problem \eqref{eq:sdp.relaxation.cons.pop.dual}.
Define $\tilde y^\star=(y^\star_\gamma)_{\gamma\in\N^n_{2k}\backslash \{0\}}$.
By Lemma \ref{lem:equivalent.mom.relax}, $-\tilde y^\star$ is optimal for problem \eqref{eq:dual.no.tr.cons}.
 
The remainder follows similarly to the proof of Theorem \ref{theo:accuarcy.approximate.val.uncons} using Lemmas \ref{lem:properties.data.A.cons.exact} and \ref{lem:convergence.sdp}.
\end{proof}
\subsubsection{Proof of Theorem \ref{theo:convergence.over.ball}}
\label{proof:theo:convergence.over.ball}
\begin{proof}
Let $y=(y_\gamma)_{\gamma\in\N^n_{2k}}$ be a feasible solution for problem \eqref{eq:sdp.relaxation.cons.pop.dual}.

We first prove by induction that for $j=1,\dots,k$,
\begin{equation}\label{eq:ineq.ball}
L_y(R^j-\|x\|_{\ell_2}^{2j})\ge 0\,.
\end{equation}
For $j=1$, we have $L_y(R^j-\|x\|_{\ell_2}^{2j})=L_y(g_1)\ge 0$ since it corresponds to the $(0,0)$-entry  of the localizing matrix $M_{k-d_1}(g_1y)\succeq 0$.
Assume that \eqref{eq:ineq.ball} holds for $j=1,\dots,t$ with $1\le t\le k$.
To prove for $j=t+1$, use the identity:
\begin{equation}\label{eq:rep.r.power.ball}
\begin{array}{rl}
R^{t+1}-\|x\|_{\ell_2}^{2(t+1)}=R(R^{t}-\|x\|_{\ell_2}^{2t})+\|x\|_{\ell_2}^{2t}(R-\|x\|_{\ell_2}^{2})\,.
\end{array}
\end{equation}
Since $\|x\|_{\ell_2}^{2t}$ is a sum-of-squares  when $t$ is even, it can be expressed as:
\begin{equation}
\|x\|_{\ell_2}^{2t}=(u^\top v_{k-d_1})^2=u^\top v_{k-d_1}v_{k-d_1}^\top u\,,
\end{equation}
for some real vector $u$.
Applying $L_y$ to the identity \eqref{eq:rep.r.power.ball}:
\begin{equation}
\begin{array}{rl}
L_y(R^{t+1}-\|x\|_{\ell_2}^{2(t+1)})
=&L_y(R(R^t-\|x\|_{\ell_2}^{2t}))+L_y(\|x\|_{\ell_2}^{2t}(R-\|x\|_{\ell_2}^2))\\[5pt]
=&RL_y(R^t-\|x\|_{\ell_2}^{2t})+\frac{1}{c}L_y(u^\top v_{k-d_1}v_{k-d_1}^\top u g_1)\\[5pt]
\ge &\frac{1}{c}L_y(u^\top (g_1v_{k-d_1}v_{k-d_1}^\top) u)\\[5pt]
=& \frac{1}{c}\tr( uu^\top M_{k-d_1}(g_1y))\ge 0\,,
\end{array}
\end{equation}
where the first inequality uses the induction hypothesis, and the second follows from the positive semidefiniteness of $uu^\top$ and $M_{k-d_{1}}(g_{1}y)$.
Thus, the induction is complete.

Next, we find the upper bound on the trace of the moment matrix $M_k(y)$.
From \eqref{eq:ineq.ball}, 
\begin{equation}
\begin{array}{rl}
R^j=R^jy_0=L_y(R^j)\ge& L_y((x_1^2+\dots+x_n^2)^j)\\[5pt]
=&L_y(\sum_{\arraycolsep=1.4pt\def\arraystretch{.7}
\begin{array}{cc}
\scriptstyle \gamma\in\N^{n}\\
\scriptstyle |\gamma|=j
\end{array}
} \binom{j}{\gamma} x^{2\gamma})\\[15pt]
=&\sum_{\arraycolsep=1.4pt\def\arraystretch{.7}
\begin{array}{cc}
\scriptstyle \gamma\in\N^{n}\\
\scriptstyle |\gamma|=j
\end{array}
} \binom{j}{\gamma} y_{2\gamma}\\[15pt]
\ge &\sum_{\arraycolsep=1.4pt\def\arraystretch{.7}
\begin{array}{cc}
\scriptstyle \gamma\in\N^{n}\\
\scriptstyle |\gamma|=j
\end{array}
} y_{2\gamma}\,,
\end{array}
\end{equation}
where the last step follows because $\binom{j}{\gamma}\ge 1$.
Summing over $j=1,\dots,k$, we get
\begin{equation}
\tr(M_{k}(y))=y_0+\sum_{\gamma\in\N^n_k\backslash\{0\}}y_{2\gamma}=1+ \sum_{j=1}^{k} \sum_{\arraycolsep=1.4pt\def\arraystretch{.7}
\begin{array}{cc}
\scriptstyle \gamma\in\N^{n}\\
\scriptstyle |\gamma|=j
\end{array}
} y_{2\gamma} \le 1+ \sum_{j=1}^{k} R^j\le \frac{1}{1-R}\,.
\end{equation}
From this, it follows that
\begin{equation}
\begin{array}{rl}
\frac{1}{1-R}\ge& \tr(M_{k}(y))=\|M_{k}(y)\|_*\qquad\quad\text{(since $M_{k}(y)\succeq 0$)}\\[5pt]
\ge& \|M_{k}(y)\|_F=\|\text{vec}(M_{k}(y))\|_{\ell_2}\\[5pt]
\ge&\frac{1}{\sqrt{L}} \|\text{vec}(M_{k}(y))\|_{\ell_1}\\[5pt]
=&\frac{1}{\sqrt{L}}\left(\sum_{\alpha\in\N^n_k} |y_{2\alpha}|+\sum_{\arraycolsep=1.4pt\def\arraystretch{.7}
\begin{array}{cc}
\scriptstyle \alpha,\beta\in\N^n_k\\
\scriptstyle \alpha\ne \beta
\end{array}
} \sqrt{2}|y_{\alpha+\beta}|\right)\\[15pt]
\ge& \frac{1}{\sqrt{L}}\sum_{\gamma\in\N^n_{2k}}|y_\gamma|\,,
\end{array}
\end{equation}
where $L=\frac{1}{2}\binom{n+k}n[\binom{n+k}n+1]$ is the length of $\text{vec}(M_{k}(y))$.
This implies that with $\tilde y=(y_\alpha)_{\alpha\in\N^n_{2k}\backslash \{0\}}$, 
\begin{equation}
\|\tilde y\|_{\ell_1}\le \sum_{\gamma\in\N^n_{2k}}|y_\gamma| \le \frac{\sqrt{L}}{1-R}\,.
\end{equation}
Thus, for $y^\star$, the optimal solution to problem \eqref{eq:sdp.relaxation.cons.pop.dual}, we have $\|\tilde y^\star\|_{\ell_1}\le \frac{\sqrt{L}}{1-R}$, where $\tilde y^\star=(y^\star_\gamma)_{\gamma\in\N^n_{2k}\backslash \{0\}}$.
Moreover, $-\tilde y^\star$ is optimal for problem \eqref{eq:dual.no.tr.cons}.
The remaining proof follows similarly to the argument in Theorem \ref{theo:convergence.ineq.cons}.
\end{proof}

\subsubsection{Proof of Theorem \ref{theo:convergence.over.simplex}}
\label{proof:theo:convergence.over.simplex}
\begin{proof}
Let $y=(y_\gamma)_{\gamma\in\N^n_{2k}}$ be a feasible solution for problem \eqref{eq:sdp.relaxation.cons.pop.dual}.

We first prove that $y_\gamma\ge 0$ for all $\gamma\in\N^n_{2k}$.
\begin{itemize}
\item If $\gamma=0$, then $y_\gamma=y_0=1\ge 0$.
\item If $\gamma\ne 0$, there exists $\alpha\in\{0,1\}^n\backslash \{0\}$ such that $\gamma-\alpha\in 2\N^n$.
Define $\beta=\frac{\gamma-\alpha}{2}$. Then $\beta\in\N^n$, and we have $y_\gamma=L_y(x^{\gamma})=L_y(x^{\alpha+2\beta})=L_y(g_{i_\alpha}x^{2\beta})$.
This implies $y_\gamma$ is the $(\beta,\beta)$-entry of the localizing matrix $M_{k-d_{i_{\alpha}}}(g_{i_\alpha}y)$.
Since $M_{k-d_{i_{\alpha}}}(g_{i_\alpha}y)\succeq 0$, it follows that $y_\gamma\ge 0$. 
\end{itemize}

Next, we prove by induction that for $j=1,\dots,2k$,
\begin{equation}\label{eq:ineq.simplex}
L_y(r^j-(x_1+\dots+x_n)^j)\ge 0\,.
\end{equation}
For $j=1,2$, $L_y(r^j-(x_1+\dots+x_n)^j)=L_y(g_{i_j})\ge 0$ as it corresponds to the $(0,0)$-entry  of the localizing matrix $M_{k-d_{i_j}}(g_{i_j}y)\succeq 0$.
Assume that \eqref{eq:ineq.simplex} holds for $j=1,2,\dots,t$ with $3\le t\le 2k$.
We prove it for $j=t+1$.
Consider the following two cases:
\begin{itemize}
\item Case 1: $t$ is even. 
Using the identity
\begin{equation}\label{eq:rep.r.power}
\begin{array}{rl}
r^{t+1}-(x_1+\dots+x_n)^{t+1}=&r(r^t-(x_1+\dots+x_n)^t)\\
&+(x_1+\dots+x_n)^t(r-(x_1+\dots+x_n))\,,
\end{array}
\end{equation}
and noting that $(x_1+\dots+x_n)^t$ is a sum of squares when $t$ is even, we write
\begin{equation}
(x_1+\dots+x_n)^t=(u^\top v_{k-d_{i_1}})^2=u^\top v_{k-d_{i_1}}v_{k-d_{i_1}}^\top u\,,
\end{equation}
for some real vector $u$.
Applying $L_y$ for \eqref{eq:rep.r.power}, we get
\begin{equation}
\begin{array}{rl}
&L_y(r^{t+1}-(x_1+\dots+x_n)^{t+1})\\[5pt]
=&L_y(r(r^t-(x_1+\dots+x_n)^t))\\[5pt]
&+L_y((x_1+\dots+x_n)^t(r-(x_1+\dots+x_n)))\\[5pt]
=&rL_y(r^t-(x_1+\dots+x_n)^t)+\frac{1}{c_1}L_y(u^\top v_{k-d_{i_1}}v_{k-d_{i_1}}^\top u g_{i_1})\\[5pt]
\ge &\frac{1}{c_1}L_y(u^\top (g_{i_1}v_{k-d_{i_1}}v_{k-d_{i_1}}^\top) u)= \frac{1}{c_1}\tr( uu^\top M_{k-d_{i_1}}(g_{i_1}y))\ge 0\,,
\end{array}
\end{equation}
where the first inequality uses the induction hypothesis, and the second follows from the positive semidefiniteness of $uu^\top$ and $M_{k-d_{i_1}}(g_{i_1}y)$.
\item Case 2: $t$ is odd. 
Using the identity 
\begin{equation}\label{eq:rep.r.power.2}
\begin{array}{rl}
r^{t+1}-(x_1+\dots+x_n)^{t+1}=&r^2(r^{t-1}-(x_1+\dots+x_n)^{t-1})\\
&+(x_1+\dots+x_n)^{t-1}(r^2-(x_1+\dots+x_n)^2)\,,
\end{array}
\end{equation}
and noting that $(x_1+\dots+x_n)^{t-1}$ is a sum-of-squares when $t$ is odd, we write
\begin{equation}
(x_1+\dots+x_n)^{t-1}=(u^\top v_{k-d_{i_2}})^2=u^\top v_{k-d_{i_2}}v_{k-d_{i_2}}^\top u\,,
\end{equation}
for some real vector $u$.
Applying $L_y$ for \eqref{eq:rep.r.power.2}, we similarly obtain
\begin{equation}
\begin{array}{rl}
&L_y(r^{t+1}-(x_1+\dots+x_n)^{t+1})\\[5pt]
=&L_y(r^2(r^{t-1}-(x_1+\dots+x_n)^{t-1}))\\[5pt]
&+L_y((x_1+\dots+x_n)^{t-1}(r^2-(x_1+\dots+x_n)^2))\\[5pt]
=&r^2L_y(r^{t-1}-(x_1+\dots+x_n)^{t-1})+\frac{1}{c_2}L_y(u^\top v_{k-d_{i_2}}v_{k-d_{i_2}}^\top u g_{i_2})\\[5pt]
\ge &\frac{1}{c_2}L_y(u^\top (g_{i_2}v_{k-d_{i_2}}v_{k-d_{i_2}}^\top) u)= \frac{1}{c_2}\tr(uu^\top M_{k-d_{i_2}}(g_{i_2}y))\ge 0\,.
\end{array}
\end{equation}
\end{itemize}
In both cases, $L_y(r^{t+1}-(x_1+\dots+x_n)^{t+1})\ge 0$, completing the induction.

We now show that $\sum_{\gamma\in\N^n_{2k}\backslash \{0\}} y_\gamma\le \frac{r}{1-r}$.
From the inequality \eqref{eq:ineq.simplex}, 
\begin{equation}
\begin{array}{rl}
r^j=r^jy_0=L_y(r^j)\ge& L_y((x_1+\dots+x_n)^j)\\[5pt]
=&L_y(\sum_{\arraycolsep=1.4pt\def\arraystretch{.7}
\begin{array}{cc}
\scriptstyle \gamma\in\N^{n}\\
\scriptstyle |\gamma|=j
\end{array}
} \binom{j}{\gamma} x^\gamma)\\[15pt]
=&\sum_{\arraycolsep=1.4pt\def\arraystretch{.7}
\begin{array}{cc}
\scriptstyle \gamma\in\N^{n}\\
\scriptstyle |\gamma|=j
\end{array}
} \binom{j}{\gamma} y_\gamma\\[10pt]
\ge &\sum_{\arraycolsep=1.4pt\def\arraystretch{.7}
\begin{array}{cc}
\scriptstyle \gamma\in\N^{n}\\
\scriptstyle |\gamma|=j
\end{array}
} y_\gamma\,,
\end{array}
\end{equation}
where the last step follows because $\binom{j}{\gamma}\ge 1$ and $y_\gamma\ge 0$.
Summing over $j=1,\dots,2k$, we get
\begin{equation}
\sum_{\gamma\in\N^n_{2k}\backslash \{0\}} y_\gamma=\sum_{j=1}^{2k} \sum_{\arraycolsep=1.4pt\def\arraystretch{.7}
\begin{array}{cc}
\scriptstyle \gamma\in\N^{n}\\
\scriptstyle |\gamma|=j
\end{array}
} y_\gamma \le \sum_{j=1}^{2k} r^j\le \frac{r}{1-r}\,.
\end{equation}
With $\tilde y=(y_\gamma)_{\gamma\in\N^n_{2k}\backslash \{0\}}$, this implies that $\|\tilde y\|_{\ell_1}=\sum_{\gamma\in\N^n_{2k}\backslash \{0\}} y_\gamma \le \frac{r}{1-r}$
Thus, for $y^\star$, the optimal solution to problem \eqref{eq:sdp.relaxation.cons.pop.dual}, we have $\|\tilde y^\star\|_{\ell_1}\le \frac{r}{1-r}$, where $\tilde y^\star=(y^\star_\gamma)_{\gamma\in\N^n_{2k}\backslash \{0\}}$.
Moreover, $-\tilde y^\star$ is optimal for problem \eqref{eq:dual.no.tr.cons}.
The remainder of the proof follows similarly to the argument in Theorem \ref{theo:convergence.ineq.cons}.
\end{proof}

\subsubsection{Proof of Lemma \ref{lem:complexity.con.pop.exact}}
\label{proof:lem:complexity.con.pop.exact}
\begin{proof}
The most computationally expensive part of Algorithm \ref{alg:con.pop.exact} is the binary search using Hamiltonian updates in Step \ref{step:hamilton.update.cons.exact}.
By Lemma \ref{lem:complex.binary.search}, the classical computer requires $O((MsN+N^\omega)\varepsilon^{-2})$ operations, and the quantum computer requires $O(s(\sqrt{M}+\sqrt{N}\varepsilon^{-1})\varepsilon^{-4})$ operations,
where $s$ is the maximum number of nonzero entries in a row of the data matrices $C$ and $A_i$, for $i=1,\dots,M$.
By Lemma \ref{lem:properties.data.A.cons.exact} \eqref{sparsity.A}, we have $s=s_g\le\max\limits_{i=0,\dots,m} |\supp(g_i)|$, where the last value is the maximum number of nonzero coefficients of each $g_i$, for $i=1,\dots,m$.
Here, $M=\binom{n+2k}n-1$ and 
\begin{equation}\label{eq:bound.N}
N=1+\tilde N=1+\sum_{i=0}^m\binom{n+k-d_i}n \,.
\end{equation}
Hence the result follows.
\end{proof}

\subsection{Binary search using Hamiltonian updates for semidefinite programs with affine inequality constraints}
\begin{lemma}\label{lem:obj.at.X.ineq.aff}
Let $b\in\R^M$, $C\in \mathbb S^N$, and let $\mathcal A:\mathbb S^N\to \R^M$ be a linear operator defined by 
\begin{equation}
\mathcal AY=(\tr( A_1 Y),\dots,\tr( A_MY))\,,
\end{equation}
where $A_i\in\mathbb S^N$, for $i=1,\dots,M$.
Consider primal-dual semidefinite programs:
\begin{equation}\label{eq:primal.no.tr.cons.ineq.aff}
\begin{array}{rl}
\lambda^\star=\inf\limits_{Y}&\tr( CY)\\
\text{s.t.}&Y\in\mathbb S^{N}_+\,,\,\mathcal A Y\le b\,,
\end{array}
\end{equation}
\begin{equation}\label{eq:dual.no.tr.cons.ineq.aff}
\begin{array}{rl}
\tau^\star=\sup\limits_{\lambda}&b^\top \xi\\
\text{s.t.}&\xi\in\R^M\,,\,\lambda\le 0\,,\,C-\mathcal A^\top \xi \succeq 0\,,
\end{array}
\end{equation}
where $\mathcal A^\top:\R^M\to \mathbb S^N$ is the adjoint operator of $\mathcal A$, given by
\begin{equation}
\mathcal A^\top \xi=\sum_{i=1}^M \xi_i A_i\,.
\end{equation}
Assume strong duality holds for problems \eqref{eq:primal.no.tr.cons.ineq.aff}-\eqref{eq:dual.no.tr.cons.ineq.aff}: $\lambda^\star=\tau^\star$, and both problems admit solutions $Y^\star$  and $\xi^\star$, respectively.
Then, for all $X\in \mathbb S^N_+$, the following hold:
\begin{equation}
\tr( C X) \ge\lambda^\star - \xi^{\star\top}(b-\mathcal{A}X)\,.
\end{equation}
\end{lemma}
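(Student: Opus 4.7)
\medskip

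\noindent\textbf{Proof proposal.} The plan is to mirror the argument of Lemma~\ref{lem:obj.at.X}, replacing the equality-constrained KKT system by its inequality-constrained analogue. Concretely, introduce a primal slack $Z^\star\in\mathbb S^N_+$ and write the Lagrangian of problem \eqref{eq:primal.no.tr.cons.ineq.aff} as
\begin{equation}
L(Y,Z,\xi)=\tr(CY)-\tr(ZY)-\xi^{\top}(\mathcal A Y-b),
\end{equation}
where the sign convention $\xi\le 0$ (as in the dual \eqref{eq:dual.no.tr.cons.ineq.aff}) is consistent with $\mathcal A Y\le b$ in the primal. Stationarity in $Y$ gives $C-Z^\star-\mathcal A^\top \xi^\star=0$, so $Z^\star=C-\mathcal A^\top \xi^\star\succeq 0$ by feasibility of $\xi^\star$.

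Next, I would invoke the two complementary slackness conditions guaranteed by strong duality and the existence of primal/dual solutions $Y^\star,\xi^\star$:
\begin{equation}
\tr(Z^\star Y^\star)=0,\qquad \xi_i^\star\,(\mathcal A Y^\star-b)_i=0\ \text{for all }i=1,\dots,M.
\end{equation}
Combined, these force $\xi^{\star\top}(\mathcal A Y^\star-b)=0$, and hence $\lambda^\star=\tr(CY^\star)=\xi^{\star\top}b$.

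The main computation is then a direct substitution analogous to \eqref{eq:bound4}. For any $X\in\mathbb S^N_+$,
\begin{equation}
\begin{array}{rl}
\tr(CX)-\lambda^\star
=&\tr\!\big((Z^\star+\mathcal A^\top \xi^\star)X\big)-\xi^{\star\top}b\\[3pt]
=&\tr(Z^\star X)+\xi^{\star\top}(\mathcal A X-b)\\[3pt]
=&\tr(Z^\star X)-\xi^{\star\top}(b-\mathcal A X).
\end{array}
\end{equation}
Finally, since $Z^\star\succeq 0$ and $X\succeq 0$, their trace inner product is nonnegative, so $\tr(Z^\star X)\ge 0$, which yields the claimed bound $\tr(CX)\ge \lambda^\star-\xi^{\star\top}(b-\mathcal A X)$.

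The only subtle point — and the one I would spell out carefully — is the sign handling: unlike the equality case of Lemma~\ref{lem:obj.at.X}, the multiplier $\xi^\star$ is sign-constrained here ($\xi^\star\le 0$), and the residual $b-\mathcal A X$ is not generally nonnegative unless $X$ is primal feasible, so one cannot further collapse the right-hand side into an $\ell_1/\ell_\infty$ Hölder estimate without losing information. Keeping the linear form $-\xi^{\star\top}(b-\mathcal A X)$ intact is exactly what is needed for the downstream use of this lemma in the inequality-constrained Hamiltonian-updates analysis, where the residuals $b-\mathcal A X$ are controlled one-sidedly by the feasibility tests of Algorithm~\ref{alg:HU}.
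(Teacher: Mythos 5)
Your proof is correct and follows essentially the same route as the paper's: introduce $Z^\star=C-\mathcal A^\top\xi^\star\succeq 0$, invoke the two complementary slackness conditions $\tr(Z^\star Y^\star)=0$ and $\xi^{\star\top}(\mathcal A Y^\star-b)=0$, substitute $C=Z^\star+\mathcal A^\top\xi^\star$, and drop the nonnegative term $\tr(Z^\star X)$. The only cosmetic difference is that you first consolidate $\lambda^\star=\xi^{\star\top}b$ (equivalently, you could cite strong duality $\lambda^\star=\tau^\star=b^\top\xi^\star$ directly) and expand $\tr(CX)-\lambda^\star$ against that scalar, whereas the paper expands $\tr(C(X-Y^\star))$ and applies the slackness conditions term by term; the algebra is identical in substance. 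Your closing remark about why the bound must be left as a linear form rather than a Hölder estimate is a correct reading of how the lemma is used downstream.
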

\begin{proof}
Set $Z^\star=C-\mathcal A ^\top \xi^\star$.
Then $Z^\star\succeq 0$.
The Lagrangian has the form
\begin{equation}
L(Y,Z,\xi)=\tr( CY) -\tr( Z Y)-\xi^\top (\mathcal A X-b)\,.
\end{equation}
By the Karush–Kuhn–Tucker (KKT) conditions, we have
\begin{equation}\label{eq:KKT1.ineq.aff}
0=\frac{\partial L}{\partial Y}(Y^\star,Z^\star,\xi^\star)=C-Z^\star-\mathcal A^\top \xi^\star\,.
\end{equation}
\begin{equation}\label{eq:KKT2.ineq.aff}
\tr( Z^\star Y^\star)=0\,,
\end{equation}
\begin{equation}\label{eq:KKT3.ineq.aff}
\xi^{\star\top} (\mathcal A Y^\star-b)=0\,.
\end{equation}
Let $X\in \mathbb S^N_+$.
By \eqref{eq:KKT1.ineq.aff}, $C=Z^\star+\mathcal{A}^\top \xi^\star$.
Then
\begin{equation}\label{eq:bound4.ineq.aff}
\begin{array}{rl}
\tr( C X)-\lambda^\star=&\tr( C( X-Y^\star))\\
=&\tr(( Z^\star+\mathcal{A}^\top \xi^\star)( X-Y^\star))  \\
=& \tr( Z^\star X) + \tr( \mathcal{A}^\top \xi^\star (X-Y^\star))\\
=& \tr( Z^\star X) +   \xi^{\star\top}\mathcal{A}(X-Y^\star )\\
=& \tr( Z^\star X) +   \xi^{\star\top}(\mathcal{A}X-b )\\
\ge &   \xi^{\star\top}(\mathcal{A}X-b )\,. 
\end{array}
\end{equation}
The first equality follows from $\tr( C Y^\star) =\lambda^\star$, the third equality is due to \eqref{eq:KKT2.ineq.aff}, the last equality uses \eqref{eq:KKT3.ineq.aff}, and the last inequality is based on the positive semidefiniteness of $Z^\star$ and $X$.
Hence, the result follows.
\end{proof}

Consider the following semidefinite program with trace one:
\begin{equation}\label{eq:sdp.ineq.aff}
\begin{array}{rl}
\lambda^\star=\inf\limits_{X}&\tr( CX)\\
\text{s.t.}&X\in\mathbb S^{N}_+\,,\,\tr(X)= 1\,,\\
&\tr( A_iX) \le b_i\,,\,i=1,\dots,M\,,
\end{array}
\end{equation}
where $C,A_i\in\mathbb S^N$, and $b_i\in\R$, for $i=1,\dots,M$, are given.
Define the set $S^{(\lambda)}$ as
\begin{equation}\label{eq:S.rho.ineq.aff}
S^{(\lambda)}=\left\{X\in \mathbb S^N_+\left|\begin{array}{rl}
& \tr( A_i X) \le b_i\,,\,i=1,\dots,M\,,\\
&\tr( C X) \le \lambda\,,\,\tr(X)=1
\end{array} \right. \right\}\,.
\end{equation}
We propose the following algorithm to numerically solve the SDP \eqref{eq:sdp.ineq.aff}:
\begin{algorithm}\label{alg:Binary.search.HU.ineq.aff}
Binary search using Hamiltonian updates
\begin{itemize}
\item Input: $\lambda_{\min}\in\R$, $\lambda_{\max}>\lambda_{\min}$, $\varepsilon>0$, $A_i\in \mathbb S^N$, $b_i\in\R$, $i=1,\dots,M$.
\item Output: $\underline \lambda_T\in\R$ and $X_\varepsilon\in\mathbb S^N$.
\end{itemize}
\begin{enumerate}
\item Set $\underline\lambda_0=\lambda_{\min}$, $\overline\lambda_0=\lambda_{\max}$ and $T=\left\lceil \log_2\left(\frac{\lambda_{\max}-\lambda_{\min}}{\varepsilon}\right)\right\rceil$.
\item For $t=0,\dots,T-1$, do
\begin{enumerate}
\item Set $\lambda_t=\frac{\underline\lambda_t+\overline \lambda_t}{2}$.
\item Run Algorithm \ref{alg:HU} (Hamiltonian updates) to test the feasibility of $S^{(\lambda_t)}$ (defined in \eqref{eq:S.rho.ineq.aff}) and obtain $\eta_t\in\{0,1\}$ and $X_\varepsilon^{(t)}\in \mathbb S^N$.
\item Based on $\eta_t$:
\begin{enumerate}
\item If $\eta_t=0$, set $\underline \lambda_{t+1}=\lambda_t$ and $\overline \lambda_{t+1}=\overline \lambda_t$.
\item If $\eta_t=1$, set $\underline \lambda_{t+1}=\underline\lambda_t$, $\overline \lambda_{t+1}= \lambda_t$ and $X_\varepsilon=X_\varepsilon^{(t)}$. 
\end{enumerate}
\end{enumerate}
\end{enumerate}
\end{algorithm}
Define the set:
\begin{equation}
S^{(\lambda)}_\varepsilon=\left\{X\in \mathbb S^N_+\left|\begin{array}{rl}
& \tr( A_i X) \le b_i+\varepsilon\,,\,i=1,\dots,M\,,\\
&\tr( C X) \le \lambda+\varepsilon\,,\,\tr(X)=1
\end{array} \right. \right\}\,.
\end{equation}
Note that $S^{(\lambda)}_0=S^{(\lambda)}$.

The following lemma provides a necessary and sufficient condition for $S^{(\lambda)}$ to be nonempty:

\begin{lemma}\label{lem:feas.ineq.ineq.aff}
Let $C,A_i\in \mathbb S^N$, and $b_i\in\R$, for $i=1,\dots,M$.
Let $\lambda^\star$ be the optimal value of the semidefinite program in \eqref{eq:sdp.ineq.aff}.
Assume that the problem \eqref{eq:sdp.ineq.aff} has an optimal solution.
Define $S^{(\lambda)}$ as in \eqref{eq:S.rho.ineq.aff}.
Then,
\begin{equation}
S^{(\lambda)}\ne\emptyset\quad\Leftrightarrow\quad \lambda\ge \lambda^\star\,.
\end{equation}
\end{lemma}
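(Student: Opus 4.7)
\textbf{Proof plan for Lemma \ref{lem:feas.ineq.ineq.aff}.}

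The plan is to mirror the argument of Lemma \ref{lem:feas.ineq}, since the only structural difference between the two settings is that the affine constraints $\tr(A_i X) = b_i$ have been relaxed to $\tr(A_i X) \le b_i$. Both implications will follow by directly exhibiting feasible points.

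For the direction $(\Leftarrow)$, I will invoke the assumption that \eqref{eq:sdp.ineq.aff} admits an optimal solution $X^\star$. By optimality, $X^\star \succeq 0$, $\tr(X^\star) = 1$, and $\tr(A_i X^\star) \le b_i$ for all $i=1,\dots,M$, and in addition $\tr(C X^\star) = \lambda^\star$. If $\lambda \ge \lambda^\star$, then also $\tr(C X^\star) \le \lambda$, so $X^\star$ satisfies all the defining conditions of $S^{(\lambda)}$ in \eqref{eq:S.rho.ineq.aff}. Hence $X^\star \in S^{(\lambda)}$, showing $S^{(\lambda)} \ne \emptyset$.

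For the direction $(\Rightarrow)$, I will pick any $Y \in S^{(\lambda)}$. By definition of $S^{(\lambda)}$, the matrix $Y$ is positive semidefinite, has unit trace, and satisfies $\tr(A_i Y) \le b_i$ for every $i$. Therefore $Y$ is a feasible point of \eqref{eq:sdp.ineq.aff}, which yields $\tr(C Y) \ge \lambda^\star$. Combining this with the defining inequality $\tr(C Y) \le \lambda$ of $S^{(\lambda)}$ gives $\lambda \ge \lambda^\star$, completing the equivalence.

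There is no serious obstacle: the argument is purely a rewriting of the definitions together with weak optimality. The only subtlety to watch for is that the assumed existence of an optimal solution is used only in the $(\Leftarrow)$ direction (to produce an explicit witness); the $(\Rightarrow)$ direction uses only the weak duality-style inequality $\tr(CY) \ge \lambda^\star$ for feasible $Y$, which holds unconditionally from the definition of $\lambda^\star$ as an infimum.
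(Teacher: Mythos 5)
Your proposal is correct and follows exactly the argument the paper intends: the paper omits the proof, stating it is analogous to Lemma~\ref{lem:feas.ineq}, and your two directions (using the optimal solution $X^\star$ as a witness when $\lambda\ge\lambda^\star$, and using feasibility of any $Y\in S^{(\lambda)}$ together with $\tr(CY)\ge\lambda^\star$ for the converse) are precisely that adaptation to the inequality-constrained setting.
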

The proof of Lemma \ref{lem:feas.ineq.ineq.aff} is similar to that of Lemma \ref{lem:feas.ineq}.

We present the basic properties of the values returned by binary search in the following two lemmas:

\begin{lemma}\label{lem:induc.ineq.ineq.aff}
Let $(\underline \lambda_{t})_{t=0}^T$ and $(\overline \lambda_{t})_{t=0}^T$ be the sequences generated by Algorithm \ref{alg:Binary.search.HU.ineq.aff}.
Then, for $t=0,\dots,T-1$, the following holds:
\begin{equation}\label{eq:induc.ineq.ineq.aff}
\overline\lambda_{t+1}-\underline\lambda_{t+1}=\frac{1}{2}(\overline\lambda_{t}-\underline\lambda_{t})\,.
\end{equation}
\end{lemma}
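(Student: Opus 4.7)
The plan is to mimic the proof of Lemma \ref{lem:induc.ineq} verbatim, since the update rule in Step 2(c) of Algorithm \ref{alg:Binary.search.HU.ineq.aff} is identical to that of Algorithm \ref{alg:Binary.search.HU}: the only difference between the two algorithms lies in which feasibility set $S^{(\lambda_t)}$ is tested, but the binary search bookkeeping on $(\underline\lambda_t,\overline\lambda_t)$ is unchanged.

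First I would fix $t\in\{0,\dots,T-1\}$ and note that, by construction, $\lambda_t=\tfrac{\underline\lambda_t+\overline\lambda_t}{2}$ is the midpoint of the current interval. Then I would split into the two cases dictated by $\eta_t$.

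In the case $\eta_t=0$, the update prescribes $\underline\lambda_{t+1}=\lambda_t$ and $\overline\lambda_{t+1}=\overline\lambda_t$, so
\begin{equation}
\overline\lambda_{t+1}-\underline\lambda_{t+1}
= \overline\lambda_t-\lambda_t
= \overline\lambda_t-\frac{\underline\lambda_t+\overline\lambda_t}{2}
= \frac{1}{2}(\overline\lambda_t-\underline\lambda_t).
\end{equation}
In the case $\eta_t=1$, the update prescribes $\underline\lambda_{t+1}=\underline\lambda_t$ and $\overline\lambda_{t+1}=\lambda_t$, which gives
\begin{equation}
\overline\lambda_{t+1}-\underline\lambda_{t+1}
= \lambda_t-\underline\lambda_t
= \frac{\underline\lambda_t+\overline\lambda_t}{2}-\underline\lambda_t
= \frac{1}{2}(\overline\lambda_t-\underline\lambda_t).
\end{equation}
Combining both cases yields the claim for every $t\in\{0,\dots,T-1\}$.

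There is really no substantive obstacle here; the identity is purely a consequence of bisecting an interval at its midpoint, and the feasibility oracle affects only which half is retained, not the halving itself. The ``hardest'' aspect is simply making sure we invoke the correct algorithm (Algorithm \ref{alg:Binary.search.HU.ineq.aff}) and its feasibility set $S^{(\lambda)}$ as defined in \eqref{eq:S.rho.ineq.aff}, so that the lemma can later be chained with a Lemma \ref{lem:monotoncity}-style statement to control $\overline\lambda_T-\underline\lambda_T\le\varepsilon$ in the inequality-constrained setting.
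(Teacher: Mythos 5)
Your proof is correct and matches the paper's approach exactly: the paper simply remarks that the proof is ``similar to Lemma \ref{lem:induc.ineq}'' and that lemma's proof is the same case split on $\eta_t$ using the midpoint update, which you reproduce verbatim. Nothing further is needed.
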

The proof of Lemma \ref{lem:induc.ineq.ineq.aff} is similar to Lemma \ref{lem:induc.ineq}.

\begin{lemma}\label{lem:monotoncity.ineq.aff}
Let $(\underline \lambda_{t})_{t=0}^T$ and $(\overline \lambda_{t})_{t=0}^T$ be the sequences generated by Algorithm \ref{alg:Binary.search.HU.ineq.aff}.
Then, the following statements hold:
\begin{enumerate}
\item For $t=0,\dots,T-1$, the sequence satisfies 
\begin{equation}
\lambda_{\min}\le\underline \lambda_t\le \underline \lambda_{t+1}\le \overline \lambda_{t+1}\le \overline \lambda_t\le \lambda_{\max}\,.
\end{equation}
\item For $t=0,\dots,T$, we have $\overline \lambda_t -\underline \lambda_t=\frac{1}{2^t}(\lambda_{\max}-\lambda_{\min})$. In particular, 
\begin{equation}
\overline \lambda_T -\underline \lambda_T=\frac{1}{2^T}(\lambda_{\max}-\lambda_{\min})\le \varepsilon\,.
\end{equation}
\end{enumerate}
\end{lemma}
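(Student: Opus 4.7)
The plan is to mirror the proof of Lemma \ref{lem:monotoncity} almost verbatim, since the only structural difference between Algorithm \ref{alg:Binary.search.HU} and Algorithm \ref{alg:Binary.search.HU.ineq.aff} is the form of the feasibility set tested in Step 2(b) (equality versus inequality affine constraints); the bookkeeping of $\underline{\lambda}_t, \overline{\lambda}_t$ is identical. In particular, Lemma \ref{lem:induc.ineq.ineq.aff} gives us the halving relation in exactly the same form as Lemma \ref{lem:induc.ineq}, so the second statement will follow by the same telescoping argument.

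For the first statement, I would proceed by induction on $t$. The base case $\underline{\lambda}_0 = \lambda_{\min}$ and $\overline{\lambda}_0 = \lambda_{\max}$ is immediate from initialization, and $\underline{\lambda}_0 \le \overline{\lambda}_0$ holds by the assumption $\lambda_{\max} > \lambda_{\min}$. For the inductive step, I split on the value of $\eta_t \in \{0,1\}$ returned by the call to Algorithm \ref{alg:HU}: if $\eta_t = 0$, then $\underline{\lambda}_{t+1} = \lambda_t = \tfrac{\underline{\lambda}_t + \overline{\lambda}_t}{2}$ and $\overline{\lambda}_{t+1} = \overline{\lambda}_t$, so $\underline{\lambda}_t \le \underline{\lambda}_{t+1} \le \overline{\lambda}_{t+1} = \overline{\lambda}_t$; if $\eta_t = 1$, then $\underline{\lambda}_{t+1} = \underline{\lambda}_t$ and $\overline{\lambda}_{t+1} = \lambda_t$, so again $\underline{\lambda}_t = \underline{\lambda}_{t+1} \le \overline{\lambda}_{t+1} \le \overline{\lambda}_t$. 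In both cases the chain is preserved, and combined with the outer bounds $\lambda_{\min}, \lambda_{\max}$ from the base case, the first statement follows.

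For the second statement, I would iterate the identity $\overline{\lambda}_{t+1} - \underline{\lambda}_{t+1} = \tfrac12(\overline{\lambda}_t - \underline{\lambda}_t)$ from Lemma \ref{lem:induc.ineq.ineq.aff}, yielding
\begin{equation}
\overline{\lambda}_t - \underline{\lambda}_t = \frac{1}{2^t}(\overline{\lambda}_0 - \underline{\lambda}_0) = \frac{1}{2^t}(\lambda_{\max} - \lambda_{\min})\,.
\end{equation}
Specializing to $t = T$ and using the definition $T = \lceil \log_2((\lambda_{\max} - \lambda_{\min})/\varepsilon)\rceil$, one gets $2^T \ge (\lambda_{\max}-\lambda_{\min})/\varepsilon$, hence $\overline{\lambda}_T - \underline{\lambda}_T \le \varepsilon$, as claimed.

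There is no real obstacle here: the proof is entirely formal and does not depend on the SDP data nor on the distinction between the equality- and inequality-constrained versions. The content of Lemma \ref{lem:monotoncity.ineq.aff} is a bookkeeping statement about the binary-search scaffolding, separable from the Hamiltonian-updates subroutine, and so the proof transports without change from Lemma \ref{lem:monotoncity}.
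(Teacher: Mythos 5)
Your proof is correct and follows essentially the same route as the paper: the paper also establishes the first chain directly from the binary-search update rules (case split on $\eta_t$) and derives the second statement by iterating the halving identity of Lemma \ref{lem:induc.ineq.ineq.aff}, concluding with the choice of $T$. Nothing further is needed.
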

The proof of Lemma \ref{lem:monotoncity.ineq.aff} is based on Lemma \ref{lem:induc.ineq.ineq.aff} and similar to Lemma \ref{lem:monotoncity}.

\begin{lemma}\label{lem:zero.lower.bound.ineq.aff}
Let $C,A_i\in \mathbb S^N$, and $b_i\in\R$, for $i=1,\dots,M$.
Assume that $I_N\succeq C\succeq -I_N$ and $I_N\succeq A_i\succeq -I_N$, for $i=1,\dots,M$.
Let $\lambda^\star$ be the optimal value in \eqref{eq:sdp.ineq.aff}, and assume that problem \eqref{eq:sdp.ineq.aff} has an optimal solution.
Additionally, let $\lambda_{\min}\le \lambda^\star$, and let $(\underline \lambda_{t})_{t=0}^T$ be the sequence generated by Algorithm \ref{alg:Binary.search.HU.ineq.aff}.
Then for all $t=0,\dots,T$,
\begin{equation}\label{eq:zero.lower.bound.ineq.aff}
0\le \lambda^\star-\underline \lambda_t\,.
\end{equation}
\end{lemma}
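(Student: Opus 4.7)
The plan is to mirror the proof of Lemma~\ref{lem:zero.lower.bound} almost verbatim, since the binary-search mechanics of Algorithm~\ref{alg:Binary.search.HU.ineq.aff} are identical to those of Algorithm~\ref{alg:Binary.search.HU}; only the definition of the sliced feasibility set $S^{(\lambda)}$ has changed (equalities have been relaxed to inequalities). I would proceed by induction on $t \in \{0,1,\dots,T\}$.

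For the base case $t=0$, the initialization gives $\underline{\lambda}_0 = \lambda_{\min}$, and by hypothesis $\lambda_{\min} \le \lambda^\star$, so \eqref{eq:zero.lower.bound.ineq.aff} holds. For the inductive step, assume $\underline{\lambda}_r \le \lambda^\star$ for some $r \le T-1$; I need to show $\underline{\lambda}_{r+1} \le \lambda^\star$. By inspection of Step~2(c) in Algorithm~\ref{alg:Binary.search.HU.ineq.aff}, there are two cases determined by the flag $\eta_r \in \{0,1\}$ returned by Hamiltonian updates. If $\eta_r = 1$, then $\underline{\lambda}_{r+1} = \underline{\lambda}_r$ and the bound is inherited from the inductive hypothesis. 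If $\eta_r = 0$, then $\underline{\lambda}_{r+1} = \lambda_r$; I would invoke Lemma~\ref{lem:Hamilton.updates} to conclude $S^{(\lambda_r)}_0 = S^{(\lambda_r)} = \emptyset$, and then apply Lemma~\ref{lem:feas.ineq.ineq.aff} (which is the inequality-affine analogue of Lemma~\ref{lem:feas.ineq}) to deduce $\lambda_r < \lambda^\star$, hence $\underline{\lambda}_{r+1} = \lambda_r \le \lambda^\star$.

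The only subtlety worth checking is that Lemma~\ref{lem:Hamilton.updates} applies to the feasibility instance associated with $S^{(\lambda_r)}$ as defined in \eqref{eq:S.rho.ineq.aff}: the trace-one constraint and positive semidefiniteness are present, and the hypothesis $-I_N \preceq C, A_i \preceq I_N$ guarantees that the affine constraint matrices (including the one encoding $\tr(CX) \le \lambda_r$) satisfy the operator-norm bound required by Algorithm~\ref{alg:HU}. Since this is the same structural input required in Lemma~\ref{lem:zero.lower.bound}, no new difficulty arises. I do not anticipate any genuine obstacle; the argument is essentially a transcription, and the proof will consist of a short induction with one sentence per case.
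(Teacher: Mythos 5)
Your proof is correct and follows essentially the same route as the paper: the paper proves this lemma by the same induction used for Lemma~\ref{lem:zero.lower.bound}, invoking Lemma~\ref{lem:Hamilton.updates} to get $S^{(\lambda_r)}=\emptyset$ when $\eta_r=0$ and Lemma~\ref{lem:feas.ineq.ineq.aff} to conclude $\lambda_r<\lambda^\star$, with the $\eta_r=1$ case inherited from the inductive hypothesis. Your extra check that the matrices encoding $S^{(\lambda_r)}$ satisfy the operator-norm bound required by Algorithm~\ref{alg:HU} is a reasonable sanity check but introduces nothing beyond what the paper already assumes.
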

The proof of Lemma \ref{lem:zero.lower.bound.ineq.aff} is similar to Lemma \ref{lem:zero.lower.bound} and based on Lemmas \ref{lem:Hamilton.updates} and \ref{lem:feas.ineq.ineq.aff}

\begin{lemma}\label{lem:upper.bound.ineq.aff}
Let $C,A_i\in \mathbb S^N$, $b_i\in\R$, for $i=1,\dots,M$, and assume that $I_N\succeq C\succeq -I_N$ and $I_N\succeq A_i\succeq -I_N$, for $i=1,\dots,M$.
Let $\lambda^\star$ be the optimal value in \eqref{eq:sdp.ineq.aff} and assume $\lambda_{\max}\ge \lambda^\star$.
Further, assume the following equality holds:
\begin{equation}\label{eq:remove.trace.one.ineq.aff}
\lambda^\star=\inf\{\tr( C X)\,:\,X\in\mathbb S^{N}_+\,,\,\tr( A_i X) \le b_i\,,\,i=1,\dots,M\}\,.
\end{equation}
Assume strong duality holds for problem \eqref{eq:remove.trace.one.ineq.aff} and its dual \eqref{eq:dual.no.tr.cons.ineq.aff}, which admits a solution $\xi^\star\ge 0$.
Let $(\overline \lambda_{t})_{t=0}^T$ be the sequence generated in Algorithm \ref{alg:Binary.search.HU.ineq.aff}.
Then for $t=0,\dots,T$, we have the following bound:
\begin{equation}\label{eq:upper.bound.ineq.aff}
\lambda^\star-\overline \lambda_t\le \varepsilon (1-\sum_{i=1}^M\xi^\star_i)\,.
\end{equation}
\end{lemma}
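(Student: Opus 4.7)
The plan is to mirror the induction carried out for Lemma \ref{lem:upper.bound}, with the modifications forced by the one-sided affine constraints $\tr(A_iX)\le b_i$ in \eqref{eq:sdp.ineq.aff}. I would induct on $t$. For the base case $t=0$ the estimate $\lambda^\star-\overline\lambda_0=\lambda^\star-\lambda_{\max}\le 0$ suffices, as long as the right-hand side $\varepsilon(1-\sum_i\xi^\star_i)$ is nonnegative. This is where the sign of $\xi^\star$ matters: the dual \eqref{eq:dual.no.tr.cons.ineq.aff} associated with the inequality primal \eqref{eq:primal.no.tr.cons.ineq.aff} has multipliers satisfying $\xi^\star\le 0$ (the typographical ``$\lambda\le 0$'' in \eqref{eq:dual.no.tr.cons.ineq.aff} should read $\xi\le 0$), so $-\sum_i\xi^\star_i\ge 0$ and the base case holds.

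For the inductive step, assume \eqref{eq:upper.bound.ineq.aff} at some $t=r\le T-1$ and split on the flag $\eta_r\in\{0,1\}$ returned by Algorithm \ref{alg:HU} applied to $S^{(\lambda_r)}$. If $\eta_r=0$, then $\overline\lambda_{r+1}=\overline\lambda_r$ and the inductive hypothesis transfers directly. If $\eta_r=1$, Lemma \ref{lem:Hamilton.updates} yields $X_\varepsilon^{(r)}\in S^{(\lambda_r)}_\varepsilon$, which gives simultaneously
\[
\tr(CX_\varepsilon^{(r)})\le \lambda_r+\varepsilon=\overline\lambda_{r+1}+\varepsilon
\qquad\text{and}\qquad
(\mathcal{A}X_\varepsilon^{(r)}-b)_i\le\varepsilon,\ \ i=1,\dots,M.
\]
Lemma \ref{lem:obj.at.X.ineq.aff} supplies the matching lower bound
\[
\tr(CX_\varepsilon^{(r)})\ge \lambda^\star+\xi^{\star\top}(\mathcal{A}X_\varepsilon^{(r)}-b).
\]

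The step that needs the most care—and that distinguishes this proof from that of Lemma \ref{lem:upper.bound}—is bounding the dual pairing $\xi^{\star\top}(\mathcal{A}X_\varepsilon^{(r)}-b)$ from below. In the equality-constrained case one had a two-sided $\ell_\infty$ control on the residual and could invoke a Hölder bound symmetrically; here only the one-sided slack $(\mathcal{A}X_\varepsilon^{(r)}-b)_i\le\varepsilon$ is available, with no lower bound. The right move is to exploit the sign alignment between $\xi^\star$ and the slacks coordinate by coordinate: since $\xi^\star_i\le 0$ flips the inequality, one gets $\xi^\star_i(\mathcal{A}X_\varepsilon^{(r)}-b)_i\ge\varepsilon\xi^\star_i$, and summing over $i$ yields $\xi^{\star\top}(\mathcal{A}X_\varepsilon^{(r)}-b)\ge \varepsilon\sum_i\xi^\star_i$.

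Plugging this signed bound into the previous chain of inequalities gives $\overline\lambda_{r+1}+\varepsilon\ge \lambda^\star+\varepsilon\sum_i\xi^\star_i$, which rearranges to $\lambda^\star-\overline\lambda_{r+1}\le\varepsilon(1-\sum_i\xi^\star_i)$ and closes the induction. The main subtlety I anticipate is purely bookkeeping: one must be careful that the signed sum $\sum_i\xi^\star_i$, not its absolute value $\|\xi^\star\|_{\ell_1}$, is the object that appears, because the one-sided slack only controls the pairing in one direction; any attempt to mimic the $\|\xi^\star\|_{\ell_1}\|\mathcal{A}X-b\|_{\ell_\infty}$ bound of Lemma \ref{lem:obj.at.X} would require two-sided residual control that is simply absent here.
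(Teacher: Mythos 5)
Your proof matches the paper's argument essentially step for step: the same induction on $t$, the same base case at $\lambda_{\max}$, the same case split on $\eta_r$ using Lemma \ref{lem:Hamilton.updates} and Lemma \ref{lem:obj.at.X.ineq.aff}, and the same sign-aware estimate $\xi^{\star\top}(\mathcal{A}X_\varepsilon^{(r)}-b)\ge\varepsilon\sum_{i}\xi^\star_i$ followed by the rearrangement that closes the induction. Your explicit use of $\xi^\star\le 0$ (reading the ``$\lambda\le 0$'' in \eqref{eq:dual.no.tr.cons.ineq.aff} as $\xi\le 0$) is precisely what the paper's own proof relies on in both the base case and the inductive step, so the ``$\xi^\star\ge 0$'' appearing in the lemma statement is an inconsistency of the paper rather than a gap in your argument.
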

\begin{proof}
We will prove statement \eqref{eq:upper.bound.ineq.aff} by induction, using Lemma \ref{lem:obj.at.X.ineq.aff}. 
\begin{itemize}
\item Base case: For $t=0$, we have
\begin{equation}
\lambda^\star-\overline \lambda_0=\lambda^\star-\lambda_{\max}\le 0\le \varepsilon\left(1-\sum_{i=1}^M\xi_i^\star\right)\,.
\end{equation}
Thus \eqref{eq:upper.bound.ineq.aff} holds for $t=0$.
\item Inductive Step:
Assume that \eqref{eq:upper.bound.ineq.aff} holds for some $t=r\le T-1$.
We now show that it holds for $t=r+1$.
Consider Step 2 (c) of Algorithm \ref{alg:Binary.search.HU.ineq.aff} with $t=r$.
\begin{itemize}
\item Case 1: $\eta_r=0$. In this case, we have $\overline \lambda_{r+1}=\overline \lambda_r$. By the induction assumption, it follows that
\begin{equation}
\overline \lambda_{r+1}=\overline \lambda_r\ge \lambda^\star - \varepsilon \left(1-\sum_{i=1}^M\xi^\star_i \right)\,.
\end{equation}
\item Case 2: $\eta_r=1$.
By Lemma \ref{lem:Hamilton.updates},  we know that $X_\varepsilon^{(r)}\in S^{(\lambda_r)}_\varepsilon$.
By Lemma \ref{lem:obj.at.X.ineq.aff}, we obtain
\begin{equation}
\tr( C X_\varepsilon^{(r)}) \ge\lambda^\star -\sum_{i=1}^M\lambda^\star_i  [b_i-\tr(A_iX_\varepsilon^{(r)})] \ge \lambda^\star + \varepsilon \sum_{i=1}^M\lambda^\star_i \,.
\end{equation}
Since $X_\varepsilon^{(r)}\in S^{(\lambda_r)}_\varepsilon$, we have $b_i-\tr(A_iX_\varepsilon^{(r)})\ge -\varepsilon$ for all $i=1,\dots,M$ and the last inequality holds.
Therefore, we have the following estimate:
\begin{equation}
\overline \lambda_{r+1} +\varepsilon=\lambda_r+\varepsilon\ge \tr( C X_\varepsilon^{(r)})
\ge   \lambda^\star + \varepsilon \sum_{i=1}^M\xi^\star_i\,.
\end{equation}
The first inequality is based on $X_\varepsilon^{(r)}\in S^{(\lambda_r)}_\varepsilon$.
It implies that
\begin{equation}
\overline \lambda_{r+1} \ge \lambda^\star -\varepsilon(1-\sum_{i=1}^M\lambda^\star_i )\,.
\end{equation}
\end{itemize}
Combining both cases, we conclude that
\begin{equation}
\lambda^\star-\overline \lambda_{r+1}\le \varepsilon\left(1-\sum_{i=1}^M\xi^\star_i \right)\,.
\end{equation}
\end{itemize}
Thus, by induction, \eqref{eq:upper.bound.ineq.aff} holds for all $t=0,\dots,T$.
\end{proof}

\begin{lemma}[Accuracy]\label{lem:convergence.sdp.ineq.aff}
Let $C,A_i\in \mathbb S^N$, $b_i\in\R$, for $i=1,\dots,M$, and assume that $I_N\succeq C\succeq -I_N$ and $I_N\succeq A_i\succeq -I_N$, for $i=1,\dots,M$.
Assume that problem \eqref{eq:sdp.ineq.aff} has an optimal solution.
Let $\lambda^\star$ be the optimal value in \eqref{eq:sdp.ineq.aff}, and let $\lambda_{\min},\lambda_{\max}\in\R$ such that $\lambda^\star\in [\lambda_{\min},\lambda_{\max}]$.
Further, assume the following equality holds:
\begin{equation}\label{eq:remove.trace.one.1.ineq.aff}
\lambda^\star=\inf\{\tr(CX)\,:\,X\in\mathbb S^{N}_+\,,\,\tr( A_i X) \le b_i\,,\,i=1,\dots,M\}\,.
\end{equation}
Assume strong duality holds for problem \eqref{eq:remove.trace.one.1.ineq.aff} and its dual \eqref{eq:dual.no.tr.cons.ineq.aff}, which admits a solution $\xi^\star\ge 0$.
Let $\underline \lambda_{T}\in\R$ be the value returned by Algorithm \ref{alg:Binary.search.HU.ineq.aff}.
Then, the following convergence result holds:
\begin{equation}
0\le \lambda^\star-\overline \lambda_T\le \varepsilon (2-\sum_{i=1}^M\xi^\star_i)\,.
\end{equation}
\end{lemma}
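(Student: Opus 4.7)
The strategy is to combine the three structural facts already established for Algorithm~\ref{alg:Binary.search.HU.ineq.aff}, in direct analogy with the equality-constrained case (Lemma~\ref{lem:convergence.sdp}). Specifically, I would invoke Lemma~\ref{lem:zero.lower.bound.ineq.aff} to pin down the lower endpoint, Lemma~\ref{lem:upper.bound.ineq.aff} to control the upper endpoint, and Lemma~\ref{lem:monotoncity.ineq.aff} to control the width of the bracket $[\underline{\lambda}_T,\overline{\lambda}_T]$ produced by the binary search. All three lemmas apply under the hypotheses stated here: $I_N \succeq C, A_i \succeq -I_N$, the existence of an optimal primal solution for \eqref{eq:sdp.ineq.aff}, the trace-free reformulation \eqref{eq:remove.trace.one.1.ineq.aff}, and strong duality for \eqref{eq:remove.trace.one.1.ineq.aff}--\eqref{eq:dual.no.tr.cons.ineq.aff} with a dual optimizer $\xi^\star$.

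Concretely, the plan is as follows. First I would observe that Lemma~\ref{lem:zero.lower.bound.ineq.aff} immediately yields $0 \le \lambda^\star - \underline{\lambda}_T$. Next, Lemma~\ref{lem:upper.bound.ineq.aff} gives
\begin{equation}
\lambda^\star - \overline{\lambda}_T \;\le\; \varepsilon\Bigl(1 - \sum_{i=1}^M \xi^\star_i\Bigr).
\end{equation}
Finally, the second statement of Lemma~\ref{lem:monotoncity.ineq.aff} (which follows from the halving identity \eqref{eq:induc.ineq.ineq.aff} and the choice $T=\lceil \log_2((\lambda_{\max}-\lambda_{\min})/\varepsilon)\rceil$) yields $\overline{\lambda}_T - \underline{\lambda}_T \le \varepsilon$. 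Decomposing
\begin{equation}
\lambda^\star - \underline{\lambda}_T \;=\; (\lambda^\star - \overline{\lambda}_T) + (\overline{\lambda}_T - \underline{\lambda}_T) \;\le\; \varepsilon\Bigl(1 - \sum_{i=1}^M \xi^\star_i\Bigr) + \varepsilon \;=\; \varepsilon\Bigl(2 - \sum_{i=1}^M \xi^\star_i\Bigr),
\end{equation}
and combining with the nonnegativity from Lemma~\ref{lem:zero.lower.bound.ineq.aff} delivers the claim.

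There is essentially no real obstacle: the lemma is a bookkeeping consequence of three previously proved facts, exactly mirroring the argument for Lemma~\ref{lem:convergence.sdp}. The only subtlety worth flagging in the write-up is to make sure the sign convention for $\xi^\star$ in the dual \eqref{eq:dual.no.tr.cons.ineq.aff} is consistent with the one used inside Lemma~\ref{lem:upper.bound.ineq.aff}; once that is verified, the chain of inequalities goes through without further work. No new estimate on $\xi^\star$, no new feasibility argument, and no additional use of the trace-one constraint is required.
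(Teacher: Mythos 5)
Your proof is correct and follows exactly the route the paper takes: it explicitly states that the proof is "similar to that of Lemma~\ref{lem:convergence.sdp} and based on Lemmas~\ref{lem:zero.lower.bound.ineq.aff}, \ref{lem:upper.bound.ineq.aff}, and \ref{lem:monotoncity.ineq.aff}," which is precisely your decomposition. One small remark: the lemma's statement as printed bounds $\lambda^\star-\overline\lambda_T$, but your (and the paper's) chain actually bounds $\lambda^\star-\underline\lambda_T$; this appears to be a typo in the statement, inherited from Lemma~\ref{lem:convergence.sdp}.
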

The proof of Lemma \ref{lem:convergence.sdp.ineq.aff} is  similar to that of Lemma \ref{lem:convergence.sdp} and based on Lemmas \ref{lem:zero.lower.bound.ineq.aff}, \ref{lem:upper.bound.ineq.aff}, and \ref{lem:monotoncity.ineq.aff}.

The following lemma presents the complexity of binary search using Hamiltonian updates, as derived from the work of Apeldoorn and Gily\'en \cite{van2018improvements}:
\begin{lemma}[Complexity]
\label{lem:complex.binary.search.ineq.aff}
The complexity of running Algorithm \ref{alg:Binary.search.HU.ineq.aff} is as follows:
\begin{itemize}
\item On a classical computer, the algorithm takes
\begin{equation}
O((MsN+N^\omega)\varepsilon^{-2})
\end{equation}
operations, where $\omega\approx 2.373$, $N$ is the size of the matrix, $M$ is the number of affine constraints, and $\varepsilon$ is the desired accuracy.
\item On a quantum computer, the algorithm requires
\begin{equation}
O(s(\sqrt{M}+\sqrt{N}\varepsilon^{-1})\varepsilon^{-4})
\end{equation}
operations, where $s$ is the maximum number of nonzero entries in any row of the matrices $C$ and $A_i$ (for $i=1,\dots,M$).
\end{itemize}
\end{lemma}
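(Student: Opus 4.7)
The plan is to mirror the proof of Lemma~\ref{lem:complex.binary.search}, reducing the analysis of binary search to a single invocation of Hamiltonian updates (Algorithm~\ref{alg:HU}) and accounting for the logarithmic overhead of the outer search. First, I would note that Algorithm~\ref{alg:Binary.search.HU.ineq.aff} performs $T=\lceil \log_2((\lambda_{\max}-\lambda_{\min})/\varepsilon)\rceil$ iterations. At each iteration $t$, it calls Algorithm~\ref{alg:HU} on a feasibility problem whose constraint set $S^{(\lambda_t)}$ consists of the $M$ original inequalities $\tr(A_iX)\le b_i$ together with the single additional inequality $\tr(CX)\le\lambda_t$, alongside the positive-semidefiniteness and trace-one conditions built into the Hamiltonian updates framework.

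Next, since the hypotheses $-I_N\preceq C\preceq I_N$ and $-I_N\preceq A_i\preceq I_N$ are preserved verbatim for the lifted constraint family, the premises of Lemma~\ref{lem:MMWU-complexity} apply to each inner feasibility call, with an effective constraint count of $M+1$ and the same sparsity bound $s$ (augmenting by the single row associated with $C$ does not change the per-row nonzero maximum by more than a constant). Invoking Lemma~\ref{lem:MMWU-complexity} then yields a per-iteration cost of $\widetilde O\!\big((N^\omega + MsN)\varepsilon^{-2}\big)$ classically and $O\!\big(s(\sqrt{M}+\sqrt{N}\varepsilon^{-1})\varepsilon^{-4}\big)$ on a quantum computer.

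Finally, multiplying by the $T=O(\log((\lambda_{\max}-\lambda_{\min})/\varepsilon))$ outer rounds contributes only a polylogarithmic factor in $1/\varepsilon$ that is absorbed into the stated asymptotic, yielding the claim. No real obstacle arises: in contrast with the equality-constrained case, the inequality formulation does not require splitting each constraint $\tr(A_iX)=b_i$ into two one-sided inequalities, so the constraint count is cleanly $M+O(1)$ rather than $2M+O(1)$. The only routine verification is that both the sparsity parameter $s$ and the $\pm I_N$ spectral bounds of Lemma~\ref{lem:MMWU-complexity} transfer to the augmented family $\{A_1,\dots,A_M,C\}$ appearing in $S^{(\lambda_t)}$, which is immediate from the assumptions on $C$ and $A_i$.
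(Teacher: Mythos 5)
Your proposal is correct and follows essentially the same route the paper intends: the paper gives no detailed proof, simply deriving the lemma from Lemma~\ref{lem:MMWU-complexity} (i.e.\ the Apeldoorn--Gily\'en bound for the inner Hamiltonian-updates feasibility call) exactly as for Lemma~\ref{lem:complex.binary.search}, with the $T=O(\log((\lambda_{\max}-\lambda_{\min})/\varepsilon))$ outer binary-search rounds absorbed into the suppressed polylogarithmic factors. Your additional observations — that the augmented family $\{A_1,\dots,A_M,C\}$ keeps the $\pm I_N$ bounds and sparsity $s$, and that the inequality case needs only $M+1$ rather than $2M+1$ one-sided constraints — are accurate and affect nothing beyond constant factors.
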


\subsection{Inequality constrained polynomial optimization over the simplex}

\begin{lemma}\label{lem:cond.bounded.trace.2.ineq.aff}
Let $f\in\R[x]_{2d}$ and $g_i\in\R[x]_{2d_i}$, $i=0,\dots,m$, with $g_0=1$.
Assume $S(g)\subset \R_+^n$.
Suppose Assumption \ref{ass:cond.bounded.trace.2} holds.
Let $G^\star$ be an optimal solution to \eqref{eq:sdp.relaxation.cons.pop.ineq.aff}.
Then $\tr(G^\star)\le 1$.
\end{lemma}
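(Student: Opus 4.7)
The statement is the inequality-constrained analogue of Lemma~\ref{lem:cond.bounded.trace.2}, but the argument there relied on the polynomial identity $f-\lambda_k=\sum_i g_i v_{k-d_i}^\top G_i^\star v_{k-d_i}$, which no longer holds here: in \eqref{eq:sdp.relaxation.cons.pop.ineq.aff} equalities are relaxed to inequalities. My plan is to replace that identity with a coefficient-wise inequality and then exploit the fact that all moments of $\mu$ are non-negative because $\mu$ is supported on $S(g)\subset \R^n_+$.

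Concretely, I would define the auxiliary polynomial
\[
  r(x) \;:=\; f(x) - \sum_{i=0}^m g_i(x)\, v_{k-d_i}^\top G_i^\star v_{k-d_i}.
\]
Reading off the affine constraints in \eqref{eq:sdp.relaxation.cons.pop.ineq.aff} coefficient by coefficient, they are exactly the statement that $r_0 \ge \hat\lambda_k$ and $r_\gamma \ge 0$ for every $\gamma\in\N^n_{2k}\setminus\{0\}$. Now let $y=(y_\alpha)_{\alpha\in\N^n_{2k}}$ be the truncated moment sequence of $\mu$; since $\supp(\mu)\subset S(g)\subset \R^n_+$, every monomial $x^\alpha$ is non-negative on $\supp(\mu)$, so $y_\alpha\ge 0$ for all $\alpha$. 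Combining the coefficient inequalities on $r$ with non-negativity of the $y_\alpha$'s gives
\[
  \int r\,d\mu \;=\; r_0 y_0 + \sum_{\gamma\ne 0} r_\gamma y_\gamma \;\ge\; \hat\lambda_k\, y_0.
\]

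Rewriting the left-hand side via the standard identity $\int g_i v_{k-d_i}^\top G_i^\star v_{k-d_i}\,d\mu = \tr(G_i^\star M_{k-d_i}(g_i\mu))$ yields
\[
  \int f\,d\mu - \tr\!\bigl(G^\star D_k(g\mu)\bigr) \;\ge\; \hat\lambda_k\, y_0 \;\ge\; \underline\lambda\, y_0,
\]
where the last inequality uses $y_0\ge 0$ together with $\underline\lambda\le \hat\lambda_k$ (the latter is part of the standing hypotheses on $\underline\lambda$). Rearranging gives $\int(f-\underline\lambda)\,d\mu \ge \tr(G^\star D_k(g\mu))$.

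To finish, I would use that $D_k(g\mu)\succeq 0$ since $\mu$ is supported on $S(g)$ (each block $M_{k-d_i}(g_i\mu)$ is PSD), so that
\[
  \tr\!\bigl(G^\star D_k(g\mu)\bigr) \;\ge\; \lambda_{\min}\!\bigl(D_k(g\mu)\bigr)\,\tr(G^\star),
\]
and then invoke Assumption~\ref{ass:cond.bounded.trace.2} to conclude
\[
  \tr(G^\star) \;\le\; \frac{\int(f-\underline\lambda)\,d\mu}{\lambda_{\min}(D_k(g\mu))} \;\le\; 1.
\]
The main conceptual point—really the only non-routine step—is recognising that switching from equality to inequality constraints in the SDP costs nothing here, because the slack $r(x)$ has non-negative coefficients and $\mu$ has non-negative moments; both sign facts are essential and both use $S(g)\subset \R^n_+$. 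Once that observation is in place, the remaining estimates mirror the proof of Lemma~\ref{lem:cond.bounded.trace.2}.
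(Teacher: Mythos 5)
Your proof is correct and follows essentially the same route as the paper's: both read the relaxed affine constraints of \eqref{eq:sdp.relaxation.cons.pop.ineq.aff} as a sign condition on the coefficients of the slack polynomial, pair them against the non-negative moments $y_\gamma\ge 0$ of $\mu$ (this is where $S(g)\subset\R^n_+$ enters), and reduce to the final trace bound exactly as in Lemma~\ref{lem:cond.bounded.trace.2}. Your explicit introduction of $r(x)$ and the remark that the slack has non-negative coefficients is a slightly cleaner packaging of the same weighted-sum step $0\le L_y(f-\hat\lambda_k-\sum_i g_i v_{k-d_i}^\top G_i^\star v_{k-d_i})$ that the paper writes out coordinate by coordinate.
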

\begin{proof}
We write $G^\star=\diag(G_0^\star,\dots,G_m^\star)$ and assume $\underline \lambda\le \lambda_k$.
From the SOS relaxation, we have:
\begin{equation}\label{eq:coeff.identities}
\begin{cases}
f_0-\lambda\ge\sum_{i=0}^m  g_{i,0} (G_i^*)_{0,0}\,,\\[5pt]
f_\gamma\ge\sum_{i=0}^m \sum\limits_{
\arraycolsep=1.4pt\def\arraystretch{.7}
\begin{array}{cc}
\scriptstyle \xi\in\N^n_{2d_i}\\
\scriptstyle \gamma-\xi\in \N^n
\end{array}
} g_{i,\xi} \sum\limits_{
\arraycolsep=1.4pt\def\arraystretch{.7}
\begin{array}{cc}
\scriptstyle \alpha,\beta\in\N^n_{k-d_i}\\
\scriptstyle \alpha+\beta=\gamma-\xi
\end{array}
} (G_i^*)_{\alpha,\beta}\,,\,\gamma\in\N^n_{2k}\backslash \{0\}\,.
\end{cases}
\end{equation}
Let $y=(y_\gamma)_{\gamma\in\N^n_{2k}}$ be the truncated moment sequence with respect to $\mu$. 
Since $\mu$ is supported on $S(g)\subset \R_+^n$, for all $\gamma\in\N^n_{2k}$, $y_\gamma\ge 0$.
From \eqref{eq:coeff.identities}, 
\begin{equation}
\begin{array}{rl}
0\le&  y_0 (f_0-\lambda-\sum_{i=0}^m  g_{i,0} (G_i^*)_{0,0})\\[5pt]
&+ \sum\limits_{\gamma\in\N^n_{2k}\backslash\{0\}} y_\gamma \left( f_\gamma-\sum_{i=0}^m \sum\limits_{
\arraycolsep=1.4pt\def\arraystretch{.7}
\begin{array}{cc}
\scriptstyle \xi\in\N^n_{2d_i}\\
\scriptstyle \gamma-\xi\in \N^n
\end{array}
} g_{i,\xi} \sum\limits_{
\arraycolsep=1.4pt\def\arraystretch{.7}
\begin{array}{cc}
\scriptstyle \alpha,\beta\in\N^n_{k-d_i}\\
\scriptstyle \alpha+\beta=\gamma-\xi
\end{array}
} (G_i^*)_{\alpha,\beta} \right)\\[25pt]
=& L_y(f-\lambda-\sum_{i=0}^m g_iv_{k-d_i}^\top G_i^\star v_{k-d_i})\\[5pt]
=& \int (f-\lambda_k)d\mu -\sum_{i=0}^m \tr( G^\star_i M_{k-d_i}(g_i\mu) )
\end{array}
\end{equation}
This implies that 
\begin{equation}
\int (f-\lambda_k)d\mu \ge \sum_{i=0}^m \tr( G^\star_i  M_{k-d_i}(g_i\mu) )\,.
\end{equation}
The remaining is similar to the proof of Lemma \ref{lem:cond.bounded.trace.2}.
\end{proof}

\begin{lemma}\label{lem:equi.sdp.relax.cons.exact.ineq.aff}
Let $f\in\R[x]_{2d}$ and $g_i\in\R[x]_{2d_i}$, for $i=0,\dots,m$, with $g_0=1$.
Suppose Assumption \ref{ass:cond.bounded.trace.2} holds.
Define $\lambda^\star=f_0-\lambda_k$.
Let $C$, $A_i$, for $i=1,\dots,M$, $N$ be generated by Algorithm \ref{alg:con.pop.exact.ineq.aff}.
Then problem \eqref{eq:sdp.relaxation.cons.pop.ineq.aff} is equivalent to problem \eqref{eq:sdp.ineq.aff}.
Moreover, this is equivalent to
\begin{equation}\label{eq:sdp.trace.one.cons.pop.2.1.exact.ineq.aff}
\lambda^\star=\inf\{\tr( CX)\,:\,X\in\mathbb S^{N}_+\,,\,\tr( A_i X) \le b_i\,,\,i=1,\dots,M\}\,.
\end{equation}
\end{lemma}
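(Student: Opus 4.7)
The plan is to mimic the proof of Lemma \ref{lem:equi.sdp.relax.cons.exact} while replacing the equality affine constraints by the inequalities appearing in \eqref{eq:sdp.relaxation.cons.pop.ineq.aff}, and to use Lemma \ref{lem:cond.bounded.trace.2.ineq.aff} in place of Lemma \ref{lem:cond.bounded.trace.2} to control the trace of any optimal Gram matrix.

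First, invoking Lemma \ref{lem:cond.bounded.trace.2.ineq.aff}, an optimal solution $G^\star=\diag(G_0^\star,\dots,G_m^\star)$ of \eqref{eq:sdp.relaxation.cons.pop.ineq.aff} satisfies $\tr(G^\star)\le 1$. Hence adjoining the redundant constraint $\tr(G)\le 1$ does not change the optimal value, so \eqref{eq:sdp.relaxation.cons.pop.ineq.aff} is equivalent to the same program with the additional bound $\tr(G)\le 1$.

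Next, I would re-express the coefficient-inequality system in \eqref{eq:sdp.relaxation.cons.pop.ineq.aff} in terms of traces against the coefficient-matching matrices $B_g^{(\gamma)}$. Using the same expansion as in the proof of Lemma \ref{lem:equi.sdp.relax.cons.exact}—namely writing out the coefficient of $x^\gamma$ in $\sum_{i=0}^m g_i\,v_{k-d_i}^\top G_i v_{k-d_i}$ and recognising it as the $(\alpha,\beta)$-trace pairing with the block matrices $B_i^{(\eta)}$ of \eqref{eq:def.B.i.eta}—the constraints of \eqref{eq:sdp.relaxation.cons.pop.ineq.aff} transform verbatim into
\[
f_0-\lambda\ge \tr(B_g^{(0)}G),\qquad f_{\gamma_i}\ge \tr(B_g^{(\gamma_i)}G),\ i=1,\dots,M.
\]
Substituting $\tilde C=B_g^{(0)}$, $\tilde A_i=B_g^{(\gamma_i)}$, $b_i=f_{\gamma_i}$ and maximising $\lambda$ then yields
\[
f_0-\hat\lambda_k=\inf\{\tr(\tilde C G):G\in\mathbb S_+^{\tilde N},\ \tr(G)\le 1,\ \tr(\tilde A_i G)\le b_i,\ i=1,\dots,M\}.
\]

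Then I would lift $G$ to the $N=\tilde N+1$ dimensional matrix with one extra diagonal slack to enforce $\tr(X)=1$. Concretely, if $G$ is feasible with $\tr(G)\le 1$, setting $X=\diag(G,1-\tr(G))$ gives $X\in\mathbb S_+^N$ with $\tr(X)=1$, and by the block structure of $C=\diag(\tilde C,0)$, $A_i=\diag(\tilde A_i,0)$, it satisfies $\tr(CX)=\tr(\tilde C G)$ and $\tr(A_iX)=\tr(\tilde A_i G)$. Conversely, the leading $\tilde N\times \tilde N$ block of any feasible $X$ for \eqref{eq:sdp.ineq.aff} is feasible for the bounded-trace form above with the same objective. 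This bijection on feasible points and preservation of the objective establishes equivalence with \eqref{eq:sdp.ineq.aff}.

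Finally, to obtain \eqref{eq:sdp.trace.one.cons.pop.2.1.exact.ineq.aff}, I would show that removing the trace-one constraint in \eqref{eq:sdp.ineq.aff} (equivalently the bound $\tr(G)\le 1$ in the intermediate form) does not alter the infimum. The $\ge$ direction is trivial since relaxing a constraint cannot increase the infimum. For the $\le$ direction, by Lemma \ref{lem:cond.bounded.trace.2.ineq.aff} any optimal $G^\star$ of the trace-unrestricted relaxation (which has the same value as \eqref{eq:sdp.relaxation.cons.pop.ineq.aff} by the coefficient-matching bijection) already satisfies $\tr(G^\star)\le 1$, and is therefore feasible for the bounded-trace program. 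The main obstacle, as in the proof of Lemma \ref{lem:equi.sdp.relax.cons.exact}, is keeping the bookkeeping of inequality (as opposed to equality) coefficient matches clean; but the convexity argument that $\tr(G^\star)\le 1$ is preserved under the lifting makes the rest routine.
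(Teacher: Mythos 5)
Your proposal is correct and follows essentially the same route as the paper: the paper's own proof of this lemma is exactly the argument of Lemma \ref{lem:equi.sdp.relax.cons.exact} repeated with the coefficient identities replaced by the inequalities of \eqref{eq:sdp.relaxation.cons.pop.ineq.aff}, with Lemma \ref{lem:cond.bounded.trace.2.ineq.aff} supplying the bound $\tr(G^\star)\le 1$, followed by the same slack-block lifting and the same observation that the trace constraint can be dropped. The only cosmetic difference is that you write the relaxed optimal value as $f_0-\hat\lambda_k$, which is the intended reading of the statement.
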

The proof of Lemma \ref{lem:equi.sdp.relax.cons.exact.ineq.aff}  is based on Lemma \ref{lem:cond.bounded.trace.2.ineq.aff} similar to that of Lemma \ref{lem:equi.sdp.relax.cons.exact}.

\begin{lemma}\label{lem:properties.data.A.cons.exact.ineq.aff}
Let $f\in\R[x]_{2d}$ and $g_i\in\R[x]_{2d_i}$, for $i=0,\dots,m$, with $g_0=1$.
Suppose Assumption \ref{ass:rescale.g} holds.
Let $C$, $A_i,b_i$, for  $i=1,\dots,M$, and $N$ be generated by Algorithm \ref{alg:con.pop.exact.ineq.aff}.
Set $\lambda^\star=f_0-\lambda_k$.
Then the following properties hold:
\begin{enumerate}[(i)]
\item\label{sparsity.A.ineq.aff} The maximum number of nonzero entries in any row of $C$ and $A_i$, for $i=1,\dots,M$, is $s_g\le\max\limits_{i=0,\dots,m} |\supp(g_i)|\le \max\limits_{i=0,\dots,m}\binom{n+2d_i}n$.
\item\label{bounded.eigenvalues.ineq.aff} $-I_N\preceq C\preceq I_N$ and $-I_N\preceq A_i\preceq I_N$, for $i=1,\dots,M$.
\item\label{bound.opt.val.cons.ineq.aff} $\lambda^\star\in[\lambda_{\min},\lambda_{\max}]$.
\item\label{remove.tr.cons.pop.ineq.aff} $\lambda^\star=\inf\{\tr( CX)\,:\,X\in\mathbb S^{N}_+\,,\,\tr( A_iX) \le b_i\,,\,i=1,\dots,M\}$.
\end{enumerate}
\end{lemma}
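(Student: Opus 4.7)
\medskip

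\noindent\textbf{Proof proposal.} The plan is to imitate, almost verbatim, the proof of Lemma~\ref{lem:properties.data.A.cons.exact}, exploiting that the only change between the equality-affine and inequality-affine settings is the direction of the constraint $\tr(A_iX)=b_i$ versus $\tr(A_iX)\le b_i$ in the standard-form SDP. The data matrices $C=\diag(B_g^{(0)},0)$ and $A_i=\diag(B_g^{(\gamma_i)},0)$ are constructed by Algorithm~\ref{alg:con.pop.exact.ineq.aff} from exactly the same coefficient-matching matrices $B_g^{(\gamma)}$ as in Algorithm~\ref{alg:con.pop.exact}, so their sparsity and spectral properties are identical. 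Only the identification of the optimal value $\lambda^\star=f_0-\hat\lambda_k$ with the standard-form SDP \eqref{eq:sdp.ineq.aff} needs to be re-routed through Lemma~\ref{lem:equi.sdp.relax.cons.exact.ineq.aff} instead of Lemma~\ref{lem:equi.sdp.relax.cons.exact}.

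For part~\eqref{sparsity.A.ineq.aff}, I would fix $i\in\{0,\dots,m\}$ and count nonzero entries row-by-row in the $i$-th diagonal block of $B_g^{(\gamma)}$. By Lemma~\ref{lem:properties.B.gamma}\eqref{unique.nonzero.per.row}, each matrix $B_i^{(\gamma-\zeta)}$ has at most one nonzero entry per row; the block $B_{\gamma,i}$ is a linear combination of at most $|\supp(g_i)|$ such matrices indexed by $\zeta\in\supp(g_i)\cap(\gamma-\N^n)$, yielding at most $|\supp(g_i)|\le\binom{n+2d_i}{n}$ nonzeros per row. Taking the maximum over $i$ and passing to the padded matrices $C,A_i$ gives the claimed bound $s_g$. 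Part~\eqref{bounded.eigenvalues.ineq.aff} follows immediately from Lemma~\ref{lem:bounded.eigenvalues} (which uses Assumption~\ref{ass:rescale.g}) together with the fact that $\diag(\cdot,0)$ preserves the $\pm I$ spectral envelope.

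For part~\eqref{bound.opt.val.cons.ineq.aff}, the input convention $\underline\lambda\le\hat\lambda_k$ (stated just before Algorithm~\ref{alg:con.pop.exact.ineq.aff}) gives $\lambda^\star=f_0-\hat\lambda_k\le f_0-\underline\lambda=\lambda_{\max}$. For the lower bound, under the standing positivity hypothesis $S(g)\subset\R_+^n$ (which is the context in which this lemma is used, e.g.\ in Theorems~\ref{theo:convergence.ineq.cons.ineq.aff} and~\ref{theo:convergence.over.simplex.ineq.aff}), we have $\hat\lambda_k\le f^\star\le f(a)$ for any $a\in S(g)$, hence $\lambda^\star\ge f_0-f(a)=\lambda_{\min}$. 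Finally, part~\eqref{remove.tr.cons.pop.ineq.aff} is exactly the content of the last assertion of Lemma~\ref{lem:equi.sdp.relax.cons.exact.ineq.aff}, where the trace constraint $\tr(G)\le 1$ is shown to be inactive thanks to Lemma~\ref{lem:cond.bounded.trace.2.ineq.aff}.

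The only step that is not purely mechanical is the verification in~\eqref{bound.opt.val.cons.ineq.aff} that $\hat\lambda_k\le f(a)$, since in the inequality-affine relaxation the dual inequalities $\hat\lambda_k\le f^\star$ require non-negativity of the moment sequence supported on $S(g)$. This is the main place where the hypothesis $S(g)\subset\R_+^n$ is invoked; all other steps transfer from the proof of Lemma~\ref{lem:properties.data.A.cons.exact} without change.
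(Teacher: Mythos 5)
Your proposal is correct and follows essentially the same route the paper takes: the paper's own (implicit) proof consists of the single sentence that the argument is ``based on Lemma~\ref{lem:equi.sdp.relax.cons.exact.ineq.aff}, similar to that of Lemma~\ref{lem:properties.data.A.cons.exact},'' and your write-up is a faithful unrolling of exactly that transfer, with parts~\eqref{sparsity.A.ineq.aff} and~\eqref{bounded.eigenvalues.ineq.aff} inherited verbatim from the properties of $B_g^{(\gamma)}$ (Lemma~\ref{lem:properties.B.gamma} and Lemma~\ref{lem:bounded.eigenvalues}), part~\eqref{remove.tr.cons.pop.ineq.aff} delegated to Lemma~\ref{lem:equi.sdp.relax.cons.exact.ineq.aff}, and part~\eqref{bound.opt.val.cons.ineq.aff} read off from $\underline\lambda\le\hat\lambda_k\le f(a)$.

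Two observations you made that go slightly beyond the paper's text are worth keeping, because they correct small infelicities in the statement as written. First, the lemma declares $\lambda^\star=f_0-\lambda_k$, but the relevant SDP here is~\eqref{eq:sdp.relaxation.cons.pop.ineq.aff} whose value is $\hat\lambda_k$, and the input convention preceding Algorithm~\ref{alg:con.pop.exact.ineq.aff} only guarantees $\underline\lambda\le\hat\lambda_k$ (so the upper bound $\lambda^\star\le\lambda_{\max}$ would fail with $\lambda_k$ in place of $\hat\lambda_k$); your replacement $\lambda^\star=f_0-\hat\lambda_k$ is the correct reading. Second, the lower bound $\hat\lambda_k\le f(a)$ genuinely requires $\hat\lambda_k\le f^\star$, which the paper derives from $S(g)\subset\R_+^n$; this hypothesis (together with Assumption~\ref{ass:cond.bounded.trace.2}, which is needed for Lemma~\ref{lem:cond.bounded.trace.2.ineq.aff} invoked inside part~\eqref{remove.tr.cons.pop.ineq.aff}) is not listed in the lemma statement but is present in every application (Theorems~\ref{theo:convergence.ineq.cons.ineq.aff} and~\ref{theo:convergence.over.simplex.ineq.aff}). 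One minor refinement: you single out part~\eqref{bound.opt.val.cons.ineq.aff} as the only place $S(g)\subset\R_+^n$ enters, but part~\eqref{remove.tr.cons.pop.ineq.aff} also depends on it indirectly through Lemma~\ref{lem:cond.bounded.trace.2.ineq.aff}, which assumes it explicitly.
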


The proof of Lemma \ref{lem:properties.data.A.cons.exact.ineq.aff}  is based on Lemma \ref{lem:equi.sdp.relax.cons.exact.ineq.aff} similar to that of Lemma \ref{lem:properties.data.A.cons.exact}.

\begin{lemma}\label{lem:equivalent.mom.relax.ineq.aff}
Let $C$, $A_i,b_i$, for  $i=1,\dots,M$, and $N$ be generated by Algorithm \ref{alg:con.pop.exact.ineq.aff}.
Assume that strong duality holds for problems \eqref{eq:sdp.relaxation.cons.pop.ineq.aff}-\eqref{eq:sdp.relaxation.cons.pop.dual.ineq.aff},as well as for problem \eqref{eq:sdp.trace.one.cons.pop.2.1.exact.ineq.aff} and its dual \eqref{eq:dual.no.tr.cons.ineq.aff}.
Let $\xi\in\R^M$, and define $y=(y_{\gamma})_{\gamma\in\N^n_{2k}}$ such that $y_{0}=1$ and $y_{\gamma_i}=-\xi_i$, for $i=1,\dots,M$.
Then $y$ is an optimal solution to the moment relaxation \eqref{eq:sdp.relaxation.cons.pop.dual} if and only if $\xi$ is an optimal solution to problem \eqref{eq:dual.no.tr.cons.ineq.aff}.
\end{lemma}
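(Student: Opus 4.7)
\textbf{Proof plan for Lemma \ref{lem:equivalent.mom.relax.ineq.aff}.} The plan is to mirror the proof of Lemma \ref{lem:equivalent.mom.relax}, adapting it to account for the extra sign conditions that distinguish the inequality-constrained setting from the equality-constrained one. Concretely, I would first translate the objective and the PSD constraints from the moment side \eqref{eq:sdp.relaxation.cons.pop.dual.ineq.aff} to the dual side \eqref{eq:dual.no.tr.cons.ineq.aff} using the data $b$, $\tilde C$, $\tilde A_i$ produced in Algorithm~\ref{alg:con.pop.exact.ineq.aff}, and then handle the nonnegativity of the moment vector $y$ through the sign of $\xi$.

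First, using the definition of $b$ together with $y_0=1$ and $y_{\gamma_i}=-\xi_i$, I would expand
\[
L_y(f)\;=\;f_0+\sum_{i=1}^{M}y_{\gamma_i}f_{\gamma_i}\;=\;f_0-b^\top\xi,
\]
which matches the objective of \eqref{eq:dual.no.tr.cons.ineq.aff} up to the additive constant $f_0$. Next, I would reuse the block-diagonal identity established inside the proof of Lemma \ref{lem:equivalent.mom.relax},
\[
D_k(gy)\;=\;y_0\,B_g^{(0)}+\sum_{j=1}^{M}y_{\gamma_j}B_g^{(\gamma_j)}\;=\;\tilde C-\sum_{j=1}^{M}\xi_j\tilde A_j,
\]
and pass to the lifted matrices by noting that $C=\diag(\tilde C,0)$ and $A_i=\diag(\tilde A_i,0)$, so
\[
\diag(D_k(gy),0)\;=\;C-\mathcal A^\top\xi.
\]
Thus the joint PSD constraints $M_k(y)\succeq 0$ and $M_{k-d_i}(g_iy)\succeq 0$ for $i=1,\dots,m$, which together are equivalent to $D_k(gy)\succeq 0$, translate exactly into the dual PSD constraint $C-\mathcal A^\top\xi\succeq 0$.

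The only genuinely new ingredient, compared with Lemma \ref{lem:equivalent.mom.relax}, is the nonnegativity requirement $y\subset\R_+$ appearing in \eqref{eq:sdp.relaxation.cons.pop.dual.ineq.aff}. Under the substitution $y_{\gamma_i}=-\xi_i$ (with $y_0=1\ge 0$ automatic), this is equivalent to $-\xi_i\ge 0$ for all $i$, i.e.\ $\xi\le 0$, which is precisely the sign constraint in \eqref{eq:dual.no.tr.cons.ineq.aff}. Combining these three correspondences, feasibility in the moment relaxation and feasibility in the scalar dual become equivalent, and the objective values differ only by the constant $f_0$. I would then invoke strong duality on both sides: for the moment/SOS pair \eqref{eq:sdp.relaxation.cons.pop.ineq.aff}--\eqref{eq:sdp.relaxation.cons.pop.dual.ineq.aff} we have $\hat\tau_k=\hat\lambda_k$, and for the lifted SDP \eqref{eq:sdp.trace.one.cons.pop.2.1.exact.ineq.aff} and its dual \eqref{eq:dual.no.tr.cons.ineq.aff} we have $\tau^\star=\lambda^\star=f_0-\hat\lambda_k$; hence $y$ attains $\hat\tau_k$ iff $\xi$ attains $\tau^\star$, giving the asserted equivalence.

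The only mild obstacle I anticipate is the careful bookkeeping of indices between the representations (ensuring that the union of the localizing PSD constraints is genuinely equivalent to the single block PSD constraint on $D_k(gy)$) and the sign convention that flips the nonnegativity of $y$ into a sign restriction on $\xi$; once these are set up, the argument closes in exactly the same way as in Lemma \ref{lem:equivalent.mom.relax}, with no further analytical content required.
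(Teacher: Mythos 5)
Your proposal is correct and matches the paper's approach. The paper simply states that the proof of Lemma~\ref{lem:equivalent.mom.relax.ineq.aff} ``is similar to that of Lemma~\ref{lem:equivalent.mom.relax},'' and you have carried out exactly that adaptation: you reuse the identities $L_y(f)=f_0-b^\top\xi$ and $\diag(D_k(gy),0)=C-\mathcal A^\top\xi$ from the earlier proof, and you correctly identify and handle the one new ingredient, namely that the sign condition $y\subset\R_+$ in \eqref{eq:sdp.relaxation.cons.pop.dual.ineq.aff} corresponds under $y_{\gamma_i}=-\xi_i$ to the constraint $\xi\le 0$ in \eqref{eq:dual.no.tr.cons.ineq.aff}, before closing with strong duality on both pairs. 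One minor observation: the lemma as printed in the paper refers to the moment relaxation \eqref{eq:sdp.relaxation.cons.pop.dual} rather than \eqref{eq:sdp.relaxation.cons.pop.dual.ineq.aff} (and \eqref{eq:dual.no.tr.cons.ineq.aff} has ``$\lambda\le 0$'' where ``$\xi\le 0$'' is clearly intended); you have implicitly repaired both typos in the natural way, which is what the surrounding context requires.
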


The proof of Lemma \ref{lem:equivalent.mom.relax.ineq.aff}  is similar to that of Lemma \ref{lem:equivalent.mom.relax}.

\subsubsection{Proof of Theorem \ref{theo:convergence.ineq.cons.ineq.aff}}
\label{proof:theo:convergence.ineq.cons.ineq.aff}
\begin{proof}
Let $\lambda^\star=f_0-\hat\lambda_k$.
Since strong duality holds for problems  \eqref{eq:sdp.relaxation.cons.pop.ineq.aff}-\eqref{eq:sdp.relaxation.cons.pop.dual.ineq.aff}, it also holds for problem \eqref{eq:sdp.trace.one.cons.pop.2.1.exact} and its dual \eqref{eq:dual.no.tr.cons.ineq.aff} by Lemma \ref{lem:equi.sdp.relax.cons.exact.ineq.aff}.

Let $y^\star=(y^\star_\alpha)_{\alpha\in\N^n_{2k}}$ be the truncated moment sequence with respect to the Dirac measure $\delta_{x^\star}$, where $y^\star_\alpha=\int x^\alpha d \delta_{x^\star}=x^{\star \alpha}$.
Since $x^\star\in S(g)\subset\R^n_+$, $y^\star_\alpha\ge 0$.

We claim that $y^\star$ is an optimal solution to problem \eqref{eq:sdp.relaxation.cons.pop.dual.ineq.aff}.
First, observe that $y^{\star}_0=(x^{\star})^0=1$ and $M_{k-d_i}(g_iy^{\star})=g_i(x^\star)v_{k-d_i}(x^\star)v_{k-d_i}(x^\star)^\top\succeq 0$.
Then $y^\star$ is a feasible solution for problem \eqref{eq:sdp.relaxation.cons.pop.dual.ineq.aff}.
Additionally, $L_{y^\star}(f)=\int f d \delta_{x^\star}=f(x^\star)=f^\star=\lambda_k=\tau_k$,where the last equality follows from strong duality. Therefore, $y^\star$ is an optimal solution to problem \eqref{eq:sdp.relaxation.cons.pop.dual.ineq.aff}.
Define $\tilde y^\star=(y^\star_\gamma)_{\gamma\in\N^n_{2k}\backslash \{0\}}$.
By Lemma \ref{lem:equivalent.mom.relax.ineq.aff}, $-\tilde y^\star$ is optimal for problem \eqref{eq:dual.no.tr.cons.ineq.aff}.
 
The remainder follows similarly to the proof of Theorem \ref{theo:accuarcy.approximate.val.uncons} by replacing $\lambda_k$ (resp. $\lambda_k^{(\varepsilon)}$) by $\hat \lambda_k$ (resp. $\hat \lambda_k^{(\varepsilon)}$) and using Lemmas \ref{lem:properties.data.A.cons.exact.ineq.aff} and \ref{lem:convergence.sdp.ineq.aff}.
\end{proof}

\subsubsection{Proof of Theorem \ref{theo:convergence.over.simplex.ineq.aff}}
\label{proof:theo:convergence.over.simplex.ineq.aff}
\begin{proof}
Let $y=(y_\gamma)_{\gamma\in\N^n_{2k}}$ be a feasible solution for problem \eqref{eq:sdp.relaxation.cons.pop.dual.ineq.aff}.
Then $y_\gamma\ge 0$ for all $\gamma\in\N^n_{2k}$.
Similar to the proof of Theorem \ref{theo:convergence.over.simplex}, we obtain $\sum_{\gamma\in\N^n_{2k}\backslash \{0\}} y_\gamma \le \frac{r}{1-r}$.
Thus, for $y^\star$, the optimal solution to problem \eqref{eq:sdp.relaxation.cons.pop.dual.ineq.aff}, we have $\sum_{\gamma\in\N^n_{2k}\backslash \{0\}} y_\gamma^\star\le \frac{r}{1-r}$, where $\tilde y^\star=(y^\star_\gamma)_{\gamma\in\N^n_{2k}\backslash \{0\}}$.
Moreover, by Lemma \ref{lem:equivalent.mom.relax.ineq.aff}, $-\tilde y^\star$ is optimal for problem \eqref{eq:dual.no.tr.cons.ineq.aff}.
By Lemmas \ref{lem:properties.data.A.cons.exact.ineq.aff} and \ref{lem:convergence.sdp.ineq.aff}, $0\le \lambda^\star -\underline \lambda_T\le \varepsilon(2+\sum_{\gamma\in\N^n_{2k}\backslash \{0\}} y_\gamma^\star)\le \varepsilon(2+\frac{r}{1-r})$.
Recall that $\lambda_k^{(\varepsilon)}=f_0-\underline \lambda_T$. Substituting, we get:
$\lambda^\star -\underline \lambda_T=\lambda_k^{(\varepsilon)}-\lambda_k=\lambda_k^{(\varepsilon)}-f^\star$.
Thus, the result follows.
\end{proof}

\section{Additional quantum complexities  for solving SDP relaxations of polynomial optimization}
\label{sec:order.quantum.complexities}

In this appendix, we present some quantum complexities for solving the semidefinite relaxations \eqref{eq:sos.sdp.general.general.pop}-\eqref{eq:mom.relax.pop} of the polynomial optimization problem \eqref{eq:pop.general} under Slater's condition. Our approach in this appendix also utilizes binary search and Hamiltonian updates.

This method applies to the following cases, under the assumption that strong duality holds for the primal-dual semidefinite relaxations \eqref{eq:sos.sdp.general.general.pop}-\eqref{eq:mom.relax.pop}.

For the first two cases, we use the same rescaling coefficient on the objective polynomial $f$ as described in the main body of the paper to derive an upper bound on the trace of the matrix variable in the semidefinite program corresponding to the sum-of-squares relaxation \eqref{eq:sos.sdp.general.general.pop}.
We assume that Slater's condition holds for the sum-of-squares relaxation \eqref{eq:sos.sdp.general.general.pop}, meaning there exists a strictly feasible solution $\diag(G_0^,\dots,G_m)\succ 0$ for the sum-of-squares  relaxation \eqref{eq:sos.sdp.general.general.pop} with conditional number $\kappa $.

\begin{itemize}
\item \textbf{Case 1: Unconstrained polynomial optimization} (see Appendix \ref{sec:uncons.pop.2}).

Consider the unconstrained case of problem \eqref{eq:pop.general}, where $m=l=0$. 

Algorithm \ref{alg:uncon.pop} provides an approximate value for  $\lambda_k$ with an accuracy of the minimum of
\begin{equation}\label{eq:accuracy.case1}
\varepsilon\kappa  \binom{n+k}n \quad\text{ and }\quad\varepsilon\kappa \binom{n+2k}n^{1/2} 
\end{equation}
(see Theorem \ref{theo:accuarcy.approximate.val.uncons.2}).
\item \textbf{Case 2: Inequality-constrained polynomial optimization problem} (see Appendix \ref{sec:ineq.cons.add.complex}).

Now consider the inequality-constrained case of problem \eqref{eq:pop.general}, where $l=0$. 

Our quantum Algorithm \ref{alg:con.pop} achieves a runtime of
\begin{equation}\label{eq:complexity.ineq.pop.intro}
\footnotesize
O\left(\sum_{i=0}^m \binom{n+k-d_i}n^2+\left[\binom{n+k}n\binom{n+2k}n\sum_{i=0}^m \binom{n+k-d_i}n+\left(\sum_{i=0}^m \binom{n+k-d_i}n\right)^\omega\right] \frac{1}{\varepsilon^2}\right)
\end{equation}
to compute an approximate value for $\lambda_k$ with an accuracy of
\begin{equation}\label{eq:accuracy.case2}
\varepsilon \kappa  \binom{n+2k}n^{1/2} \sum_{i=0}^m|g_{i,0}| \,,
\end{equation}
where $g_{i,0}$ is the constant coefficient of $g_i$ (see Theorem \ref{theo:accuray.with.gram.schimidt.ineqpop}).
Note that the term without $\varepsilon$ in the complexity \eqref{eq:complexity.ineq.pop.intro} represents the time required to prepare the data using the Gram-Schmidt process.
\end{itemize}
 
In the following two cases, we formulate the moment relaxation \eqref{eq:mom.relax.pop} as a standard semidefinite program with a constant or bounded trace property, specifically when the polynomial optimization problem  \eqref{eq:pop.general} includes a sphere or ball constraint.
We also assume that Slater's condition holds for the moment relaxation \eqref{eq:mom.relax.pop}, that is, there exists a feasible solution $y$ such that
\begin{equation}
\diag(M_k(y),M_{k-d_1}(g_1y),\dots,M_{k-d_m}(g_my))\succ 0
\end{equation}
with a condition number denoted by $\kappa $.
Note that in these last two cases, the coefficients of the objective polynomial $f$ and the constraint polynomials $g_i$ are not rescaled, unlike in the first two cases.
\begin{itemize}
\item \textbf{Case 3: Equality-constrained polynomial optimization over the unit sphere} (see Appendix \ref{sec:mom.relax.eq.cons.sphere}).

Consider the polynomial optimization problem \eqref{eq:pop.general} with $m=0$, $l\ge 1$, and $h_1=1-\|x\|_{\ell_2}^2$. 
In this scenario, our quantum Algorithm \ref{alg:eq.con.pop.sphere} for solving the moment relaxation \eqref{eq:mom.relax.pop} has a complexity of
\begin{equation}\label{eq:complexity.eq.pop.sphere.intro}
\footnotesize
O\left(\binom{n+k}{n}^{1.5}\left\{(l-1) \binom{n+2k}n \binom{n+k}{n}^{2.5}+\binom{n+k}{n}^{4.5}+\left[\binom{n+k}{n}^{0.5}+\frac{1}{\varepsilon} \right]\frac{1}{\varepsilon^{4}}\right\}\right)\,.
\end{equation}
This algorithm computes an approximate value for $\tau_k$ with accuracy 
\begin{equation}
\varepsilon 2^k\|f\|_{\ell_1}  \kappa  \binom{n+k}n \max\limits_{\alpha\in\N^n_k} \binom{k}{(k-|\alpha|,\alpha)} \,,
\end{equation}
where $\binom{k}{(\alpha_0,\alpha)}$, with $\alpha_0+\alpha_1+\dots+\alpha_n=k$, denotes the multinomial coefficient (see Theorem \ref{theo:accuracy.moment.relax.eqpop.sphere}).
\item \textbf{Case 4: (In)equality-constrained polynomial optimization over the unit ball} (see Appendix \ref{sec:mom.relax.ineq.cons.all}).

Consider the polynomial optimization problem \eqref{eq:pop.general} with $m\ge 1$, and $g_1=1-\|x\|_{\ell_2}^2$. 
In this case, our quantum Algorithm \ref{alg:eq.con.pop.ball} for solving the moment relaxation \eqref{eq:mom.relax.pop} has a complexity of
\begin{equation}\label{eq:complexity.ineq.pop.ball.intro}
\tiny
O\left(\binom{n+k}{n}^{1.5}\left\{m(l-1) \binom{n+k}{n}^{0.5}\binom{n+2k}n +m^3\binom{n+k}{n}^{4.5}+m^{0.5}\left[\binom{n+k}{n}^{0.5}+\frac{1}{\varepsilon}\right]\frac{1}{\varepsilon^4}\right\}\right)\,.
\end{equation}
This algorithm computes an approximate value for $\tau_k$  with accuracy 
\begin{equation}
\varepsilon 2^{k}\|f\|_{\ell_1}  \kappa r_{g,n,k}m^{1/2}\binom{n+k}n \,,
\end{equation}
where $r_{g,n,k}$ is a positive constant depending on $g_i,n,k$ (see Theorem \ref{theo:accuracy.moment.relax.ineqpop.ball}).
\end{itemize}
Here, $\|f\|_{\ell_1}$ represents the $\ell_1$-norm of the vector of coefficients of polynomial $f$.
Note that the terms without $\varepsilon$ in the complexities \eqref{eq:complexity.eq.pop.sphere.intro} and \eqref{eq:complexity.ineq.pop.ball.intro} correspond to the time required for data preparation using Gaussian elimination and the Gram-Schmidt process.

In Table \ref{tab:summary.appendix} we summarize the results for the cases considered above.

\paragraph{Limitations.}
The above results are obtained under the Slater condition, and in particular, the error bounds on the optimal values of the SDP relaxations depend on the condition number $\kappa$ of the Slater point in the feasible region of these relaxations.
However, finding such a Slater point is generally difficult; in fact, identifying one is equivalent to solving a semidefinite feasibility problem of the same size as the SDP under consideration. Moreover, the condition number $\kappa$ of the Slater point may itself depend on the problem size. These assumptions therefore make the results somewhat less convincing than those presented in the main part of the paper.

\begin{table}
\tiny
\caption{Summary of additional quantum algorithms based on Hamiltonian updates for solving order-$k$ sum-of-squares and moment relaxations \eqref{eq:sos.sdp.general.general.pop}-\eqref{eq:mom.relax.pop} of the polynomial optimization problem (POP) \eqref{eq:pop.general} with $n$ variables, $m$ inequality constraints and $l$ equality constraints. Here $g$ (resp. $h$) is the set of inequality (resp. equality) constraints in \eqref{eq:pop.general}; $\kappa $ is the condition number of some strictly feasible solution of relaxations; $n_k$ stands for $\binom{n+k}n$;  $n_{m,k}$ stands for $\sum_{i=0}^m\binom{n+k-d_i}n$; and $n_{m,k}^{(2)}$ stands for $\sum_{i=0}^m\binom{n+k-d_i}n^2$.}
\label{tab:summary.appendix}
\begin{center}
\begin{tabular}{ |p{1.7cm}|p{0.9cm}|p{1.1cm}|p{3.5cm}|p{6.7cm}| } 
\hline
Type of POP & Quantum 

algorithm &Conditions &  Accuracy for approximate optimal value & Quantum complexity \\
\hline
Unconstrained ($m=l=0$) &\ref{alg:uncon.pop}& Slater's conditions & $\varepsilon\kappa \min\{ n_k ,n_{2k}^{1/2}\}$

(Theorem \ref{theo:accuarcy.approximate.val.uncons.2})&$O\left(\left(n_{2k}^{1/2}+\frac{1}\varepsilon n_k^{1/2}\ \right)\frac{1}{\varepsilon^4}\right)$\\\cline{1-2}\cline{4-5}
Inequality-constrained ($l=0$)& \ref{alg:con.pop} & &  $\varepsilon \kappa  n_k^{1/2} \sum_{i=0}^m|g_{i,0}|$ 

(Theorem \ref{theo:accuray.with.gram.schimidt.ineqpop})& $O\left(n_{m,k}^{(2)}+\left[n_kn_{2k}n_{m,k}+n_{m,k}^\omega\right] \varepsilon^{-2}\right)$  \\ 
\cline{1-2}\cline{4-5}
 Equality-constrained over the unit sphere 
 
 ($m=0$, $l\ge 1$, $h_1=1-\|x\|_{\ell_2}^2$)& \ref{alg:eq.con.pop.sphere} &  &  $\varepsilon 2^k\|f\|_{\ell_1}\kappa  n_k\max\limits_{\alpha\in\N^n_k} \binom{k}{(k-|\alpha|,\alpha)}$ 

(Theorem \ref{theo:accuracy.moment.relax.eqpop.sphere})& $O\left(n_k^{1.5}\left[(l-1) n_{2k} n_k^{2.5}+n_k^{4.5}+\left(n_k^{0.5}+\frac{1}{\varepsilon} \right)\frac{1}{\varepsilon^{4}}\right]\right)$  \\
\cline{1-2}\cline{4-5}
 (In)equality-constrained over the unit ball
 
 ($m\ge 1$,
 
  $g_1=1-\|x\|_{\ell_2}^2$)& \ref{alg:eq.con.pop.ball} &  &  $\varepsilon 2^{k}\|f\|_{\ell_1} \kappa r_{g,n,k}m^{1/2}n_k$ 
  
 (Theorem \ref{theo:accuracy.moment.relax.ineqpop.ball}) & $O\left(n_k^{1.5}\left[m(l-1) n_k^{0.5}n_{2k} +m^3n_k^{4.5}+m^{0.5}\left(n_k^{0.5}+\frac{1}{\varepsilon}\right)\frac{1}{\varepsilon^4}\right]\right)$  \\ 
\hline
\end{tabular}
\end{center}
\end{table}

\subsection{Other quantum complexities  for solving sum-of-squares relaxations}
\label{sec:other.quan.complex.SOS}

\begin{lemma}\label{lem:obj.at.X.2}
Let all assumptions and notations in Lemma \ref{lem:obj.at.X} hold.
Define the matrix $A$ associated with $\mathcal A$ as 
\begin{equation}
A=\begin{bmatrix}
\text{vec}(A_1)^\top \\
\dots\\
\text{vec}(A_M)^\top
\end{bmatrix}\,.
\end{equation}
If $\lambda^\star\ge 0$, $A$ has full rank $M$ and Slater's condition holds for the primal problem \eqref{eq:primal.no.tr.cons} (i.e., there exists $U\succ 0$ such that $\mathcal A U=b$), then:
\begin{enumerate}
\item[(i)] $\tr(C X)\ge \lambda^\star -(1+\sqrt{2} N \kappa(U))\|\text{vec}(C)\|_{\ell_1}\|(AA^\top)^{-1}A\|_{\ell_1\to\ell_1}\|\mathcal{A}X-b \|_{\ell_\infty}$.
\item[(ii)] $\tr(C X)\ge \lambda^\star -(1+\sqrt{2} \kappa(U)) \|\text{vec}(C)\|_{\ell_1} \sigma_{\min}(A)^{-1}\sqrt{M}\|\mathcal{A}X-b \|_{\ell_\infty}$.
\end{enumerate}
\end{lemma}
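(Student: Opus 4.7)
The plan is to bootstrap from Lemma \ref{lem:obj.at.X}, which already gives
\[
\tr(CX)\ge \lambda^\star-\|\xi^\star\|_{\ell_1}\,\|\mathcal{A}X-b\|_{\ell_\infty}.
\]
It therefore suffices to upper bound $\|\xi^\star\|_{\ell_1}$ in terms of the quantities appearing in (i) and (ii). The KKT identity from the proof of Lemma \ref{lem:obj.at.X} reads $C=Z^\star+\mathcal{A}^\top\xi^\star$ with $Z^\star\succeq 0$. Vectorizing and using the definition of $A$ gives
\[
A^\top\xi^\star=\operatorname{vec}(C)-\operatorname{vec}(Z^\star).
\]

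The next step is to control $\tr(Z^\star)$ via the Slater point. Multiplying the KKT identity on the right by $U$, taking trace, and using $\mathcal{A}U=b$ together with $b^\top\xi^\star=\tau^\star=\lambda^\star\ge 0$, I get
\[
\tr(Z^\star U)=\tr(CU)-\lambda^\star\le \tr(CU).
\]
Since $U\succeq \lambda_{\min}(U) I\succ 0$ and $Z^\star\succeq 0$, this yields $\tr(Z^\star)\le \tr(CU)/\lambda_{\min}(U)$. Bounding $\tr(CU)=\operatorname{vec}(C)^\top\operatorname{vec}(U)\le \|\operatorname{vec}(C)\|_{\ell_1}\|\operatorname{vec}(U)\|_{\ell_\infty}$ and, using $U\succeq 0$, $|U_{ij}|\le\sqrt{U_{ii}U_{jj}}\le\lambda_{\max}(U)$, so $\|\operatorname{vec}(U)\|_{\ell_\infty}\le\sqrt{2}\,\lambda_{\max}(U)$, I obtain
\[
\tr(Z^\star)\le \sqrt{2}\,\kappa(U)\,\|\operatorname{vec}(C)\|_{\ell_1}.
\]
Converting a trace bound on a PSD matrix into an $\ell_1$ bound on its vectorization is the most delicate calculation: using $|Z^\star_{ij}|\le \tfrac12(Z^\star_{ii}+Z^\star_{jj})$ for PSD matrices, one gets $\|\operatorname{vec}(Z^\star)\|_{\ell_1}\le\sqrt{2}\,N\,\tr(Z^\star)$, while the $\ell_2$ side is simply $\|\operatorname{vec}(Z^\star)\|_{\ell_2}=\|Z^\star\|_F\le\|Z^\star\|_*=\tr(Z^\star)$.

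For (i), since $A$ has full row rank, $AA^\top$ is invertible, and left-multiplying $A^\top\xi^\star=\operatorname{vec}(C)-\operatorname{vec}(Z^\star)$ by $(AA^\top)^{-1}A$ gives $\xi^\star=(AA^\top)^{-1}A(\operatorname{vec}(C)-\operatorname{vec}(Z^\star))$. Taking $\ell_1$ norms and plugging in the PSD bound on $\|\operatorname{vec}(Z^\star)\|_{\ell_1}$ produces the factor $(1+\sqrt{2}N\kappa(U))\|\operatorname{vec}(C)\|_{\ell_1}\|(AA^\top)^{-1}A\|_{\ell_1\to\ell_1}$, which combined with Lemma \ref{lem:obj.at.X} yields (i).

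For (ii), I instead pass through the $\ell_2$ norm: $\|\xi^\star\|_{\ell_1}\le\sqrt{M}\,\|\xi^\star\|_{\ell_2}$, and because $A$ has full row rank, $\|A^\top\xi^\star\|_{\ell_2}\ge\sigma_{\min}(A)\|\xi^\star\|_{\ell_2}$. Therefore
\[
\|\xi^\star\|_{\ell_2}\le \sigma_{\min}(A)^{-1}\bigl(\|\operatorname{vec}(C)\|_{\ell_2}+\|\operatorname{vec}(Z^\star)\|_{\ell_2}\bigr)\le \sigma_{\min}(A)^{-1}(1+\sqrt{2}\,\kappa(U))\,\|\operatorname{vec}(C)\|_{\ell_1},
\]
using $\|\operatorname{vec}(C)\|_{\ell_2}\le\|\operatorname{vec}(C)\|_{\ell_1}$ and the Frobenius bound on $\operatorname{vec}(Z^\star)$ above. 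Inserting this into Lemma \ref{lem:obj.at.X} gives (ii). The only real subtlety is the PSD-to-$\ell_1$ step and getting the constants in the right place; everything else is a direct chain of norm inequalities.
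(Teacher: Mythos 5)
Your overall strategy is exactly the paper's: vectorize the KKT identity $C=Z^\star+\mathcal{A}^\top\xi^\star$, solve for $\xi^\star$ through $(AA^\top)^{-1}A$ (for (i)) or through $\sigma_{\min}(A)$ (for (ii)), and control $\tr(Z^\star)$ by pairing the identity with the Slater point $U$, giving $\tr(Z^\star)\le\sqrt{2}\,\kappa(U)\,\|\mathrm{vec}(C)\|_{\ell_1}$. Part (ii) as you argue it is correct and matches the paper up to an equivalent reformulation (you use $\|A^\top\xi^\star\|_{\ell_2}\ge\sigma_{\min}(A)\|\xi^\star\|_{\ell_2}$ where the paper computes $\|(AA^\top)^{-1}A\|_{\ell_2\to\ell_2}=\sigma_{\min}(A)^{-1}$).

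The one real issue is the constant in (i), precisely in the step you flagged as delicate. Your intermediate bound $\|\mathrm{vec}(Z^\star)\|_{\ell_1}\le\sqrt{2}\,N\,\tr(Z^\star)$, combined with $\tr(Z^\star)\le\sqrt{2}\,\kappa(U)\|\mathrm{vec}(C)\|_{\ell_1}$, only yields $\|\mathrm{vec}(Z^\star)\|_{\ell_1}\le 2N\kappa(U)\|\mathrm{vec}(C)\|_{\ell_1}$, hence the factor $(1+2N\kappa(U))$ rather than the claimed $(1+\sqrt{2}N\kappa(U))$; as written, your chain proves a weaker inequality than statement (i). The paper avoids this by not going entrywise at all: with $L=\tfrac{N(N+1)}{2}$ the length of $\mathrm{vec}(Z^\star)$, it uses $\|\mathrm{vec}(Z^\star)\|_{\ell_1}\le\sqrt{L}\,\|\mathrm{vec}(Z^\star)\|_{\ell_2}\le N\,\|Z^\star\|_F\le N\,\|Z^\star\|_*=N\,\tr(Z^\star)$, which gives exactly $\sqrt{2}N\kappa(U)\|\mathrm{vec}(C)\|_{\ell_1}$. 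Alternatively, your own entrywise route can be repaired by summing carefully: $\|\mathrm{vec}(Z^\star)\|_{\ell_1}=\sum_i Z^\star_{ii}+\sqrt{2}\sum_{i<j}|Z^\star_{ij}|\le\bigl(1+\tfrac{\sqrt{2}(N-1)}{2}\bigr)\tr(Z^\star)\le N\,\tr(Z^\star)$, which also delivers the stated constant. With either fix, the rest of your argument for (i) goes through unchanged.
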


\begin{proof}

(i) Define $c=\text{vec}(C)$ and $z^\star=\text{vec}(Z^\star)$.
Rewriting the first KKT condition \eqref{eq:KKT1} in vector form gives:
$c-z^\star-A^\top \xi^\star=0$,
which implies:
\begin{equation}\label{eq:def.lambd}
\xi^\star=(AA^\top)^{-1}A(c-z^\star)\,.
\end{equation}
From this, we obtain:
\begin{equation}
\|\xi^\star\|_{\ell_1}\le \|(AA^\top)^{-1}A\|_{\ell_1\to\ell_1}\|c-z^\star\|_{\ell_1}\le \|(AA^\top)^{-1}A\|_{\ell_1\to\ell_1}(\|c\|_{\ell_1}+\|z^\star\|_{\ell_1})\,.
\end{equation}
Combining this result with the bound \eqref{eq:bound2}, it follows:
\begin{equation}\label{eq:bound3}
\tr(C X)\ge \lambda^\star -(\|c\|_{\ell_1}+\|z^\star\|_{\ell_1})\|(AA^\top)^{-1}A\|_{\ell_1\to\ell_1}\|\mathcal{A}X-b \|_{\ell_\infty}\,.
\end{equation}
Assume $\lambda^\star\ge 0$, $A$ has full rank $M$, and that Slater's condition holds for problem \eqref{eq:primal.no.tr.cons}, i.e., there exists $U\succ 0$ such that $\mathcal A U=b$.
Multiplying the first KKT condition \eqref{eq:KKT1} by $U-Y^\star$ and taking the trace, we have:
\begin{equation}
\tr( C (U-Y^\star)) -\tr( Z^\star ( U-Y^\star)) -\tr( \mathcal A^\top \xi^\star( U-Y^\star))=0
\end{equation}
Since $\tr(Z^\star  Y^\star)=0$ and $\tr( \mathcal A^\top \xi^\star( U-Y^\star))=\xi^{\star\top}\mathcal{A}(U-Y^\star )=\xi^{\star\top}(b-b)=0$, we deduce:
\begin{equation}
\tr( C U) -\lambda^\star -\tr( Z^\star  U) =0
\end{equation}
Since $\lambda^\star>0$, it follows that:
\begin{equation}\label{eq:bound1}
\begin{array}{rl}
\lambda_{\min}(U)\tr(Z^\star)\le&\tr( Z^\star U) = \tr( CU) -\lambda^\star\\
\le& c^\top u\le \|c\|_{\ell_1}\|u\|_{\ell_\infty}\,,
\end{array}
\end{equation}
where $u=\text{vec}(U)$.
The first inequality is derived from the positive semidefiniteness of $Z^\star$.
Let $U=(U_{ij})_{i,j=1,\dots,N}$.
Since $U\succeq 0$, we have $0\le U_{ii}\le \lambda_{\max}(U)$.
This follows from $0\le e_i^\top U e_i\le \lambda_{\max}(U) e_i^\top I e_i$, where $e_i=(0,\dots,0,1,0,\dots,0)$, since $0\preceq U\preceq \lambda_{\max}(U) I$.
Since $U\succeq 0$, it holds that: $\begin{bmatrix}
U_{ii}& U_{ij}\\
U_{ij}& U_{jj}
\end{bmatrix}\succeq 0$. 
This implies $U_{ij}^2\le U_{ii}U_{jj}\le \lambda_{\max}(U)^2$, and thus $|U_{ij}|\le \lambda_{\max}(U)$.
Consequently:
\begin{equation}
\|u\|_{\ell_\infty} =\max_{i\le j}|U_{ij}|\sqrt{2-\delta_{ij}}\le \lambda_{\max}(U)\sqrt{2}\,.
\end{equation}
From \eqref{eq:bound1}, we deduce:
\begin{equation}
\lambda_{\min}(U)\tr(Z^\star)\le\sqrt{2} \|c\|_{\ell_1}\lambda_{\max}(U)\,.
\end{equation}
This implies: 
\begin{equation}\label{eq:bound5}
\|z^\star\|_{\ell_2}=\|Z^\star\|_F\le\|Z^\star\|_* =\tr(Z^\star)\le \sqrt{2} \|c\|_{\ell_1} \kappa(U)\,.
\end{equation}
With $L=\frac{N(N+1)}{2}\le N^2$, the length of $z^\star$, it follows that:
\begin{equation}
\|z^\star\|_{\ell_1}\le \sqrt{L}\|z^\star\|_{\ell_2}\le N\|z^\star\|_{\ell_2} \le\sqrt{2} N \kappa(U)\|c\|_{\ell_1} \,.
\end{equation}
Using \eqref{eq:bound3}, we obtain
\begin{equation}
\tr( C X)\ge \lambda^\star -(1+\sqrt{2} N \kappa(U))\|c\|_{\ell_1}\|(AA^\top)^{-1}A\|_{\ell_1\to\ell_1}\|\mathcal{A}X-b \|_{\ell_\infty}\,.
\end{equation}
(ii) From \eqref{eq:bound4}, we have:
\begin{equation}\label{eq:bound6}
\begin{array}{rl}
\tr( C X)\ge& \lambda^\star -\|\xi^{\star}\|_{\ell_2}\|\mathcal{A}X-b \|_{\ell_2}\\[5pt]
=& \lambda^\star -\|(AA^\top)^{-1}A(c-z^\star)\|_{\ell_2} \sqrt{M}\|\mathcal{A}X-b \|_{\ell_\infty}\\[5pt]
\ge& \lambda^\star -\|(AA^\top)^{-1}A\|_{\ell_2\to\ell_2}(\|c\|_{\ell_2}+\|z^\star\|_{\ell_2})\sqrt{M}\|\mathcal{A}X-b \|_{\ell_\infty}\,.
\end{array}
\end{equation}
The second inequality follows from \eqref{eq:def.lambd}.
Using \eqref{eq:bound5}, it follows that:
\begin{equation}\label{eq:bound7}
\|c\|_{\ell_2}+\|z^\star\|_{\ell_2}\le \|c\|_{\ell_1}+\|z^\star\|_{\ell_2}\le (1+\sqrt{2} \kappa(U)) \|c\|_{\ell_1}\,.
\end{equation}
To compute $\|A^\top(AA^\top)^{-1}\|_{\ell_2\to\ell_2}$, note:
\begin{equation}\label{eq:operator2.2.norm}
\begin{array}{rl}
\|(AA^\top)^{-1}A\|_{\ell_2\to\ell_2}=&\sigma_{\max}((AA^\top)^{-1}A)=\sqrt{\lambda_{\max}((AA^\top)^{-1}AA^\top(AA^\top)^{-1})}\\[5pt]
=&\sqrt{\lambda_{\max}((AA^\top)^{-1})}=\frac{1}{\sqrt{\lambda_{\min}(AA^\top)}}=\frac{1}{\sigma_{\min}(A)}\,.
\end{array}
\end{equation}
Combining this with \eqref{eq:bound6} and \eqref{eq:bound7}, we conclude:
\begin{equation}
\tr( C X)\ge \lambda^\star -\sqrt{M}(1+\sqrt{2} \kappa(U)) \|c\|_{\ell_1} \sigma_{\min}(A)^{-1}\|\mathcal{A}X-b \|_{\ell_\infty}\,.
\end{equation}
\end{proof}

\begin{remark}
Lemma~\ref{lem:obj.at.X.2} provides lower bounds on the objective function of an SDP, which depend on the condition number of a Slater point $U$, assumed to exist. In certain cases, such a point $U$ can be explicitly constructed. However, the resulting bounds tend to involve significantly worse dimensional factors. We will derive these bounds explicitly for the problems of interest later.
\end{remark}

\begin{lemma}\label{lem:upper.bound.2}
Let all assumptions and notations in Lemma \ref{lem:upper.bound} hold.
If  $\lambda^\star\ge 0$, $A$ has full rank $M$, and Slater's condition holds for the primal problem \eqref{eq:remove.trace.one} (i.e., there exists $U\succ 0$ such that $\mathcal A U=b$), then:
\begin{enumerate}
\item[(i)] $\lambda^\star-\overline \lambda_t\le \varepsilon\left[1+(1+\sqrt{2} N \kappa(U))\|\text{vec}(C)\|_{\ell_1}\|(AA^\top)^{-1}A\|_{\ell_1\to\ell_1} \right]$.
\item[(ii)] $\lambda^\star-\overline \lambda_t\le \varepsilon\left[1+(1+\sqrt{2} \kappa(U)) \|\text{vec}(C)\|_{\ell_1} \sigma_{\min}(A)^{-1}\sqrt{M} \right]$.
\end{enumerate}
\end{lemma}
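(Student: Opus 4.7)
The plan is to mirror the inductive structure used in the proof of Lemma~\ref{lem:upper.bound}, but to replace the single application of Lemma~\ref{lem:obj.at.X} by the two refined lower bounds supplied by Lemma~\ref{lem:obj.at.X.2}. Both parts (i) and (ii) are proved simultaneously by induction on $t=0,1,\dots,T$; only the bound used in the inductive step differs.

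For the base case $t=0$, one has $\overline\lambda_0=\lambda_{\max}\ge\lambda^\star$, so $\lambda^\star-\overline\lambda_0\le 0$ and both inequalities hold trivially (the right-hand sides are nonnegative). For the inductive step at index $t=r+1$, we reuse the case split from Lemma~\ref{lem:upper.bound}. If $\eta_r=0$, then $\overline\lambda_{r+1}=\overline\lambda_r$ and the claim is inherited directly from the inductive hypothesis. If $\eta_r=1$, then Lemma~\ref{lem:Hamilton.updates} yields $X_\varepsilon^{(r)}\in S^{(\lambda_r)}_\varepsilon$, which in particular gives $\|\mathcal A X_\varepsilon^{(r)}-b\|_{\ell_\infty}\le\varepsilon$ and $\tr(CX_\varepsilon^{(r)})\le \lambda_r+\varepsilon=\overline\lambda_{r+1}+\varepsilon$.

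At this point I apply Lemma~\ref{lem:obj.at.X.2} at $X=X_\varepsilon^{(r)}$, which is permitted because the hypotheses ($\lambda^\star\ge 0$, $A$ full rank, and a Slater point $U\succ 0$ for \eqref{eq:remove.trace.one}) are precisely those imposed here. For part (i) this gives
\[
\overline\lambda_{r+1}+\varepsilon \;\ge\; \tr(CX_\varepsilon^{(r)}) \;\ge\; \lambda^\star - (1+\sqrt 2\,N\kappa(U))\|\mathrm{vec}(C)\|_{\ell_1}\,\|(AA^\top)^{-1}A\|_{\ell_1\to\ell_1}\,\varepsilon,
\]
which rearranges to exactly the bound claimed in (i). For part (ii) the same calculation with the second inequality of Lemma~\ref{lem:obj.at.X.2} yields the bound in (ii). Combining with the $\eta_r=0$ case closes the induction.

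I do not expect a genuine obstacle here: the result is essentially a corollary that swaps the norm estimate on $\xi^\star$ (used in Lemma~\ref{lem:upper.bound} via Lemma~\ref{lem:obj.at.X}) for the explicit matrix-analytic estimate in Lemma~\ref{lem:obj.at.X.2}. The only subtle point is verifying that strong duality and solvability of the dual are still available so that Lemma~\ref{lem:obj.at.X.2} applies at each iteration; but these are carried over from the hypotheses of Lemma~\ref{lem:upper.bound}, which are in force throughout. The main bookkeeping is simply propagating the constants $(1+\sqrt 2\,N\kappa(U))\|\mathrm{vec}(C)\|_{\ell_1}\|(AA^\top)^{-1}A\|_{\ell_1\to\ell_1}$ and $(1+\sqrt 2\,\kappa(U))\|\mathrm{vec}(C)\|_{\ell_1}\sigma_{\min}(A)^{-1}\sqrt M$ through the induction.
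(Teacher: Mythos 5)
Your proposal is correct and follows exactly the route the paper intends: the paper itself only remarks that the proof is ``similar to the proof of Lemma~\ref{lem:upper.bound} and based on Lemma~\ref{lem:obj.at.X.2}'', i.e.\ the same induction over the binary-search iterates with the bound from Lemma~\ref{lem:obj.at.X} swapped for the two refined estimates of Lemma~\ref{lem:obj.at.X.2}, which is precisely what you do. The bookkeeping of the constants and the use of $\|\mathcal A X_\varepsilon^{(r)}-b\|_{\ell_\infty}\le\varepsilon$ in the $\eta_r=1$ case match the paper's argument, so there is nothing further to add.
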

The proof of Lemma \ref{lem:upper.bound.2} is similar to the proof of Lemma \ref{lem:upper.bound} and based on Lemma \ref{lem:obj.at.X.2}.

\begin{lemma}[Accuracy]\label{lem:convergence.sdp.2}
Let all assumptions and notations in Lemma \ref{lem:convergence.sdp} hold.
If $\lambda^\star\ge 0$, $A$ has full rank $M$ and Slater's condition holds for the primal problem \eqref{eq:remove.trace.one} (i.e., there exists $U\succ 0$ such that $\mathcal A U=b$), then:
\begin{enumerate}
\item[(i)] $0\le \lambda^\star-\overline \lambda_T\le \varepsilon\left[2+(1+\sqrt{2} N \kappa(U))\|\text{vec}(C)\|_{\ell_1}\|(AA^\top)^{-1}A\|_{\ell_1\to\ell_1} \right]$.
\item[(ii)]
$0\le \lambda^\star-\overline \lambda_T\le \varepsilon\left[2+(1+\sqrt{2} \kappa(U)) \|\text{vec}(C)\|_{\ell_1} \sigma_{\min}(A)^{-1}\sqrt{M} \right]$.
\end{enumerate}
\end{lemma}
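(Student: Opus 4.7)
The plan is to follow the same three-term telescoping strategy that proves Lemma \ref{lem:convergence.sdp}, with the single adjustment that the dual-optimality estimate $\|\xi^\star\|_{\ell_1}$ is replaced by the Slater-refined upper bounds already supplied by Lemma \ref{lem:upper.bound.2}. In particular, all hypotheses of Lemma \ref{lem:convergence.sdp} remain in force (bounded data matrices, strong duality, existence of primal and dual optima, feasibility of the trace-free SDP), so Lemma \ref{lem:zero.lower.bound} and Lemma \ref{lem:monotoncity} apply verbatim and furnish, respectively, $0 \le \lambda^\star - \underline{\lambda}_T$ and the binary-search contraction $\overline{\lambda}_T - \underline{\lambda}_T \le \varepsilon$. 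The extra hypotheses $\lambda^\star \ge 0$, full row rank of $A$, and a Slater interior point $U \succ 0$ with $\mathcal{A}U = b$ are exactly what is needed to invoke Lemma \ref{lem:upper.bound.2}.

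Executing the plan, Lemma \ref{lem:upper.bound.2} gives the two tail estimates
\begin{equation}
\lambda^\star - \overline{\lambda}_T \le \varepsilon\bigl[1 + (1+\sqrt{2}\,N\,\kappa(U))\,\|\mathrm{vec}(C)\|_{\ell_1}\,\|(AA^\top)^{-1}A\|_{\ell_1\to\ell_1}\bigr]
\end{equation}
for part (i), and
\begin{equation}
\lambda^\star - \overline{\lambda}_T \le \varepsilon\bigl[1 + (1+\sqrt{2}\,\kappa(U))\,\|\mathrm{vec}(C)\|_{\ell_1}\,\sigma_{\min}(A)^{-1}\sqrt{M}\bigr]
\end{equation}
for part (ii). Combining each with the identity $\lambda^\star - \underline{\lambda}_T = (\lambda^\star - \overline{\lambda}_T) + (\overline{\lambda}_T - \underline{\lambda}_T)$, and using Lemma \ref{lem:monotoncity} to absorb the granularity gap $\overline{\lambda}_T - \underline{\lambda}_T \le \varepsilon$ into the additive constant, converts the leading $1$ into the claimed $2$ on the right-hand side. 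Since Lemma \ref{lem:zero.lower.bound} secures the sign, the full two-sided bracket $0 \le \lambda^\star - \overline{\lambda}_T \le \varepsilon[2 + \cdot]$ follows from the monotonicity chain $\overline{\lambda}_T \le \underline{\lambda}_T + \varepsilon \le \lambda^\star + \varepsilon$ together with the just-derived upper half, so both bounds (i) and (ii) are obtained by this single mechanism.

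The main obstacle I foresee is purely bookkeeping: Lemma \ref{lem:upper.bound.2} was proved for generic iterates $\overline{\lambda}_t$ under the same KKT-plus-Slater machinery that Lemma \ref{lem:obj.at.X.2} packages, so no new SDP duality argument is needed; however, one must verify that the Slater hypothesis, phrased here for the \emph{trace-free} dual problem \eqref{eq:dual.no.tr.cons}, transfers correctly through the equality \eqref{eq:remove.trace.one} assumed in Lemma \ref{lem:convergence.sdp}. This is the only nontrivial check, and it is immediate: condition \eqref{eq:remove.trace.one} says the trace-one constraint is inactive, so the optimal dual $\xi^\star$ of \eqref{eq:dual.no.tr.cons} is well defined and Lemma \ref{lem:obj.at.X.2} applies to every iterate matrix $X_\varepsilon^{(r)} \in S_\varepsilon^{(\lambda_r)}$ produced by Algorithm \ref{alg:Binary.search.HU}, exactly as in the proof of Lemma \ref{lem:upper.bound}. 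Once this is noted, the remainder is a one-line chain as above.
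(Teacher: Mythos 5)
Your proposal is correct and follows essentially the same route as the paper, which states that the proof of this lemma is obtained by repeating the argument of Lemma \ref{lem:convergence.sdp} with Lemma \ref{lem:upper.bound} replaced by Lemma \ref{lem:upper.bound.2}: the sign comes from Lemma \ref{lem:zero.lower.bound}, the tail from Lemma \ref{lem:upper.bound.2}, and the extra $\varepsilon$ from the binary-search gap in Lemma \ref{lem:monotoncity}. Note only that, as in the paper's own statement and proof of Lemma \ref{lem:convergence.sdp}, the quantity actually bounded this way is $\lambda^\star-\underline{\lambda}_T$ (the returned value), so your final chain should be phrased for $\underline{\lambda}_T$ rather than asserting nonnegativity of $\lambda^\star-\overline{\lambda}_T$ directly.
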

The proof of Lemma \ref{lem:convergence.sdp.2} is similar to the proof of Lemma \ref{lem:convergence.sdp} and based on Lemma \ref{lem:upper.bound.2}.
\subsubsection{Unconstrained polynomial optimization}
\label{sec:uncons.pop.2}

\begin{lemma}\label{lem:properties.data.uncons.2}
Let all assumptions and notations in Lemma \ref{lem:properties.data.uncons} hold.
Define 
\begin{equation}
A=\begin{bmatrix}
\text{vec}(A_1)^\top \\
\dots\\
\text{vec}(A_M)^\top
\end{bmatrix}\,.
\end{equation}
The following properties hold:
\begin{enumerate}[(i)]
\item\label{fullrank} $A$ has rank $M$.
\item\label{lower.bound.sing.val} $\sigma_{\min}(A)\ge 1$.
\item \label{norm.AAA.T} $\|( A {A}^\top)^{-1}A\|_{\ell_1\to\ell_1}\le 1$.
\item\label{norm.C} $\|\text{vec}(C)\|_{\ell_1}=1$. 
\end{enumerate}
\end{lemma}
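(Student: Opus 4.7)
The plan is to exploit the structure $A_i=\diag(B^{(\gamma_i)},0)$, together with properties \eqref{unique.nonzero.per.row}--\eqref{linearly.indepen.mat} of the coefficient--matching matrices established in Lemma~\ref{lem:properties.B.gamma}. The central observation is that the $M\times M$ Gram matrix $AA^\top$ is \emph{diagonal}: its $(i,j)$-entry equals $\mathrm{vec}(A_i)^\top\mathrm{vec}(A_j)=\tr(A_iA_j)=\tr(B^{(\gamma_i)}B^{(\gamma_j)})$, which vanishes for $i\ne j$ by property \eqref{trace.evalution}, and is $\ge 1$ for $i=j$ by the same property. From this single fact, (i) and (ii) fall out immediately: $AA^\top$ is invertible, hence $A$ has rank $M$, and $\sigma_{\min}(A)^2=\lambda_{\min}(AA^\top)=\min_i\tr(B^{(\gamma_i)\,2})\ge 1$.

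For (iii), I would first write $(AA^\top)^{-1}A$ explicitly: since $AA^\top=\diag(d_1,\dots,d_M)$ with $d_i=\tr(B^{(\gamma_i)\,2})$, the $i$-th row of $(AA^\top)^{-1}A$ is $d_i^{-1}\,\mathrm{vec}(A_i)^\top$. Now fix a column, indexed by a pair $(\alpha,\beta)$ with $\alpha\le\beta$ in the $B$-block of $\diag(\cdot,0)$ (columns in the trailing zero block are identically zero). By property \eqref{unique.one.each.position}, for such a pair there is \emph{at most one} index $i$ with $(B^{(\gamma_i)})_{\alpha,\beta}=1$, namely the one determined by $\gamma_i=\alpha+\beta$. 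Hence each column has at most one nonzero entry, of size either $1/d_i$ (if $\alpha=\beta$) or $\sqrt{2}/d_i$ (if $\alpha<\beta$), coming from the $\sqrt{2-\delta_{\alpha\beta}}$ factor in $\mathrm{vec}(\cdot)$. Using property \eqref{square.mat}, $d_i$ counts the number of $\alpha\in\N^n_k$ with $\gamma_i-\alpha\in\N^n_k$; a brief case split on whether $\gamma_i$ is componentwise even shows $d_i\ge\sqrt{2}$ whenever a column contains an off-diagonal $\sqrt{2}/d_i$ entry, and $d_i\ge 1$ in the purely diagonal case. Either way each column sum is $\le 1$, giving $\|(AA^\top)^{-1}A\|_{\ell_1\to\ell_1}\le 1$.

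Finally (iv) is immediate: $C=\diag(B^{(0)},0)$ and $B^{(0)}=\diag(1,0,\dots,0)$, so $\mathrm{vec}(C)$ has a single nonzero entry equal to $1\cdot\sqrt{2-\delta_{00}}=1$, whence $\|\mathrm{vec}(C)\|_{\ell_1}=1$.

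The only genuinely delicate step is the bookkeeping in (iii) around the $\sqrt{2-\delta_{\alpha\beta}}$ factors when $\gamma_i$ is an even multi-index and the diagonal entry $\alpha=\beta=\gamma_i/2$ appears together with off-diagonal entries in the same $B^{(\gamma_i)}$; I expect this to be the main place where care is needed, but the case analysis above controls it cleanly.
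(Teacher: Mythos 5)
Your proposal is correct and takes essentially the same route as the paper: the diagonal Gram matrix $AA^\top=\diag(d_1,\dots,d_M)$ with $d_i=\tr(B^{(\gamma_i)2})\ge 1$ (from the trace-orthogonality property of the $B^{(\gamma)}$) yields (i)--(ii), the uniqueness of the nonzero position across the family $\{B^{(\gamma)}\}$ gives the one-nonzero-per-column structure of $(AA^\top)^{-1}A$ for (iii), and (iv) is the same direct computation. If anything, your handling of the $\sqrt{2-\delta_{\alpha\beta}}$ factors in $\mathrm{vec}(\cdot)$ (off-diagonal entries $\sqrt{2}/d_i$ with $d_i\ge 2$) is more careful than the paper's, which states the entries simply lie in $(0,1]$ without spelling out this case split.
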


\begin{proof}

\eqref{fullrank} To show $A_1,\dots,A_M$ are linearly independent in $\mathbb S^N$, assume $w\in\R^M$ such that $0=w_1 A_1+\dots+w_M A_M$.
Substituting $A_i=\diag(B^{(\gamma_i)},0)$, we have $0=w_1 B^{(\gamma_1)}+\dots+w_{M} B^{(\gamma_{M})}$.
By Lemma \ref{lem:properties.B.gamma} \eqref{linearly.indepen.mat}, this implies $w_1=\dots=w_{M}=0$.
Hence, $A$ has rank $M$.

\eqref{lower.bound.sing.val} Writing $A$ as
\begin{equation}
A=\begin{bmatrix}
\text{vec}(\diag(B^{(\gamma_1)},0))^\top\\
\dots\\
\text{vec}(\diag(B^{(\gamma_{M})},0))^\top
\end{bmatrix}\,,
\end{equation}
and its transpose as
\begin{equation}
A^\top=\begin{bmatrix}
\text{vec}(\diag(B^{(\gamma_1)},0))&
\dots&
\text{vec}(\diag(B^{(\gamma_{M})},0))
\end{bmatrix}\,.
\end{equation}
For $i=1,\dots,M$, the inner product
\begin{equation}
a_i=\text{vec}(\diag(B^{(\gamma_j},0))^T\text{vec}(\diag(B^{(\gamma_j},0))=\tr(B^{(\gamma_j)2})\,.
\end{equation}
satisfies $a_i\ge 1$ by Lemma \ref{lem:properties.B.gamma} \eqref{trace.evalution}.  For $i\ne j$,
\begin{equation}
\text{vec}(\diag(B^{(\gamma_i},0))^T\text{vec}(\diag(B^{(\gamma_j},0))=\tr(B^{(\gamma_i)}B^{(\gamma_j)})=0\,.
\end{equation}
Thus, $AA^\top=\diag(a_1,\dots,a_M)$, and $\lambda_{\min}(AA^\top)\ge \min\{a_1,\dots,a_M\}\ge 1$.
It follows that $\sigma_{\min}(A)=\sqrt{\lambda_{\min}(AA^\top)}\ge 1$.

\eqref{norm.AAA.T}  From the diagonal structure of $A {A}^\top$, we have
\begin{equation}
(A {A}^\top)^{-1}=\diag(a_1^{-1},\dots,a_M^{-1})
\end{equation}
Thus,
\begin{equation}
(A {A}^\top)^{-1}A= \begin{bmatrix}
a_1^{-1}\text{vec}(B^{(\gamma_1)})^\top\\
a_2^{-1}\text{vec}(B^{(\gamma_2)})^\top\\
\dots\\ a_M^{-1}\text{vec}(B^{(\gamma_M)})^\top
\end{bmatrix}= \begin{bmatrix}
a_1^{-1}B^{(\gamma_1)}_{\alpha_1,\beta_1}&a_1^{-1}B^{(\gamma_1)}_{\alpha_2,\beta_2}&\dots& a_1^{-1}B^{(\gamma_1)}_{\alpha_L,\beta_L}\\
a_2^{-1}B^{(\gamma_2)}_{\alpha_1,\beta_1}&a_2^{-1}B^{(\gamma_2)}_{\alpha_2,\beta_2}&\dots&a_2^{-1}B^{(\gamma_2)}_{\alpha_L,\beta_L}\\
.&.&\dots&.\\
a_M^{-1}B^{(\gamma_M)}_{\alpha_1,\beta_1}&a_M^{-1}B^{(\gamma_M)}_{\alpha_2,\beta_2}&\dots&a_M^{-1}B^{(\gamma_M)}_{\alpha_L,\beta_L}
\end{bmatrix}\,.
\end{equation}
By Lemma \ref{lem:properties.B.gamma}  \eqref{unique.one.each.position}, each column $i$ of this matrix has a unique nonzero entry, $a_j^{-1}B_{\alpha_i,\beta_i}^{(\gamma_j)}$, for some $j\in \{1,\dots,M\}$, and  this entry is in the range $(0,1]$.
Therefore, we have $\|{A}^\top( A {A}^\top)^{-1}\|_{\ell_1\to\ell_1}\le 1$.

\eqref{norm.C} Since $C=\diag(B^{(0)},0)=\diag(1,0,\dots,0)$, $\|\text{vec}(C)\|_{\ell_1}=1$.

\end{proof}

\begin{theorem}\label{theo:accuarcy.approximate.val.uncons.2}
Let $f\in\R[x]_{2k}$, and suppose that Assumption \ref{ass:cond.bounded.trace.1} holds.
Assume strong duality holds for primal-dual semidefinite relaxations \eqref{eq:sdp.sos.relaxation}-\eqref{eq:sdp.mom.relaxation}.
Let $\lambda_k^{(\varepsilon)}$ be the value returned by Algorithm \ref{alg:uncon.pop}.
If Slater's condition holds for the primal relaxation \eqref{eq:sdp.sos.relaxation} (i.e., it has a feasible solution  $G_{n,k}\succ 0$ with condition number $\kappa $), then:
\begin{enumerate}
\item[(i)] $0\le \lambda_k^{(\varepsilon)}-\lambda_k\le \varepsilon\left[3+\sqrt{2}\kappa  \left(\binom{n+k}n+1\right) \right]$.
\item[(ii)]
$0\le \lambda_k^{(\varepsilon)}-\lambda_k\le \varepsilon\left[2+(1+\sqrt{2} \kappa )  \sqrt{\binom{n+2k}n-1} \ \right]$.
\end{enumerate}
\end{theorem}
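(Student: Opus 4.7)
The plan is to reduce the theorem to Lemma \ref{lem:convergence.sdp.2} applied to the standard-form SDP produced by Algorithm \ref{alg:uncon.pop}, then translate the resulting bound on $\lambda^\star - \underline{\lambda}_T$ back to $\lambda_k^{(\varepsilon)} - \lambda_k$. First I would invoke Lemmas \ref{lem:equi.sdp.relax.uncons} and \ref{lem:equivalent.sol.uncons} to rewrite the SOS relaxation \eqref{eq:sdp.sos.relaxation} in the standard SDP form \eqref{eq:sdp} with $\lambda^\star = f_0 - \lambda_k$, together with the no-trace-constraint identity \eqref{eq:remove.trace.one.1.2}. Taking the reformulated $C$, $A_i$, $b_i$, $N$, $M$ from Algorithm~\ref{alg:uncon.pop}, Lemma~\ref{lem:properties.data.uncons} supplies $-I_N \preceq C \preceq I_N$, $-I_N \preceq A_i \preceq I_N$, and $\lambda^\star \in [\lambda_{\min}, \lambda_{\max}]$.

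Next I would check the three extra hypotheses of Lemma~\ref{lem:convergence.sdp.2}. Nonnegativity of $\lambda^\star$ follows from $\lambda_k \le f^\star \le f(0) = f_0$. Full rank of $A$ is Lemma~\ref{lem:properties.data.uncons.2}\eqref{fullrank}. For Slater's condition on the lifted SDP, I would build a feasible $U \succ 0$ of size $N = \binom{n+k}{n}+1$ from the given Slater point $G_{n,k} \succ 0$ of \eqref{eq:sdp.sos.relaxation} by setting
\[
U \;=\; \diag\bigl(G_{n,k},\,u\bigr),\qquad u \in \bigl[\lambda_{\min}(G_{n,k}),\,\lambda_{\max}(G_{n,k})\bigr].
\]
Since $A_i = \diag(\tilde A_i, 0)$ and $\tr(\tilde A_i G_{n,k}) = b_i$ by feasibility of $G_{n,k}$, we automatically have $\mathcal A U = b$; and the choice of $u$ guarantees $\kappa(U) = \kappa$.

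I would then apply parts (i) and (ii) of Lemma~\ref{lem:convergence.sdp.2} and substitute the four structural bounds $\|\text{vec}(C)\|_{\ell_1} = 1$, $\|(AA^\top)^{-1}A\|_{\ell_1 \to \ell_1} \le 1$, $\sigma_{\min}(A) \ge 1$ from Lemma~\ref{lem:properties.data.uncons.2}\eqref{norm.C},\eqref{norm.AAA.T},\eqref{lower.bound.sing.val}, together with $N = \binom{n+k}{n}+1$ and $M = \binom{n+2k}{n}-1$. This yields
\[
0 \le \lambda^\star - \underline{\lambda}_T \le \varepsilon\bigl[3 + \sqrt{2}\kappa(\tbinom{n+k}{n}+1)\bigr]
\quad\text{and}\quad
0 \le \lambda^\star - \underline{\lambda}_T \le \varepsilon\bigl[2 + (1+\sqrt{2}\kappa)\sqrt{\tbinom{n+2k}{n}-1}\bigr].
\]
Finally, since Algorithm~\ref{alg:uncon.pop} sets $\lambda_k^{(\varepsilon)} = f_0 - \underline{\lambda}_T$, the identity $\lambda_k^{(\varepsilon)} - \lambda_k = \lambda^\star - \underline{\lambda}_T$ converts these into the two claimed bounds.

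The only nontrivial step is the construction of the lifted Slater point $U$: one must simultaneously verify affine feasibility $\mathcal A U = b$ (which works because the $(N,N)$-entry of each $A_i$ vanishes) and control the condition number (which forces the choice of $u$ inside the spectral interval of $G_{n,k}$). Everything else is a direct substitution of the sparsity/spectral properties of the coefficient-matching matrices $B^{(\gamma)}$ into the general accuracy estimate of Lemma~\ref{lem:convergence.sdp.2}.
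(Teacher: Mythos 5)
Your proposal matches the paper's proof essentially step for step: reformulate via Lemmas \ref{lem:equi.sdp.relax.uncons} and \ref{lem:equivalent.sol.uncons}, lift the Slater point $G_{n,k}$ to a block-diagonal feasible $U\succ 0$ of the standard-form SDP with the same condition number (the paper takes $u=\lambda_{\min}(G_{n,k})$; your interval choice is an inessential variant), verify $\lambda^\star\ge 0$, full rank, and the spectral/sparsity bounds from Lemmas \ref{lem:properties.data.uncons} and \ref{lem:properties.data.uncons.2}, then apply Lemma \ref{lem:convergence.sdp.2}(i)--(ii) and translate back through $\lambda_k^{(\varepsilon)}=f_0-\underline{\lambda}_T$. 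The argument is correct and no gaps beyond the paper's own level of detail.
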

\begin{proof}
Let $\lambda^\star=f_0-\lambda_k$.
Since strong duality holds for problems  \eqref{eq:sdp.sos.relaxation}-\eqref{eq:sdp.mom.relaxation}, it holds for problem \eqref{eq:remove.trace.one.1.2} and its dual \eqref{eq:dual.no.tr.cons} by Lemma \ref{lem:equivalent.sol.uncons}.

(i) Assume Slater's condition holds for the primal relaxation \eqref{eq:sdp.sos.relaxation} (i.e., it has a feasible solution  $G_{n,k}\succ 0$ with condition number $\kappa $).
Define $U_{n,k}=\diag(G_{n,k},\lambda_{\min}(G_{n,k}))$. Since $\lambda_{\min}(G_{n,k})>0$, it follows that $U_{n,k}\succ 0$ and also has the same condition number $\kappa $.
Since $\tr( A_i U_{n,k})=\tr( B^{(\gamma_i)} G_{n,k})=f_{\gamma_i}=b_i$, $U_{n,k}$ is a feasible solution for problem \eqref{eq:remove.trace.one.1.2}.
This implies that Slater's condition holds for  problem \eqref{eq:remove.trace.one.1.2}.
By Lemma \ref{lem:properties.data.uncons.2} \eqref{fullrank}, $A$ has full rank $M$.
Since $\lambda_k\le f^\star\le f(0)=f_0$, it follows that $\lambda^\star=f_0-\lambda_k\ge 0$.
By Lemma \ref{lem:properties.data.uncons}, \eqref{bound.minimum} and \eqref{bound.data.mat}, $I_N\succeq C\succeq -I_N$, $I_N\succeq A_i\succeq -I_N$, for $i=1,\dots,M$, and $\lambda^\star\in [\lambda_{\min},\lambda_{\max}]$. 
By Lemma \ref{lem:equivalent.sol.uncons}, \eqref{eq:remove.trace.one.1.2} holds.
By Lemma \ref{lem:convergence.sdp.2} (i), we have 
\begin{equation}
0\le \lambda^\star-\overline \lambda_T\le \varepsilon\left[2+(1+\sqrt{2} N \kappa )\|\text{vec}(C)\|_{\ell_1}\|(AA^\top)^{-1}A\|_{\ell_1\to\ell_1} \right]\,.
\end{equation}
Using properties of $C$ and $A$ from Lemma \ref{lem:properties.data.uncons.2} \eqref{norm.AAA.T} and \eqref{norm.C},  this simplifies to: $0\le \lambda^\star -\underline \lambda_T\le \varepsilon\left(3+\sqrt{2} N \kappa  \right)$.
Since $N=\binom{n+k}n+1$, we have
\begin{equation}
0\le \lambda^\star -\underline \lambda_T\le \varepsilon\left[3+\sqrt{2} \left(\binom{n+k}n+1\right) \kappa \right]\,.
\end{equation}
Recall that $\lambda_k^{(\varepsilon)}=f_0-\underline \lambda_T$. Substituting,
$\lambda^\star -\underline \lambda_T=\lambda_k^{(\varepsilon)}-\lambda_k$, which proves (i).

The proof of (ii) follows similarly.
\end{proof}

\subsubsection{Inequality-constrained polynomial optimization}
\label{sec:ineq.cons.add.complex}
\begin{algorithm}\label{alg:con.pop}
Inequality-constrained polynomial optimization
\begin{itemize}
\item Input: $\varepsilon>0$, $\underline \lambda\in\R$, $a\in S(g)$,\\
\phantom{1.1cm} coefficients $(f_\alpha)_{\alpha\in\N^n_{2k}}$ of $f\in\R[x]$,\\
\phantom{1.1cm} coefficients $(g_{i,\alpha})_{\alpha\in\N^n_{2d_i}}$ of $g_i\in\R[x]$, for $i=0,\dots,m$.
\item Output: $\lambda_k^{(\varepsilon)}\in\R$ and $G_\varepsilon\in\mathbb S^N$.
\end{itemize}
\begin{enumerate}
\item\label{step:convert.to.sdp} Convert relaxation to standard semidefinite program
\begin{enumerate}
\item Let $\N^n_{2k}=\{\gamma_0,\gamma_1,\dots,\gamma_{M}\}$ with $\gamma_0=0$, and set $M=|\N^n_{2k}|-1=\binom{n+2k}n -1$.
\item Define $\tilde b_i=f_{\gamma_i}$, $\tilde A_i=  B_g^{(\gamma_i)}$, for $i=0,\dots,M$, $\tilde C=\tilde A_0=  B_g^{(\gamma_0)}=  B_g^{(0)}$, $\tilde N=\sum_{i=0}^m \binom{n+k-d_i}n$.
\item Define $\check C=\diag(\tilde C,0)$, $\check A_i=\diag(\tilde A_i,0)$, $\check b_i= \tilde b_i$, for $i=1,\dots,M$, $N=\tilde N+1$.
\item\label{step:Gram-Schmidt} Apply Algorithm \ref{alg:Gram-Schmidt} (Gram-Schmidt process) to obtain $A_i\in\mathbb S^N$ and $b_i\in\R$, $i=1,\dots,M$.

\item Set $c_g=\sqrt{1+g_{1,0}^2+\dots+g_{m,0}^2}$, where $g_{i,0}$ is the constant coefficient of $g_i$.

\item Set $C=\frac{\check C}{c_g}$, $\lambda_{\min}=\frac{f_0-f(a)}{c_g}$ and $\lambda_{\max}=\frac{f_0-\underline \lambda}{c_g}$.
\end{enumerate}

\item\label{step:HU.GS} Run Algorithm \ref{alg:Binary.search.HU} (Binary search using Hamiltonian updates) 
to get $\underline \lambda_T\in\R$ and $X_\varepsilon=((X_\varepsilon)_{ij})_{i,j=1,\dots,N}\in\mathbb S^N$.

\item Extract approximate results: Set $\lambda_k^{(\varepsilon)}=f_0-\underline\lambda_T c_g$ and $G_\varepsilon=((X_\varepsilon)_{ij})_{i,j=1,\dots,\tilde N}$.
\end{enumerate}
\end{algorithm}

\begin{lemma}\label{lem:equi.sdp.relax.cons}
Let $f\in\R[x]_{2d}$ and $g_i\in\R[x]_{2d_i}$, for $i=0,\dots,m$, with $g_0=1$.
Suppose Assumption \ref{ass:cond.bounded.trace.2} holds.
Let $\check C$, $\check A_i$, for $i=1,\dots,M$, $N$ be generated by Algorithm \ref{alg:con.pop}.
Then problem \eqref{eq:sdp.relaxation.cons.pop} is equivalent to
\begin{equation}\label{eq:sdp.trace.one.cons.pop.2}
\begin{array}{rl}
f_0-\lambda_k=\inf\limits_{X} & \tr( \check C X)\\
\text{s.t.}& X\in\mathbb S_+^{N}\,,\,\tr(X)= 1\,,\\
&\tr( \check A_i X)=\check b_i\,,\,i=1,\dots,M\,.
\end{array}
\end{equation}
Moreover, this is equivalent to
\begin{equation}\label{eq:sdp.trace.one.cons.pop.2.1}
\begin{array}{rl}
f_0-\lambda_k=\inf\limits_{X} & \tr( \check C X)\\
\text{s.t.}& X\in\mathbb S_+^{N}\,,\\
&\tr( \check A_i X)=\check b_i\,,\,i=1,\dots,M\,.
\end{array}
\end{equation}
\end{lemma}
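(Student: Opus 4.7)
The plan is to follow exactly the same three-step reduction used in the proof of Lemma~\ref{lem:equi.sdp.relax.cons.exact}, since $\check C,\check A_i,\check b_i$ are defined in the same way as the $C,A_i,b_i$ in Algorithm~\ref{alg:con.pop.exact} (before the Gram--Schmidt step in Algorithm~\ref{alg:con.pop}). The differences are purely notational: Algorithm~\ref{alg:con.pop} will later rescale $\check C$ to $C=\check C/c_g$ and orthogonalize the $\check A_i$'s, but those steps are irrelevant for establishing the equivalence claimed here.

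First I would invoke Lemma~\ref{lem:cond.bounded.trace.2} together with Assumption~\ref{ass:cond.bounded.trace.2} to assert that, without loss of generality, one may add the redundant constraint $\tr(G)\le 1$ to the SOS relaxation \eqref{eq:sdp.relaxation.cons.pop}. This yields the bounded-trace SDP
\begin{equation*}
\lambda_k=\sup\{\lambda:\lambda\in\R,\,G=\diag(G_0,\dots,G_m)\succeq 0,\,\tr(G)\le 1,\,f-\lambda=\textstyle\sum_{i=0}^m g_i v_{k-d_i}^\top G_i v_{k-d_i}\}.
\end{equation*}

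Next I would translate the polynomial identity $f-\lambda=\sum_i g_i v_{k-d_i}^\top G_i v_{k-d_i}$ into affine constraints on $G$ by matching coefficients of $x^\gamma$ for each $\gamma\in\N^n_{2k}$. Using Definition~\ref{def:Bgamma} together with \eqref{eq.tilde.B.gamma}, the coefficient of $x^\gamma$ in $\sum_i g_i v_{k-d_i}^\top G_i v_{k-d_i}$ equals $\tr(B_g^{(\gamma)}G)$, so matching coefficients gives $\lambda=f_0-\tr(\tilde C G)$ and $f_{\gamma_i}=\tr(\tilde A_i G)=\check b_i$ for $i=1,\dots,M$ (this is the same computation as the one shown explicitly in the proof of Lemma~\ref{lem:equi.sdp.relax.cons.exact}, so I would cite it rather than repeat it). Substituting, the bounded-trace SDP becomes $f_0-\lambda_k=\inf\{\tr(\tilde C G):G\in\mathbb S_+^{\tilde N},\,\tr(G)\le 1,\,\tr(\tilde A_i G)=\check b_i\}$.

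To move from a trace-at-most-one SDP on $\tilde N\times \tilde N$ matrices to the trace-exactly-one SDP on $N\times N=(\tilde N+1)\times(\tilde N+1)$ matrices required in \eqref{eq:sdp.trace.one.cons.pop.2}, I would use the standard slack-block embedding: given a feasible $G$ with $\tr(G)\le 1$, let $X=\diag(G,1-\tr(G))\succeq 0$; conversely, the leading $\tilde N\times \tilde N$ principal submatrix of any $X\succeq 0$ with $\tr(X)=1$ is feasible for the bounded-trace problem. Since $\check C=\diag(\tilde C,0)$ and $\check A_i=\diag(\tilde A_i,0)$, both the objective and the affine constraints are preserved, giving \eqref{eq:sdp.trace.one.cons.pop.2}. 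Finally, for the equivalence with \eqref{eq:sdp.trace.one.cons.pop.2.1}, the trace constraint $\tr(X)=1$ can be dropped because, by Lemma~\ref{lem:cond.bounded.trace.2}, every optimal Gram matrix of the original SOS relaxation already satisfies $\tr(G)\le 1$, and the slack block in $X$ can absorb whatever slack is needed; this is the same argument used at the end of the proof of Lemma~\ref{lem:equi.sdp.relax.cons.exact}. There is no substantial obstacle here—the entire statement is a direct transcription of Lemma~\ref{lem:equi.sdp.relax.cons.exact} to the notation $(\check C,\check A_i,\check b_i)$ used in Algorithm~\ref{alg:con.pop}.
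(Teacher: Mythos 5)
Your proposal is correct, and it takes the same route as the paper, which proves this lemma simply by pointing to the proof of Lemma~\ref{lem:equi.sdp.relax.cons.exact}: indeed $\check C,\check A_i,\check b_i$ in Algorithm~\ref{alg:con.pop} are defined identically to $C,A_i,b_i$ in Algorithm~\ref{alg:con.pop.exact} before the rescaling and Gram--Schmidt steps, so the argument transfers verbatim. You correctly unpack the three ingredients---Lemma~\ref{lem:cond.bounded.trace.2} to bound the trace, coefficient matching via $B_g^{(\gamma)}$ to obtain the affine constraints, and the slack-block embedding to pass to the fixed-trace formulation and then drop it---so there is nothing to fix.
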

The proof of Lemma \ref{lem:equi.sdp.relax.cons} follows the same reasoning as the proof of Lemma \ref{lem:equi.sdp.relax.cons.exact}.

\begin{lemma}\label{lem:properties.check.C}
Let $f\in\R[x]_{2d}$ and $g_i\in\R[x]_{2d_i}$, for $i=0,\dots,m$, with $g_0=1$.
Let $\check C$, $\check A_i$,for  $i=1,\dots,M$, $N$, and $c_g$ be generated by Algorithm \ref{alg:con.pop}.
Then the following properties hold:
\begin{enumerate}[(i)]
\item\label{linear.inde.mat.data} $\check A_1,\dots, \check A_M$ are linearly independent in $\mathbb S^N$.
\item\label{frobenious.norm.C} $\|\check C\|_F=c_g$.
\item\label{sparse.C} The maximum number of nonzero entries in any row of $\check C$ is $1$.
\end{enumerate}
\end{lemma}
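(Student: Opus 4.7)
The plan is to establish each of the three claims by unpacking the explicit definitions of $\check C$ and $\check A_i$ in terms of the blocks $B^{(\eta)}_i$ of Definition~\eqref{eq:def.B.i.eta} and the assembly rule~\eqref{eq.tilde.B.gamma}, and then invoking the corresponding entry-wise properties collected in Lemma~\ref{lem:properties.B.gamma}.

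For \eqref{linear.inde.mat.data}, I would start from a linear relation $\sum_{i=1}^M w_i \check A_i=0$ in $\mathbb S^N$. Since $\check A_i=\diag(B_g^{(\gamma_i)},0)$, this reduces to $\sum_{i=1}^M w_i B_g^{(\gamma_i)}=0$. The key observation is that the first block of $B_g^{(\gamma_i)}$ (the block corresponding to $g_0=1$) is, by the definition in~\eqref{eq.tilde.B.gamma} and the fact that $g_{0,\zeta}$ is nonzero only at $\zeta=0$ with $g_{0,0}=1$, exactly $B^{(\gamma_i)}$. Projecting the relation onto this first block yields $\sum_{i=1}^M w_i B^{(\gamma_i)}=0$, which by Lemma~\ref{lem:properties.B.gamma}\eqref{linearly.indepen.mat} forces $w_1=\dots=w_M=0$. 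Hence the $\check A_i$'s are linearly independent.

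For \eqref{frobenious.norm.C} and \eqref{sparse.C}, I would compute $\check C=\diag(B_g^{(0)},0)$ explicitly. For each $i\in\{0,\dots,m\}$, the $i$-th block $B_{0,i}=\sum_{\zeta\in\N^n_{2d_i},\,-\zeta\in\N^n} g_{i,\zeta}B_i^{(-\zeta)}$ collapses to the single term $\zeta=0$, so $B_{0,i}=g_{i,0}B_i^{(0)}$. By Definition~\eqref{eq:def.B.i.eta}, $B_i^{(0)}$ has a unique nonzero entry equal to $1$ at position $(0,0)$. Consequently $\check C$ is block-diagonal with each block contributing a single scalar $g_{i,0}$ (with $g_{0,0}=1$) at its top-left corner and zeros elsewhere, followed by a trailing zero. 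This immediately gives $\|\check C\|_F^2=1+g_{1,0}^2+\dots+g_{m,0}^2=c_g^2$, proving \eqref{frobenious.norm.C}. Moreover, every row of $\check C$ contains at most one nonzero entry (the corresponding diagonal entry $g_{i,0}$), which is \eqref{sparse.C}.

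No delicate step is anticipated: all three items reduce to direct structural computations using definitions, with the only nontrivial input being the linear independence of the $B^{(\gamma)}$'s from Lemma~\ref{lem:properties.B.gamma}\eqref{linearly.indepen.mat}, which allows the first-block projection argument to succeed regardless of the specific choice of $g_1,\dots,g_m$.
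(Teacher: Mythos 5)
Your proposal is correct and follows essentially the same route as the paper: item (i) by projecting the linear relation onto the first (\(g_0=1\)) block and invoking Lemma~\ref{lem:properties.B.gamma}\eqref{linearly.indepen.mat}, and items (ii)--(iii) by observing that only \(\zeta=0\) survives in the sum defining \(B_g^{(0)}\), so each block of \(\tilde C\) is a single scalar \(g_{i,0}\) in the \((0,0)\) position. The paper states the resulting diagonal form of \(\tilde C\) directly, whereas you spell out the collapse of the sum, but the argument is the same.
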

\begin{proof}
\eqref{linear.inde.mat.data} 
 Suppose $u\in\R^M$ satisfies $u_1 \check A_1+\dots+u_{M} \check A_M=0$.
Since $\check A_i=\diag(\tilde A_i,0)$, this implies $u_1 \tilde A_1+\dots+u_{M} \tilde A_{M}=0$.
From the construction of  $\tilde A_i=  B_g^{(\gamma_i)}$, this reduces to $u_1   B_g^{(\gamma_1)}+\dots+u_{M}   B_g^{(\gamma_{M})}=0$.
Since $g_0=1$, the first block of $  B_g^{(\gamma_j)}$ is $B_0^{(\gamma_i)}=B^{(\gamma_i)}$, as defined in \eqref{eq:def.B.gamma}.
Thus, $u_1 B^{(\gamma_1)}+\dots+u_{M} B^{(\gamma_{M})}=0$.
By Lemma \ref{lem:properties.B.gamma} \eqref{linearly.indepen.mat}, $u_1=\dots=u_{M}=0$.
This establishes linear independence.

\eqref{frobenious.norm.C} Recall that $\check C=\diag(\tilde C,0)$, where
\begin{equation}\label{eq:def.tilde.C}
\tilde C=  B_g^{(0)}=\diag(g_{0,0},0,\dots,0,g_{1,0},0,\dots,0,g_{m,0},0,\dots,0)\,,
\end{equation}
with $g_{i,0}$ denoting the constant coefficient of $g_i$. 
Since $g_{0,0}=1$, it follows that $\|\check C\|_F=c_g$.

\eqref{sparse.C} Additionally, the diagonal structure of $\tilde C$ implies that at most one nonzero entry exists per row of $\check C=\diag(\tilde C,0)$.
\end{proof}

\begin{remark}
By Lemma \ref{lem:properties.check.C} \eqref{linear.inde.mat.data}, the matrix 
\begin{equation}
\check A =\begin{bmatrix}
\text{vec}(\check A_1)\\
\dots\\
\text{vec}(\check A_M)
\end{bmatrix}
\end{equation}
has rank $M$.
However, unlike the unconstrained case, estimating a positive lower bound for the smallest singular value of $\check A$ is challenging. This bound is crucial for analyzing the accuracy of the algorithm's output.
The difficulty arises because the entries of $\check A_i$ depend on the coefficients of $g_i$, which can take arbitrary values.

To address this, we apply the Gram-Schmidt process to transform $\check A_1,\dots,\check A_M$ into an orthogonal basis $A_1,\dots,A_M$  (see Step 1 (d) of Algorithm \ref{alg:con.pop}).
\end{remark}
\begin{remark}
By Lemma \ref{lem:properties.data.A.cons.exact} \eqref{sparsity.A}, each row of $\check A_i=\diag(  B_g^{(\gamma_i)},0)$ has at most $\max\limits_{i=0,\dots,m} |\supp(g_i)|\le \max\limits_{i=0,\dots,m}\binom{n+2d_i}n$ nonzero entries.

However, this sparsity is generally lost for $A_j$, as the Gram-Schmidt process transforms $\check A_i$ into a new orthogonal basis $A_j$, making each $A_j$ a linear combination of the $\check A_i$ matrices.
\end{remark}

\begin{lemma}\label{lem:properties.data.A.cons}
Let $f\in\R[x]_{2d}$ and $g_i\in\R[x]_{2d_i}$, for $i=0,\dots,m$, with $g_0=1$.
Suppose that Assumption \ref{ass:cond.bounded.trace.2} holds.
Let $C$, $A_i$, $i=1,\dots,M$, $N$, $c_g$, $\lambda_{\min}$, and $\lambda_{\max}$ be generated by Algorithm \ref{alg:con.pop}.
Define 
\begin{equation}
A =\begin{bmatrix}
\text{vec}(A_1)\\
\dots\\
\text{vec}(A_M)
\end{bmatrix}\,.
\end{equation}
Set $\lambda^\star=\frac{f_0-\lambda_k}{c_g}$.
Then the following hold:
\begin{enumerate}[(i)]
\item\label{bound.eig} $-I_N\preceq C\preceq I_N$ and $\|\text{vec}(C)\|_{\ell_1}=\frac{|g_{0,0}|+\dots+|g_{m,0}|}{c_g}$.
\item\label{sparsi.A} The number of nonzero entries on each row of $A_i$, $i=1,\dots,M$, is no larger than $\binom{n+k}n$. 
\item\label{sparsi.C} The maximum number of nonzero entries on a row of $C$ is $1$.
\item\label{full.rank.A} $A$ has rank $M$, $\sigma_{\min}(A)=1$, and $-I_N\preceq A_i\preceq I_N$, for $i=1,\dots,M$
\item\label{bound.opt.val.sdp} $\lambda^\star\in[\lambda_{\min},\lambda_{\max}]$.
\item\label{equiva.prob} Problem \eqref{eq:sdp.relaxation.cons.pop}  is equivalent to \eqref{eq:sdp}.
\item\label{no.tr.constr} 
\begin{equation}\label{eq:sdp.trace.one.cons.pop.with.GS}
\lambda^\star=\inf\{\tr( C X)\,:\,X\in\mathbb S^{N}_+\,,\,\tr( A_i X) =b_i\,,\,i=1,\dots,M\}\,.
\end{equation}
\end{enumerate}
\end{lemma}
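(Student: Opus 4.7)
The plan is to prove the seven items by combining the explicit structure of $\check C$ and the $\check A_i$ (already described in Lemma \ref{lem:properties.check.C} and Lemma \ref{lem:properties.data.A.cons.exact}) with the bookkeeping effect of the Gram--Schmidt step performed in Step \ref{step:Gram-Schmidt} of Algorithm \ref{alg:con.pop}. Items (i) and (iii) are structural statements about $C$ alone and can be handled first; items (ii) and (iv) concern the $A_i$ after Gram--Schmidt; items (v)--(vii) record the equivalence with the standard form \eqref{eq:sdp} together with the relation between $\lambda^\star$ and the search interval.

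For (i) and (iii), I would use the explicit diagonal form \eqref{eq:def.tilde.C} of $\tilde C = B_g^{(0)}$, whose only nonzero entries are $g_{0,0},\dots,g_{m,0}$ on the diagonal. Since $C=\check C/c_g=\diag(\tilde C,0)/c_g$ and $c_g=\sqrt{1+g_{1,0}^2+\cdots+g_{m,0}^2}\ge \max_i|g_{i,0}|$, every diagonal entry of $C$ has absolute value at most $1$, yielding $-I_N\preceq C\preceq I_N$, while $\|\text{vec}(C)\|_{\ell_1}=\sum_{i=0}^m|g_{i,0}|/c_g$. The diagonal structure also ensures at most one nonzero entry per row, proving (iii).

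For (iv), I would exploit that the Gram--Schmidt process applied to the linearly independent family $\check A_1,\dots,\check A_M$ (independence by Lemma \ref{lem:properties.check.C}(i)) produces an orthonormal family in $\mathbb{S}^N$ with respect to the trace inner product, so $\text{vec}(A_i)^\top\text{vec}(A_j)=\tr(A_iA_j)=\delta_{ij}$; consequently $AA^\top=I_M$ and $\sigma_{\min}(A)=1$. The bound $-I_N\preceq A_i\preceq I_N$ then follows from $\|A_i\|_2\le\|A_i\|_F=1$, valid for symmetric $A_i$. For (ii), I would note that each $\check A_j=\diag(B_g^{(\gamma_j)},0)$ is block-diagonal with blocks of sizes $\binom{n+k-d_i}{n}$, the largest being $\binom{n+k}{n}$ (corresponding to $d_0=0$); since Gram--Schmidt writes each $A_i$ as a linear combination of $\check A_1,\dots,\check A_i$, the $A_i$ inherit the same block-diagonal pattern, so each row of $A_i$ has at most $\binom{n+k}{n}$ nonzero entries.

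Item (v) is immediate from $\underline\lambda\le\lambda_k\le f^\star\le f(a)$ and $c_g\ge 1$. For (vi) and (vii), I would invoke Lemma \ref{lem:equi.sdp.relax.cons} to reduce \eqref{eq:sdp.relaxation.cons.pop} to \eqref{eq:sdp.trace.one.cons.pop.2}, then rescale the objective by $1/c_g$. The Gram--Schmidt reindexing replaces $\{\tr(\check A_jX)=\check b_j\}_{j=1}^M$ by the equivalent orthonormalized system $\{\tr(A_iX)=b_i\}_{i=1}^M$: since the change of basis matrix $R$ satisfying $\check A=RA$ and $\check b=Rb$ is invertible and lower triangular, the two affine feasible sets coincide, giving (vi); the removal of the trace-one constraint for (vii) comes directly from the second equivalence in Lemma \ref{lem:equi.sdp.relax.cons}. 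The one genuinely non-mechanical point, and the main obstacle I anticipate, is spelling out carefully that the Gram--Schmidt transformation preserves the affine feasible set and that the right-hand sides $b_i$ produced by Algorithm \ref{alg:Gram-Schmidt} satisfy $\check b = R b$ for the same $R$; once this is made precise, the remaining items follow by the structural bookkeeping above.
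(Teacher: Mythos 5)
Your proposal is correct and follows essentially the same route as the paper: items (i)--(iv) from the structural facts in Lemma \ref{lem:properties.check.C} and Lemma \ref{lem:properties.data.A.cons.exact} plus the orthonormality provided by Lemma \ref{lem:Gram-Schmidt}, items (vi)--(vii) from Lemma \ref{lem:equi.sdp.relax.cons} plus the Gram--Schmidt equivalence, and item (v) from the monotone chain $\underline\lambda\le\lambda_k\le f^\star\le f(a)$. The only minor differences are cosmetic: for (i) you bound each diagonal entry of $C$ by $1$ via $c_g\ge\max_i|g_{i,0}|$, whereas the paper normalizes by $\|\check C\|_F=c_g$ and uses spectral-norm $\le$ Frobenius-norm; and the step you flag as the ``main obstacle''---that Gram--Schmidt preserves the affine feasible set---is exactly the content of Lemma \ref{lem:Gram-Schmidt} (the change-of-basis matrix between $(\check A_i,\check b_i)$ and $(A_i,b_i)$ is lower triangular and invertible, as you argue), so there is no gap to fill.
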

\begin{proof}
\eqref{full.rank.A} By Lemma \ref{lem:properties.check.C} \eqref{linear.inde.mat.data}, the matrices $\check A_1,\dots,\check A_M$ are linearly independent in $\mathbb S^N$.
By Lemma \ref{lem:Gram-Schmidt}, we have $\tr( A_i A_j)=\delta_{ij}$, which implies $-I_N\preceq A_i\preceq I_N$, $A$ has rank $M$, and $\sigma_{\min}(A)=1$.

\eqref{sparsi.A} Since each $A_i$ is a linear combination of $\check A_j$'s, it retains the block-diagonal structure of $\check A_j$.
Recall that $\check A_j=\diag(  B_g^{(\gamma_j)},0)$.
By the block-diagonal structure of $  B_g^{(\gamma)}$ in \eqref{eq.tilde.B.gamma}, for $i=0,\dots,m$, the $i$-th block of $  B_g^{(\gamma)}$ has size $\binom{n+k-d_i}n$.
Thus, each row of $A_i$ contains no more than $\binom{n+k}n$ nonzero entries.

\eqref{bound.eig} By Lemma \ref{lem:properties.check.C} \eqref{frobenious.norm.C}, we have $C=\frac{\check C}{c_g}=\frac{\check C}{\|\check C\|_F}$.
Thus,  $-I_N\preceq C\preceq I$ and by \eqref{eq:def.tilde.C}, 
\begin{equation}
\|\text{vec}(C)\|_{\ell_1}=\frac{\|\text{vec}(\check C)\|_{\ell_1}}{c_g}=\frac{|g_{0,0}|+\dots+|g_{m,0}|}{c_g}=\frac{|g_{0,0}|+\dots+|g_{m,0}|}{c_g}\,.
\end{equation}

\eqref{sparsi.C} follows from Lemma \ref{lem:properties.check.C} \eqref{sparse.C} and $C=\check C/c_g$.

\eqref{equiva.prob} By Lemma \ref{lem:equi.sdp.relax.cons}, problem \eqref{eq:sdp.relaxation.cons.pop} is equivalent to \eqref{eq:sdp.trace.one.cons.pop.2}.
Additionally, by Lemma \ref{lem:Gram-Schmidt}, we have
 \begin{equation}\label{eq:same.polytope}
\tr( \check A_i X)=\check b_i\,,\,i=1,\dots,M \Leftrightarrow \tr(  A_i X)= b_i\,,\,i=1,\dots,M
\end{equation}
Therefore, problem \eqref{eq:sdp.trace.one.cons.pop.2} is equivalent to
\begin{equation}\label{eq:sdp.trace.one.cons.pop.3}
\begin{array}{rl}
f_0-\lambda_k=\inf\limits_{X} & \tr( \check C X)\\
\text{s.t.}& X\in\mathbb S_+^{N}\,,\,\tr(X)= 1\,,\\
&\tr( A_iX)=b_i\,,\,i=1,\dots,M\,.
\end{array}
\end{equation}
Since $C=\check C/c_g$, this problem can be written as
\begin{equation}\label{eq:sdp.trace.one.cons.pop.4}
\begin{array}{rl}
\frac{f_0-\lambda_k}{c_g}=\inf\limits_{X} & \tr(C X)\\
\text{s.t.}& X\in\mathbb S_+^{N}\,,\,\tr(X)= 1\,,\\
&\tr( A_i X)=b_i\,,\,i=1,\dots,M\,.
\end{array}
\end{equation}
This matches the form of problem \eqref{eq:sdp}, where $\lambda^\star=\frac{f_0-\lambda_k}{c_g}$.

\eqref{bound.opt.val.sdp} Since $\lambda_{\min}=\frac{f_0-f(a)}{c_g}$ for $a\in S(g)$, and $\lambda_{\max}=\frac{f_0-\underline \lambda}{c_g}$ with $\underline \lambda\le \lambda_k\le f^\star\le f(a)$, we conclude that $\lambda^\star\in[\lambda_{\min},\lambda_{\max}]$.

\eqref{no.tr.constr} follows from the last statement of Lemma \ref{lem:equi.sdp.relax.cons}.
\end{proof}

\begin{theorem}\label{theo:accuray.with.gram.schimidt.ineqpop}
Let $f\in\R[x]_{2d}$ and $g_i\in\R[x]_{2d_i}$, for $i=0,\dots,m$, with $g_0=1$.
Suppose $0\in S(g)$, Assumption \ref{ass:cond.bounded.trace.2} holds, and Slater's condition holds for the  sum-of-squares relaxation \eqref{eq:sdp.relaxation.cons.pop}, i.e., there exists a feasible solution  $G_{n,k}\succ 0$ for problem \eqref{eq:sdp.relaxation.cons.pop} with condition number $\kappa $.
Let $\lambda_k^{(\varepsilon)}$ be the value returned by Algorithm \ref{alg:con.pop}.
Then, the following bound holds:
\begin{equation}
0\le \lambda_k^{(\varepsilon)}-\lambda_k\le \varepsilon  \left[2+(1+\sqrt{2} \kappa ) \sqrt{\binom{n+2k}n}\ \right]\sum_{i=0}^m|g_{i,0}|\,.
\end{equation}
\end{theorem}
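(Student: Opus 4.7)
The plan is to convert the SOS relaxation \eqref{eq:sdp.relaxation.cons.pop} into the standard SDP form analysed by Algorithm~\ref{alg:Binary.search.HU}, check that all the preconditions of the accuracy bound in Lemma~\ref{lem:convergence.sdp.2}(ii) are met, and then rescale back.

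First I would invoke Lemma~\ref{lem:properties.data.A.cons}: set $\lambda^\star=(f_0-\lambda_k)/c_g$, and read off that $-I_N\preceq C\preceq I_N$, $-I_N\preceq A_i\preceq I_N$, that $\sigma_{\min}(A)=1$ with $A$ of full rank $M$, that $\|\mathrm{vec}(C)\|_{\ell_1}=(\sum_{i=0}^m|g_{i,0}|)/c_g$, and that problem~\eqref{eq:sdp.relaxation.cons.pop} is equivalent to the standard SDP~\eqref{eq:sdp} with data $(C,A_i,b_i)$, as well as to its no-trace variant~\eqref{eq:sdp.trace.one.cons.pop.with.GS}. Next, since $0\in S(g)$ gives $\lambda_k\le f^\star\le f(0)=f_0$, we have $\lambda^\star\ge 0$. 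Because $\lambda_{\min}\le\lambda^\star\le\lambda_{\max}$ by Lemma~\ref{lem:properties.data.A.cons}\eqref{bound.opt.val.sdp}, Algorithm~\ref{alg:Binary.search.HU} is applied within a valid search window.

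The key step is verifying Slater's condition for the standard-form primal~\eqref{eq:sdp.trace.one.cons.pop.with.GS} so that Lemma~\ref{lem:convergence.sdp.2}(ii) applies. Exactly as in the proof of Theorem~\ref{theo:accuarcy.approximate.val.uncons.2}, I lift the given Slater point $G_{n,k}\succ 0$ of \eqref{eq:sdp.relaxation.cons.pop} to $U_{n,k}=\diag(G_{n,k},\lambda_{\min}(G_{n,k}))\succ 0$ in $\mathbb S^N$, which has the same condition number $\kappa$. Feasibility $\tr(\check A_i U_{n,k})=\check b_i$ follows from the block-diagonal structure of the $\check A_i=\diag(B_g^{(\gamma_i)},0)$ and the fact that $G_{n,k}$ is feasible for \eqref{eq:sdp.relaxation.cons.pop}; by the Gram--Schmidt equivalence~\eqref{eq:same.polytope} this gives $\tr(A_i U_{n,k})=b_i$ as well, so $U_{n,k}$ is a Slater point for \eqref{eq:sdp.trace.one.cons.pop.with.GS}. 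Strong duality then holds between \eqref{eq:sdp.trace.one.cons.pop.with.GS} and \eqref{eq:dual.no.tr.cons} with an optimal dual $\xi^\star$.

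Applying Lemma~\ref{lem:convergence.sdp.2}(ii) with $M\le\binom{n+2k}{n}$, $\sigma_{\min}(A)=1$, and $\|\mathrm{vec}(C)\|_{\ell_1}=(\sum_{i=0}^m|g_{i,0}|)/c_g$ yields
\begin{equation}
0\le \lambda^\star-\underline{\lambda}_T\le \varepsilon\!\left[2+(1+\sqrt{2}\,\kappa)\,\frac{\sum_{i=0}^m|g_{i,0}|}{c_g}\sqrt{\binom{n+2k}{n}}\,\right].
\end{equation}
Finally, unwinding the rescaling $\lambda_k^{(\varepsilon)}=f_0-c_g\underline{\lambda}_T$ and $\lambda_k=f_0-c_g\lambda^\star$ gives $\lambda_k^{(\varepsilon)}-\lambda_k=c_g(\lambda^\star-\underline{\lambda}_T)$. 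Using the elementary inequality $c_g=\sqrt{1+\sum_{i=1}^m g_{i,0}^2}\le 1+\sum_{i=1}^m|g_{i,0}|=\sum_{i=0}^m|g_{i,0}|$ to bound the $2c_g$ term, the stated inequality follows. The only real obstacle is constructing and verifying the Slater point in the lifted problem, which is handled exactly as in the unconstrained case.
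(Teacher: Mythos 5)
Your proposal is correct and follows essentially the same route as the paper's proof: the same lifting $U_{n,k}=\diag(G_{n,k},\lambda_{\min}(G_{n,k}))$ to verify Slater's condition for the reformulated SDP, the same application of Lemma~\ref{lem:convergence.sdp.2}(ii) with the bounds $\sigma_{\min}(A)=1$ and $\|\mathrm{vec}(C)\|_{\ell_1}=b_g/c_g$ from Lemma~\ref{lem:properties.data.A.cons}, and the same final rescaling via $c_g\le b_g=\sum_{i=0}^m|g_{i,0}|$. The only cosmetic difference is that you use $M\le\binom{n+2k}{n}$ where the paper substitutes $M=\binom{n+2k}{n}-1$, which has no effect on the stated bound.
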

\begin{proof}
Let $\lambda^\star=\frac{f_0-\lambda_k}{c_g}$. Since Slater's condition holds for the  sum-of-squares relaxation \eqref{eq:sdp.relaxation.cons.pop}, strong duality is satisfied for problems  \eqref{eq:sdp.relaxation.cons.pop}-\eqref{eq:sdp.relaxation.cons.pop.dual}.
By Lemma \ref{lem:properties.data.A.cons} \eqref{no.tr.constr}, strong duality also holds for problem \eqref{eq:sdp.trace.one.cons.pop.with.GS} and its dual \eqref{eq:dual.no.tr.cons} .

Define $U_{n,k}=\diag(G_{n,k},\lambda_{\min}(G_{n,k}))$. Since $\lambda_{\min}(G_{n,k})>0$, it follows that $U_{n,k}\succeq 0$ and has the same condition number $\kappa $.
Moreover, $\tr( \check A_i U_{n,k})=\tr( B^{(\gamma_i)} G_{n,k})=f_{\gamma_i}=\check b_i$ and by \eqref{eq:same.polytope},  $\tr( A_i U_{n,k})=b_i$.
This implies $U_{n,k}$ is a feasible solution for problem \eqref{eq:sdp.trace.one.cons.pop.with.GS}, and thus Slater's condition holds for  problem \eqref{eq:sdp.trace.one.cons.pop.with.GS}.
From Lemma \ref{lem:properties.data.A.cons} \eqref{full.rank.A}, $A$ has full rank $M$.
Since $0\in S(g)$, we have $\lambda_k\le f^\star\le f(0)=f_0$, which implies $\lambda^\star=\frac{f_0-\lambda_k}{c_g}\ge 0$.
By Lemma \ref{lem:properties.data.A.cons} \eqref{bound.eig}, \eqref{full.rank.A} and \eqref{bound.opt.val.sdp}, $I_N\succeq C\succeq -I_N$, $I_N\succeq A_i\succeq -I_N$, for $i=1,\dots,M$, and  $\lambda^\star\in [\lambda_{\min},\lambda_{\max}]$.
Using Lemma \ref{lem:convergence.sdp.2} (ii), we obtain
\begin{equation}
0\le \lambda^\star-\overline \lambda_T\le \varepsilon\left[2+(1+\sqrt{2} \kappa ) \|\text{vec}(C)\|_{\ell_1} \sigma_{\min}(A)^{-1}\sqrt{M} \right]\,.
\end{equation}
Applying properties of $C$ and $A$ from Lemma \ref{lem:properties.data.A.cons} \eqref{bound.eig}  and  \eqref{full.rank.A},  this simplifies to
\begin{equation}
0\le \lambda^\star-\overline \lambda_T\le \varepsilon\left[2+(1+\sqrt{2} \kappa )\frac{b_g}{c_g}  \sqrt{M}  \right]\,,
\end{equation}
where $b_g=|g_{0,0}|+\dots+|g_{m,0}|$.
Since $\lambda_k^{(\varepsilon)}=f_0-\underline\lambda_T c_g$ and $\lambda_k=f_0-\lambda^\star c_g$, we have 
\begin{equation}
\begin{array}{rl}
0\le \lambda_k^{(\varepsilon)}-\lambda_k=c_g(\lambda^\star-\underline \lambda_T)\le &\varepsilon\left[2c_g+(1+\sqrt{2} \kappa )b_g  \sqrt{M}  \right]\\
\le& \varepsilon b_g \left[2+(1+\sqrt{2} \kappa ) \sqrt{M} \right]\,.
\end{array}
\end{equation}
The last inequality is due to $c_g\le b_g$.
Finally, substituting $M= \binom{n+2k}n-1$ completes the proof.
\end{proof}

\begin{lemma}[Complexity]
Let $k\ge \max\{d,d_0,\dots,d_m\}$.
To run Algorithm \ref{alg:con.pop}, the classical computer requires
\begin{equation}
O\left(\sum_{i=0}^m \binom{n+k-d_i}n^2+\left[\binom{n+k}n\binom{n+2k}n\sum_{i=0}^m \binom{n+k-d_i}n+\left(\sum_{i=0}^m \binom{n+k-d_i}n\right)^\omega\right] \frac{1}{\varepsilon^2}\right)
\end{equation}
operations, where $\omega\approx2.373$, while the quantum computer requires
\begin{equation}
\footnotesize
O\left(\sum_{i=0}^m \binom{n+k-d_i}n^2+\binom{n+k}{n}\left[\binom{n+2k}{n}^{1/2}+\left(\sum_{i=0}^m \binom{n+k-d_i}n\right)^{1/2}\frac{1}{\varepsilon}\right]\frac{1}{\varepsilon^4}\right)
\end{equation}
operations.
\end{lemma}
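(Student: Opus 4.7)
The plan is to decompose the runtime of Algorithm~\ref{alg:con.pop} into two independent contributions: the one-time Gram--Schmidt orthogonalization performed in Step~1(d), whose cost does not depend on $\varepsilon$, and the binary search with Hamiltonian updates carried out in Step~\ref{step:HU.GS}, whose cost is inherited from Lemma~\ref{lem:complex.binary.search}.

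For the first contribution, I would exploit the block-diagonal structure of each $\check A_i = \diag(B_g^{(\gamma_i)},0)$. Since $B_g^{(\gamma_i)}$ is block-diagonal with the $j$-th block of size $\binom{n+k-d_j}{n}$, all the matrices $\check A_i$ live in a common vectorized subspace of dimension $L = O\!\left(\sum_{i=0}^m \binom{n+k-d_i}{n}^2\right)$. By arguing that the Gram--Schmidt procedure can be implemented using linear-algebra operations restricted to this block-diagonal support (skipping the trailing zero block and all inter-block positions), I would bound the orthogonalization cost by $O\!\left(\sum_{i=0}^m \binom{n+k-d_i}{n}^2\right)$, producing the $\varepsilon$-independent term in both the classical and quantum bounds.

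For the second contribution, the plan is to read off the parameters entering Lemma~\ref{lem:complex.binary.search} from Lemma~\ref{lem:properties.data.A.cons}. Specifically, the number of affine equality constraints is $M = \binom{n+2k}{n}-1$, the size of the matrix variable is $N = 1 + \sum_{i=0}^m \binom{n+k-d_i}{n}$, and the maximum number of nonzero entries per row of $C$ and of any orthogonalized $A_i$ is $s = \binom{n+k}{n}$, where I use parts~(ii) and~(iii) of that lemma together with the observation that $C$ contributes at most one nonzero per row, so the $A_i$ sparsity dominates. Substituting these values into the classical bound $O((MsN+N^\omega)\varepsilon^{-2})$ and the quantum bound $O(s(\sqrt{M}+\sqrt{N}/\varepsilon)\varepsilon^{-4})$, and then adding the $\varepsilon$-independent Gram--Schmidt cost, recovers the stated expressions after straightforward simplification.

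The main obstacle will be the sparsity analysis after Gram--Schmidt: orthogonalization generically destroys the per-row sparsity of the original $\check A_i$, but here the block-diagonal support (inherited from the localizing-matrix structure of $B_g^{(\gamma)}$) is preserved throughout the procedure. This is what pins the per-row sparsity of each orthonormalized $A_i$ at $\binom{n+k}{n}$, matching the size of the largest (moment) block; without this structural observation the value of $s$ entering Lemma~\ref{lem:complex.binary.search} would be much larger, and the announced complexities would fail. Once this structural point is in hand, the remainder of the proof is a routine substitution.
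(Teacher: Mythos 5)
Your decomposition and your treatment of the binary-search step coincide with the paper's proof: the paper likewise splits the cost into the Gram--Schmidt preparation of Step 1(d) and the binary search with Hamiltonian updates, and obtains the $\varepsilon$-dependent terms exactly as you do, by feeding $M=\binom{n+2k}{n}-1$, $N=1+\sum_{i=0}^m\binom{n+k-d_i}{n}$ and the sparsity $s=\binom{n+k}{n}$ (from Lemma~\ref{lem:properties.data.A.cons}, parts (ii)--(iii), whose proof is precisely the block-structure-preservation observation you describe) into Lemma~\ref{lem:complex.binary.search}. That half of your proposal is correct and identical in substance to the paper's.

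The gap is in the $\varepsilon$-independent term. You claim the Gram--Schmidt step costs $O\bigl(\sum_{i=0}^m\binom{n+k-d_i}{n}^2\bigr)$, i.e.\ $O(L)$ where $L$ is the dimension of the vectorized block-diagonal space, on the grounds that all operations can be restricted to the block-diagonal support. That restriction only caps the cost of a \emph{single} inner product or projection at $O(L)$; it does not reduce the \emph{number} of such operations. Algorithm~\ref{alg:Gram-Schmidt} orthogonalizes $M=\binom{n+2k}{n}-1$ matrices, and for each index $j$ it must compute $j-1$ inner products $\tr(\check A_j\hat A_i)$ and the corresponding updates, giving $\Theta(M^2)$ projections of cost up to $O(L)$ each; this is exactly the $O(LM^2)$ bound of Lemma~\ref{lem:complexity.Gram-Schmidt}, which is what the paper's own proof invokes for this step. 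A bound of order $L$ that is independent of $M$ is not attainable by your argument: merely reading the $M$ input matrices already scales with $M$, and the matrices $\check A_i=\diag(B_g^{(\gamma_i)},0)$ are not pairwise orthogonal in general (the localizing blocks built from the $g_{i,\zeta}$ overlap across different exponents $\gamma$), so the pairwise orthogonalization work cannot be skipped. So while your claimed preparation term coincides with the first term printed in the lemma, the justification you give for it fails, and it also does not match what the paper's proof actually establishes for the Gram--Schmidt step, namely $O(LM^2)$; to repair your write-up you would either have to cite Lemma~\ref{lem:complexity.Gram-Schmidt} as the paper does (accepting the extra $M^2$ factor in the preparation cost) or supply a genuinely new argument for a cheaper orthogonalization, which the block-support observation alone does not provide.
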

\begin{proof}
The most computationally expensive parts of Algorithm \ref{alg:con.pop} are the Gram-Schmidt process in Step \ref{step:Gram-Schmidt} and the binary search using Hamiltonian updates in Step \ref{step:HU.GS}.

For the Gram-Schmidt process, by Lemma \ref{lem:complexity.Gram-Schmidt}, both classical and quantum computers have the same complexity of $O(LM^2)$, where
\begin{equation}
L=\frac{1}{2}\sum_{i=0}^m \binom{n+k-d_i}n\left[\binom{n+k-d_i}n+1\right]+1\,.
\end{equation}

For the binary search using Hamiltonian updates, by Lemma \ref{lem:complex.binary.search}, the classical computer requires $O((MsN+N^\omega)\varepsilon^{-2})$ operations, and the quantum computer requires $O(s(\sqrt{M}+\sqrt{N}\varepsilon^{-1})\varepsilon^{-4})$ operations,
where $s$ is the maximum number of nonzero entries in a row of the data matrices $C$ and $A_i$, for $i=1,\dots,M$.
By Lemma \ref{lem:properties.data.A.cons} \eqref{sparsi.A} and \eqref{sparsi.C}, we have $s=\binom{n+k}n$.

Here, $M=\binom{n+2k}n-1$ and $N=\sum_{i=0}^m \binom{n+k-d_i}n+1$.
Hence the result follows.
\end{proof}

\subsection{Quantum algorithms for solving moment relaxations}
\label{app:moment.relax}

\subsubsection{Equality-constrained polynomial optimization over the unit sphere}
\label{sec:mom.relax.eq.cons.sphere}
Let $f,h_j\in\R[x]$, for $j=1,\dots,l$, with $h_1=1-\|x\|_{\ell_2}^2$.
Consider polynomial optimization problem
\begin{equation}\label{eq:pop.eq.constrains}
f^\star=\min\limits_{x\in V(h)} f(x)\,,
\end{equation}
where $V(h)$ is the algebraic variety defined as:
\begin{equation}\label{eq:def.variety}
V(h)=\{x\in\R^n\,:\,h_j(x)=0\,,\,j=1,\dots,l\}\,.
\end{equation}
Define $d=\lceil \deg(f)/2 \rceil$ and $w_j=\lceil \deg(h_j)/2 \rceil$, $j=1,\dots,l$, noting that $w_1=1$.

The moment relaxation of order $k$ for for this problem is:
\begin{equation}\label{eq:mom.relax.pop.eq.constrains}
\begin{array}{rl}
\tau_k=\min\limits_y& L_y(f)\\
\text{s.t.}& y=(y_\alpha)_{\alpha\in\N^n_{2k}}\subset \R\,,\\
&M_k(y)\succeq 0\,,\,y_0=1\,,\\
&M_{k-w_j}(h_jy)=0\,,\,j=1,\dots,l\,,
\end{array}
\end{equation}
where $L_y$ is the Riez linear functional, $M_k(y)$ denotes the moment matrix, and $M_{k-w_j}(h_jy)$ represents the localizing matrix.
 
The relaxation sequence satisfies:
$\tau_k\le \tau_{k+1}\le f^\star$ and $\tau_k\to f^\star$ as $k\to \infty$.

Assume that problem \eqref{eq:mom.relax.pop.eq.constrains} has an optimal solution and that a lower bound $\underline \tau\le \tau_k$ is known.

\begin{algorithm}\label{alg:eq.con.pop.sphere}
Equality constrained polynomial optimization over the unit sphere
\begin{itemize}
\item Input: $\varepsilon>0$, , $\underline \tau\in\R$, $a\in V(h)$,\\
\phantom{1.1cm} coefficients $(f_\alpha)_{\alpha\in\N^n_{2k}}$ of $f\in\R[x]$,\\
\phantom{1.1cm} coefficients $(h_{j,\alpha})_{\alpha\in\N^n_{2w_j}}$ of $h_j\in\R[x]$, for $j=1,\dots,l$.
\item Output: $\tau_k^{(\varepsilon)}\in\R$ and $Z_\varepsilon\in\mathbb S^N$.
\end{itemize}
\begin{enumerate}
\item Convert relaxation to standard semidefinite program: 
\begin{enumerate}
\item Set $u_\alpha=\binom{k}{(k-|\alpha|,\alpha)}$, $\alpha\in \N^n_k$.
\item For $\gamma\in\N^n_{2k}$, let $\alpha_\gamma,\beta_\gamma\in\N^n_k$ be such that $\alpha_\gamma\le \beta_\gamma$ and $\alpha_\gamma+\beta_\gamma=\gamma$.
\item For $\gamma\in\N^n_{2k}$, for $\alpha,\beta\in\N^n_{k}$ satisfying $\alpha\le \beta$, $(\alpha,\beta)\ne (\alpha_\gamma,\beta_\gamma)$, and $\alpha+\beta=\gamma$, let $\tilde A^{(\alpha,\beta,\gamma)}=(\tilde A^{(\alpha,\beta,\gamma)}_{\xi,\zeta})_{\xi,\zeta\in\N^n_k}$ be the matrix defined by 
\begin{equation}\label{eq:def.mat.A.alp.beta.gam}
\tilde A^{(\alpha,\beta,\gamma)}_{\xi,\zeta}=
\begin{cases}
\frac{1}{u_{\xi}^{1/2}u_{\zeta}^{1/2}}&\text{if }(\xi,\zeta)=(\alpha_\gamma,\beta_\gamma)\,,\, \alpha_\gamma=\beta_\gamma\,,\\
\frac{1}{2u_{\xi}^{1/2}u_{\zeta}^{1/2}}&\text{if }(\xi,\zeta)\in\{(\alpha_\gamma,\beta_\gamma),(\alpha_\gamma,\beta_\gamma)\}\,,\,\alpha_\gamma\ne \beta_\gamma \,,\\
-\frac{1}{u_{\xi}^{1/2}u_{\zeta}^{1/2}}&\text{if }(\xi,\zeta)=(\alpha,\beta)\,,\, \alpha=\beta\,,\\
-\frac{1}{2u_{\xi}^{1/2}u_{\zeta}^{1/2}}&\text{if }(\xi,\zeta)\in\{(\alpha,\beta),(\alpha,\beta)\}\,,\,\alpha\ne \beta \,,\\
0&\text{else}.
\end{cases}
\end{equation}
\item For $j=1,\dots,l$, for $\alpha\in\N^n_{2(k-w_j)}$, let $\tilde A^{(j,\alpha)}=(\tilde A^{(j,\alpha)}_{\xi,\zeta})_{\xi,\zeta\in\N^n_k}$ be the matrix defined by 
\begin{equation}\label{eq:mat.A.j.alp}
\tilde A^{(j,\alpha)}_{\xi,\zeta}=
\begin{cases}
\frac{  h_{j,\gamma-\alpha}}{2u_{\xi}^{1/2}u_{\zeta}^{1/2}}&\text{if }(\xi,\zeta)\in \left\{(\alpha_\gamma,\beta_\gamma),(\beta_\gamma,\alpha_\gamma)\,:\,\alpha_\gamma\ne \beta_\gamma\,,\right.\\
&\qquad\qquad\qquad\qquad\qquad\qquad\quad   \left.\gamma\in\alpha+\N^n_{2w_j}\right\}\,,\\
\frac{ h_{j,\gamma-\alpha}}{u_{\xi}^{1/2}u_{\zeta}^{1/2}}&\text{if }(\xi,\zeta)\in \{(\alpha_\gamma,\beta_\gamma),\,:\,\alpha_\gamma= \beta_\gamma\,,\,\gamma\in\alpha+\N^n_{2w_j}\}\,,\\
0&\text{else.}
\end{cases}
\end{equation}
\item Let $\tilde A^{(0)}=(\tilde A^{(0)}_{\alpha,\beta})_{\alpha,\beta\in\N^n_k}$ be the matrix defined by 
\begin{equation}\label{eq:mat.A.0}
\tilde A^{(0)}_{\alpha,\beta}=
\begin{cases}
2^k&\text{if }\alpha=\beta=0\,,\\
0&\text{else}.
\end{cases}
\end{equation}
\item Let $\tilde C=(\tilde C_{\alpha,\beta})_{\alpha,\beta\in\N^n_{k}}$ be the matrix defined by 
\begin{equation}\label{eq:def.C}
\tilde C_{\alpha,\beta}=\begin{cases}
 \frac{f_\gamma 2^k}{u_{\alpha}^{1/2}u_{\beta}^{1/2}}&\text{if }(\alpha,\beta)=(\alpha_\gamma,\beta_\gamma)\,,\,\alpha_\gamma=\beta_\gamma\,,\\
  \frac{f_\gamma 2^{k-1}}{u_{\alpha}^{1/2}u_{\beta}^{1/2}}&\text{if }(\alpha,\beta)\in \{(\alpha_\gamma,\beta_\gamma),(\beta_\gamma,\alpha_\gamma)\}\,,\,\alpha_\gamma\ne\beta_\gamma\,,\\
  0&\text{else}.
\end{cases}
\end{equation}
\item Set $N=\binom{n+k}n$.
\item Let $\tilde A_1,\dots,\tilde A_{\tilde M}$ be the following matrices:
\begin{enumerate}
\item $\tilde A^{(\alpha,\beta,\gamma)}$, for $\alpha,\beta\in\N^n_{k}$ satisfying $\alpha\le \beta$, $(\alpha,\beta)\ne (\alpha_\gamma,\beta_\gamma)$, and $\alpha+\beta=\gamma$, for $\gamma\in\N^n_{2k}$;
\item $\tilde A^{(j,\alpha)}$, for $\alpha\in\N^n_{2(k-w_j)}$, for $j=1,\dots,l$;
\item $\tilde A^{(0)}$.
\end{enumerate}
In particular, $\tilde A_{\tilde M}=\tilde A^{(0)}$.
\item Let $\tilde b_1,\dots,\tilde b_{\tilde M}$ be  defined by $\tilde b_i=\begin{cases}
1&\text{if } i=\tilde M\,,\\
0&\text{else}.
\end{cases}$
\item\label{step:gauss.elimi.mom.relax} Run Algorithm \ref{alg:Gaussian.elimination} (Gaussian elimination) to get $\check A_i\in \mathbb S^N$, $\check b_i\in\R$, $i=1,\dots,M$.
\item\label{step:gram-schmidt.mom.relax} Run Algorithm \ref{alg:Gram-Schmidt} (Gram-Schmidt process) to get $A_i\in\mathbb S^N$ and $b_i\in\R$, $i=1,\dots,M$.

\item Set $c=\|\tilde C\|_F$, $C=\frac{\tilde C}{c}$, $\lambda_{\max}=\frac{f(a)}{c}$ and $\lambda_{\min}=\frac{\underline \tau}{c}$.
\end{enumerate}

\item\label{step:Hamiltonian.update.mom.relax} Run Algorithm \ref{alg:Binary.search.HU} (Binary search using Hamiltonian updates) 
to obtain $\underline \lambda_T\in\R$ and $X_\varepsilon\in\mathbb S^N$.

\item Extract approximate results: Set $\tau_k^{(\varepsilon)}=\underline\lambda_T c$ and $Z_\varepsilon=X_\varepsilon$.
\end{enumerate}
\end{algorithm}

\begin{lemma}\label{lem:trce.1.moment.relax}
Let $f\in\R[x]_{2k}$ and $h_j\in\R[x]_{2w_j}$ for $j=1,\dots,l$, with $h_1=1-\|x\|_{\ell_2}^2$.
Let $u_\alpha$ for $\alpha\in \N^n_k$ be the coefficients  generated by Algorithm \ref{alg:eq.con.pop.sphere}.
Define
\begin{equation}\label{eq:change.coordinate}
U=\diag\left(\left(\frac{u_\alpha^{1/2}}{2^{k/2}}\right)_{\alpha\in\N^n_k}\right)\,.
\end{equation}
Then, for any feasible solution $y$ to problem \eqref{eq:mom.relax.pop.eq.constrains}, $\tr(UM_k(y)U)=1$.
\end{lemma}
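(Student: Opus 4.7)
The plan is to first reduce the trace identity to a scalar identity about the diagonal moments $y_{2\alpha}$, and then derive that scalar identity from the sphere constraint $h_1 = 1 - \|x\|_{\ell_2}^2$ via a clever binomial expansion modulo $h_1$.

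First I would observe that $U^2 = \diag((u_\alpha/2^k)_{\alpha \in \N^n_k})$, so that
\[
\tr(U M_k(y) U) \;=\; \tr(U^2 M_k(y)) \;=\; \frac{1}{2^k}\sum_{\alpha \in \N^n_k} u_\alpha \, y_{2\alpha}.
\]
Hence the lemma is equivalent to the scalar identity $\sum_{\alpha \in \N^n_k} u_\alpha \, y_{2\alpha} = 2^k$. The next step is to unpack the localizing constraint $M_{k-1}(h_1 y)=0$: since this matrix vanishes, every entry vanishes, which (via the linearization $v_{k-1} v_{k-1}^\top = \sum_{\gamma} x^{\gamma} B^{(\gamma)}$) gives $L_y(h_1 \cdot q)=0$ for every $q \in \R[x]_{2(k-1)}$. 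This is the only property of $y$ beyond $y_0=1$ that I will use.

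The heart of the argument is the identity $2 = h_1 + (1+\|x\|_{\ell_2}^2)$, which I would raise to the $k$-th power by the binomial theorem:
\[
2^k \;=\; \sum_{j=0}^{k} \binom{k}{j}\, h_1^{\,j}\, \bigl(1+\|x\|_{\ell_2}^2\bigr)^{k-j}.
\]
Applying $L_y$ to both sides, each term with $j\ge 1$ has the form $L_y\!\bigl(h_1 \cdot q_j\bigr)$ with $q_j = h_1^{\,j-1}(1+\|x\|_{\ell_2}^2)^{k-j}$ of total degree $2(j-1)+2(k-j) = 2(k-1)$, so it vanishes by the localizing identity above. Only the $j=0$ term survives, yielding $2^k = L_y\bigl((1+\|x\|_{\ell_2}^2)^k\bigr)$. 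Expanding by the binomial and multinomial theorems gives
\[
(1+\|x\|_{\ell_2}^2)^k \;=\; \sum_{j=0}^{k}\binom{k}{j}\!\!\sum_{\substack{\alpha\in\N^n\\ |\alpha|=j}}\!\binom{j}{\alpha}\, x^{2\alpha}
\;=\; \sum_{\alpha\in\N^n_k} u_\alpha\, x^{2\alpha},
\]
using the identity $\binom{k}{j}\binom{j}{\alpha} = \binom{k}{(k-|\alpha|,\alpha_1,\dots,\alpha_n)} = u_\alpha$ when $|\alpha|=j$. Applying $L_y$ termwise gives $\sum_{\alpha\in\N^n_k} u_\alpha y_{2\alpha} = 2^k$, which combined with the first step completes the proof.

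The only step that requires a little care is the degree counting in the binomial expansion: one has to verify that each term $h_1^{\,j-1}(1+\|x\|_{\ell_2}^2)^{k-j}$ has degree at most $2(k-1)$ for $j\ge 1$, so that the localizing equality $L_y(h_1\cdot q)=0$ applies. This is straightforward since $\deg h_1 = 2$ and $\deg(1+\|x\|_{\ell_2}^2) = 2$. Everything else is a direct computation, so I do not anticipate any genuine obstacle.
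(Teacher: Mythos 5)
Your proof is correct and follows essentially the same strategy as the paper: reduce to the scalar identity $\sum_{\alpha\in\N^n_k}u_\alpha y_{2\alpha}=2^k$, establish a polynomial identity of the form $2^k=(1+\|x\|_{\ell_2}^2)^k+h_1\cdot q$ with $\deg q\le 2(k-1)$, apply $L_y$ and kill the $h_1 q$ term via $M_{k-1}(h_1y)=0$, then expand $(1+\|x\|_{\ell_2}^2)^k$ by the multinomial theorem. The only cosmetic difference is how the key identity is reached: you raise $2=h_1+(1+\|x\|_{\ell_2}^2)$ to the $k$-th power by the binomial theorem and group the $j\ge 1$ terms into $h_1\cdot q$, whereas the paper uses the telescoping factorization $a^k-b^k=(a-b)\sum_{j=0}^{k-1}a^{k-1-j}b^j$ with $a=2$, $b=1+\|x\|_{\ell_2}^2$ to write $q$ explicitly as $\sum_{j=0}^{k-1}2^{k-1-j}(1+\|x\|_{\ell_2}^2)^j$; these yield the same polynomial, and your degree check for the $j\ge 1$ terms is exactly the degree check the paper needs for $p$.
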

\begin{proof}
Using the identity $a^k-b^k=(a-b)\sum_{j=0}^k a^{k-1-j}b^j$ with $a=2$ and $b=1+\|x\|_{\ell_2}^2$, we derive
\begin{equation}\label{eq:equality.pol}
2^k=(1+\|x\|_{\ell_2}^2)^k+(1-\|x\|_{\ell_2}^2)p\,,
\end{equation}
where $p=\sum_{j=0}^{k-1} 2^{k-1-j}(1+\|x\|_{\ell_2}^2)^j\in\R[x]_{2(k-1)}$.
For $\alpha=(\alpha_1,\dots,\alpha_n)$ and $\bar\alpha=(\alpha_0,\alpha)$, we expand $(1+\|x\|_{\ell_2}^2)^k$  as follows:
\begin{equation}
\begin{array}{rl}
(1+\|x\|_{\ell_2}^2)^k=&(1+x_1^2+\dots+x_n^2)^k\\[10pt]
=& \sum\limits_{\arraycolsep=1.4pt\def\arraystretch{.7}
\begin{array}{cc}
\scriptstyle \bar\alpha\in\N^{n+1}\\
\scriptstyle |\bar\alpha|=k
\end{array}
} \binom{k}{\bar \alpha} 1^{\alpha_0} x_1^{2\alpha_1}\dots x_n^{2\alpha_n}=\sum\limits_{\alpha\in\N^n_k} \binom{k}{(k-|\alpha|,\alpha)} x^{2\alpha} \,.
\end{array}
\end{equation}
With $u_\alpha=\binom{k}{(k-|\alpha|,\alpha)}$, , it follows that $(1+\|x\|_{\ell_2}^2)^k=\sum_{\alpha\in\N^n_k}u_\alpha x^{2\alpha}$.
Substituting this back into the equality \eqref{eq:equality.pol}, we obtain:
\begin{equation}\label{eq:rep.eq.pol}
2^k=\sum_{\alpha\in\N^n_k}u_\alpha x^{2\alpha}+h_1p
\end{equation}
Now let $y$ be a feasible solution to problem \eqref{eq:mom.relax.pop.eq.constrains}.
Applying the linear operator $L_y$ for both sides, we get:
\begin{equation}
L_y(2^k)=L_y\left(\sum_{\alpha\in\N^n_k}u_\alpha x^{2\alpha}\right)+L_y(h_1p)\,.
\end{equation}
Since $L_y(2^k)=2^ky_0=2^k$, and $L_y(\sum_{\alpha\in\N^n_k}u_\alpha x^{2\alpha})=\sum_{\alpha\in\N^n_k}u_\alpha y_{2\alpha}$, it remains to evaluate $L_y(h_1p)$.
Note that $L_y(h_1p)=0$ because $M_{k-w_j}(h_jy)=0$ and $\deg(p)=2(k-u_1)$.
Thus, $2^k=\sum_{\alpha\in\N^n_k}u_\alpha y_{2\alpha}$.
Finally, by the definition of $U$, 
\begin{equation}
\tr(UM_k(y)U)=\sum_{\alpha\in\N^n_k}\left(\frac{u_\alpha^{1/2}}{2^{k/2}}\right)^2y_{2\alpha}=\frac{1}{2^k}\sum_{\alpha\in\N^n_k}u_\alpha y_{2\alpha}\,.
\end{equation}
Since $2^k=\sum_{\alpha\in\N^n_k}u_\alpha y_{2\alpha}$, it follows that $\tr(UM_k(y)U)=1$, completing the proof.
\end{proof}

\begin{lemma}\label{lem:properties.tilde.prob.eq.pop}
Let $f\in\R[x]_{2k}$ and $h_j\in\R[x]_{2w_j}$ for $j=1,\dots,l$, with $h_1=1-\|x\|_{\ell_2}^2$.
Let $\tilde C$, $\tilde A_i$, $\tilde b_i$, $i=1,\dots,\tilde M$, $N$ be generated by Algorithm \ref{alg:eq.con.pop.sphere}.
Then:
\begin{enumerate}[(i)]
\item\label{equival.prob} Problem \eqref{eq:mom.relax.pop.eq.constrains} is equivalent to the SDP
\begin{equation}\label{eq:sdp.moment.relax.tr.1}
\begin{array}{rl}
\tau_k=\inf\limits_Z&\tr( \tilde CZ)\\
\text{s.t.}&Z\in\mathbb S^N_+\,,\,\tr(Z)=1\,,\\
&\tr( \tilde A_i Z) =\tilde b_i\,,\,i=1,\dots,\tilde M\,.	
\end{array}
\end{equation}
\item\label{remove.trace.cons} The constraint $tr(Z)=1$ in the above SDP can be omitted without changing the optimal value:
\begin{equation}\label{eq:sdp.moment.relax.tr.1.2}
\begin{array}{rl}
\tau_k=\inf\limits_Z&\tr( \tilde CZ)\\
\text{s.t.}&Z\in\mathbb S^N_+\,,\\
&\tr( \tilde A_i Z) =\tilde b_i\,,\,i=1,\dots,\tilde M\,.	
\end{array}
\end{equation}
Moreover, if $y$ is a feasible solution for problem \eqref{eq:mom.relax.pop.eq.constrains}, then $UM_k(y)U$ is a feasible solution for \eqref{eq:sdp.moment.relax.tr.1.2}, where $U$ is defined as in \eqref{eq:change.coordinate}.
\item\label{norm.mat.C}  $\|\tilde C\|_F\le 2^k \|f\|_{\ell_2}$ and $\|\text{vec}(\tilde C)\|_{\ell_1}\le 2^k \|f\|_{\ell_1}$.
\item\label{bound.num.aff.cons} If $k\ge \max\{w_1,\dots,w_l\}$, then the total number of constraints $\tilde M$ is 
\begin{equation}\label{eq:num.tilde.M}
\tilde M=1+\sum_{j=1}^l \binom{n+2(k-w_j)}n+\frac{1}{2}\binom{n+k}n\left[\binom{n+k}n+1\right]-\binom{n+2k}n\,,
\end{equation}
and it satisfies: 
\begin{equation}
\tilde M\le 1+ (l-1) \binom{n+2k}n+\frac{1}{2}\binom{n+k}n\left[\binom{n+k}n+1\right]\,.
\end{equation}
\end{enumerate}
\end{lemma}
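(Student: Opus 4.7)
The overall plan is to treat Lemma \ref{lem:trce.1.moment.relax} as the bridge between the moment relaxation \eqref{eq:mom.relax.pop.eq.constrains} and the standard-form SDP \eqref{eq:sdp.moment.relax.tr.1}, via the change of variable $Z = U M_k(y) U$, where $U = \diag((u_\alpha^{1/2}/2^{k/2})_{\alpha \in \N^n_k})$. For \eqref{equival.prob} I will set up a bijection $y \leftrightarrow Z$ between feasible sets and show that it preserves the objective. In the forward direction, given $y$ feasible, I take $Z_{\alpha,\beta} = \tfrac{u_\alpha^{1/2} u_\beta^{1/2}}{2^k}\, y_{\alpha+\beta}$, which inherits $Z \succeq 0$ from $M_k(y) \succeq 0$ and has $\tr(Z) = 1$ by Lemma \ref{lem:trce.1.moment.relax}. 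The content then lies in verifying that the three families of matrices in \eqref{eq:def.mat.A.alp.beta.gam}--\eqref{eq:mat.A.0} encode precisely the correct constraints: expanding $\tr(\tilde A^{(\alpha,\beta,\gamma)} Z)$ against the pre-factor $u_\alpha^{1/2} u_\beta^{1/2}/2^k$ in $Z$, the scalings $1/(u_\xi^{1/2} u_\zeta^{1/2})$ in $\tilde A^{(\alpha,\beta,\gamma)}$ cancel exactly, reducing to $y_{\alpha+\beta} - y_{\alpha_\gamma+\beta_\gamma} = 0$; similarly $\tr(\tilde A^{(j,\alpha)} Z)$ reproduces the $\alpha$-th independent entry of $M_{k-w_j}(h_j y)$, and $\tr(\tilde A^{(0)} Z) = y_0$. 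An analogous calculation gives $\tr(\tilde C Z) = L_y(f)$, which is the reason for the particular coefficients $f_\gamma 2^k / (u_\alpha^{1/2} u_\beta^{1/2})$ in \eqref{eq:def.C}. For the reverse direction, given $Z$ feasible, I will set $y_\gamma := 2^k Z_{\alpha_\gamma, \beta_\gamma}/(u_{\alpha_\gamma}^{1/2} u_{\beta_\gamma}^{1/2})$; the symmetry constraints guarantee the assignment is independent of the chosen pair summing to $\gamma$, so $M_k(y) = U^{-1} Z U^{-1} \succeq 0$, and the remaining constraints translate back to feasibility in \eqref{eq:mom.relax.pop.eq.constrains}.

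For \eqref{remove.trace.cons}, the identity \eqref{eq:rep.eq.pol} established in the proof of Lemma \ref{lem:trce.1.moment.relax} shows that $y_0 = 1$ together with the localizing constraint $M_{k-w_1}(h_1 y) = 0$ already forces $\sum_\alpha u_\alpha y_{2\alpha} = 2^k$, which is exactly $\tr(U M_k(y) U) = 1$. Hence any $Z$ feasible for \eqref{eq:sdp.moment.relax.tr.1.2} corresponds, via the reverse direction of the bijection above, to a $y$ whose image $U M_k(y) U = Z$ automatically has unit trace. Thus dropping the trace constraint does not enlarge the feasible set, and the moreover claim follows from the forward construction itself.

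For \eqref{norm.mat.C}, I will compute the two norms directly from \eqref{eq:def.C}. Each coefficient $f_\gamma$ contributes either one diagonal entry of magnitude $2^k |f_\gamma|/u_{\alpha_\gamma}$ (when $\alpha_\gamma = \beta_\gamma$) or a symmetric pair of off-diagonal entries each of magnitude $2^{k-1} |f_\gamma|/(u_{\alpha_\gamma}^{1/2} u_{\beta_\gamma}^{1/2})$. Summing squared entries per $\gamma$ and using $u_\alpha = \binom{k}{(k-|\alpha|, \alpha)} \ge 1$ yields $\|\tilde C\|_F^2 \le 2^{2k}\sum_\gamma |f_\gamma|^2$, while summing absolute values (with the $\sqrt{2}$ weight from $\text{vec}$ on off-diagonals) yields $\|\text{vec}(\tilde C)\|_{\ell_1} \le 2^k \sum_\gamma |f_\gamma|$, giving both inequalities.

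For \eqref{bound.num.aff.cons}, I will count the constraints by family: $\tilde A^{(0)}$ gives one normalization equation; the $\tilde A^{(j,\alpha)}$ with $\alpha \in \N^n_{2(k-w_j)}$ contribute $\sum_{j=1}^l \binom{n+2(k-w_j)}n$ localizing equations; and the $\tilde A^{(\alpha,\beta,\gamma)}$ enforce symmetry for each non-representative pair $(\alpha,\beta)$ with $\alpha \le \beta$, contributing the upper-triangular count $\tfrac12 \binom{n+k}n[\binom{n+k}n + 1]$ minus one representative per $\gamma$, i.e.\ $\binom{n+2k}n$. Summing gives \eqref{eq:num.tilde.M}. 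The stated bound then follows since $w_j \ge 1$ implies $\binom{n+2(k-w_j)}n \le \binom{n+2k}n$ for each $j$, so $\sum_{j=1}^l \binom{n+2(k-w_j)}n \le l\binom{n+2k}n$, and the lone $-\binom{n+2k}n$ absorbs one factor to give $(l-1)\binom{n+2k}n$. The main obstacle in this proof is the arithmetic bookkeeping in \eqref{equival.prob}: the case distinctions in \eqref{eq:def.mat.A.alp.beta.gam}--\eqref{eq:def.C} between diagonal and off-diagonal entries, together with the half-factors, are tuned precisely so that each trace $\tr(\tilde A_i Z)$ contracts cleanly to the intended linear functional on $y$; verifying this in both directions is routine but requires a careful case analysis, particularly when $\alpha_\gamma \ne \beta_\gamma$ and both $(\alpha_\gamma, \beta_\gamma)$ and $(\beta_\gamma, \alpha_\gamma)$ contribute.
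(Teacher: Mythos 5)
Your proposal is correct and follows essentially the same route as the paper's proof: the change of variables $Z=UM_k(y)U$ from Lemma \ref{lem:trce.1.moment.relax}, entrywise verification that the matrices in \eqref{eq:def.mat.A.alp.beta.gam}--\eqref{eq:def.C} encode the moment-structure, localizing, normalization and objective functionals, automatic unit trace for the trace-free formulation, and the same constraint count and norm estimates. Your write-up is in fact slightly more explicit than the paper's in two places — you spell out the reverse map $y_\gamma=2^kZ_{\alpha_\gamma,\beta_\gamma}/(u_{\alpha_\gamma}^{1/2}u_{\beta_\gamma}^{1/2})$ and you obtain the Frobenius bound directly in terms of $\|f\|_{\ell_2}$ as stated in \eqref{norm.mat.C} — but these are refinements of the same argument, not a different one.
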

\begin{proof}
\eqref{equival.prob} Let $y$ be a feasible solution to problem \eqref{eq:mom.relax.pop.eq.constrains}.
Define $Z=UM_k(y)U$, where $U$ is constructed as in Lemma \ref{lem:trce.1.moment.relax}. For $Z=(z_{\alpha,\beta})_{\alpha,\beta\in\N^n_k}$, the trace condition from Lemma \ref{lem:trce.1.moment.relax} ensures $\tr(Z)=1$.
For $\alpha,\beta\in\N^n_k$, using $z_{\alpha,\beta}=\frac{1}{2^k}u_\alpha^{1/2}u_\beta^{1/2} y_{\alpha+\beta}$, we have $y_{\alpha+\beta}=2^k \frac{z_{\alpha,\beta}}{u_\alpha^{1/2}u_\beta^{1/2}}$.
Recall that for $\gamma\in\N^n_{2k}$, $\alpha_\gamma,\beta_\gamma\in\N^n_k$ satisfy $\alpha_\gamma\le \beta_\gamma$ and $\alpha_\gamma+\beta_\gamma=\gamma$.
Thus, $y_\gamma=2^k \frac{z_{\alpha_\gamma,\beta_\gamma}}{u_{\alpha_\gamma}^{1/2}u_{\beta_\gamma}^{1/2}}$.
Additionaly, for all $\alpha,\beta\in\N^n_{k}$ satisfying $\alpha\le \beta$, $(\alpha,\beta)\ne (\alpha_\gamma,\beta_\gamma)$, and $\alpha+\beta=\gamma$, 
\begin{equation}
2^k\frac{z_{\alpha_\gamma,\beta_\gamma}}{u_{\alpha_\gamma}^{1/2}u_{\beta_\gamma}^{1/2}}=y_\gamma=y_{\alpha+\beta}=2^k\frac{z_{\alpha,\beta}}{u_\alpha^{1/2}u_\beta^{1/2}}\,,
\end{equation}
which gives
\begin{equation}\label{eq:y.gamma.in.mom.mat}
\frac{z_{\alpha_\gamma,\beta_\gamma}}{u_{\alpha_\gamma}^{1/2}u_{\beta_\gamma}^{1/2}}=\frac{z_{\alpha,\beta}}{u_\alpha^{1/2}u_\beta^{1/2}}\,.
\end{equation}
With $\tilde A^{(\alpha,\beta,\gamma)}$ defined as in  \eqref{eq:def.mat.A.alp.beta.gam}, this condition is equivalent to: 
\begin{equation}\label{eq:cons.y.gamma.in.mom.mat}
\tr( \tilde A^{(\alpha,\beta,\gamma)} Z )=0\,.
\end{equation}
Let $y_\gamma$ exist $l_\gamma$ times in the upper triangular portion of $M_k(y)$.
It implies that there are $l_\gamma-1$ affine constraints of form \eqref{eq:cons.y.gamma.in.mom.mat} for each $\gamma$.
Note that the number of entries in the upper triangular portion of $M_k(y)$ is $\sum_{\gamma\in\N^n_{2k}}l_\gamma=\frac{1}{2}\binom{n+k}n[\binom{n+k}n+1]$.
Thus, the number of $\tilde A^{(\alpha,\beta,\gamma)}$, for  $\alpha,\beta\in\N^n_{k}$ satisfying $\alpha\le \beta$, $(\alpha,\beta)\ne (\alpha_\gamma,\beta_\gamma)$, and $\alpha+\beta=\gamma$, for $\gamma\in\N^n_{2k}$, is
\begin{equation}
\sum_{\gamma\in \N^n_{2k}}(l_\gamma-1)=\frac{1}{2}\binom{n+k}n\left[\binom{n+k}n+1\right]-\binom{n+2k}n\,.
\end{equation}

Next, constraint $M_{k-w_j}(h_jy)=0$ is equivalent to 
\begin{equation}
\begin{array}{rl}
\forall \alpha\in\N^n_{2(k-w_j)}\,,\,0=&L_y(x^\alpha h_j)=L_y(x^\alpha \sum_{\beta\in\N^n_{2w_j}}h_{j,\beta}x^\beta)\\
=&L_y(\sum_{\beta\in\N^n_{2w_j}}h_{j,\beta}x^{\alpha+\beta})\\
=&\sum_{\beta\in\N^n_{2w_j}}h_{j,\beta}y_{\alpha+\beta}\\
=&\sum_{\gamma\in\alpha+\N^n_{2w_j}}h_{j,\gamma-\alpha} y_\gamma\\
=&\sum_{\gamma\in\alpha+\N^n_{2w_j}}h_{j,\gamma-\alpha} \left(2^k \frac{z_{\alpha_\gamma,\beta_\gamma}}{u_{\alpha_\gamma}^{1/2}u_{\beta_\gamma}^{1/2}}\right)\\
=&2^k\sum_{\gamma\in\alpha+\N^n_{2w_j}} \frac{  h_{j,\gamma-\alpha}}{u_{\alpha_\gamma}^{1/2}u_{\beta_\gamma}^{1/2}}z_{\alpha_\gamma,\beta_\gamma}\,.
\end{array}
\end{equation}
Then constraint $M_{k-w_j}(h_jy)=0$ can be reformulated as
\begin{equation}\label{eq:eq.cons.moment}
\forall \alpha\in\N^n_{2(k-w_j)}\,,\,\sum_{\gamma\in\alpha+\N^n_{2w_j}} \frac{ h_{j,\gamma-\alpha}}{u_{\alpha_\gamma}^{1/2}u_{\beta_\gamma}^{1/2}}z_{\alpha_\gamma,\beta_\gamma}=0\,.
\end{equation}
With $\tilde A^{(j,\alpha)}$ defined as in \eqref{eq:mat.A.j.alp}, constraints \eqref{eq:eq.cons.moment} becomes
\begin{equation}
\tr( \tilde A^{(j,\alpha)} Z)=0\,,\,\forall \alpha\in\N^n_{2(k-w_j)}\,.
\end{equation}
Note that the number of $\tilde A^{(j,\alpha)}$, for $\alpha\in\N^n_{2(k-w_j)}$, for $j=1,\dots,l$, is
\begin{equation}
\sum_{j=1}^l |\N^n_{2(k-w_j)}|=  \sum_{j=1}^l \binom{n+2(k-w_j)}n\,.
\end{equation}
The last constrain $y_0=1$ translates into affine constraints of the form
\begin{equation}
1=y_0=\frac{2^k z_{0,0}}{u_0^{1/2}u_0^{1/2}}=2^k z_{0,0}=\tr( \tilde A^{(0)} Z)\,,
\end{equation}
where $\tilde A^{(0)}$ is defined as in \eqref{eq:mat.A.0}.
The objective function of \eqref{eq:mom.relax.pop.eq.constrains} can be written as
\begin{equation}
L_y(f)=\sum_{\gamma\in\N^n_{2k}} f_\gamma y_\gamma =\sum_{\gamma\in\N^n_{2k}} f_\gamma 2^k \frac{z_{\alpha_\gamma,\beta_\gamma}}{u_{\alpha_\gamma}^{1/2}u_{\beta_\gamma}^{1/2}} =\tr( \tilde C Z)\,,
\end{equation}
where $\tilde C=(\tilde C_{\alpha,\beta})_{\alpha,\beta\in\N^n_{k}}$ is the matrix defined as in \eqref{eq:def.C}.
Thus, \eqref{eq:mom.relax.pop.eq.constrains} is equivalent to \eqref{eq:sdp.moment.relax.tr.1}.

\eqref{norm.mat.C} From the definition of $\tilde C$, we have
\begin{equation}
\|\tilde C\|_F\le \max_{\gamma\in\N^n_{2k}}\frac{2^k}{u_{\alpha_\gamma}^{1/2}u_{\beta_\gamma}^{1/2}}\|f\|_{\ell_1}\le 2^k \|f\|_{\ell_1}\,,
\end{equation}
\begin{equation}
\|\text{vec}(\tilde C)\|_{\ell_1}\le \max_{\gamma\in\N^n_{2k}}\frac{2^k}{u_{\alpha_\gamma}^{1/2}u_{\beta_\gamma}^{1/2}}\|f\|_{\ell_1}\le 2^k \|f\|_{\ell_1}\,.
\end{equation}

\eqref{remove.trace.cons} Since removing the trace constraint $\tr(G)= 1$ from \eqref{eq:sdp.moment.relax.tr.1} does not affect the optimal value, the same reasoning leads directly to the formulation in \eqref{eq:sdp.moment.relax.tr.1.2}.

\eqref{bound.num.aff.cons} Recall that the total number of affine constraints in the SDP \eqref{eq:sdp.moment.relax.tr.1} includes:
\begin{itemize}
\item Constraints due to the structure of the moment matrix $M_k(y)$, which contribute
\begin{equation}
\frac{1}{2}\binom{n+k}n\left[\binom{n+k}n+1\right]-\binom{n+2k}n\,.
\end{equation}
\item Constraints arising from $M_{k-w_j}(h_jy)=0$, for $j=1,\dots,l$, which contribute
\begin{equation}
\sum_{j=1}^l \binom{n+2(k-w_j)}n\,.
\end{equation}
\item A single constraint from $y_0=1$.
\end{itemize}
Summing these contributions gives the total number of affine constraints $\tilde M$. From this, the stated bound on $\tilde M$ follows.
\end{proof}

\begin{remark}
The maximum number of nonzero entries in a row of 
$\tilde C, \tilde A_i$ is bounded by
\begin{equation}
\begin{array}{rl}
\max\limits_{j=1,\dots,l}\{|\supp(f)|,|\supp(h_j)|\}\le& \max\limits_{j=1,\dots,l}\{|\N^n_{2d}|,|\N^n_{2w_j}|\}\\[5pt]
=&\max\limits_{j=1,\dots,l}\left\{\binom{n+2d}n,\binom{n+2w_j}n\right\}\,.
\end{array}
\end{equation}
However, due to Gaussian elimination and the Gram-Schmidt process, each $A_j$ can be expressed as a linear combination of the $\tilde A_i$'s, causing this sparsity structure to be disrupted.
\end{remark}

\begin{lemma}\label{lem:properties.data.A.eq.cons}
Let $f\in\R[x]_{2k}$ and $h_j\in\R[x]_{2w_j}$, for $j=1,\dots,l$, with $h_1=1-\|x\|_{\ell_2}^2$.
Assume $k\ge \max\{w_1,\dots,w_l\}$.
Let $C$, $A_i$, $i=1,\dots,M$, $N$, $c$, $\lambda_{\min}$, $\lambda_{\max}$ be generated by Algorithm \ref{alg:eq.con.pop.sphere}.
Define
\begin{equation}
A =\begin{bmatrix}
\text{vec}(A_1)\\
\dots\\
\text{vec}(A_M)
\end{bmatrix}\,.
\end{equation}
Let $\lambda^\star=\frac{\tau_k}{c}$.
The following properties hold:
\begin{enumerate}[(i)]
\item\label{bound.eigenvalue.C} $-I_N\preceq C\preceq I_N$ and $\|\text{vec}(C)\|_{\ell_1}\le\frac{2^k \|f\|_{\ell_1}}{c}$.
\item\label{A.full.rank} $A$ has rank $M$, $\sigma_{\min}(A)=1$, and $-I_N\preceq A_i\preceq I_N$, for $i=1,\dots,M$.
\item\label{bound.opt.val.sdp.rho} $\lambda^\star\in[\lambda_{\min},\lambda_{\max}]$.
\item\label{equivalent.probsdp} Problem \eqref{eq:mom.relax.pop.eq.constrains} is equivalent to \eqref{eq:sdp}.
\item\label{upper.bound.on.c} $c\le 2^k \|f\|_{\ell_2}$.
\item\label{bound.of.M.N} $M\le \frac{1}{2}\binom{n+k}n[\binom{n+k}n+1]$ and $N=\binom{n+k}n$.
\item\label{sdp.without.trace.const} 
\begin{equation}\label{eq:sdp.without.trace.const} 
\lambda^\star=\inf\{\tr( C X)\,:\,X\in\mathbb S^{N}_+\,,\,\tr( A_i X) =b_i\,,\,i=1,\dots,M\}\,.
\end{equation}
Moreover, if $y$ is a feasible solution for problem \eqref{eq:mom.relax.pop.eq.constrains}, then  $UM_k(y)U$ is a feasible solution for \eqref{eq:sdp.without.trace.const}, where $U$ is defined as in \eqref{eq:change.coordinate}.
\end{enumerate}
\end{lemma}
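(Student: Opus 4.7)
The plan is to mirror the strategy already carried out for Lemma~\ref{lem:properties.data.A.cons} in the inequality-constrained case, but starting from the trace-normalised reformulation of the moment relaxation established in Lemma~\ref{lem:properties.tilde.prob.eq.pop}. First I would record that, by Lemma~\ref{lem:properties.tilde.prob.eq.pop}\eqref{equival.prob}, problem~\eqref{eq:mom.relax.pop.eq.constrains} is equivalent to the trace-one SDP~\eqref{eq:sdp.moment.relax.tr.1} with data $\tilde C,\tilde A_i,\tilde b_i$. Step~\ref{step:gauss.elimi.mom.relax} of Algorithm~\ref{alg:eq.con.pop.sphere} performs Gaussian elimination on the affine system $\{\tr(\tilde A_i Z)=\tilde b_i\}$, which preserves the feasible polytope and leaves only a set of linearly independent matrices $\check A_1,\dots,\check A_M$ in $\mathbb S^N$ spanning the same affine subspace. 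This already yields \eqref{bound.of.M.N}: because the $\check A_i$ are linearly independent in $\mathbb S^N$, necessarily $M\le\dim \mathbb S^N=\tfrac12\binom{n+k}{n}[\binom{n+k}{n}+1]$, while $N=\binom{n+k}{n}$ is fixed by construction.

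Next I would analyse Step~\ref{step:gram-schmidt.mom.relax}. By Lemma~\ref{lem:Gram-Schmidt}, the Gram--Schmidt procedure produces $A_1,\dots,A_M\in\mathbb S^N$ satisfying $\tr(A_iA_j)=\delta_{ij}$ and defining the same affine slice, i.e.\ $\{\tr(\check A_iZ)=\check b_i\,\forall i\}=\{\tr(A_iZ)=b_i\,\forall i\}$. The orthonormality gives $\|\mathrm{vec}(A_i)\|_{\ell_2}=1$, hence $-I_N\preceq A_i\preceq I_N$ since $\|A_i\|_F=1$ forces $\|A_i\|_{\mathrm{op}}\le 1$; the matrix $A$ with rows $\mathrm{vec}(A_i)^\top$ has orthonormal rows, so $AA^\top=I_M$, giving $\mathrm{rank}(A)=M$ and $\sigma_{\min}(A)=1$. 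This proves \eqref{A.full.rank}. For~\eqref{bound.eigenvalue.C}, set $c=\|\tilde C\|_F$ and $C=\tilde C/c$; then $\|C\|_F=1$ so $\|C\|_{\mathrm{op}}\le 1$, i.e.\ $-I_N\preceq C\preceq I_N$, while $\|\mathrm{vec}(C)\|_{\ell_1}=\|\mathrm{vec}(\tilde C)\|_{\ell_1}/c\le 2^k\|f\|_{\ell_1}/c$ by Lemma~\ref{lem:properties.tilde.prob.eq.pop}\eqref{norm.mat.C}. The same lemma directly gives \eqref{upper.bound.on.c}: $c=\|\tilde C\|_F\le 2^k\|f\|_{\ell_2}$.

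For \eqref{equivalent.probsdp}, I would combine the previous steps: after Gaussian elimination and Gram--Schmidt the feasible region of~\eqref{eq:sdp.moment.relax.tr.1} is described by $\tr(A_iZ)=b_i$ together with $\tr(Z)=1$, and the objective $\tr(\tilde C Z)$ rescaled by $1/c$ becomes $\tr(CZ)$; this is exactly the standard form~\eqref{eq:sdp} with optimal value $\lambda^\star=\tau_k/c$. For \eqref{bound.opt.val.sdp.rho}, note that $a\in V(h)$ implies $\tau_k\le f^\star\le f(a)$, so $\underline\tau\le\tau_k\le f(a)$, and dividing by the positive constant $c$ gives $\lambda_{\min}\le\lambda^\star\le\lambda_{\max}$. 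Finally, \eqref{sdp.without.trace.const} follows by invoking Lemma~\ref{lem:properties.tilde.prob.eq.pop}\eqref{remove.trace.cons}: since removing $\tr(Z)=1$ from~\eqref{eq:sdp.moment.relax.tr.1} does not change its optimal value, the same holds after the affine change of basis and rescaling by $c$; moreover, the explicit feasible lift $Z=UM_k(y)U$ given in Lemma~\ref{lem:properties.tilde.prob.eq.pop}\eqref{remove.trace.cons} carries over.

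The only delicate point is bookkeeping around Gaussian elimination: I need to verify that dropping redundant rows of the constraint system $\tr(\tilde A_i Z)=\tilde b_i$ leaves the feasible set unchanged (standard linear algebra) and produces exactly $M\le\tilde M$ independent constraints, with the trace-one constraint surviving (because $\tilde A_{\tilde M}=\tilde A^{(0)}$ is never in the span of the homogeneous constraints since $\tilde b_{\tilde M}=1\neq 0$). Once this is handled, the rest is a direct assembly of Lemmas~\ref{lem:Gram-Schmidt} and~\ref{lem:properties.tilde.prob.eq.pop}, in close analogy with the proof of Lemma~\ref{lem:properties.data.A.cons}.
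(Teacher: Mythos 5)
Your proposal is correct and follows essentially the same route as the paper: assemble Lemma~\ref{lem:properties.tilde.prob.eq.pop} for the trace-one reformulation and the norm bounds on $\tilde C$, Lemma~\ref{lem:Gaussian.elimination} and Lemma~\ref{lem:Gram-Schmidt} for the equivalent orthonormal constraint system (giving rank $M$, $\sigma_{\min}(A)=1$, and the spectral bounds), then rescale by $c=\|\tilde C\|_F$ and read off \eqref{bound.opt.val.sdp.rho}, \eqref{bound.of.M.N} and \eqref{sdp.without.trace.const}. Your closing worry is unnecessary (and slightly conflates $\tr(Z)=1$ with the affine constraint $\tr(\tilde A^{(0)}Z)=1$ encoding $y_0=1$): the trace-one constraint is kept outside the affine system, so the cited Gaussian-elimination lemma already guarantees everything needed, exactly as in the paper's proof.
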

\begin{proof}
\eqref{equivalent.probsdp} By Lemma \ref{lem:Gaussian.elimination}, the matrices $\check A_1,\dots,\check A_M$ are linearly independent in $\mathbb S^N$ and
\begin{equation}
\tr( \tilde A_i Z)=\tilde b_i\,,\,i=1,\dots,\tilde M \Leftrightarrow \tr( \check A_i Z)=b_i\,,\,i=1,\dots,M\,.
\end{equation}
By Lemma \ref{lem:Gram-Schmidt}, we have $\tr(  A_i A_j)=\delta_{ij}$ and
 \begin{equation}
\tr( \check A_i X)=\check b_i\,,\,i=1,\dots,M \Leftrightarrow \tr(  A_i X)= b_i\,,\,i=1,\dots,M
\end{equation}
Thus, the problem \eqref{eq:sdp.moment.relax.tr.1} is equivalent to
\begin{equation}\label{eq:sdp.trace.one.eq.pop.sphere}
\begin{array}{rl}
\tau_k=\inf\limits_{Z} & \tr( \tilde C Z)\\
\text{s.t.}& Z\in\mathbb S_+^{N}\,,\,\tr(Z)= 1\,,\\
&\tr( A_i Z)=b_i\,,\,i=1,\dots,M\,.
\end{array}
\end{equation}
Since $C=\frac{\tilde C}{c}$, this problem can be written as
\begin{equation}\label{eq:sdp.trace.one.eq.pop.sphere2}
\begin{array}{rl}
\frac{\tau_k}{c}=\inf\limits_{Z} & \tr( C Z)\\
\text{s.t.}& Z\in\mathbb S_+^{N}\,,\,\tr(Z)= 1\,,\\
&\tr( A_i Z)=b_i\,,\,i=1,\dots,M\,.
\end{array}
\end{equation}
With $\lambda^\star=\frac{\tau_k}{c}$, this is equivalent to \eqref{eq:sdp}.

\eqref{sdp.without.trace.const}  follows from Lemma \ref{lem:properties.tilde.prob.eq.pop} \eqref{remove.trace.cons}.

\eqref{A.full.rank} Since $\tr(  A_i A_j)=\delta_{ij}$, $-I_N\preceq A_i\preceq I_N$, $A$ has rank $M$ and $\sigma_{\min}(A)=1$.

\eqref{upper.bound.on.c} follows from $c=\|\tilde C\|_F$ and Lemma \ref{lem:properties.tilde.prob.eq.pop} \eqref{norm.mat.C}.

\eqref{bound.eigenvalue.C} Since $C=\frac{\tilde C}{c}=\frac{\tilde C}{\|\tilde C\|_F}$, we have $-I_N\preceq C\preceq I$ and $\|\text{vec}(C)\|_{\ell_1}=\frac{\|\text{vec}(\tilde C)\|_{\ell_1}}c\le \frac{2^k \|f\|_{\ell_1}}{c}$.

\eqref{bound.opt.val.sdp.rho} Since $\lambda_{\max}=\frac{f(a)}{c}$ for $a\in V(h)$, $\lambda_{\min}=\frac{\underline \tau}{c}$ and $\underline \tau\le \tau_k\le f^\star\le f(a)$, we obtain $\lambda^\star\in[\lambda_{\min},\lambda_{\max}]$.

\eqref{bound.of.M.N} follows from $M\le \min\{\tilde M,L\}$, with $L=\frac{N(N+1)}{2}$, and Lemma \ref{lem:properties.tilde.prob.eq.pop} \eqref{bound.num.aff.cons}.
\end{proof}

\begin{theorem}\label{theo:accuracy.moment.relax.eqpop.sphere}
Let $f\in\R[x]_{2k}$ and $h_j\in\R[x]_{2w_j}$, for $j=1,\dots,l$, with $h_1=1-\|x\|_{\ell_2}^2$.
Assume $k\ge \max\{d,w_1,\dots,w_l\}$, $\tau_k\ge 0$, and Slater's condition holds for the  moment relaxation \eqref{eq:mom.relax.pop.eq.constrains}, i.e., there exists a feasible solution  $\bar y$ for problem \eqref{eq:mom.relax.pop.eq.constrains} such that $M_k(\bar y)\succ 0$ and has condition number $\kappa $.
Let $\tau_k^{(\varepsilon)}$ be the value returned by Algorithm \ref{alg:eq.con.pop.sphere}.
Then, the following inequality holds:
\begin{equation}
\begin{array}{rl}
0\le& \tau_k^{(\varepsilon)}-\tau_k\\
\le& \varepsilon 2^k\|f\|_{\ell_1} \left[2+\left(1+\sqrt{2} \kappa  \max\limits_{\alpha\in\N^n_k} \binom{k}{(k-|\alpha|,\alpha)}\right) \sqrt{\frac{1}{2}\binom{n+k}n\left[\binom{n+k}n+1\right]}\ \right]\,.
\end{array}
\end{equation}
\end{theorem}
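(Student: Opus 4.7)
The plan is to follow the same template used for Theorem \ref{theo:accuarcy.approximate.val.uncons.2} and Theorem \ref{theo:accuray.with.gram.schimidt.ineqpop}: reduce the moment relaxation to a standard-form, trace-normalized SDP via the reformulation established in Lemma \ref{lem:properties.tilde.prob.eq.pop} and Lemma \ref{lem:properties.data.A.eq.cons}, produce a strictly feasible point for that SDP out of the assumed Slater point of \eqref{eq:mom.relax.pop.eq.constrains}, and apply the general accuracy bound in Lemma \ref{lem:convergence.sdp.2}(ii).

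Concretely, I would set $\lambda^\star = \tau_k/c$ and invoke Lemma \ref{lem:properties.data.A.eq.cons}\eqref{equivalent.probsdp}, which says problem \eqref{eq:mom.relax.pop.eq.constrains} is equivalent to the standard SDP \eqref{eq:sdp}, together with \eqref{sdp.without.trace.const}, which lets me drop the trace constraint without changing the optimum. Strong duality follows from Slater's condition on the moment relaxation and carries over to \eqref{eq:sdp.without.trace.const}. Lemma \ref{lem:properties.data.A.eq.cons}\eqref{bound.eigenvalue.C}--\eqref{bound.opt.val.sdp.rho} supplies the hypotheses $-I_N\preceq C,A_i\preceq I_N$, $\sigma_{\min}(A)=1$, $A$ full row rank, and $\lambda^\star\in[\lambda_{\min},\lambda_{\max}]$, while $\tau_k\ge 0$ and $c>0$ yield $\lambda^\star\ge 0$. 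The remaining ingredient is producing a Slater point for \eqref{eq:sdp.without.trace.const}: by Lemma \ref{lem:properties.tilde.prob.eq.pop}\eqref{remove.trace.cons}, the matrix $V=UM_k(\bar y)U$, with $U$ as in \eqref{eq:change.coordinate}, is feasible; and since $M_k(\bar y)\succ 0$ and $U$ is diagonal positive, $V\succ 0$.

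The step I expect to be the main obstacle is estimating the condition number $\kappa(V)$ in terms of $\kappa$ and the multinomial weights, since Lemma \ref{lem:convergence.sdp.2}(ii) plugs $\kappa(U_{n,k})$ directly into the error bound. Using the standard congruence inequality
\begin{equation*}
\lambda_{\min}(U)^{2}\lambda_{\min}(M_k(\bar y))\le \lambda_{\min}(UM_k(\bar y)U),\qquad \lambda_{\max}(UM_k(\bar y)U)\le \lambda_{\max}(U)^{2}\lambda_{\max}(M_k(\bar y)),
\end{equation*}
one obtains $\kappa(V)\le \kappa(U)^{2}\kappa$. Because $U$ is diagonal with entries $u_\alpha^{1/2}/2^{k/2}$ and $\min_\alpha u_\alpha = \binom{k}{(k,0,\dots,0)}=1$, we get $\kappa(U)^{2}=\max_{\alpha\in\N^n_k}\binom{k}{(k-|\alpha|,\alpha)}$, which is exactly the factor appearing in the theorem statement.

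To finish, I would apply Lemma \ref{lem:convergence.sdp.2}(ii) to get
\begin{equation*}
0\le \lambda^\star-\underline\lambda_T \le \varepsilon\bigl[2+(1+\sqrt{2}\,\kappa(V))\,\|\text{vec}(C)\|_{\ell_1}\,\sigma_{\min}(A)^{-1}\sqrt{M}\,\bigr],
\end{equation*}
then substitute $\sigma_{\min}(A)=1$, $M\le \tfrac{1}{2}\binom{n+k}{n}[\binom{n+k}{n}+1]$ from Lemma \ref{lem:properties.data.A.eq.cons}\eqref{bound.of.M.N}, $\kappa(V)\le \kappa\max_\alpha\binom{k}{(k-|\alpha|,\alpha)}$, and $\|\text{vec}(C)\|_{\ell_1}\le 2^k\|f\|_{\ell_1}/c$ from Lemma \ref{lem:properties.data.A.eq.cons}\eqref{bound.eigenvalue.C}. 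Multiplying through by $c$ converts $\lambda^\star-\underline\lambda_T$ into $\tau_k^{(\varepsilon)}-\tau_k$; the leading ``$2$'' picks up a factor of $c\le 2^k\|f\|_{\ell_2}\le 2^k\|f\|_{\ell_1}$ (Lemma \ref{lem:properties.data.A.eq.cons}\eqref{upper.bound.on.c}), while the second term absorbs $c\cdot\|\text{vec}(C)\|_{\ell_1}\le 2^k\|f\|_{\ell_1}$ exactly. Collecting the common prefactor $2^k\|f\|_{\ell_1}$ yields the stated inequality.
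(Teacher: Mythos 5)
Your proposal is correct and follows essentially the same route as the paper's proof: reduce to the standard-form SDP via Lemmas \ref{lem:properties.tilde.prob.eq.pop} and \ref{lem:properties.data.A.eq.cons}, take $U M_k(\bar y)U$ as the Slater point for \eqref{eq:sdp.without.trace.const}, bound its condition number by $\kappa\max_{\alpha}\binom{k}{(k-|\alpha|,\alpha)}$, and conclude with Lemma \ref{lem:convergence.sdp.2}(ii) together with the bounds on $\|\mathrm{vec}(C)\|_{\ell_1}$, $\sigma_{\min}(A)$, $M$, and $c$. The only difference is that you explicitly justify the condition-number estimate via the congruence inequalities, a detail the paper merely asserts.
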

\begin{proof}
Let $\lambda^\star=\frac{\tau_k}{c}$. Since Slater's condition holds for the  moment relaxation \eqref{eq:mom.relax.pop.eq.constrains}, strong duality is satisfied for problem  \eqref{eq:mom.relax.pop.eq.constrains} and its dual.
By Lemma \ref{lem:properties.data.A.eq.cons} \eqref{sdp.without.trace.const}, strong duality also holds for problem \eqref{eq:sdp.trace.one.cons.pop.with.GS} and its dual \eqref{eq:dual.no.tr.cons}.

Define $U_{n,k}=UM_k(\bar y)U$, where $U$ is defined as in \eqref{eq:change.coordinate}. Since $U \succ 0$, it follows that $U_{n,k}\succ 0$ and has condition number
\begin{equation}
\kappa(U_{n,k})\le \kappa \max\limits_{\alpha\in\N^n_k} u_\alpha\,.
\end{equation}
By Lemma \ref{lem:properties.data.A.eq.cons} \eqref{sdp.without.trace.const}, $U_{n,k}$ is a feasible solution for problem \eqref{eq:sdp.without.trace.const}, and thus Slater's condition holds for  problem \eqref{eq:sdp.without.trace.const}.
From Lemma \ref{lem:properties.data.A.eq.cons} \eqref{A.full.rank}, $A$ has full rank $M$.
Since $\tau_k\ge 0$, we have $\lambda^\star=\frac{\tau_k}{c}\ge 0$.
By Lemma \ref{lem:properties.data.A.eq.cons} \eqref{bound.eigenvalue.C}, \eqref{A.full.rank} and \eqref{bound.opt.val.sdp.rho}, $I_N\succeq C\succeq -I_N$, $I_N\succeq A_i\succeq -I_N$, for $i=1,\dots,M$, and  $\lambda^\star\in [\lambda_{\min},\lambda_{\max}]$.
Using Lemma \ref{lem:convergence.sdp.2} (ii), we obtain
\begin{equation}
0\le \lambda^\star-\overline \lambda_T\le \varepsilon\left[2+(1+\sqrt{2} \kappa(U_{n,k})) \|\text{vec}(C)\|_{\ell_1} \sigma_{\min}(A)^{-1}\sqrt{M} \right]\,.
\end{equation}
Applying properties of $C$ and $A$ from Lemma \ref{lem:properties.data.A.eq.cons} \eqref{bound.eigenvalue.C}  and  \eqref{A.full.rank},  this simplifies to
\begin{equation}
0\le \lambda^\star-\overline \lambda_T\le \varepsilon\left[2+(1+\sqrt{2} \kappa \max\limits_{\alpha\in\N^n_k} u_\alpha)\frac{2^k\|f\|_{\ell_1}}{c}  \sqrt{M}  \right]\,.
\end{equation}
Since $\tau_k^{(\varepsilon)}=\underline\lambda_T c$ and $\tau_k=\lambda^\star c$, we have 
\begin{equation}
\begin{array}{rl}
0\le \tau_k^{(\varepsilon)}-\tau_k=&c(\lambda^\star-\underline \lambda_T)\\[5pt]
\le &\varepsilon\left[2c+(1+\sqrt{2} \kappa \max\limits_{\alpha\in\N^n_k} u_\alpha)2^k\|f\|_{\ell_1}\sqrt{M}  \right]\\[10pt]
\le& \varepsilon \left[2^{k+1}\|f\|_{\ell_2}+(1+\sqrt{2} \kappa \max\limits_{\alpha\in\N^n_k} u_\alpha)2^k\|f\|_{\ell_1} \sqrt{M} \right]\\[10pt]
\le& \varepsilon 2^k\|f\|_{\ell_1} \left[2+(1+\sqrt{2} \kappa \max\limits_{\alpha\in\N^n_k} u_\alpha) \sqrt{M} \right]\,.
\end{array}
\end{equation}
The third inequality is due to Lemma \ref{lem:properties.data.A.eq.cons} \eqref{upper.bound.on.c}.
Finally, using equality $u_\alpha=\binom{k}{(k-|\alpha|,\alpha)}$, for $\alpha\in \N^n_k$, and inequality $M\le \frac{1}{2}\binom{n+k}n[\binom{n+k}n+1]$ completes the proof.
\end{proof}

\begin{lemma}[Complexity]
Assume $k\ge \max\{d,w_1,\dots,w_l\}$.
To run Algorithm \ref{alg:eq.con.pop.sphere}, the time complexity on a classical computer is
\begin{equation}
O\left(\binom{n+k}n^4\left[(l-1) \binom{n+2k}n+\binom{n+k}n^2 \right]+ \binom{n+k}n^4\frac{1}{\varepsilon^2}\right)\,,
\end{equation}
while on a quantum computer, the time complexity is
\begin{equation}
\footnotesize
O\left(\binom{n+k}{n}^{1.5}\left\{(l-1) \binom{n+2k}n \binom{n+k}{n}^{2.5}+\binom{n+k}{n}^{4.5}+\left[\binom{n+k}{n}^{0.5}+\frac{1}{\varepsilon} \right]\frac{1}{\varepsilon^{4}}\right\}\right)\,.
\end{equation}
\end{lemma}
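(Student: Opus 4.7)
\medskip

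\noindent\textbf{Proof plan.}
The plan is to split the total runtime of Algorithm~\ref{alg:eq.con.pop.sphere} into three computationally dominant blocks, bound each using standard complexity lemmas together with the size estimates already established in Lemma~\ref{lem:properties.tilde.prob.eq.pop} and Lemma~\ref{lem:properties.data.A.eq.cons}, and then combine the bounds. The three blocks are: (B1) the Gaussian elimination in Step~\ref{step:gauss.elimi.mom.relax} that reduces the list $\tilde A_1,\dots,\tilde A_{\tilde M}$ of affine constraints to a linearly independent family $\check A_1,\dots,\check A_M$; (B2) the Gram--Schmidt orthonormalization in Step~\ref{step:gram-schmidt.mom.relax} that produces the orthonormal family $A_1,\dots,A_M$ satisfying $\tr(A_iA_j)=\delta_{ij}$; and (B3) the binary search using Hamiltonian updates in Step~\ref{step:Hamiltonian.update.mom.relax}. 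The first enumerate block (1a)--(1j) only populates the sparse matrices $\tilde C,\tilde A_i$ and is cheaper than (B1)--(B3).

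The key dimension estimates are all already available. From Lemma~\ref{lem:properties.data.A.eq.cons}\eqref{bound.of.M.N} one has $N=\binom{n+k}{n}$ and $M\le \tfrac12\binom{n+k}{n}\bigl[\binom{n+k}{n}+1\bigr]$, hence the vectorization length is $L=\tfrac12 N(N+1)=O(N^2)$. From Lemma~\ref{lem:properties.tilde.prob.eq.pop}\eqref{bound.num.aff.cons} one has
\[
\tilde M \;\le\; 1+(l-1)\binom{n+2k}{n}+\tfrac12 \binom{n+k}{n}\bigl[\binom{n+k}{n}+1\bigr].
\]
After Gram--Schmidt the orthonormalized matrices $A_i$ are generally dense in their $L$-vectorization, so the sparsity parameter $s$ that enters the binary-search complexity can only be bounded by $s\le N$.

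For (B1), Gaussian elimination on a $\tilde M\times L$ matrix costs $O(\tilde M\,L\,\min(\tilde M,L))\le O(\tilde M L^2)$ operations; plugging in the estimates yields $O\bigl((l-1)\binom{n+2k}{n}\,N^4+N^6\bigr)$. For (B2), the Gram--Schmidt cost $O(LM^2)$ from the complexity lemma referenced in the text (\ref{lem:complexity.Gram-Schmidt}) becomes $O(N^2\cdot N^4)=O(N^6)$, which is absorbed into the $N^6$ term already produced by (B1). For (B3), Lemma~\ref{lem:complex.binary.search} gives classical cost $O((MsN+N^\omega)\varepsilon^{-2})=O(N^4\varepsilon^{-2})$ (using $\omega<4$) and quantum cost $O(s(\sqrt M+\sqrt N\varepsilon^{-1})\varepsilon^{-4})=O(N\cdot(N+\sqrt N\varepsilon^{-1})\varepsilon^{-4})=O(N^2\varepsilon^{-4}+N^{1.5}\varepsilon^{-5})$. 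Summing the contributions of (B1)--(B3) and rewriting the quantum expression in the factored form $N^{1.5}\bigl[(l-1)\binom{n+2k}{n}N^{2.5}+N^{4.5}+(N^{0.5}+\varepsilon^{-1})\varepsilon^{-4}\bigr]$ yields the two stated complexities.

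\noindent\textbf{Main obstacle.}
The only non-routine point is bounding the sparsity $s$ of the post-Gram--Schmidt matrices $A_i$: since each $A_j$ is generally a dense linear combination of the $\check A_i$'s, one cannot hope for the row sparsity of the original $\tilde A_i$ (which was controlled by $\max_j|\supp(h_j)|$) to survive. Taking the pessimistic bound $s\le N$ is what forces the appearance of the $N^2$ (quantum) and $N^4$ (classical) factors in the binary-search term, and this is precisely why the Gauss-elimination/Gram--Schmidt preprocessing is visible in the final statement rather than being dominated by the Hamiltonian-updates loop. Once this sparsity bound is accepted, the remainder is bookkeeping between $\tilde M$, $M$, $L$, and $N$.
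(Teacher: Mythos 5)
Your proposal is correct and follows essentially the same route as the paper: split the cost into Gaussian elimination, Gram--Schmidt, and the Hamiltonian-updates binary search, invoke Lemmas~\ref{lem:complexity.Gaussian.elimination}, \ref{lem:complexity.Gram-Schmidt} and \ref{lem:complex.binary.search} with the size bounds $N=\binom{n+k}{n}$, $M=O(N^2)$, $\tilde M=O\bigl((l-1)\binom{n+2k}{n}+N^2\bigr)$, and take the pessimistic sparsity $s=N$ for the post-orthonormalization matrices, exactly as the paper does. The minor refinements you add (the $O(\tilde M L\min(\tilde M,L))$ bound for elimination and the explicit factoring of the quantum expression) do not change the argument.
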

\begin{proof}
The most computationally expensive steps in Algorithm \ref{alg:eq.con.pop.sphere} are performing Gaussian elimination in Step \ref{step:gauss.elimi.mom.relax}, the Gram-Schmidt process in Step \ref{step:gram-schmidt.mom.relax}, and running binary search using Hamiltonian updates in Step \ref{step:Hamiltonian.update.mom.relax}.

For Gaussian elimination, by Lemma \ref{lem:complexity.Gaussian.elimination}, both classical and quantum computers have the same complexity: $O(\tilde M N^4)$.

For the Gram-Schmidt process, by Lemma \ref{lem:complexity.Gram-Schmidt}, both classical and quantum computers have the same complexity: $O(N^2M^2)$.

For binary search using Hamiltonian updates, by Lemma \ref{lem:complex.binary.search}, a classical computer takes $O((MsN+N^\omega)\varepsilon^{-2})$, a classical computer takes $O(s(\sqrt{M}+\sqrt{N}\varepsilon^{-1})\varepsilon^{-4})$ times on a quantum computer,
, where $s$ is the maximum number of nonzero entries in a row of the data matrices $C,A_i$, for $i=1,\dots,M$.
We take $s=N$, which is the size of $C,A_i$.

By Lemma \ref{lem:properties.data.A.eq.cons} \eqref{bound.of.M.N}, we have $M\le \frac{1}2\binom{n+k}n[\binom{n+k}n+1]$, $N= \binom{n+k}n$.
Furthermore, by Lemma \ref{lem:properties.tilde.prob.eq.pop} \eqref{bound.num.aff.cons},
$\tilde M\le 1+ (l-1) \binom{n+2k}n+\frac{1}{2}\binom{n+k}n[\binom{n+k}n+1]$.
Thus, the result follows.
\end{proof}

\subsubsection{(In)equality-constrained polynomial optimization over the unit ball}
\label{sec:mom.relax.ineq.cons.all}
Let $f,g_i,h_j\in\R[x]$ for $i=1,\dots,m$ and $j=1,\dots,l$,  with $g_1=1-\|x\|_{\ell_2}^2$.
Consider polynomial optimization problem
\begin{equation}\label{eq:pop.in.eq.constrains}
f^\star=\min\limits_{x\in S(g)\cap V(h)} f(x)\,,
\end{equation}
where $S(g)$ s the semialgebraic set defined as in \eqref{eq:def.semialgebaric.set} and $V(h)$ is the algebraic variety defined as in \eqref{eq:def.variety}.

Define $d=\lceil \deg(f)/2 \rceil$, $d_i=\lceil \deg(g_i)/2 \rceil$, for $i=1,\dots,m$, and $w_j=\lceil \deg(h_j)/2 \rceil$, for $j=1,\dots,l$.
Here, $d_1=1$.

The moment relaxation of order $k$ for problem \eqref{eq:pop.in.eq.constrains} is expressed as:
\begin{equation}\label{eq:mom.relax.pop.in.eq.constrains}
\begin{array}{rl}
\tau_k=\min\limits_y& L_y(f)\\
\text{s.t.}& y=(y_\alpha)_{\alpha\in\N^n_{2k}}\subset \R\,,\\
&M_k(y)\succeq 0\,,\,y_0=1\,,\\
&M_{k-d_i}(g_iy)\succeq 0\,,\,i=1,\dots,m\,,\\
&M_{k-w_j}(h_jy)=0\,,\,j=1,\dots,l\,.
\end{array}
\end{equation}
For this relaxation, we have the following properties:
$\tau_k\le \tau_{k+1}\le f^\star$ and  $\tau_k\to f^\star$ as $k\to \infty$.

Assume that problem \eqref{eq:mom.relax.pop.in.eq.constrains} has an optimal solution and that a lower bound $\underline \tau\le \tau_k$ is known.

\begin{algorithm}\label{alg:eq.con.pop.ball}
(In)equality constrained polynomial optimization over the unit ball
\begin{itemize}
\item Input: $\varepsilon>0$, $\underline \tau\in\R$, $a\in S(g)\cap V(h)$,\\ 
\phantom{1.1cm} coefficients $(f_\alpha)_{\alpha\in\N^n_{2k}}$ of $f\in\R[x]$,\\
\phantom{1.1cm} coefficients $(g_{i,\alpha})_{\alpha\in\N^n_{2d_i}}$ of $g_i\in\R[x]$, for $i=0,\dots,m$,\\
\phantom{1.1cm} coefficients $(h_{j,\alpha})_{\alpha\in\N^n_{2w_j}}$ of $h_j\in\R[x]$, for $j=1,\dots,l$.
\item Output: $\tau_k^{(\varepsilon)}\in\R$ and $Z_\varepsilon\in\mathbb S^{\bar N}$.
\end{itemize}
\begin{enumerate}
\item\label{step:convert.to.sdp.mom.relax.genpop} Convert relaxation to standard SDP:
\begin{enumerate}
\item Define $u_\alpha=\binom{k}{(k-|\alpha|,\alpha)}$, for $\alpha\in \N^n_k$, and  $p_\alpha=\sum_{j=|\alpha|}^{k-1} 2^{k-1-j} \binom{j}{(j-|\alpha|,\alpha)}$, for $\alpha\in\N^n_{k-1}$.
\item Let $W\in \mathbb S^{\binom{n+k}n}$ satisfy
\begin{equation}\label{eq:constraint.W}
\sum_{i=2}^m g_i\sum_{\alpha\in\N^n_{k-d_i}}x^{2\alpha}=v_k^\top Wv_k\,.
\end{equation}
\item Let 
\begin{equation}\label{eq:def.delta}
\delta=\frac{\min_{\alpha\in\N^n_k}u_\alpha}{2\sigma_{\max}(W)+1}\,.
\end{equation}
\item Set 
\begin{equation}\label{eq:def.p.i.alp}
\begin{array}{rl}
&p_\alpha^{(0)}=\frac{u_\alpha^{1/2}}{2^{(k+1)/2}}\,, \,\forall \alpha\in\N^n_k\,,\\[5pt]
&p_\alpha^{(1)}=\frac{p_\alpha^{1/2}}{2^{k/2}}\,,\,\forall \alpha\in\N^n_{k-d_1}\,,\\[5pt]
&p_\alpha^{(i)}=\delta^{1/2}\,,\,\forall \alpha\in\N^n_{k-d_i}\,,\,i=2,\dots,m\,.
\end{array}
\end{equation}
\item For $\gamma\in\N^n_{2k}$, let $\alpha_\gamma,\beta_\gamma\in\N^n_k$ be such that $\alpha_\gamma\le \beta_\gamma$ and $\alpha_\gamma+\beta_\gamma=\gamma$.
\item For $\gamma\in\N^n_{2k}$, for $\alpha,\beta\in\N^n_{k}$ satisfying $\alpha\le \beta$, $(\alpha,\beta)\ne (\alpha_\gamma,\beta_\gamma)$, and $\alpha+\beta=\gamma$, let $\bar A^{(\alpha,\beta,\gamma,0)}=(\bar A^{(\alpha,\beta,\gamma,0)}_{\xi,\zeta})_{\xi,\zeta\in\N^n_k}$ be the matrix defined by
\begin{equation}\label{eq:mat.A.alpha.beta.gamma.0}
\bar A^{(\alpha,\beta,\gamma,0)}_{\xi,\zeta}=
\begin{cases}
\frac{1}{p_{\xi}^{(0)}p_{\zeta}^{(0)}}&\text{if }(\xi,\zeta)=(\alpha_\gamma,\beta_\gamma)\,,\, \alpha_\gamma=\beta_\gamma\,,\\
\frac{1}{2p_{\xi}^{(0)}p_{\zeta}^{(0)}}&\text{if }(\xi,\zeta)\in\{(\alpha_\gamma,\beta_\gamma),(\alpha_\gamma,\beta_\gamma)\}\,,\,\alpha_\gamma\ne \beta_\gamma \,,\\
-\frac{1}{p_{\xi}^{(0)}p_{\zeta}^{(0)}}&\text{if }(\xi,\zeta)=(\alpha,\beta)\,,\, \alpha=\beta\,,\\
-\frac{1}{2p_{\xi}^{(0)}p_{\zeta}^{(0)}}&\text{if }(\xi,\zeta)\in\{(\alpha,\beta),(\alpha,\beta)\}\,,\,\alpha\ne \beta \,,\\
0&\text{else},
\end{cases}
\end{equation}
and set $\bar A^{(\alpha,\beta,\gamma)}=\diag(\bar A^{(\alpha,\beta,\gamma,0)},0,\dots,0)$.
\item For $i=1,\dots,m$, for $\alpha,\beta\in\N^n_{k-d_i}$, $\alpha\le \beta$, let $\bar A^{(i,\alpha,\beta,0)}=(\bar A^{(i,\alpha,\beta,0)}_{\xi,\zeta})_{\xi,\zeta\in\N^n_k}$ be the matrix defined by 
\begin{equation}\label{eq:mat.A.i.alp.beta.0}
\bar A^{(i,\alpha,\beta,0)}_{\xi,\zeta}=
\begin{cases}
\frac{  g_{i,\gamma-\alpha-\beta}}{2p_{\xi}^{(0)}p_{\zeta}^{(0)}}&\text{if }(\xi,\zeta)\in \left\{(\alpha_\gamma,\beta_\gamma),(\beta_\gamma,\alpha_\gamma)\,:\,\alpha_\gamma\ne \beta_\gamma\,,\right.\\
&\qquad\qquad\qquad\qquad\qquad\qquad\left.\gamma\in\alpha+\beta+\N^n_{2d_i}\right\}\,,\\
\frac{ g_{i,\gamma-\alpha-\beta}}{p_{\xi}^{(0)}p_{\zeta}^{(0)}}&\text{if }(\xi,\zeta)\in \left\{(\alpha_\gamma,\beta_\gamma),\,:\,\alpha_\gamma= \beta_\gamma\,,\right.\\
&\qquad\qquad\qquad\qquad\qquad\qquad\left.\gamma\in\alpha+\beta+\N^n_{2d_i}\right\}\,,\\
0&\text{else,}
\end{cases}
\end{equation}
let $\bar A^{(i,\alpha,\beta,i)}=(\bar A^{(i,\alpha,\beta,i)}_{\xi,\zeta})_{\xi,\zeta\in\N^n_{k-d_i}}$ be the matrix defined by 
\begin{equation}\label{eq:mat.A.i.alp.beta.i}
\bar A^{(i,\alpha,\beta,i)}_{\xi,\zeta}=
\begin{cases}
-\frac{1}{2p_{\xi}^{(0)}p_{\zeta}^{(0)}}&\text{if }(\xi,\zeta)\in \{(\alpha,\beta),(\beta,\alpha)\}\,,\,\alpha\ne \beta\,,\\
-\frac{1}{p_{\xi}^{(0)}p_{\zeta}^{(0)}}&\text{if }(\xi,\zeta)=(\alpha_\gamma,\beta_\gamma)\,,\,\alpha= \beta\,,\\
0&\text{else,}
\end{cases}
\end{equation}
and set $\bar A^{(i,\alpha,\beta)}=\diag(\bar A^{(i,\alpha,\beta,0)},0,\dots,0,\bar A^{(i,\alpha,\beta,i)},0,\dots,0)$.
\item For $j=1,\dots,l$, for $\alpha\in\N^n_{2(k-w_j)}$, let $\bar A^{(j,\alpha,0)}=(\bar A^{(j,\alpha,0)}_{\xi,\zeta})_{\xi,\zeta\in\N^n_k}$ be the matrix defined by 
\begin{equation}\label{eq:def.mat.A.j.alpha.0}
\bar A^{(j,\alpha,0)}_{\xi,\zeta}=
\begin{cases}
\frac{  h_{j,\gamma-\alpha}}{2p_{\xi}^{(0)}p_{\zeta}^{(0)}}&\text{if }(\xi,\zeta)\in \left\{(\alpha_\gamma,\beta_\gamma),(\beta_\gamma,\alpha_\gamma)\,:\,\alpha_\gamma\ne \beta_\gamma\,,\right.\\
&\qquad\qquad\qquad\qquad\qquad\qquad\qquad\left.\gamma\in\alpha+\N^n_{2w_j}\right\}\,,\\
\frac{ h_{j,\gamma-\alpha}}{p_{\xi}^{(0)}p_{\zeta}^{(0)}}&\text{if }(\xi,\zeta)\in \{(\alpha_\gamma,\beta_\gamma),\,:\,\alpha_\gamma= \beta_\gamma\,,\,\gamma\in\alpha+\N^n_{2w_j}\}\,,\\
0&\text{else,}
\end{cases}
\end{equation}
and set $\bar A^{(j,\alpha)}=\diag(\bar A^{(j,\alpha,0)},0,\dots,0)$.
\item Let $\bar A^{(0,0)}=(\bar A_{\alpha,\beta}^{(0,0)})_{\alpha,\beta\in\N^n_k}$ be the matrix defined by
\begin{equation}\label{eq:def.mat.A.0.0}
\bar A_{\alpha,\beta}^{(0,0)}=
\begin{cases}
\frac{1}{p_0^{(0)2}}&\text{if }\alpha=\beta=0\,,\\
0&\text{else},
\end{cases}
\end{equation}
and set $\bar A^{(0)}=\diag(\bar A^{(0,0)},0,\dots,0)$.
\item Let $\bar C^{(0)}=(\bar C^{(0)}_{\alpha,\beta})_{\alpha,\beta\in\N^n_{k}}$ be the matrix defined by
\begin{equation}\label{eq:def.mat.C.ineq.pop}
\bar C^{(0)}_{\alpha,\beta}=\begin{cases}
 \frac{f_\gamma }{p_{\alpha}^{(0)}p_{\beta}^{(0)}}&\text{if }(\alpha,\beta)=(\alpha_\gamma,\beta_\gamma)\,,\,\alpha_\gamma=\beta_\gamma\,,\\
  \frac{f_\gamma }{2p_{\alpha}^{(0)}p_{\beta}^{(0)}}&\text{if }(\alpha,\beta)\in \{(\alpha_\gamma,\beta_\gamma),(\beta_\gamma,\alpha_\gamma)\}\,,\,\alpha_\gamma\ne\beta_\gamma\,,\\
  0&\text{else},
\end{cases}
\end{equation}
and set $\bar C=\diag(\bar C^{(0)},0,\dots,0)$.
\item Set $\bar N=\sum_{i=0}^m\binom{n+k-d_i}n$.
\item Let $\bar A_1,\dots,\bar A_{\tilde M}$ be the following matrices:
\begin{enumerate}
\item $\bar A^{(\alpha,\beta,\gamma)}$, for $\alpha,\beta\in\N^n_{k}$ satisfying $\alpha\le \beta$, $(\alpha,\beta)\ne (\alpha_\gamma,\beta_\gamma)$, and $\alpha+\beta=\gamma$, for $\gamma\in\N^n_{2k}$;
\item $\bar A^{(i,\alpha,\beta)}$, for $\alpha,\beta\in\N^n_{k-d_i)}$, for $i=1,\dots,m$;
\item $\bar A^{(j,\alpha)}$, for $\alpha\in\N^n_{2(k-w_j)}$, for $j=1,\dots,l$;
\item $\bar A^{(0)}$.
\end{enumerate}
In particular, $\bar A_{\tilde M}=\bar A^{(0)}$.
\item Let $\bar b_1,\dots,\bar b_{\tilde M}$ be  defined by $\bar b_i=\begin{cases}
1&\text{if } i=\tilde M\,,\\
0&\text{else}.
\end{cases}$
\item Set $\tilde C=\diag(\bar C,0)$, $\tilde A_i=\diag(\bar A_i,0)$, $\tilde b_i= \bar b_i$, $i=1,\dots,\tilde M$, $N=\bar N+1$.
\item\label{step:gauss.elimination.mom.relax.genpop} Run Algorithm \ref{alg:Gaussian.elimination} (Gaussian elimination) to get $\check A_i\in \mathbb S^N$, $\check b_i\in\R$, $i=1,\dots,M$.
\item\label{step:gram.schmidt.mom.relax.genpop} Run Algorithm \ref{alg:Gram-Schmidt} (Gram-Schmidt process) to get $A_i\in\mathbb S^N$ and $b_i\in\R$, $i=1,\dots,M$.

\item Set $c=\|\tilde C\|_F$, $C=\frac{\tilde C}{c}$, $\lambda_{\max}=\frac{f(a)}{c}$ and $\lambda_{\min}=\frac{\underline \tau}{c}$.
\end{enumerate}

\item\label{step:hamilnion.updates.mom.relax.genpop} Run Algorithm \ref{alg:Binary.search.HU} (Binary search using Hamiltonian updates) 
to obtain $\underline \lambda_T\in\R$ and $X_\varepsilon=((X_\varepsilon)_{ij})_{i,j=1,\dots,N}\in\mathbb S^N$.

\item Extract approximate results: Set $\tau_k^{(\varepsilon)}=\underline\lambda_T c$ and $Z_\varepsilon=((X_\varepsilon)_{ij})_{i,j=1,\dots,\bar N}$.
\end{enumerate}
\end{algorithm}

\begin{lemma}\label{lem:trce.1.moment.relax.ineq.pop}
Let $f\in\R[x]_{2k}$ and $g_i\in\R[x]_{2d_i}$, for $i=1,\dots,m$, $h_j\in\R[x]_{2w_j}$, for $j=1,\dots,l$, with $g_1=1-\|x\|_{\ell_2}^2$.
Let $p_\alpha^{(i)}$, for $\alpha\in \N^n_{k-d_i}$ and $i=0,\dots,m$, be generated by Algorithm \ref{alg:eq.con.pop.ball}, and define $P^{(i)}=\diag(p_\alpha^{(i)})_{\alpha\in\N^n_{k-d_i}}$, for $i=0,\dots,m$.
Then, for any feasible solution $y$ to problem \eqref{eq:mom.relax.pop.in.eq.constrains},
\begin{equation} 
\tr(PD_k(gy)P)\le 1\,,
\end{equation}
where $D_k(gy)=\diag((M_{k-d_i}(g_iy))_{i=0,\dots,m})$ and $P=\diag((P^{(i)})_{i=0,\dots,m})$.
\end{lemma}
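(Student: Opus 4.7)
The plan is to mimic the strategy behind Lemma~\ref{lem:trce.1.moment.relax}, but now split the sum $\tr(PD_k(gy)P)=\sum_{i=0}^m \tr(P^{(i)}M_{k-d_i}(g_iy)P^{(i)})$ into three groups according to the index $i$: the moment block ($i=0$), the ball block ($i=1$, handled via the identity $2^k=(1+\|x\|_{\ell_2}^2)^k+g_1 p$), and the remaining constraint blocks ($i\ge 2$, handled via the matrix $W$ defined in~\eqref{eq:constraint.W}). The specific scaling weights in $P^{(0)},P^{(1)},P^{(i)}$ (and the explicit value of $\delta$) are chosen precisely so that these three contributions combine to a quantity $\le 1$.

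First, I will invoke the polynomial identity $2^k=(1+\|x\|_{\ell_2}^2)^k+g_1 p$ used in the proof of Lemma~\ref{lem:trce.1.moment.relax}, together with the multinomial expansions $(1+\|x\|_{\ell_2}^2)^k=\sum_{\alpha\in\N^n_k}u_\alpha x^{2\alpha}$ and $p=\sum_{\alpha\in\N^n_{k-1}}p_\alpha x^{2\alpha}$ (the second expansion being exactly the one that defines $p_\alpha$ in Step~1 of Algorithm~\ref{alg:eq.con.pop.ball}). Applying the Riesz functional $L_y$ and recognising the resulting sums as traces against $M_k(y)$ and $M_{k-1}(g_1y)$ yields, after dividing by $2^k$, the normalisation identity
\begin{equation}
2\tr\bigl(P^{(0)}M_k(y)P^{(0)}\bigr)+\tr\bigl(P^{(1)}M_{k-1}(g_1y)P^{(1)}\bigr)=1,
\end{equation}
which uses only the definitions $p_\alpha^{(0)}=u_\alpha^{1/2}/2^{(k+1)/2}$ and $p_\alpha^{(1)}=p_\alpha^{1/2}/2^{k/2}$ and is independent of $\delta$ and the $g_i$ with $i\ge 2$.

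Next, for the remaining blocks I will use the defining equality of $W$ in~\eqref{eq:constraint.W}: applying $L_y$ to both sides of $\sum_{i=2}^m g_i\sum_{\alpha\in\N^n_{k-d_i}}x^{2\alpha}=v_k^\top W v_k$ gives
\begin{equation}
\sum_{i=2}^m \tr\bigl(M_{k-d_i}(g_iy)\bigr)=\tr\bigl(WM_k(y)\bigr),
\end{equation}
so that $\sum_{i=2}^m \tr(P^{(i)}M_{k-d_i}(g_iy)P^{(i)})=\delta\,\tr(WM_k(y))$, since $P^{(i)}=\delta^{1/2}I$ for $i\ge 2$. Consequently, the desired bound $\tr(PD_k(gy)P)\le 1$ is equivalent to the reduced inequality
\begin{equation}
\delta\,\tr(WM_k(y))\le \tr\bigl(P^{(0)}M_k(y)P^{(0)}\bigr),
\end{equation}
after using the normalisation identity of the previous step.

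The last step, which is the technical core, is to verify this reduced inequality for every feasible $y$. Since $M_k(y)\succeq 0$, one can bound $\tr(WM_k(y))$ from above by $\sigma_{\max}(W)\tr(M_k(y))$, and bound $\tr(P^{(0)}M_k(y)P^{(0)})=\tr((P^{(0)})^2 M_k(y))$ from below by $(\min_{\alpha}u_\alpha/2^{k+1})\tr(M_k(y))$. The explicit choice $\delta=\min_\alpha u_\alpha/(2\sigma_{\max}(W)+1)$ is then the quantity that closes the inequality. The main obstacle will be this last trace/matrix comparison: one must be careful about which of the several feasibility constraints on $y$ are actually used, because the naive bound $\tr(WM_k(y))\le\sigma_{\max}(W)\tr(M_k(y))$ is generally too loose unless combined with the additional information that $\sum_\alpha p_\alpha L_y(g_1x^{2\alpha})\ge 0$ (which follows from $M_{k-1}(g_1y)\succeq 0$ and $p_\alpha\ge 0$) in order to also bound $\tr(M_k(y))$ in terms of the first identity.
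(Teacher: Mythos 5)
Your first three steps are correct and, up to packaging, coincide with the paper's argument. The normalisation identity $2\tr(P^{(0)}M_k(y)P^{(0)})+\tr(P^{(1)}M_{k-d_1}(g_1y)P^{(1)})=1$ (which uses $y_0=1$), the identity $\sum_{i=2}^m\tr(M_{k-d_i}(g_iy))=\tr(WM_k(y))$ obtained by applying $L_y$ to \eqref{eq:constraint.W}, and the resulting reduction of the lemma to $\delta\,\tr(WM_k(y))\le\tr\bigl((P^{(0)})^2M_k(y)\bigr)$ are all fine; the paper reaches the same point by writing one polynomial identity $2^k=\tfrac12\sum_\alpha u_\alpha x^{2\alpha}+g_1\sum_\alpha p_\alpha x^{2\alpha}+\delta\sum_{i\ge2}g_i\sum_\alpha x^{2\alpha}+v_k^\top Gv_k$ with $G=\tfrac12[\diag((u_\alpha)_\alpha)-2\delta W]\succeq0$, dividing by $2^k$, applying $L_y$ and discarding $\tr(GM_k(y))\ge0$.

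The gap is your closing step. Since $(P^{(0)})^2=\diag\bigl((u_\alpha/2^{k+1})_\alpha\bigr)$, the route you sketch needs $\delta\,\sigma_{\max}(W)\le\min_\alpha u_\alpha/2^{k+1}$, whereas $\delta=\min_\alpha u_\alpha/(2\sigma_{\max}(W)+1)$ only guarantees $\delta\,\sigma_{\max}(W)<\min_\alpha u_\alpha/2$: you are short by a factor $2^k$, and the patch you suggest (using $M_{k-d_1}(g_1y)\succeq0$ and $p_\alpha\ge0$) only gives the upper bound $\tr(\diag((u_\alpha)_\alpha)M_k(y))\le 2^k$, which constrains the wrong side of the comparison. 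In fact, with the literal weights $p^{(i)}_\alpha=\delta^{1/2}$ for $i\ge2$ the reduced inequality can genuinely fail: for $n=1$, $k=1$, $g_1=1-x^2$, $g_2=g_3=x$ (and, say, $l=1$, $h_1=1-x$) one has $\delta=1/3$, and the moment vector of the Dirac measure at $x=1$ gives $\tr(PD_k(gy)P)=\tfrac12+\tfrac23=\tfrac76>1$. What actually makes the bound close is a scaling that your reduction exposes: after dividing the polynomial identity by $2^k$, the blocks $i\ge2$ must carry weight $\delta/2^k$ (in the paper's displayed division step the coefficient $\delta$ is left undivided, which is where this factor hides), and with that weight the reduced inequality becomes $\delta\,\tr(WM_k(y))\le\tfrac12\tr(\diag((u_\alpha)_\alpha)M_k(y))$, which follows from $\diag((u_\alpha)_\alpha)-2\delta W=2G\succeq0$ and $M_k(y)\succeq0$. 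So to complete the proof you must either take $p^{(i)}_\alpha=(\delta/2^k)^{1/2}$ for $i\ge2$ (equivalently shrink $\delta$ by $2^k$) and close via the matrix inequality $2\delta W\preceq\diag((u_\alpha)_\alpha)$, rather than via the scalar bound $\delta\sigma_{\max}(W)$ against $\min_\alpha u_\alpha/2^{k+1}$; as written, your last comparison does not hold.
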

\begin{proof}
Using the identity $a^k-b^k=(a-b)\sum_{j=0}^k a^{k-1-j}b^j$ with $a=2$ and $b=1+\|x\|_{\ell_2}^2$, we have
\begin{equation}\label{eq:equality.pol.ineq.pop}
2^k=(1+\|x\|_{\ell_2}^2)^k+(1-\|x\|_{\ell_2}^2)p\,,
\end{equation}
where $p=\sum_{j=0}^{k-1} 2^{k-1-j}(1+\|x\|_{\ell_2}^2)^j$.
For $\alpha=(\alpha_1,\dots,\alpha_n)$ and $\bar\alpha=(\alpha_0,\alpha)$, we express $(1+\|x\|_{\ell_2}^2)^k$ in term  of monomials:
\begin{equation}
\begin{array}{rl}
(1+\|x\|_{\ell_2}^2)^k=&(1+x_1^2+\dots+x_n^2)^k\\[5pt]
=& \sum\limits_{\arraycolsep=1.4pt\def\arraystretch{.7}
\begin{array}{cc}
\scriptstyle \bar\alpha\in\N^{n+1}\\
\scriptstyle |\bar\alpha|=k
\end{array}} \binom{k}{\bar \alpha} 1^{\alpha_0} x_1^{2\alpha_1}\dots x_n^{2\alpha_n}=\sum\limits_{\alpha\in\N^n_k} \binom{k}{(k-|\alpha|,\alpha)} x^{2\alpha} \,.
\end{array}
\end{equation}
With $u_\alpha=\binom{k}{(k-|\alpha|,\alpha)}$, we obtan $(1+\|x\|_{\ell_2}^2)^k=\sum_{\alpha\in\N^n_k}u_\alpha x^{2\alpha}$.
Similarly, $p$ can be written as
\begin{equation}
\begin{array}{rl}
p=&\sum_{j=0}^{k-1} 2^{k-1-j}\sum_{\alpha\in\N^n_j} \binom{j}{(j-|\alpha|,\alpha)} x^{2\alpha}\\[10pt]
=& \sum_{\alpha\in\N^n_{k-1}} \sum_{j=|\alpha|}^{k-1} 2^{k-1-j} \binom{j}{(j-|\alpha|,\alpha)} x^{2\alpha}\,.
\end{array}
\end{equation}
With $p_\alpha=\sum_{j=|\alpha|}^{k-1} 2^{k-1-j} \binom{j}{(j-|\alpha|,\alpha)}$, we have $p=\sum_{\alpha\in\N^n_{k-d_1}} p_\alpha x^{2\alpha}$.
Equality \eqref{eq:equality.pol.ineq.pop} becomes
\begin{equation}
2^k=\sum_{\alpha\in\N^n_k}u_\alpha x^{2\alpha}+g_1\sum_{\alpha\in\N^n_{k-d_1}} p_\alpha x^{2\alpha}
\end{equation}
With $\delta>0$, we write
\begin{equation}\label{eq:equality.pol.ineq.pop2}
\begin{array}{rl}
2^k=&\frac{1}{2}\sum_{\alpha\in\N^n_k}u_\alpha x^{2\alpha}
+g_1\sum_{\alpha\in\N^n_{k-d_1}} p_\alpha x^{2\alpha}
+\delta\sum_{i=2}^m g_i\sum_{\alpha\in\N^n_{k-d_i}}  x^{2\alpha}\\[10pt]
&+ v_k^\top G v_k\,,
\end{array}
\end{equation}
where $W\in \mathbb S^{\binom{n+k}n}$ satisfies \eqref{eq:constraint.W} and $G=\frac{1}{2}[\diag((u_\alpha)_{\alpha\in\N^n_k})-2\delta W]$.
Choosing $\delta$ as in \eqref{eq:def.delta} gives $G\succeq 0$.
Dividing both sides of \eqref{eq:equality.pol.ineq.pop2} implies
\begin{equation}
\begin{array}{rl}
1=&\sum_{\alpha\in\N^n_k}\frac{u_\alpha}{2^{k+1}} x^{2\alpha}
+g_1\sum_{\alpha\in\N^n_{k-d_1}} \frac{p_\alpha }{2^k}x^{2\alpha}
+\delta(\sum_{i=2}^m g_i\sum_{\alpha\in\N^n_{k-d_i}}  x^{2\alpha})\\[10pt]
&+\frac{1}{2^k}v_k^\top G v_k\\[10pt]
=&\sum_{i=0}^mg_i\sum_{\alpha\in\N^n_{k-d_i}}p_\alpha^{(i)2} x^{2\alpha}+\frac{1}{2^k}v_k^\top G v_k\,,
\end{array}
\end{equation}
where $p_\alpha^{(i)}$ is defined as in \eqref{eq:def.p.i.alp}
With $P^{(i)}=\diag(p_\alpha^{(i)})_{\alpha\in\N^n_{k-d_i}}$, we obtain
\begin{equation}
1=\sum_{i=0}^mg_i v_{k-d_i}^\top P^{(i)2}v_{k-d_i}+\frac{1}{2^k}v_k^\top G v_k\,.
\end{equation}
Let $y$ be a feasible solution of problem \eqref{eq:mom.relax.pop.eq.constrains}.
Applying $L_y$ for both sides yields
\begin{equation}
\begin{array}{rl}
1=&\sum_{i=0}^m\tr(M_{k-d_i}(g_iy)P^{(i)2})+\frac{1}{2^k}\tr(GM_k(y))\\[5pt]
\ge &\sum_{i=0}^m\tr(M_{k-d_i}(g_iy)P^{(i)2})\\[5pt]
=&\sum_{i=0}^m\tr(P^{(i)}M_{k-d_i}(g_iy)P^{(i)})=\tr(PD_k(y)P)\,.
\end{array}
\end{equation}
The last inequality follows from $G\succeq 0$ and $M_k(y)\succeq 0$.
Hence, the result is proved.
\end{proof}
\begin{lemma}\label{lem:properties.tilde.prob.ineq.pop}
Let $f\in\R[x]_{2k}$ and $g_i\in\R[x]_{2d_i}$, for $i=1,\dots,m$, $h_j\in\R[x]_{2w_j}$, for $j=1,\dots,l$, with $g_1=1-\|x\|_{\ell_2}^2$.
Let $\bar C$, $\bar A_i$, for $i=1,\dots,\tilde M$, $\bar N$, $c$, $\lambda_{\min}$, and $\lambda_{\max}$ be generated by Algorithm \ref{alg:eq.con.pop.ball}.
The following properties hold:
\begin{enumerate}[(i)]
\item \label{equi.proble.mom.relax.gen.pop} Problem \eqref{eq:mom.relax.pop.in.eq.constrains} is equivalent to  the semidefinite program (SDP):
\begin{equation}\label{eq:sdp.moment.relax.tr.1.ineq.pop}
\begin{array}{rl}
\tau_k=\min\limits_Z&\tr( \bar C Z)\\
\text{s.t.}&Z\in\mathbb S^{\bar N}_+\,,\,\tr(Z)\le 1\,,\\
&\tr( \bar A_i Z) =\bar b_i\,,\,i=1,\dots,\tilde M\,.	
\end{array}
\end{equation}
\item\label{norm.C.mom.relax.gen.pop} $\|\bar C\|_F\le 2^{k+1} \|f\|_{\ell_2}$ and $\|\text{vec}(\bar C)\|_{\ell_1}\le 2^{k+1} \|f\|_{\ell_1}$.
\item\label{bound.size.mat.mom.relax.gen.pop} If $k\ge \max\{d_1,\dots,d_m\}$, then the size $\bar N$ is bounded as: $\bar N\le (m+1)\binom{n+k}n$.
\item\label{bound.num.aff.cons.mom.relax.gen.pop} If $k\ge \max\{d_1,\dots,d_m,w_1,\dots,w_l\}$, then the total number of constraints $\tilde M$ is
\begin{equation}\label{eq:num.aff.cons.M.tilde.ineq.pop}
\begin{array}{rl}
\tilde M=&1+\sum_{j=1}^l \binom{n+2(k-w_j)}n+\frac{1}{2}\binom{n+k}n[\binom{n+k}n+1]-\binom{n+2k}n\\[5pt]
&+\frac{1}2\sum_{i=1}^m \binom{n+k-d_i}n[\binom{n+k-d_i}n+1]\,.
\end{array}
\end{equation}
\item\label{no.trace.mom.relax.genpop} The optimal value of \eqref{eq:sdp.moment.relax.tr.1.ineq.pop} is unchanged when the trace constraint is omitted:
\begin{equation}\label{eq:sdp.mom.relax.no.tr.gen.pop}
\begin{array}{rl}
\tau_k=\min\limits_Z&\tr( \bar C Z)\\
\text{s.t.}&Z\in\mathbb S^{\bar N}_+\,,\\
&\tr( \bar A_i Z) =\bar b_i\,,\,i=1,\dots,\tilde M\,.	
\end{array}
\end{equation}
Moreover, if $y$ is a feasible solution for problem \eqref{eq:mom.relax.pop.in.eq.constrains}, then $PD_k(gy)P$ is a feasible solution for \eqref{eq:sdp.mom.relax.no.tr.gen.pop}, where $P$ is constructed as in Lemma \ref{lem:trce.1.moment.relax.ineq.pop}.
\end{enumerate}
\end{lemma}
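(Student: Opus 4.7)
The plan is to mirror the argument used for Lemma \ref{lem:properties.tilde.prob.eq.pop} in the purely-equality case, extending it by block-diagonally incorporating the localizing matrices for the inequality constraints $g_i$. Given a feasible $y$ for \eqref{eq:mom.relax.pop.in.eq.constrains}, I would define $Z = PD_k(gy)P$ with $P = \diag((P^{(i)})_{i=0,\dots,m})$ as in Lemma \ref{lem:trce.1.moment.relax.ineq.pop}, so that $Z \succeq 0$ and $\tr(Z) \le 1$ automatically. Writing the blocks as $Z_i = P^{(i)} M_{k-d_i}(g_i y) P^{(i)}$, the entries are $(Z_i)_{\alpha,\beta} = p_\alpha^{(i)} p_\beta^{(i)} L_y(g_i x^{\alpha+\beta})$, and in particular the $y_\gamma$'s can be recovered (up to the known scaling factors $p^{(0)}$) from the $0$-th block $Z_0$.

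For part (i), I would translate the three families of constraints defining \eqref{eq:mom.relax.pop.in.eq.constrains} into affine constraints on $Z$: (a) the redundancy constraints within $Z_0$ (multiple $(\alpha,\beta)$ pairs with the same sum $\gamma$) yield the matrices $\bar A^{(\alpha,\beta,\gamma)}$ of \eqref{eq:mat.A.alpha.beta.gamma.0}; (b) the coupling constraints linking the $i$-th block $Z_i$ back to $Z_0$ through the coefficients of $g_i$ yield $\bar A^{(i,\alpha,\beta)}$ of \eqref{eq:mat.A.i.alp.beta.0}--\eqref{eq:mat.A.i.alp.beta.i}; (c) the equality constraints $M_{k-w_j}(h_j y) = 0$ yield $\bar A^{(j,\alpha)}$ of \eqref{eq:def.mat.A.j.alpha.0}; and (d) the normalization $y_0 = 1$ is captured by $\bar A^{(0)}$ from \eqref{eq:def.mat.A.0.0}. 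Reading off $L_y(f)$ in the coordinates of $Z_0$ then gives exactly $\tr(\bar C Z)$ from \eqref{eq:def.mat.C.ineq.pop}. Conversely, every feasible $Z$ of \eqref{eq:sdp.moment.relax.tr.1.ineq.pop} has blocks forced by constraint family (a) to come from a unique sequence $y$, and constraint family (b) then forces $Z_i = P^{(i)} M_{k-d_i}(g_i y) P^{(i)} \succeq 0$, so $M_{k-d_i}(g_iy)\succeq 0$ by congruence (using that $P^{(i)} \succ 0$). This gives a bijection preserving objective values.

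For part (ii), the Frobenius and $\ell_1$ bounds follow directly from the explicit formula \eqref{eq:def.mat.C.ineq.pop}: every nonzero entry of $\bar C^{(0)}$ has the form $f_\gamma/(c\,p^{(0)}_{\alpha_\gamma} p^{(0)}_{\beta_\gamma})$ with $c \in \{1,2\}$, and since $p_\alpha^{(0)} = u_\alpha^{1/2}/2^{(k+1)/2} \ge 2^{-(k+1)/2}$ (as $u_\alpha \ge 1$), the per-entry normalization factor is bounded by $2^{k+1}$; summing squares (resp.\ absolute values) over $\gamma$ yields the $\ell_2$ (resp.\ $\ell_1$) bound. For (iii), directly $\bar N = \sum_{i=0}^m \binom{n+k-d_i}{n} \le (m+1)\binom{n+k}{n}$ since $d_i \ge 0$. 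For (iv), I would decompose the constraint count into the four contributions listed in \eqref{eq:num.aff.cons.M.tilde.ineq.pop}: one constraint for $y_0=1$; $\binom{n+2(k-w_j)}{n}$ constraints per equality $M_{k-w_j}(h_jy)=0$; $\frac12\binom{n+k}{n}[\binom{n+k}{n}+1] - \binom{n+2k}{n}$ redundancy constraints inside $Z_0$ (upper-triangular entries minus the number of distinct $\gamma \in \N^n_{2k}$); and $\frac12 \sum_{i=1}^m \binom{n+k-d_i}{n}[\binom{n+k-d_i}{n}+1]$ coupling constraints that tie each entry of each higher block $Z_i$ to the $y$-coordinates read out from $Z_0$.

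For part (v), removing $\tr(Z) \le 1$ does not change the optimum: any $Z$ feasible for the affine constraints $\tr(\bar A_i Z) = \bar b_i$ arises from a feasible $y$, and then $\tr(Z) = \tr(PD_k(gy)P) \le 1$ is forced by Lemma \ref{lem:trce.1.moment.relax.ineq.pop}, which already exploits the key polynomial identity \eqref{eq:equality.pol.ineq.pop2}. The ``moreover'' claim is immediate from the construction $Z = PD_k(gy)P$. The step I expect to be the main obstacle is the bookkeeping in (i) and (iv)---correctly tracking which monomial index $\gamma$ contributes to each entry $(Z_i)_{\alpha,\beta}$ through the coefficients of $g_i$, and avoiding double-counting in the redundancies among $(\alpha,\beta)$ pairs that share the same $\gamma$; but this is purely combinatorial once the bijection in (i) is set up cleanly.
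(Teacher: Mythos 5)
Your proposal is correct and follows essentially the same approach as the paper's proof: set $Z=PD_k(gy)P$, translate the moment/localizing consistency, coupling, equality, and normalization constraints into the families $\bar A^{(\alpha,\beta,\gamma)}$, $\bar A^{(i,\alpha,\beta)}$, $\bar A^{(j,\alpha)}$, $\bar A^{(0)}$, read off the objective as $\tr(\bar C Z)$, bound $\|\bar C\|$ entrywise via $p_\alpha^{(0)}\ge 2^{-(k+1)/2}$, count constraints blockwise, and invoke Lemma \ref{lem:trce.1.moment.relax.ineq.pop} to show the trace constraint is redundant. The one observation worth retaining from your write-up is the explicit converse direction for (i) — recovering $y$ from a feasible $Z$ and using $P^{(i)}\succ 0$ to transfer positivity back to the localizing matrices — which the paper leaves implicit.
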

\begin{proof}
\eqref{equi.proble.mom.relax.gen.pop} Let $y$ be a feasible solution of problem \eqref{eq:mom.relax.pop.eq.constrains}.
Define $Z=PD_k(y)P$.
Since $D_k(y)\in\mathbb S^{\bar N}_+$, it follows that $Z\in\mathbb S^{\bar N}_+$.
Using Lemma \ref{lem:trce.1.moment.relax.ineq.pop}, we have $\tr(Z)\le1$.
We express $Z$ in block-diagonal form as 
\begin{equation}
Z=\diag(Z^{(0)},\dots,Z^{(m)})\,,
\end{equation}
where $Z^{(i)}=(z^{(i)}_{\alpha,\beta})_{\alpha,\beta\in\N^n_{k-d_i}}$, for $i=0,\dots,m$.
In particular, $Z^{(0)}=P^{(0)}M_k(y)P^{(0)}$ with $z^{(0)}_{\alpha,\beta}=p^{(0)}_\alpha  y_{\alpha+\beta}p^{(0)}_\beta$.
Consequently, $y_{\alpha+\beta}= \frac{z^{(0)}_{\alpha,\beta}}{p^{(0)}_\alpha p^{(0)}_\beta}$.
By the definition of $\alpha_\gamma$ and $\beta_\gamma$, we have $y_\gamma= \frac{z^{(0)}_{\alpha_\gamma,\beta_\gamma}}{p^{(0)}_{\alpha_\gamma}p^{(0)}_{\beta_\gamma}}$.
Moreover, for all $\alpha,\beta\in\N^n_{k}$ satisfying $\alpha\le \beta$, $(\alpha,\beta)\ne (\alpha_\gamma,\beta_\gamma)$, and $\alpha+\beta=\gamma$,
\begin{equation}\label{eq:constrain.moment.mat}
\frac{z^{(0)}_{\alpha_\gamma,\beta_\gamma}}{p_{\alpha_\gamma}^{(0)}p_{\beta_\gamma}^{(0)}}=y_\gamma=\frac{z^{(0)}_{\alpha,\beta}}{p_\alpha^{(0)}p_\beta^{(0)}}\,.
\end{equation}
Defining $\bar A^{(\alpha,\beta,\gamma)}=\diag(\bar A^{(\alpha,\beta,\gamma,0)},0,\dots,0)$ with $\bar A^{(\alpha,\beta,\gamma,0)}$ as in \eqref{eq:mat.A.alpha.beta.gamma.0}, constraint \eqref{eq:constrain.moment.mat} becomes 
\begin{equation}
\tr( \bar A^{(\alpha,\beta,\gamma)} Z)=\tr(\bar A^{(\alpha,\beta,\gamma,0)} Z^{(0)})=0\,.
\end{equation}
The number of such $\bar A^{(\alpha,\beta,\gamma)}$, for $\alpha,\beta\in\N^n_{k}$ satisfying $\alpha\le \beta$, $(\alpha,\beta)\ne (\alpha_\gamma,\beta_\gamma)$, and $\alpha+\beta=\gamma$, $\gamma\in\N^n_{2k}$, is
 $\frac{1}{2}\binom{n+k}n[\binom{n+k}n+1]-\binom{n+2k}n$.

For $i=1,\dots,m$, the constraint $Z^{(i)}=P^{(i)}M_{k-d_i}(g_iy)P^{(i)}$ is equivalent to $P^{(i)-1}Z^{(i)}P^{(i)-1}=M_{k-d_i}(g_iy)$.
For $\alpha,\beta\in\N^n_{k-d_i}$, $\alpha\le \beta$, this leads to
\begin{equation}\label{eq.cons.localizing.mat}
\begin{array}{rl}
\frac{z^{(i)}_{\alpha,\beta}}{p_{\alpha}^{(i)}p_{\beta}^{(i)}}
=&\sum_{\eta\in\N^n_{2d_i}}g_{i,\eta}y_{\alpha+\beta+\eta}\\
=&\sum_{\gamma\in\alpha+\beta+\N^n_{2d_i}}g_{i,\gamma-\alpha-\beta} y_\gamma\\
=&\sum_{\gamma\in\alpha+\beta+\N^n_{2d_i}}g_{i,\gamma-\alpha-\beta} \left(\frac{z^{(0)}_{\alpha_\gamma,\beta_\gamma}}{p_{\alpha_\gamma}^{(0)}p_{\beta_\gamma}^{(0)}}\right)\\[10pt]
=&\sum_{\gamma\in\alpha+\beta+\N^n_{2d_i}} \frac{ g_{i,\gamma-\alpha-\beta}}{p_{\alpha_\gamma}^{(0)}p_{\beta_\gamma}^{(0)}} z^{(0)}_{\alpha_\gamma,\beta_\gamma}\,.
\end{array}
\end{equation}
Introducing $\bar A^{(i,\alpha,\beta)}=\diag(\bar A^{(i,\alpha,\beta,0)},0,\dots,0,\bar A^{(i,\alpha,\beta,i)},0,\dots,0)$, where $\bar A^{(i,\alpha,\beta,0)}$ are defined as in \eqref{eq:mat.A.i.alp.beta.0}
and $\bar A^{(i,\alpha,\beta,i)}$ defined as in \eqref{eq:mat.A.i.alp.beta.i}, the constraint  \eqref{eq.cons.localizing.mat} becomes
\begin{equation}
\tr( \bar A^{(i,\alpha,\beta)} Z)=\tr(\bar A^{(i,\alpha,\beta,0)} Z^{(0)})+\tr( \bar A^{(i,\alpha,\beta,i)} Z^{(i)})=0\,.
\end{equation}
The number of such $\bar A^{(i,\alpha,\beta)}$, for $\alpha,\beta\in\N^n_{k-d_i}$, $\alpha\le \beta$, $i=1,\dots,m$, is
\begin{equation}
\sum_{i=1}^m \frac{1}2|\N^n_{k-d_i)}|(|\N^n_{k-d_i)}|+1)=  \frac{1}2\sum_{i=1}^m \binom{n+k-d_i}n\left[\binom{n+k-d_i}n+1\right]\,.
\end{equation}
For $j=1,\dots,l$, the constraint $M_{k-w_j}(h_jy)=0$ is equivalent to 
\begin{equation}\label{eq:localizing.mat.cons.eq}
\begin{array}{rl}
\forall \alpha\in\N^n_{2(k-w_j)}\,,\,0=&L_y(x^\alpha h_j)=L_y(x^\alpha \sum_{\beta\in\N^n_{2w_j}}h_{j,\beta}x^\beta)\\
=&L_y(\sum_{\beta\in\N^n_{2w_j}}h_{j,\beta}x^{\alpha+\beta})\\
=&\sum_{\beta\in\N^n_{2w_j}}h_{j,\beta}y_{\alpha+\beta}\\
=&\sum_{\gamma\in\alpha+\N^n_{2w_j}}h_{j,\gamma-\alpha} y_\gamma\\
=&\sum_{\gamma\in\alpha+\N^n_{2w_j}}h_{j,\gamma-\alpha} \left(\frac{z^{(0)}_{\alpha_\gamma,\beta_\gamma}}{p_{\alpha_\gamma}^{(0)}p_{\beta_\gamma}^{(0)}}\right)\\[10pt]
=&\sum_{\gamma\in\alpha+\N^n_{2w_j}} \frac{ h_{j,\gamma-\alpha}}{p_{\alpha_\gamma}^{(0)}p_{\beta_\gamma}^{(0)}}z^{(0)}_{\alpha_\gamma,\beta_\gamma}\,.
\end{array}
\end{equation}
Defining $\bar A^{(j,\alpha)}=\diag(\bar A^{(j,\alpha,0)},0,\dots,0)$, where $\bar A^{(j,\alpha,0)}$  is as in \eqref{eq:def.mat.A.j.alpha.0},
the constraint \eqref{eq:localizing.mat.cons.eq} becomes
\begin{equation}
\tr( \bar A^{(j,\alpha)} Z)=\tr(\bar A^{(j,\alpha,0)} Z^{(0)})=0\,,\,\forall \alpha\in\N^n_{2(k-w_j)}\,.
\end{equation}
The number of such $\bar A^{(j,\alpha)}$, for $\alpha\in\N^n_{2(k-w_j)}$, $j=1,\dots,l$, is
\begin{equation}
\sum_{j=1}^l |\N^n_{2(k-w_j)}|=  \sum_{j=1}^l \binom{n+2(k-w_j)}n\,.
\end{equation}
Additionally, we have
\begin{equation}
1=y_0=\frac{z_{0,0}^{(0)}}{p_0^{(0)}p_0^{(0)}}=\tr( \bar A^{(0,0)} Z^{(0)})=\tr( \bar A^{(0)} Z)\,,
\end{equation}
where $\bar A^{(0)}=\diag(\bar A^{(0,0)},0,\dots,0)$, with $\bar A^{(0,0)}$ as in \eqref{eq:def.mat.A.0.0}.

The objective function of the problem \eqref{eq:mom.relax.pop.in.eq.constrains} is expressed as
\begin{equation}
L_y(f)=\sum_{\gamma\in\N^n_{2k}} f_\gamma y_\gamma =\sum_{\gamma\in\N^n_{2k}} f_\gamma  \frac{z^{(0)}_{\alpha_\gamma,\beta_\gamma}}{p_{\alpha_\gamma}^{(0)}p_{\beta_\gamma}^{(0)}} =\tr( C^{(0)} Z^{(0)})=\tr( C Z)\,,
\end{equation}
where $\bar C=\diag(\bar C^{(0)},0,\dots,0)$, with $\bar C^{(0)}$ as in \eqref{eq:def.mat.C.ineq.pop}.
Thus, \eqref{eq:mom.relax.pop.in.eq.constrains} is equivalent to \eqref{eq:sdp.moment.relax.tr.1.ineq.pop}.

\eqref{no.trace.mom.relax.genpop} follows, as the optimal value of \eqref{eq:mom.relax.pop.in.eq.constrains} remains unchanged when the constraint $\tr(PD_k(y)P)\le 1$ is added or removed.

\eqref{norm.C.mom.relax.gen.pop} From the definition of $\bar C$, it follows that
\begin{equation}
\|\bar C\|_F\le \max_{\gamma\in\N^n_{2k}}\frac{2^{k+1}}{p_{\alpha_\gamma}^{(0)}p_{\beta_\gamma}^{(0)}}\|f\|_{\ell_2}=\max_{\gamma\in\N^n_{2k}}\frac{2^{k+1}}{u_{\alpha_\gamma}^{1/2}u_{\beta_\gamma}^{1/2}}\|f\|_{\ell_2}\le 2^{k+1} \|f\|_{\ell_2}\,,
\end{equation}
\begin{equation}
\|\text{vec}(\bar C)\|_{\ell_1}\le \max_{\gamma\in\N^n_{2k}}\frac{2^{k+1}}{p_{\alpha_\gamma}^{(0)}p_{\beta_\gamma}^{(0)}}\|f\|_{\ell_1}=\max_{\gamma\in\N^n_{2k}}\frac{2^{k+1}}{u_{\alpha_\gamma}^{1/2}u_{\beta_\gamma}^{1/2}}\|f\|_{\ell_1}\le 2^{k+1} \|f\|_{\ell_1}\,.
\end{equation}

\eqref{bound.size.mat.mom.relax.gen.pop} follows directly from the definition of $\bar N$: $\bar N=\sum_{i=0}^m\binom{n+k-d_i}n$.

\eqref{bound.num.aff.cons.mom.relax.gen.pop} Recall that the total number of affine constraints in the SDP \eqref{eq:sdp.moment.relax.tr.1.ineq.pop} includes:
\begin{itemize}
\item Constraints from $Z^{(0)}=P^{(0)}M_k(y)P^{(0)}$, which contribute
\begin{equation}
\frac{1}{2}\binom{n+k}n\left[\binom{n+k}n+1\right]-\binom{n+2k}n\,.
\end{equation}
\item Constraints from $Z^{(i)}=P^{(i)}M_{k-d_i}(g_iy)P^{(i)}$ for $i=1,\dots,m$, which contribute
\begin{equation}
\frac{1}2\sum_{i=1}^m \binom{n+k-d_i}n\left[\binom{n+k-d_i}n+1\right]\,.
\end{equation}
\item Constraints arising from $M_{k-w_j}(h_jy)=0$, for $j=1,\dots,l$, which contribute
\begin{equation}
\sum_{j=1}^l \binom{n+2(k-w_j)}n\,.
\end{equation}
\item A single constraint from $y_0=1$.
\end{itemize}
Summing these contributions gives the total number of affine constraints $\tilde M$. 

\end{proof}

\begin{lemma}\label{lem:properties.data.A.ineq.cons}
Let $f\in\R[x]_{2k}$ and $g_i\in\R[x]_{2d_i}$, for $i=1,\dots,m$, $h_j\in\R[x]_{2w_j}$, for $j=1,\dots,l$, with $g_1=1-\|x\|_{\ell_2}^2$.
Assume $k\ge \max\{d_1,\dots,d_m,w_1,\dots,w_l\}$.
Let $C$, $A_i$, $i=1,\dots,M$, $N$, $c$, $\lambda_{\min}$, and $\lambda_{\max}$ be generated by Algorithm \ref{alg:eq.con.pop.ball}.
Define 
\begin{equation}
A =\begin{bmatrix}
\text{vec}(A_1)\\
\dots\\
\text{vec}(A_M)
\end{bmatrix}\,.
\end{equation}
Set $\lambda^\star=\frac{\tau_k}{c}$.
Then, the following properties hold: 
\begin{enumerate}[(i)]
\item\label{bound.eig.val.mom.relax.genpop} $-I_N\preceq C\preceq I_N$, and $\|\text{vec}(C)\|_{\ell_1}\le \frac{2^{k+1}\|f\|_{\ell_1}}c$.
\item\label{A.full.rank.mom.relax.genpop} $A$ has rank $M$, $\sigma_{\min}(A)=1$, and $-I_N\preceq A_i\preceq I_N$, for $i=1,\dots,M$.
\item\label{bound.opt.val.sdp.mom.relax.genpop} $\lambda^\star\in[\lambda_{\min},\lambda_{\max}]$.
\item\label{equivalen.prob.mom.relax.genpop} Problem \eqref{eq:mom.relax.pop.in.eq.constrains} is equivalent to \eqref{eq:sdp}.
\item\label{remove.tr.constrain.mom.relax.genpop}
\begin{equation}\label{eq:no.trace.cons.mom.relax.generalpop}
\lambda^\star=\inf\{\tr( C X)\,:\,X\in\mathbb S^{N}_+\,,\,\tr( A_i X) =b_i\,,\,i=1,\dots,M\}\,.
\end{equation}
Moreover, if $y$ is a feasible solution for problem \eqref{eq:mom.relax.pop.in.eq.constrains}, then for any $\omega\ge 0$, $\diag(PD_k(gy)P,\omega)$ is a feasible solution for \eqref{eq:no.trace.cons.mom.relax.generalpop}, where $P$ is constructed as in Lemma \ref{lem:trce.1.moment.relax.ineq.pop}.

\item\label{bound.c.mom.relax.generalpop} $c\le 2^{k+1} \|f\|_{\ell_2}$.
\item\label{bound.num.aff.cons.mom.relax.generalpop} $M\le \frac{m+1}{2}\binom{n+k}n[\binom{n+k}n+1]+1$, $N\le (m+1)\binom{n+k}n+1$, and \\\\
$\tilde M\le 1+ (l-1) \binom{n+2k}n+\frac{m+1}{2}\binom{n+k}n[\binom{n+k}n+1]$.
\item\label{sparsity.mom.relax.generalpop} The maximum number of nonzero entries in a row of $C,A_i$ is at most $s=\binom{n+k}n$.
\end{enumerate}
\end{lemma}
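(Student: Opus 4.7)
The plan is to mimic the structure of the proof of Lemma~\ref{lem:properties.data.A.eq.cons}, adapting the bookkeeping to the presence of the $m$ localising blocks and the extra row/column introduced by the trace lift $\tilde C=\diag(\bar C,0)$, $\tilde A_i=\diag(\bar A_i,0)$, $N=\bar N+1$. The key inputs are Lemma~\ref{lem:properties.tilde.prob.ineq.pop} (which identifies \eqref{eq:mom.relax.pop.in.eq.constrains} with the bounded-trace SDP in the matrices $\bar C,\bar A_i$), together with Lemmas~\ref{lem:Gaussian.elimination} and~\ref{lem:Gram-Schmidt}, which govern the behaviour of Steps~\ref{step:gauss.elimination.mom.relax.genpop}--\ref{step:gram.schmidt.mom.relax.genpop} of Algorithm~\ref{alg:eq.con.pop.ball}.

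First I would handle the equivalence and trace-removal claims \eqref{equivalen.prob.mom.relax.genpop}--\eqref{remove.tr.constrain.mom.relax.genpop}. By Lemma~\ref{lem:properties.tilde.prob.ineq.pop}\eqref{equi.proble.mom.relax.gen.pop}, problem \eqref{eq:mom.relax.pop.in.eq.constrains} is equivalent to the bounded-trace SDP in $\bar C,\bar A_i,\bar b_i$. Gaussian elimination in Step~\ref{step:gauss.elimination.mom.relax.genpop} produces a linearly independent subsystem $(\check A_i,\check b_i)_{i=1}^M$ with the same solution set (Lemma~\ref{lem:Gaussian.elimination}), and the Gram--Schmidt step yields $\tr(A_iA_j)=\delta_{ij}$ together with $\{\tr(\check A_iZ)=\check b_i\}\Leftrightarrow\{\tr(A_iZ)=b_i\}$ (Lemma~\ref{lem:Gram-Schmidt}). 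The bounded-trace constraint $\tr(Z)\le1$ is replaced by the standard slack lift $X=\diag(Z,1-\tr(Z))$, exactly as in Lemma~\ref{lem:equivalent.sol.uncons}; after dividing the objective by $c=\|\tilde C\|_F$ one arrives at the standard form \eqref{eq:sdp} with $\lambda^\star=\tau_k/c$, and dropping the trace constraint does not change the optimum by Lemma~\ref{lem:properties.tilde.prob.ineq.pop}\eqref{no.trace.mom.relax.genpop}. The explicit feasible lift $\diag(PD_k(gy)P,\omega)$ for \eqref{eq:no.trace.cons.mom.relax.generalpop} is then immediate from Lemma~\ref{lem:trce.1.moment.relax.ineq.pop}.

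Next I would deduce \eqref{bound.eig.val.mom.relax.genpop}--\eqref{A.full.rank.mom.relax.genpop}: since $C=\tilde C/\|\tilde C\|_F$, we get $-I_N\preceq C\preceq I_N$ and $\|\mathrm{vec}(C)\|_{\ell_1}=\|\mathrm{vec}(\tilde C)\|_{\ell_1}/c\le 2^{k+1}\|f\|_{\ell_1}/c$ using Lemma~\ref{lem:properties.tilde.prob.ineq.pop}\eqref{norm.C.mom.relax.gen.pop}. The orthonormality $\tr(A_iA_j)=\delta_{ij}$ immediately implies $A$ has full rank, $\sigma_{\min}(A)=1$ (as in Lemma~\ref{lem:properties.data.A.eq.cons}\eqref{A.full.rank}), and $-I_N\preceq A_i\preceq I_N$. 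For \eqref{bound.opt.val.sdp.mom.relax.genpop} I use $a\in S(g)\cap V(h)$ and $\underline\tau\le\tau_k\le f^\star\le f(a)$ to bracket $\lambda^\star=\tau_k/c$. Statement \eqref{bound.c.mom.relax.generalpop} is just the Frobenius bound from Lemma~\ref{lem:properties.tilde.prob.ineq.pop}\eqref{norm.C.mom.relax.gen.pop}.

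The remaining two claims are combinatorial. For \eqref{bound.num.aff.cons.mom.relax.generalpop}, Gaussian elimination can only reduce the number of constraints, so $M\le\min\{\tilde M,L\}$ with $L=N(N+1)/2$; plugging in the explicit $\tilde M$ from Lemma~\ref{lem:properties.tilde.prob.ineq.pop}\eqref{bound.num.aff.cons.mom.relax.gen.pop} and the size bound $\bar N\le(m+1)\binom{n+k}{n}$ from Lemma~\ref{lem:properties.tilde.prob.ineq.pop}\eqref{bound.size.mat.mom.relax.gen.pop} yields the stated estimates after grouping the $m$ localising contributions. The step I expect to require the most care is the sparsity bound \eqref{sparsity.mom.relax.generalpop}: Gram--Schmidt ordinarily destroys sparsity, so one cannot directly argue row-wise. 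The fix is the same observation used in Lemma~\ref{lem:properties.data.A.cons}\eqref{sparsi.A}: each $\bar A_i$ is block-diagonal with blocks of sizes $\binom{n+k-d_i}{n}\le\binom{n+k}{n}$, and both Gaussian elimination and Gram--Schmidt preserve this block pattern because they only take linear combinations of the $\bar A_i$'s. Hence each $A_i$ is still block-diagonal with the same block sizes, and any row of $A_i$ has at most $\binom{n+k}{n}$ nonzero entries (the size of the largest block), which is exactly $s$; the same bound obviously holds for $C=\tilde C/c$ since $\bar C$ is supported on the first block.
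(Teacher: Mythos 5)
Your overall route is the same as the paper's: reduce to Lemma~\ref{lem:properties.tilde.prob.ineq.pop} for the bounded-trace reformulation, invoke Lemmas~\ref{lem:Gaussian.elimination} and~\ref{lem:Gram-Schmidt} for the equivalent orthonormal system, normalise by $c=\|\tilde C\|_F$, and read off \eqref{bound.eig.val.mom.relax.genpop}--\eqref{bound.c.mom.relax.generalpop} and \eqref{sparsity.mom.relax.generalpop} from the preserved block-diagonal structure; all of that is fine and matches the paper, including your justification of the sparsity bound.

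There is, however, one step that does not work as written: your derivation of $M\le \frac{m+1}{2}\binom{n+k}{n}\bigl[\binom{n+k}{n}+1\bigr]+1$ in \eqref{bound.num.aff.cons.mom.relax.generalpop}. You take $M\le\min\{\tilde M,L\}$ with $L=N(N+1)/2$. Neither term of that minimum gives the claim: $\tilde M$ still carries the $(l-1)\binom{n+2k}{n}$ contribution from the equality constraints, which can dominate when $l$ is large, and $N(N+1)/2$ scales like $\tfrac12(m+1)^2\binom{n+k}{n}^2$, i.e.\ quadratically in $m+1$, because it counts the cross-block entries of a full symmetric matrix that the block-diagonal constraint matrices never touch. "Grouping the $m$ localising contributions" cannot repair this, since the discrepancy is structural, not a matter of rearranging terms. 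The correct bound is exactly what Lemma~\ref{lem:Gaussian.elimination} provides once you keep the constraint matrices in the block-diagonal space $\prod_j\mathbb S^{N_j}$ (with the extra $1\times1$ slack block): there $M\le L$ with
\begin{equation}
L=\frac{1}{2}\sum_{i=0}^m\binom{n+k-d_i}{n}\left[\binom{n+k-d_i}{n}+1\right]+1\le \frac{m+1}{2}\binom{n+k}{n}\left[\binom{n+k}{n}+1\right]+1\,,
\end{equation}
which is the route the paper takes. Since you already observe (for the sparsity claim) that the block pattern is preserved by Gaussian elimination and Gram--Schmidt, the fix is immediate, but as stated your bound on $M$ is not established.
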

\begin{proof}
\eqref{equivalen.prob.mom.relax.genpop} By Lemma \ref{lem:properties.tilde.prob.ineq.pop} \eqref{equi.proble.mom.relax.gen.pop}, the problem \eqref{eq:mom.relax.pop.in.eq.constrains} is equivalent to \eqref{eq:sdp.moment.relax.tr.1.ineq.pop}.
Using the definitions of $\tilde C,\tilde A_i, \tilde b_i$, the problem \eqref{eq:sdp.moment.relax.tr.1.ineq.pop} can be reformulated as:
\begin{equation}\label{eq:sdp.trace.one.cons.ineq.pop}
\begin{array}{rl}
\tau_k=\inf\limits_{X} & \tr( \tilde C X)\\
\text{s.t.}& X\in\mathbb S_+^{N}\,,\,\tr(X)= 1\,,\\
&\tr( \tilde A_i X)=\tilde b_i\,,\,i=1,\dots,\tilde M\,.
\end{array}
\end{equation}
Applying Lemma \ref{lem:Gaussian.elimination}, the matrices $\check A_1,\dots,\check A_M$ are linearly independent in $\mathbb S^N$, and
\begin{equation}\label{eq:same.polytope.ineq.pop}
\tr( \tilde A_i Z)=\tilde b_i\,,\,i=1,\dots,\tilde M \Leftrightarrow \tr( \check A_iZ)=b_i\,,\,i=1,\dots,M\,.
\end{equation}
Further, by Lemma \ref{lem:Gram-Schmidt}, the matrices $A_1,\dots,A_M$ satisfy $\tr(  A_i A_j)=\delta_{ij}$, and and the equivalence holds:
 \begin{equation}\label{eq:same.polytope.ineq.pop2}
\tr( \check A_i X)=\check b_i\,,\,i=1,\dots,M \Leftrightarrow \tr(  A_i X)= b_i\,,\,i=1,\dots,M
\end{equation}
This implies that problem \eqref{eq:sdp.trace.one.cons.ineq.pop} is equivalent to
\begin{equation}\label{eq:sdp.trace.one.ineq.pop.ball}
\begin{array}{rl}
\tau_k=\inf\limits_{X} & \tr( \tilde C X)\\
\text{s.t.}& X\in\mathbb S_+^{N}\,,\,\tr(X)= 1\,,\\
&\tr( A_i X)=b_i\,,\,i=1,\dots,M\,.
\end{array}
\end{equation}
Rewriting this problem using $C=\frac{\tilde C}{c}$, we obtain:
\begin{equation}\label{eq:sdp.trace.one.ineq.pop.ball2}
\begin{array}{rl}
\frac{\tau_k}{c}=\inf\limits_{X} & \tr( C X)\\
\text{s.t.}& X\in\mathbb S_+^{N}\,,\,\tr(X)= 1\,,\\
&\tr( A_i X)=b_i\,,\,i=1,\dots,M\,.
\end{array}
\end{equation}

With $\lambda^\star=\frac{\tau_k}{c}$, this problem is of the form \eqref{eq:sdp}.

\eqref{remove.tr.constrain.mom.relax.genpop} follows from Lemma \ref{lem:properties.tilde.prob.ineq.pop} \eqref{no.trace.mom.relax.genpop}.

\eqref{A.full.rank.mom.relax.genpop} Since $\tr(  A_i A_j)=\delta_{ij}$, $-I_N\preceq A_i\preceq I_N$, $A$ has rank $M$, and $\sigma_{\min}(A)=1$.

\eqref{bound.c.mom.relax.generalpop} derives from the fact that $c=\|\tilde C\|_F=\|\bar C\|_F$ (from Lemma \ref{lem:properties.tilde.prob.ineq.pop} \eqref{norm.C.mom.relax.gen.pop}).

\eqref{bound.eig.val.mom.relax.genpop} Consequently, with $C=\frac{\tilde C}{c}=\frac{\tilde C}{\|\tilde C\|_F}$, it follows that $-I_N\preceq C\preceq I$ and by Lemma \ref{lem:properties.tilde.prob.ineq.pop} \eqref{norm.C.mom.relax.gen.pop},
\begin{equation}
\|\text{vec}(C)\|_{\ell_1}=\frac{\|\text{vec}(\tilde C)\|_{\ell_1}}{c}=\frac{\|\text{vec}(\bar C)\|_{\ell_1}}{c}\le \frac{2^{k+1}\|f\|_{\ell_1}}{c}\,.
\end{equation}

\eqref{bound.opt.val.sdp.mom.relax.genpop} Since $\lambda_{\max}=\frac{f(a)}{c}$ for $a\in S(g)\cap V(h)$, and $\lambda_{\min}=\frac{\underline \tau}{c}$ and $\underline \tau\le \tau_k\le f^\star\le f(a)$, it follows that $\lambda^\star\in[\lambda_{\min},\lambda_{\max}]$.

\eqref{bound.num.aff.cons.mom.relax.generalpop} is obtained from the bound $M\le \min\{\tilde M,L\}$ with $N=\bar N+1$, $\tilde M=\bar M+1$,
\begin{equation}
L=\frac{1}{2}\sum_{i=0}^m \binom{n+k-d_i}n\left[\binom{n+k-d_i}n+1\right]+1\,,
\end{equation}
and Lemma \ref{lem:properties.tilde.prob.ineq.pop} \eqref{bound.size.mat.mom.relax.gen.pop} and  \eqref{bound.num.aff.cons.mom.relax.gen.pop}.

\eqref{sparsity.mom.relax.generalpop} The diagonal-block structure of $C,A_i$ yields \eqref{sparsity.mom.relax.generalpop}.
\end{proof}
\begin{remark}
The maximum number of nonzero entries in a row of 
$\tilde C, \tilde A_i$ is at most:
\begin{equation}
\begin{array}{rl}
&\max\limits_{\arraycolsep=1.4pt\def\arraystretch{.7}
\begin{array}{cc}
\scriptstyle i=1,\dots,m,\\
\scriptstyle j=1,\dots,l\\
\end{array}}\{|\supp(f)|,|\supp(g_i)|,|\supp(h_j)|\}\\
\le &\max\limits_{\arraycolsep=1.4pt\def\arraystretch{.7}
\begin{array}{cc}
\scriptstyle i=1,\dots,m,\\
\scriptstyle j=1,\dots,l\\
\end{array}}\{|\N^n_{2d}|,|\N^n_{2d_i}|,|\N^n_{2w_j}|\}\\
=&\max\limits_{\arraycolsep=1.4pt\def\arraystretch{.7}
\begin{array}{cc}
\scriptstyle i=1,\dots,m,\\
\scriptstyle j=1,\dots,l\\
\end{array}}\left\{\binom{n+2d}n,\binom{n+2d_i}n,\binom{n+2w_j}n\right\}\,.
\end{array}
\end{equation}
However, the application of Gaussian elimination and the Gram-Schmidt process transforms each $A_i$ into a linear combination of the $\tilde A_i$  matrices. As a result, the sparsity structure of $\tilde A_i$ is generally lost.
\end{remark}

\begin{theorem}\label{theo:accuracy.moment.relax.ineqpop.ball}
Let $f\in\R[x]_{2k}$ and $g_i\in\R[x]_{2d_i}$, for $i=1,\dots,m$, $h_j\in\R[x]_{2w_j}$, for $j=1,\dots,l$, with $g_1=1-\|x\|_{\ell_2}^2$.
Assume $k\ge \max\{d,d_1,\dots,d_m,w_1,\dots,w_l\}$, $\tau_k\ge 0$, and Slater's condition holds for the  moment relaxation \eqref{eq:mom.relax.pop.in.eq.constrains}, i.e., there exists a feasible solution  $\bar y$ for problem \eqref{eq:mom.relax.pop.in.eq.constrains} such that $D_k(g\bar y)\succ 0$ and has condition number $\kappa $.
Let $\tau_k^{(\varepsilon)}$ be the value returned by Algorithm \ref{alg:eq.con.pop.ball}.
Then, the following inequality holds:
\begin{equation}
\begin{array}{rl}
0\le& \tau_k^{(\varepsilon)}-\tau_k\\
\le& \varepsilon 2^{k+1}\|f\|_{\ell_1} \left[2+(1+\sqrt{2} \kappa r_{g,n,k}) \sqrt{\frac{m+1}{2}\binom{n+k}n[\binom{n+k}n+1]+1}\ \right]\,,
\end{array}
\end{equation}
where $r_{g,n,k}=\frac{\max\Lambda_{g,n,k}}{\min\Lambda_{g,n,k}}$ with $\Lambda_{g,n,k}=\{p_\alpha^{(i)2}\,:\,i=0,\dots,m\,,\,\alpha\in\N^n_{k-d_i}\}$
and $p_\alpha^{(i)}$ defined as in Step \ref{step:convert.to.sdp.mom.relax.genpop}a-d of Algorithm \ref{alg:eq.con.pop.ball}.
\end{theorem}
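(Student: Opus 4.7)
The plan is to mirror the strategy used in Theorem~\ref{theo:accuracy.moment.relax.eqpop.sphere}, adapted to the inequality-constrained, lifted setting produced by Algorithm~\ref{alg:eq.con.pop.ball}. First I would set $\lambda^\star=\tau_k/c$ and invoke Lemma~\ref{lem:properties.data.A.ineq.cons}\eqref{equivalen.prob.mom.relax.genpop}--\eqref{remove.tr.constrain.mom.relax.genpop} to reduce the moment relaxation \eqref{eq:mom.relax.pop.in.eq.constrains} to the standard-form SDP \eqref{eq:sdp} and further to the equivalent trace-free formulation \eqref{eq:no.trace.cons.mom.relax.generalpop}. Since Slater's condition is assumed for \eqref{eq:mom.relax.pop.in.eq.constrains}, strong duality holds there and hence also for \eqref{eq:no.trace.cons.mom.relax.generalpop} and its dual.

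Next I would construct an explicit Slater point for \eqref{eq:no.trace.cons.mom.relax.generalpop} from $\bar y$. Setting $\bar Z=PD_k(g\bar y)P$ with $P$ as in Lemma~\ref{lem:trce.1.moment.relax.ineq.pop}, so that $\tr(\bar Z)\le 1$, I would define $U_{n,k}=\diag(\bar Z,\omega)$ for a suitable $\omega>0$ (for instance $\omega=\lambda_{\min}(\bar Z)$) so that $U_{n,k}\succ0$. By Lemma~\ref{lem:properties.data.A.ineq.cons}\eqref{remove.tr.constrain.mom.relax.genpop} this is feasible for \eqref{eq:no.trace.cons.mom.relax.generalpop}, providing the Slater point required by Lemma~\ref{lem:convergence.sdp.2}. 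The main subtlety is to bound $\kappa(U_{n,k})$: conjugation by the diagonal matrix $P$ with entries $p_\alpha^{(i)}$ distorts eigenvalues by a factor at most $r_{g,n,k}=\max\Lambda_{g,n,k}/\min\Lambda_{g,n,k}$, so one obtains $\kappa(U_{n,k})\le \kappa\,r_{g,n,k}$ after choosing $\omega$ within the spectrum of $\bar Z$. This is where I expect the most care is needed, since $r_{g,n,k}$ enters the final bound and its role must be tracked through the diagonal rescaling defining $\bar Z$.

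With the Slater point in hand, I would collect the structural estimates from Lemma~\ref{lem:properties.data.A.ineq.cons}: $-I_N\preceq C\preceq I_N$ and $-I_N\preceq A_i\preceq I_N$, $A$ has full rank with $\sigma_{\min}(A)=1$, $\lambda^\star\in[\lambda_{\min},\lambda_{\max}]$, $\lambda^\star\ge 0$ (using $\tau_k\ge 0$ and $c>0$), $\|\mathrm{vec}(C)\|_{\ell_1}\le 2^{k+1}\|f\|_{\ell_1}/c$, and $M\le \tfrac{m+1}{2}\binom{n+k}{n}[\binom{n+k}{n}+1]+1$. Plugging these into Lemma~\ref{lem:convergence.sdp.2}(ii) with $\kappa(U_{n,k})\le \kappa r_{g,n,k}$ yields
\[
0\le \lambda^\star-\underline\lambda_T\le \varepsilon\!\left[2+(1+\sqrt{2}\,\kappa\, r_{g,n,k})\,\frac{2^{k+1}\|f\|_{\ell_1}}{c}\sqrt{M}\right].
\]

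Finally I would translate this bound on $\lambda^\star-\underline\lambda_T$ into one on $\tau_k^{(\varepsilon)}-\tau_k$ by multiplying by $c$, using $\tau_k^{(\varepsilon)}=\underline\lambda_T\,c$ and $\tau_k=\lambda^\star c$, together with the upper bound $c\le 2^{k+1}\|f\|_{\ell_2}\le 2^{k+1}\|f\|_{\ell_1}$ from Lemma~\ref{lem:properties.data.A.ineq.cons}\eqref{bound.c.mom.relax.generalpop} to absorb the $2c$ term. Substituting the explicit upper bound on $M$ produces exactly the stated inequality. The one nonroutine step is the condition-number estimate for the lifted Slater point, while everything else is a direct application of the preceding lemmas along the same template used for Theorem~\ref{theo:accuracy.moment.relax.eqpop.sphere}.
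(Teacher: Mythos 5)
Your proposal is correct and follows essentially the same route as the paper's proof: the same lifted Slater point $\diag(PD_k(g\bar y)P,\lambda_{\min}(PD_k(g\bar y)P))$, the same condition-number estimate $\kappa(U_{n,k})\le\kappa\,r_{g,n,k}$ coming from the diagonal conjugation by $P$, the same application of Lemma~\ref{lem:convergence.sdp.2}(ii) with the structural bounds of Lemma~\ref{lem:properties.data.A.ineq.cons}, and the same final rescaling by $c$ with $c\le 2^{k+1}\|f\|_{\ell_1}$ to absorb the $2c$ term before substituting the bound on $M$.
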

\begin{proof}
Let $\lambda^\star=\frac{\tau_k}{c}$. Since Slater's condition holds for the  moment relaxation \eqref{eq:mom.relax.pop.in.eq.constrains}, strong duality is satisfied for problem  \eqref{eq:mom.relax.pop.in.eq.constrains} and its dual.
By Lemma \ref{lem:properties.data.A.eq.cons} \eqref{sdp.without.trace.const}, strong duality also holds for problem \eqref{eq:no.trace.cons.mom.relax.generalpop} and its dual \eqref{eq:dual.no.tr.cons}.

Define $U_{n,k}=\diag(PD_k(g\bar y)P,\xi)$, where $P$ is constructed as in Lemma \ref{lem:trce.1.moment.relax.ineq.pop} and $\xi=\lambda_{\min}(PD_k(g\bar y)P)$. Since $P \succ 0$, it follows that $U_{n,k}\succ 0$ and has condition number
\begin{equation}\label{eq:bound.on.kappa}
\kappa(U_{n,k})=\kappa(PD_k(g\bar y)P)\le \kappa r_{g,n,k}\,.
\end{equation}
By Lemma \ref{lem:properties.data.A.ineq.cons} \eqref{remove.tr.constrain.mom.relax.genpop}, $U_{n,k}$ is a feasible solution for problem \eqref{eq:no.trace.cons.mom.relax.generalpop}, and thus Slater's condition holds for  problem \eqref{eq:no.trace.cons.mom.relax.generalpop}.
From Lemma \ref{lem:properties.data.A.ineq.cons} \eqref{A.full.rank.mom.relax.genpop}, $A$ has full rank $M$.
Since $\tau_k\ge 0$, we have $\lambda^\star=\frac{\tau_k}{c}\ge 0$.
By Lemma \ref{lem:properties.data.A.ineq.cons} \eqref{bound.eig.val.mom.relax.genpop}, \eqref{A.full.rank.mom.relax.genpop} and \eqref{bound.opt.val.sdp.mom.relax.genpop}, $I_N\succeq C\succeq -I_N$, $I_N\succeq A_i\succeq -I_N$, for $i=1,\dots,M$, and  $\lambda^\star\in [\lambda_{\min},\lambda_{\max}]$.
Using Lemma \ref{lem:convergence.sdp.2} (ii), we obtain
\begin{equation}
0\le \lambda^\star-\overline \lambda_T\le \varepsilon\left[2+(1+\sqrt{2} \kappa(U_{n,k})) \|\text{vec}(C)\|_{\ell_1} \sigma_{\min}(A)^{-1}\sqrt{M} \right]\,.
\end{equation}
Applying properties of $C$ and $A$ from Lemma \ref{lem:properties.data.A.ineq.cons} \eqref{bound.eig.val.mom.relax.genpop}  and  \eqref{A.full.rank.mom.relax.genpop},  this simplifies to
\begin{equation}
0\le \lambda^\star-\overline \lambda_T\le \varepsilon\left[2+(1+\sqrt{2} \kappa r_{g,n,k})\frac{2^{k+1}\|f\|_{\ell_1}}{c}  \sqrt{M}  \right]\,.
\end{equation}
Since $\tau_k^{(\varepsilon)}=\underline\lambda_T c$ and $\tau_k=\lambda^\star c$, we have 
\begin{equation}
\begin{array}{rl}
0\le \tau_k^{(\varepsilon)}-\tau_k=&c(\lambda^\star-\underline \lambda_T)\\[5pt]
\le &\varepsilon\left[2c+(1+\sqrt{2} \kappa r_{g,n,k})2^{k+1}\|f\|_{\ell_1}  \sqrt{M}  \right]\\[10pt]
\le& \varepsilon \left[2^{k+2}\|f\|_{\ell_2}+(1+\sqrt{2} \kappa r_{g,n,k})2^{k+1}\|f\|_{\ell_1} \sqrt{M} \right]\\[10pt]
\le& \varepsilon 2^{k+1}\|f\|_{\ell_1} \left[2+(1+\sqrt{2} \kappa r_{g,n,k}) \sqrt{M} \right]\,.
\end{array}
\end{equation}
The third inequality is due to Lemma \ref{lem:properties.data.A.ineq.cons} \eqref{bound.c.mom.relax.generalpop}.
Using $M\le \frac{m+1}{2}\binom{n+k}n[\binom{n+k}n+1]+1$,  we conclude the result. 
\end{proof}

\begin{lemma}[Complexity]
Let $k\ge \max\{d,d_1,\dots,d_m,w_1,\dots,w_l\}$.
The time complexity to run Algorithm \ref{alg:eq.con.pop.ball} on a classical computer is:
\begin{equation}
O\left(m\binom{n+k}n^2\left[(l-1) \binom{n+2k}n+m^2\binom{n+k}n^4 + m^2\binom{n+k}n^2\operatorname{poly}\left(\frac{1}\varepsilon\right)\right]\right)\,,
\end{equation}
while on a quantum computer, the time complexity is:
\begin{equation}
\footnotesize
O\left(\binom{n+k}{n}^{1.5}\left\{m(l-1) \binom{n+k}{n}^{0.5}\binom{n+2k}n +m^3\binom{n+k}{n}^{4.5}+m^{0.5}\left[\binom{n+k}{n}^{0.5}+\frac{1}{\varepsilon}\right]\frac{1}{\varepsilon^4}\right\}\right)\,.
\end{equation}
\end{lemma}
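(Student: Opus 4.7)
My plan is to reuse the template from the complexity proof of Algorithm~\ref{alg:eq.con.pop.sphere}: identify the three expensive subroutines, invoke the corresponding per-subroutine complexity lemmas, substitute the structural bounds from Lemma~\ref{lem:properties.data.A.ineq.cons}, and sum. The new wrinkle relative to the sphere case is that the $m$ inequality constraints create $m+1$ PSD blocks instead of one, so the dimensions $M$, $N$, and $\tilde M$ each pick up additional $m$-dependent factors that must be tracked carefully.

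Concretely, I would first observe that the only steps of Algorithm~\ref{alg:eq.con.pop.ball} whose cost is not polylogarithmic are the Gaussian elimination in Step~\ref{step:gauss.elimination.mom.relax.genpop}, the Gram--Schmidt orthonormalization in Step~\ref{step:gram.schmidt.mom.relax.genpop}, and the binary search with Hamiltonian updates in Step~\ref{step:hamilnion.updates.mom.relax.genpop}; all other steps merely evaluate closed-form expressions to fill in the block-diagonal data matrices $\bar C$ and $\bar A^{(\cdot)}$. Applying Lemmas~\ref{lem:complexity.Gaussian.elimination}, \ref{lem:complexity.Gram-Schmidt}, and~\ref{lem:complex.binary.search} yields per-subroutine costs $O(\tilde M N^4)$, $O(M^2 N^2)$, and---for the Hamiltonian-updates stage---$O((MsN+N^\omega)\varepsilon^{-2})$ classically or $O(s(\sqrt M+\sqrt N\varepsilon^{-1})\varepsilon^{-4})$ quantumly.

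Next, I would plug in the bounds
\[
 M=O\!\left(m\binom{n+k}{n}^{2}\right),\quad N=O\!\left(m\binom{n+k}{n}\right),\quad s=\binom{n+k}{n},
\]
together with $\tilde M=O\!\left((l-1)\binom{n+2k}{n}+m\binom{n+k}{n}^{2}\right)$, from Lemma~\ref{lem:properties.data.A.ineq.cons}, and collect terms. The Hamiltonian-updates cost directly produces the $\varepsilon$-dependent summands of the stated bounds---namely $m^{1/2}\binom{n+k}{n}^{1.5}[\binom{n+k}{n}^{1/2}+\varepsilon^{-1}]\varepsilon^{-4}$ in the quantum case and the corresponding $m^{3}\binom{n+k}{n}^{4}\mathrm{poly}(\varepsilon^{-1})$ in the classical case---while the Gram--Schmidt and Gaussian-elimination contributions produce the remaining two $\varepsilon$-independent terms.

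The main obstacle is bookkeeping rather than mathematics: depending on the relative sizes of $(l-1)\binom{n+2k}{n}$ and $m\binom{n+k}{n}^{2}$, either the Gaussian elimination (yielding a contribution proportional to $m(l-1)\binom{n+k}{n}^{2}\binom{n+2k}{n}$, after absorbing a $\binom{n+k}{n}^{1.5}$ prefactor in the quantum case) or the Gram--Schmidt step (yielding $m^{3}\binom{n+k}{n}^{6}$) dominates, so both contributions must appear in the final bound rather than being absorbed into a single leading term. Once this is noted, the result follows by direct summation, mirroring the proof of the analogous complexity lemma for Algorithm~\ref{alg:eq.con.pop.sphere}.
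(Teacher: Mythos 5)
Your overall route is the same as the paper's: isolate the Gaussian elimination, Gram--Schmidt, and Hamiltonian-updates subroutines, invoke the corresponding complexity lemmas, substitute $s$, $M$, $N$, $\tilde M$ from Lemma~\ref{lem:properties.data.A.ineq.cons}, and sum. Your Hamiltonian-updates accounting is fine: with $s=\binom{n+k}{n}$, $M=O\bigl(m\binom{n+k}{n}^2\bigr)$, $N=O\bigl(m\binom{n+k}{n}\bigr)$ one indeed recovers $m^{1/2}\binom{n+k}{n}^{1.5}\bigl[\binom{n+k}{n}^{1/2}+\varepsilon^{-1}\bigr]\varepsilon^{-4}$ in the quantum case and the classical $\mathrm{poly}(1/\varepsilon)$ term.

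The gap is in the preprocessing costs. Lemmas~\ref{lem:complexity.Gaussian.elimination} and~\ref{lem:complexity.Gram-Schmidt} are stated in terms of the vectorized block dimension $L=\tfrac12\sum_i N_i(N_i+1)$, i.e.\ $O(\tilde M L^2)$ and $O(LM^2)$, not $O(\tilde M N^4)$ and $O(M^2N^2)$ as you quote (the sphere case lets one write $L\approx N^2/2$ because there is a single block, but that is exactly what fails here). For the ball case the block-diagonal structure gives $L\le \frac{m+1}{2}\binom{n+k}{n}\bigl[\binom{n+k}{n}+1\bigr]+1=O\bigl(m\binom{n+k}{n}^2\bigr)$, a factor of $m$ below $N^2=O\bigl(m^2\binom{n+k}{n}^2\bigr)$; this is precisely what produces the $m^3\binom{n+k}{n}^6$ term (it is $LM^2$ from Gram--Schmidt). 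With your $N$-based formulas you instead get $O(M^2N^2)=O\bigl(m^4\binom{n+k}{n}^6\bigr)$ and $O(\tilde M N^4)\supseteq O\bigl(m^5\binom{n+k}{n}^6\bigr)$, both exceeding the claimed bound, and the specific contributions you then assert (e.g.\ $m(l-1)\binom{n+k}{n}^2\binom{n+2k}{n}$ from Gaussian elimination and $m^3\binom{n+k}{n}^6$ from Gram--Schmidt) do not follow from the formulas you invoked; they are simply stated. To repair the argument you must use the $L$-based costs, as the paper does; note also that even then the $(l-1)$-dependent Gaussian-elimination contribution comes out as $O\bigl((l-1)m^2\binom{n+2k}{n}\binom{n+k}{n}^4\bigr)$, so matching the lemma's first term exactly requires more justification than either your sketch or the paper's closing ``the result follows'' supplies.
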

\begin{proof}
The most computationally intensive parts of Algorithm \ref{alg:eq.con.pop.ball} include Gaussian elimination in Step \ref{step:gauss.elimination.mom.relax.genpop}, the Gram-Schmidt process in Step \ref{step:gram.schmidt.mom.relax.genpop}, and the binary search with Hamiltonian updates in Step \ref{step:hamilnion.updates.mom.relax.genpop}.

For Gaussian elimination, by Lemma \ref{lem:complexity.Gaussian.elimination}, both the classical and quantum computers have the same time complexity of $O(\tilde M L^2)$, with 
\begin{equation}
L=\frac{1}{2}\sum_{i=0}^m \binom{n+k-d_i}n\left[\binom{n+k-d_i}n+1\right]+1\,.
\end{equation}
This implies $L\le \frac{m+1}{2}\binom{n+k}n[\binom{n+k}n+1]+1$.

For Gram-Schmidt process, by Lemma \ref{lem:complexity.Gram-Schmidt}, both classical and quantum computers have the same time complexity of $O(LM^2)$.

Regarding the binary search with Hamiltonian updates, by Lemma, by Lemma \ref{lem:complex.binary.search}, the classical computer requires $O((MsN+N^\omega)\varepsilon^{-2})$ operations, and the quantum computer requires $O(s(\sqrt{M}+\sqrt{N}\varepsilon^{-1})\varepsilon^{-4})$ operations, where $s$ is the maximum number of nonzero entries in any row of the data matrices $C,A_i$, for $i=1,\dots,M$.
By Lemma \ref{lem:properties.data.A.ineq.cons} \eqref{sparsity.mom.relax.generalpop}, we have $s=\binom{n+k}n$.

Additionally, by Lemma \ref{lem:properties.data.A.ineq.cons} \eqref{bound.num.aff.cons.mom.relax.generalpop}, we know that $M\le \frac{m+1}2\binom{n+k}n[\binom{n+k}n+1]+1$, $N\le (m+1)\binom{n+k}n$ and 
$\tilde M\le 1+ (l-1) \binom{n+2k}n+\frac{m+1}{2}\binom{n+k}n[\binom{n+k}n+1]$.
Thus, the result follows.
\end{proof}

\section{Positive lower bounds for the smallest eigenvalues of moment and localizing matrices associated with a positive measure}
\label{sec:lower.bound.eigenval.moment.mat}

\begin{lemma}\label{lem:lower.bound.smallest.eigenval}
Let $A\in \mathbb S^N_+$. 
Then
\begin{equation}
\lambda_{\min}(A)\ge \left( \frac{N-1}{\tr(A)}\right)^{N-1}\det(A)\,.
\end{equation}
\end{lemma}
\begin{proof}
Let $\lambda_1\ge \lambda_2\ge \dots\ge \lambda_N$ be the eigenvalues of $A$, so that $\lambda_{\min}(A)=\lambda_N$. 
Since $A\succeq 0$, we have $\lambda_N\ge 0$.
By the arithmetic-geometric mean inequality, this implies
\begin{equation}
\begin{array}{rl}
\det(A)=\lambda_1\dots \lambda_{N-1}\lambda_N\le& \left(\frac{\lambda_1+\dots+\lambda_{N-1}}{N-1} \right)^{N-1}\lambda_N\\[5pt]
\le&\left(\frac{\lambda_1+\dots+\lambda_{N-1}+\lambda_N}{N-1} \right)^{N-1}\lambda_N\\[5pt]
=& \left(\frac{\tr(A)}{N-1} \right)^{N-1}\lambda_N\,.
\end{array}
\end{equation}
Hence the result follows.
\end{proof}

\begin{lemma}\label{lem:det.larger.1}
Let $A\in \mathbb Z^{N\times N}$ be a symmetric positive definite matrix ($A\succ 0$). Then $\det(A)\ge 1$.
\end{lemma}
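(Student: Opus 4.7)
The proof is essentially a one-line observation, so the proposal is correspondingly short. The plan is to combine two elementary facts: the determinant of an integer matrix is an integer, and the determinant of a positive definite matrix is strictly positive.

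First I would note that since $A \in \mathbb{Z}^{N \times N}$, the Leibniz formula
\begin{equation}
\det(A) = \sum_{\pi \in S_N} \sgn(\pi) \prod_{i=1}^N A_{i,\pi(i)}
\end{equation}
expresses $\det(A)$ as a finite sum of products of integer entries, so $\det(A) \in \mathbb{Z}$. Next, since $A \succ 0$, all eigenvalues $\lambda_1, \dots, \lambda_N$ of $A$ are strictly positive, and therefore $\det(A) = \prod_{i=1}^N \lambda_i > 0$. A strictly positive integer is at least $1$, which gives $\det(A) \geq 1$.

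There is no real obstacle here; the only subtlety worth flagging in the write-up is the implicit use of the fact that the symbolic Leibniz expansion really does land in $\mathbb{Z}$ (no cancellation issues, since integers form a ring), and that positive definiteness is what forces strict positivity of the determinant (positive semidefiniteness would only give $\det(A) \geq 0$, which is insufficient).
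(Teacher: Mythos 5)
Your proof is correct and takes essentially the same route as the paper's: both expand $\det(A)$ via the Leibniz permutation formula to conclude $\det(A)\in\mathbb{Z}$, and both use positive definiteness to get $\det(A)>0$, hence $\det(A)\ge 1$. Your remark that mere positive semidefiniteness would only give $\det(A)\ge 0$ is a sensible clarification, but the argument itself is identical.
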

\begin{proof}
Let $A=(a_{ij})_{i,j=1,\dots,N}$. By definition, the determinant of $A$ is expressed as:
\begin{equation}
\det(A)=\sum_{\sigma\in S^N}\sign(\sigma) a_{1\sigma(1)}\dots a_{N\sigma(N)}\,,
\end{equation}
where $S^N$ is the symmetric group of all bijections $\sigma:\{1,\dots,N\}\to \{1,\dots,N\}$, and $\sign(\sigma)$ is the signature of $\sigma$, which equals  $1$ for even permutations and $-1$ for odd permutations.
Since $a_{ij}\in\mathbb Z$ for all $i,j$, it follows that $\det(A)\in\mathbb Z$.
Additionally, because $A\succ 0$, we know $\det(A)>0$. Combining these facts, we conclude $\det(A)\ge 1$.
\end{proof}

\begin{proposition}\label{prop:lower.bound.eigen.val}
Let $\mu$ be the Lebesugue measure on $[0,1]^n$.
Assume that $n\ge 2$.
Then, the smallest eigenvalue of the moment matrix $M_k(\mu)$ satisfies:
\begin{equation}
\lambda_{\min}(M_k(\mu))\ge [(2k+1)!]^{-n \binom{n+k}n}\,.
\end{equation}
\end{proposition}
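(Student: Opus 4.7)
The plan has three steps: compute the entries of $M_k(\mu)$ explicitly, use an integrality argument to lower-bound $\det(M_k(\mu))$, and combine with Lemma \ref{lem:lower.bound.smallest.eigenval} via a trace estimate.

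First I would compute the moments. For the Lebesgue measure on $[0,1]^n$,
\[
y_\gamma \;=\; \int_{[0,1]^n} x^\gamma\,dx \;=\; \prod_{j=1}^n \frac{1}{\gamma_j+1},
\]
so each entry of the moment matrix is $(M_k(\mu))_{\alpha,\beta}=\prod_{j=1}^n 1/(\alpha_j+\beta_j+1)$ with $\alpha,\beta\in\N^n_k$. Because $\alpha_j+\beta_j+1\in\{1,2,\ldots,2k+1\}$, each denominator divides $(2k+1)!$.

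Next I would clear denominators by scaling. Set $N:=\binom{n+k}{n}$ and $\tilde M:=[(2k+1)!]^n\,M_k(\mu)$. Then
\[
\tilde M_{\alpha,\beta}\;=\;\prod_{j=1}^n \frac{(2k+1)!}{\alpha_j+\beta_j+1}\;\in\;\Z_{\ge 1},
\]
so $\tilde M\in\Z^{N\times N}$. Since $\mu$ has full support on the open set $(0,1)^n$, the monomials $(x^\alpha)_{\alpha\in\N^n_k}$ are linearly independent in $L^2(\mu)$; hence $M_k(\mu)\succ 0$ and therefore $\tilde M\succ 0$. Lemma \ref{lem:det.larger.1} then yields $\det(\tilde M)\ge 1$, which rescales to
\[
\det(M_k(\mu))\;\ge\;[(2k+1)!]^{-nN}.
\]

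Finally I would bound the trace and invoke Lemma \ref{lem:lower.bound.smallest.eigenval}. Each diagonal entry satisfies $y_{2\alpha}=\prod_j 1/(2\alpha_j+1)\le 1$, so $\tr(M_k(\mu))\le N$, and the lemma gives
\[
\lambda_{\min}(M_k(\mu)) \;\ge\;\Bigl(\tfrac{N-1}{\tr(M_k(\mu))}\Bigr)^{N-1}\det(M_k(\mu)) \;\ge\;\Bigl(\tfrac{N-1}{N}\Bigr)^{N-1}[(2k+1)!]^{-nN}.
\]
The main obstacle is closing the residual constant $((N-1)/N)^{N-1}$, which is strictly less than $1$ (but bounded below by $e^{-1}$ for $N\ge 2$): either one tightens the trace bound by using that $\sum_{\alpha\in\N^n_k}\prod_j 1/(2\alpha_j+1)$ is markedly smaller than $N$ when $k\ge 1$, or one simply absorbs the constant into the exponent of $(2k+1)!$ (which is $\ge 6$ for $k\ge 1$), recovering exactly $[(2k+1)!]^{-n\binom{n+k}{n}}$ as stated. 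The delicate point is that the whole lower bound must remain strictly positive and computable from the data, which the integer-scaling step of Lemma \ref{lem:det.larger.1} guarantees unconditionally.
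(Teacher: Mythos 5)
Your route is essentially the paper's: compute the moments of Lebesgue measure explicitly, scale by $C_k=[(2k+1)!]^n$ so that all entries become positive integers, deduce $\det\ge 1$ from Lemma \ref{lem:det.larger.1} (hence $\det(M_k(\mu))\ge C_k^{-N}$ with $N=\binom{n+k}{n}$), and finish with the trace-versus-determinant bound of Lemma \ref{lem:lower.bound.smallest.eigenval}. The only place your write-up does not yet close is the final step, and your two proposed repairs are not equally good. With the crude bound $\tr(M_k(\mu))\le N$ you only obtain $\lambda_{\min}(M_k(\mu))\ge\bigl(\tfrac{N-1}{N}\bigr)^{N-1}C_k^{-N}$, which is strictly weaker than the claim; \emph{absorbing} the factor $\bigl(\tfrac{N-1}{N}\bigr)^{N-1}<1$ into the exponent of $(2k+1)!$ cannot ``recover exactly'' the stated inequality — trading a sub-unit constant for an extra power of $[(2k+1)!]^{-n}$ only yields a weaker exponent such as $-n\bigl(\binom{n+k}{n}+1\bigr)$, never the exact bound $-n\binom{n+k}{n}$.

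The correct repair is your first suggestion, and it is precisely what the paper does; it is also the only point where the hypothesis $n\ge 2$ (which your argument never invokes) enters. The $n$ diagonal entries indexed by $\alpha=e_i$ equal $\tfrac13$, so $\tr(M_k(\mu))\le \tfrac{n}{3}+(N-n)=N-\tfrac{2n}{3}\le N-1$ for $n\ge2$ (equivalently, every diagonal entry with $\alpha\ne 0$ is at most $\tfrac13$, giving $\tr\le 1+\tfrac{N-1}{3}\le N-1$ once $N\ge3$, i.e.\ $k\ge1$; the case $k=0$ is trivial). Then $\bigl(\tfrac{N-1}{\tr(M_k(\mu))}\bigr)^{N-1}\ge 1$, and Lemma \ref{lem:lower.bound.smallest.eigenval} combined with $\det(M_k(\mu))\ge C_k^{-N}$ gives exactly $\lambda_{\min}(M_k(\mu))\ge [(2k+1)!]^{-n\binom{n+k}{n}}$. (That the paper carries out the same computation on the rescaled measure $\nu=C_k\mu$ rather than on $M_k(\mu)$ directly is purely cosmetic.)
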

\begin{proof}
Define $\nu=C_k\mu$ for a constant $C_k>0$, and let $y=(y_\alpha)_{\alpha\in\N^n_{2k}}$ be the truncated moment sequence of $\nu$.
The moment matrix $M_k(\nu)$ is given by $M_k(\nu)=M_k(y)=(y_{\alpha+\beta})_{\alpha,\beta\in\N^n_k}$.

We first show that $M_k(\nu)\succ 0$.
Consider any $u\in\R^{\binom{n+k}n}\backslash \{0\}$.
Then,
\begin{equation}
u^\top M_k(\nu) u=\int u^\top v_kv_k^\top ud\nu=\int (u^\top v_k)^2d\mu=C_k\int_{[0,1]^n} p^2dx\ge 0\,,
\end{equation}
where $p=u^\top v_k$
This implies that $M_k(\nu)\succeq 0$.
Suppose, for contradiction, that $u^\top M_k(\nu) u=0$. Then, $\int_{[0,1]^n} p^2dx=0$
Since $p^2\ge 0$, this implies $p=0$ on $[0,1]^n$, so that $p$ is the zero polynomial. Thus, $u=0$, a contradiction. Hence, $M_k(\nu)\succ 0$.

Next, calculate $y_\alpha$:
\begin{equation}\label{eq:moment.lebesgue.on.0.1}
\begin{array}{rl}
y_\alpha=C_k\int_{[0,1]^n} x^\alpha dx=&C_k\prod\limits_{i=1}^n \int_{0}^1 x_i^{\alpha_i} dx_i\\[10pt]
=&C_k\prod\limits_{i=1}^n\frac{1}{\alpha_i+1}=\frac{C_k}{(\alpha_1+1)\dots(\alpha_n+1)}\,.
\end{array}
\end{equation}
To ensure $y_\alpha\in\N$ for all $\alpha\in\N^n_{2k}$ choose $C_k=[(2k+1)!]^n$. Since $\alpha_i+1\in\{1,\dots,2k+1\}$ for $\alpha\in\N^n_{2k}$,
\begin{equation}
\frac{C_k}{\prod_{i=1}^n(\alpha_i+1)}=\prod_{i=1}^n \frac{(2k+1)!}{\alpha_i+1}\in\N\,.
\end{equation}
By Lemma \ref{lem:det.larger.1}, $\det(M_k(\nu))\ge 1$.
To estimate $\tr(M_k(\nu))$, note
 \begin{equation}
 \tr(M_k(\nu))=\sum_{\alpha\in\N^n} y_{2\alpha}=\sum_{i=1}^n y_{2e_i}+\sum_{\alpha\in\N^n_k\backslash \{e_i\}_{i=1}^n} y_{2\alpha}\,,
 \end{equation}
 where $\{e_i\}_{i=1}^n$ is the standard basis in $\R^n$.
 Then,
 \begin{equation}
 \begin{array}{rl}
 \tr(M_k(\nu))\le& \frac{nC_k}{3}+C_k\left[ \binom{n+k}n -n\right]\\[10pt]
 =&C_k\left[ \binom{n+k}n-\frac{2n}{3}\right]\\[10pt]
 \le& C_k\left[ \binom{n+k}n-1\right]\,,
 \end{array}
 \end{equation}
where the last inequality holds because $n\ge 2$.
Using Lemma \ref{lem:lower.bound.smallest.eigenval}, 
\begin{equation}
\begin{array}{rl}
C_k\lambda_{\min}(M_k(\mu))=&\lambda_{\min}(C_k M_k(\mu))=\lambda_{\min}( M_k(C_k\mu))\\[7pt]
=&\lambda_{\min}(M_k(\nu))\\[7pt]
\ge &\left(\frac{\binom{n+k}n-1}{\tr(M_k(\nu))} \right)^{\binom{n+k}n-1}\det(M_k(\nu))\,.
\end{array}
\end{equation}
Since $\det(M_k(\nu))\ge 1$ and $\tr(M_k(\nu))\le C_k\left[ \binom{n+k}n-1\right]$, 
\begin{equation}
\lambda_{\min}(M_k(\nu))\ge C_k^{-\binom{n+k}n+1}\,.
\end{equation}
Thus, $\lambda_{\min}(M_k(\mu))\ge C_k^{-\binom{n+k}n}$.
This completes the proof.
\end{proof}

\begin{remark}
Using the Stirling approximation $m! \sim m^me^{-m}\sqrt{2\pi m}$, we have:
\begin{equation}
C_k=[(2k+1)!]^n\sim (2k+1)^{(2k+1)n} e^{-(2k+1)n}[2\pi(2k+1)]^{\frac{n}{2}}\,,
\end{equation}
which grows rapidly as $n$ and $k$ increase.

To improve the bound, we can instead define $C_k$ as the least common multiple (LCM):
\begin{equation}
L_{k,n}=\text{LCM}\left(\left\{\prod_{i=1}^n(\alpha_i+1)\,:\,\alpha\in\N^n_{2k}\right\}\right)\,.
\end{equation}
Using this approach provides a tighter positive lower bound for $\lambda_{\min}(M_k(\mu))$.

For $n=1$, $L_{k,1}$ is asymptotically equivalent to the exponential of the Chebyshev function for $k$.
However, the asymptotic behavior of $L_{k,n}$ for $n\ge 2$ remains unknown.
\end{remark}
Let $\mathbb K[x]_d$  denote the set of polynomials with coefficients in $\mathbb K$ of degree at most $d$, where $\mathbb K$ is $\mathbb Z$ or $\mathbb Q$.
\begin{proposition}\label{prop:lower.bound.eigenval.constrained}
Let $\mu$ be the Lebesugue measure on $[0,1]^n$, and let $g_i \in\mathbb Q[x]_{2d_i}$, for $i=0,\dots,m$, such that $g_i>0$ on $[0,1]^n$, for $i=1,\dots,m$.
Let $t_i$ be the smallest positive integer such that $t_ig_i\in \mathbb Z[x]_{2d_i}$.
Then, 
\begin{equation}
\lambda_{\min}(D_k(\mu g))\ge \frac{1}{[(2k+1)!]^n}\min_{i=0,\dots,m} \frac{1}{t_i}\left( \frac{\binom{n+k-d_i}{n}-1}{[(2k+1)!]^n \binom{n+k-d_i}nt_i\|g_i\|_{\ell_1}}\right)^{\binom{n+k-d_i}n-1}\,.
\end{equation}
\end{proposition}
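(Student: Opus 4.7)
The plan is to reduce the problem to Lemma~\ref{lem:lower.bound.smallest.eigenval} block-by-block, using the same integer-rescaling trick that makes Proposition~\ref{prop:lower.bound.eigen.val} work. Since $D_k(\mu g) = \diag(M_k(\mu), M_{k-d_1}(g_1\mu), \dots, M_{k-d_m}(g_m\mu))$ is block-diagonal, $\lambda_{\min}(D_k(\mu g))$ equals the minimum over $i$ of $\lambda_{\min}(M_{k-d_i}(g_i\mu))$, so I would handle each $i \ge 1$ separately (the $i=0$ block is Proposition~\ref{prop:lower.bound.eigen.val}, which slots into the min directly once I notice $t_0=1$, $\|g_0\|_{\ell_1}=1$, $d_0=0$).

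Fix $i \in \{1,\dots,m\}$ and write $N_i = \binom{n+k-d_i}{n}$. First I would verify $M_{k-d_i}(g_i\mu) \succ 0$: for $u \ne 0$, $u^\top M_{k-d_i}(g_i\mu) u = \int_{[0,1]^n}(u^\top v_{k-d_i})^2 g_i\,dx > 0$ since $g_i > 0$ on $[0,1]^n$ and the polynomial $u^\top v_{k-d_i}$ is not identically zero. Next, introduce the rescaled measure $\nu_i := t_i C_k g_i \mu$ with $C_k = [(2k+1)!]^n$. The key computation is that for every $\alpha,\beta \in \N^n_{k-d_i}$, each entry
\[
(M_{k-d_i}(\nu_i))_{\alpha,\beta} = \sum_{\gamma \in \N^n_{2d_i}} (t_i g_{i,\gamma}) \cdot \frac{C_k}{\prod_{j=1}^n (\alpha_j+\beta_j+\gamma_j+1)}
\]
is an integer, since $t_i g_i \in \mathbb Z[x]$ by definition of $t_i$, and each factor $C_k/\prod(\alpha_j+\beta_j+\gamma_j+1)$ is an integer because $\alpha_j+\beta_j+\gamma_j \le 2k$. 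Lemma~\ref{lem:det.larger.1} then gives $\det(M_{k-d_i}(\nu_i)) \ge 1$.

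For the trace, I would use $\int x^{2\alpha} g_i\,dx \le \|g_i\|_{\ell_1}$ on $[0,1]^n$ (since $0 \le x^{2\alpha} \le 1$ and $|g_i(x)| \le \|g_i\|_{\ell_1}$), yielding $\tr(M_{k-d_i}(\nu_i)) \le t_i C_k N_i \|g_i\|_{\ell_1}$. Applying Lemma~\ref{lem:lower.bound.smallest.eigenval} to $M_{k-d_i}(\nu_i) \succ 0$ and dividing by $t_i C_k$ produces
\[
\lambda_{\min}(M_{k-d_i}(g_i\mu)) \;\ge\; \frac{1}{t_i C_k}\left(\frac{N_i - 1}{t_i C_k N_i \|g_i\|_{\ell_1}}\right)^{N_i - 1},
\]
and minimizing over $i=0,\dots,m$ gives the stated bound. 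The only genuinely delicate point is checking that the integer-rescaled moments remain integers in the \emph{localizing} case: one needs the degree arithmetic $|\alpha|+|\beta|+|\gamma| \le 2(k-d_i)+2d_i = 2k$ to stay within the $(2k+1)!$ rescaling regime, which is exactly why the bound degrades gracefully in $d_i$ through the factor $\binom{n+k-d_i}{n}$ rather than $\binom{n+k}{n}$. Everything else is routine book-keeping.
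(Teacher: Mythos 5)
Your proposal is correct and follows essentially the same route as the paper: rescale by $C_k=[(2k+1)!]^n$ and $t_i$ so that the localizing-matrix entries are integers, invoke the determinant-at-least-one lemma together with the trace bound $t_iC_kN_i\|g_i\|_{\ell_1}$, apply Lemma~\ref{lem:lower.bound.smallest.eigenval} blockwise, and take the minimum over the blocks of $D_k(g\mu)$. The only (harmless) cosmetic differences are that you bound the trace by estimating $\int x^{2\alpha}g_i\,d\mu$ directly rather than via the integer moments, and that you import the $i=0$ block from Proposition~\ref{prop:lower.bound.eigen.val} instead of running the same argument uniformly with $g_0=1$.
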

\begin{proof}
Define $\nu=C_k\mu$ with $C_k=[(2k+1)!]^n$.
Let $y=(y_\alpha)_{\alpha\in\N^n_{2k}}$ denote the truncated moment sequence of $\nu$.
Define $\tilde g_i=t_i g_i$.
Since $\tilde g_i=\sum_{\gamma\in\N^n_{2d_i}} g_{i,\gamma} x^\gamma\in\mathbb Z[x]_{2d_i}$ and $y\subset \mathbb Z$ (as shown in the proof of Proposition \ref{prop:lower.bound.eigen.val}), it follows that
\begin{equation}
\begin{array}{rl}
M_{k-d_i}(\tilde g_i\nu)=M_{k-d_i}(\tilde g_iy)=&\left(\sum_{\gamma\in\N^n_{2d_i}}t_ig_{i,\gamma} y_{\alpha+\beta+\gamma}\right)_{\alpha,\beta\in\N^n_{k-d_i}}\\[10pt]
&\in\mathbb Z^{\binom{n+k-d_i}n\times \binom{n+k-d_i}n}\,.
\end{array}
\end{equation} 
We claim that
$M_{k-d_i}(\tilde g_i\nu)\succ 0$.
Indeed, we take $w\in\R^{\binom{n+k-d_i}n}\backslash \{0\}$.
Since $\tilde g_i=t_ig_i>0$ on $[0,1]^n$,
\begin{equation}
\begin{array}{rl}
w^\top M_{k-d_i}(\tilde g_i\nu)w=&w^\top\left(\int \tilde g_i  v_{k-d_i}v_{k-d_i}^\top  d\nu\right)w\\[10pt]
=&\int \tilde g_i w^\top v_{k-d_i}v_{k-d_i}^\top w d\nu\\[10pt]
 =&C_k\int_{[0,1]^n} \tilde g_ip^2dx\ge 0\,,
\end{array}
\end{equation}
where $p=w^\top v_{k-d_i}$.
This implies $ M_{k-d_i}(\tilde g_i\nu)\succeq 0$.
 Assume by contradiction that $0=w^\top M_{k-d_i}(\tilde g_i\nu)w$. Then,
\begin{equation}
0=w^\top\left(\int \tilde g_i  v_{k-d_i}v_{k-d_i}^\top  d\nu\right)w=\int \tilde g_i w^\top v_{k-d_i}v_{k-d_i}^\top w d\nu =C_k\int_{[0,1]^n} \tilde g_ip^2dx\,.
\end{equation}
It implies that $\tilde g_ip^2=0$ on $[0,1]^n$, which gives  $p=0$ on $[0,1]^n$ since $\tilde g_i=t_ig_i>0$ on $[0,1]^n$. Thus, $p$ is zero polynomial, which contradicts $w\ne 0$.
By Lemma \ref{lem:det.larger.1}, $\det(M_{k-d_i}(\nu \tilde g_i))\ge 1$.
To bound $\tr(M_{k-d_i}(\nu \tilde g_i))$, note:
\begin{equation}
\begin{array}{rl}
\tr(M_{k-d_i}(\nu \tilde g_i))=&\sum_{\alpha\in\N^n_{k-d_i}} \sum_{\gamma\in\N^n_{2d_i}} t_ig_{i,\gamma} y_{2\alpha+\gamma}\\[7pt]
\le& \sum_{\alpha\in\N^n_{k-d_i}} t_i\sum_{\gamma\in\N^n_{2d_i}} |g_{i,\gamma}| y_{2\alpha+\gamma}\\[7pt]
\le& C_k\sum_{\alpha\in\N^n_{k-d_i}} t_i\| g_i\|_{\ell_1} \\[7pt]
\le& C_k\binom{n+k-d_i}n t_i\|g_i\|_{\ell_1} \,.
\end{array}
\end{equation}
The second inequality follows from $0\le y_{2\alpha+\gamma}\le C_k$.

Next, we estimate a lower bound for $\lambda_{\min}(M_{k-d_i}(\nu g_i))$.
Using Lemma \ref{lem:lower.bound.smallest.eigenval}, 
\begin{equation}
\begin{array}{rl}
\lambda_{\min}(M_{k-d_i}(\nu g_i))= &\lambda_{\min}(M_{k-d_i}(\nu \frac{1}{t_i}\tilde g_i))\\
= &\lambda_{\min}(\frac{1}{t_i}M_{k-d_i}(\nu \tilde g_i))\\
= &\frac{1}{t_i}\lambda_{\min}(M_{k-d_i}(\nu \tilde g_i))\\
\ge &\frac{1}{t_i}\left( \frac{\binom{n+k-d_i}{n}-1}{\tr(M_{k-d_i}(\nu \tilde g_i))}\right)^{\binom{n+k-d_i}n-1}\\
\ge &\frac{1}{t_i}\left( \frac{\binom{n+k-d_i}{n}-1}{[(2k+1)!]^n \binom{n+k-d_i}nt_i\|g_i\|_{\ell_1}}\right)^{\binom{n+k-d_i}n-1}\,.
\end{array}
\end{equation}
Let us find a positive lower bound for $\lambda_{\min}(D_k(\mu g))$.
Since 
\begin{equation}
D_k(\nu g)=\diag((M_{k-d_i}(\nu g_i))_{i=0}^m)\,,
\end{equation}
we have
\begin{equation}
\begin{array}{rl}
C_k\lambda_{\min}(D_k(\mu g))=&\lambda_{\min}(C_k D_k(\mu g))=\lambda_{\min}( D_k((C_k\mu) g))\\[10pt]
=&\lambda_{\min}(D_k(\nu g))=\min\limits_{i=0,\dots,m}\lambda_{\min}(M_{k-d_i}(\nu g_i))\\[10pt]
\ge &\min\limits_{i=0,\dots,m} \frac{1}{t_i}\left( \frac{\binom{n+k-d_i}{n}-1}{[(2k+1)!]^n \binom{n+k-d_i}nt_i\|g_i\|_{\ell_1}}\right)^{\binom{n+k-d_i}n-1}\,,
\end{array}
\end{equation}
yielding the result.
\end{proof}

\section{Basic algorithms}
\subsection{Gaussian elimination}

\begin{algorithm}\label{alg:Gaussian.elimination}
Gaussian elimination
\begin{itemize}
\item Input: $\tilde A_i\in \prod_{j=1}^r\mathbb S^{N_j}$, $\tilde b_i\in\R$, for $i=1,\dots,\tilde M$.
\item Output: $\check A_i\in \prod_{j=1}^r\mathbb S^{N_j}$, $\check b_i\in\R$, for $i=1,\dots,M$.
\end{itemize}
\begin{enumerate}
\item Convert each $\tilde A_i$  into a row vector: 
\begin{equation}
a_i=\begin{bmatrix}
a_{11}&a_{12}&\dots&a_{1L}
\end{bmatrix}=\text{Vec}(\tilde A_i)^\top\,,\,b_i=\tilde b_i\,,\text{ for } i=1,\dots,\tilde M\,.
\end{equation}
\item Construct the augmented matrix 
$[A|b]$ by concatenating the rows of 
$a_i$  and $b_i$:
\begin{equation}
[A|b]=\begin{bmatrix}
a_{1}&b_1\\
a_{2}&b_2\\
\dots&\dots\\
a_{\tilde M}&b_{\tilde M}\\
\end{bmatrix}=\begin{bmatrix}
a_{11}&a_{12}&\dots&a_{1L}&b_1\\
a_{21}&a_{22}&\dots&a_{2L}&b_2\\
.&.&\dots&.&.\\
a_{\tilde M1}&a_{\tilde M2}&\dots&a_{\tilde ML}&b_{\tilde M}\\
\end{bmatrix}\,.
\end{equation}
\item Initialize $m=0$ and $l=0$.

\item Increment $m$ by one unit.

\item Increment $l$ by one unit.

\item If $l>L$, stop the algorithm. Otherwise, proceed to the next step.

\item If $a_{il}=0$, for all $i=m,\dots ,\tilde M$, go back to Step 5. Otherwise, proceed to the next step.

\item Interchange the $m$-th row with any row $i>m$ in $[A|b]$ such that $a_{il}
\ne 0$ (if $i=m$ there is no need to perform an interchange).

\item For $i=m+1,\dots ,\tilde M$, add $-a_{il}/a_{ml}$ times the $m$-th row to the $i$-th row of the augmented matrix $[A|b]$.

\item If $m<\tilde M-1$, return to Step 4. 
Otherwise, proceed to the next step.
\item Set
\begin{equation}
M=\max\{i\in\{1,\dots,\tilde M\}\,:\,\begin{bmatrix}a_i&b_i\end{bmatrix}\ne 0\}\,.
\end{equation}
Then, extract the matrices: $\check A_i=\text{Mat}(a_i)$, $\check b_i=b_i$, for $i=1,\dots,M$.
\end{enumerate}
\end{algorithm}

\begin{lemma}
\label{lem:Gaussian.elimination}
Let $\tilde A_i\in \prod_{j=1}^r\mathbb S^{N_j}$, $\tilde b_i\in\R$, $i=1,\dots,\tilde M$, such that there exists $\bar X\in\mathbb S^N$ with $N=\sum_{i=1}^r N_i$ satisfying $\tr( \tilde A_i \bar X)=\tilde b_i$, $i=1,\dots,\tilde M$.
Let $\check A_i\in \prod_{j=1}^r\mathbb S^{N_j}$, $\check b_i\in\R$, $i=1,\dots,M$, be the outputs of Algorithm \ref{alg:Gaussian.elimination}.
Then, the set $\check A_1,\dots,\check A_M$ is linearly independent in $\prod_{j=1}^r\mathbb S^{N_j}$, with $M\le \min\{\tilde M, L\}$, where $L=\frac{1}{2}\sum_{i=1}^rN_i(N_i+1)$, and the equivalence
\begin{equation}
\tr( \check A_i X)=\check b_i\,,\,i=1,\dots,M \Leftrightarrow \tr( \tilde A_i X)=\tilde b_i\,,\,i=1,\dots,\tilde M
\end{equation}
holds.
\end{lemma}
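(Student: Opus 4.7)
The plan is to interpret Algorithm~\ref{alg:Gaussian.elimination} as ordinary Gaussian elimination performed on the augmented matrix $[A\mid b]\in\R^{\tilde M\times (L+1)}$, where the $i$-th row is $(\mathrm{Vec}(\tilde A_i)^\top,\,\tilde b_i)$ and $L=\frac12\sum_{j=1}^r N_j(N_j+1)$ is the dimension of $\prod_{j=1}^r\mathbb S^{N_j}$. Because $\mathrm{Vec}$ is a bijective linear isometry between $\prod_{j=1}^r\mathbb S^{N_j}$ and $\R^L$ (with the trace inner product mapping to the standard one), the statement to prove will be translated into three purely linear-algebraic facts about this augmented matrix.

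First I would record that the only operations performed by the algorithm are (a) row swaps, and (b) elementary replacements of the form ``add a scalar multiple of row $m$ to row $i>m$''. Let $[A'\mid b']$ denote the echelon form produced after the loop, and let $I\subseteq\{1,\dots,\tilde M\}$ be the set of indices of its nonzero rows; by the definition of $M$ in the final step, $|I|=M$ and the surviving matrices/scalars are $\check A_i=\mathrm{Mat}(a'_i)$, $\check b_i=b'_i$. Since the row operations are invertible, the row space of $[A'\mid b']$ equals that of $[A\mid b]$, and similarly for the projections onto the first $L$ coordinates.

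Next I would establish the three required conclusions. \emph{(i) Linear independence:} the nonzero rows of an echelon form are linearly independent in $\R^{L+1}$; projecting onto the first $L$ coordinates still keeps them independent, because each such row has a pivot in a column $\le L$ that no later row hits. Transporting this via $\mathrm{Mat}$ gives the linear independence of $\check A_1,\dots,\check A_M$ in $\prod_{j=1}^r\mathbb S^{N_j}$. \emph{(ii) Size bound:} $M$ is the number of nonzero rows of $A'$, so $M\le\tilde M$ trivially, and $M\le L$ because each nonzero row in echelon form has a distinct pivot column among the $L$ columns of $A'$. \emph{(iii) Equivalence of affine systems:} for any $X\in\mathbb S^N$, the vector $x=\mathrm{Vec}(X)\in\R^L$ satisfies $\tr(\tilde A_i X)=\tilde b_i$ for all $i$ iff $Ax=b$, and likewise $\tr(\check A_i X)=\check b_i$ for all $i$ iff $A'_Ix=b'_I$, where the subscript $I$ denotes restriction to the surviving rows. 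Since row operations preserve the solution set, $Ax=b$ is equivalent to $A'x=b'$, and the latter is equivalent to $A'_Ix=b'_I$ \emph{provided} every dropped row has $b'_i=0$ as well as $a'_i=0$.

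The one place where the hypothesis on $\bar X$ is needed is exactly this last consistency step, and I expect it to be the main (though still short) obstacle to pin down cleanly. The argument is: by hypothesis the original system $Ax=b$ admits the solution $\bar x=\mathrm{Vec}(\bar X)$, hence $[A\mid b]$ and $[A'\mid b']$ have the same rank as $A$ and $A'$ respectively. Consequently any row of $[A'\mid b']$ whose first $L$ entries vanish must also have a vanishing last entry; otherwise the augmented rank would exceed the rank of $A'$, contradicting consistency. This justifies dropping all zero-$A'$ rows without losing information, completing the equivalence and hence the proof.
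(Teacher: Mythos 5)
The paper itself gives no proof of Lemma~\ref{lem:Gaussian.elimination}; it is stated as a standard fact about Gaussian elimination, so there is nothing to compare against line by line. Your argument is correct and is exactly the standard one: identify $\prod_{j=1}^r\mathbb S^{N_j}$ with $\R^L$ via $\mathrm{Vec}$, observe that the algorithm only performs row swaps and row replacements on $[A\mid b]$, so the row space and the solution set are preserved; the kept rows are pivot rows of the echelon form, hence linearly independent with distinct pivot columns among the $L$ coefficient columns, giving both the independence of the $\check A_i$ and $M\le\min\{\tilde M,L\}$; and the equivalence of the two affine systems reduces to being allowed to discard the remaining rows, which is legitimate precisely because feasibility (the existence of $\bar X$) forces every row with vanishing coefficient part to have vanishing right-hand side. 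You correctly isolate this last consistency step as the only place where the hypothesis on $\bar X$ enters, which is the heart of the matter.

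One small point worth tightening: you assert that the surviving rows $1,\dots,M$ are exactly the nonzero rows of $[A'\mid b']$. This needs the observation that in the forward-elimination loop each row that is ever promoted to position $m$ carries a pivot in some coefficient column, so all zero rows of the coefficient block sit at the bottom; combined with consistency (zero coefficient part $\Rightarrow$ zero right-hand side), the index $M=\max\{i:[a_i\;b_i]\neq 0\}$ indeed counts precisely the pivot rows, and none of the kept $\check A_i$ can be the zero matrix. This is a one-line addition, not a gap in the idea.
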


\begin{lemma}\label{lem:complexity.Gaussian.elimination}
To execute Algorithm \ref{alg:Gaussian.elimination}, both classical and quantum computers exhibit the same complexity of $O(\tilde M L^2)$, where $L=\frac{1}{2}\sum_{i=1}^rN_i(N_i+1)$.
\end{lemma}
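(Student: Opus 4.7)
The plan is to account for each phase of Algorithm \ref{alg:Gaussian.elimination} separately and show that the sum is dominated by the row-reduction step. First I would handle the vectorization: each $\tilde A_i\in\prod_{j=1}^r\mathbb S^{N_j}$ is converted to the row vector $a_i=\text{Vec}(\tilde A_i)^\top$ of length $L=\frac{1}{2}\sum_{i=1}^r N_i(N_i+1)$, which costs $O(L)$ per matrix and $O(\tilde M L)$ overall. Assembling the augmented matrix $[A\,|\,b]$ of size $\tilde M\times (L+1)$ is also $O(\tilde M L)$. These preliminary steps lie below the target bound.

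Next I would analyze the main loop, which increments the column index $l$ from $1$ up to at most $L$, since the algorithm halts as soon as $l>L$; so there are at most $L$ distinct pivot columns processed. For each such $l$, the work decomposes as: (i) scanning the entries $a_{il}$ for $i\ge m$ to locate a non-zero pivot, which is $O(\tilde M)$; (ii) swapping two rows of $[A\,|\,b]$, which is $O(L)$; and (iii) eliminating the $l$-th entry in each of the at most $\tilde M$ remaining rows by adding a scalar multiple of the pivot row, which costs $O(L)$ per affected row for a total of $O(\tilde M L)$. Thus the work per pivot column is dominated by $O(\tilde M L)$, and summing over at most $L$ pivot columns yields $O(\tilde M L^2)$. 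The final extraction step, which reads off the non-zero rows of the reduced matrix and applies $\text{Mat}(\cdot)$ to recover each $\check A_i$, costs $O(ML)$ with $M\le \min\{\tilde M,L\}$, and is again subsumed.

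For the quantum claim, the key observation is that Algorithm \ref{alg:Gaussian.elimination} consists entirely of classical linear-algebra operations — pivot search, row interchange, and row elimination via scalar multiply-and-add — with no quantum subroutines, no block-encoding constructions, and no amplitude-estimation or amplification steps. A quantum machine simulating the same procedure therefore executes the identical sequence of arithmetic operations at the same asymptotic cost, giving the same $O(\tilde M L^2)$ upper bound. There is no quantum speed-up here, nor is any needed: this subroutine is used only in the preparation stage of Algorithms \ref{alg:con.pop}, \ref{alg:eq.con.pop.sphere}, and \ref{alg:eq.con.pop.ball}, and its cost has already been factored into the overall complexity statements of those algorithms.

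No step of the analysis is delicate; the only mild point to verify is that the outer iteration count is bounded by $L$ rather than $\tilde M$, which follows directly from the termination condition $l>L$ and ensures that the stated bound holds regardless of the relative sizes of $\tilde M$ and $L$. This is where a naive count might instead produce $O(\tilde M^2 L)$, and it is the only place where the structure of the algorithm (rather than just the dimensions of the augmented matrix) matters for the advertised bound.
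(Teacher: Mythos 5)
Your proof is correct. The paper in fact states Lemma~\ref{lem:complexity.Gaussian.elimination} without any proof, and your argument is the standard operation count one would supply: at most $\min\{\tilde M,L\}\le L$ pivot rounds on the $\tilde M\times(L+1)$ augmented matrix, each dominated by the $O(\tilde M L)$ elimination step, with vectorization, pivot search, swaps, and the final $\mathrm{Mat}(\cdot)$ extraction all subsumed, and the quantum cost identical because the routine is purely classical arithmetic with no quantum subroutines. Your closing observation that the loop terminates once $l>L$ (so the round count is bounded by $L$ independently of $\tilde M$) is exactly the right point to make the stated bound $O(\tilde M L^2)$ airtight.
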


\subsection{Gram-Schmidt process}
\label{sec:Gram-Schmidt}

\begin{algorithm}\label{alg:Gram-Schmidt}
Gram-Schmidt process
\begin{itemize}
\item Input: $\check A_i\in \prod_{j=1}^r\mathbb S^{N_j}$, $\check b_i\in\R$, for $i=1,\dots,M$.
\item Output: $A_i\in \prod_{j=1}^r\mathbb S^{N_j}$, $b_i\in\R$, for $i=1,\dots,M$.
\end{itemize}
\begin{enumerate}
\item Initialize the first pair as $\hat A_1=\check A_1$, $\hat b_1=\check b_1$, $A_1=\frac{\hat A_1}{\|\hat A_1\|_F}$, $b_1=\frac{\hat b_1}{\|\hat A_1\|_F}$.
\item For $j=2,\dots,M$:
\begin{enumerate}
\item Orthogonalize $\hat A_j$ by subtracting the projections onto the previous $\hat A_i$s
\begin{equation}
\hat A_j=\check A_j-\sum_{i=1}^{j-1} \frac{\tr( \check A_j \hat A_i)}{\|\hat A_i\|_F^2}\hat A_i\,.
\end{equation}
\item Adjust the corresponding $\hat b_j$ by the same projections:
\begin{equation}
\hat b_j=\check b_j-\sum_{i=1}^{j-1} \frac{\tr( \check A_j \hat A_i)}{\|\hat A_i\|_F^2}\hat b_i\,.
\end{equation}
\item Normalize the result: $A_j=\frac{\hat A_j}{\|\hat A_i\|_F}$, $b_j=\frac{\hat b_j}{\|\hat A_j\|_F}$.
\end{enumerate}
\end{enumerate}
\end{algorithm}

\begin{lemma}
\label{lem:Gram-Schmidt}
Let $\check A_i\in \prod_{j=1}^r\mathbb S^{N_j}$ and $\check b_i\in\R$, for $i=1,\dots,M$, such that $\check A_1,\dots,\check A_M$ are linearly independent in $\prod_{j=1}^r\mathbb S^{N_j}$.
Let $A_i\in \prod_{j=1}^r\mathbb S^{N_j}$ and $b_i\in\R$, for $i=1,\dots,M$ be matrices and real numbers returned by Algorithm \ref{alg:Gram-Schmidt}.
Then, $\tr( A_i A_j)=\delta_{ij}$, and we have equivalent
\begin{equation}
\tr( \check A_i X)=\check b_i\,,\,\text{for }i=1,\dots,M \Leftrightarrow \tr( A_i X)=b_i\,,\,\text{for }i=1,\dots,M\,.
\end{equation}
\end{lemma}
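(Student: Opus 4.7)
The plan is to treat the trace form $\langle A,B\rangle:=\tr(AB)$ on $\prod_{j=1}^{r}\mathbb S^{N_j}$ as an inner product, recognize that Algorithm \ref{alg:Gram-Schmidt} is the classical Gram--Schmidt procedure with respect to this inner product, and track how the same linear combinations applied to $\check A_j$ to produce $\hat A_j$ are applied to $\check b_j$ to produce $\hat b_j$. The two conclusions---orthonormality of the $A_i$, and equivalence of the two affine systems---will follow from a single induction on the iteration index.

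First I would handle the orthonormality claim $\tr(A_iA_j)=\delta_{ij}$. The linear independence of $\check A_1,\dots,\check A_M$ guarantees that $\|\hat A_j\|_F>0$ at every step, since otherwise $\check A_j$ would lie in the span of $\check A_1,\dots,\check A_{j-1}$ by the inductive expression of $\hat A_j$; hence the normalizations in Step~2(c) are well defined. A routine induction using Step~2(a) then gives $\tr(\hat A_i\hat A_j)=0$ for $i<j$: expanding $\hat A_j=\check A_j-\sum_{\ell<j}\frac{\tr(\check A_j\hat A_\ell)}{\|\hat A_\ell\|_F^2}\hat A_\ell$ and pairing with $\hat A_i$ cancels the only surviving term by the inductive hypothesis. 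Dividing by $\|\hat A_i\|_F\|\hat A_j\|_F$ yields $\tr(A_iA_j)=\delta_{ij}$.

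Next, for the equivalence of the linear systems, I would show by induction on $j$ that there exist coefficients $(\alpha_{ji})_{i\le j}$ with $\alpha_{jj}\neq 0$ such that
\[
A_j=\sum_{i=1}^{j}\alpha_{ji}\,\check A_i \qquad\text{and}\qquad b_j=\sum_{i=1}^{j}\alpha_{ji}\,\check b_i.
\]
This is immediate from Steps~2(a)--(c): $\hat A_j$ and $\hat b_j$ are built from $(\check A_j,\check b_j)$ by subtracting \emph{identical} scalar multiples of the previously constructed pairs $(\hat A_\ell,\hat b_\ell)$, and then both are divided by the same positive scalar $\|\hat A_j\|_F$. Hence the $M\times M$ matrix $(\alpha_{ji})_{j,i=1}^M$ is lower triangular with nonzero diagonal entries, and therefore invertible. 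Taking the trace of either relation against an arbitrary $X\in\mathbb S^N$ shows that $\tr(A_jX)-b_j=\sum_i \alpha_{ji}(\tr(\check A_iX)-\check b_i)$, and conversely, by inverting the triangular transformation, $\tr(\check A_iX)-\check b_i$ is a linear combination of $\tr(A_jX)-b_j$. Consequently the two affine systems define the same solution set.

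The only real obstacle is bookkeeping: one must confirm that the \emph{same} triangular change of basis acts on both sides of each affine constraint so that scalars pass cleanly through the trace pairing. Once this triangular decomposition is in hand, the orthonormality and the equivalence of the two systems both drop out by linearity of the trace, completing the proof.
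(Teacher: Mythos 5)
Your proof is correct. The paper states Lemma~\ref{lem:Gram-Schmidt} without proof, and your argument is exactly the standard one it implicitly relies on: orthonormality of the $A_i$ by induction in the trace inner product, together with the observation that the algorithm applies the \emph{same} lower-triangular change of basis (with nonzero diagonal $1/\|\hat A_j\|_F$, well defined by linear independence since $\mathrm{span}(\hat A_1,\dots,\hat A_{j-1})=\mathrm{span}(\check A_1,\dots,\check A_{j-1})$) to the matrices and to the scalars $\check b_i$, so that $\tr(A_jX)-b_j=\sum_i\alpha_{ji}\bigl(\tr(\check A_iX)-\check b_i\bigr)$ and invertibility of $(\alpha_{ji})$ yields the equivalence of the two affine systems.
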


\begin{lemma}\label{lem:complexity.Gram-Schmidt}
To execute Algorithm \ref{alg:Gram-Schmidt}, both classical computer and quantum computer have the same complexity of $O(LM^2)$, where $L=\frac{1}{2}\sum_{i=1}^rN_i(N_i+1)$.
\end{lemma}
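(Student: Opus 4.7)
The plan is to regard each input $\check A_i \in \prod_{j=1}^{r}\mathbb{S}^{N_j}$ as a vector in $\R^L$ via the $\mathrm{Vec}(\cdot)$ map introduced in the notation section, where $L = \tfrac{1}{2}\sum_{j=1}^{r} N_j(N_j+1)$ counts the independent entries of a block-diagonal symmetric matrix. Under this identification, the Frobenius inner product $\tr(\check A_j \hat A_i)$ coincides with the Euclidean inner product $\mathrm{Vec}(\check A_j)^\top \mathrm{Vec}(\hat A_i)$, and $\|\hat A_i\|_F^2$ is the squared Euclidean norm of $\mathrm{Vec}(\hat A_i)$. Thus Algorithm~\ref{alg:Gram-Schmidt} reduces to the classical Gram--Schmidt process applied to $M$ vectors in $\R^L$, together with the parallel bookkeeping update of the scalars $\hat b_j$.

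I would then perform the per-iteration accounting. At step $j \in \{2,\dots,M\}$ the algorithm must: (i) compute the $j-1$ inner products $\tr(\check A_j \hat A_i)$, each of cost $O(L)$; (ii) assemble $\hat A_j$ by subtracting $j-1$ scaled copies of $\hat A_i$, each subtraction costing $O(L)$; (iii) perform the analogous scalar update for $\hat b_j$ at cost $O(j)$; (iv) compute $\|\hat A_j\|_F$ at cost $O(L)$ and rescale both $\hat A_j$ and $\hat b_j$ at cost $O(L)$. The dominant contribution at step $j$ is therefore $O(jL)$. Summing gives
\begin{equation}
\sum_{j=2}^{M} O(jL) \;=\; O(L M^2),
\end{equation}
which is the asserted bound.

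Finally, I would address why the classical and quantum complexities coincide here. The only operations used in Algorithm~\ref{alg:Gram-Schmidt} are standard inner products, scaled subtractions, and a normalization, all performed on explicitly stored entries of the block-diagonal symmetric matrices. No block-encoding or Hamiltonian-updates subroutine is invoked during Gram--Schmidt; consequently, no quantum speedup applies and the quantum runtime simply matches the classical arithmetic cost $O(LM^2)$.

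There is essentially no deep obstacle: the proof is a routine complexity accounting once the $\mathrm{Vec}(\cdot)$ identification is fixed. The only subtlety worth spelling out is that storing and updating each $\hat A_i$ in its block-diagonal form costs $L$ rather than $N^2 = (\sum_j N_j)^2$ entries, so that a naive bound $O(N^2 M^2)$ is loosened to the sharper $O(LM^2)$ claimed in the statement.
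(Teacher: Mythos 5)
Your proposal is correct, and it is exactly the routine accounting one would expect: the paper in fact states Lemma~\ref{lem:complexity.Gram-Schmidt} without giving a proof, and your argument (identify each block-diagonal symmetric matrix with a vector in $\R^L$ via $\mathrm{Vec}(\cdot)$ so that Frobenius inner products become Euclidean ones, then sum the $O(jL)$ per-step cost over $j=2,\dots,M$, noting that no quantum subroutine is invoked so the quantum cost equals the classical one) supplies precisely the missing justification. No gaps.
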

\section{Useful lemmas}

\begin{lemma}\label{lem:cayley-haminton}
Let $a,b,c\in (0,\infty)$. Define square matrix of size $n+1$
\begin{equation}
W=\begin{bmatrix}
1& a &a &\dots& a\\
a&b & c&\dots&c\\
a&c&b&\dots&c\\
. & .&.&\dots&.\\
a&c&c&\dots&b
\end{bmatrix}\,.
\end{equation}
Then $\det(W-\lambda I_{n+1})=(b-c-\lambda)^{n-1}[(1-\lambda)(b-c-\lambda)-n(a^2-(1-\lambda)c)]$.
\end{lemma}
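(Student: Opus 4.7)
The plan is to exploit the block structure of $W-\lambda I_{n+1}$ together with the Schur-complement formula. Write
\[
W-\lambda I_{n+1}
\;=\;
\begin{bmatrix}
1-\lambda & a\,\mathbf 1^\top\\[3pt]
a\,\mathbf 1 & (b-c-\lambda)I_n + c\,J_n
\end{bmatrix},
\]
where $\mathbf 1\in\R^n$ is the all-ones vector and $J_n=\mathbf 1\mathbf 1^\top$. This is the natural decomposition since the lower-right $n\times n$ block has constant diagonal $b$ and constant off-diagonal $c$.

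First I would handle the generic case $\lambda\neq 1$. The Schur-complement identity gives
\[
\det(W-\lambda I_{n+1})=(1-\lambda)\,\det\!\left(D-\tfrac{a^2}{1-\lambda}J_n\right),\qquad D=(b-c-\lambda)I_n+c J_n,
\]
so the remaining determinant has the form $\det(\alpha I_n+\beta J_n)$ with $\alpha=b-c-\lambda$ and $\beta=c-\tfrac{a^2}{1-\lambda}$. Next I would invoke the standard fact that $J_n$ has eigenvalues $n$ (once, with eigenvector $\mathbf 1$) and $0$ (with multiplicity $n-1$), which yields the formula $\det(\alpha I_n+\beta J_n)=\alpha^{n-1}(\alpha+n\beta)$. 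Substituting back and multiplying by the $(1-\lambda)$ factor gives
\[
(b-c-\lambda)^{n-1}\Bigl[(1-\lambda)(b-c-\lambda) + n\bigl((1-\lambda)c-a^2\bigr)\Bigr],
\]
which is exactly the claimed expression.

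The remaining step is to cover $\lambda=1$: both sides are polynomials in $\lambda$, and since the identity holds on the dense set $\{\lambda\neq 1\}$, it extends by continuity (equivalently, by polynomial identity) to all $\lambda\in\C$. There is no real obstacle here; the mild care point is ensuring the Schur-complement step is invoked correctly and the eigendecomposition of $J_n$ is stated cleanly. Everything else is routine algebra.
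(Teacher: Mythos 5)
Your proof is correct, and it takes a genuinely different route from the paper. You decompose $W-\lambda I_{n+1}$ into a $1\times n$ bordered block form, apply the Schur-complement identity to peel off the $(1-\lambda)$ factor, and then reduce the remaining determinant to the standard formula $\det(\alpha I_n+\beta J_n)=\alpha^{n-1}(\alpha+n\beta)$ coming from the spectrum of the rank-one matrix $J_n$; the case $\lambda=1$ is correctly recovered by polynomial identity. The paper instead works directly on the $(n+1)\times(n+1)$ determinant by elementary column and row subtractions, a rescaling of the first row, a cofactor expansion along the first row, and a separate evaluation of an auxiliary $n\times n$ determinant $T=(-1)^n n$. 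The two arguments buy different things: your version is shorter and more conceptual, isolating the two structural facts (Schur complement, spectrum of $J_n$) that make the computation trivial, and it avoids the paper's division by $a$ and $c$ (which in the paper is harmless only because $a,c>0$ is assumed); the paper's version is entirely elementary, using nothing beyond row/column operations and cofactor expansion, at the cost of a longer chain of manipulations. Your substitution check is consistent: $(1-\lambda)\alpha + n((1-\lambda)c-a^2) = (1-\lambda)(b-c-\lambda)-n(a^2-(1-\lambda)c)$, matching the stated formula exactly.
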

\begin{proof}
We have
\begin{equation}
\begin{array}{rl}
\det(W-\lambda I_{n+1})= \begin{vmatrix}
1-\lambda& a &a &\dots& a\\
a&b-\lambda & c&\dots&c\\
a&c&b-\lambda&\dots&c\\
. & .&.&\dots&.\\
a&c&c&\dots&b-\lambda
\end{vmatrix}
\end{array}
\end{equation}
Subtracting the $j$-th column from the last column, $j=2,\dots,n$, we get
\begin{equation}
\begin{array}{rl}
\det(W-\lambda I_{n+1})= \begin{vmatrix}
1-\lambda & 0 &0 &\dots & 0& a\\
a&b-c-\lambda & 0&\dots& 0& c\\
a&0&b-c-\lambda&\dots&0&c\\
. & .&.&\dots&.&.\\
a&0&0&\dots&b-c-\lambda & c\\
a&c-b+\lambda&c-b+\lambda&\dots&c-b+\lambda & b-\lambda\\
\end{vmatrix}
\end{array}
\end{equation}
Multiplying the first row by $\frac{c}{a}$, we obtain
\begin{equation}
\begin{array}{rl}
\det(W-\lambda I_{n+1})= \frac{a}{c} \begin{vmatrix}
\frac{(1-\lambda)c}{a} & 0 &0 &\dots & 0& c\\
a&b-c-\lambda & 0&\dots& 0& c\\
a&0&b-c-\lambda&\dots&0&c\\
. & .&.&\dots&.&.\\
a&0&0&\dots&b-c-\lambda & c\\
a&c-b+\lambda&c-b+\lambda&\dots&c-b+\lambda & b-\lambda\\
\end{vmatrix}
\end{array}
\end{equation}
Subtracting the $j$-th row from the first row, $j=2,\dots,n$, we get
\begin{equation}
\begin{array}{rl}
\det(W-\lambda I_{n+1})= \frac{a}{c} \begin{vmatrix}
\frac{(1-\lambda)c}{a} & 0 &0 &\dots & 0& c\\
a-\frac{(1-\lambda)c}{a}&b-c-\lambda & 0&\dots& 0& 0\\
a-\frac{(1-\lambda)c}{a}&0&b-c-\lambda&\dots&0&0\\
. & .&.&\dots&.&.\\
a-\frac{(1-\lambda)c}{a}&0&0&\dots&b-c-\lambda & 0\\
a-\frac{(1-\lambda)c}{a}&c-b+\lambda&c-b+\lambda&\dots&c-b+\lambda & b-c-\lambda\\
\end{vmatrix}
\end{array}
\end{equation}
Expanding the determinant with respect to the first row, we get
\begin{equation}
\begin{array}{rl}
\det(W-\lambda I_{n+1})=& \frac{a}{c}\frac{(1-\lambda)c}{a}\begin{vmatrix}
b-c-\lambda & 0&\dots& 0& 0\\
0&b-c-\lambda&\dots&0&0\\
 .&.&\dots&.&.\\
0&0&\dots&b-c-\lambda & 0\\
c-b+\lambda&c-b+\lambda&\dots&c-b+\lambda & b-c-\lambda\\
\end{vmatrix}\\[30pt]
&+ \frac{a}{c}(-1)^nc\begin{vmatrix}
a-\frac{(1-\lambda)c}{a}&b-c-\lambda & 0&\dots& 0\\
a-\frac{(1-\lambda)c}{a}&0&b-c-\lambda&\dots&0\\
. & .&.&\dots&.\\
a-\frac{(1-\lambda)c}{a}&0&0&\dots&b-c-\lambda \\
a-\frac{(1-\lambda)c}{a}&c-b+\lambda&c-b+\lambda&\dots&c-b+\lambda\\
\end{vmatrix}\\[30pt]
=& (1-\lambda)(b-c-\lambda)^n- (-1)^n a[a-\frac{(1-\lambda)c}{a}](b-c-\lambda)^{n-1} T\\
=& (1-\lambda)(b-c-\lambda)^n- (-1)^n [a^2-(1-\lambda)c](b-c-\lambda)^{n-1} T\,,
\end{array}
\end{equation}
where 
\begin{equation}
T=\begin{vmatrix}
-1&1 & 0&\dots& 0\\
-1&0&1&\dots&0\\
. & .&.&\dots&.\\
-1&0&0&\dots&1 \\
-1&-1&-1&\dots&-1 \\
\end{vmatrix}\,.
\end{equation}
To compute $T$, adding to the $j$-th column the sum of the first $j-1$ columns, $j=2,\dots,n$, we get
\begin{equation}
T=\begin{vmatrix}
-1&0 & 0&\dots& 0& 0\\
-1&-1&0&\dots&0&0\\
-1&-1&-1&\dots&0&0\\
. & .&.&\dots&.&.\\
-1&-1&-1&\dots&-1 &0\\
-1&-2&-3&\dots&-(n-1) &-n \\
\end{vmatrix}=(-1)^n n\,.
\end{equation}
From this, the result follows.
\end{proof}

\begin{lemma}\label{lem:accuracy.interior-point}
Consider semidefinite program:
\begin{equation}\label{eq:primal}
\begin{array}{rl}
\lambda^\star=\inf\limits_{X}&\tr( CX)\\
\text{s.t.}&X\in\mathbb S^{N}_+\,,\,\tr(A_i X) =b_i\,,\,i=1,\dots,M\,.
\end{array}
\end{equation}
The dual of \eqref{eq:primal} is given by:
\begin{equation}\label{eq:dual}
\begin{array}{rl}
\tau^\star=\sup\limits_{\xi,Z}&b^\top \xi\\
\text{s.t.}&\xi\in\R^M\,,\,Z\in\mathbb S^N_+\,,\,\sum_{i=1}^M\xi_iA_i +Z =C\,.
\end{array}
\end{equation}
Assume that strong duality holds for problems \eqref{eq:primal}-\eqref{eq:dual}. Let $(X^{(\varepsilon)},\xi^{(\varepsilon)},Z^{(\varepsilon)})$ be an $\varepsilon$-optimal solution returned by the interior-point methods in \cite{jiang2020faster} and \cite{augustino2023quantum}.
Then $\lambda^\star=\tau^\star$ and
\begin{equation}
\begin{cases}
\lambda^\star\le \tr(CX^{(\varepsilon)})\le  \lambda^\star+\varepsilon N\,,\\
\tau^\star- \varepsilon N\le  b^\top \xi^{(\varepsilon)} \le\tau^\star \,.
\end{cases}
\end{equation}
\end{lemma}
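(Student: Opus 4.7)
The plan is to first invoke strong duality to obtain $\lambda^\star=\tau^\star$, and then derive the two accuracy bounds from the standard notion of an $\varepsilon$-optimal primal--dual pair produced by an interior-point method. The essential fact I will use is that in the output of the interior-point methods of~\cite{jiang2020faster,augustino2023quantum}, the triple $(X^{(\varepsilon)},\xi^{(\varepsilon)},Z^{(\varepsilon)})$ is primal-dual feasible (up to, negligibly, machine precision) and the complementarity / central-path parameter is bounded by $\varepsilon$, which translates into $\tr(Z^{(\varepsilon)}X^{(\varepsilon)})\le N\varepsilon$; this is the standard ``duality-gap'' termination criterion for IPMs.

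The key algebraic identity is the duality gap reformulation: using primal feasibility $\tr(A_iX^{(\varepsilon)})=b_i$ and dual feasibility $C=\sum_{i=1}^M\xi_i^{(\varepsilon)}A_i+Z^{(\varepsilon)}$, we get
\begin{equation}
\tr(CX^{(\varepsilon)})-b^\top\xi^{(\varepsilon)}=\tr\!\Big(\Big(\sum_{i=1}^M\xi_i^{(\varepsilon)}A_i+Z^{(\varepsilon)}\Big)X^{(\varepsilon)}\Big)-\sum_{i=1}^M\xi_i^{(\varepsilon)}\tr(A_iX^{(\varepsilon)})=\tr(Z^{(\varepsilon)}X^{(\varepsilon)}).
\end{equation}
Combining this with the complementarity bound $\tr(Z^{(\varepsilon)}X^{(\varepsilon)})\le N\varepsilon$ (which uses $X^{(\varepsilon)},Z^{(\varepsilon)}\succeq0$) gives
\begin{equation}
0\le\tr(CX^{(\varepsilon)})-b^\top\xi^{(\varepsilon)}\le N\varepsilon.
\end{equation}

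Next I apply weak duality, which together with strong duality ($\lambda^\star=\tau^\star$) yields the chain
\begin{equation}
b^\top\xi^{(\varepsilon)}\le\tau^\star=\lambda^\star\le\tr(CX^{(\varepsilon)}).
\end{equation}
Subtracting either endpoint from the other endpoint in this chain, and using the $N\varepsilon$ gap bound, produces exactly the two stated inequalities:
\begin{equation}
\lambda^\star\le\tr(CX^{(\varepsilon)})=b^\top\xi^{(\varepsilon)}+\tr(Z^{(\varepsilon)}X^{(\varepsilon)})\le\lambda^\star+N\varepsilon,
\end{equation}
and symmetrically $\tau^\star-N\varepsilon\le b^\top\xi^{(\varepsilon)}\le\tau^\star$.

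The main (and really only) obstacle is pinning down the precise definition of ``$\varepsilon$-optimal solution'' used by the cited interior-point methods. Different works normalize the duality-gap measure differently: some state it as $\tr(ZX)/N\le\varepsilon$ (central-path parameter $\mu\le\varepsilon$), others as a relative gap, and others as the duality gap itself. I will fix the convention $\mu=\tr(ZX)/N\le\varepsilon$, which matches the convergence analyses in~\cite{jiang2020faster,augustino2023quantum}, so that $\tr(ZX)\le N\varepsilon$, and comment that with any of the other standard conventions the proof is identical up to rescaling $\varepsilon$. With that convention, the argument is just the three-line calculation above.
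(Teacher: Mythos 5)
Your proposal is correct and follows essentially the same route as the paper's proof: both invoke strong duality for $\lambda^\star=\tau^\star$, then exploit primal--dual feasibility together with the $\tr(Z^{(\varepsilon)}X^{(\varepsilon)})/N\le\varepsilon$ termination criterion (the paper cites Section~7.3 of Augustino et al.\ for exactly this convention) to sandwich $\tr(CX^{(\varepsilon)})$ and $b^\top\xi^{(\varepsilon)}$ between $\lambda^\star$ and $\lambda^\star+\varepsilon N$. Your care in flagging the normalization convention for the duality-gap parameter is a reasonable precaution, but matches what the paper does.
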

\begin{proof}
By the strong duality theorem, $\lambda^\star=\tau^\star$.
From \cite[Section 7.3]{augustino2023quantum}, we know that:
\begin{equation}
\begin{cases}
X^{(\varepsilon)}\succ 0\,,\,\tr( A_iX^{(\varepsilon)}) =b_i\,,\,i=1,\dots,M\,,\\[5pt]
Z^{(\varepsilon)}\succ 0\,,\,\sum_{i=1}^M\xi_i^{(\varepsilon)}A_i +Z^{(\varepsilon)} =C\,,\\[5pt]
\frac{\tr(X^{(\varepsilon)} Z^{(\varepsilon)})}{N}\le \varepsilon\,.
\end{cases}
\end{equation}
Thus, $X_\varepsilon$ is a feasible solution to \eqref{eq:primal} and $(\xi_\varepsilon,Z_\varepsilon)$ is a feasible solution to \eqref{eq:dual}.
This implies:
\begin{equation}
\begin{array}{rl}
\tau^\star=\lambda^\star\le &\tr( CX^{(\varepsilon)})= \tr( (\sum_{i=1}^M\xi^{(\varepsilon)}_iA_i +Z^{(\varepsilon)})X^{(\varepsilon)})\\[5pt]
\le& \sum_{i=1}^M\xi^{(\varepsilon)}_i\tr(A_i X) +\tr( Z_\varepsilon X_\varepsilon)\\[5pt]
\le& \sum_{i=1}^M\xi^{(\varepsilon)}_i b_i +\varepsilon N\\[5pt]
=& b^\top \xi^{(\varepsilon)} +\varepsilon N\\
\le& \tau^\star+\varepsilon N=\lambda^\star+\varepsilon N\,.
\end{array}
\end{equation}
Hence, the desired inequalities hold.
\end{proof}

\section{Block encoding constructions for $B^{(\gamma)}$}
\label{sec:block.encoding}
Recall $B^{(\gamma)}\in\mathbb{S}^{\binom{n+k}{n}}$ from Definition~\ref{def:Bgamma}, whose entries satisfy
\(
(B^{(\gamma)})_{\alpha,\beta}=1
\) iff \(\alpha+\beta=\gamma\) (componentwise) for \(\alpha,\beta\in\N_k^n\), and \(0\) otherwise.
We now give a simple ``projected–unitary'' (normalization \(1\)) block–encoding that exploits the fact that
\(\alpha\mapsto \gamma-\alpha\) is an involution on its valid domain.

\begin{proposition}[Projected–unitary block–encoding]
Fix \(\gamma\in\N_{2k}^n\).
There exists a unitary \(U_\gamma\) acting on three registers
\(\mathcal{H}_A\otimes\mathcal{H}_W\otimes\mathcal{H}_b\),
where \(A\) stores an index \(\alpha\in\N_k^n\),
\(W\) is a work register of the same shape (initialised to \(|0\rangle\)),
and \(b\) is a one–qubit flag (initialised to \(|1\rangle\)),
such that with the projector
\(
\Pi:=I_A\otimes |0\rangle\!\langle 0|_W \otimes |1\rangle\!\langle 1|_b
\)
we have
\[
\Pi\,U_\gamma\,\Pi \;=\; B^{(\gamma)}\otimes |0\rangle\!\langle 0|_W \otimes |1\rangle\!\langle 1|_b\,.
\]
Consequently, \(U_\gamma\) is a \(1\)-block–encoding of \(B^{(\gamma)}\).
Moreover, \(U_\gamma\) uses \(O(n\log k)\) reversible arithmetic gates and one controlled–SWAP.
\end{proposition}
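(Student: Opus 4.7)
The plan is to exploit the involutive structure of $B^{(\gamma)}$: its nonzero support is the graph of the map $\iota_\gamma:\alpha \mapsto \gamma-\alpha$ restricted to the set
\[
  S_\gamma \;:=\; \{\alpha \in \N_k^n : \alpha_i \le \gamma_i \ \text{for all}\ i,\ |\gamma-\alpha| \le k\},
\]
on which $\iota_\gamma$ is a well-defined involution. Thus $B^{(\gamma)}$ is the permutation matrix of $\iota_\gamma$ on $S_\gamma$, extended by zero on $\N_k^n \setminus S_\gamma$. A natural strategy is to realize $\iota_\gamma$ unitarily on $\mathcal{H}_A$ and use the flag qubit $b$, together with the work register $W$, to mask out the complement of $S_\gamma$.

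Concretely, I would build $U_\gamma$ as a composition of four reversible sub-routines: (i) compute the indicator $v = \one[\alpha \in S_\gamma]$ into a single ancilla qubit inside $\mathcal{H}_W$ by $n$ componentwise comparisons of $(\log k)$-bit numbers and one summation, at cost $O(n\log k)$ reversible arithmetic gates; (ii) controlled on $v = 0$, flip $b$ so that $|1\rangle_b \mapsto |0\rangle_b$; (iii) controlled on $v = 1$, apply the in-place map $|\alpha\rangle \mapsto |\gamma - \alpha\rangle$ on $\mathcal{H}_A$ by coherently writing $\gamma - \alpha$ into a fresh sub-register of $\mathcal{H}_W$ (using $O(n\log k)$ reversible subtractors), performing one controlled multi-register SWAP between $\mathcal{H}_A$ and that sub-register, and then uncomputing the sub-register using the identity $\gamma - (\gamma - \alpha) = \alpha$; (iv) uncompute $v$, which succeeds because $S_\gamma$ is invariant under $\iota_\gamma$, so the predicate evaluates to the same value before and after step (iii). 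The total cost matches the stated $O(n\log k)$ arithmetic gates and one controlled-SWAP.

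For the verification I would compute matrix elements of $\Pi U_\gamma \Pi$ directly. For $\alpha \in S_\gamma$, the circuit sends $|\alpha\rangle_A |0\rangle_W |1\rangle_b \mapsto |\iota_\gamma(\alpha)\rangle_A |0\rangle_W |1\rangle_b$, so $\Pi$ preserves this amplitude and contributes $1$ in the $(\iota_\gamma(\alpha),\alpha)$ entry. For $\alpha \notin S_\gamma$, step (ii) sends $b$ to $|0\rangle$, so the state lies in $\ker(\Pi)$ and contributes zero. By Definition \ref{def:Bgamma} this reproduces $B^{(\gamma)}$ entrywise, yielding $\Pi U_\gamma \Pi = B^{(\gamma)} \otimes |0\rangle\langle 0|_W \otimes |1\rangle\langle 1|_b$; the normalization is $1$ since $\|B^{(\gamma)}\| \le 1$ by Lemma \ref{lem:properties.B.gamma}\eqref{bound.eigenvalue}.

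The principal obstacle is step (iii): the in-place involution must leave $\mathcal{H}_W$ perfectly in $|0\rangle$ so that $\Pi$ actually detects it, and this requires careful scheduling of the compute--swap--uncompute pattern. The involutivity of $\iota_\gamma$ on $S_\gamma$ is what makes the uncomputation of both the $\gamma-\alpha$ sub-register and the predicate $v$ possible without leaving residual entanglement. Once this subroutine is assembled cleanly, the block-encoding identity follows immediately from the matrix-element computation above.
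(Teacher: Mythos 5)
Your proposal is correct and takes essentially the same approach as the paper: both exploit the involution $\alpha\mapsto\gamma-\alpha$ to implement $B^{(\gamma)}$ as a controlled permutation, using a compute–swap–uncompute pattern for the difference $\gamma-\alpha$, a flag qubit set to $|0\rangle$ on the invalid branch so that $\Pi$ annihilates it, and the invariance of the valid set under the involution to guarantee clean uncomputation. The only cosmetic difference is that you compute a one-bit validity indicator $v$ separately before the controlled in-place map, whereas the paper reads validity directly from the computed $\beta$ register and sets the flag in the same pass; the gate counts and the matrix-element verification are identical.
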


\begin{proof}
On computational basis states \(|\alpha\rangle_A|0\rangle_W|1\rangle_b\) with \(\alpha\in\N_k^n\),
define \(U_\gamma\) by the following reversible steps:
\begin{enumerate}[(1)]
\item \textbf{Compute \(\beta\) and validity.} Reversibly compute the componentwise difference
\(\beta:=\gamma-\alpha\) into \(W\).
Perform a controlled not on $b$ controlled on all components of \(\beta\) being nonnegative and \(|\beta|\le k\) (otherwise leave \(b=0\)).
\item \textbf{Controlled swap.} Controlled on \(b=1\), swap the contents of \(A\) and \(W\).
\item \textbf{Uncompute work.} Recompute \(\gamma-\alpha\) into \(W\) (where after the swap \(A=\beta\) on the valid branch), which resets \(W\) to \(|0\rangle\).
\end{enumerate}
All three steps are reversible; (3) restores the work register, so \(U_\gamma\) is unitary.

If \(\alpha\) is \emph{invalid}, then after (1) the flag is \(b=0\), the controlled swap does not fire, and uncomputation restores \(W=0\). Thus
\(
U_\gamma|\alpha,0,1\rangle = |\alpha,0,0\rangle
\),
which is orthogonal to the range of \(\Pi\), hence contributes zero to the compressed block.

If \(\alpha\) is \emph{valid}, then \(\beta=\gamma-\alpha\in\N_k^n\).
After (1) we have \(|\alpha,\beta,1\rangle\); the controlled swap yields \(|\beta,\alpha,1\rangle\);
the uncompute step maps \(W:\ \alpha\mapsto \alpha-(\gamma-\beta)=0\).
Therefore
\(
U_\gamma|\alpha,0,1\rangle = |\beta,0,1\rangle
\)
with \(\beta=\gamma-\alpha\).
Projecting in/out with \(\Pi\) produces exactly the transition amplitude
\(\langle \beta|\,B^{(\gamma)}\,|\alpha\rangle = 1\)
when \(\alpha+\beta=\gamma\), and \(0\) otherwise.
Hence \(\Pi U_\gamma \Pi = B^{(\gamma)}\otimes |0\rangle\!\langle 0|_W \otimes |1\rangle\!\langle 1|_b\).
The gate/space bounds follow from implementing componentwise add/subtract and comparisons on \(n\) registers of size \(O(\log k)\), and one controlled–SWAP on those registers.
\end{proof}
Thus, we see that we can efficiently construct the block encodings required by our algorithm in the unconstrained case. 
\paragraph{Extension to $B^{(\gamma)}_g$.}
For constrained POPs we use the block–diagonal matrices
\[
B^{(\gamma)}_g=\diag\!\big(\sum_{\zeta} g_{i,\zeta}\,B^{(\gamma-\zeta)}_i\big)_{i=0}^{m}
\]
(see~\eqref{eq.tilde.B.gamma}).
Selecting the block \(i\) and applying a linear combination of unitaries (LCU) over the at most \(s_g\) nonzero coefficients \(\{g_{i,\zeta}\}\),
with the above \(U_{\gamma-\zeta}\) as primitives,
gives a block–encoding of \(B^{(\gamma)}_g\).
Under the rescaling assumption \(\sum_{\zeta}|g_{i,\zeta}|\le 1\) for each \(i\) (Assumption~\ref{ass:rescale.g}),
the LCU normalization remains \(1\).
The per–application cost is \(O(s_g)\) invocations of the \(U_{\gamma-\zeta}\) gadgets plus \(O(\log s_g)\) overhead for state preparation/selection. Thus, all we need is access to the an auxilliary state with amplitudes \(\{g_{i,\zeta}\}\) to produce the relevant block encodings for constrained POP.

\if{

\begin{proof}
We have
\begin{equation}
\begin{array}{rl}
\binom{n+2k}n=\frac{(n+2k)!}{n!(2k)!}=&\frac{(n+k)!(n+k+1)\dots (n+2k)}{n!k!(k+1)\dots(k+k)}\\
=&\binom{n+k}n \frac{(n+k+1)\dots (n+2k)}{(k+1)\dots(k+k)}\\
\le &\binom{n+k}n  \frac{n(k+1)\dots n(2k)}{(k+1)\dots(k+k)}=\binom{n+k}n n^k \,.
\end{array}
\end{equation}
The last inequality is based on inequality $a+b\le ab$ for $a,b\in [2,\infty)$.
\end{proof}
}\fi

\end{document}